\theoremstyle{definition}
\newtheorem{lemma}{Lemma}
\newtheorem{definition}{Definition}
\newtheorem{theorem}{Theorem}
\newcommand{\supplement}[2]{#2}
\newcommand{\gvc}{\textrm{GVC0}\xspace}
\newcommand{\gco}{\textrm{Gradual C0}\xspace}
\newcommand{\co}{\textrm{C\textsubscript{0}}\xspace}
\newcommand{\gviper}{\textrm{Gradual Viper}\xspace}
\newcommand{\svl}{\texorpdfstring{\textrm{SVL\textsubscript{C0}}}{SVL\_C0}\xspace}
\newcommand{\gvl}{\texorpdfstring{\textrm{GVL\textsubscript{C0}}}{GVL\_C0}\xspace}
\newcommand{\gvlrp}{\textrm{GVL\textsubscript{RP}}\xspace}
\newcommand{\ttt}[1]{\texttt{#1}}
\newcommand{\multiple}[1]{\overline{#1}}
\newcommand{\disableTttResize}[0]{\renewcommand{\small}[1]{##1}}
\newcommand{\dom}{\operatorname{dom}}
\newcommand{\notimplies}{\Longarrownot\Longrightarrow}
\newcommand{\pair}[2]{\langle #1, \, #2 \rangle}
\newcommand{\triple}[3]{\langle #1,\, #2, \, #3 \rangle}
\newcommand{\quadruple}[4]{\langle #1,\, #2,\, #3,\, #4 \rangle}
\newcommand{\quintuple}[5]{\langle #1,\, #2,\, #3,\, #4,\, #5 \rangle}
\newcommand{\gprogram}{\mathrm{program}}
\newcommand{\gpredicate}{\mathcal{P}}
\newcommand{\gmethod}{\mathcal{M}}
\newcommand{\gstruct}{\mathcal{S}}
\newcommand{\gform}{\tilde{\phi}}
\newcommand{\gstatement}{\mathit{s}}
\newcommand{\gcontract}{\mathrm{\Phi}}
\newcommand{\gtype}{\mathit{T}}
\newcommand{\gexpression}{\mathit{e}}
\newcommand{\gvar}{\mathit{x}}
\newcommand{\gpreciseform}{\phi}
\newcommand{\kensures}{\ttt{ensures}}
\newcommand{\krequires}{\ttt{requires}}
\newcommand{\kalloc}{\ttt{alloc}}
\newcommand{\knull}{\ttt{null}}
\newcommand{\kacc}{\ttt{acc}}
\newcommand{\kskip}{\ttt{skip}}
\newcommand{\kif}{\ttt{if}}
\newcommand{\kthen}{\ttt{then}}
\newcommand{\kelse}{\ttt{else}}
\newcommand{\kassert}{\ttt{assert}}
\newcommand{\kstruct}{\ttt{struct}}
\newcommand{\kwhile}{\ttt{while}}
\newcommand{\kinvariant}{\ttt{invariant}}
\newcommand{\kdo}{\ttt{do}}
\newcommand{\kfold}{\ttt{fold}}
\newcommand{\kunfold}{\ttt{unfold}}
\newcommand{\kint}{\ttt{int}}
\newcommand{\kchar}{\ttt{char}}
\newcommand{\kbool}{\ttt{bool}}
\newcommand{\ktrue}{\ttt{true}}
\newcommand{\kfalse}{\ttt{false}}
\newcommand{\kresult}{\ttt{result}}
\newcommand{\keq}{\mathbin{\ttt{==}}}
\newcommand{\kneq}{\mathrel{\ttt{!=}}}
\newcommand{\kadd}{\mathbin{\ttt{+}}}
\newcommand{\ksub}{\mathbin{\ttt{-}}}
\newcommand{\kdiv}{\mathbin{\ttt{/}}}
\newcommand{\kmul}{\mathbin{\ttt{*}}}
\newcommand{\kand}{\mathbin{\ttt{\&\&}}}
\newcommand{\kor}{\mathbin{\ttt{||}}}
\newcommand{\kneg}{\mathop{\ttt{!}}}
\newcommand{\kassign}{\mathbin{\ttt{=}}}
\newcommand{\sblock}[1]{\{ ~#1~ \}}
\newcommand{\salloc}[1]{\kalloc(#1)}
\newcommand{\sseq}[2]{#1 \ttt{;} ~#2}
\newcommand{\sif}[3]{\kif ~#1~ \kthen ~#2~ \kelse ~#3}
\newcommand{\sassert}[1]{\kassert ~#1}
\newcommand{\swhile}[3]{\kwhile ~#1~ \kinvariant ~#2~ \kdo ~#3}
\newcommand{\sfold}[1]{\kfold ~#1}
\newcommand{\sunfold}[1]{\kunfold ~#1}
\newcommand{\simprecise}[1]{\ttt{?} * #1}
\newcommand{\smethdef}[5]{#1 ~ #2(#3) ~#4~ #5}
\newcommand{\Literal}{\textsc{Literal}}
\newcommand{\Value}{\textsc{Value}}
\newcommand{\Field}{\textsc{Field}}
\newcommand{\Formula}{\textsc{Formula}}
\newcommand{\GFormula}{\tilde{\textsc{F}}\textsc{ormula}}
\newcommand{\Location}{\textsc{Location}}
\newcommand{\Var}{\textsc{Var}}
\newcommand{\Expr}{\textsc{Expr}}
\newcommand{\Stmt}{\textsc{Stmt}}
\newcommand{\Method}{\textsc{Method}}
\newcommand{\Predicate}{\textsc{Predicate}}
\newcommand{\Struct}{\textsc{Struct}}
\newcommand{\Type}{\textsc{Type}}
\newcommand{\Perm}{\textsc{Perm}}
\newcommand{\SExpr}{\textsc{SExpr}}
\newcommand{\SValue}{\textsc{SValue}}
\newcommand{\SField}{\textsc{SField}}
\newcommand{\SPredicate}{\textsc{SPredicate}}
\newcommand{\SCheck}{\textsc{SCheck}}
\newcommand{\SPerm}{\textsc{SPerm}}
\newcommand{\SState}{\textsc{SState}}
\newcommand{\eval}[4]{\pair{#1}{#2} \vdash #3 \Downarrow #4}
\newcommand{\assertion}[4]{\triple{#1}{#2}{#3} \vDash #4}
\newcommand{\foot}[4]{\lfloor #4 \rfloor_{\triple{#1}{#2}{#3}}}
\newcommand{\efoot}[3]{\llfloor #3 \rrfloor_{\pair{#1}{#2}}}
\newcommand{\vfoot}[3]{#1 \llparenthesis #3 \rrparenthesis_{#2}}
\newcommand{\frm}[4]{\triple{#1}{#2}{#3} \vdash_{\mathrm{frm}} #4}
\newcommand{\ifrm}[4]{\triple{#1}{#2}{#3} \vdash_{\mathrm{frmI}} #4}
\newcommand{\efrm}[4]{\triple{#1}{#2}{#3} \vdash_{\mathrm{frmE}} #4}
\newcommand{\dexec}[5]{\pair{#1}{#2},\, #3 \to \pair{#4}{#5}}
\newcommand{\dtrans}[4]{#1 \vdash #3,\, #2 \to #4}
\newcommand{\simheap}[4]{\pair{#3}{#4} \sdtstile{#1}{} #2}
\newcommand{\simenv}[3]{#3 \sdtstile{#1}{} #2}
\newcommand{\simstate}[5]{\triple{#3}{#4}{#5} \sdtstile{#1}{} #2}
\newcommand{\rtassert}[4]{\pair{#2}{#3} \vdash_{#1} #4}
\newcommand{\seval}[5]{#1 \vdash #2 \Downarrow #3 \dashv #4,\, #5}
\newcommand{\spceval}[4]{#1 \vdash #2 \downarrow #3 \dashv #4}
\newcommand{\sproduce}[3]{#1 \vdash #2 \lhd #3}
\newcommand{\sconsume}[6]{#1,\, #2 \vdash #3 \rhd #4,\, \allowbreak #5,\, \allowbreak #6}
\newcommand{\scons}[4]{#1 \vdash #2 \rhd #3, ~#4}
\newcommand{\sexec}[4]{#1 \vdash #2 \to #3 \dashv #4}
\newcommand{\sguard}[4]{#1 \rightharpoonup #2 \dashv #3,\, #4}
\newcommand{\strans}[3]{#1 \vdash #2 \to #3}
\newcommand{\vstate}{\Sigma}
\newcommand{\existential}[2]{\exists\, #1 : #2}
\newcommand{\nexistential}[2]{\nexists\, #1 : #2}
\newcommand{\universal}[2]{\forall\, #1 : #2}
\newcommand{\heap}{H}
\newcommand{\env}{\rho}
\newcommand{\perms}{\alpha}
\newcommand{\xperms}{\hat{\perms}}
\newcommand{\stack}{\mathcal{S}}
\newcommand{\tlist}{\multiple{t}}
\newcommand{\prog}{\Pi}
\newcommand{\dstate}{\Gamma}
\newcommand{\initsym}{\mathsf{init}}
\newcommand{\finalsym}{\mathsf{final}}
\newcommand{\nilsym}{\mathsf{nil}}
\newcommand{\ffresh}{\operatorname{fresh}}
\newcommand{\fpre}{\operatorname{pre}}
\newcommand{\fpost}{\operatorname{post}}
\newcommand{\fbody}{\operatorname{body}}
\newcommand{\fparams}{\operatorname{params}}
\newcommand{\fpred}{\operatorname{predicate}}
\newcommand{\fpredparams}{\operatorname{predicate\_params}}
\newcommand{\fstruct}{\operatorname{struct}}
\newcommand{\fdefault}{\operatorname{default}}
\newcommand{\frem}{\operatorname{rem}}
\newcommand{\fremf}{\operatorname{rem_f}}
\newcommand{\fremfp}{\operatorname{rem_{fp}}}
\newcommand{\falias}{\operatorname{alias}}
\newcommand{\fmodified}{\operatorname{modified}}
\newcommand{\fsep}{\operatorname{sep}}
\newcommand{\fsat}{\operatorname{sat}}
\newcommand{\powerset}[1]{\mathcal{P}(#1)}
\newcommand{\set}[1]{\{ #1 \}}
\newcommand{\iffdef}{\stackrel{\text{def}}{\iff}}
\newcommand{\pfunc}{\rightharpoonup}
\newcommand{\sstate}{\sigma}
\newcommand{\ssempty}{\sstate_{\mathrm{empty}}}
\newcommand{\sheap}{\mathsf{H}}
\newcommand{\oheap}{\mathcal{H}}
\newcommand{\scheck}{\mathcal{R}}
\newcommand{\senv}{\gamma}
\newcommand{\pc}{g}
\newcommand{\imp}{\iota} 
\newcommand{\sperms}{\Theta}
\newcommand{\sperm}{\theta}
\newcommand{\semanticsargs}[1][]{#1\end{mathpar}}
\newcommand{\semantics}[3][]{\begin{mathpar}\inferrule[#1]{#2}{#3}\semanticsargs}
\newcounter{defpartnum}[definition]
\newcounter{defsubpartnum}[defpartnum]
\newenvironment{defparts}
{
    \setcounter{defpartnum}{0}
    \setcounter{defsubpartnum}{0}
    \newcommand{\defpart}{\par\indent\refstepcounter{defpartnum}\textit{Part \thedefpartnum.}~}
    
}
{
    \par
}
\renewcommand*\thedefpartnum{\arabic{definition}.\arabic{defpartnum}}
\newcounter{casenum}
\newcounter{subcasenum}[casenum]
\newenvironment{enumcases}
{
    \setcounter{casenum}{0}
    \setcounter{subcasenum}{0}
    \newcommand{\case}{\par\indent\refstepcounter{casenum}\textbf{Case \thecasenum.}~}
    \let\SavedTheHcasenum=\theHcasenum
    \def\theHcasenum{\thelemma.\SavedTheHcasenum}
    \newcommand{\subcase}{\par\indent\refstepcounter{subcasenum}\textit{Case \thecasenum(\thesubcasenum).}~}
}
{
    \par
}
\renewcommand*\thecasenum{\arabic{casenum}}
\renewcommand*\thesubcasenum{\alph{subcasenum}}
\newcommand{\refrule}[1]{\hyperlink{#1}{\TirName {#1}}}
\definecolor{light-gray}{gray}{0.87}
\definecolor{light-purple}{RGB}{229,204,255}
\definecolor{light-yellow}{RGB}{255,228,181}
\definecolor{light-blue}{RGB}{189,215,238}
\definecolor{light-green}{RGB}{152,251,152}
\definecolor{light-red}{RGB}{255,204,204}
\definecolor{rred}{RGB}{255,153,153}
\definecolor{light-pink}{RGB}{255,204,255}
\definecolor{light-orange}{RGB}{255,204,153}
\definecolor{neon-blue}{RGB}{153,255,255}
\definecolor{neon-yellow}{RGB}{255,255,153}
\setlist[itemize]{leftmargin=14pt}
\begin{document}

\addtocontents{toc}{\protect\setcounter{tocdepth}{0}}

\title{Sound Gradual Verification with Symbolic Execution}
\titlenote{This version contains supplementary material for the paper of the same name accepted for publication by Principles of Programming Languages (2024).}

\author{Conrad Zimmerman}
\email{conrad\_zimmerman@brown.edu}
\affiliation{%
  \institution{Brown University}
  \country{USA}
}

\author{Jenna DiVincenzo}
\affiliation{%
  \institution{Purdue University}%
  \country{USA}%
}
\email{jennad@purdue.edu}

\author{Jonathan Aldrich}
\affiliation{%
  \institution{Carnegie Mellon University}%
  \country{USA}%
}
 \email{jonathan.aldrich@cs.cmu.edu}

\renewcommand{\shortauthors}{C. Zimmerman, J. DiVincenzo, and J. Aldrich}

\begin{abstract}
Gradual verification, which supports explicitly partial specifications
and verifies them with a combination of static and dynamic checks, makes
verification more incremental and provides earlier feedback to developers.
While an abstract, weakest precondition-based approach to gradual verification
was previously proven sound, the approach did not provide sufficient guidance
for implementation and optimization of the required run-time checks.  More
recently, gradual verification was implemented using symbolic execution
techniques, but the soundness of the approach (as with related static
checkers based on implicit dynamic frames) was an open question.  This paper
puts practical gradual verification on a sound footing with a formalization
of symbolic execution, optimized run-time check generation, and run time
execution.  We prove our approach is sound; our proof also covers a core
subset of the Viper tool, for which we are aware of no previous soundness result.
Our formalization enabled us to find a soundness bug in an implemented gradual
verification tool and describe the fix necessary to make it sound.
\end{abstract}



\keywords{gradual verification, symbolic execution, static verification, implicit dynamic frames, soundness proof}

\received{20 February 2007}
\received[revised]{12 March 2009}
\received[accepted]{5 June 2009}

\maketitle

\section{Introduction}\label{sec:introduction}

Static verification technology based on Hoare-logic-styled pre- and postconditions \cite{hoare1969axiomatic} has come a long way in the last few decades. Such tools can now support the modular verification of data structures that manipulate the heap \cite{reynolds2002separation,smans2012implicit} and are recursive \cite{parkinson2005separation}. However, verification is expensive, requiring many auxiliary specifications such as loop invariants and lemmas, and often costing an order of magnitude more human effort than development alone. In response, \citet{bader2018gradual} introduced the idea of gradual verification, which supports the incremental specification and verification of code by seamlessly combining static and dynamic verification. A developer can now write partial, \emph{imprecise} specifications---formulas such as $\simprecise{x.f \keq 2}$---backed by run-time checking. During static verification, imprecise specifications are strengthened in support of proof goals when it is necessary and non-contradictory to do so. Then, corresponding dynamic checks are inserted to ensure soundness. As a result, gradual verification allows users to specify and verify only the properties and components of their system that they care about, and incrementally increase the scope of verification as necessary.

Based on this early idea, \citet{wise2020gradual} and \citet{divincenzo2022gradual} extended gradual verification to support recursive heap data structures.
\citet{wise2020gradual} presented the first theory of gradual verification for \emph{implicit dynamic frames} (IDF) \citep{smans2012implicit}, a variant of \emph{separation logic} \citep{reynolds2002separation}, and \emph{abstract predicates} \citep{parkinson2005separation}. Their design, corresponding theory, and proofs rely heavily on the backward-reasoning technique called \emph{weakest liberal preconditions} (WLP), and on infinite sets that are not easy to approximate in finite form. Additionally, \citet{wise2020gradual}'s design checks all proof obligations at run-time, even when some obligations have been discharged statically. Therefore, it remained unclear how to implement gradual verification and whether gradually-verified programs could achieve good performance. Fortunately, in follow-up work, \citet{divincenzo2022gradual} implemented and empirically evaluated \gco, the first gradual verifier that can be used on real programs. \gco is based on symbolic execution, a forward-reasoning technique which is routinely used in static verifiers such as Viper \citep{viper16}, and optimizes run-time checks with statically available information to improve run-time performance. \citet{divincenzo2022gradual} showed that this improvement over prior work yields significant performance boosts.

Technically, \gco is built on top of Viper \citep{viper16}, which is a static verification infrastructure and tool that facilitates the development of program verifiers supporting IDF and recursive abstract predicates. Viper also uses symbolic execution at its core. Besides \gco, an array of widely-used verifiers have been built on top of Viper, including Prusti \citep{astrauskas2022prusti} for Rust, Nagini \citep{eilers2018nagini} for Python, and VerCors \citep{blom2017vercors} for Java. 
However, despite its prominence, Viper has not been proven sound; nor have, to our knowledge, other symbolic execution-based methods for verifying IDF logics. Thanks to the complexities of symbolic execution and Viper's support for practical but advanced verification features, \citet{schwerhoff16silicon}'s specification of Viper is full of implementation details that make it difficult to formally state and prove soundness. Since \gco is built on Viper, this problem carries over to \gco's specification in \citet{divincenzo2022gradual} and is made worse by the combination of static and dynamic checking. Thus \citet{divincenzo2022gradual} does not contain a proof of soundness for \gco. Furthermore, since \gco uses symbolic execution instead of WLP and optimizes run-time checks, \citet{wise2020gradual}'s proof is also not applicable. This is problematic, because the intricate interactions of static and dynamic checking in gradual verification can easily lead to subtle soundness bugs in gradual verifiers like \gco, as we will show in \S\ref{sec:unsoundness}.

Therefore, this paper presents a formal statement and proof of soundness for \gco and its underlying core subset of Viper. We formalize \gco's symbolic execution algorithm in sets of inference rules, rather than the CPS-style specification in \citet{divincenzo2022gradual} and \citet{schwerhoff16silicon}, to enable abstractions that improve the readability of the design and make it easier to state and prove soundness. The level of abstraction we use is far closer to the implementation of \gco than \citet{wise2020gradual}'s formal system, but slightly more abstract than \citet{divincenzo2022gradual}'s CPS-style specification, which is littered with implementation details. Reaching the right level of abstraction for our goals took some trial and error. We reflect on this process, including our missteps, in this paper as well. Our approach is inspired by the formal system for a basic type checker combined with symbolic execution in \citet{khoo10mix}. However, we separate the rules into several types of judgements to reflect the architecture of \citet{divincenzo2022gradual} and \citet{schwerhoff16silicon} and deal with the complexities of IDF and gradual verification. Given an initial symbolic state, the rules compute a next possible state (of which many may exist), and a set of run-time checks required for this transition when optimism is relied upon. That is, our rules are non-deterministic, but only in regards to the multiple execution paths explored by symbolic execution at program points like if statements, while loops, and logical conditionals.

Furthermore, we clearly separate the cases required to support imprecise specifications from those dealing with the underlying verification algorithm supporting only complete static specifications. Therefore, our formal system is a conservative extension of a core calculus of Viper; and so, by formalizing \gco and proving it sound, we have also formalized the core of Viper and proved it sound. To make it easier for readers of this paper to take advantage of our formal statement and proof of soundness for Viper for their own uses, we present first a core language, which we call \svl, along with verification rules modeling \gco's underlying static verification algorithm. We then define \gvl, which extends \svl to include gradual specifications and corresponds to the full language used by \gco. We also formally define static verification for \gvl, modeling the verification algorithm of \gco. We hope this separation provides a solid foundation for future proof endeavors of other static verifiers based on symbolic execution.


In order to fully define the behavior of \gvl and its subset \svl, we specify its dynamic semantics, which combines the semantics of \co \citep{arnold2010c0} with the dynamic semantics of \gvlrp, the language used to define the theory of gradual verification with recursive predicates in \citet{wise2020gradual}.
The \co programming language is a core, safe variant of the C language introduced for education \cite{arnold2010c0} and is also supported by \gco. \co allows specification of the pre- and post-conditions of methods, but does not include constructs necessary for static verification using IDF.
Thus we add the dynamic semantics from \citet{wise2020gradual} for IDF specifications, recursive predicates, and imprecise specifications. These semantics assert the validity of every specification at run-time, ensuring both memory safety and functional correctness of programs.
Thus these semantics provide a foundation against which we can establish the soundness of \gco's symbolic execution algorithm. That is, we prove that when all run-time checks produced by the symbolic execution algorithm are satisfied, then the program is guaranteed to dynamically execute successfully. A tricky part of this proof is defining a \emph{valuation function} \cite{khoo10mix}, which is a partial function mapping symbolic values from symbolic execution to their concrete values for a specific execution trace from program execution. This function is used to state the correspondence between symbolic and concrete execution states. While we start with \citet{khoo10mix}'s simplistic valuation function, we end up with one that is far more complex as it additionally connects \emph{isorecursive} symbolic predicates from static verification with their \emph{equirecursive} counterparts in dynamic verification and handles global invariants such as separation and access permissions from IDF.
This proof technique allows our formal system and reasoning to match the implementation more closely than other techniques such as the evidence calculus used in \citet{garcia2016}. This enables us to explore future developments using either the implementation or formalization, and easily update the other to ensure we remain both implementable and sound.

Finally, we present and discuss a soundness bug we found in \gco during our proof work and have since communicated to \citet{divincenzo2022gradual}. The bug is a specific interaction caused by reducing run-time checks using statically available information in isorecursive predicates, and then checking the remaining run-time checks using equirecursive predicates. This bug could not have arisen in \citet{wise2020gradual}'s work as their gradual verification approach checks all proof obligations at run time. We explore several options for addressing this soundness bug, explain our chosen method in detail, and discuss an implementation fix. Despite \citet{divincenzo2022gradual}'s thorough empirical evaluation and testing of \gco, this bug was never discovered in their testing. This is likely due to the subtle, intricate interactions between verification technologies in gradual verification that are hard to test. This demonstrates the value of formally proving soundness in the case of gradual verification, and we hope this paper serves as a basis for similar future work.

\noindent To summarize, this paper makes the following contributions:
\begin{itemize}[topsep=2pt]
    \item Formalization and proof of soundness for \gco, the first gradual verifier for recursive heap data structures that is based on symbolic execution \cite{divincenzo2022gradual}. The level of abstraction chosen for this proof work improves the readability of \gco's design and makes adapting this work to prove other symbolic execution-based gradual verifiers sound much easier.
    \item Formalization of a core subset of the Viper static verifier, which is based on symbolic execution and supports IDF. This work provides the first solid foundation for proof work on static verifiers that use symbolic execution and IDF.
    \item A reflection on the trial and error of picking the right level of abstraction for our proof work in this paper.
    \item Demonstration of a soundness bug we found in \gco during our work and have since communicated to \citet{divincenzo2022gradual}. We also provide several options for addressing this bug and advise on how to implement one of our solutions.
\end{itemize}

\section{\svl}\label{sec:precise}

We first introduce \svl and a corresponding static verification algorithm. Since it does not include imprecise specifications, \svl can be verified by existing static verification tools such as Viper \citep{viper16}.
The verification algorithm corresponds to the core algorithm of Viper, which is the foundation for static verification in \gco.
We illustrate how our formalism and soundness result can be applied to Viper. In later sections we extend \svl's verification algorithm to support the verification of gradually-specified \gvl programs.


\subsection{Definition}\label{sec:precise-defn}

We define an abstract syntax for \svl in Figure \ref{fig:precise-grammar}. Its form is similar to the language of Viper, which is intended for use as a generic backend for multiple frontend languages; however, we use the syntax of \co.



Programs consist of struct, predicate, and method\footnote{To distinguish them from pure functions (which are used in the specification language of similar verification tools) we use \emph{method} to refer to any potentially impure function.} definitions, and an entry statement. Struct definitions contain a list of fields, predicate definitions contain a parameter list and a formula (the \textit{predicate body}), and method definitions contain a parameter list, a return type, a pre-condition (denoted by $\krequires$), a post-condition (denoted by $\kensures$), and a statement (the \textit{method body}). The entry statement represents the body of the \ttt{main} method in traditional C programs. Statements in \svl follow C conventions, except for \ttt{while}, \ttt{alloc}, and \ttt{return}. All \ttt{while} statements specify a formula called a \textit{loop invariant}, which states the properties preserved by the loop during execution. An \ttt{alloc} statement allocates new memory on the heap, initializes it with a default value, and updates the variable on the left-hand side to contain a reference to the newly allocated value. This matches \co semantics, except \co returns a \textit{pointer}, not a \textit{reference}, and thus the type of the variable is written differently. We omit \ttt{return} statements; instead, the method body must assign to a special \ttt{result} variable, whose value is then returned after executing the method body. This reflects the behavior of \gviper which also does not have a \ttt{return} statement. Additionally, we simplify several statements to make formal definitions and proofs easier. For example, assignment only occurs to a variable or a field of a variable; statements such as \ttt{x.y.z = 1} are not permitted. We also omit \ttt{void} method calls, since these do not differ meaningfully from calls to value-returning methods.

Like \gco \citep{divincenzo2022gradual}, \svl does not support arrays. Verifying non-trivial properties of programs that use arrays would require significant extensions to existing gradual verification theory -- for example, quantified formulas. These extensions are left to future work. However, we can verify recursive data structures such as linked lists with abstract predicates. Note that Viper does support quantified formulas and arrays, thus further work is necessary to formally prove soundness of these capabilities.

We make several simplifying assumptions for \svl programs. All variables are initialized before they are used, and every execution path for a method body assigns the \ttt{result} variable at its end. Every program is well-typed; that is, expressions used in \ttt{if} conditions or as boolean operands will evaluate to \ttt{bool} values, all arguments passed to method parameters will match the defined parameter type, and the value assign to \ttt{result} has type equal to the method's return type. Finally, all specifications (predicate bodies, loop invariants, and method pre- and post-conditions) are \textit{self-framed}, which is a special well-formedness condition from IDF that we define later.

\begin{figure}[t]
  {\footnotesize\disableTttResize
  \begin{minipage}[t]{.35\linewidth}
    \begin{flalign*}
      x &\in \Var       &\textit{Variable names} \\
      f &\in \Field     &\textit{Field names}     \\
      p &\in \Predicate &\textit{Predicate names} \\
      m &\in \Method    &\textit{Method names}     \\
      S &\in \Struct    &\textit{Struct names} \\
      n &\in \mathbb{Z} &\textit{Integers}
    \end{flalign*}
    \begin{flalign*}
      \Pi         &::= \multiple{\mathcal{S}} ~ \multiple{\mathcal{P}} ~ \multiple{\mathcal{M}} ~ s &\\
      \mathcal{S} &::= \kstruct ~S~ \sblock{ \multiple{T~f} } &\\
      \mathcal{P} &::= p(\multiple{T~x}) = \phi &\\
      \mathcal{M} &::= T~m(\multiple{T~x}) ~ \Phi ~ \sblock{s} &
    \end{flalign*}
  \end{minipage}
  \hspace{.1\linewidth}
  \begin{minipage}[t]{.45\linewidth}
    \begin{flalign*}
      \Phi        ::=~& \krequires~\phi~\kensures~\phi \\
      T           ::=~& S \mid \kint \mid \kbool \mid \kchar \\
      s           ::=~& \sseq{s}{s} \mid \kskip \mid x \kassign e \mid x = \kalloc(S) \mid \\
                      & x = m(\multiple{e}) \mid \sassert{\phi} \mid \sfold{p(\multiple{e})} \mid \\
                      & \sunfold{p(\multiple{e})} \mid \sif{e}{s}{s} \mid \\
                      & \swhile{e}{\phi}{s} \\
      e           ::=~& l \mid x \mid e.f \mid e \oplus e \mid e \kor e \mid e \kand e \mid \kneg e \\
      l           ::=~& n \mid \knull \mid \ktrue \mid \kfalse \\
      \oplus      ::=~& \ttt{+} \mid \ttt{-} \mid \ttt{/} \mid \ttt{*} \mid \ttt{==} \mid \ttt{!=} \mid \ttt{<=} \mid \ttt{>=} \mid \ttt{<} \mid \ttt{>} \\
      \phi        ::=~& \phi * \phi \mid p(\multiple{e}) \mid e \mid \kacc(e.f) \mid \\
                      & \sif{e}{\phi}{\phi}
    \end{flalign*}
  \end{minipage}}
  \caption{Abstract syntax for \svl}
  \label{fig:precise-grammar}
\end{figure}

\subsubsection{Formulas}\label{sec:precise-formulas}

Formulas (specifications) in \svl are written in the logic of IDF \citep{smans2012implicit} and recursive predicates \citep{parkinson2005separation}. Thus formulas may contain expressions as well as abstract predicates and accessibility predicates from IDF; formulas may be joined by the separating conjunction $*$ \cite{smans2012implicit}. An accessibility predicate $\kacc(e.f)$ requires access to the heap location $e.f$. A predicate instance $p(\overline{e})$ applies the boolean predicate $p$ to the arguments $\overline{e}$. An expression $e$ requires that $e$ evaluates to $\ktrue$. A separating conjunction, as in $\phi_1 * \phi_2$, acts like a logical AND for $\phi_1$ and $\phi_2$, but also requires the heap locations specified by predicates and accessibility predicates in $\phi_1$ to be disjoint from those specified in $\phi_2$, e.g. $\kacc(x.f) * \kacc(y.f)$ implies $x \kneq y$. A conditional formula $\sif{e}{\phi_1}{\phi_2}$ denotes the validity of $\phi_1$ when $e$ evaluates to $\ktrue$; otherwise it denotes the validity of $\phi_2$.

Formulas in IDF, and thus in \svl, must be \textit{self-framed} \cite{smans2012implicit}, which requires permissions for all heap locations used in a formula to also be in that formula. For example, \ttt{x.value == 0} is not self-framed since it references the heap location \ttt{x.value}, but does not assert accessibility of the field \ttt{x.value}. However, \ttt{acc(x.value)} $*$ \ttt{x.value == 0} is self-framed. We specify rules for framing and self-framing in \S\ref{sec:dynamic-formulas}.

Static verification of predicates is done \textit{isorecursively} \cite{summers2013formal}, thus predicate instances must be explicitly folded before they can be asserted. Similarly, predicate bodies must be explicitly unfolded before asserting the implications of a predicate. This enables static verification of recursive predicates and simplifies reasoning about the verifier's behavior.

\subsection{Representation}\label{sec:precise-representation}

In this section, we formally define the data structures used during static verification of \svl programs.

\begin{itemize}
\item A \textit{symbolic value} $\nu \in \textsc{SValue}$ is an abstract value representing an unknown value, such as an integer or object reference. We leave the concrete type of $\textsc{SValue}$ undefined, but assume that an infinite number of distinct new values can be produced by a $\ffresh$ function. 
\item A \textit{symbolic expression} $t \in \SExpr$ is a symbolic or literal value, or is composed of other symbolic expressions and operators.
{\footnotesize
$$t \quad::=\quad \nu \quad\mid\quad l \quad\mid\quad \kneg t \quad\mid\quad t_1 \kand t_2 \quad\mid\quad t_1 \kor t_2 \quad\mid\quad t_1 \oplus t_2$$}

\item A \textit{path condition} $\pc \in \SExpr$ is a symbolic expression composed of conjuncts identifying a particular execution path. Conjuncts are added at every conditional branch during symbolic execution.

\item A \textit{field chunk} $\triple{f}{t}{t'} \in \SField$ represents, in the symbolic heap, the field $f$ of an object reference $t$ containing a value $t'$. A heap chunk is roughly approximate to the \textit{points to} construct in separation logic \cite{reynolds2002separation}. A \textit{predicate chunk} $\pair{p}{\multiple{t}} \in \SPredicate$ represents an isorecursive instance of a predicate $p$ with arguments $\multiple{t}$. Together, field chunks and predicate chunks are called \textit{heap chunks}.

\item A \textit{symbolic heap} $\sheap \in \powerset{\SField \cup \SPredicate}$ is a finite set of heap chunks. All heap chunks that it contains must represent distinct locations in the heap at run time.

\item A \textit{symbolic state} $\sstate \in \SState$ is a tuple containing a path condition (referenced by $\pc(\sstate)$), a symbolic heap (referenced by $\sheap(\sstate)$), and a symbolic environment (referenced by $\senv(\sstate)$). A symbolic state stores all values for a particular point during symbolic execution. The symbol $\ssempty$ represents an empty symbolic state, i.e. $\pc(\ssempty) = \ktrue$ and $\sheap(\ssempty) = \senv(\ssempty) = \emptyset$.

\item A \textit{verification state} $\vstate$ represents a particular point during static verification. It is either a special symbol or a triple $\triple{\sstate}{s}{\gform}$ consisting of a symbolic state $\sstate$, a statement $s$ that remains to be executed, and a formula $\gform$ that must be asserted after executing $s$. $\sstate(\vstate)$, $s(\vstate)$, and $\gform(\vstate)$ are used to reference a specific component of $\vstate$ when $\vstate$ is not a symbol.
{\footnotesize
$$\vstate \quad::=\quad \initsym \quad\mid\quad \finalsym \quad\mid\quad \triple{\sstate}{s}{\gform}$$}

\item A \textit{valuation} $V : \SValue \to \Value$ is a mapping from symbolic values to concrete values (defined in \S\ref{sec:dynamic-representation}). Valuations are implicitly extended to be defined for all $\SExpr$, following the structure of symbolic expressions.

A symbolic expression $t$ \textit{implies} the symbolic expression $t'$ (written $t \implies t'$) if, for all valuations $V$, $V(t) = \ktrue \implies V(t') = \ktrue$. For example, $t_1 \kand t_2 \implies t_2$. A symbolic expression $t$ is \textit{satisfiable}, denoted $\mathrm{sat}(t)$, if $V(t) = \ktrue$ for some valuation $V$.
\end{itemize}

\subsection{Evaluating expressions}\label{sec:precise-eval}

Symbolic execution evaluates an expression $e$ to a symbolic value $t$ using the symbolic state $\sstate$, and is denoted by the judgement $\sstate \vdash e \Downarrow t \dashv \sstate'$. 
It also yields a new symbolic state $\sstate'$ which may contain a more specific path condition if this particular evaluation short-circuits a boolean operator. 
Selected formal rules for symbolic evaluation are given in Figure \ref{fig:symbolic-evaluation-rules}. Literals are evaluated to themselves and variables are evaluated to the corresponding value in the symbolic store. Some operators, such as negation and arithmetic operators, are directly translated into a symbolic expression using the respective operator. In contrast, boolean operators are short-circuiting: when evaluating $e_1 \kand e_2$, if $e_1$ evaluates to $\kfalse$, then $e_2$ is never evaluated (in this case, $e_1 \keq \kfalse$ is added to the path condition). We define two non-deterministic rules for each binary boolean operator---\textsc{SEvalAndA} represents the short-circuiting case just described, while 
\textsc{SEvalAndB} represents the non-short-circuiting case where $e_1$ is true, so $e_2$ must also be evaluated to determine the result. 
Finally, field references are evaluated to the symbolic value contained in their corresponding field chunk in the symbolic heap. Note, a heap chunk for the field reference must be in the heap, otherwise evaluation fails (and ultimately static verification as well), thus the field reference must be framed by the current state. 

We also define a judgment of the form $\sstate \vdash e \downarrow t$ which symbolically evaluates an expression $e$ to a symbolic expression $t$ without short-circuiting. Thus the judgment is deterministic and does not update the path condition. Instead, logical operators such as $\kand$ are encoded directly in the symbolic expression (compare \textsc{SEvalPCAnd} with \textsc{SEvalAndA}/\textsc{SEvalAndB} in Figure \ref{fig:symbolic-evaluation-rules}). This results in a less specific path condition, but reduces the number of execution paths during symbolic execution. This matches the evaluation method described in \citet{divincenzo2022gradual} for evaluation in formulas, while the former style is used for evaluation in imperative code. 

\begin{figure}
  {\footnotesize\disableTttResize
  \begin{mathpar}
    \inferrule[SEvalLiteral]
      { }
      {\sstate \vdash l \Downarrow l \dashv \sstate} \and
    \inferrule[SEvalVar]
      { }
      {\sstate \vdash x \Downarrow \senv(\sstate)(x) \dashv \sstate } \and
    \inferrule[SEvalNeg]
      { \sstate \vdash e \Downarrow t \dashv \sstate' }
      { \sstate \vdash \kneg e \Downarrow \kneg t \dashv \sstate' } \and
    \inferrule[SEvalAndA]
      { \sstate \vdash e_1 \Downarrow t_1 \dashv \sstate' }
      { \sstate \vdash e_1 \kand e_2 \Downarrow t_1 \dashv \sstate'[\pc = \pc(\sstate') \kand \kneg t_1] } \and
    \inferrule[SEvalAndB]
      { \sstate \vdash e_1 \Downarrow t_1 \dashv \sstate' \\\\
        \sstate'[\pc = \pc(\sstate') \kand \kneg t_1] \vdash e_2 \Downarrow t_2 \dashv \sstate'' }
      { \sstate \vdash e_1 \kand e_2 \Downarrow t_2 \dashv \sstate'' } \and
    \inferrule[SEvalField]
      { \sstate \vdash e \Downarrow t_e \dashv \sstate' \\
        \pc(\sstate') \implies t_e \keq t_e' \\\\
        \triple{t_e'}{f}{t} \in \sheap(\sstate') }
      { \sstate \vdash e.f \Downarrow t \dashv \sstate' } \and
    \inferrule[SEvalPCAnd]
      { \sstate \vdash e_1 \downarrow t_1 \\
        \sstate \vdash e_2 \downarrow t_2 }
      { \sstate \vdash e_1 \kand e_2 \downarrow t_1 \kand t_2 \dashv \sstate'' }
  \end{mathpar}}
  \caption{Selected symbolic evaluation rules}
  \label{fig:symbolic-evaluation-rules}
\end{figure}

\subsection{Consuming formulas}\label{sec:precise-consume}

Given a symbolic state $\sstate$ and formula $\phi$, \textit{consuming} a formula $\phi$ first asserts that $\phi$ is established by $\sstate$, and second removes the heap chunks in $\sstate$ corresponding to permissions (predicates and accessibility predicates) in $\phi$. The judgment $\sstate \vdash \phi \rhd \sstate'$ denotes consumption; i.e., $\phi$ is consumed from $\sstate$, resulting in the new symbolic state $\sstate'$. See Figure \ref{fig:symbolic-consume-rules} for selected rules.


Consuming an accessibility predicate such as $\kacc(e.f)$ first asserts the predicate has a corresponding field chunk in the heap, and second removes the chunk from the heap (\textsc{SConsumeAcc}). 
Consuming a predicate similarly looks for and removes the corresponding predicate chunk from the heap (\textsc{SConsumePredicate}). If any of the chunks are missing from the heap, then verification fails. 
Expressions must evaluate to $\ktrue$ in the current symbolic execution path. That is, the current path condition must imply the symbolic value of the expression (\textsc{SConsumeValue}). As mentioned previously and seen in the aforementioned rule, expressions in formulas are evaluated with the deterministic evaluation judgment (i.e., not the short-circuiting one), which matches the behavior described in \citet{divincenzo2022gradual} and reduces the number of branches generated during symbolic execution. This differs from Viper, which uses a single, short-circuiting \ttt{eval} algorithm everywhere, including in \ttt{consume}. A separating conjunction, such as $\phi_1 * \phi_2$, is consumed left-to-right, i.e. $\phi_1$ is consumed and then $\phi_2$ is consumed (\textsc{SConsumeConjunction}). This enforces the separation of permissions between the two conjuncts -- heap chunks necessary to satisfy the permissions asserted in $\phi_1$ will be removed before consuming $\phi_2$, so, if they overlap, consumption of $\phi_2$ will fail. 
Finally, we define consumption of logical conditionals, like $\sif{e}{\phi_1}{\phi_2}$, in two non-deterministic rules. In \textsc{SConsumeConditionalA}, $e$ is assumed to be true in the path condition and $\phi_1$ is consumed. Likewise, in \textsc{SConsumeConditionalB}, $e$ is assumed to be false in the path condition and $\phi_2$ is consumed. 

Note, as we saw in \S\ref{sec:precise-eval}, evaluation of a field access in an expression requires the state to contain a heap chunk for the field. But consume removes heap chunks from the state in a left-to-right manner thanks to rules \textsc{SConsumeAcc} and \textsc{SConsumeConjunction}.
For example, we may want to consume the formula $\kacc(e.f) * e.f \keq 0$. First, a heap chunk for $\kacc(e.f)$ is found and removed from the heap. Then, the resulting state is used to frame and evaluate $e.f \keq 0$ in the next consume step. However, the heap chunk for $e.f$ was removed from the state so evaluation fails when it shouldn't since the original state contained the heap chunk. 
To solve this issue, we define consume using an underlying judgment, denoted $\sstate, \sstate_E \vdash \phi \rhd \sstate'$, which asserts and removes permissions from $\sstate$ while evaluating expressions with the unchanging reference state $\sstate_E$. The state $\sstate_E$ is the symbolic state before consumption. The rule \textsc{SConsume} defines the top-level consume judgment using this new underlying judgment.


Our consume judgment represents the core functionality of \citet{divincenzo2022gradual} and \citet{schwerhoff16silicon}'s \ttt{consume} algorithms. 
We, of course, ignore unnecessary implementation details like snapshots, which preserve certain portions of the state that are removed during consume. 

\begin{figure}[t]
  {\footnotesize\disableTttResize
  \begin{mathpar}
    \inferrule[SConsume]
      { \sstate, \sstate \vdash \phi \rhd \sstate' }
      { \sstate \vdash \phi \rhd \sstate' } \and
    \inferrule[SConsumeValue]
      { \sstate_E \vdash e \downarrow t \\\\
        \pc(\sstate) \implies t }
      { \sstate, \sstate_E \vdash e \rhd \sstate } \and
    \inferrule[SConsumeAcc]
      { \sstate_E \vdash e \downarrow t_e \\\\
        \pc(\sstate) \implies t_e \keq t_e' \\\\
        \sheap(\sstate) = \set{ \triple{f}{t_e'}{t} }\uplus \sheap' }
      { \sstate, \sstate_E \vdash \kacc(e.f) \rhd \sstate[\sheap = \sheap'] } \and
    \inferrule[SConsumePredicate]
      { \multiple{\sstate_E \vdash e \downarrow t} \\
        \multiple{\pc(\sstate) \implies t \keq t'} \\\\
        \sheap(\sstate) = \set{\pair{p}{\multiple{t'}}}\uplus \sheap' }
      { \sstate, \sstate_E \vdash \sstate[\sheap = \sheap'] } \and
    \inferrule[SConsumeConjunction]
      { \sstate, \sstate_E \vdash \phi_1 \rhd \sstate' \\\\
        \sstate', \sstate_E[\pc = \pc(\sstate')] \vdash \phi_2 \rhd \sstate'' }
      { \sstate, \sstate_E \vdash \phi_1 * \phi_2 \rhd \sstate'' } \and
    \inferrule[SConsumeConditionalA]
      { \sstate_E \vdash e \downarrow t \\
        \pc' = \pc(\sstate) \kand t \\\\
        \sstate[\pc = \pc'], \sstate_E[\pc = \pc'] \vdash \phi_1 \rhd \sstate' }
      { \sstate, \sstate_E \vdash \sif{e}{\phi_1}{\phi_2} \rhd \sstate' } \and
    \inferrule[SConsumeConditionalB]
      { \sstate_E \vdash e \downarrow t \\
        \pc' = \pc(\sstate) \kand \neg t \\\\
        \sstate[\pc = \pc'], \sstate_E[\pc = \pc'] \vdash \phi_2 \rhd \sstate' }
      { \sstate, \sstate_E \vdash \sif{e}{\phi_1}{\phi_2} \rhd \sstate' }
  \end{mathpar}}
  \caption{Selected consume rules}
  \label{fig:symbolic-consume-rules}
\end{figure}

\subsection{Producing formulas}\label{sec:precise-produce}

Given an initial state $\sstate$ and formula $\phi$, \emph{producing} $\phi$ adds the information in $\phi$ into the symbolic state $\sstate$, resulting in a new state $\sstate'$. The judgment for $\sstate \vdash \phi \lhd \sstate'$ denotes production; i.e., $\phi$ is produced into the state $\sstate$, resulting in $\sstate'$. In particular, produce adds heap chunks representing predicates in $\phi$ to the symbolic heap and symbolic expressions representing constraints from boolean expressions in $\phi$ to the path condition in a left-to-right manner. Note, each symbolic heap chunk represents a distinct region of memory at run-time, an invariant that we later prove. Thus overlapping heap chunks may only occur in symbolic states which represent an unreachable dynamic state and can safely be ignored.
%
%
When producing formulas, we use deterministic symbolic evaluation for expressions, but we introduce separate execution paths for conditionals (similar to \S\ref{sec:precise-consume}).

Formal rules are given in \supplement{the supplement \citep{supplement}}{\S\ref{sec:produce-rules}}.
These rules capture the functionality of the \ttt{produce} algorithm specified in \citet{divincenzo2022gradual} and \citet{schwerhoff16silicon}. As noted in the previous section, \citet{schwerhoff16silicon} uses short-circuiting evaluation in all places, while we use deterministic evaluation.


\subsection{Executing statements}\label{sec:precise-exec}

Now that we have formally defined symbolic execution of expressions and formulas, we can put the pieces together to define symbolic execution of program statements.

We represent the symbolic execution of program statements as small-step execution rules denoted by the judgment $\sstate \vdash s \to s' \dashv \sstate'$, where the initial statement $s$ is symbolically executed with the initial state $\sstate$, resulting in the state $\sstate'$, and then transitions to the next statement $s'$ with the new state $\sstate'$. Selected formal rules are shown in Figure \ref{fig:symbolic-exec-rules}. Executing a variable assignment updates the symbolic store (\textsc{SExecAssign}); while executing a field assignment first consumes $\kacc(x.f)$, 
and then adds a new heap chunk for $x.f$ to the heap that contains $x.f$'s new symbolic value after the write (\textsc{SExecAssignField}).
An $\kalloc(S)$ statement adds a heap chunk for each field in $S$ to the symbolic heap. The new object reference is a fresh value but the new field chunks are each initialized with default values, which reflects the behavior of \co (\textsc{SExecAlloc}). 
Execution rules for $\kif$ statements are non-deterministic: given a statement $\sif{e}{s_1}{s_2}$, \textsc{SExecIfA} adds $e$ to the path condition and continues execution with $s_1$, while \textsc{SExecIfB} adds $\kneg e$ to the path condition and continues execution with $s_2$.

Symbolic execution of method calls is modular; i.e., the behavior of the method call is represented by the method's pre- and post-conditions (\textsc{SExecCall}). 
First, the method's arguments are evaluated to symbolic values. Then the pre-condition is consumed using a special environment containing the argument values. A $\ffresh$ symbolic value is added to represent the return value of the method, and then the post-condition of the method is produced. 
The special environment is then replaced by the original environment, with the addition of the result's symbolic value. Loops (i.e $\kwhile$ statements) are executed similarly: the loop invariant is consumed, variables modified by the loop body are set to fresh values in the symbolic store, the loop invariant is produced, and the negated loop condition is added to the path condition (\textsc{SExecWhile}). 
Execution of the $\kfold$ and $\kunfold$ statements is also similar to loops and method calls: 
$\kfold$ consumes the predicate body and adds a representative predicate chunk to the symbolic heap, while $\kunfold$ consumes the predicate instance (thus removing the predicate chunk from the heap) and produces the predicate body.

\begin{figure}
  {\footnotesize\disableTttResize
  \begin{mathpar}
    \inferrule[SExecAssign]
      { \sstate \vdash e \Downarrow t \dashv \sstate' \\
        \senv' = \senv(\sstate)[x \mapsto t] }
      { \sstate \vdash \sseq{x \kassign e}{s} \to s \dashv \sstate'[\senv = \senv'] } \and
    \inferrule[SExecAssignField]
      { \sstate \vdash e \Downarrow t \dashv \sstate' \\
        \sstate' \vdash \kacc(x.f) \rhd \sstate'' \\\\
        \sheap' = \sheap(\sstate'') \cup \set{ \triple{f}{\senv(\sstate'')(x)}{t} } }
      { \sstate \vdash \sseq{x.f \kassign e}{s} \to s \dashv \sstate''[\sheap = \sheap']} \and
    \inferrule[SExecAlloc]
      { t = \ffresh \\
        \multiple{T~f} = \fstruct(S) \\\\
        \sheap' = \sheap(\sstate) \cup \set{ \multiple{\triple{f}{t}{\fdefault(T)}} } }
      { \sstate \vdash \sseq{x \kassign \kalloc(S)}{s} \to s \dashv \sstate[\sheap = \sheap'] } \and
    \inferrule[SExecCall]
      { \multiple{\sstate \vdash e \Downarrow t \dashv \sstate'} \\
        \multiple{x} = \fparams(m) \\\\
        \sstate'[\senv = [\multiple{x \mapsto t}]] \vdash \fpre(m) \rhd \sstate'' \\\\
        t' = \ffresh \\
        \senv' = \senv(\sstate')[y \mapsto t'] \\\\
        \sstate''[\senv = [\multiple{x \mapsto t}, \kresult \mapsto t']] \vdash \fpost(m) \lhd \sstate''' }
      { \sstate \vdash \sseq{y \kassign m(\multiple{e})}{s} \to s \dashv \sstate'''[\senv = \senv'] } \and
    \inferrule[SExecIfA]
      { \sstate \vdash e \Downarrow t \dashv \sstate' \\
        \sstate'' = \sstate'[\pc = \pc(\sstate') \kand t] }
      { \sstate \vdash \sseq{\sif{e}{s_1}{s_2}}{s} \to \sseq{s_1}{s} \dashv \sstate'' } \and
    \inferrule[SExecIfB]
      { \sstate \vdash e \Downarrow t \dashv \sstate' \\
      \sstate'' = \sstate'[\pc = \pc(\sstate') \kand \kneg t] }
      { \sstate \vdash \sseq{\sif{e}{s_1}{s_2}}{s} \to \sseq{s_2}{s} \dashv \sstate'' } \and
    \inferrule[SExecWhile]
      { \sstate \vdash \phi \rhd \sstate' \\
        \multiple{x} = \fmodified(s) \\
        \sstate'[\senv = \senv(\sstate')[\multiple{x \mapsto \ffresh}]] \vdash \phi \lhd \sstate'' \\
        \sstate'' \vdash e \downarrow t }
      { \sstate \vdash \sseq{\swhile{e}{\phi}{s}}{s'} \to s' \dashv \sstate''[\pc = \pc(\sstate'') \kand \kneg t] }
  \end{mathpar}}
  \caption{Selected symbolic execution rules}
  \label{fig:symbolic-exec-rules}
\end{figure}

\subsection{Modularly verifying programs}
We now define verification of entire programs.
We start by defining what a program $\prog$ is; it is a quadruple $\quadruple{s}{M}{P}{S}$ where $s$ is the entry statement of the program, $M$ is the set of method names, $P$ is the set of predicate names, and $S$ is the set of struct names in the program. 
Then, we define the judgment $\strans{\prog}{\vstate}{\vstate'}$ that specifies all possible symbolic execution steps that occur during verification of $\prog$. Selected rules are given in Figure \ref{fig:verify-rules}.

A verification state $\vstate$ is \textit{reachable} from program $\prog$ if $\vstate = \initsym$ or $\strans{\prog}{\vstate_0}{\vstate}$ for some reachable $\vstate_0$.  The latter judgement only holds when $\vstate_0$ is itself reachable.

This judgement includes rules for modular verification. From $\initsym$, we can begin verification of the entry statement (\textsc{SVerifyInit}) or of any method (\textsc{SVerifyMethod}). When verifying a method, the method's post-condition is used as the formula of the verification state. After completely executing the method's body, i.e. having reached $\kskip$, we consume the formula contained in the verification state (\textsc{SVerifyFinal}), which is the method's post-condition.

We modularly verify loop bodies following a similar pattern. As described in \S\ref{sec:precise-exec}, symbolic execution steps over loop bodies in the same way it steps over method calls. However, we introduce a verification rule (\textsc{SVerifyLoopBody}) that allows symbolic execution of a loop body, beginning with a new symbolic state. We reuse the symbolic store from the initial symbolic state, except that all variables modified by the loop body are replaced by fresh values. Verification proceeds similar to method verification, except that we use the loop invariant for the formula of the new verification state---we produce the loop invariant, symbolically execute the loop body, and finally consume the loop invariant. Thus symbolic execution, which steps over the loop, ensures that the loop invariant holds for the initial iteration, while this verification rule ensures that the loop invariant is preserved after every iteration.

We also include another verification rule for loops, \textsc{SVerifyLoop}, in order to match the behavior of \gco. This rule and its correspondence with \gco is described further in \S\ref{sec:loops}.

Statements are executed by symbolic execution as described in \S\ref{sec:precise-exec}. Given a reachable verification state $\triple{\sstate}{s}{\phi}$ and the symbolic execution $\sstate \vdash s \to s' \dashv \sstate'$, the state $\triple{\sstate'}{s'}{\phi}$ is reachable, i.e. $\prog \vdash \triple{\sstate}{s}{\phi} \to \triple{\sstate'}{s'}{\phi}$.

\begin{figure}
  {\footnotesize\disableTttResize
  \begin{mathpar}
    \inferrule[SVerifyInit]
      { }
      { \strans{\quadruple{s}{M}{P}{S}}{\initsym}{\triple{\ssempty}{s}{\ktrue}} } \and
    \inferrule[SVerifyMethod]
      { m \in M \\
        \multiple{x} = \fparams(m) \\\\
        \sproduce{\ssempty[\senv = [\multiple{x \mapsto \ffresh}]]}{\fpre(m)}{\sstate} }
      { \strans{\quadruple{s}{M}{P}{S}}{\initsym}{\triple{\sstate}{\sseq{\fbody(m)}{s}}{\fpost(m)}} } \and
    \inferrule[SVerifyLoopBody]
      { \strans{\prog}{\_}{\triple{\sstate_0}{\sseq{\swhile{e}{\gform}{s}}{s'}}{\gform_0}} \\\\
      \sproduce{\quintuple{\bot}{\senv(\sstate_0)[\multiple{x \mapsto \ffresh}]}{\emptyset}{\emptyset}{\pc(\sstate_0)}}{\gform}{\sstate} \\\\
      \multiple{x} = \fmodified(s) \\
      \spceval{\sstate}{e}{t}{\scheck} }
      {
        \prog \vdash \triple{\sstate_0}{\sseq{\swhile{e}{\gform}{s}}{s'}}{\gform_0} \to \\\\
        \triple{\sstate[\pc = \pc(\sstate) \kand t]}{\sseq{s}{\kskip}}{\gform}
      } \and
    \inferrule[SVerifyLoop]
      {
        \strans{\prog}{\_}{\triple{\sstate_0}{\sseq{\swhile{e}{\gform}{s}}{s'}}{\gform_0}} \\\\
        \scons{\sstate_0}{\gform}{\sstate_0'}{\_} \\
        \sproduce{\sstate_0'[\senv = \senv(\sstate_0)[\multiple{x \mapsto \ffresh}]]}{\gform}{\sstate_0''} \\\\
        \multiple{x} = \fmodified(s) \\
      }
      {\prog \vdash \triple{\sstate_0}{\sseq{\swhile{e}{\gform}{s}}{s'}}{\gform_0} \to \\\\
      \triple{\sstate_0}{\sseq{\swhile{e}{\gform}{s}}{s'}}{\gform_0} } \and
    \inferrule[SVerifyStep]
      { \strans{\prog}{\_}{\triple{\sstate}{s}{\phi}} \\
        \sexec{\sstate}{s}{s'}{\sstate'} }
      { \strans{\prog}{\triple{\sstate}{s}{\phi}}{\triple{\sstate'}{s'}{\phi}} } \and
    \inferrule[SVerifyFinal]
      { \strans{\prog}{\_}{\triple{\sstate}{\kskip}{\phi}} \\
        \scons{\sstate}{\phi}{\sstate'} }
      { \strans{\prog}{\triple{\sstate}{\kskip}{\phi}}{\finalsym} }
  \end{mathpar}}
  \caption{Selected verification rules}
  \label{fig:verify-rules}
\end{figure}

\subsection{Example}\label{sec:static-example}

\begin{figure}
  \begin{minipage}[t]{.45\linewidth}
\begin{lstlisting}[name=list-example]
struct List { int value; List next }

predicate acyclic(List l) =
  acc(l.value) * acc(l.next) *
  (if l.next == NULL then true
   else acyclic(l.next))

List singleton(int value)
  requires true
  ensures (acyclic(result) *
           result != NULL)
{ (*@ $\cdots$ @*) }
\end{lstlisting}
  \end{minipage}
  \begin{minipage}[t]{.45\linewidth}
\begin{lstlisting}[name=list-example]
List append(List l, int value)
  requires acyclic(l) * l != NULL(*@\label{ln:precise-append-pre}@*)
  ensures acyclic(result) * result != NULL(*@\label{ln:precise-append-post}@*)
{(*@\label{ln:precise-append-begin}@*)
  unfold acyclic(l);(*@\label{ln:precise-append-unfold}@*)
  if (l.next == NULL)(*@\label{ln:precise-append-if}@*)
    n = singleton(value);(*@\label{ln:precise-append-single}@*)
  else(*@\label{ln:precise-append-else}@*)
    n = append(l.next, value);(*@\label{ln:precise-append-rec}@*)
  l.next = n;(*@\label{ln:precise-append-next}@*)
  fold acyclic(l);(*@\label{ln:precise-append-fold}@*)
  result = l;(*@\label{ln:precise-append-result}@*)
}(*@\label{ln:precise-append-end}@*)
\end{lstlisting}
  \end{minipage}
  \caption{Code and supporting declarations for appending to an acyclic linked list}
  \label{fig:list-example}
\end{figure}

{\parindent0pt
We now illustrate verification of the \ttt{append} method defined in Figure \ref{fig:list-example}, which appends a given value to the end of a list using recursion. The \ttt{append} method is ensured to be memory safe and preserve acyclicity of the list through verification. We begin with an empty state and initialize all parameters with fresh values:
{\disableTttResize
\begin{lstlisting}[aboveskip=.25em, belowskip=0.5em,numbers=none]
(*@\lightgray{$\sstate_1 = \triple{\emptyset}{\senv}{\ktrue} \quad \senv = [\ttt{l} \mapsto \nu_1, \ttt{v} \mapsto \nu_2]$} @*)
\end{lstlisting}}
Then the pre-condition \ttt{acyclic(l) * l != NULL} is produced:
{\disableTttResize
\begin{lstlisting}[aboveskip=.25em, belowskip=0.5em,numbers=none]
(*@\lightgray{$\sstate_2 = \triple{\set{ \pair{\ttt{acyclic}}{\nu_1} }}{\senv}{\nu_1 \kneq \knull}$} @*)
\end{lstlisting}}
Unfolding \ttt{acyclic(l)} (line \ref{ln:precise-append-unfold}) consumes the predicate from the state and produces its body. The body of \ttt{acyclic(l)} contains a logical conditional resulting in two possible execution paths for produce -- one where $\nu_4$ is null and one where $\nu_4$ is not null, where $\nu_4$ is the symbolic value for \ttt{l.next}:
{\disableTttResize
\begin{lstlisting}[aboveskip=.25em, belowskip=0.5em, firstnumber=17,stepnumber=17]
unfold acyclic(l);
(*@\lightpurple{$\sstate_{A3} = \triple{\set{ \triple{\ttt{value}}{\nu_1}{\nu_3}, \triple{\ttt{next}}{\nu_1}{\nu_4} }}{\senv}{\nu_1 \kneq \knull \kand \nu_4 \keq \knull}$} @*)
(*@\lightblue{$\sstate_{B3} = \triple{\set{ \triple{\ttt{value}}{\nu_1}{\nu_3}, \triple{\ttt{next}}{\nu_1}{\nu_4} , \pair{\ttt{acyclic}}{\nu_4}} }{\senv}{\nu_1 \kneq \knull \kand \nu_4 \kneq \knull} $} @*)
\end{lstlisting}}
We follow both execution paths, using color-coding to distinguish them. Next, when executing the \ttt{if} statement (line \ref{ln:precise-append-if}), we first evaluate the condition. Since \ttt{l.next} is framed by the state, evaluation of the condition succeeds and execution branches along the \ttt{if}. We first consider executing the \ttt{then} branch of the \ttt{if}, where $\nu_4 \keq \knull$ is added it to the path condition:
{\disableTttResize
\begin{lstlisting}[aboveskip=.25em, belowskip=0.5em, firstnumber=18,stepnumber=18]
if (l.next == NULL)
    (*@\lightpurple{$\sstate_{A4} = \triple{\cdots}{\cdots}{\nu_1 \kneq \knull \kand \nu_4 \keq \knull \kand \nu_4 \keq \knull}$} @*)
    (*@\lightblue{$\sstate_{B4} = \triple{\cdots}{\cdots}{\nu_1 \kneq \knull \kand \nu_4 \kneq \knull \kand \nu_4 \keq \knull}$} @*)
\end{lstlisting}}
However, the path condition $\nu_1 \kneq \knull \kand \nu_4 \kneq \knull \kand \nu_4 \keq \knull$ is unsatisfiable, thus we can safely prune this execution path and only continue with the first. We proceed to symbolically execute the call to \ttt{singleton} (line \ref{ln:precise-append-single}) by consuming the (empty) pre-condition, and producing the post-condition. The result is represented by a fresh symbolic value $\nu_5$:
{\disableTttResize
\begin{lstlisting}[aboveskip=.25em, belowskip=0.5em, firstnumber=19,stepnumber=19]
    n = singleton(value);
    (*@\lightpurple{$\sstate_{A5} = \triple{\set{ \triple{\ttt{value}}{\nu_1}{\nu_3}, \triple{\ttt{next}}{\nu_1}{\nu_4}, \pair{\ttt{acyclic}}{\nu_5}} }{\senv[\ttt{n} \mapsto \nu_5]}{\nu_1 \kneq \knull \kand \nu_4 \keq \knull \kand \nu_5 \kneq \knull}$} @*)
\end{lstlisting}}
Symbolic execution of this path then jumps to line \ref{ln:precise-append-next}, but to preserve code order we now demonstrate verification of the \ttt{else} branch (line \ref{ln:precise-append-else}). To do this, we use states $\sstate_{A3}$ and $\sstate_{B3}$, and add the negation of the condition to verify the \ttt{else} body:
{\disableTttResize
\begin{lstlisting}[aboveskip=.25em, belowskip=0.5em,firstnumber=20,stepnumber=20]
else
    (*@\lightpurple{$\sstate_{A4}' = \triple{\cdots}{\cdots}{\nu_1 \kneq \knull \kand \nu_4 \keq \knull \kand \nu_4 \kneq \knull}$} @*)
    (*@\lightblue{$\sstate_{B4}' = \triple{\set{ \triple{\ttt{value}}{\nu_1}{\nu_3}, \triple{\ttt{next}}{\nu_1}{\nu_4}, \triple{\ttt{acyclic}}{\nu_4} }}{\senv}{\nu_1 \kneq \knull \kand \nu_4 \kneq \knull}$} @*)
\end{lstlisting}}
Here again this results in an unsatisfiable path condition $\nu_1 \kneq \knull \kand \nu_4 \keq \knull \kand \nu_4 \kneq \knull$, so we prune that path. We continue with the other path and execute the recursive call to \ttt{append} (line \ref{ln:precise-append-rec}), which consumes the pre-condition (removing $\pair{\ttt{acyclic}}{\nu_4}$) and produces the post-condition, using the fresh value $\nu_6$ to represent the result (adding $\pair{\ttt{acyclic}}{\nu_6}$):
{\disableTttResize
\begin{lstlisting}[aboveskip=.25em, belowskip=0.5em, firstnumber=21,stepnumber=21]
    n = append(l.next, value);
    (*@\lightblue{$\sstate_{B5}' = \triple{\set{ \triple{\ttt{value}}{\nu_1}{\nu_3}, \triple{\ttt{next}}{\nu_1}{\nu_4}, \pair{\ttt{acyclic}}{\nu_6}} }{\senv[\ttt{n} \mapsto \nu_6]}{\nu_1 \kneq \knull \kand \nu_4 \kneq \knull \kand \nu_6 \kneq \knull}$} @*)
\end{lstlisting}}
Now we have completed verifying both branches of the \ttt{if} statement. Note that we do not actually join execution at this point; instead, we jump to line \ref{ln:precise-append-next} immediately after executing the program up to $\sstate_{A5}$ and $\sstate_{B5}'$ along both paths. We follow both of these paths for the rest of verification. The field assignment on line 22 consumes \ttt{acc(l.next)} and produces a new corresponding heap chunk with \ttt{n}'s value:
{\disableTttResize
\begin{lstlisting}[aboveskip=.25em, belowskip=0.5em, firstnumber=22,stepnumber=22]
l.next = n;
(*@\lightpurple{$\sstate_{A6} = \triple{\set{ \triple{\ttt{value}}{\nu_1}{\nu_3}, \triple{\ttt{next}}{\nu_1}{\nu_5}, \pair{\ttt{acyclic}}{\nu_5}} }{\senv[\ttt{n} \mapsto \nu_5]}{\nu_1 \kneq \knull \kand \nu_4 \keq \knull \kand \nu_5 \kneq \knull}$} @*)
(*@\lightblue{$\sstate_{B6}' = \triple{\set{ \triple{\ttt{value}}{\nu_1}{\nu_3}, \triple{\ttt{next}}{\nu_1}{\nu_6}, \pair{\ttt{acyclic}}{\nu_6}} }{\senv[\ttt{n} \mapsto \nu_6]}{\nu_1 \kneq \knull \kand \nu_4 \kneq \knull \kand \nu_6 \kneq \knull}$} @*)
\end{lstlisting}}
Folding \ttt{acyclic} results in twice the number of execution paths since it consumes \ttt{acylic(l)}'s body, which includes an logical conditional. 
However, again, information from the path conditions in $\sstate_{A6}$ and $\sstate_{B6}'$ allow us to prune some of these paths. 
We elide these pruned paths and only show the taken ones. After consuming \ttt{acyclic(l)}'s body, execution produces \ttt{acyclic(l)} into the state:
{\disableTttResize
\begin{lstlisting}[aboveskip=.25em, belowskip=0pt, firstnumber=23,stepnumber=23]
fold acyclic(l);
(*@\lightpurple{$\sstate_{A7} = \triple{\set{ \pair{\ttt{acyclic}}{\nu_1}} }{\senv[\ttt{n} \mapsto \nu_5]}{\nu_1 \kneq \knull \kand \nu_4 \keq \knull \kand \nu_5 \kneq \knull}$} @*)
(*@\lightblue{$\sstate_{B7}' = \triple{\set{ \pair{\ttt{acyclic}}{\nu_1}} }{\senv[\ttt{n} \mapsto \nu_6]}{\nu_1 \kneq \knull \kand \nu_4 \kneq \knull \kand \nu_6 \kneq \knull}$} @*)
\end{lstlisting}
\begin{lstlisting}[aboveskip=0pt, belowskip=0.5em, firstnumber=24,stepnumber=24]
result = l;
(*@\lightpurple{$\sstate_{A8} = \triple{\set{ \pair{\ttt{acyclic}}{\nu_1}} }{\senv[\ttt{n} \mapsto \nu_5, \ttt{result} \mapsto \nu_1]}{\nu_1 \kneq \knull \kand \nu_4 \keq \knull \kand \nu_5 \kneq \knull}$} @*)
(*@\lightblue{$\sstate_{B8}' = \triple{\set{ \pair{\ttt{acyclic}}{\nu_1}} }{\senv[\ttt{n} \mapsto \nu_6, \kresult \mapsto \nu_1]}{\nu_1 \kneq \knull \kand \nu_4 \kneq \knull \kand \nu_6 \kneq \knull}$} @*)
\end{lstlisting}}
Finally, in both $\sstate_{A8}$ and $\sstate_{B8}'$, we can consume the post-condition \ttt{acyclic(result) * result != NULL}. Therefore, we have verified all possible symbolic execution paths of \ttt{append}'s body, and thus verified \ttt{append}.
}

\section{\gvl}\label{sec:gradual}

\svl reflects the core components of Viper---\ttt{eval}, \ttt{consume}, \ttt{produce}, and \ttt{exec}. We now formally define \gvl, an extension of \svl which supports gradual specifications. We then define static verification for \gvl that allows optimistic assumptions to satisfy proof goals and generates checks to be verified at run time to cover these assumptions as in \citet{divincenzo2022gradual}.

Note, the syntax of \gvl differs slightly from that of \gvc (the frontend for \gco), particularly with its omission of C-style pointers. However, due to the restrictions of \co, all usages of pointers in \co can be translated to use object references. This and other translations are done by \gco during its conversion to an intermediate language \gviper, which is used in the backend verifier. In order to simplify our model, \gvl is very similar to the language of \gvc, but incorporates elements of the \gviper language when this simplifies the definition of our verification algorithm.

\subsection{Gradual formulas}\label{sec:gradual-formulas}

We first extend the syntax of our language to include \textit{imprecise} formulas---formulas of the form $\simprecise{\phi}$. An imprecise formula may represent any logically consistent strengthening of the precise portion $\phi$ \cite{wise2020gradual}. For example, the imprecise formula $\simprecise{x\,\ttt{>}\,0}$ consistently implies $x \keq 2$, but does not consistently imply $x \keq 0$. Then, a \textit{gradual formula} $\gform$ may be precise or imprecise, and \textit{gradual programs} are programs that contain gradual formulas. The abstract syntax of \gvl extends \svl's syntax with gradual formulas: 

{\footnotesize\disableTttResize
  \begin{minipage}{.45\linewidth}
    \begin{flalign*}
      \mathcal{P} &::= p(\multiple{T~x}) \kassign \gform &\\
      \Phi        &::= \krequires~\gform~\kensures~\gform &
    \end{flalign*}
  \end{minipage}
  \begin{minipage}{.45\linewidth}
    \begin{flalign*}
      s      &::= \cdots \mid \swhile{e}{\gform}{s} & \\
      \gform &::= \phi \mid \simprecise{\phi} &
    \end{flalign*}
  \end{minipage}}

~

\noindent Note, imprecise formulas are always considered self-framed, because they can always be strengthened to be self-framing. Therefore we require all method pre- and postconditions, loop invariants, and predicate bodies to be \textit{specifications}---formulas which are either imprecise or self-framed.

Also, note that IDF is particularly well-suited for gradual specifications, in comparison to separation logic \citep{reynolds2002separation}, since IDF allows separately specifying access permission and heap values. This allows specification of heap values while leaving more complex accessibility assertions unspecified, as in the formula $\simprecise{\ttt{x.f} \kneq \knull}$.



\subsection{Representation}\label{sec:gradual-representation}

In this section we extend the data structures from \S\ref{sec:precise-representation} to support \textit{imprecise states}---states in which it is permissible to make optimistic assumptions---and define our representation of run-time checks.

\begin{itemize}

\item A symbolic state $\sstate$ is now a quintuple $\quintuple{\imp}{\sheap}{\oheap}{\senv}{\pc}$ where $\imp$ is an \textit{imprecise flag}, $\sheap$ is a \textit{precise heap}, $\oheap$ is an \textit{optimistic heap}, $\senv$ is the \textit{symbolic store}, and $\pc$ is the \textit{path condition}. As before, we use the notation $\imp(\sstate)$, $\sheap(\sstate)$, etc. to reference specific components of a symbolic state. $\senv$ and $\pc$ are defined in \S\ref{sec:precise-representation} but we redefine the other components.

\item An \textit{imprecise flag} $\imp \in \set{\top, \bot}$ is a flag indicating whether a symbolic state is imprecise ($\top$) or precise ($\bot$). $\imp(\sstate)$ denotes that $\imp(\sstate) = \top$ (and thus $\sstate$ is an imprecise state), while $\neg \imp(\sstate)$ denotes that an $\imp(\sstate) = \bot$ (and thus $\sstate$ is precise). Imprecise states are produced by consuming or producing an imprecise specification. Once imprecise, a state always remains imprecise.

\item A \textit{precise heap} $\sheap$ is a symbolic heap as described in section \ref{sec:precise-representation}. Thus it is a finite set of heap chunks where all heap chunks represent distinct locations in the heap at run time.

\item An \textit{optimistic heap} $\oheap$ is a finite set of field chunks. Field chunks contained in the optimistic heap may represent the same location in the heap at run time, i.e. the optimistic heap does not preserve the separation invariant like the precise heap. The optimistic heap of a well-formed symbolic state must be empty unless it is an imprecise state. 

\end{itemize}

\subsection{Run-time checks}\label{sec:gradual-rt-checks}
A \textit{run-time check} $r \in \SCheck$ denotes an assertion that validates assumptions made during static verification of imprecise programs. It is a symbolic expression, symbolic permission, pair of symbolic permission sets, or $\bot$:
{\footnotesize
$$r \quad::=\quad t \quad\mid\quad \sperm \quad\mid\quad \fsep(\sperms_1, \sperms_2) \quad\mid\quad \bot$$}

\noindent A set of run-time checks is denoted $\scheck \in \powerset{\SCheck}$.
In a run-time check, a symbolic expression $t$ asserts that the value represented by $t$ at run time is $\ktrue$, a symbolic permission $\sperm$ asserts ownership of a field or a predicate instance, and a pair $\fsep(\sperms_1, \sperms_2)$ asserts that the sets of permissions represented by $\sperms_1$ and $\sperms_2$ are disjoint. $\bot$ represents a static verification failure. We represent static verification failure as an unsatisfiable run-time check, instead of failing verification entirely, to accommodate imprecision.

Note that our run-time checks contain symbolic values. This is unlike \gco \citep{divincenzo2022gradual}, where checks produced have their symbolic values replaced by corresponding program variables. This replacement is needed to support the implementation of run-time checks and adds a significant amount of complexity to their algorithms. Fortunately, as we will see later, we can abstract away this connection of symbolic values to program variables (aka. concrete values) using \emph{valuations}; and so we can produce abstracted checks here, avoiding additional complexity in our formalism. 
Additionally, at each branch point \citet{divincenzo2022gradual} check whether \textit{all} possible branches fail and, if so, halt static verification. We do not specify this behavior; however, this is possible by checking for $\bot \in \scheck$ at each step of symbolic execution.

\subsection{Evaluating expressions}\label{sec:gradual-eval}

We now extend our previous judgement for symbolic evaluation from \S\ref{sec:precise-eval} to allow optimistic symbolic evaluation of expressions. We specify a set $\scheck$ of run-time checks necessary for a given evaluation, thus our judgement is now $\seval{\sstate}{e}{t}{\sstate'}{\scheck}$.

Field chunks may be referenced in the optimistic heap by \textsc{SEvalFieldOptimistic} in Figure \ref{fig:gradual-eval-rules}. These field chunks have already been validated, thus we do not need additional run-time checks. A field may also be optimistically evaluated by \textsc{SEvalFieldImprecise}, even if it does not exist in $\sheap$ or $\oheap$. This adds a new field chunk with a $\ffresh$ value to $\oheap$. This requires a run-time check which asserts permission to access the field. Finally, \textsc{SEvalFieldFailure} applies in a precise state when a field is referenced but no matching heap chunk exists. This results in a failure of static verification, represented by $\bot$, for that execution branch.

We also modify the existing set of rules described in \S\ref{sec:precise-eval} to collect run-time checks from recursive evaluations. Likewise, we modify the deterministic evaluation judgement to add similar rules as those described above, allowing it to also generate run-time checks, thus its form is $\spceval{\sstate}{e}{t}{\scheck}$.

\begin{figure}
  {\footnotesize\disableTttResize
  \begin{mathpar}
    \inferrule[SEvalFieldOptimistic]
      { \seval{\sstate}{e}{t_e}{\sstate'}{\scheck} \\\\
        \nexistential{\triple{f}{t_e'}{t} \in \sheap(\sstate')}{\pc(\sstate') \implies t_e' \keq t_e} \\\\
        \triple{f}{t_e'}{t} \in \oheap(\sstate') \\
        \pc(\sstate') \implies t_e' \keq t_e }
      { \seval{\sstate}{e.f}{t}{\sstate'}{\scheck} } \and
    \inferrule[SEvalFieldImprecise]
      { \imp(\sstate) \\
        \seval{\sstate}{e}{t_e}{\sstate'}{\scheck} \\\\
        \nexistential{\triple{f}{t_e'}{t} \in \sheap(\sstate') \cup \oheap(\sstate')}{\pc(\sstate') \implies t_e' \keq t_e} \\\\
        t = \ffresh \\
        \oheap' = \oheap(\sstate) \cup \set{ \triple{f}{t_e}{t} } }
      { \seval{\sstate}{e.f}{t}{\sstate'[\oheap = \oheap']}{\scheck \cup \set{ \pair{t_e}{f} }} } \and
    \inferrule[SEvalFieldFailure]
      { \neg\imp(\sstate) \\ 
        \seval{\sstate}{e}{t_e}{\sstate'}{\scheck} \\
        \nexistential{\triple{f}{t_e'}{t} \in \sheap(\sstate') \cup \oheap(\sstate')}{\pc(\sstate') \implies t_e' \keq t_e} }
      { \seval{\sstate}{e.f}{\ffresh}{\sstate'}{\set{\bot}} }
  \end{mathpar}}
  \caption{Selected rules for evaluation during gradual verification}
  \label{fig:gradual-eval-rules}
\end{figure}

\subsection{Consuming formulas}\label{sec:gradual-consume}

We extend our previous judgment for consuming formulas from \S\ref{sec:precise-consume} to handle imprecise formulas and imprecise states. As in \S\ref{sec:gradual-eval}, we add a parameter $\scheck$ to the consume judgments. Additionally, we collect all permissions for the given formula into a set of symbolic permissions $\sperms$ so that separation checks may be added where necessary. Thus the new judgments are of the form $\scons{\sstate}{\gform}{\sstate'}{\scheck}$ and $\sconsume{\sstate}{\sstate_E}{\gform}{\sstate'}{\scheck}{\sperms}$; these two forms are related by \textsc{SConsume} and correspond to the forms described in \S\ref{sec:precise-consume}. We list selected rules in Figure \ref{fig:gradual-consume-rules}.

Consuming an imprecise formula empties the precise and optimistic heaps (\textsc{SConsumeImprecise}). This is because the imprecision may represent access to arbitrary fields. For example, a method with an imprecise precondition could modify any field that the callee owns, thus we cannot make any assumptions about field permissions or values after the method returns. Consuming an imprecise formula results in an imprecise state, thus removed field chunks can be referenced optimistically, with the possible addition of a run-time check.

We must also consider the case of consuming an imprecise formula in a precise state. Since optimistic assumptions are not permitted in a precise state, we cannot assume any of the assertions contained in the imprecise formula. However, the imprecise formula may reference fields without a corresponding accessibility predicate. Thus, when consuming an imprecise formula, we use an imprecise state as the symbolic state for evaluation, but use the original (possibly precise) state for assertions.

In an imprecise state we may optimistically consume an expression, even if it is not implied by the current path condition. We then add the value as a run-time check to be asserted at run-time.

Consumption of accessibility predicates must be modified to handle imprecise states, where field chunks in $\oheap$ may overlap with field chunks in $\sheap$. We must remove all fields that \emph{may} represent the same heap reference when removing a field chunk from $\sheap$. To do this, we use the helper functions $\fremfp$ and $\fremf$.
$\fremfp$ is used when removing heap chunks from the precise heap.
For precise states, $\fremfp$ removes the field chunk that coincides exactly with the heap location being consumed (thus computing the same result as the rules in \S\ref{sec:precise-consume}).
For imprecise states, it also removes all field chunks that could possibly coincide with the specified heap location. $\fremf$ is used when removing chunks from the optimistic heap and behaves similarly, but also removes all predicate chunks, since predicates occurring in the precise heap could overlap with heap chunks in the imprecise heap.
%
%
Some optimizations could be made -- for instance, if a predicate's unfolding will never reference a field $f$, we could preserve an instance of this predicate when consuming $\kacc(e.f)$. However, we leave such optimizations to future work.

We can also optimistically assume an accessibility predicate in an imprecise state, even if a matching field chunk does not exist in $\sheap$ or $\oheap$. Since this assumes ownership of the field, we add the corresponding symbolic permission to $\scheck$.
Finally, like accessibility predicates, we allow optimistic consumption of predicate instances. In this case the symbolic permission representing the predicate instance is added as a run-time check.

When consuming any accessibility predicate or predicate instance, the symbolic permission is always added to a set $\sperms$. This allows specification of checks for separation. When consuming $\kacc(x.f) * \kacc(y.f)$, if $\kacc(x.f)$ is optimistically assumed while $\kacc(y.f)$ is statically verified, the run-time check for $\kacc(x.f)$ does not imply that its permission is disjoint from that of $y.f$. Therefore additional checks for separation are added when consuming a separating conjunction such as $\phi_1 * \phi_2$. If no run-time check for permissions exists, all permissions must have been consumed from $\sheap$ or $\oheap$ and thus separation may be assumed. However, if a symbolic permission is contained in $\scheck$ we can no longer assume separation. Thus we add a run-time check $\fsep(\sperms_1, \sperms_2)$ where $\sperms_1$ is the set of symbolic permissions collected while consuming $\phi_1$ and likewise for $\sperms_2$ and $\phi_2$.

\begin{figure}
  {\footnotesize\disableTttResize
  \begin{mathpar}
    \inferrule[SConsume]
      { \sconsume{\sstate}{\sstate_E}{\gform}{\sstate'}{\scheck}{\sperms} }
      { \scons{\sstate}{\gform}{\sstate'}{\scheck} } \and
    \inferrule[SConsumeImprecise]
      { \sconsume{\sstate}{\sstate_E[\imp = \top]}{\phi}{\sstate'}{\scheck}{\sperms} }
      { \sconsume{\sstate}{\sstate_E}{\simprecise{\phi}}{\quintuple{\top}{\pc(\sstate')}{\senv(\sstate')}{\emptyset}{\emptyset}}{\scheck}{\sperms}  } \and
    \inferrule[SConsumeValueImprecise]
      { \imp(\sstate) \\
        \spceval{\sstate_E}{e}{t}{\scheck} \\
        \pc(\sstate) \notimplies t }
      { \sconsume{\sstate}{\sstate_E}{e}{\sstate[\pc = \pc(\sstate) \kand t]}{\scheck \cup \set{t}}{\emptyset} } \and
    \inferrule[SConsumeValueFailure]
      { \neg \imp(\sstate) \\
        \spceval{\sstate_E}{e}{t}{\scheck} \\
        \pc(\sstate) \notimplies t }
      { \sconsume{\sstate}{\sstate_E}{e}{\sstate}{\set{\bot}}{\emptyset} } \\
    \inferrule[SConsumeAcc]
      { \spceval{\sstate}{e}{t_e}{\scheck} \\
        \pc(\sstate) \implies t_e' \keq t_e \\\\
        \triple{f}{t_e'}{t} \in \sheap(\sstate) \\\\
        \sheap' = \fremfp(\sheap(\sstate), \sstate, t_e, f) \\\\
        \oheap' = \fremf(\oheap(\sstate), \sstate, t_e, f) \\\\
        \sstate' = \sstate[\sheap = \sheap', \oheap = \oheap'] }
      { \sconsume{\sstate}{\sstate_E}{\kacc(e.f)}{\sstate'}{\scheck}{\set{\pair{t_e}{f}}} } \and
    \inferrule[SConsumeAccOptimistic]
      { \spceval{\sstate}{e}{t_e}{\scheck} \\
        \pc(\sstate) \implies t_e' \keq t_e \\\\
        \nexistential{\triple{f}{t_e'}{t} \in \sheap(\sstate)}{\pc(\sstate) \implies t_e' \keq t_e} \\\\
        \triple{f}{t_e'}{t} \in \oheap(\sstate) \\\\
        \sheap' = \fremf(\sheap(\sstate), \sstate, t_e, f) \\\\
        \oheap' = \fremf(\oheap(\sstate), \sstate, t_e, f) \\\\
        \sstate' = \sstate[\sheap = \sheap', \oheap = \oheap'] }
      { \sconsume{\sstate}{\sstate_E}{\kacc(e.f)}{\sstate'}{\scheck}{\set{\pair{t_e}{f}}} } \and
    \inferrule[SConsumeAccImprecise]
      { \imp(\sstate) \\
        \spceval{\sstate}{e}{t_e}{\scheck} \\\\
        \nexistential{\triple{f}{t_e'}{t} \in \sheap(\sstate) \cup \oheap(\sstate)}{\pc(\sstate) \implies t_e' \keq t_e} \\\\
        \sstate' = \sstate[\sheap = \fremf(\sheap(\sstate), \sstate, t_e, f), \oheap = \fremf(\oheap(\sstate), \sstate, t_e, f)]}
      { \sconsume{\sstate}{\sstate_E}{\kacc(e.f)}{\sstate'}{\scheck \cup \set{ \pair{t_e}{f} }}{\set{\pair{t_e}{f}}} } \and
    \inferrule[SConsumeAccFailure]
      { \neg \imp(\sstate) \\
        \spceval{\sstate}{e}{t_e}{\scheck} \\\\
        \nexistential{\triple{f}{t_e'}{t} \in \sheap(\sstate) \cup \oheap(\sstate)}{\pc(\sstate) \implies t_e' \keq t_e} }
      { \sconsume{\sstate}{\sstate_E}{\kacc(e.f)}{\sstate}{\set{\bot}}{\set{\pair{t_e}{f}}} }
  \end{mathpar}}
  \caption{Selected rules for consuming gradual formulas}
  \label{fig:gradual-consume-rules}
\end{figure}

\subsection{Producing formulas}

Since a formula is only produced when we can assume its validity, producing a gradual formula does not require any optimistic assumptions, thus we do not need to calculate any run-time checks. When producing an imprecise formula, we produce the precise portion and set $\imp = \top$. All other rules from \S\ref{sec:precise-produce} are left unchanged.

\subsection{Executing statements}

All rules from \S\ref{sec:precise-exec} are left unchanged. While it may seem natural to calculate run-time checks while determining execution transitions (as in the \ttt{exec} algorithm of \citet{divincenzo2022gradual}), we found that this unnecessarily complicates statements of soundness since symbolic execution steps are not equivalent to dynamic execution steps. For example, a method call occurs in one step during symbolic execution but may never complete during dynamic execution, therefore it may be impossible to determine which symbolic execution step applies. However, assertion of run-time checks must occur before a dynamic execution step may proceed. Therefore we cleanly delineate between symbolic execution transitions, specified by the judgement $\sexec{\sstate}{s}{s'}{\sstate'}$, and the calculation of run-time checks.

\subsection{Guarding execution}

As described above, we must the assert run-time checks \emph{before} the corresponding dynamic execution step occurs. Therefore we define \textit{guard} judgements to calculate run-time checks which ensure that execution can safely proceed. A guard for a method call calculates the checks necessary to ensure that the method's pre-condition is satisfied, while a guard for a field assignment calculates the checks necessary to ensure permission to access the assignee and evaluate the value to be stored.

A guard judgement $\sguard{\vstate}{\sstate'}{\scheck}{\sperms}$ denotes that, at the execution state represented by $\vstate$, when the execution path matches the path condition in $\sstate'$, the run-time checks $\scheck$ must be checked. Selected guard rules are defined in Figure \ref{fig:gradual-guard-rules}.

In a guard judgement, $\sperms$ determines the exclusion frame---a set of permissions which must not escape the executing method's context. Its necessity and behavior is explained in \S\ref{sec:unsoundness}.

\begin{figure}
  {\footnotesize\disableTttResize
  \begin{mathpar}
    \inferrule[SGuardAssign]
      { \seval{\sstate}{e}{\_}{\sstate'}{\scheck} }
      { \sguard{\triple{\sstate}{\sseq{x \kassign e}{s}}{\gform}}{\sstate'}{\scheck}{\emptyset} } \and
    \inferrule[SGuardAssignField]
      { \seval{\sstate}{e}{\_}{\sstate'}{\scheck} \\\\
        \scons{\sstate'}{\kacc(x.f)}{\sstate''}{\scheck'} }
      { \sguard{\triple{\sstate}{\sseq{x.f \kassign e}{s}}{\gform}}{\sstate''}{\scheck \cup \scheck'}{\emptyset} } \and
    \inferrule[SGuardCall]
      { \multiple{\seval{\sstate}{e}{t}{\sstate'}{\scheck}} \\
        \multiple{x} = \fparams(m) \\\\
        \scons{\sstate'[\senv = [\multiple{x \mapsto t}]]}{\fpre(m)}{\sstate''}{\scheck'} }
      { \triple{\sstate}{\sseq{y \kassign m(\multiple{e})}{s}}{\gform} \pfunc \sstate'' \dashv \\\\ \scheck \cup \scheck', \frem(\sstate'', \fpre(m)) } \\
    \frem(\sstate, \, \gform) := \begin{cases}
      \emptyset &\text{if $\gform$ is completely precise} \\
      \set{\pair{p}{\multiple{t}} : \pair{p}{\multiple{t}} \in \sheap(\sstate) } ~\cup \\
      \quad \set{ \pair{t}{f} : \triple{f}{t}{t'} \in \sheap(\sstate) \cup \oheap(\sstate) }  &\text{otherwise}
    \end{cases}
  \end{mathpar}}
  \caption{Selected guard rules}
  \label{fig:gradual-guard-rules}
\end{figure}

\subsection{Example}
{\parindent0pt
\begin{figure}
  \begin{minipage}[t]{.45\linewidth}
\begin{lstlisting}[name=gradual-example]
List append(List l, int value)
  requires ? * true(*@ \label{ln:imprecise-append-pre} @*)
  ensures ? * acyclic(result)(*@ \label{ln:imprecise-append-post} @*)
{
  if (l.next == NULL)(*@ \label{ln:imprecise-append-if} @*)
    n = singleton(value);(*@ \label{ln:imprecise-append-single} @*)
\end{lstlisting}
  \end{minipage}
  \begin{minipage}[t]{.45\linewidth}
\begin{lstlisting}[name=gradual-example]
  else(*@ \label{ln:imprecise-append-else} @*)
    n = append(l.next, value);(*@ \label{ln:imprecise-append-rec} @*)
  l.next = n;(*@ \label{ln:imprecise-append-next} @*)
  result = l;(*@ \label{ln:imprecise-append-result} @*)
}(*@ \label{ln:imprecise-append-end} @*)
\end{lstlisting}
  \end{minipage}
  \caption{\gvl code for appending to an acyclic linked list}
  \label{fig:gradual-example}
\end{figure}

We now illustrate verification of the gradually-specified method in Figure \ref{fig:gradual-example}. We assume the definition of \ttt{List} and \ttt{acyclic} from Figure \ref{fig:list-example}. The gradual specification of \ttt{append} ensures that all returned lists are acyclic.

Symbolic states are tuples of the form $\quintuple{\imp}{\sheap}{\oheap}{\senv}{\pc}$. As in \S\ref{sec:static-example}, we begin verification of \ttt{append} by assigning fresh symbolic values to all parameters and producing the pre-condition $\simprecise{\ktrue}$, which results in an imprecise state:
{\disableTttResize
\begin{lstlisting}[aboveskip=.25em, belowskip=0.5em,numbers=none]
(*@\lightgray{$\sstate_1 = \quintuple{\top}{\emptyset}{\emptyset}{\senv}{\ktrue} \quad \senv = [\ttt{l} \mapsto \nu_1, \ttt{v} \mapsto \nu_2]$} @*)
\end{lstlisting}}
At each statement we compute the guard to find the necessary run-time checks. The guard for the \ttt{if} statement at line \ref{ln:imprecise-append-if} evaluates \ttt{l.next == NULL}. A heap chunk for \ttt{l.next} is optimistically added to $\oheap$ with a fresh value $\nu_3$. This also results in a run-time check for the symbolic permission $\pair{\nu_1}{\ttt{next}}$.

The next state $\sstate_{A2}$ is computed by symbolic execution. This again evaluates \ttt{l.next == NULL} in the state $\sstate_1$, which again requires the addition of an optimistic heap chunk. Since $\nu_3$ was not previously used in $\sstate_1$ it can be used again as a fresh value for symbolic execution.
{\disableTttResize
\begin{lstlisting}[aboveskip=.25em, belowskip=0.5em,firstnumber=4,stepnumber=5]
(*@\lightgray{$\scheck = \set{ \pair{\nu_1}{\ttt{next}}$} } @*)
if (l.next == NULL) 
    (*@\lightpurple{$\sstate_{A2} = \quintuple{\top}{\emptyset}{\set{\triple{\ttt{next}}{\nu_1}{\nu_3}}}{\senv}{\nu_3 \keq \knull}$} @*)
\end{lstlisting}}
The guard for line \ref{ln:imprecise-append-single} consumes the pre-condition of \ttt{singleton}, which requires no run-time checks. Symbolic execution consumes the same pre-condition and produces the post-condition; here we use the fresh value $\nu_4$ for the returned value:
{\disableTttResize
\begin{lstlisting}[aboveskip=.25em, belowskip=0.5em,firstnumber=5,stepnumber=6]
    (*@\lightpurple{$\scheck = \emptyset$} @*)
    n = singleton(value);
    (*@\lightpurple{$\sstate_{A3} = \quintuple{\top}{\set{\pair{\ttt{acyclic}}{\nu_4}}}{\set{\triple{\ttt{next}}{\nu_1}{\nu_3}}}{\senv[\ttt{n} \mapsto \nu_4]}{\nu_3 \keq \knull}$} @*)
\end{lstlisting}}
As in \S\ref{sec:static-example}, we follow code order, instead of following each execution path individually, and distinguish separate execution paths with color-coding. The guard at line \ref{ln:imprecise-append-if} computes the checks for both branches of \ttt{if}, thus the guard is not computed at line \ref{ln:imprecise-append-else}. We can symbolically execute the \ttt{else} branch by adding the negation of the path condition we used previously:
{\disableTttResize
\begin{lstlisting}[aboveskip=.25em, belowskip=0.5em,firstnumber=7,stepnumber=7]
else
    (*@\lightblue{$\sstate_{B2} = \quintuple{\top}{\emptyset}{\set{\triple{\ttt{next}}{\nu_1}{\nu_3}}}{\senv}{\nu_3 \kneq \knull}$} @*)
\end{lstlisting}}
The guard for line \ref{ln:imprecise-append-rec} consumes the pre-condition of \ttt{append}, which is $\simprecise{\ktrue}$. Since the body of this imprecise formula is only $\ktrue$, no run-time checks are necessary. However, since this is an imprecise formula, we clear the precise and optimistic heaps (\textsc{SConsumeImprecise} in Figure \ref{fig:gradual-consume-rules}). Symbolic execution then produces the post-condition; here we use the fresh value $\nu_5$ for the returned value:
{\disableTttResize
\begin{lstlisting}[aboveskip=.25em, belowskip=0.5em,firstnumber=7,stepnumber=8]
    (*@\lightyellow{$\scheck_B = \emptyset$} @*)
    n = append(l.next, value);
    (*@\lightblue{$\sstate_{B3} = \quintuple{\top}{\set{\pair{\ttt{acyclic}}{\nu_5}}}{\emptyset}{\senv[\ttt{n} \mapsto \nu_5]}{\nu_3 \kneq \knull \kand \nu_5 \kneq \knull}$} @*)
\end{lstlisting}}
We resume symbolic execution of both paths at line \ref{ln:imprecise-append-next}. Executing \ttt{l.next = n} in the state $\sstate_{A3}$ does not require any run-time checks since it contains the heap chunk representing \ttt{l.next}. However, executing the same statement in $\sstate_{B3}$ requires optimistic assumption of the symbolic permission $\pair{\ttt{next}}{\nu_1}$ which requires a run-time check and removes the predicate instance. \citet{divincenzo2022gradual} describe the implementation of such conditional run-time checks, but here we represent it with separate symbolic execution paths:
{\disableTttResize
\begin{lstlisting}[aboveskip=.25em, belowskip=0pt,firstnumber=8,stepnumber=9]
(*@\lightpurple{$\scheck_A = \emptyset$}, \lightblue{$\scheck_B = \set{\pair{\ttt{next}}{\nu_1}}$} @*)
l.next = n;
(*@\lightpurple{$\sstate_{A4} = \quintuple{\top}{\set{\pair{\ttt{acyclic}}{\nu_4}}}{\set{\triple{\ttt{next}}{\nu_1}{\nu_4}}}{\senv[\ttt{n} \mapsto \nu_4]}{\nu_3 \keq \knull}$} @*)
(*@\lightblue{$\sstate_{B4} = \quintuple{\top}{\set{\triple{\ttt{next}}{\nu_1}{\nu_5}}}{\emptyset}{\senv[\ttt{n} \mapsto \nu_5]}{\nu_3 \kneq \knull \kand \nu_5 \kneq \knull}$} @*)
\end{lstlisting}
\begin{lstlisting}[aboveskip=0pt, belowskip=0.5em,firstnumber=9,stepnumber=10]
(*@\lightpurple{$\scheck_A = \emptyset$}, \lightblue{$\scheck_B = \emptyset$} @*)
result = l;
(*@\lightpurple{$\sstate_{A5} = \quintuple{\top}{\nu_3 \keq \knull}{\set{\pair{\ttt{acyclic}}{\nu_4}}}{\set{\triple{\ttt{next}}{\nu_1}{\nu_4}}}{\senv[\ttt{n} \mapsto \nu_4, \ttt{result} \mapsto \nu_1]}$} @*)
(*@\lightblue{$\sstate_{B4} = \quintuple{\top}{\nu_3 \kneq \knull \kand \nu_5 \kneq \knull}{\set{\triple{\ttt{next}}{\nu_1}{\nu_5}}}{\emptyset}{\senv[\ttt{n} \mapsto \nu_5, \ttt{result} \mapsto \nu_1]}$} @*)
\end{lstlisting}}
Finally, the applicable guard at line \ref{ln:imprecise-append-end} consumes the post-condition $\ttt{acyclic(result)}$. Since neither state contains a matching predicate instance, in both paths a run-time check is added for the symbolic permission $\pair{\ttt{acyclic}}{\nu_1}$:
{\disableTttResize
\begin{lstlisting}[aboveskip=.25em, belowskip=0.5em,firstnumber=10,stepnumber=11]
(*@\lightpurple{$\scheck_A = \set{\pair{\ttt{acyclic}}{\nu_1}}$}, \lightblue{$\scheck_B = \set{\pair{\ttt{acyclic}}{\nu_1}}$} @*)
}
\end{lstlisting}}
Now we have verified the method and computed all necessary run-time checks.
}

\section{Executing \gvl}


Since soundness of static verification requires specification of program execution, we define execution semantics for \gvl programs (including \svl programs, which are a subset). This includes dynamic semantics for formulas, and execution semantics which dynamically assert the validity of every specification. Therefore, these semantics define valid execution for \gvl programs.

As explained in \S\ref{sec:introduction}, execution semantics are based on those of \co \citep{arnold2010c0}, while the semantics of formulas are based on those of \gvlrp \citep{wise2020gradual}.

\subsection{Representation}\label{sec:dynamic-representation}

In this section, we formally define the data structures used during execution of \gvl programs:

\begin{itemize}
\item A \textit{value} $v \in \Value$ is an integer, boolean, or object reference.

\item An \textit{object reference} $\ell \in \textsc{Ref}$ is an identifier for a particular object. As with symbolic values, we assume that an infinite number of distinct values can be generated by the $\ffresh$ function. The type of value represented by $\ffresh$ is disambiguated by its usage.

\item An \textit{environment} $\env$ is a partial function mapping variable names to values, i.e. $\env : \Var \pfunc \Value$.

\item A \textit{heap} $\heap : \textsc{Ref} \times \Field \to \Value$ is a function mapping object reference and field pairs to values. We assume that the heap function is total, i.e. all reference and field pairs have some corresponding value, but heap access is restricted during execution by a set of \textit{access permissions} $\perms \in \mathcal{P}(\textsc{Ref} \times \textsc{Field})$. This reflects the semantics of IDF \citep{smans2012implicit}. A heap location $\pair{\ell}{f}$ is \textit{owned} if it is contained in the currently-applicable set of access permissions.

\item A \textit{stack frame} is a triple $\triple{\perms}{\env}{s}$ containing of a set of owned permissions $\perms$, a local environment $\env$, and a statement $s$. A \textit{stack} $\stack$ is a list of stack frames -- either $\triple{\perms}{\env}{s} \cdot \stack$ for some other $\stack$, or the empty stack, denoted $\nilsym$. For a non-empty stack $\stack$, $\perms(\stack)$, $\env(\stack)$, and $s(\stack)$ refer to their respective components of the head element.

\item A \textit{dynamic state} $\Gamma$ may be a symbol $\initsym$ or $\finalsym$, or pair $\pair{\heap}{\stack}$ containing a heap $\heap$ and a non-empty stack $\stack$. $\heap(\Gamma)$ and $\stack(\Gamma)$ reference individual components of $\Gamma$ when $\Gamma$ is not a symbol, while $\perms(\Gamma)$, $\env(\Gamma)$, and $s(\Gamma)$ reference a component of the head element of $\stack(\Gamma)$. $\perms(\initsym)$ and $\heap(\initsym)$ are defined to be $\emptyset$.
\end{itemize}

\subsection{Evaluating expressions}

Given a heap $\heap$ and environment $\env$, the evaluation of expression $e$ to a value $v$ is represented by a judgement of the form $\eval{\heap}{\env}{e}{v}$. This follows standard evaluation rules---variables are evaluated to the corresponding value in $\env$ and field references are evaluated to the corresponding value in $\heap$. The boolean operators $\kand$ and $\kor$ are short-circuiting---when evaluating $e_1 \kand e_2$, $e_2$ is only evaluated when $e_1$ is not true.

\subsection{Asserting formulas}\label{sec:dynamic-formulas}

A judgement of the form $\assertion{\heap}{\perms}{\env}{\gform}$ denotes that a gradual formula $\gform$ is satisfied given a heap $\heap$, a set of accessible permissions $\perms$, and an environment $\env$. Selected rules are shown in Figure \ref{fig:dynamic-rules}. Boolean expressions are satisfied when they evaluate to true. An accessibility predicate $\kacc(e.f)$ is satisfied when the field referenced by $e.f$ is in the set of accessible permissions. A predicate instance $p(\overline{e})$ is satisfied when the predicate body is satisfied using an environment mapping each predicate parameter $x$ to the corresponding argument $e$. A separating conjunction $\phi_1 * \phi_2$ is satisfied when $\phi_1$ is satisfied using a permission set $\perms_1$ and $\phi_2$ is satisfied using a permission set $\perms_2$, where $\perms_1$ and $\perms_2$ are disjoint subsets of $\perms$. Finally, an imprecise formula $\simprecise{\phi}$ is satisfied exactly when $\phi$ is satisfied.

A judgement of the form $\frm{\heap}{\perms}{\env}{e}$ denotes that the expression $e$ is \textit{framed} by the given set of permission $\perms$. This denotes that all heap locations necessary to evaluate $e$ are included in $\perms$. Selected rules are shown in Figure \ref{fig:dynamic-rules}.

Note that a predicate instance $p(\overline{e})$ is satisfied iff the predicate body is satisfied. Thus dynamic execution of \gvl uses \textit{equirecursive} semantics for predicates \citep{summers2013formal}. We also define \textit{equirecursive framing} of formulas by the judgement $\efrm{\heap}{\perms}{\env}{\gform}$. A formula is equirecursively framed if its \emph{unrolling}, the recursive expansion of referenced predicate bodies, is framed.

A formula $\gform$ is \textit{self-framed} if $\universal{\heap, \perms, \env}{\assertion{\heap}{\perms}{\env}{\gform} \implies \efrm{\heap}{\perms}{\env}{\gform}}$. As specified in \S\ref{sec:gradual-formulas}, a specification is a formula which is imprecise or self-framed.

\subsection{Footprints}

The \emph{footprint} of a formula is the set of permissions necessary to assert a formula \cite{reynolds2002separation}. There are two types of footprints for gradual formulas:

The \emph{exact footprint} of a formula is the minimal set of permissions necessary to assert and frame a formula. Given a heap $\heap$ and environment $\env$, $\efoot{\heap}{\env}{\gform}$ denotes the exact footprint of a formula $\gform$.

The \emph{maximal footprint} (often abbreviated to \emph{footprint}) of a formula contains the exact footprint and all permissions that are consistently implied by the formula. The maximal footprint of a completely precise formula is its exact footprint, but the maximal footprint of an imprecise formula contains all accessible permissions. Given a heap $\heap$, set of owned permissions $\perms$, and environment $\env$, $\foot{\heap}{\perms}{\env}{\gform}$ denotes the maximal footprint of a formula $\gform$.

\subsection{Executing statements}\label{sec:dynamic-exec}

We represent the dynamic execution of program statements as small-step execution semantics denoted by the judgement $\dexec{\heap}{\stack}{\xperms}{\heap}{\stack}$, where the statement $s$ is executing with the initial state $\pair{\heap}{\stack}$, and then transitions to the next statement $s'$ with the new state $\pair{\heap}{\stack}$. $\xperms$ specifies the \emph{exclusion frame}, which is described below. Selected rules are shown in Figure \ref{fig:dynamic-rules}. Execution will be stuck (i.e., no further derivation will apply) when an error is encountered. For example, execution is stuck when a formula is not satisfied or if some expression is not framed.

A method call is executed by evaluating all arguments, asserting the pre-condition, and adding a new stack frame containing the footprint of the pre-condition and the method body. After the method body is completely executed, the post-condition is asserted and the $\kresult$ value in the callee's environment is passed to the caller's environment.

A loop is executed similarly to a method, but uses the loop invariant instead of a method contract. When the loop condition is true, an iteration is executed by asserting the invariant and adding a new stack frame for the loop body. When the body is complete, we return to the original loop statement, allowing further iterations as long as the condition remains true. When the condition is false, the invariant is still asserted but execution skips over the statement. These rules are specified in \supplement{the supplement \citep{supplement}}{\S\ref{sec:exec-rules}}.

$\xperms$ specifies the \textit{exclusion frame} -- a set of permissions which may not be passed to the callee or loop body. It is used only for executing method calls and loops. We later explain why this is necessary for soundness in \S\ref{sec:unsoundness}.

$\kfold$ and $\kunfold$ statements are ignored at run-time. Explicit folding and unfolding of predicate instances is not required because the run-time uses equirecursive semantics for predicates.

The entire set of possible execution steps for a program $\prog$ is determined by judgements of the form $\dtrans{\prog}{\xperms}{\Gamma}{\Gamma'}$, which denote that execution transitions from $\Gamma$ to $\Gamma'$, using the exclusion frame $\xperms$. From the $\initsym$ state, execution may only step to the entry statement, and then execution follows the rules described above.

\begin{figure}
  {\footnotesize\disableTttResize
  \begin{mathpar}
    \inferrule[AssertImprecise]
      { \assertion{\heap}{\perms}{\env}{\phi} \\
        \efrm{\heap}{\perms}{\env}{\phi} }
      { \assertion{\heap}{\perms}{\env}{\simprecise{\phi}} } \and
    \inferrule[AssertAcc]
      { \eval{\heap}{\env}{e}{\ell} \\
        \pair{\ell}{f} \in \perms }
      { \assertion{\heap}{\perms}{\env}{\kacc(e.f)} } \and
    \inferrule[AssertPredicate]
      { \multiple{x} = \fpredparams(p) \\
        \multiple{\eval{\heap}{\env}{e}{v}} \\\\
        \assertion{\heap}{\perms}{[\multiple{x \mapsto v}]}{\fpred(p)} }
      { \assertion{\heap}{\perms}{\env}{p(\multiple{e})} } \and
    \inferrule[AssertConjunction]
      { \assertion{\heap}{\perms_1}{\env}{\phi_1} \\
        \assertion{\heap}{\perms_2}{\env}{\phi_2} \\\\
        \perms_1 \cap \perms_2 = \emptyset \\
        \perms_1 \cup \perms_2 \subseteq \perms }
      { \assertion{\heap}{\perms}{\env}{\phi_1 * \phi_2} } \and
    \inferrule[ExecAssignField]
      { \eval{\heap}{\env}{x}{\ell} \\
        \eval{\heap}{\env}{e}{v} \\
        \assertion{\heap}{\perms}{\env}{\kacc(x.f)} \\
        \frm{\heap}{\perms}{\env}{e} \\
        \heap' = \heap[\pair{\ell}{f} \mapsto v]}
      { \dexec{\heap}{\triple{\perms}{\env}{\sseq{x.f \kassign e}{s}} \cdot \stack}{\xperms}{\heap'}{\triple{\perms}{\env}{s} \cdot \stack} } \and
    \inferrule[ExecCallEnter]
      { \multiple{x} = \fparams(m) \\
        \multiple{\eval{\heap}{\env}{e}{v}} \\
        \multiple{\frm{\heap}{\perms}{\env}{e}} \\\\
        \env' = [\multiple{x \mapsto v}] \\
        \assertion{\heap}{\perms \setminus \xperms}{\env'}{\fpre(m)} \\
        \perms' = \foot{\heap}{\perms \setminus \xperms}{\env}{\fpre(m)} }
      { \dexec{\heap}{\triple{\perms}{\env}{\sseq{y = m(\multiple{e})}{s}} \cdot \stack}{\xperms}{\heap}{\triple{\perms'}{\env'}{\sseq{\fbody(m)}{\kskip}} \cdot \triple{\perms \setminus \perms'}{\env}{\sseq{y = m(\multiple{e})}{s}} \cdot \stack} } \and
    \inferrule[ExecCallExit]
      { \assertion{\heap}{\perms'}{\env'}{\fpost(m)} \\
        \env'' = \env[y \mapsto \env'(\kresult)] \\
        \perms'' = \perms \cup \foot{\heap}{\perms'}{\env'}{\fpost(m)} }
      { \dexec{\heap}{\triple{\perms'}{\env'}{\kskip} \cdot \triple{\perms}{\env}{\sseq{y = m(\multiple{e})}{s}} \cdot \stack}{\xperms}{\heap}{\triple{\perms}{\env}{s} \cdot \stack} }
  \end{mathpar}}
  \caption{Selected formal rules for dynamic semantics of \gvl.}
  \label{fig:dynamic-rules}
\end{figure}

\section{Correspondence}\label{sec:correspondence}

Before formalizing soundness, we must specify the correspondence between verification and dynamic states. We include invariants which depend on concrete values, such as separation, in this correspondence relation. Finally, we specify the behavior of run-time checks in a dynamic state.

\subsection{State correspondence}

A dynamic environment $\env$ \textit{models} a symbolic store $\senv$ via a valuation $V$ when $\simenv{V}{\senv}{\env}$. This denotes that $\universal{x \mapsto t \in \senv}{x \mapsto V(t) \in \env}$.

A heap $\heap$ and set of permissions $\perms$ model a precise heap $\sheap$ when $\simheap{V}{\sheap}{\heap}{\perms}$. This denotes that for all field chunks $\triple{f}{t}{t'} \in \sheap$, $\heap(V(t), f) = V(t')$ and $\pair{V(t)}{f} \in \perms$. Also, for all predicate chunks $\pair{p}{\multiple{t}}$, the corresponding predicate body is true using given arguments $\multiple{V(t)}$. Additionally, the footprint represented by each heap chunk must be disjoint.

The footprint of a heap chunk, given valuation $V$ and heap $\heap$, is denoted $\vfoot{V}{\heap}{h}$. The footprint of a field chunk $\triple{f}{t}{t'}$ is $\set{\pair{V(t)}{f}}$. The footprint of a predicate chunk $\pair{p}{\multiple{t}}$ is the exact footprint of the predicate when applied to the arguments $\multiple{V(t)}$.

$\heap$ and $\perms$ model an optimistic heap $\oheap$ when $\simheap{V}{\oheap}{\heap}{\perms}$. This has the same requirements as that for $\sheap$, except that heap chunks are allowed to overlap.

$\heap$, $\perms$, and $\perms$ model a symbolic state when $\simstate{V}{\sstate}{\heap}{\perms}{\env}$. This denotes that $\heap$ and $\perms$ model both $\sheap(\sstate)$ and $\oheap(\sstate)$, $\env$ models $\senv$, and the path condition is true---$V(\pc(\sstate)) = \ktrue$.

We also refer to these relations as correspondence---$\simstate{V}{\sstate}{\heap}{\perms}{\env}$ denotes that the symbolic state $\sstate$ \textit{corresponds} to $\heap$, $\perms$, and $\env$.

Finally, a verification state $\vstate$ corresponds to a dynamic state $\Gamma$ with valuation $V$ if $\vstate = \Gamma$ (i.e., $\vstate$ and $\Gamma$ are the same symbol), or $\simstate{V}{\sstate(\vstate)}{\heap(\heap)}{\perms(\Gamma)}{\env(\Gamma)}$ and $s(\Gamma) = s(\vstate)$. In other words, the heap and head stack frame of $\Gamma$ model the symbolic state of $\vstate$, and the statement in the head stack frame is syntactically the same statement as that of $\vstate$.

\subsection{Run-time checks}

We also define the semantics of run-time checks using valuations. The judgement $\rtassert{V}{\heap}{\perms}{r}$ denotes the assertion of a run-time check $r$, given a valuation $V$, a heap $\heap$, and a set of owned permissions $\perms$. Likewise, $\rtassert{V}{\heap}{\perms}{\scheck}$ denotes the assertion of all run-time checks contained in $\scheck$. Formal rules are given in Figure \ref{fig:rt-check-rules}.

\begin{figure}
  {\footnotesize\disableTttResize
  \begin{mathpar}
    \inferrule
      { V(t) = \ktrue }
      { \rtassert{V}{\heap}{\perms}{t} } \and
    \inferrule
      { \pair{V(t)}{f} \in \perms }
      { \rtassert{V}{\heap}{\perms}{\pair{t}{f}} } \and
    \inferrule
      { \vfoot{V}{\heap}{\sperms_1} \cap \vfoot{V}{\heap}{\sperms_2} = \emptyset }
      { \rtassert{V}{\heap}{\perms}{\fsep(\sperms_1, \sperms_2)} } \and
    \inferrule
      { \assertion{\heap}{\perms}{[\multiple{x \mapsto V(t)}]}{\fpred(p)} \\\\
        \multiple{x} = \fpredparams(p) }
      { \rtassert{V}{\heap}{\perms}{\pair{p}{\multiple{t}}} }
  \end{mathpar}}
  \caption{Rules for run-time check assertions}
  \label{fig:rt-check-rules}
\end{figure}

\section{Soundness}\label{sec:soundness}

We can now state the soundness of our static verifier. We slightly modify a traditional progress/preservation statement of soundness in order to accommodate run-time checks.

\subsection{Corresponding valuations}

For most symbolic execution judgements, we define a \textit{corresponding valuation}, inspired by the valuations used in \citet{khoo10mix}. This defines how symbolic values used in the judgement are mapped to concrete values. To calculate the corresponding valuation we require an initial valuation, which defines the valuation for all symbolic values contained by the input symbolic state, and a dynamic heap, which defines the valuation for optimistically-added fields. A corresponding valuation $V'$ must extend the initial valuation $V$, i.e. $V'(t) = V(t)$ for all $t \in \dom(V)$.

We denote the corresponding valuation for a judgement $\mathcal{J}$, initial valuation $V$, and heap $\heap$ by $V[\mathcal{J} \mid \heap]$. The definition for each judgement type is defined in \supplement{the supplement \citep{supplement}, along with the proofs for that judgement}{the appendix which contains the corresponding proofs for that judgement}. Each corresponding valuation is defined by induction on the judgement derivation, specifying the corresponding valuation for each derivation rule.  The judgment is nondeterministic if only the input state is considered, but knowing the output state resolves this nondeterminism.
When the judgement and heap are clear from context, we simply reference the \textit{corresponding valuation extending} $V$.


\subsection{Valid states}

A \textit{valid state} is a dynamic state which is completely characterized by verification states. If $\Gamma = \initsym$ this is trivially true. For a dynamic state $\pair{\heap}{\stack}$, we require that the head stack frame corresponds to a reachable verification state. We also require that all other stack frames are \textit{partially validated} by some reachable verification state.

If a stack frame is executing a method call, partial validation is characterized by the stack frame and heap modeling a reachable symbolic state for that program point, with the callee's precondition consumed.
For the full definition refer to \supplement{\citet{supplement}}{definition \ref{def:partial-valid}}.

\subsection{Progress and preservation}

Our statement of progress is split into two parts. First, theorem \ref{thm:progress-1} states that if $\Gamma$ is a valid state and $\Gamma$ satisfies the run-time checks calculated by a guard with a path condition that matches the current dynamic state, then dynamic execution proceeds. Second, theorem \ref{thm:progress-2} states that we can always find the guard necessary to apply theorem \ref{thm:progress-1}---a guard whose path condition matches. Thus, theorem \ref{thm:progress-2} represents completeness of symbolic execution with respect to possible dynamic execution paths. Together these theorems show that, in a valid state, the only possible way for execution to be stuck is when the run-time checks cannot be asserted.

\begin{theorem}[Progress part 1] \label{thm:progress-1}
  For some program $\prog$, let $\Gamma$ be some dynamic state validated by $\vstate$ and $V$. If $\sguard{\vstate}{\sstate}{\scheck}{\sperms}$, $V'$ is the corresponding valuation extending $V$, $V'(\pc(\sstate)) = \ktrue$, and $\rtassert{V'}{\heap}{\perms(\Gamma)}{\scheck}$, then $\dtrans{\prog}{\vfoot{V'}{\heap(\Gamma)}{\sperms}}{\Gamma}{\Gamma'}$ for some $\Gamma'$.
\end{theorem}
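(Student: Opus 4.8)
The plan is to argue by case analysis on the last rule used to derive the guard judgement $\sguard{\vstate}{\sstate}{\scheck}{\sperms}$. Because every guard rule requires $\vstate$ to carry a statement, $\vstate$ is neither $\initsym$ nor $\finalsym$; since $\Gamma$ is validated by $\vstate$ and $V$, this forces $\vstate = \triple{\sstate_0}{s}{\gform}$ and $\Gamma = \pair{\heap}{\stack}$ with $\simstate{V}{\sstate_0}{\heap}{\perms(\Gamma)}{\env(\Gamma)}$ and $s(\Gamma) = s$. Each guard rule fixes the shape of $s$ — a variable assignment, a field assignment, a method call, or (from the supplement) the end of a method body or a loop header — and that in turn identifies the unique dynamic execution rule I must instantiate. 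The goal then reduces to discharging that rule's premises and exhibiting the resulting $\Gamma'$.

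The two supporting ingredients are the soundness statements for symbolic evaluation and for consume, phrased relative to corresponding valuations. For evaluation: if $\seval{\sstate_0}{e}{t}{\sstate'}{\scheck}$, the dynamic state corresponds to $\sstate_0$ under $V$, $V'$ is the corresponding valuation extending $V$ (i.e. $V' = V[\mathcal{J}\mid\heap(\Gamma)]$ for this derivation), $V'(\pc(\sstate')) = \ktrue$, and $\rtassert{V'}{\heap}{\perms(\Gamma)}{\scheck}$, then $\eval{\heap}{\env(\Gamma)}{e}{V'(t)}$, $e$ is framed by $\perms(\Gamma)$, and $\sstate'$ corresponds to the dynamic state under $V'$ (the optimistically added field chunks being exactly the ones justified by the permission checks in $\scheck$). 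The companion statement for $\scons{\sstate}{\gform}{\sstate'}{\scheck}$ gives $\assertion{\heap}{\perms_c}{\env}{\gform}$ for a concrete $\perms_c \subseteq \perms(\Gamma)$ disjoint from the footprint of the heap chunks still present in $\sstate'$, together with correspondence of $\sstate'$ to $\heap$, the retained permissions, and $\env$ under $V'$. Both follow from the per-judgement proofs already in place; the only work here is chaining them so that a single $V'$ validates the whole guard derivation, which is where compositionality of the $V[\mathcal{J}\mid\heap]$ construction is used.

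With these in hand the cases are routine. For \textsc{SGuardAssign} I instantiate the dynamic rule for variable assignment: evaluation soundness supplies $\eval{\heap}{\env(\Gamma)}{e}{V'(t)}$ and frames $e$, and $\sperms = \emptyset$ makes the exclusion frame $\vfoot{V'}{\heap(\Gamma)}{\sperms}$ empty. For \textsc{SGuardAssignField} I instantiate \textsc{ExecAssignField}: evaluation soundness handles $e$ and $x$, and consume soundness on $\scons{\sstate'}{\kacc(x.f)}{\sstate''}{\scheck'}$ gives $\assertion{\heap}{\perms(\Gamma)}{\env(\Gamma)}{\kacc(x.f)}$, so $\pair{\env(\Gamma)(x)}{f}$ is owned and the heap update is well defined; again $\sperms = \emptyset$. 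For \textsc{SGuardCall} I instantiate \textsc{ExecCallEnter}: evaluation soundness on each argument gives $\multiple{\eval{\heap}{\env(\Gamma)}{e}{V'(t)}}$ with each argument framed, matching $\env' = [\multiple{x \mapsto V'(t)}]$; the remaining obligation is $\assertion{\heap}{\perms(\Gamma)\setminus\xperms}{\env'}{\fpre(m)}$ with $\xperms = \vfoot{V'}{\heap(\Gamma)}{\frem(\sstate'',\fpre(m))}$. Consume soundness gives $\fpre(m)$ assertible with some $\perms_c \subseteq \perms(\Gamma)$ disjoint from the footprint of the heap chunks of $\sstate''$; since $\frem(\sstate'',\fpre(m))$ is precisely the set of permission symbols carried by those chunks (or $\emptyset$ when the precondition is completely precise), we get $\perms_c \cap \xperms = \emptyset$ and $\xperms \subseteq \perms(\Gamma)$, hence $\perms_c \subseteq \perms(\Gamma)\setminus\xperms$; upward-monotonicity of the assertion judgement in the permission set then upgrades the assertion to $\perms(\Gamma)\setminus\xperms$. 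The new stack-frame footprint is then determined and \textsc{ExecCallEnter} fires, producing $\Gamma'$. The end-of-body and loop guards are handled the same way against \textsc{ExecCallExit} and the loop rules, the extra ingredient for \textsc{ExecCallExit} being the partial-validation hypothesis on the caller's stack frame.

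I expect the main obstacle to be the \textsc{SGuardCall} (and loop) case, specifically the exclusion-frame accounting: making precise how consume partitions $\perms(\Gamma)$ into the consumed footprint $\perms_c$ and the footprint of the retained chunks, checking that the separation invariant inside state correspondence — distinct heap chunks have disjoint footprints, with a predicate chunk's footprint being its equirecursive expansion — is exactly what forces those two to be disjoint, and confirming that the corresponding valuation of the guard derivation restricts correctly to each sub-derivation so that one $V'$ simultaneously validates the argument evaluations, the consume, and the hypothesis $V'(\pc(\sstate)) = \ktrue$. Everything else (the assignment cases, the framing side conditions, and ruling out $\initsym$/$\finalsym$) is bookkeeping once the symbolic-evaluation and consume soundness lemmas and the compositionality of corresponding valuations are available.
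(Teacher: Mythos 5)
Your proposal follows essentially the same route as the paper's proof: a case analysis on the guard derivation, matching each guard rule to the corresponding dynamic execution rule, discharging its premises via the soundness lemmas for symbolic evaluation and consume (chained through the compositional corresponding-valuation construction), and handling the exclusion frame in the call and loop cases by observing that the consumed footprint is disjoint from the footprints of the chunks retained in the post-consume state (the paper packages this as a dedicated lemma about $\frem$). The only slip is the claim that every guard rule forces $\vstate$ to carry a statement --- the rule \refrule{SGuardInit} applies to $\initsym$, so that case must be included, though it is discharged trivially by \refrule{ExecInit}.
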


\begin{theorem}[Progress part 2] \label{thm:progress-2}
  For some program $\prog$, let $\Gamma$ be some dynamic state validated by $\vstate$ and $V$. Then $\sguard{\vstate}{\sstate}{\scheck}{\sperms}$ for some $\sstate$, $\scheck$, and $\sperms$ such that $V'(\pc(\sstate)) = \ktrue$ where $V'$ is the corresponding valuation extending $V$.
\end{theorem}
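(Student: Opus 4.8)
The plan is to prove Theorem~\ref{thm:progress-2} by case analysis on the verification state $\vstate$ that, together with $V$, validates $\Gamma$. If $\vstate$ is a symbol ($\initsym$ or $\finalsym$), then $\Gamma$ is that same symbol and the matching guard rule is degenerate --- it emits no checks, and its output state's path condition is $\ktrue$ --- so the claim holds trivially. The interesting case is $\vstate = \triple{\sstate_0}{s}{\gform}$; by the definition of correspondence, $s(\Gamma) = s$ and $\simstate{V}{\sstate_0}{\heap(\Gamma)}{\perms(\Gamma)}{\env(\Gamma)}$. It then suffices to exhibit, for whichever guard rule applies to the head statement of $s$ (or the case $s = \kskip$, whose guard consumes the trailing formula $\gform$), a \emph{derivation of its premises} such that the associated corresponding valuation $V'$ --- which extends $V$ by construction --- satisfies $V'(\pc(\sstate)) = \ktrue$ for the output state $\sstate$ named in the rule's conclusion. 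Every guard rule is assembled by chaining a (possibly empty) sequence of symbolic evaluations $\seval{\cdot}{\cdot}{\cdot}{\cdot}{\cdot}$ and $\spceval{\cdot}{\cdot}{\cdot}{\cdot}$ with one consumption $\scons{\cdot}{\cdot}{\cdot}{\cdot}$ (of a method precondition, a loop invariant, or the trailing formula), so the theorem reduces to two auxiliary ``progress'' lemmas.

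\textbf{Progress lemmas for evaluation and consumption.}
The first lemma says: if $\simstate{V}{\sstate}{\heap}{\perms}{\env}$, then for every program expression $e$ there is a derivation of $\seval{\sstate}{e}{t}{\sstate'}{\scheck}$ (and likewise of $\spceval{\sstate}{e}{t}{\scheck}$) whose corresponding valuation $V'$ extends $V$, keeps the symbolic store unchanged, satisfies $V'(\pc(\sstate')) = \ktrue$, and has $V'(t) = v$ where $\eval{\heap}{\env}{e}{v}$. The second lemma is the analogue for consumption: if $\simstate{V}{\sstate}{\heap}{\perms}{\env}$ and $\gform$ is a specification, there is a derivation of $\scons{\sstate}{\gform}{\sstate'}{\scheck}$ whose corresponding valuation $V'$ extends $V$ with $V'(\pc(\sstate')) = \ktrue$. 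Both are proved by induction --- on $e$, resp.\ on $\gform$ together with the left-to-right order in which separating conjuncts and conditional branches are consumed. The key point in both proofs is that \emph{dynamic evaluation is total}: $\heap$ is a total function and, by the well-formedness assumptions on \gvl programs, $\env$ is defined on every free variable, so $\eval{\heap}{\env}{e}{v}$ holds for a unique $v$. This concrete value dictates which nondeterministic symbolic rule to take --- which branch of a short-circuiting $\kand$/$\kor$, whether a field access resolves against a heap chunk, an optimistic chunk, a freshly-added optimistic chunk, or a failure, and which branch of a conditional formula to enter --- and the corresponding valuation is \emph{forced} on each freshly introduced symbol to agree with $\heap$. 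Hence every path-condition conjunct the chosen rule appends (a short-circuit guard, a conditional guard) holds under $V'$ precisely because it mirrors the dynamic computation; and the induction can thread the growing valuation through the state, since the intermediate states stay in correspondence (modulo permissions for the optimistically accessed locations, which are irrelevant to the path condition). The imprecise-consume rule replaces both heaps but preserves $\pc$ and $\senv$, so it poses no obstacle.

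\textbf{Assembling the theorem.}
With the lemmas in hand the rest is bookkeeping over guard rules. For $s = \sseq{x \kassign e}{s''}$ apply the evaluation lemma and read off \textsc{SGuardAssign}. For $s = \sseq{x.f \kassign e}{s''}$ apply the evaluation lemma, then the consumption lemma to $\kacc(x.f)$ in the resulting state, and read off \textsc{SGuardAssignField}. For a call $s = \sseq{y \kassign m(\multiple{e})}{s''}$ apply the evaluation lemma once per argument --- composing the successive extensions of $V$ --- and then the consumption lemma to $\fpre(m)$ under the parameter store $[\multiple{x \mapsto t}]$, which models $[\multiple{x \mapsto v}]$ precisely because the evaluation lemma returned $V'(t_i) = v_i$; then read off \textsc{SGuardCall}. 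The remaining statement forms ($\kskip$ with or without a caller frame, $\kassert$, $\kfold$, $\kunfold$, $\kif$, $\kwhile$, $\kalloc$) are handled identically via their (omitted) guard rules, each of which performs at most one evaluation and one consumption. In every case the output path condition is either $\pc(\sstate_0)$ itself --- true under $V'$ since $V'$ extends $V$ and $V(\pc(\sstate_0)) = \ktrue$ by correspondence --- or an extension the lemmas certify remains true under $V'$. The guard so produced is exactly the one whose checks Theorem~\ref{thm:progress-1} then needs satisfied in order for execution to step.

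\textbf{Main obstacle.}
The delicate part is the consumption lemma, specifically keeping the output path condition consistent with the \emph{forced} corresponding valuation. Consuming $\kacc(e.f) * \cdots$ removes heap chunks, so subsequent conjuncts must be evaluated against the unchanging reference state rather than the shrinking one; and in an imprecise state the value-consume rule appends the consumed expression's symbolic value to the path condition. To close the argument one must show that this appended symbol evaluates under $V'$ to the same concrete value that $e$ takes in the dynamic state --- which is exactly where the modeling invariants packaged into ``$\Gamma$ is validated by $\vstate$ and $V$'' (the dynamic heap, permissions, and environment model $\sstate_0$, hence also the reference state threaded into the consume) do the work. A secondary subtlety is coherence of the nondeterminism: we must commit to \emph{one} derivation per judgement that is simultaneously a valid derivation and one whose corresponding valuation extends the given $V$, so the induction has to follow the left-to-right evaluation/consumption order rather than choosing rules for the subderivations independently.
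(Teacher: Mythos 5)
Your proposal matches the paper's proof: it proceeds by cases on the head statement of $\vstate$ (after dispatching the symbol cases), discharges each guard rule's premises via progress lemmas for symbolic evaluation and consumption (the paper's Lemmas \ref{lem:eval-progress}, \ref{lem:pc-eval-progress}, and \ref{lem:cons-progress}), and threads the corresponding valuation left-to-right through the chained judgements exactly as you describe. The only notable omission is that the \textsc{SGuardWhile} case additionally needs a progress lemma for \emph{produce} (the paper's Lemma \ref{lem:produce-progress}), since that guard re-produces the havoced loop invariant before evaluating the condition; otherwise the paper's auxiliary lemmas are even leaner than yours, tracking only truth of the path condition under the corresponding valuation rather than full state correspondence.
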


Finally, our statement of preservation (theorem \ref{thm:preservation}) assumes the antecedent and conclusion of theorem \ref{thm:progress-1}---the initial state is valid and satisfies the run-time checks of some matching guard---as well as a dynamic execution step to $\Gamma'$. By theorem \ref{thm:progress-2}, we know that there is such a guard statement; i.e., we can always find the necessary set of run-time checks. Then preservation states that the resulting dynamic state $\Gamma'$ is also valid.

\begin{theorem}[Preservation] \label{thm:preservation}
  For some program $\prog$, let $\Gamma$ be some dynamic state validated by $\vstate$ and $V$. If $\sguard{\vstate}{\sstate}{\scheck}{\sperms}$, $V'$ is the corresponding valuation extending $V$, $V'(\pc(\sstate)) = \ktrue$, $\rtassert{V'}{\heap}{\perms(\Gamma)}{\scheck}$, and $\dtrans{\prog}{\vfoot{V'}{\heap(\Gamma)}{\sperms}}{\Gamma}{\Gamma'}$, then $\Gamma'$ is a valid state.
\end{theorem}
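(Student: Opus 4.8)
The plan is to prove preservation by case analysis on the dynamic step $\dtrans{\prog}{\xperms}{\Gamma}{\Gamma'}$ (with $\xperms = \vfoot{V'}{\heap(\Gamma)}{\sperms}$), exhibiting in each case a reachable verification state that validates $\Gamma'$. Because $\Gamma$ is validated by $\vstate$ and $V$, the head stack frame of $\Gamma$ corresponds to the reachable state $\vstate$ (so $s(\Gamma) = s(\vstate)$ and $\simstate{V}{\sstate(\vstate)}{\heap(\Gamma)}{\perms(\Gamma)}{\env(\Gamma)}$), and every deeper frame is partially validated by some reachable state. The workhorse is a \emph{correspondence-preservation} lemma: if $\simstate{V}{\sstate}{\heap}{\perms}{\env}$, the symbolic judgement matching a given dynamic action fires ($\sexec$, or the relevant $\scons$/$\sproduce$ step) emitting checks $\scheck$, and $\rtassert{V'}{\heap'}{\perms'}{\scheck}$ holds where $\heap', \perms', \env'$ are the post-action dynamic components, then $\simstate{V'}{\sstate'}{\heap'}{\perms'}{\env'}$ for the corresponding valuation $V'$ extending $V$. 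I would prove this by a sub-induction over symbolic derivations that mirrors the per-rule definition of the corresponding valuation; the run-time checks are exactly what back the optimistic heap chunks and optimistically-consumed permissions or predicate instances introduced along the derivation (each carries a matching $\pair{t}{f}$, $\pair{p}{\multiple{t}}$, or $\fsep(\sperms_1,\sperms_2)$ in $\scheck$, whose assertion under $V'$ supplies the membership in $\perms'$ or the disjointness that $\simheap{V'}{\cdot}{\heap'}{\perms'}$ demands).

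For the intra-method statements ($\kassign$, field assignment, $\kalloc$, $\kif$, $\kassert$, $\kfold$, $\kunfold$) the applicable guard already produced precisely the checks needed to run the step $\sexec{\sstate(\vstate)}{s(\vstate)}{s'}{\sstate'}$ that matches the observed dynamic step, so \textsc{SVerifyStep} makes $\triple{\sstate'}{s'}{\gform(\vstate)}$ reachable and the correspondence lemma shows it validates $\Gamma'$; the deeper frames are untouched, so their partial validation carries over verbatim. The delicate sub-case is $\kfold$/$\kunfold$: dynamically these are no-ops on heap and stack, yet the symbolic step consumes or produces the predicate body, trading an isorecursive predicate chunk for its unfolding or vice versa. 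I would discharge this using the equirecursive reading built into $\simheap$ — a predicate chunk models $\heap$ exactly when the equirecursive predicate body holds of $\multiple{V(t)}$ — together with the iso/equi bridging in the corresponding valuation, so that the symbolic states before and after the fold correspond to the \emph{same} dynamic state.

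For method calls and loops the symbolic model is modular and single-step while the dynamic semantics splits into enter/exit, so I match one symbolic transition against two dynamic ones. For \textsc{ExecCallEnter}: \textsc{SVerifyMethod} makes a verification state for the callee body (starting from the produced precondition) reachable; \textsc{SGuardCall} has consumed $\fpre(m)$ from $\sstate(\vstate)$, and its checks together with $V'(\pc(\sstate)) = \ktrue$ let the correspondence lemma identify the callee's new head frame with that reachable state, using $\foot{\heap}{\perms \setminus \xperms}{\env'}{\fpre(m)}$ on the dynamic side to match the chunks retained by symbolic consume. Simultaneously the caller frame, now carrying $\perms \setminus \perms'$ and the unchanged call statement, is exactly the precondition-consumed configuration required for partial validation (it corresponds to the intermediate state of the \textsc{SExecCall} derivation). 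For \textsc{ExecCallExit}: the revealed caller frame was partially validated by a reachable state with $\fpre(m)$ consumed; asserting $\fpost(m)$ and writing $\kresult$ back is exactly the produce half of \textsc{SExecCall}, so completing that derivation and applying \textsc{SVerifyStep} gives a reachable post-call state that the lemma shows validates $\Gamma'$. Loops are handled analogously via \textsc{SVerifyLoopBody} and \textsc{SVerifyLoop}, with the loop invariant in the role of the contract, the $\fmodified(s)$ variables freshened to match the symbolic store update, and \textsc{SVerifyFinal} closing the body; the $\initsym$ case (stepping to the entry statement, validated by \textsc{SVerifyInit}) and the $\finalsym$ case are routine.

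I expect the method-call and loop cases to be the principal obstacle, for two entangled reasons. First, reconciling the modular one-step symbolic treatment with the two-step dynamic semantics requires carefully tracking how the maximal footprint $\foot{\heap}{\perms \setminus \xperms}{\env}{\fpre(m)}$ — which for an imprecise precondition is \emph{all} accessible permissions minus the exclusion frame — lines up with what symbolic consume of an imprecise precondition removes and retains (\textsc{SConsumeImprecise}, which empties both heaps). This is exactly where the soundness bug lived, so I must show that $\xperms = \vfoot{V'}{\heap(\Gamma)}{\sperms}$, computed from $\sperms = \frem(\sstate, \fpre(m))$ in the guard, excludes precisely the permissions symbolic execution still attributes to the caller, preventing the callee from re-acquiring them optimistically. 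Second, threading the iso/equi correspondence through the footprint bookkeeping — $\vfoot{V}{\heap}{h}$ for a predicate chunk is its \emph{exact} equirecursive footprint — is what makes the valuation nontrivial, and every step of the argument must preserve the invariants that chunk footprints are pairwise disjoint and that the optimistic heap is empty in precise states.
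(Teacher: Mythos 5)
Your plan follows essentially the same route as the paper's proof: case analysis on the dynamic transition, per-operation soundness lemmas indexed by corresponding valuations and run-time checks, reachability of the post-state via the \textsc{SVerify} rules, and partial validation of the intermediate stack frames, with the same delicate points you identify (the exclusion frame computed by $\frem$, the iso/equirecursive predicate correspondence, and the enter/exit split for calls and loops). The only detail you gloss over is that for heap-mutating steps (field assignment, allocation) the deeper frames' validation does not carry over \emph{verbatim} but needs the well-formedness invariant that stack-frame permission sets are pairwise disjoint, which the paper isolates as a separate lemma.
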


Note that our assumptions for preservation require dynamic execution to not only assert the run-time checks represented symbolically by $\scheck$, but also respect the exclusion frame represented symbolically by $\sperms$.
The necessity and implications of this requirement are discussed in \S\ref{sec:unsoundness}.

Taken together, these theorems demonstrate that dynamic execution will never be stuck as long as the run-time checks calculated by static verification succeed. Further, it shows that we calculate run-time checks for all possible execution paths. Since the dynamic execution semantics ensures all necessary specifications are satisfied, this implies that the calculated run-time checks are sufficient.

\section{Challenges to formalism of static verification}

Our specification of static verification in \S\ref{sec:precise} and \S\ref{sec:gradual} is formalised using non-deterministic inference rules. This differs greatly from the specifications of \citet{schwerhoff16silicon} and \citet{divincenzo2022gradual}, which both use a CPS-style definition for algorithms.  The latter form is useful when specifying an implementation, but makes it difficult to formulate a syntactic soundness proof.
Furthermore, operational semantics allow a higher level of abstraction than pseudo-code definitions. However, we must carefully consider whether our operational semantics represent the system which is implemented.

\subsection{Previous approaches}

During development of our soundness proof, we attempted several formulations of soundness. Initially we abstractly defined a \emph{symbolic stack}---a list of symbolic states with the form of a dynamic stack. This approach allowed us to easily state correspondence of the entire dynamic state---each dynamic stack frame models a corresponding symbolic stack frame.

We found it challenging, however, to prove that this correspondence is maintained.  When the dynamic stack takes a step, we must verify that there is a corresponding symbolic stack. To address this issue, we defined an execution semantics for symbolic stacks.  Unfortunately, this increased the distance between our formalism and implementation, and now we also need to show that all symbolic states are reachable during static verification.  Perhaps due to the complexity of this approach, the proof of correspondence remained quite difficult even after defining this execution semantics.

Instead, we defined a valid state primarily by the correspondence of the currently executing dynamic stack frame with some \textit{reachable} symbolic state---a symbolic state which is computed during static verification, with no input from dynamic execution. This resulted in a much simpler definition of \textit{valid state}.

However, in order to completely prove preservation, we also must specify the behavior of intermediate stack frames -- frames contained in the dynamic stack below the currently-executing frame. Thus we provide a recursive definition for a valid \textit{partial state}. For intermediate frames containing a method call that is waiting to complete, this requires the frame to model a symbolic state that results from consuming the callee method's pre-condition from a reachable verification state. We use this to prove that the dynamic state after the method returns models the symbolic state after symbolically executing the method call.

\subsection{Verification of loops}\label{sec:loops}

Almost all of our symbolic execution rules are finitely non-deterministic. That is, given an input state, there are a finite number of derivations that can apply. This is necessary since all possible states must be computed during static verification.

While this property matches the finite branching of symbolic execution, we make an exception in the case of loops---specifically, the \textsc{SVerifyLoop} rule (Figure \ref{fig:verify-rules}). It consumes the loop pre-condition, havocs all variables modified by the loop body (i.e., replaces them with $\ffresh$ values), and produces the loop post-condition. Thus it replaces all symbolic values that could be modified by the loop body with fresh values. The loop is left in place, which means that the rule can be immediately applied again to derive yet another state. However, this is harmless because repeated applications of this rule result in isomorphic symbolic states---states which represent the same state but with different symbolic values. Since the exact symbolic values do not matter, these are equivalent states from the perspective of static verification.
%
%
Therefore, even though we allow unbounded non-determinism, an implementation such as \gco can compute all possible states (as determined by our formal model) up to this equivalence. In other words, unbounded non-determinism is an artifact of our formalization that does not affect an implementation.



This exception is motivated by a disconnect between our formal model and the implementation of \gco \citep{divincenzo2022gradual}.
In our formalism, run-time checks are computed as symbolic values and lack a representation in terms of the source. Furthermore, we interpret these run-time checks by means of the valuation function, which we only extend with fresh values as dynamic execution proceeds. Therefore, the references in a run-time check are fixed -- for example, the validity of a check does not change when the heap is updated, since the heap reference has already been fully evaluated against the symbolic heap.

Consider the example in Figure \ref{fig:loop-example}. A new object reference $\ell_1$ is allocated by \ttt{create} at line \ref{ln:loop-create-initial}. Then we consume the loop pre-condition $\simprecise{\ktrue}$, which results in an imprecise state with empty symbolic heaps. Thus we cannot statically assert access to \ttt{x.value} in the loop body (line \ref{ln:loop-assign}). However, we optimistically assume access and produce a run-time check representing $\kacc(\ttt{x.value})$. In our formalism, \ttt{x} is symbolically evaluated to a symbolic value $\nu_1$ and the corresponding valuation contains the mapping $\nu_1 \mapsto \ell_1$. Thus the symbolic run-time check is $\pair{\nu_1}{\ttt{value}}$, which succeeds since the dynamic state owns $\pair{\ell_1}{\ttt{value}}$. This permission is then lost when \ttt{consume} is executed (line \ref{ln:loop-consume}), but a new reference $\ell_2$ is allocated at line \ref{ln:loop-create-while}, and the dynamic environment is updated with $\ttt{x} \mapsto \ell_2$.

During the next iteration of the loop, if we directly applied the same run-time check, this would again require the run-time check $\pair{\nu_1}{\ttt{value}}$. However, this would fail since the dynamic state no longer owns $\pair{\ell_1}{\ttt{value}}$. But the run-time check should reference $\ell_2$, since the check is intended to represent $\kacc(\ttt{x.value})$, and $\ttt{x} \mapsto \ell_2$ in the dynamic state.

This contrasts with the implementation of run-time checks in \gco, which translates the symbolic checks into source expressions. For the example described, \gco directly inserts the assertion $\kacc(\ttt{x}.\ttt{value})$. The expression \ttt{x.value} is then re-evaluated every time this assertion is checked.

\textsc{SVerifyLoop} fixes this mismatch by allowing our formal model to be updated with new symbolic values. With this rule, we can continue execution using a new symbolic state where we havoc \ttt{x}, since it is modified by the loop body, and consume the loop invariant $\simprecise{\ktrue}$ again. Thus we begin with a symbolic state with empty symbolic heaps and a symbolic store containing $\ttt{x} \mapsto \nu_2$, where $\nu_2$ is a fresh value. We define a new valuation $V'$ where new symbolic values are mapped to the current dynamic state, i.e. $V'(\nu_2) = \ell_2$ since $\ttt{x} \mapsto \ell_2$ in the dynamic state. The new symbolic state is isomorphic to the state used during the initial symbolic execution, since it also began execution of the body with empty symbolic heaps. We will again optimistically evaluate \ttt{x.value}, which produces a new run-time check for the symbolic permission $\pair{\nu_2}{\ttt{value}}$, thus we will check access to $\pair{\ell_2}{\ttt{value}}$, and therefore our run-time checks succeed.

Finally, since this rule introduces \emph{more} symbolic states in our formal model, this means that our soundness theorems are stronger than they would be otherwise; i.e., the soundness result holds for strictly more cases. As our example demonstrates, we want to consider these additional cases since they are already permitted by \gco due to its source-level run-time checks. Therefore, this additional rule allows us to abstract away the re-evaluation of source-level checks, allowing us to reason with fixed symbolic values.


\begin{figure}
\begin{minipage}[t]{.45\linewidth}
\begin{lstlisting}[name=loop-example]
Cell create()
requires true ensures ?
{ result = alloc(Cell); }

int consume(Cell c)
requires acc(c.value) ensures true
{ (*@ $\cdots$ @*) }
\end{lstlisting}
\end{minipage}
\begin{minipage}[t]{.45\linewidth}
\begin{lstlisting}[name=loop-example]
int main() {
  x = create();(*@\label{ln:loop-create-initial}@*)
  while (true) invariant ? * true {(*@\label{ln:loop-while-begin}@*)
    x.value = 1;(*@\label{ln:loop-assign}@*)
    consume(x);(*@\label{ln:loop-consume}@*)
    x = create();(*@\label{ln:loop-create-while}@*)
  }(*@\label{ln:loop-while-end}@*)
  result = 0;
}
\end{lstlisting}
\end{minipage}
   \caption{Example illustrating the neccessity of the \textsc{SVerifyLoop} rule.}
   \label{fig:loop-example}
\end{figure}

\section{Unsoundness of \gco}\label{sec:unsoundness}

While attempting to prove the soundness of \gco, we discovered that its implementation \citep{divincenzo2022gradual} allows unsound behavior, and have communicated this to the authors. This unsoundness results from the combination of imprecise specifications, static verification with isorecursive predicates, and run-time checking with equirecursive predicates.

\subsection{Example}

\begin{figure}
\begin{minipage}{.45\linewidth}
\begin{lstlisting}[name=gvc-unsound]
struct Cell { int value; }
predicate imprecise() = ? * true
void set(Cell c, int v)
  requires imprecise()
  ensures true
{
  unfold imprecise();(*@\label{ln:unsound-unfold}@*)
  c.value = v;(*@\label{ln:unsound-set-assign}@*)
}
\end{lstlisting}
\end{minipage}
\begin{minipage}{.45\linewidth}
\begin{lstlisting}[name=gvc-unsound]
int test()
  requires true
  ensures result == 0
{(*@\label{ln:unsound-test-begin}@*)
  fold imprecise();(*@\label{ln:unsound-fold}@*)
  Cell c = alloc(Cell);(*@\label{ln:unsound-alloc}@*)
  c.value = 0;(*@\label{ln:unsound-test-assign}@*)
  set(c, 1);(*@\label{ln:unsound-test-set}@*)
  result = c.value;(*@\label{ln:unsound-test-result}@*)
}(*@\label{ln:unsound-test-end}@*)
\end{lstlisting}
\end{minipage}
\caption[Example exhibiting unsoundness of \citet{divincenzo2022gradual}]
{Example exhibiting unsoundness of \citet{divincenzo2022gradual}.\protect\footnotemark}
\label{fig:unsound-example}
\end{figure}

We show an example which exhibits this behavior in Figure \ref{fig:unsound-example}. At line \ref{ln:unsound-fold} $\ttt{imprecise}()$ is folded, thus $\simprecise{\ktrue}$ is consumed, and the predicate chunk is added to the symbolic heap. At this point $\sheap = \set{\langle \ttt{imprecise} \rangle}$. In lines \ref{ln:unsound-alloc}-\ref{ln:unsound-test-assign} a new \ttt{Cell} is allocated and its value is initialized to $0$. At this point $\sheap = \set{\langle \ttt{imprecise} \rangle, \triple{\ttt{value}}{t_1}{0}}$, where $\ttt{c} \mapsto t_1$.

\footnotetext{\ttt{void} methods are used for clarity since they can be trivially translated to the formally defined grammar.}
At line \ref{ln:unsound-test-set}, the \ttt{set} method is called. Thus the precondition---\ttt{imprecise()}---is consumed, resulting in $\sheap = \set{\triple{\ttt{value}}{t_1}{0}}$. The postcondition---$\ktrue$---is then produced, which does not change the symbolic state. In this symbolic state $\ttt{c.value} \mapsto 0$. Then line \ref{ln:unsound-test-result} adds the mapping $\ttt{result} \mapsto 0$ to the symbolic store, which allows the postcondition $\kresult \keq 0$ to be consumed successfully. Now \ttt{test} is valid and no run-time checks are required in its body.  Symbolic execution of the \ttt{set} method shows that this method is also valid but requires a check representing $\kacc(\ttt{c.value})$ at line \ref{ln:unsound-test-assign}.

Now we consider dynamic execution of the \ttt{test} method. We first use no exclusion frame (i.e. using $\emptyset$ for every occurrence of $\xperms$ in the rules).

The fold at line \ref{ln:unsound-fold} is ignored, a new \ttt{Cell} is allocated and initialized at lines \ref{ln:unsound-alloc}-\ref{ln:unsound-test-assign}, and the \ttt{set} method is called at line \ref{ln:unsound-test-set}. The formula \ttt{imprecise()} is not completely precise, therefore $\foot{\heap}{\perms}{\env}{\ttt{imprecise}()} = \perms$. Thus all of the caller's owned permissions are passed to \ttt{set}. The assertion for \ttt{imprecise()} succeeds since its equirecursive unrolling is simply $\simprecise{\ktrue}$. Likewise, the assertion for $\kacc(\ttt{c.value})$ when executing line \ref{ln:unsound-set-assign} also succeeds since the required permissions were passed from \ttt{test}. After returning from \ttt{set}, $\ttt{c.value} \mapsto 1$ in the dynamic state, thus $\ttt{result} \mapsto 1$ after executing line \ref{ln:unsound-test-result}. However, the postcondition $\ttt{result} \keq 0$ cannot be asserted, therefore execution is stuck.

Since \citet{divincenzo2022gradual} does not implement an exclusion frame, execution proceeds as described above, except that only the calculated run-time checks are asserted. Therefore the \ttt{test} method returns $1$, which contradicts its contract. \citet{wise2020gradual} follows the dynamic execution behavior described above, but since it checks every assertion at run-time, execution halts and soundness is preserved.

\subsection{Diagnosis}

%
%
%
As described above, the caller's permissions are passed to \ttt{set}, thus the set of permissions owned by \ttt{test} is empty during execution of \ttt{set}. But we calculated that after consuming $\fpre(\ttt{set})$ the symbolic heap still contains the field chunk representing $\ttt{c}.\ttt{value}$. Therefore heap chunks which are included in the frame of \ttt{set} during dynamic execution are not removed by consume during symbolic execution, thus symbolic execution does not accurately represent dynamic execution.

\subsection{Possible solutions}

At first this appears to be an error of static verification, and thus we could address this by making static verification more conservative.
More specifically, we could require a stronger invariant of the precise heap: the \textit{maximal} footprint represented by predicate chunks cannot overlap. This contrasts with our current definition, where the exact footprint represented by a predicate chunk must be disjoint from that of all other predicate chunks.

For example, we could clear the symbolic heaps when consuming any formula that is not \textit{completely} precise (i.e., the recursive unfolding contains an imprecise formula). When this occurs, we would also need to use an imprecise state, so that the existence of the removed permissions can be optimistically assumed. This would result in empty symbolic heap after line \ref{ln:unsound-test-set} in Figure \ref{fig:unsound-example}, and a run-time check for the value of $\kresult$ would be required before returning from \ttt{test}, thus soundness is preserved.

This would allow maximal footprints of predicate chunks to overlap in the symbolic heap, but when consuming a predicate instance, all potentially overlapping predicate chunks would be removed. Thus, after some predicate instance is consumed, its maximal footprint would not overlap with any permission represented by a heap chunk contained in the symbolic heap.

Alternatively, we could achieve soundness by removing any predicate instance that is not completely precise when additional permissions are added to the precise heap. Similar to the previous option, we would also need to use an imprecise state when this occurs. In the example, that would (perhaps unintuitively) \emph{remove} the \ttt{imprecise()} predicate when adding permissions for the \ttt{alloc} statement. This would ensure that the maximal footprint of heap chunks in the symbolic heap never overlap.

Unfortunately, both of these options reduce the number of assertions that can be statically discharged when verifying gradual programs, thus more run-time checks would be necessary. Furthermore, the run-time checks require checking a predicate instance, which can be quite costly since this traverses the entire unfolding of the predicate.

Furthermore, allowing the predicate instance folded at line \ref{ln:unsound-fold} to affect permissions allocated afterward, at line \ref{ln:unsound-alloc}, seems counter-intuitive. This invalidates the intuitive assumption that the set of permissions represented by a folded predicate instqance will not change while it remains folded. Furthermore this behavior breaks the semantics of \ttt{?}, as specified in \citet{wise2020gradual}, since no logically consistent strengthening of the \ttt{imprecise} predicate allows it to include permissions allocated after its body is folded.

This indicates that the semantics of dynamic execution should be modified to exclude access permission for \ttt{c.value}, which is allocated after \ttt{imprecise()} is folded, from being passed to \ttt{set}, which is a precise formula that only requires \ttt{imprecise()}. Then execution would fail at line \ref{ln:unsound-set-assign} in Figure \ref{fig:unsound-example}. To accomplish this, we have introduced the concept of an \emph{exclusion frame} -- a set of permissions which cannot be passed to a callee. This exclusion frame is calculated by symbolic execution, and passed to dynamic execution in much the same way as run-time checks. It is represented by $\sperms$ in the guard judgement (\S\ref{fig:gradual-guard-rules}), which also calculates $\scheck$, and is translated to dynamic permissions using a valuation.

The guard rules in Figure \ref{fig:gradual-guard-rules} calculate the exclusion frame by the $\frem$ helper function, after consuming the pre-condition of a method. If the pre-condition is completely precise, then $\sperms = \emptyset$, thus execution of an \svl program is not affected. Otherwise, $\sperms$ contains all permissions currently contained in the symbolic heaps. In Figure \ref{fig:unsound-example}, since the pre-condition of \ttt{set} is not completely precise, $\sperms = \set{ \pair{t_1}{\ttt{value}} }$ when calculating the guard statement at line \ref{ln:unsound-test-set}. At run-time this is translated to $\xperms = \set{ \pair{\ell}{\ttt{value}} }$ where $\ttt{c} \mapsto \ell$. Then all permissions \textit{except} $\pair{\ell}{\ttt{value}}$ are passed to \ttt{set}. Thus the run-time check for $\ttt{acc(c.value)}$ at line \ref{ln:unsound-set-assign} cannot be asserted.

This addresses the intuitive and semantic problems described above. The isorecursive instance of \ttt{imprecise} referenced in the pre-condition of \ttt{set} \textit{should not} represent access to $\ttt{c.value}$ since it was folded before \ttt{c} was allocated. Under this interpretation we would expect a failure at line \ref{ln:unsound-set-assign}, since \ttt{set} does not require the necessary permissions. This also matches the semantics of \ttt{?}, as defined in \cite{wise2020gradual}, since the predicate instance folded at line \ref{ln:unsound-fold} cannot consistently imply access to the heap location allocated at line \ref{ln:unsound-alloc}.


\subsection{Implementation}

There are important implementation challenges that must be addressed before this change can be implemented in \gco \citep{divincenzo2022gradual}. Currently, \gco constructs sets of permissions at run-time---before calling a method, for example---by recursively unfolding the neccessary specification and collecting all permissions. However, this method cannot be used to create the exclusion frame, since these permissions are not necessarily represented by a specification. But we expect that a translation algorithm can be developed which generates the source code necessary to compute the exclusion frame at run time. This is similar to the existing translation algorithm described by \citet{divincenzo2022gradual}, which translates symbolic run-time checks into source code that implements the desired assertion.

Also, note that we calculate the exclusion frame using information from symbolic execution of a particular statement. In other words, if method \ttt{m} calls $\ttt{m}'$, we can calculate the exclusion frame necessary for calling $\ttt{m}'$ without considering the exclusion frame used to call \ttt{m}. This implies that exclusion frames can be dropped when entering a completely precise method, and then instantiated again when a precise method calls an imprecise method. This is similar to how \gco does not pass permission sets to precise methods, but reconstructs the permissions when a precise method calls an imprecise methods. Applying this technique to exclusion frames, as described, would ensure that exclusion frames do not affect the run-time performance of methods that are specified with completely precise specifications.



\section{Future work}\label{sec:future-work}

There are many possible directions in which this work can be extended.
%
%
We have not yet proven the gradual guarantees for gradual verification, as formalized in \citet{wise2020gradual}. These guarantees formalize the notion that, given a valid program, gradual specifications may be used in place of all static specifications without introducing errors (both during verification and at run time). This ensures that any errors do not arise from imprecision, but rather from an invalid program or specification, or (for precise specifications) from incompleteness of verification. Our formalization appears to satisfy this since, as described in \S\ref{sec:gradual}, we extend the underlying static verification algorithm mainly by adding optimistic capabilities while leaving the bulk of static verification intact. However, we have not completed a formal proof.


Our formalization could also be used to extend gradual verification. Notably, gradual verification has not been implemented for quantified specifications or concurrent programs. Ghost code/parameters (i.e., code only necessary for supporting logical proofs) is also not supported in gradual verification, since the ``ghost'' code could be necessary for run-time checks. Our high-level definition of the gradual verifier could enable further development to support these techniques. Likewise, our formalization does not capture several important concepts in Viper such as domains, fractional permissions, and joining of symbolic execution paths. Formalizing the usage of these techniques in Viper and proving their soundness would provide further assurance of the correctness of Viper and provide a starting point for integrating these techniques with gradual verification. Our formalization provides a basis for formally proving properties of verification techniques (in our case, gradual verification) with a model that closely resembles the implementation (in our case, \gco). Thus modifications to our formal model can be more easily implemented and used, while modifications to the implementation can be reflected in the formal model and proven sound.


\section{Related work}

As mentioned previously, implementations of verification using symbolic execution, such as Viper \cite{schwerhoff16silicon}, \gco \cite{divincenzo2022gradual}, Smallfoot \cite{berdine2006smallfoot}, Chalice \cite{leino2009chalice}, and jStar \cite{distefano2008jstar}, often lack formal soundness proofs. A notable exception is VeriFast \cite{jacobs2011verifast}, which implements verification using symbolic execution. The core of its verifier was proven sound in \citet{jacobs2015featherweight}. This soundness proof utilizes techniques from abstract interpretation, which may simplify proofs of verifiers using symbolic execution. However, VeriFast uses separation logic instead of IDF.

Several previous verifiers using WLP or verification condition generation (VCG) have been directly proven sound \citep{vogels2009machine, vogels2010machine, herms2012certified, smans2012implicit}. Several similar verifiers produce a proof during verification which may be checked to the validate soundness of an individual verification result \citep{filliatre2013why3, parthasarathy2021formally}.

Viper \citep{viper16} and \gco \citep{divincenzo2022gradual} rely on an SMT solver to implement their verification algorithms. While we have proved soundness of our formal model, this soundness is contingent on the soundness of the SMT solver. Other work has extended soundness to include soundness of the entire verification system. Notably, VeriSmall \citep{verismall22}, Diaframe \citep{diaframe22}, and RefinedC \citep{refinedc21} are all encoded in Iris/Coq, making them either foundational or self-verifying.

As described before, soundness of gradual verification based on WLP has been proven in both \citet{wise2020gradual} and \citet{bader2018gradual}. However, \citet{wise2020gradual} depends on dynamically checking all assertions, while \citet{bader2018gradual} does not handle abstract heap predicates.

\section{Conclusion}

The recent implementation of gradual verification in \citet{divincenzo2022gradual} promises a dramatic reduction in the effort required to verify programs. However, this requires confidence in the correctness of their gradual verification system, \gco, as well as its underlying static verification system, Viper. In this work, we formalized symbolic execution in (a subset of) Viper and proved it sound, in addition to formalizing gradual verification in \gco and proving it sound. During this work we found a soundness bug in \gco, which we communicated to \citet{divincenzo2022gradual} along with possible solutions. This illustrates that, while correctness in gradual verifiers can be guaranteed, it should not be assumed without rigorous proof. There are a few interesting directions we could take this work: (1) proving that \gco adheres to the gradual guarantee as formalized by \citet{wise2020gradual}, which is a very important property of gradual verifiers that should be straightforward to prove with our formal system, and (2) using our formalism to explore new directions in gradual verification like quantification or concurrency, and prove systems utilizing them sound. In general, we hope that this work serves as a strong basis for future proof work in static and gradual verification when using symbolic execution.

\begin{acks}
  We thank Jana Dunfield for her helpful feedback.

  This work was supported by the
  \grantsponsor{NSF}{National Science Foundation}{https://www.nsf.gov/}
  under Grant No. \grantnum[https://www.nsf.gov/awardsearch/showAward?AWD_ID=1901033]{NSF}{CCF-1901033}
  and a \grantsponsor{Google}{Google PhD Fellowship}{https://research.google/outreach/phd-fellowship/}.
\end{acks}

\clearpage

\bibliography{references}
\clearpage

\appendix

\renewcommand*\contentsname{Appendices}
\tableofcontents

\addtocontents{toc}{\protect\setcounter{tocdepth}{2}}
\setcounter{theorem}{0}
\let\SavedTheHtheorem=\theHtheorem
\def\theHtheorem{proofs.\SavedTheHtheorem}

\section{Grammar}\label{sec:grammar}


\begin{grammar}
  \firstcase
    {\gprogram}
    {\multiple{\gstruct} ~ \multiple{\gpredicate} ~ \multiple{\gmethod} ~ \gstatement}
    {Program definition}

  \firstcase
    {\gstruct}
    {S~ \sblock{\multiple{\gtype ~ f}}}
    {Struct definition}

  \firstcase
    {\gpredicate}
    {p({\multiple{\gtype ~\gvar}}) = \gform}
    {Predicate definition}

  \firstcase
    {\gmethod}
    {\smethdef{T}{m}{\multiple{T ~ x}}{\gcontract}{s}}
    {Method definition}
  
  \firstcase
    {\gcontract}
    {\krequires ~ \gform ~ \kensures ~ \gform}
    {Method contract}

  \firstcase
    {\gtype}
    {S \gralt \kint \gralt \kbool \gralt \kchar}
    {Type}

  \firstcase
    {\gstatement}
    {\sseq{\gstatement}{\gstatement}}
    {Statement sequence}
  \otherform
    {\kskip}
    {No-op}
  \otherform
    {x \kassign e}
    {Variable assignment}
  \otherform
    {x.f \kassign e}
    {Field assignment}
  \otherform
    {x \kassign \salloc{S}}
    {Allocation}
  \otherform
    {x \kassign m({\multiple{e}})}
    {Method invocation}
  \otherform
    {\sassert{\gform}}
    {Assertion}
  \otherform
    {\sif{e}{s}{s}}
    {Conditional}
  \otherform
    {\swhile{e}{\gform}{s}}
    {Loop}
  \otherform
    {\sfold{p(\multiple{e})}}
    {Fold predicate}
  \otherform
    {\sunfold{p(\multiple{e})}}
    {Unfold predicate}

  \firstcase
    {\gexpression}
    {l \gralt x \gralt e.f \gralt e \oplus e}
    {Expression}
  \otherform
    {e \kor e \gralt e \kand e \gralt \kneg e}
    {}

  \firstcase
    {\gvar}
    {\kresult \gralt id}
    {Variable}

  \firstcase
    {l}
    {n \gralt c}
    {Value}
  \otherform{\knull \gralt \ktrue \gralt \kfalse}{}

  \firstcase
    {\gform}
    {\simprecise{\phi} \gralt \phi}
    {Gradual formula}

  \firstcase
    {\phi}
    {\phi * \phi \gralt p(\multiple{e}) \gralt e}
    {Precise formula}
    \otherform
    {\sif{e}{\phi}{\phi}}
    {}
  \otherform
    {\kacc(e.f)}
    {}
\end{grammar}

Where $n \in \mathbb{Z}$, $c \in \textsc{Char}$, $id \in \textsc{Identifier}$, $f \in \Field$, $m \in \Method$, $p \in \Predicate$, $\oplus \in \{ +, -, <, >, \le, \ge, = \}$

\begin{definition}\label{def:well-formed-prog}
  A program is \textbf{well-formed} if all the following requirements are satisified:
  \begin{itemize}
    \item It is properly typed.
    \item All loop invariants, method pre-conditions, and method post-conditions are specifications (definition \ref{def:specification}).
    \item The free variables of any method are a subset of its parameters.
    \item The special variable $\kresult$ is not a free variable of any method.
    \item No parameters appear on the left side of a variable assignment.
    \item Formulas in pre-conditions only reference parameters.
    \item Formulas in post-conditions only reference parameters and the special variable $\kresult$.
    \item If a pre-condition is imprecise, the post-condition is also imprecise.
  \end{itemize}
\end{definition}

\section{Run-time Semantics}


\subsection{Definitions}

The rules in the following section reference an ambient program with elements denoted as follows:
\begin{itemize}
  \item Predicates: $p \in \Predicate$
  \item Methods: $m \in \Method$
  \item Structs: $S \in \Struct$
  \item Types: $T \in \Type$
  \item Variables: $x \in \Var$
  \item Field identifiers: $f \in \Field$
  \item Locations (opaque values): $\ell \in \Location$
  \item Literals (integers, characters, booleans, null): $l \in \Literal$
  \item Values: $v \in \Value = \Location \cup \Literal$
  \item Gradual formulas: $\gform \in \GFormula$
  \item Precise formulas: $\phi \in \Formula$
  \item Statements: $s \in \Stmt$
  \item Heap: $\heap : \Location \times \Field \to \Value$
  \item Permissions: $\perms \in \powerset{\Location \times \Field}$
  \item Environment: $\env : \Var \pfunc \Value$
\end{itemize}

The following functions are defined to access elements in the program:
\begin{itemize}
  \item $\fdefault : \Type \to \Value$ -- Gets the default value of the given type ($0$, $\knull$, etc.)
  \item $\fpre : \Method \to \GFormula$ -- Gets the precondition from the declaration of the specified method.
  \item $\fpost : \Method \to \GFormula$ -- Gets the postcondition from the declaration of the specified method.
  \item $\fbody : \Method \to \Stmt$ -- Gets the body from the declaration of the specified method.
  \item $\fparams : \Method \to \multiple{\Var}$ -- Gets the list of parameters from the declaration of the specified predicate.
  \item $\fpred : \Predicate \to \GFormula$ -- Gets the body from the declaration of the specified predicate.
  \item $\fpredparams : \Predicate \to \multiple{\Var}$ -- Gets the list of parameters from the declaration of the specified predicate.
  \item $\fstruct : \Struct \to \multiple{\Field}$ -- Gets the list of fields from the declaration of the specified struct.
\end{itemize}

\subsection{Evaluation}

The relation
$$\eval{\heap}{\env}{e}{v}$$
denotes the evaluation of an expression $e \in \Expr$ to a value $v \in \Value$
where $\heap : \Value \times \Field \to \Value$ represents the heap, and $\env : \Var \pfunc \Value$ represents the local variable environment.

\semantics[EvalLiteral]
  { }
  {\eval{\heap}{\env}{l}{l}}
\semantics[EvalVar]
  { }
  {\eval{\heap}{\env}{x}{\env(x)}}
\semantics[EvalAndA]
  {\eval{\heap}{\env}{e_1}{\kfalse}}
  {\eval{\heap}{\env}{e_1 \kand e_2}{\kfalse}}
\semantics[EvalAndB]
  {\eval{\heap}{\env}{e_1}{\ktrue} \\ \eval{\heap}{\env}{e_2}{v_2}}
  {\eval{\heap}{\env}{e_1 \kand e_2}{v_2}}
\semantics[EvalOrA]
  {\eval{\heap}{\env}{e_1}{\ktrue}}
  {\eval{\heap}{\env}{e_1 \kor e_2}{\ktrue}}
\semantics[EvalOrB]
  {\eval{\heap}{\env}{e_1}{\kfalse} \\ \eval{\heap}{\env}{e_2}{v_2}}
  {\eval{\heap}{\env}{e_1 \kor e_2}{v_2}}
\semantics[EvalOp]
  {\eval{\heap}{\env}{e_1}{v_1} \\ \eval{\heap}{\env}{e_2}{v_2}}
  {\eval{\heap}{\env}{e_1 \oplus e_2}{v_1 \oplus v_2}}
\semantics[EvalNeg]
  {\eval{\heap}{\env}{e}{v}}
  {\eval{\heap}{\env}{\kneg e}{\neg v}}
\semantics[EvalField]
  {\eval{\heap}{\env}{e}{\ell}}
  {\eval{\heap}{\env}{e.f}{\heap(\ell, f)}}

\subsection{Formulas}

$\Formula$ is the set of all $\gpreciseform$ elements in the grammar, while $\GFormula$ is the set of all $\gform$ elements in the grammar.

\begin{definition}\label{def:precise}
  An \textbf{imprecise} formula $\gform$ is any formula in $\GFormula$ of the form $\simprecise{\phi}$ where $\phi \in \Formula$.
  
  Otherwise, a formula $\phi$ is \textbf{precise} and $\phi \in \Formula$.
\end{definition}

\begin{definition}\label{def:precise-formula}
  A formula $\gform$ is \textbf{completely precise} if there is no $\heap, \perms, \env$ such that \textnormal{\refrule{AssertImprecise}} applies at some step in the derivation of $\assertion{\heap}{\perms}{\env}{\gform}$.

  In other words, a completely precise formula is precise and all predicate bodies referenced in its equi-recursive unrolling are also precise.
\end{definition}

\begin{definition}\label{def:specification}
  A formula $\gform$ is a \textbf{specification} if either $\gform$ is imprecise or $\gform$ is precise and self-framed (definition \ref{def:self-framed}).
\end{definition}

\subsection{Footprints}

\begin{definition}\label{def:efoot}
  The \textbf{exact footprint} of a formula $\gform \in \GFormula$, denoted $\efoot{\heap}{\env}{\gform}$, or of an expression $e$, denoted $\efoot{\heap}{\env}{e}$, is the set of permissions that must be accessed when asserting $\gform$ or evaluating $e$.

  By lemmas \ref{lem:efoot-subset-spec} and \ref{lem:efoot-assert}, if $\gform$ is a specification, this set is the lower bound of permissions that satisfy $\gform$.
\end{definition}

The calculation of exact footprints is defined as follows:
\begin{align*}
  \efoot{\heap}{\env}{l} &:= \emptyset \\
  \efoot{\heap}{\env}{x} &:= \emptyset \\
  \efoot{\heap}{\env}{e.f} &:= \efoot{\heap}{\env}{e}; \pair{\ell}{f} &\text{if }\eval{\heap}{\env}{e}{\ell} \\
  \efoot{\heap}{\env}{e_1 \oplus e_2} &:= \efoot{\heap}{\env}{e_1} \cup \efoot{\heap}{\env}{e_2} \\
  \efoot{\heap}{\env}{e_1 \kor e_2} &:= \efoot{\heap}{\env}{e_1} &\text{if }\eval{\heap}{\env}{e_1}{\ktrue} \\
  \efoot{\heap}{\env}{e_1 \kor e_2} &:= \efoot{\heap}{\env}{e_1} \cup \efoot{\heap}{\env}{e_2} &\text{if }\eval{\heap}{\env}{e_1}{\kfalse} \\
  \efoot{\heap}{\env}{e_1 \kand e_2} &:= \efoot{\heap}{\env}{e_1} &\text{if }\eval{\heap}{\env}{e_1}{\kfalse} \\
  \efoot{\heap}{\env}{e_1 \kand e_2} &:= \efoot{\heap}{\env}{e_1} \cup \efoot{\heap}{\env}{e_2} &\text{if }\eval{\heap}{\env}{e_1}{\ktrue} \\
  \efoot{\heap}{\env}{\kneg e} &:= \efoot{\heap}{\env}{e} \\
  \efoot{\heap}{\env}{\simprecise{\phi}} &:= \efoot{\heap}{\env}{\phi} \\
  \efoot{\heap}{\env}{\phi_1 * \phi_2} &:= \efoot{\heap}{\env}{\phi_1} \cup \efoot{\heap}{\env}{\phi_2} \\
  \efoot{\heap}{\env}{p(\multiple{e})} &:= \efoot{\heap}{[\multiple{x \mapsto v}]}{\fpred(p)} ~\cup &\text{if } \multiple{x} = \fpredparams(p) \\
  &\quad\quad \bigcup \multiple{\efoot{\heap}{\env}{e}} &\text{and }\multiple{\eval{\heap}{\env}{e}{v}} \\
  \efoot{\heap}{\env}{\sif{e}{\phi_1}{\phi_2}} &:= \efoot{\heap}{\env}{e} \cup \efoot{\heap}{\env}{\phi_1} &\text{if }\eval{\heap}{\env}{e_1}{\ktrue} \\
  \efoot{\heap}{\env}{\sif{e}{\phi_1}{\phi_2}} &:= \efoot{\heap}{\env}{e} \cup \efoot{\heap}{\env}{\phi_2} &\text{if }\eval{\heap}{\env}{e_1}{\kfalse} \\
  \efoot{\heap}{\env}{\kacc(e.f)} &:= \efoot{\heap}{\env}{e}; \pair{\ell}{f} &\text{if }\eval{\heap}{\env}{e}{\ell}
\end{align*}

\begin{definition}\label{def:footprint}
  The \textbf{maximal footprint} of a formula, denoted $\foot{\heap}{\perms}{\env}{\gform}$, is the set of all permissions that $\gform$ may represent in the context of a heap $\heap$, permission set $\perms$, and variable environment $\env$.

  The footprint of a completely precise formula (definition \ref{def:precise-formula}) is its exact footprint, while the footprint of a formula which is not completely precise is the current set of permissions.

  \begin{equation*}
    \foot{\heap}{\perms}{\env}{\gform} := \begin{cases}
      \efoot{\heap}{\env}{\gform} & \text{if $\gform$ is completely precise} \\
      \perms & \text{otherwise}
    \end{cases}
  \end{equation*}
\end{definition}

\subsection{Framing}\label{sec:framing}

The relation $\frm{\heap}{\perms}{\env}{e}$ denotes that $e \in \Expr$ is framed by the permissions contained in $\perms \in \powerset{\Perm}$.

\semantics[FrameLiteral]
  { }
  {\frm{\heap}{\perms}{\env}{l}}
\semantics[FrameVar]
  { }
  {\frm{\heap}{\perms}{\env}{x}}
\semantics[FrameField]
  {
    \frm{\heap}{\perms}{\env}{e} \\
    \assertion{\heap}{\perms}{\env}{\kacc(e.f)}
  }
  {
    \frm{\heap}{\perms}{\env}{e.f}
  }
\semantics[FrameOp]
  {
    \frm{\heap}{\perms}{\env}{e_1} \\
    \frm{\heap}{\perms}{\env}{e_2}
  }
  {
    \frm{\heap}{\perms}{\env}{e_1 \oplus e_2}
  }
\semantics[FrameOrA]
  {
    \eval{\heap}{\env}{e_1}{\ktrue} \\
    \frm{\heap}{\perms}{\env}{e_1}
  }
  {
    \frm{\heap}{\perms}{\env}{e_1 \kor e_2}
  }
\semantics[FrameOrB]
  {
    \eval{\heap}{\env}{e_1}{\kfalse} \\
    \frm{\heap}{\perms}{\env}{e_1} \\
    \frm{\heap}{\perms}{\env}{e_2}
  }
  {
    \frm{\heap}{\perms}{\env}{e_1 \kor e_2}
  }
\semantics[FrameAndA]
  {
    \eval{\heap}{\env}{e_1}{\kfalse} \\
    \frm{\heap}{\perms}{\env}{e_1}
  }
  {
    \frm{\heap}{\perms}{\env}{e_1 \kand e_2}
  }
\semantics[FrameAndB]
  {
    \eval{\heap}{\env}{e_1}{\ktrue} \\
    \frm{\heap}{\perms}{\env}{e_1} \\
    \frm{\heap}{\perms}{\env}{e_2}
  }
  {
    \frm{\heap}{\perms}{\env}{e_1 \kand e_2}
  }
\semantics[FrameNeg]
  {\frm{\heap}{\perms}{\env}{e}}
  {\frm{\heap}{\perms}{\env}{\kneg e}}

The relation $\ifrm{\heap}{\perms}{\env}{\phi}$ denotes that $\phi \in \Formula$ is framed by the permissions in $\perms \in \powerset{\Perm}$ using an iso-recursive interpretation of predicates (i.e., without unrolling predicate instances).

\semantics[IFrameExpression]
  {
    \frm{\heap}{\perms}{\env}{e}
  }
  {
    \ifrm{\heap}{\perms}{\env}{e}
  }
\semantics[IFrameConjunction]
  {
    \ifrm{\heap}{\perms}{\env}{\phi_1} \\
    \ifrm{\heap}{\perms}{\env}{\phi_2}
  }
  {
    \ifrm{\heap}{\perms}{\env}{\phi_1 * \phi_2}
  }
\semantics[IFramePredicate]
  {
    \multiple{\frm{\heap}{\perms}{\env}{e}}
  }
  {
    \ifrm{\heap}{\perms}{\env}{p(\multiple{e})}
  }
\semantics[IFrameConditionalA]
  {
    \eval{\heap}{\env}{e}{\ktrue} \\
    \frm{\heap}{\perms}{\env}{e} \\
    \ifrm{\heap}{\perms}{\env}{\phi_1}
  }
  {
    \ifrm{\heap}{\perms}{\env}{\sif{e}{\phi_1}{\phi_2}}
  }
\semantics[IFrameConditionalB]
  {
    \eval{\heap}{\env}{e}{\kfalse} \\
    \frm{\heap}{\perms}{\env}{e} \\
    \ifrm{\heap}{\perms}{\env}{\phi_2}
  }
  {
    \ifrm{\heap}{\perms}{\env}{\sif{e}{\phi_1}{\phi_2}}
  }
\semantics[IFrameAcc]
  {
    \frm{\heap}{\perms}{\env}{e}
  }
  {
    \ifrm{\heap}{\perms}{\env}{\kacc(e.f)}
  }

Define the relation $\efrm{\heap}{\perms}{\env}{\phi}$  denotes that $\phi \in \Formula$ is framed by the permissions in $\perms \in \powerset{\Perm}$ using an equi-recursive interpretation of predicates (i.e., unrolling predicate instances).

\semantics[EFrameExpression]
  {
    \frm{\heap}{\perms}{\env}{e}
  }
  {
    \efrm{\heap}{\perms}{\env}{e}
  }
\semantics[EFrameConjunction]
  {
    \efrm{\heap}{\perms}{\env}{\phi_1} \\
    \efrm{\heap}{\perms}{\env}{\phi_2}
  }
  {
    \efrm{\heap}{\perms}{\env}{\phi_1 * \phi_2}
  }
\semantics[EFramePredicate]
  {
    \multiple{\frm{\heap}{\perms}{\env}{e}} \\
    \multiple{\eval{\heap}{\env}{e}{v}} \\
    \multiple{x} = \fpredparams(p) \\
    \efrm{\heap}{\perms}{[\multiple{x \mapsto v}]}{\fpred(p)}
  }
  {
    \efrm{\heap}{\perms}{\env}{p(\multiple{e})}
  }
\semantics[EFrameConditionalA]
  {
    \eval{\heap}{\env}{e}{\ktrue} \\
    \frm{\heap}{\perms}{\env}{e} \\
    \efrm{\heap}{\perms}{\env}{\phi_1}
  }
  {
    \efrm{\heap}{\perms}{\env}{\sif{e}{\phi_1}{\phi_2}}
  }
\semantics[EFrameConditionalB]
  {
    \eval{\heap}{\env}{e}{\kfalse} \\
    \frm{\heap}{\perms}{\env}{e} \\
    \efrm{\heap}{\perms}{\env}{\phi_2}
  }
  {
    \efrm{\heap}{\perms}{\env}{\sif{e}{\phi_1}{\phi_2}}
  }
\semantics[EFrameAcc]
  {
    \frm{\heap}{\perms}{\env}{e}
  }
  {
    \efrm{\heap}{\perms}{\env}{\kacc(e.f)}
  }

\begin{definition}\label{def:self-framed}
  A \textbf{self-framed} formula is a precise formula $\phi \in \Formula$ such that for all $\heap, \perms, \env$,
  $$\assertion{\heap}{\perms}{\env}{\phi} \implies \ifrm{\heap}{\perms}{\env}{\phi}.$$
\end{definition}

\subsection{Assertions}

The relation $\assertion{\heap}{\perms}{\env}{\gform}$ denotes the validity of $\gform \in \GFormula$ for the state represented by $\triple{\heap}{\perms}{\env}$.

\semantics[AssertImprecise]
  {
    \assertion{\heap}{\perms}{\env}{\phi} \\
    \efrm{\heap}{\perms}{\env}{\phi}
  }
  {\assertion{\heap}{\perms}{\env}{\simprecise{\phi}}}
\semantics[AssertValue]
  {
    \eval{\heap}{\env}{e}{\ktrue}
  }
  {
    \assertion{\heap}{\perms}{\env}{\gexpression}
  }
\semantics[AssertIfA]
  {
    \eval{\heap}{\env}{e}{\ktrue} \\
    \assertion{\heap}{\perms}{\env}{\phi_1}
  }
  {
    \assertion{\heap}{\perms}{\env}{\sif{e}{\phi_1}{\phi_2}}
  }
\semantics[AssertIfB]
  {
    \eval{\heap}{\env}{e}{\kfalse} \\
    \assertion{\heap}{\perms}{\env}{\phi_2}
  }
  {
    \assertion{\heap}{\perms}{\env}{\sif{e}{\phi_1}{\phi_2}}
  }
\semantics[AssertAcc]
  {
    \eval{\heap}{\env}{e}{\ell} \\
    \pair{\ell}{f} \in \perms
  }
  {
    \assertion{\heap}{\perms}{\env}{\kacc(e.f)}
  }
\semantics[AssertConjunction]
  {
    \assertion{\heap}{\perms_1}{\env}{\phi_1} \\
    \assertion{\heap}{\perms_2}{\env}{\phi_2} \\
    \perms_1 \cup \perms_2 \subseteq \perms \\
    \perms_1 \cap \perms_2 = \emptyset
  }
  {
    \assertion{\heap}{\perms}{\env}{\phi_1 * \phi_2}
  }
\semantics[AssertPredicate]
  {
    \multiple{x} = \fpredparams(p) \\
    \multiple{\eval{\heap}{\env}{e}{v}} \\
    \assertion{\heap}{\perms}{[\multiple{x \mapsto v}]}{\fpred(p)}
  }
  {
    \assertion{\heap}{\perms}{\env}{p(\multiple{e})}
  }

\subsection{Execution}\label{sec:exec-rules}

\begin{definition}
  A \textbf{stack} $\stack$ is a list of the form
  $$\triple{\perms_n}{\env_n}{s_n} \cdot \ldots \cdot \triple{\perms_1}{\env_1}{s_1} \cdot \nilsym$$
  where $n\ge 1$, $\perms_n, \cdots, \perms_1$ are permission sets, $\env_n, \cdots, \env_1$ are variable environments, and $s_n, \cdots, s_1$ are statements.

  $\perms(\stack)$, $\env(\stack)$, and $s(\stack)$ may be used to denote the values $\perms_n$, $\env_n$, and $s_n$, respectively.
\end{definition}

\begin{definition}
  An \textbf{exclusion frame} $\xperms$ a set of permissions that may not be transferred to a callee stack frame. This is necessary to ensure that the permissions represented by the imprecise specifications of a callee cannot overlap with some predicate instance that is owned by the caller.
\end{definition}

Small-step execution is denoted by the judgement
$$\dexec{\heap}{\stack}{\xperms}{\heap'}{\stack'}$$
for stacks $\stack, \stack'$, heap $\heap$, and exclusion frame $\xperms$.

\semantics[ExecSeq]
  { }
  {
    \dexec
      {\heap}{\triple{\perms}{\env}{\sseq{\kskip}{s}} \cdot \stack}
      {\xperms}
      {\heap}{\triple{\perms}{\env}{s} \cdot \stack}
  }
\semantics[ExecAssign]
  {
    \eval{\heap}{\env}{e}{v} \\
    \frm{\heap}{\perms}{\env}{e}
  }
  {
    \dexec
      {\heap}{\triple{\perms}{\env}{\sseq{x = e}{s}}  \cdot \stack}
      {\xperms}
      {\heap}{\triple{\perms}{\env[x \mapsto v]}{s} \cdot \stack}
  }
\semantics[ExecAssignField]
  {
    \eval{\heap}{\env}{x}{\ell} \\
    \eval{\heap}{\env}{e}{v} \\
    \assertion{\heap}{\perms}{\env}{\kacc(x.f)} \\
    \frm{\heap}{\perms}{\env}{e} \\
    \heap' = \heap[\pair{\ell}{f} \mapsto v]
  }
  {
    \dexec
      {\heap}{\triple{\perms}{\env}{\sseq{x.f = e}{s}} \cdot \stack}
      {\xperms}
      {\heap'}{\triple{\perms}{\env}{s} \cdot \stack}
  }
\semantics[ExecAlloc]
  {
    \ell = \ffresh \\
    \multiple{T ~f} = \fstruct(S) \\
    \heap' = \heap[\multiple{\pair{\ell}{f} \mapsto \fdefault(T)}] \\
    \perms' = \perms \cup \set{\multiple{\pair{\ell}{f}}}
  }
  {
    \dexec
      {\heap}{\triple{\perms}{\env}{\sseq{x = \salloc{S}}{s}} \cdot \stack}
      {\xperms}
      {\heap'}{\triple{\perms'}{\env[x \mapsto \ell]}{s} \cdot \stack}
  }
\semantics[ExecCallEnter]
  {
    \multiple{x} = \fparams(m) \\
    \multiple{\eval{\heap}{\env}{e}{v}} \\
    \multiple{\frm{\heap}{\perms}{\env}{e}} \\
    \env' = [\multiple{x \mapsto v}] \\
    \assertion{\heap}{\perms \setminus \xperms}{\env'}{\fpre(m)} \\
    \perms' = \foot{\heap}{\perms \setminus \xperms}{\env'}{\fpre(m)}
  }
  {
    \pair
      {\heap}
      {\triple{\perms}{\env}{\sseq{y \kassign m(\multiple{e})}{s}} \cdot \stack},
    \,\xperms
    \\\\ \to \\\\
    \pair
      {\heap}
      {
        \triple{\perms'}{\env'}{\sseq{\fbody(m)}{\kskip}} \cdot
        \triple{\perms \setminus \perms'}{\env}{\sseq{y \kassign m(\multiple{e})}{s}} \cdot \stack
      }
  }
\semantics[ExecCallExit]
  {
    \assertion{\heap}{\perms'}{\env'}{\fpost(m)} \\
    \env'' = \env[y \mapsto \env'(\kresult)] \\
    \perms'' = \perms \cup \foot{\heap}{\perms'}{\env'}{\fpost(m)}
  }
  {
    \dexec
      {\heap}
      {
        \triple{\perms'}{\env'}{\kskip} \cdot
        \triple{\perms}{\env}{\sseq{y \kassign m(\multiple{e})}{s}} \cdot
        \stack
      }
      {\xperms}
      {\heap}
      {\triple{\perms''}{\env''}{s} \cdot \stack}
  }
\semantics[ExecAssert]
  {
    \assertion{\heap}{\perms}{\env}{\simprecise{\phi}}
  }
  {
    \dexec
      {\heap}{\triple{\perms}{\env}{\sseq{\sassert{\gform}}{s}} \cdot \stack}
      {\xperms}
      {\heap}{\triple{\perms}{\env}{s} \cdot \stack}
  }
\semantics[ExecIfA]
  {
    \eval{\heap}{\env}{e}{\ktrue} \\
    \frm{\heap}{\perms}{\env}{e}
  }
  {
    \dexec
      {\heap}{\triple{\heap}{\env}{\sseq{\sif{e}{s_1}{s_2}}{s}} \cdot \stack}
      {\xperms}
      {\heap}{\triple{\heap}{\env}{\sseq{s_1}{s}} \cdot \stack}
  }
\semantics[ExecIfB]
  {
    \eval{\heap}{\env}{e}{\kfalse} \\
    \frm{\heap}{\perms}{\env}{e}
  }
  {
    \dexec
      {\heap}{\triple{\heap}{\env}{\sseq{\sif{e}{s_1}{s_2}}{s}} \cdot \stack}
      {\xperms}
      {\heap}{\triple{\heap}{\env}{\sseq{s_2}{s}} \cdot \stack}
  }
\semantics[ExecWhileEnter]
  {
    \eval{\heap}{\env}{e}{\ktrue} \\
    \frm{\heap}{\perms}{\env}{e} \\
    \assertion{\heap}{\perms \setminus \xperms}{\env}{\gform} \\
    \perms' = \foot{\heap}{\perms \setminus \xperms}{\env}{\gform}
  }
  {
    \pair
      {\heap}
      {\triple{\perms}{\env}{\sseq{\swhile{e}{\gform}{s'}}{s}} \cdot \stack},
    \,\xperms
    \\\\ \to \\\\
    \pair
      {\heap}
      {\triple{\perms'}{\env}{\sseq{s'}{\kskip}} \cdot \triple{\perms \setminus \perms'}{\env}{\sseq{\swhile{e}{\gform}{s'}}{s}} \cdot \stack}
  }
\semantics[ExecWhileSkip]
  {
    \eval{\heap}{\env}{e}{\kfalse} \\
    \frm{\heap}{\perms}{\env}{e} \\
    \assertion{\heap}{\perms \setminus \xperms}{\env}{\gform}
  }
  {
    \dexec
      {\heap}{\triple{\perms}{\env}{\sseq{\swhile{e}{\gform}{s'}}{s}} \cdot \stack}
      {\xperms}
      {\heap}{\triple{\perms}{\env}{s} \cdot \stack}
  }
\semantics[ExecWhileFinish]
  {
    \assertion{\heap}{\perms'}{\env'}{\gform} \\
    \perms'' = \perms \cup \foot{\heap}{\perms'}{\env'}{\gform}
  }
  {
    \pair
      {\heap}
      {
        \triple{\perms'}{\env'}{\kskip} \cdot
        \triple{\perms}{\env}{\sseq{\swhile{e}{\gform}{s'}}{s}} \cdot
        \stack
      },
    \,\xperms
    \\\\ \to \\\\
    \pair
      {\heap}
      {\triple{\perms''}{\env'}{\sseq{\swhile{e}{\gform}{s'}}{s}} \cdot \stack}
  }
\semantics[ExecFold]
  { }
  {
    \dexec
      {\heap}
      {\triple{\perms}{\env}{\sseq{\sfold{p(\multiple{e})}}{s}} \cdot S}
      {\xperms}
      {\heap}
      {\triple{\perms}{\env}{s} \cdot S}
  }
\semantics[ExecUnfold]
  { }
  {
    \dexec
      {\heap}
      {\triple{\perms}{\env}{\sseq{\sunfold{p(\multiple{e})}}{s}} \cdot S}
      {\xperms}
      {\heap}
      {\triple{\perms}{\env}{s} \cdot S}
  }

\subsection{Reachable transitions}\label{sec:dynamic-reachability}

\begin{definition}
  An \textbf{execution state} $\Gamma$ is either one of the abstract symbols $\finalsym$ or $\initsym$, or a pair $\pair{\heap}{\stack}$ of a heap $\heap$ and a stack $\stack$.
\end{definition}

\begin{definition}
  An execution state $\Gamma$ is \textbf{well-formed} if $\Gamma$ is either one of the abstract symbols $\initsym$ or $\finalsym$, or of the form $\pair{\heap}{\triple{\perms_n}{\env_n}{s_n} \cdot \ldots \cdot \triple{\perms_1}{\env_1}{s_1} \cdot \nilsym}$ and
  \begin{itemize}
    \item $\perms_i \cap \perms_j = \emptyset$ for all $1 \le i < j \le n$.
    \item $s_n = \sseq{s}{\kskip}$ for some statement $s$ or $s_n = \kskip$.
    \item For all $1 \le i < n$, $s_i = \sseq{s}{\sseq{s'}{\kskip}}$ or $s_i  = \sseq{s}{\kskip}$ for some statements $s$ and $s'$ where $s$ is of the form $m(\multiple{e})$ for some $m$, $\multiple{e}$ or $\swhile{e}{\gform}{s_{body}}$ for some $e$, $\gform$, $s_{body}$.
  \end{itemize}
\end{definition}

Examining the execution rules shows that well-formedness of states is preserved by the execution rules defined above.

A dynamic execution transition $\Gamma \to \Gamma'$ is reachable under a program $\prog$, using the exclusion frame $\xperms$, when the following judgement holds:
$$\dtrans{\prog}{\xperms}{\Gamma}{\Gamma'}$$
\semantics[ExecInit]
{ }
{
  \dtrans{\quadruple{s}{M}{P}{S}}{\_}{\initsym}{\pair{\emptyset}{\triple{\emptyset}{\emptyset}{s} \cdot \nilsym}}
}
\semantics[ExecStep]
{
  \dtrans{\prog}{\_}{\_}{\pair{\heap}{\stack}} \\
  \dexec{\heap}{\stack}{\xperms}{\heap'}{\stack'}
}
{
  \dtrans{\prog}{\xperms}{\pair{\heap}{\stack}}{\pair{\heap'}{\stack'}}
}
\semantics[ExecFinal]
{ }
{
  \dtrans{\prog}{\_}{\pair{\_}{\triple{\_}{\_}{\kskip} \cdot \nilsym}}{\finalsym}
}
\begin{definition}\label{def:dstate-reachable}
  An execution state $\dstate$ is \textbf{reachable} from program $\prog$ if $\dstate = \initsym$ or $\dtrans{\prog}{\_}{\_}{\dstate}$.
\end{definition}

\section{Symbolic Execution}


\subsection{Definitions}

\begin{definition}
  A \textbf{symbolic value} $\nu \in \SValue$ is an abstract value that represents an unknown character, boolean, integer, or location value. We leave the concrete type of $\SValue$ undefined, but assume that an infinite number of distinct new values can be produced by the $\ffresh$ function.
\end{definition}

\begin{definition}
  A \textbf{symbolic expression} $t \in \SExpr$ is a symbolic value or symbolic expressions combined using operators. Note that the binary operators $\oplus$ are the same as in \S \ref{sec:grammar}.
    $$t ::= \nu \gralt l \gralt \kneg t \gralt t_1 \kand t_2 \gralt t_1 \kor t_2 \gralt t_1 \oplus t_2$$
\end{definition}

\begin{definition}
  A \textbf{path condition} $\pc \in \SExpr$ is a symbolic expression consisting of conjuncts added at every branch point during a particular symbolic execution path.
\end{definition}

\begin{definition}
  An \textbf{imprecise flag} $\imp \in \set{ \top, \bot }$ is a flag that indicates whether a state is imprecise.
\end{definition}

\begin{definition}
  A \textbf{symbolic evironment} $\senv : \Var \pfunc \SExpr$ is a partial function mapping variable names to symbolic expressions.
\end{definition}

\begin{definition}
  A \textbf{field chunk} $\triple{f}{t}{t'} \in \SField$ denotes the mapping of the field $f$ of instance $t$ to the value $t'$.
\end{definition}

\begin{definition}
  A \textbf{predicate chunk} $\pair{p}{\multiple{t}} \in \SPredicate$ represents an isorecursive predicate $p$ with symbolically-evaluated arguments $\multiple{t}$.
\end{definition}

\begin{definition}
  A \textbf{heap chunk} $h \in \SField \cup \SPredicate$ is either a field chunk or a predicate chunk.
\end{definition}

\begin{definition}
  A \textbf{precise symbolic heap} (usually abbreviated as precise heap) $\sheap \in \powerset{\SField \cup \SPredicate}$ is a set of heap chunks where all heap chunks must occupy distinct heap locations at run time.
\end{definition}

\begin{definition}
  An \textbf{optimistic symbolic heap} (usually abbreviated as optimistic heap) $\oheap \in \powerset{\SField}$ is a set of field chunks where distinct chunks may coincide on the heap at run time (i.e. object references that are distinct symbolic expressions may be represent the same object value at run time).
\end{definition}

\begin{definition}
  A \textbf{symbolic permission} $\sperm \in \SPerm$ represents a particular heap location or predicate instance.
  $$\sperm ::= \pair{f}{t} \gralt \pair{p}{\multiple{t}}$$
  A set of symbolic permissions is denoted by $\sperms$.
\end{definition}

\begin{definition}
  A \textbf{run-time check} $r \in \SCheck$ is a symbolic value that must be asserted at run time, a symbolic permission whose access must be asserted at run time, a set of symbolic permissions whose disjointness must be asserted at run time, or an unsatisfiable check.
  $$r ::= t \gralt \sperm \gralt \fsep(\sperms_1, \sperms_2) \gralt \bot$$
  A set of run-time checks is denoted by $\scheck$.
\end{definition}

\begin{definition}
  A \textbf{symbolic state} $\sstate \in \SState$ consists of an imprecise flag, a path condition, a precise heap, an optimistic heap, and a symbolic environment.
  $$\sstate ::= \quintuple{\imp}{\pc}{\sheap}{\oheap}{\senv}$$
  $\imp(\sstate)$, $\pc(\sstate)$, $\sheap(\sstate)$, $\oheap(\sstate)$, and $\senv(\sstate)$ each denote a reference to the respective component of $\sstate$.
\end{definition}

\subsection{Valuations}\label{sec:valuations}

In order to prove soundness with respect to the dynamic semantics, we must first define a correspondence between the two representations.

\begin{definition}
  A \textbf{valuation} $V : \SValue \pfunc \Value$ is a partial function mapping symbolic values to concrete values.
\end{definition}

A base valuation $V : \SValue \pfunc \Value$ is implicitly extended to $V : \SExpr \pfunc \Value$ for all possible symbolic expressions composed of literals and symbolic values in the domain of $V$:
\begin{align*}
  V(l) &:= l \\
  V(t_1 \kadd t_2) &:= V(t_1) + V(t_2) \\
  V(t_1 \ksub t_2) &:= V(t_1) - V(t_2) \\
  V(t_1 \kmul t_2) &:= V(t_1) \cdot V(t_2) \\
  V(t_1 \kdiv t_2) &:= \frac{V(t_1)}{V(t_2)} \\
  V(t_1 \keq t_2) &:= \begin{cases}
    \ktrue &\text{if}~ V(t_1) = V(t_2) \\
    \kfalse &\text{otherwise}
  \end{cases} \\
  V(\kneg t) &:= \begin{cases}
    \ktrue &\text{if}~ V(t_1) = \kfalse \\
    \kfalse &\text{otherwise}
  \end{cases} \\
  V(t_1 \kor t_2) &:= \begin{cases}
    \ktrue &\text{if}~ V(t_1) = \ktrue ~\text{or}~ V(t_2) = \ktrue \\
    \kfalse &\text{otherwise}
  \end{cases} \\
  V(t_1 \kand t_2) &:= \begin{cases}
    \ktrue &\text{if}~ V(t_1) = \ktrue ~\text{and}~ V(t_2) = \ktrue \\
    \kfalse &\text{otherwise}
  \end{cases}
\end{align*}

\begin{definition}\label{def:implication}
  A symbolic expression $t_1$ \textbf{implies} another symbolic expression $t_2$ (denoted $t_1 \implies t_2$) if, for all valuations for which $V(t_1)$ and $V(t_2)$ are defined, $(V(t_1) = \ktrue) \implies (V(t_2) = \ktrue)$.
\end{definition}

\subsection{Footprints}

$\vfoot{V}{\heap}{\sperm}$ and $\vfoot{V}{\heap}{\sperms}$ denote the footprint (i.e. set of permissions) necessary to satisfy the given symbolic permission or symbolic permission set, respectively, given some heap $\heap$.
\begin{align*}
  \vfoot{V}{\heap}{\pair{f}{t}} &:= \pair{V(t)}{f} \\
  \vfoot{V}{\heap}{\pair{p}{\multiple{t}}} &:= \efoot{\heap}{[\multiple{x \mapsto V(t)}]}{\fpred(p)} \\
  \vfoot{V}{\heap}{\sperms} &:= \bigcup_{\sperm \in \sperms} \vfoot{V}{\heap}{\sperm}
\end{align*}

\subsection{Correspondence}

The relations $\simheap{V}{\sheap}{\heap}{\perms}$, $\simheap{V}{\oheap}{\heap}{\perms}$, $\simenv{V}{\senv}{\env}$, and $\simstate{V}{\sstate}{\heap}{\perms}{\env}$ denote correspondence between symbolic states and run-time states:
\begin{align}
  \begin{split}\label{eq:sheap-correspondence}
    \simheap{V}{\sheap}{\heap}{\perms} \iffdef
      & (\universal{\triple{f}{t}{t'} \in \sheap}{\heap(V(t), f) = V(t')}) ~\wedge \\
      & (\universal{\triple{f}{t}{t'} \in \sheap}{\pair{V(t)}{f} \in \perms}) ~\wedge \\
      & (\universal{\pair{p}{\tlist} \in \sheap}{\assertion{\heap}{\perms}{[\multiple{x \mapsto V(t)}]}{\fpred(p)}}) ~\wedge \\
      & (\universal{h_1, h_2 \in \sheap^2}{h_1 \ne h_2 \implies \vfoot{V}{\heap}{h_1} \cap \vfoot{V}{\heap}{h_2} = \emptyset})
  \end{split} \\
  \begin{split}
    \simheap{V}{\oheap}{\heap}{\perms} \iffdef
      & (\universal{\triple{f}{t}{t'} \in \oheap}{\heap(V(t), f) = V(t')}) ~\wedge \\
      & (\universal{\triple{f}{t}{t'} \in \oheap}{\pair{V(t)}{f} \in \perms})
  \end{split} \label{eq:oheap-correspondence} \\
  \simenv{V}{\senv}{\env} \iffdef &\universal{x \in \dom(\senv)}{\env(x) = V(t)} \label{eq:senv-correspondence} \\
  \begin{split}\label{eq:sstate-correspondence}
    \simstate{V}{\sstate}{\heap}{\perms}{\env} \iffdef
      & (\simheap{V}{\sheap(\sstate)}{\heap}{\perms}) ~\wedge \\
      & (\simheap{V}{\oheap(\sstate)}{\heap}{\perms}) ~\wedge \\
      & (\simenv{V}{\senv(\sstate)}{\env}) ~\wedge \\
      & (V(\pc(\sstate)) = \ktrue)
  \end{split}
\end{align}

\subsection{Run-time checks}

The judgement $\rtassert{V}{\heap}{\perms}{r}$ denotes that a symbolic runtime check $r \in \SCheck$ is satisfied at run time by a heap $\heap$ and permission set $\perms$ through a valuation $V$. Note that there is no rule for $\bot$; by design it can never be satisfied.
\semantics[CheckValue]
  {V(t) = \ktrue}
  {\rtassert{V}{\heap}{\perms}{t}}
\semantics[CheckAcc]
  {\pair{V(t)}{f} \in \perms}
  {\rtassert{V}{\heap}{\perms}{\pair{f}{t}}}
\semantics[CheckPred]
  {
    \multiple{x} = \fpredparams(p) \\
    \assertion{\heap}{\perms}{[\multiple{x \mapsto V(t)}]}{\fpred(p)}
  }
  {
    \rtassert{V}{\heap}{\perms}{\pair{p}{\multiple{t}}}
  }
\semantics[CheckSep]
  {
    \vfoot{V}{\heap}{\sperms_1} \cap \vfoot{V}{\heap}{\sperms_2} = \emptyset
  }
  {
    \rtassert{V}{\heap}{\perms}{\fsep(\sperms_1, \sperms_2)}
  }

This judgement is naturally extended for a set of runtime checks $\scheck$:
$$\rtassert{V}{\heap}{\perms}{\scheck} \iffdef \universal{r \in \scheck}{\rtassert{V}{\heap}{\perms}{r}}$$

\subsection{Evaluation}\label{sec:seval-rules}

The judgement
$$\seval{\sstate}{e}{t}{\sstate'}{\scheck}$$
denotes the evaluation of the expression $e \in \Expr$ in the symbolic state $\sstate \in \SState$. It yields the symbolic expression $t \in \SExpr$, a new symbolic state $\sstate'$ which must be satisfied to produce the resulting value, and a set of run-time checks $\scheck \in \powerset{\SCheck}$.

Note that for any given $\sstate$, there may be multiple values of $t, \sstate', \scheck$ for which the relation is satisfied. Therefore, the path condition of $\sstate'$ must be satisfied before assuming that $t$ corresponds to an actual value.

Also note that unsatisfiable paths are not pruned during evaluation. These paths may be pruned by checking the satisfiability of $\pc(\sstate')$.

For a list of expressions $\multiple{e}$, $\multiple{\seval{\sstate}{e}{t}{\sstate'}{\scheck}}$ represents a sequence of judgements
$$\seval{\sstate_0}{e_1}{t_1}{\sstate_1}, \cdots, \seval{\sstate_{n-1}}{e_n}{t_n}{\sstate_n}{\scheck_n}$$
where $\sstate_0 = \sstate$, $e_1, \cdots, e_n = \multiple{e}$, and $\scheck = \scheck_1 \cup \cdots \cup \scheck_n$.

\semantics[SEvalLiteral]
  { }
  {\seval{\sstate}{l}{l}{\sstate}{\emptyset}}
\semantics[SEvalVar]
  { }
  {\seval{\sstate}{x}{\senv(\sstate)(x)}{\sstate}{\emptyset}}
\semantics[SEvalOrA]
  {
    \seval{\sstate}{e_1}{t_1}{\sstate'}{\scheck} \\
    \sstate'' = \sstate'[\pc = \pc(\sstate') \kand t_1]
  }
  {
    \seval{\sstate}{e_1 \kor e_2}{t_1}{\sstate''}{\scheck}
  }
\semantics[SEvalOrB]
  {
    \seval{\sstate}{e_1}{t_1}{\sstate'}{\scheck_1} \\
    \seval{\sstate'[\pc = \pc(\sstate') \kand \kneg t_1]}{e_2}{t_2}{\sstate''}{\scheck_2}
  }
  {
    \seval{\sstate}{e_1 \kor e_2}{t_2}{\sstate''}{\scheck_1 \cup \scheck_2}
  }
\semantics[SEvalAndA]
  {
    \seval{\sstate}{e_1}{t_1}{\sstate'}{\scheck} \\
    \sstate'' = \sstate'[\pc = \pc(\sstate') \kand \kneg t_1]
  }
  {
    \seval{\sstate}{e_1 \kand e_2}{t_1}{\sstate''}{\scheck}
  }
\semantics[SEvalAndB]
  {
    \seval{\sstate}{e_1}{t_1}{\sstate'}{\scheck_1} \\
    \seval{\sstate'[\pc = \pc(\sstate') \kand t_1]}{e_2}{t_2}{\sstate''}{\scheck_2}
  }
  {
    \seval{\sstate}{e_1 \kand e_2}{t_2}{\sstate''}{\scheck_1 \cup \scheck_2}
  }
\semantics[SEvalOp]
  {
    \seval{\sstate}{e_1}{t_1}{\sstate'}{\scheck_1} \\
    \seval{\sstate'}{e_2}{t_2}{\sstate''}{\scheck_2}
  }
  {
    \seval{\sstate}{e_1 \oplus e_2}{t_1 \oplus t_2}{\sstate''}{\scheck_1 \cup \scheck_2}
  }
\semantics[SEvalNeg]
  {
    \seval{\sstate}{e}{t}{\sstate'}{\scheck}
  }
  {
    \seval{\sstate}{\kneg e}{\kneg t}{\sstate'}{\scheck}
  }
\semantics[SEvalField]
  {
    \seval{\sstate}{e}{t_e}{\sstate'}{\scheck} \\
    \pc(\sstate') \implies t_e \keq t_e' \\
    \triple{f}{t_e'}{t} \in \sheap(\sstate')
  }
  {
    \seval{\sstate}{e.f}{t}{\sstate'}{\scheck}
  }
\semantics[SEvalFieldOptimistic]
  {
    \seval{\sstate}{e}{t_e}{\sstate'}{\scheck} \\
    \nexistential{t_e', t}{\triple{f}{t_e'}{t} \in \sheap(\sstate) \wedge \pc(\sstate') \implies t_e' \keq t_e} \\
    \triple{f}{t_e'}{t} \in \oheap(\sstate) \\
    \pc(\sstate') \implies t_e' \keq t_e
  }
  {
    \seval{\sstate}{e.f}{t}{\sstate'}{\scheck}
  }
\semantics[SEvalFieldImprecise]
  {
    \imp(\sstate) \\
    \seval{\sstate}{e}{t_e}{\sstate'}{\scheck} \\
    \nexistential{t_e', t}{\triple{f}{t_e'}{t} \in \sheap(\sstate) \cup \oheap(\sstate) \wedge \pc(\sstate') \implies t_e' \keq t_e} \\
    t = \ffresh \\
    \sstate'' = \sstate'[\oheap = \oheap(\sstate'); \triple{f}{t_e}{t}]
  }
  {
    \seval{\sstate}{e.f}{t}{\sstate''}{\scheck; \pair{t_e}{f}}
  }
\semantics[SEvalFieldFailure]
  {
    \neg \imp(\sstate) \\
    \seval{\sstate}{e}{t_e}{\sstate'}{\scheck} \\
    \nexistential{t_e', t}{\triple{f}{t_e'}{t} \in \sheap(\sstate) \cup \oheap(\sstate) \wedge \pc(\sstate') \implies t_e' \keq t_e} \\
    t = \ffresh
  }
  {
    \seval{\sstate}{e.f}{t}{\sstate'}{\set{\bot}}
  }

\subsection{Deterministic evaluation}

We also define separate evaluation semantics for evaluation when branching is unwanted. This is denoted by the judgement
$$\spceval{\sstate}{e}{t}{\scheck}.$$

This operation mirrors the behavior of \ttt{pc-eval}, and as such, does not modify the symbolic state. Logical operations are not short-circuited as in the previous section; instead, they are formally encoded as conjuncts of a single $\SExpr$.

\semantics[SEvalPCLiteral]
  { }
  {\spceval{\sstate}{l}{l}{\emptyset}}
\semantics[SEvalPCVar]
  { }
  {\spceval{\sstate}{x}{\senv(\sstate)(x)}{\emptyset}}
\semantics[SEvalPCOr]
  {
    \spceval{\sstate}{e_1}{t_1}{\scheck_1} \\
    \spceval{\sstate}{e_2}{t_2}{\scheck_2}
  }
  {
    \spceval{\sstate}{e_1 \kor e_2}{t_1 \kor t_2}{\scheck_1 \cup \scheck_2}
  }
\semantics[SEvalPCAnd]
  {
    \spceval{\sstate}{e_1}{t_1}{\scheck_1} \\
    \spceval{\sstate}{e_2}{t_2}{\scheck_2}
  }
  {
    \spceval{\sstate}{e_1 \kand e_2}{t_1 \kand t_2}{\scheck_1 \cup \scheck_2}
  }
\semantics[SEvalPCOp]
  {
    \spceval{\sstate}{e_1}{t_1}{\scheck_1} \\
    \spceval{\sstate}{e_2}{t_2}{\scheck_2}
  }
  {
    \spceval{\sstate}{e_1 \oplus e_2}{t_1 \oplus t_2}{\scheck_1 \cup \scheck_2}
  }
\semantics[SEvalPCNeg]
  {
    \spceval{\sstate}{e}{t}{\scheck}
  }
  {
    \spceval{\sstate}{\kneg e}{\kneg t}{\scheck}
  }
\semantics[SEvalPCField]
  {
    \spceval{\sstate}{e}{t_e}{\scheck} \\
    \pc(\sstate) \implies t_e' \keq t_e \\
    \triple{f}{t_e'}{t} \in \sheap(\sstate)
  }
  {
    \spceval{\sstate}{e.f}{t}{\scheck}
  }
\semantics[SEvalPCFieldOptimistic]
  {
    \spceval{\sstate}{e}{t_e}{\scheck} \\
    \nexistential{t_e, t}{\triple{f}{t_e'}{t} \in \sheap(\sstate) \wedge \pc(\sstate) \implies t_e' \keq t_e} \\
    \pc(\sstate) \implies t_e' \keq t_e \\
    \triple{f}{t_e'}{t} \in \oheap(\sstate)
  }
  {
    \spceval{\sstate}{e.f}{t}{\scheck}
  }
\semantics[SEvalPCFieldImprecise]
  {
    \imp(\sstate) \\
    \spceval{\sstate}{e}{t_e}{\scheck} \\\\
    \nexistential{t_e, t}{\triple{f}{t_e'}{t} \in \sheap(\sstate) \cup \oheap(\sstate) \wedge \pc(\sstate) \implies t_e' \keq t_e} \\
    t = \ffresh
  }
  {
    \spceval{\sstate}{e.f}{t}{\scheck; \pair{t_e}{f}}
  }
\semantics[SEvalPCFieldMissing]
  {
    \neg \imp(\sstate) \\
    \spceval{\sstate}{e}{t_e}{\scheck} \\\\
    \nexistential{t_e, t}{\triple{f}{t_e'}{t} \in \sheap(\sstate) \cup \oheap(\sstate) \wedge \pc(\sstate) \implies t_e' \keq t_e} \\
    t = \ffresh
  }
  {
    \spceval{\sstate}{e.f}{t}{\scheck; \bot}
  }

\subsection{Produce}\label{sec:produce-rules}

The \textbf{produce} operation adds the information contained in a formula $\gform$ to the symbolic state $\sstate$, resulting in a new symbolic state $\sstate'$. This is denoted by the judgement
$$\sproduce{\sstate}{\phi}{\sstate'}.$$

\semantics[SProduceImprecise]
  {\sproduce{\sstate[\imp = \top]}{\phi}{\sstate'}}
  {\sproduce{\sstate}{\simprecise{\phi}}{\sstate'}}
\semantics[SProduceExpr]
  {
    \spceval{\sstate}{e}{t}{\_} \\
    \sstate' = \sstate[\pc = \pc(\sstate) \kand t]
  }
  {
    \sproduce{\sstate}{e}{\sstate'}
  }
\semantics[SProducePredicate]
  {
    \multiple{\spceval{\sstate}{e}{t}{\_}} \\
    \sstate' = \sstate[\sheap = \sheap(\sstate); \pair{p}{\tlist}]
  }
  {
    \sproduce{\sstate}{p(\multiple{e})}{\sstate'}
  }
\semantics[SProduceField]
  {
    \spceval{\sstate}{e}{t_e}{\_} \\
    t = \ffresh \\
    \sstate' = \sstate[\sheap = \sheap(\sstate); \triple{f}{t_e}{t}]
  }
  {
    \sproduce{\sstate}{\kacc(e.f)}{\sstate'}
  }
\semantics[SProduceConjunction]
  {
    \sproduce{\sstate}{\phi_1}{\sstate'} \\
    \sproduce{\sstate'}{\phi_2}{\sstate''}
  }
  {
    \sproduce{\sstate}{\phi_1 * \phi_2}{\sstate''}
  }
\semantics[SProduceIfA]
  {
    \spceval{\sstate}{e}{t}{\_} \\
    \sproduce{\sstate[\pc = \pc(\sstate) \kand t]}{\phi_1}{\sstate'}
  }
  {
    \sproduce{\sstate}{\sif{e}{\phi_1}{\phi_2}}{\sstate'}
  }
\semantics[SProduceIfB]
  {
    \spceval{\sstate}{e}{t}{\_} \\
    \sproduce{\sstate[\pc = \pc(\sstate) \kand \kneg t]}{\phi_2}{\sstate'}
  }
  {
    \sproduce{\sstate}{\sif{e}{\phi_1}{\phi_2}}{\sstate'}
  }

\subsection{Consume}\label{sec:consume-rules}

The \textbf{consume} operation checks whether a formula $\gform$ is established by the symbolic state, collects runtime checks that are minimally sufficient to establish $\gform$, and removes permissions asserted in $\gform$ from the symbolic state. This is denoted by the judgement
$$\sconsume{\sstate}{\sstate_E}{\gform}{\sstate'}{\scheck}{\sperms}$$
where $\sstate$ is the symbolic state containing the currently remaining permissions during consume, and $\sstate_E$ is the symbolic state containing the original permissions which may used for evaluating expressions.

Note that consume does not branch on operations such as $\kand$ that are normally short-circuiting. This is because the evaluation operation $\downarrow$ does not modify the path condition but keeps track of logical operators as symbolic values. Values in specifications must be framed anyway (explicitly or inferred), so we must always have the permissions necessary (whether statically or dynamically checked) for evaluating all branches.

\semantics[SConsumeImprecision]
  {
    \sconsume{\sstate}{\sstate_E[\imp = \top]}{\phi}{\sstate'}{\scheck}{\sperms}
  }
  {
    \sconsume{\sstate}{\sstate_E}{\simprecise{\phi}}{\quintuple{\top}{\pc(\sstate')}{\senv(\sstate')}{\emptyset}{\emptyset}}{\scheck}{\sperms}
  }
\semantics[SConsumeValue]
  {
    \spceval{\sstate_E}{e}{t}{\scheck} \\
    \pc(\sstate) \implies t
  }
  {
    \sconsume{\sstate}{\sstate_E}{e}{\sstate}{\scheck}{\emptyset}
  }
\semantics[SConsumeValueImprecise]
  {
    \imp(\sstate) \\
    \spceval{\sstate_E}{e}{t}{\scheck} \\
    \pc(\sstate) \notimplies t
  }
  {
    \sconsume{\sstate}{\sstate_E}{e}{\sstate[\pc = \pc(\sstate) \kand t]}{\scheck; t}{\emptyset}
  }
\semantics[SConsumeValueFailure]
  {
    \neg \imp(\sstate) \\
    \spceval{\sstate_E}{e}{t}{\scheck} \\
    \pc(\sstate) \notimplies t
  }
  {
    \sconsume{\sstate}{\sstate_E}{e}{\sstate}{\set{\bot}}{\emptyset}
  }
\semantics[SConsumePredicate]
  {
    \multiple{\spceval{\sstate_E}{e}{t}{\scheck}} \\
    \multiple{\pc(\sstate) \implies t \keq t'} \\
    \sheap(\sstate) = \sheap'; \pair{p}{\multiple{t'}}
  }
  {
   \sconsume{\sstate}{\sstate_E}{p(\multiple{e})}{\sstate[\sheap = \sheap', \oheap = \emptyset]}{\bigcup \multiple{\scheck}}{\set{\pair{p}{\multiple{t}}}}
  }
\semantics[SConsumePredicateImprecise]
  {
    \imp(\sstate) \\
    \multiple{\spceval{\sstate_E}{e}{t}{\scheck}} \\
    \nexistential{\pair{p}{\multiple{t'}} \in \sheap(\sstate)}{\bigwedge \multiple{\pc(\sstate) \implies t \keq t'}}
  }
  {
    \sconsume{\sstate}{\sstate_E}{p(\multiple{e})}{\sstate[\heap = \emptyset, \oheap = \emptyset]}{\bigcup \multiple{\scheck}; \pair{p}{\multiple{t}}}{\set{\pair{p}{\multiple{t}}}}
  }
\semantics[SConsumePredicateFailure]
  {
    \neg \imp(\sstate) \\
    \multiple{\spceval{\sstate_E}{e}{t}{\scheck}} \\
    \nexistential{\pair{p}{\multiple{t'}} \in \sheap(\sstate)}{\bigwedge \multiple{\pc(\sstate) \implies t \keq t'}}
  }
  {
    \sconsume{\sstate}{\sstate_E}{p(\multiple{e})}{\sstate}{\set{\bot}}{\set{\pair{p}{\multiple{t}}}}
  }
\semantics[SConsumeAcc]
  {
    \spceval{\sstate_E}{e}{t_e}{\scheck} \\
    \pc(\sstate) \implies t_e' \keq t_e \\
    \triple{f}{t_e'}{t} \in \sheap(\sstate) \\
    \sheap' = \fremfp(\sheap(\sstate), \sstate, t_e, f) \\
    \oheap' = \fremf(\oheap(\sstate), \sstate, t_e, f)
  }
  {
    \sconsume{\sstate}{\sstate_E}{\kacc(e.f)}{\sstate[\sheap = \sheap', \oheap = \oheap']}{\scheck}{\set{\pair{t_e}{f}}}
  }
\semantics[SConsumeAccOptimistic]
  {
    \spceval{\sstate_E}{e}{t_e}{\scheck} \\
    \pc(\sstate) \implies t_e' \keq t_e \\
    \triple{f}{t_e'}{t} \in \sheap(\sstate) \\
    \nexistential{t_e', t}{\triple{f}{t_e}{t} \in \sheap(\sstate) \wedge (\pc(\sstate) \implies t_e' \keq t_e)} \\
    \sheap' = \fremf(\sheap(\sstate), \sstate, t_e, f) \\
    \oheap' = \fremf(\oheap(\sstate), \sstate, t_e, f)
  }
  {
    \sconsume{\sstate}{\sstate_E}{\kacc(e.f)}{\sstate[\sheap = \sheap', \oheap = \oheap']}{\scheck}{\set{\pair{t_e}{f}}}
  }
\semantics[SConsumeAccImprecise]
  {
    \imp(\sstate) \\
    \spceval{\sstate_E}{e}{t_e}{\scheck} \\
    \nexistential{t_e', t}{\triple{f}{t_e}{t} \in \sheap(\sstate) \cup \oheap(\sstate) \wedge (\pc(\sstate) \implies t_e' \keq t_e)} \\
    \sheap' = \fremf(\sheap(\sstate), \sstate, t_e, f) \\
    \oheap' = \fremf(\oheap(\sstate), \sstate, t_e, f)
  }
  {
    \sconsume{\sstate}{\sstate_E}{\kacc(e.f)}{\sstate[\sheap = \sheap', \oheap = \oheap']}{\scheck; \pair{t_e}{f}}{\set{\pair{t_e}{f}}}
  }
\semantics[SConsumeAccFailure]
  {
    \neg \imp(\sstate) \\
    \spceval{\sstate_E}{e}{t_e}{\scheck} \\
    \nexistential{t_e', t}{\triple{f}{t_e}{t} \in \sheap(\sstate) \cup \oheap(\sstate) \wedge (\pc(\sstate) \implies t_e' \keq t_e)}
  }
  {
    \sconsume{\sstate}{\sstate_E}{\kacc(e.f)}{\sstate}{\set{\bot}}{\set{\pair{t_e}{f}}}
  }
\semantics[SConsumeConjunction]
  {
    \sconsume{\sstate}{\sstate_E}{\phi_1}{\sstate'}{\scheck_1}{\sperms_1} \\
    \sconsume{\sstate'}{\sstate_E[\pc = \pc(\sstate')]}{\phi_2}{\sstate''}{\scheck_2}{\sperms_2} \\\\
    (\scheck_1 \cup \scheck_2) \cap \SPerm = \emptyset
  }
  {
    \sconsume{\sstate}{\sstate_E}{\phi_1 * \phi_2}{\sstate''}{\scheck_1 \cup \scheck_2}{\sperms_1 \cup \sperms_2}
  }
\semantics[SConsumeConjunctionImprecise]
  {
    \sconsume{\sstate}{\sstate_E}{\phi_1}{\sstate'}{\scheck_1}{\sperms_1} \\
    \sconsume{\sstate'}{\sstate_E[\pc = \pc(\sstate')]}{\phi_2}{\sstate''}{\scheck_2}{\sperms_2} \\\\
    (\scheck_1 \cup \scheck_2) \cap \SPerm \ne \emptyset
  }
  {
    \sconsume{\sstate}{\sstate_E}{\phi_1 * \phi_2}{\sstate''}{\scheck_1 \cup \scheck_2; \fsep(\sperms_1, \sperms_2)}{\sperms_1 \cup \sperms_2}
  }
\semantics[SConsumeConditionalA]
  {
    \spceval{\sstate_E}{e}{t}{\scheck} \\
    \pc' = \pc(\sstate) \kand t \\
    \sconsume{\sstate[\pc = \pc']}{\sstate_E[\pc = \pc']}{\phi_1}{\sstate'}{\scheck'}{\sperms}
  }
  {
    \sconsume{\sstate}{\sstate_E}{\sif{e}{\phi_1}{\phi_2}}{\sstate'}{\scheck \cup \scheck'}{\sperms}
  }
\semantics[SConsumeConditionalB]
  {
    \spceval{\sstate_E}{e}{t}{\scheck} \\
    \pc' = \pc(\sstate) \kand \kneg t \\
    \sconsume{\sstate[\pc = \pc']}{\sstate_E[\pc = \pc']}{\phi_2}{\sstate'}{\scheck'}{\sperms}
  }
  {
    \sconsume{\sstate}{\sstate_E}{\sif{e}{\phi_1}{\phi_2}}{\sstate'}{\scheck \cup \scheck'}{\sperms}
  }

The functions $\fremf$, $\fremfp$, and $\falias$ are defined as follows:
\begin{align*}
  \fremf(\sheap, \sstate, t, f) &= \set{\triple{f'}{t'}{t''} \in \sheap : \neg \falias(\sstate, t, f, t', f')} \\
  \fremfp(\sheap, \sstate, t, f) &= \fremf(\sheap, \sstate, t, f) \cup \set{\pair{p}{\multiple{t}} \in \sheap} \\
  \falias(\sstate, t, f, t', f') &= \begin{cases}
    f = f' \wedge (g(\sstate) \implies t \keq t') & \neg \imp(\sstate) \\
    (f = f') \wedge \fsat(g(\sstate) \kand t \keq t') & \imp(\sstate)
  \end{cases}
\end{align*}

For ease of notation, the judgement $\scons{\sstate}{\gform}{\sstate'}{\scheck}$ applies the rules above, initializing additional parameters and ignoring the parameters that are internal to consume.
\semantics[SConsume]
  {
    \sconsume{\sstate}{\sstate}{\gform}{\sstate'}{\scheck}{\_}
  }
  {
    \scons{\sstate}{\gform}{\sstate'}{\scheck}
  }

\subsection{Execute}\label{sec:sexec-rules}

Symbolic execution is denoted by the small-step judgement
$$\sexec{\sstate}{s}{s'}{\sstate'}.$$
where $\sstate$ is the symbolic state prior to execution, $s$ is the statement to execute, $s'$ is the statement remaining after this execution step and $\sstate'$ is the symbolic state after the execution step.

Note that as in \S\ref{sec:seval-rules} there may be multiple resulting symbolic states $\sstate'$ for which the judgement applies, and therefore the path condition of $\sstate'$ must be satisfied before assuming that $\sstate'$ corresponds to a particular execution step.

\semantics[SExecSeq]
  {
  }
  {
    \sexec{\sstate}{\sseq{\kskip}{s}}{s}{\sstate}
  }
\semantics[SExecAssign]
  {
    \seval{\sstate}{e}{t}{\sstate'}{\_} \\
    \senv' = \senv(\sstate)[x \mapsto t]
  }
  {
    \sexec{\sstate}{\sseq{x = e}{s}}{s}{\sstate'[\senv = \senv']}
  }
\semantics[SExecAssignField]
  {
    \seval{\sstate}{e}{t}{\sstate'}{\_} \\
    \scons{\sstate'}{\kacc(x.f)}{\sstate''}{\_} \\
    \sheap' = \sheap(\sstate); \triple{\senv(\sstate'')(x)}{f}{t}
  }
  {
    \sexec{\sstate}{\sseq{x.f = e}{s}}{s}{\sstate''[\sheap = \sheap']}
  }
\semantics[SExecAlloc]
  {
    t = \ffresh \\
    \multiple{T ~ f} = \fstruct(S) \\
    \sheap' = \sheap(\sstate); \multiple{\triple{f}{t}{\fdefault(T)}}
  }
  {
    \sexec{\sstate}{\sseq{x = \salloc{S}}{s}}{s}{\sstate[\sheap = \sheap']}
  }
\semantics[SExecCall]
  {
    \multiple{\seval{\sstate}{e}{t_e}{\sstate'}{\_}} \\
    \multiple{x} = \fparams(m) \\
    \scons{\sstate'[\senv = [\multiple{x \mapsto t_e}]]}{\fpre(m)}{\sstate'}{\_} \\
    t = \ffresh \\
    \sproduce{\sstate'[\senv = [\multiple{x \mapsto t_e}, \kresult \mapsto t]]}{\fpost(m)}{\sstate''}
  }
  {
    \sexec{\sstate}{\sseq{y \kassign m(\multiple{e})}{s}}{s}{\sstate''[\senv = \senv(\sstate)[y \mapsto t]]}
  }
\semantics[SExecAssert]
  {
    \scons{\sstate}{\simprecise{\phi}}{\sstate'}{\_} \\
    \sproduce{\sstate'}{\simprecise{\phi}}{\sstate''}
  }
  {
    \sexec{\sstate}{\sseq{\sassert{\phi}}{s}}{s}{\sstate[\pc = \pc(\sstate'')]}
  }
\semantics[SExecFold]
  {
    \multiple{\seval{\sstate}{e}{t}{\sstate'}{\_}} \\
    \multiple{x} = \fpredparams(p) \\
    \scons{\sstate'[\senv = [\multiple{x \mapsto t}]]}{\fpred(p)}{\sstate''}{\_} \\
    \sstate''' = \sstate''[\senv = \senv(\sstate), \sheap = \sheap(\sstate''); \pair{p}{\multiple{t}}]
  }
  {
    \sexec{\sstate}{\sseq{\sfold{p(\multiple{e})}}{s}}{s}{\sstate'''}
  }
\semantics[SExecUnfold]
  {
    \multiple{\seval{\sstate}{e}{t}{\sstate'}{\_}} \\
    \scons{\sstate'}{p(\multiple{e})}{\sstate''}{\_} \\
    \multiple{x} = \fpredparams(p) \\
    \sproduce{\sstate''[\senv = [\multiple{x \mapsto t}]]}{\fpred(p)}{\sstate'''}
  }
  {
    \sexec{\sstate}{\sseq{\sunfold{p(e_1, \cdots, e_n)}}{s}}{s}{\sstate'''[\senv = \senv(\sstate)]}
  }
\semantics[SExecIfA]
  {
    \seval{\sstate}{e}{t}{\sstate'}{\_}
  }
  {
    \sexec{\sstate}{\sseq{\sif{e}{s_1}{s_2}}{s}}{\sseq{s_1}{s}}{\sstate'[\pc = \pc(\sstate') \kand t]}
  }
\semantics[SExecIfB]
  {
    \seval{\sstate}{e}{t}{\sstate'}{\_}
  }
  {
    \sexec{\sstate}{\sseq{\sif{e}{s_1}{s_2}}{s}}{\sseq{s_2}{s}}{\sstate'[\pc = \pc(\sstate') \kand \kneg t]}
  }
\semantics[SExecWhileSkip]
  {
    \scons{\sstate}{\gform}{\sstate'}{\_} \\
    \multiple{x} = \fmodified(s) \\
    \sproduce{\sstate'[\senv = \senv(\sstate')[\multiple{x \mapsto \ffresh}]]}{\gform}{\sstate''} \\
    \spceval{\sstate''}{e}{t}{\_} \\
  }
  {
    \sexec{\sstate}{\sseq{\swhile{e}{\gform}{s}}{s'}}{s'}{\sstate''[\pc = \pc(\sstate'') \kand \kneg t]}
  }

\begin{definition}\label{def:frem}
  The helper function $\frem(\sstate, \gform)$ returns the set of all permissions remaining in $\sstate$ if $\gform$ is not completely precise:
  $$\frem(\sstate, \gform) := \begin{cases}
    \emptyset &\text{if $\gform$ is completely precise} \\
    \set{ \pair{t}{f} : \triple{f}{t}{t'} \in \sheap(\sstate) \cup \oheap(\sstate) } ~\cup \\
    \quad \set{ \pair{p}{\multiple{t}} : \pair{p}{\multiple{t}} \in \sheap(\sstate) } &\text{otherwise}
  \end{cases}$$
  This is used to symbolically calculate the required exclusion frame.
\end{definition}

\subsection{Verification states}

\begin{definition}
  A \textbf{verification state} $\vstate$ is either an abstract symbol or a triple consisting of a symbolic state, a statement, and a post-condition.
  $$\vstate ::= \initsym \gralt \finalsym \gralt \triple{\sstate}{s}{\gform}$$
\end{definition}

The reachability of verification transitions, as determined by modular verification for a given program $\prog$, is determined by the judgement
$$\strans{\prog}{\vstate}{\vstate'}$$
where $\vstate$ is the beginning verification state and $\vstate'$ is the next verification state.

\semantics[SVerifyInit]
{
}
{
  \strans{\quadruple{s}{M}{P}{S}}{\initsym}{\triple{\quintuple{\bot}{\emptyset}{\emptyset}{\emptyset}{\ktrue}}{s}{\ktrue}}
}
\semantics[SVerifyMethod]
{
  m \in M \\
  \multiple{x} = \fparams(m) \\
  \sproduce{\quintuple{\bot}{\emptyset}{\emptyset}{[\multiple{x \mapsto \ffresh}]}{\ktrue}}{\fpre(m)}{\sstate}
}
{
  \strans{\quadruple{s}{M}{P}{S}}{\initsym}{\triple{\sstate}{\sseq{\fbody(m)}{\kskip}}{\fpost(m)}}
}
\semantics[SVerifyLoopBody]
{
  \strans{\prog}{\_}{\triple{\sstate_0}{\sseq{\swhile{e}{\gform}{s}}{s'}}{\gform_0}} \\
  \multiple{x} = \fmodified(s) \\
  \sproduce{\quintuple{\bot}{\senv(\sstate_0)[\multiple{x \mapsto \ffresh}]}{\emptyset}{\emptyset}{\pc(\sstate_0)}}{\gform}{\sstate_0'} \\
  \spceval{\sstate_0'}{e}{t}{\scheck}
}
{
  \strans{\prog}{\triple{\sstate}{\sseq{\swhile{e}{\gform}{s}}{s'}}{\gform_0}}{\triple{\sstate_0'[\pc = \pc(\sstate_0') \kand t]}{\sseq{s}{\kskip}}{\gform}}
}
\semantics[SVerifyLoop]
{
  \strans{\prog}{\_}{\triple{\sstate}{\sseq{\swhile{e}{\gform'}{s}}{s'}}{\gform}} \\
  \scons{\sstate}{\gform'}{\sstate'}{\_} \\
  \multiple{x} = \fmodified(s) \\
  \sproduce{\sstate'[\senv = \senv(\sstate)[\multiple{x \mapsto \ffresh}]]}{\gform'}{\sstate''}
}
{
  \prog \vdash \triple{\sstate}{\sseq{\swhile{e}{\gform'}{s}}{s'}}{\gform} \to \\\\
  \hspace{3em} \triple{\sstate''}{\sseq{\swhile{e}{\gform'}{s}}{s'}}{\gform}
}
\semantics[SVerifyStep]
{
  \strans{\prog}{\_}{\triple{\sstate}{s}{\gform}} \\
  \sexec{\sstate}{s}{s'}{\sstate'}
}
{
  \strans{\prog}{\triple{\sstate}{s}{\gform}}{\triple{\sstate'}{s'}{\gform}}
}
\semantics[SVerifyFinal]
{
  \strans{\prog}{\_}{\triple{\sstate}{\kskip}{\gform}} \\
  \scons{\sstate}{\gform}{\sstate'}{\_}
}
{
  \prog \vdash \triple{\sstate}{\kskip}{\gform} \to \finalsym
}

\begin{definition}\label{def:vstate-reachable}
  A verification state $\vstate$ is \textbf{reachable} from program $\prog$ if $\vstate = \initsym$ or $\prog \vdash \_ \to \vstate, \_, \_, \_$.

  A verification state $\vstate$ is \textbf{reachable} from program $\prog$ \textbf{with valuation} $V$ if $\vstate$ is reachable from $\prog$ and $V$ is defined for all symbolic values contained in $\vstate$.
\end{definition}

\subsection{Guards}

A guard judgement determines the set of runtime checks that must be satisfied at the given verification state before taking the next execution step, and determines the exclusion frame required to preserve the validity of heap chunks contained by the current symbolic state. This is denoted by the judgement
$$\vstate \rightharpoonup \sstate', \scheck, \sperms$$
where $\vstate$ is the current verification state, $\sstate'$ is the intermediate state, $\scheck$ is the set of required checks, and $\sperms$ is the exclusion frame (represented as a set of symbolic permissions).

\semantics[SGuardInit]
  {
  }
  {
    \sguard{\initsym}{\quintuple{\bot}{\emptyset}{\emptyset}{\emptyset}{\ktrue}}{\emptyset}{\emptyset}
  }
\semantics[SGuardSeq]
  {
  }
  {
    \sguard{\triple{\sstate}{\sseq{\kskip}{s}}{\gform}}{\sstate}{\emptyset}{\emptyset}
  }
\semantics[SGuardAssign]
  {
    \seval{\sstate}{e}{\_}{\sstate'}{\scheck}
  }
  {
    \sguard{\triple{\sstate}{\sseq{x = e}{s}}{\gform}}{\sstate}{\scheck}{\emptyset}
  }
\semantics[SGuardAssignField]
  {
    \seval{\sstate}{e}{\_}{\sstate'}{\scheck'} \\
    \scons{\sstate'}{\kacc(x.f)}{\sstate''}{\scheck''}
  }
  {
    \sguard{\triple{\sstate}{\sseq{x.f = e}{s}}{\gform}}{\sstate''}{\scheck' \cup \scheck''}{\emptyset}
  }
\semantics[SGuardAlloc]
  {
  }
  {
    \sguard{\triple{\sstate}{\sseq{x = \salloc{S}}{s}}{\gform}}{\sstate}{\emptyset}{\emptyset}
  }
\semantics[SGuardCall]
  {
    \multiple{
      \seval{\sstate}{e}{t}{\sstate'}{\scheck}
    } \\
    \multiple{x} = \fparams(m) \\
    \scons{\sstate'[\senv = [\multiple{x \mapsto t}]]}{\fpre(m)}{\sstate''}{\scheck'}
  }
  {
    \sguard{\triple{\sstate}{\sseq{y \kassign m(\multiple{e})}{s}}{\gform}}{\sstate''[\senv = \senv(\sstate)]}{\multiple{\scheck} \cup \scheck'}{\frem(\sstate'', \fpre(m))}
  }
\semantics[SGuardAssert]
  {
    \scons{\sstate}{\simprecise{\phi}}{\sstate'}{\scheck}
  }
  {
    \sguard{\triple{\sstate}{\sseq{\sassert{\phi}}{s}}{\gform}}{\sstate'}{\scheck}{\emptyset}
  }
\semantics[SGuardFold]
  {
    \multiple{\seval{\sstate}{e}{t}{\sstate'}{\scheck}} \\
    \multiple{x} = \fpredparams(p) \\
    \scons{\sstate'[\senv = [\multiple{x \mapsto t}]]}{\fpred(p)}{\sstate''}{\scheck'}
  }
  {
    \sguard{\triple{\sstate}{\sseq{\sfold{p(\multiple{e})}}{s}}{\gform}}{\sstate''[\senv = \senv(\sstate)]}{\scheck' \cup \bigcup \multiple{\scheck}}{\emptyset}
  }
\semantics[SGuardUnfold]
  {
    \multiple{\seval{\sstate}{e}{t}{\sstate'}{\scheck}} \\
    \scons{\sstate'}{p(\multiple{e})}{\sstate''}{\scheck'}
  }
  {
    \sguard{\triple{\sstate}{\sseq{\sunfold{p(\multiple{e})}}{s}}{\gform}}{\sstate''}{\scheck' \cup \bigcup \multiple{\scheck}}{\emptyset}
  }
\semantics[SGuardIf]
  {
    \seval{\sstate}{e}{\_}{\sstate'}{\scheck}
  }
  {
    \sguard{\triple{\sstate}{\sseq{\sif{e}{s_1}{s_2}}{s}}{\gform}}{\sstate'}{\scheck}{\emptyset}
  }
\semantics[SGuardWhile]
  {
    \scons{\sstate}{\gform}{\sstate'}{\scheck'} \\
    \multiple{x} = \fmodified(s) \\
    \sproduce{\sstate'[\senv = \senv(\sstate')[\multiple{x \mapsto \ffresh}]]}{\gform}{\sstate''} \\
    \spceval{\sstate''}{e}{\_}{\scheck''}
  }
  {
    \sguard{\triple{\sstate}{\sseq{\swhile{e}{\gform}{s}}{s'}}{\gform'} }{\sstate'[\pc = \pc(\sstate'')]}{\scheck' \cup \scheck''}{\frem(\sstate', \gform)}
  }

\semantics[SGuardFinish]
  {
    \scons{\sstate}{\gform}{\sstate'}{\scheck}
  }
  {
    \sguard{\triple{\sstate}{\kskip}{\gform}}{\sstate'}{\scheck}{\emptyset}
  }

\subsection{Valid states}

\begin{definition}\label{def:vstate-corresponds}
  A verification state $\vstate$ \textbf{corresponds} with valuation $V$ to an execution state $\Gamma$ if $\vstate = \Gamma$, or $\Gamma = \pair{\heap}{\triple{\perms}{\env}{s} \cdot \stack}$ for some $\heap, \perms, \env, s$ and $\vstate = \triple{\sstate}{s}{\_}$ for some $\sstate$, such that $\simstate{V}{\sstate}{\heap}{\perms}{\env}$.
\end{definition}

\begin{definition}\label{def:partial-valid}
  A \textit{partial state} $\Gamma = \pair{\heap}{\stack}$ is \textbf{validated} by a verification state $\vstate$ with valuation $V$ if one of the following cases apply:

  \begin{defparts}
    \defpart\label{def:partial-valid-nil}
      $\stack = \nilsym$

    \defpart\label{def:partial-valid-call}
      $\stack = \triple{\env}{\perms}{\sseq{y \kassign m(e_1, \cdots, e_k)}{s}} \cdot \stack^*$ for some $\env$, $\perms$, $y$, $m$, $k$, $e_1, \cdots, e_k$, $s$, and $\stack^*$, and there exists some $\vstate'$, $V'$, $x_1, \cdots, x_k$, $t_1, \cdots, t_k$, $\sstate_0, \cdots, \sstate_k$ and $\sstate'$ such that:
      \begin{gather}
        \text{The partial state $\pair{\heap}{\stack^*}$ is validated by $\vstate'$ and $V'$}, \\
        \vstate' \text{ is reachable from $\prog$ with valuation $V'$}, \quad s(\vstate') = s(\stack), \\
        x_1, \cdots, x_k = \fparams(m), \\
        \sstate_0 = \sstate(\vstate'), \quad \seval{\sstate_0}{e_1}{t_1}{\sstate_1}{\_}, \quad\cdots,\quad \seval{\sstate_{k-1}}{e_k}{t_k}{\sstate_k}{\_}, \label{eq:partial-valid-call-eval}\\
        \universal{1 \le i \le k}{V(\senv(\vstate)(x_i)) = V'(t_i)}, \label{eq:partial-valid-call-params} \\
        \scons{\sstate_k}{\fpre(m)}{\sstate'}{\_}, \quad \simstate{V'}{\sstate'[\senv = \senv(\sstate_0)]}{\heap}{\perms}{\env}, \quad\text{and} \label{eq:partial-valid-call-sim} \\
        \gform(\vstate) = \fpost(m)
      \end{gather}

    \defpart\label{def:partial-valid-while}
      $\stack = \triple{\env}{\perms}{\sseq{\swhile{e}{\gform}{s}}{s'}} \cdot \stack^*$ for some $\env$, $\perms$, $e$, $\gform$, $s$, $s'$, $\stack^*$, and there exists some $\vstate'$, $V'$, and $\sstate'$ such that:
      \begin{gather}
        \text{The partial state $\pair{\heap}{\stack^*}$ is validated by $\vstate'$ and $V'$} \label{eq:partial-valid-while-valid}\\
        \vstate' \text{ is reachable from $\prog$ with valuation $V'$}, \quad s(\vstate') = s(\stack) \label{eq:partial-valid-while-reachable}\\
        \scons{\sstate}{\gform}{\sstate'}{\_}, \quad\text{and}\quad
        \simstate{V'}{\sstate'}{\heap}{\perms}{\env} \label{eq:partial-valid-while-sim} \\
        \gform(\vstate) = \gform \label{eq:partial-valid-while-post}
      \end{gather}
  \end{defparts}
\end{definition}

\begin{definition}\label{def:state-valid}
  For a program $\prog$, a dynamic state $\Gamma$ is \textbf{validated} by $\vstate$ and valuation $V$ if all the following are true:
  
  \begin{defparts}
    \defpart\label{def:state-valid-reachable} $\vstate$ is reachable from $\prog$ with $V$
    
    \defpart\label{def:state-valid-correspond} $\Gamma$ corresponds to $\vstate$ with $V$

    \defpart\label{def:state-valid-partial} If $\Gamma = \pair{\heap}{\triple{\perms}{\env}{s} \cdot \stack^*}$,
    then the partial state $\pair{\heap}{\stack^*}$ is validated by $\vstate$ and $V$.

  \end{defparts}
\end{definition}

\begin{definition}
  $\Gamma$ is a \textbf{valid state} if $\Gamma$ is validated by some $\vstate$.
\end{definition}





\section{Soundness}

This section contains the proof of soundness, culminating in theorems  \ref{thm:dtrans-progress}, \ref{thm:guard-progress}, and \ref{thm:dtrans-preservation}.

\subsection{Cross-cutting lemmas}


\begin{lemma}[Relating expression framing and exact footprints]\label{lem:efoot-framing}
  $\efoot{\heap}{\env}{e} \subseteq \perms$ if and only if $\frm{\heap}{\perms}{\env}{e}$.
\end{lemma}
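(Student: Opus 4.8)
The plan is to prove the biconditional by structural induction on the expression $e$, following the inductive definition of $\efoot{\heap}{\env}{\cdot}$ and the framing rules of \S\ref{sec:framing}; the induction hypothesis is the full statement of the lemma for every proper subexpression of $e$. Two standing facts are used throughout: evaluation $\eval{\heap}{\env}{\cdot}{\cdot}$ is deterministic (each expression has at most one value), and $\efoot{\heap}{\env}{e}$ is a partial function whose definedness coincides, clause by clause, with the evaluability of the subexpressions that the applicable defining clause inspects.

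First I would dispatch the base and compositional cases. For $e = l$ and $e = x$ we have $\efoot{\heap}{\env}{e} = \emptyset \subseteq \perms$ unconditionally, while \refrule{FrameLiteral} resp. \refrule{FrameVar} always applies, so both sides hold. For $e = e_1 \oplus e_2$ and $e = \kneg e'$ the argument is purely compositional: $\efoot{\heap}{\env}{e}$ equals $\efoot{\heap}{\env}{e_1} \cup \efoot{\heap}{\env}{e_2}$ (resp. $\efoot{\heap}{\env}{e'}$), which is a subset of $\perms$ iff each summand is, and by the induction hypothesis that is equivalent to $\frm{\heap}{\perms}{\env}{e_i}$ (resp. $\frm{\heap}{\perms}{\env}{e'}$) — exactly the premises of \refrule{FrameOp} (resp. \refrule{FrameNeg}), the only rule that can derive $\frm{\heap}{\perms}{\env}{e}$ in each case.

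The field case $e = e'.f$ is the first to involve an assertion premise. For the forward direction, $\efoot{\heap}{\env}{e'.f} \subseteq \perms$ presupposes the clause $\efoot{\heap}{\env}{e'.f} = \efoot{\heap}{\env}{e'} \cup \set{\pair{\ell}{f}}$ is applicable, hence $\eval{\heap}{\env}{e'}{\ell}$ for some $\ell$, and we get $\efoot{\heap}{\env}{e'} \subseteq \perms$ and $\pair{\ell}{f} \in \perms$; the induction hypothesis gives $\frm{\heap}{\perms}{\env}{e'}$, \refrule{AssertAcc} gives $\assertion{\heap}{\perms}{\env}{\kacc(e'.f)}$, and \refrule{FrameField} concludes $\frm{\heap}{\perms}{\env}{e'.f}$. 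Conversely, $\frm{\heap}{\perms}{\env}{e'.f}$ can only arise from \refrule{FrameField}, whose premises are $\frm{\heap}{\perms}{\env}{e'}$ and $\assertion{\heap}{\perms}{\env}{\kacc(e'.f)}$; the latter can only arise from \refrule{AssertAcc}, so $\eval{\heap}{\env}{e'}{\ell}$ and $\pair{\ell}{f} \in \perms$. Then $\efoot{\heap}{\env}{e'.f}$ is defined and equals $\efoot{\heap}{\env}{e'} \cup \set{\pair{\ell}{f}}$, and the induction hypothesis together with $\pair{\ell}{f} \in \perms$ yields $\efoot{\heap}{\env}{e'.f} \subseteq \perms$.

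Finally, the short-circuiting connectives $e = e_1 \kor e_2$ and $e = e_1 \kand e_2$ are handled by a case split on the (unique, by determinism) value of $e_1$. For $\kor$: if $\eval{\heap}{\env}{e_1}{\ktrue}$, the applicable footprint clause and the only applicable framing rule \refrule{FrameOrA} both inspect only $e_1$, so the claim reduces to the induction hypothesis for $e_1$; if $\eval{\heap}{\env}{e_1}{\kfalse}$, then $\efoot{\heap}{\env}{e} = \efoot{\heap}{\env}{e_1} \cup \efoot{\heap}{\env}{e_2}$ and the applicable rule is \refrule{FrameOrB}, so the claim reduces to the induction hypotheses for $e_1$ and $e_2$ together. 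The $\kand$ case is identical with the truth values swapped and \refrule{FrameAndA}/\refrule{FrameAndB} in place of \refrule{FrameOrB}/\refrule{FrameOrA}. The only real care needed anywhere in this proof is the partiality bookkeeping in the field and short-circuit cases — ensuring that whenever one side of the biconditional holds, the defining clause of $\efoot$ in play is genuinely applicable — and because evaluation is deterministic this never branches nontrivially; so I expect no substantive obstacle beyond this attention to definedness.
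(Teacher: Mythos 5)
Your proof is correct and takes essentially the same route as the paper's: a structural induction on $e$ with trivial base cases, compositional cases for $\oplus$ and $\kneg$, the \refrule{AssertAcc}/\refrule{FrameField} argument for field access, and a case split on the evaluated value of $e_1$ for the short-circuiting connectives. The extra attention you pay to definedness of $\efoot$ and determinism of evaluation is harmless bookkeeping that the paper leaves implicit.
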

\begin{proof}
  By induction on the syntax forms of $e$:
  \begin{enumcases}
    \case $l$: Then $\efoot{\heap}{\env}{l} = \emptyset$, thus $\efoot{\heap}{\env}{l} \subseteq \perms$, and $\frm{\heap}{\perms}{\env}{l}$ by \refrule{FrameLiteral}.
    \case $v$: Then $\efoot{\heap}{\env}{v} = \emptyset$, thus $\efoot{\heap}{\env}{v} \subseteq \perms$, and $\frm{\heap}{\perms}{\env}{v}$ by \refrule{FrameVar}.
    \case $e.f$:

      Suppose that $\efoot{\heap}{\env}{e.f} \subseteq \perms$. Then by definition $\efoot{\heap}{\env}{e} \subseteq \perms$ and $\pair{\ell}{f} \in \perms$ for some $\ell$ such that $\eval{\heap}{\env}{e}{\ell}$. By induction $\frm{\heap}{\perms}{\env}{e}$, by \refrule{AssertAcc} $\assertion{\heap}{\perms}{\env}{\kacc(e.f)}$, and thus by \refrule{FrameField}, $\frm{\heap}{\perms}{\env}{e.f}$.

      Suppose that $\frm{\heap}{\perms}{\env}{e.f}$. Then by \refrule{FrameField}, $\frm{\heap}{\perms}{\env}{e}$ and $\assertion{\heap}{\perms}{\env}{\kacc(e.f)}$, thus by induction $\efoot{\heap}{env}{e} \subseteq \perms$ and by \refrule{AssertAcc}, $\pair{\ell}{f} \in \perms$ for some $\ell$ such that $\eval{\heap}{\env}{e}{\ell}$. Thus $\efoot{\heap}{\env}{e.f} = \efoot{\heap}{\env}{e}; \pair{\ell}{f} \subseteq \perms$.

    \case $e_1 \oplus e_2$:
      \begin{align*}
        &\efoot{\heap}{\env}{e_1 \oplus e_2} \subseteq \perms \\
        &\iff \efoot{\heap}{\env}{e_1} \subseteq \perms ~\text{and}~ \efoot{\heap}{\env}{e_2} \subseteq \perms &\text{by definition} \\
        &\iff \frm{\heap}{\perms}{\env}{e_1} ~\text{and}~ \frm{\heap}{\perms}{\env}{e_2} &\text{by induction} \\
        &\iff \frm{\heap}{\perms}{\env}{e_1 \oplus e_2} &\text{by \refrule{FrameOp}}
      \end{align*}

    \case\label{case:efoot-framing-or} $e_1 \kor e_2$:
      \begin{align*}
        &\efoot{\heap}{\env}{e_1 \kor e_2} \subseteq \perms \implies  \\
        &~\textbf{either}~ \eval{\heap}{\env}{e_1}{\ktrue} ~\text{and}~ \efoot{\heap}{\env}{e_1} \subseteq \perms &\text{by definition}\\
        &\quad\implies \eval{\heap}{\env}{e_1}{\ktrue} ~\text{and}~ \frm{\heap}{\perms}{\env}{e_1} &\text{by induction} \\
        &\quad\implies \frm{\heap}{\perms}{\env}{e_1 \kor e_2} &\text{by \refrule{FrameOrA}} \\
        &~\textbf{or}~ \eval{\heap}{\env}{e_1}{\kfalse} ~\text{and}~ \efoot{\heap}{\env}{e_1} \cup \efoot{\heap}{\env}{e_2} \subseteq \perms &\text{by definition}\\
        &\quad\implies \eval{\heap}{\env}{e_1}{\kfalse}, ~\frm{\heap}{\perms}{\env}{e_1}, \\
          &\hspace{3.5em} \text{and}~ \frm{\heap}{\perms}{\env}{e_2} &\text{by induction} \\
        &\quad\implies \frm{\heap}{\perms}{\env}{e_1 \kor e_2} &\text{by \refrule{FrameOrB}}
      \end{align*}
      \begin{align*}
        &\frm{\heap}{\perms}{\env}{e_1 \kor e_2} \implies \\
        &~\textbf{either}~ \eval{\heap}{\env}{e_1}{\ktrue} ~\text{and}~ \frm{\heap}{\perms}{\env}{e_1} &\text{by \refrule{FrameOrA}} \\
        &\quad\implies \eval{\heap}{\env}{e_1}{\ktrue} ~\text{and}~ \efoot{\heap}{\env}{e_1} \subseteq \perms &\text{by induction} \\
        &\quad\implies \efoot{\heap}{\env}{e_1 \kor e_2} \subseteq \perms &\text{by definition} \\
        &~\textbf{or}~ \eval{\heap}{\env}{e_1}{\kfalse}, ~\frm{\heap}{\perms}{\env}{e_1}, \\
          &\hspace{2.25em}\text{and}~ \frm{\heap}{\perms}{\env}{e_2} &\text{by \refrule{FrameOrB}}\\
        &\quad\implies \eval{\heap}{\env}{e_1}{\kfalse} ~\text{and}~ \efoot{\heap}{\env}{e_1} \cup \efoot{\heap}{\env}{e_2} \subseteq \perms &\text{by induction}\\
        &\quad\implies \efoot{\heap}{\env}{e_1 \kor e_2} \subseteq \perms &\text{by definition}
      \end{align*}

    \case $e_1 \kand e_2$: Similar to case \ref{case:efoot-framing-or}.

    \case $\kneg e$:
      \begin{align*}
        &\efoot{\heap}{\env}{\kneg e} \subseteq \perms \\
        &\iff \efoot{\heap}{\env}{e} \subseteq \perms &\text{by definition} \\
        &\iff \frm{\heap}{\perms}{\env}{e} &\text{by induction} \\
        &\iff \frm{\heap}{\perms}{\env}{\kneg e} &\text{by \refrule{FrameNeg}}
      \end{align*}

  \end{enumcases}
\end{proof}

\begin{lemma}[Relating formula assertion/framing and exact footprints]\label{lem:efoot-subset-framed}
  If $\assertion{\heap}{\perms'}{\env}{\gform}$, $\efrm{\heap}{\perms}{\env}{\gform}$, and $\perms' \subseteq \perms$, then $\efoot{\heap}{\env}{\gform} \subseteq \perms$.
\end{lemma}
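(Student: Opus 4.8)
The plan is to prove the statement by induction on the derivation of the assertion judgement $\assertion{\heap}{\perms'}{\env}{\gform}$, performing a case split on the last rule applied. Induction on the \emph{derivation} (rather than on the syntactic structure of $\gform$) is essential, since the \refrule{AssertPredicate} case unfolds a predicate body $\fpred(p)$ that need not be structurally smaller than $p(\multiple{e})$; an assertion derivation is nonetheless finite, so its immediate subderivations are strictly smaller. In each case I invert the corresponding rule of the equirecursive framing judgement $\efrm{\heap}{\perms}{\env}{\gform}$; crucially, because expression evaluation $\Downarrow$ is deterministic, the assertion derivation and the framing derivation must agree on which rule applies at conditionals (\refrule{AssertIfA}/\refrule{AssertIfB} aligns with \refrule{EFrameConditionalA}/\refrule{EFrameConditionalB}).

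The base-like cases are routine. For \refrule{AssertValue} (so $\gform = e$), inverting \refrule{EFrameExpression} gives $\frm{\heap}{\perms}{\env}{e}$, and Lemma~\ref{lem:efoot-framing} yields $\efoot{\heap}{\env}{e} \subseteq \perms$. For \refrule{AssertAcc} (so $\gform = \kacc(e.f)$), the rule gives $\eval{\heap}{\env}{e}{\ell}$ with $\pair{\ell}{f} \in \perms' \subseteq \perms$, while inverting \refrule{EFrameAcc} gives $\frm{\heap}{\perms}{\env}{e}$, hence $\efoot{\heap}{\env}{e} \subseteq \perms$ by Lemma~\ref{lem:efoot-framing}; since $\efoot{\heap}{\env}{\kacc(e.f)} = \efoot{\heap}{\env}{e}; \pair{\ell}{f}$, we are done.

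The inductive cases follow the same pattern. For \refrule{AssertConjunction} ($\gform = \phi_1 * \phi_2$) there are $\perms_1, \perms_2$ with $\perms_1 \cup \perms_2 \subseteq \perms'$ and $\assertion{\heap}{\perms_i}{\env}{\phi_i}$; inverting \refrule{EFrameConjunction} gives $\efrm{\heap}{\perms}{\env}{\phi_i}$, and since $\perms_i \subseteq \perms' \subseteq \perms$, the induction hypothesis gives $\efoot{\heap}{\env}{\phi_i} \subseteq \perms$ for $i = 1,2$, so $\efoot{\heap}{\env}{\phi_1 * \phi_2} = \efoot{\heap}{\env}{\phi_1} \cup \efoot{\heap}{\env}{\phi_2} \subseteq \perms$. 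For \refrule{AssertPredicate} ($\gform = p(\multiple{e})$), inverting \refrule{EFramePredicate} gives $\multiple{\frm{\heap}{\perms}{\env}{e}}$ and $\efrm{\heap}{\perms}{[\multiple{x\mapsto v}]}{\fpred(p)}$ with the \emph{same} argument values $\multiple{v}$ (by determinism of $\Downarrow$); the induction hypothesis on the strictly smaller subderivation for $\fpred(p)$ gives $\efoot{\heap}{[\multiple{x\mapsto v}]}{\fpred(p)} \subseteq \perms$, and Lemma~\ref{lem:efoot-framing} gives $\efoot{\heap}{\env}{e} \subseteq \perms$ for each argument, so their union is contained in $\perms$. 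The \refrule{AssertIfA}/\refrule{AssertIfB} cases combine the induction hypothesis on the taken branch with Lemma~\ref{lem:efoot-framing} applied to $e$ via the inverted \refrule{EFrameConditionalA}/\refrule{EFrameConditionalB}. Finally, for \refrule{AssertImprecise} ($\gform = \simprecise{\phi}$) the rule's own premises already supply $\assertion{\heap}{\perms'}{\env}{\phi}$ and $\efrm{\heap}{\perms'}{\env}{\phi}$, so applying the induction hypothesis with $\perms'$ in place of both permission sets gives $\efoot{\heap}{\env}{\phi} \subseteq \perms' \subseteq \perms$, and $\efoot{\heap}{\env}{\simprecise{\phi}} = \efoot{\heap}{\env}{\phi}$ by definition.

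I expect the only delicate point—hardly an obstacle once identified—to be making precise that the assertion and equirecursive-framing derivations select matching conditional branches and matching predicate arguments; this is exactly where determinism of expression evaluation is used, and it is what lets the inversion of the $\efrm$ rules line up with the case analysis on the $\vDash$ rules.
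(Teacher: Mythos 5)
Your proposal is correct and follows essentially the same route as the paper's proof: induction on the derivation of $\assertion{\heap}{\perms'}{\env}{\gform}$, inverting the matching $\efrm$ rule in each case and discharging the expression subterms via Lemma~\ref{lem:efoot-framing}. The only differences are presentational — you are slightly more explicit than the paper about why the assertion and framing derivations select the same conditional branch and predicate arguments (determinism of evaluation), and about including $\efoot{\heap}{\env}{e}$ in the conditional case — but these are refinements of, not departures from, the paper's argument.
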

\begin{proof}
  By induction on $\assertion{\heap}{\perms'}{\env}{\gform}$:
  \begin{enumcases}
    \case \refrule{AssertImprecise} -- $\assertion{\heap}{\perms'}{\env}{\simprecise{\phi}}$:

      By \refrule{AssertImprecise}, $\assertion{\heap}{\perms'}{\env}{\phi}$ and $\efrm{\heap}{\perms}{\env}{\phi}$, therefore by induction $\efoot{\heap}{\env}{\simprecise{\phi}} = \efoot{\heap}{\env}{\phi} \subseteq \perms$.

    \case \refrule{AssertValue} -- $\assertion{\heap}{\perms'}{\env}{e}$:

      Since $\efrm{\heap}{\perms}{\env}{e}$, by \refrule{EFrameExpression} $\frm{\heap}{\perms}{\env}{e}$. Thus by lemma \ref{lem:efoot-framing}, $\efoot{\heap}{\env}{\gform} \subseteq \perms$.

    \case\label{case:efoot-subset-framed-ifa} \refrule{AssertIfA} -- $\assertion{\heap}{\perms'}{\env}{\sif{e}{\phi_1}{\phi_2}}$

      By \refrule{AssertIfA}, $\eval{\heap}{\env}{e}{\ktrue}$. Then, since $\efrm{\heap}{\perms}{\env}{\sif{e}{\phi_1}{\phi_2}}$, \\
      $\efrm{\heap}{\perms}{\env}{\phi_1}$ by \refrule{EFrameConditionalA}. Also, by \refrule{AssertIfA}, $\assertion{\heap}{\perms'}{\env}{\phi_1}$. Thus, by induction, \\
      $\efoot{\heap}{\env}{\sif{e}{\phi_1}{\phi_2}} = \efoot{\heap}{\env}{\phi_1} \subseteq \perms$.

    \case \refrule{AssertIfB} -- $\assertion{\heap}{\perms}{\env}{\sif{e}{\phi_1}{\phi_2}}$: Similar to case \ref{case:efoot-subset-framed-ifa}.

    \case \refrule{AssertAcc} -- $\assertion{\heap}{\perms'}{\env}{\kacc(e.f)}$

      Since $\efrm{\heap}{\perms}{\env}{\kacc(e.f)}$, by \refrule{EFrameAcc} $\frm{\heap}{\perms}{\env}{e}$, thus by lemma \ref{lem:efoot-framing} $\efoot{\heap}{\env}{e} \subseteq \perms$.
      By \refrule{AssertAcc}, $\eval{\heap}{\env}{e}{\ell}$ and $\pair{\ell}{f} \in \perms' \subseteq \perms$. Thus by definition $\efoot{\heap}{\env}{\kacc(e.f)} = \efoot{\heap}{\env}{e}; \pair{\ell}{f} \subseteq \perms$.

    \case \refrule{AssertConjunction} -- $\assertion{\heap}{\perms'}{\env}{\phi_1 * \phi_2}$

      Since $\efrm{\heap}{\perms}{\env}{\phi_1 * \phi_2}$, by \refrule{EFrameConjunction} $\efrm{\heap}{\perms}{\env}{\phi_1}$ and $\efrm{\heap}{\perms}{\env}{\phi_2}$. Also, by \refrule{AssertConjunction} $\assertion{\heap}{\perms_1}{\env}{\phi_1}$ and $\assertion{\heap}{\perms_2}{\env}{\phi_2}$ where $\perms_1 \cup \perms_2 \subseteq \perms' \subseteq \perms$. Therefore by induction $\efoot{\heap}{\env}{\phi_1 * \phi_2} = \efoot{\heap}{\env}{\phi_1} \cup \efoot{\heap}{\env}{\phi_2} \subseteq \perms$.

    \case \refrule{AssertPredicate} -- $\assertion{\heap}{\perms'}{\env}{p(\multiple{e})}$
    
      Since $\efrm{\heap}{\perms}{\env}{p(\multiple{e})}$, by \refrule{EFramePredicate} $\multiple{\eval{\heap}{\env}{e}{v}}$ and \\
      $\efrm{\heap}{\perms}{[\multiple{x \mapsto v}]}{\fpred(p)}$ where $\multiple{x} = \fpredparams(p)$. Also, by \refrule{AssertPredicate} $\assertion{\heap}{\perms'}{[\multiple{x \mapsto v}]}{\fpred(p)}$. Thus by induction, $\efoot{\heap}{[\multiple{x \mapsto v}]}{\fpred(p)} \subseteq \perms$.

      Also, by \refrule{EFramePredicate} $\multiple{\frm{\heap}{\perms}{\env}{e}}$ thus by \ref{lem:efoot-framing} $\efoot{\heap}{\env}{e} \subseteq \perms$ for all $e$.

      Therefore by definition $\efoot{\heap}{\env}{e} = \efoot{\heap}{[\multiple{x \mapsto v}]}{\fpred(p)} \cup \bigcup\multiple{\efoot{\heap}{\env}{e}} \subseteq \perms$.
  \end{enumcases}
\end{proof}

\begin{lemma}[Relating iso- and equi-recursive framing]\label{lem:ifrm-implies-efrm}
  If $\ifrm{\heap}{\perms}{\env}{\gform}$, $\assertion{\heap}{\perms'}{\env}{\gform}$, and $\perms' \subseteq \perms$, then $\efrm{\heap}{\perms}{\env}{\gform}$.
\end{lemma}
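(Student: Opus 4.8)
The plan is to prove Lemma~\ref{lem:ifrm-implies-efrm} by induction on the derivation of $\ifrm{\heap}{\perms}{\env}{\gform}$, following exactly the same case structure as the proof of Lemma~\ref{lem:efoot-subset-framed}. The key observation driving the argument is that iso-recursive framing and equi-recursive framing differ \emph{only} in their treatment of predicate instances: \refrule{IFramePredicate} merely requires the arguments $\multiple{e}$ to be framed, whereas \refrule{EFramePredicate} additionally requires the predicate \emph{body} to be equi-recursively framed under the argument bindings. For every other syntactic form, the iso- and equi-recursive framing rules are structurally identical, so those cases will follow immediately by applying the inductive hypothesis to the subformulas and re-assembling with the corresponding \refrule{EFrame*} rule. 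I would dispatch \refrule{IFrameExpression}, \refrule{IFrameConjunction}, \refrule{IFrameConditionalA}, \refrule{IFrameConditionalB}, and \refrule{IFrameAcc} in this routine way, in each case inverting the assumed $\assertion{\heap}{\perms'}{\env}{\gform}$ via the matching \refrule{Assert*} rule to extract the assertion hypotheses needed to feed the inductive calls (e.g. for the conditional cases, the evaluation of $e$ recorded in the framing derivation tells us which branch of the assertion must have been taken).

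The interesting case is \refrule{IFramePredicate}, where $\gform = p(\multiple{e})$. Here the iso-recursive hypothesis only gives $\multiple{\frm{\heap}{\perms}{\env}{e}}$, which is not enough on its own to build \refrule{EFramePredicate} --- I also need $\efrm{\heap}{\perms}{[\multiple{x \mapsto v}]}{\fpred(p)}$ where $\multiple{x} = \fpredparams(p)$ and $\multiple{\eval{\heap}{\env}{e}{v}}$. The way to obtain it is to use the assumption $\assertion{\heap}{\perms'}{\env}{p(\multiple{e})}$: inverting \refrule{AssertPredicate} yields $\multiple{\eval{\heap}{\env}{e}{v}}$ and $\assertion{\heap}{\perms'}{[\multiple{x \mapsto v}]}{\fpred(p)}$. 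Now the predicate body $\fpred(p)$ is being \emph{asserted} under these bindings, so by Lemma~\ref{lem:efoot-subset-framed} its exact footprint is contained in any permission set that both frames it and contains $\perms'$; and by Lemma~\ref{lem:efoot-framing} the relation $\efoot{\heap}{\env'}{\cdot} \subseteq \perms$ is equivalent to $\frm{}{}{}{}$-style framing for expressions and extends to equi-recursive framing of formulas. The subtlety is that Lemma~\ref{lem:efoot-subset-framed} has an $\efrm{}$ hypothesis about the body, so I cannot invoke it directly --- instead I expect the right move is to strengthen the lemma statement or to prove \ref{lem:efoot-subset-framed} and \ref{lem:ifrm-implies-efrm} by \emph{simultaneous} induction, or better, to prove a combined statement: from $\assertion{\heap}{\perms'}{\env}{\gform}$ with $\perms' \subseteq \perms$ and the \emph{iso}-recursive framing $\ifrm{\heap}{\perms}{\env}{\gform}$, simultaneously derive both $\efoot{\heap}{\env}{\gform} \subseteq \perms$ and $\efrm{\heap}{\perms}{\env}{\gform}$.

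Concretely, I would set up the induction on the structure of the assertion derivation $\assertion{\heap}{\perms'}{\env}{\gform}$ (matching the spine of the $\ifrm$ derivation, which has the same syntactic shape except it threads through additional evaluation and framing premises). For the predicate case, the assertion premise gives $\assertion{\heap}{\perms'}{[\multiple{x\mapsto v}]}{\fpred(p)}$; since $\fpred(p)$ is a specification (the program is well-formed, Definition~\ref{def:well-formed-prog}), it is either imprecise --- in which case \refrule{AssertImprecise} already supplies $\efrm{\heap}{\perms'}{[\multiple{x\mapsto v}]}{\fpred(p)}$, which I can weaken to $\perms$ by a standard monotonicity-of-framing sublemma --- or it is self-framed, so $\assertion{} \implies \ifrm{}$ by Definition~\ref{def:self-framed}, and then the inductive hypothesis (applied to the strictly smaller assertion derivation for the body, with permission set $\perms' \subseteq \perms'$ taken as both arguments, then weakened to $\perms$) yields $\efrm{\heap}{\perms'}{[\multiple{x\mapsto v}]}{\fpred(p)}$ and hence the result for $\perms$. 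Combined with the argument-framing premises of \refrule{IFramePredicate}, \refrule{EFramePredicate} then applies.

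The main obstacle I anticipate is precisely this predicate case: making the induction well-founded. The body $\fpred(p)$ is not a syntactic subterm of $p(\multiple{e})$, so a naive structural induction on the formula fails; I must induct on the \emph{assertion derivation} (which \emph{does} shrink, since \refrule{AssertPredicate}'s premise is a strictly smaller derivation), and I must be careful that the iso-recursive framing hypothesis I carry along is available for the body --- which it is, for self-framed predicate bodies, by unfolding the definition of self-framedness, and for imprecise bodies, vacuously. A secondary bookkeeping obligation is a monotonicity lemma stating that $\efrm{\heap}{\perms_1}{\env}{\gform}$ and $\perms_1 \subseteq \perms_2$ imply $\efrm{\heap}{\perms_2}{\env}{\gform}$ (and similarly for $\frm{}{}{}{}$ and $\ifrm{}{}{}{}$), which is itself a routine induction but is used repeatedly to reconcile the $\perms'$ appearing in assertions with the $\perms$ appearing in framing judgements.
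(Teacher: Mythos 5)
Your final plan is correct and is essentially the paper's own proof: induct on the derivation of $\assertion{\heap}{\perms'}{\env}{\gform}$ (not on the formula or on the $\ifrm$ derivation), dispatch the non-predicate cases by inversion and congruence, and in the \refrule{AssertPredicate} case split on the body being a specification---imprecise (so \refrule{AssertImprecise} already supplies $\efrm$ of the body, weakened to $\perms$ by the monotonicity lemma \ref{lem:efrm-monotonicity}) or self-framed (so Definition~\ref{def:self-framed} turns the body assertion into $\ifrm$, letting the inductive hypothesis apply to the strictly smaller assertion derivation). The footprint detour via Lemmas \ref{lem:efoot-framing} and \ref{lem:efoot-subset-framed} in your middle paragraph is unnecessary, and you correctly abandon it; no simultaneous induction is needed.
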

\begin{proof}
  By induction on $\assertion{\heap}{\perms'}{\env}{\gform}$:
  \begin{enumcases}
    \case \refrule{AssertImprecise} -- $\assertion{\heap}{\perms'}{\env}{\simprecise{\phi}}$:

      By \refrule{AssertImprecise} $\efrm{\heap}{\perms}{\env}{\phi}$, thus by \refrule{EFrameImprecise} $\efrm{\heap}{\perms}{\env}{\simprecise{\phi}}$

    \case \refrule{AssertValue} -- $\assertion{\heap}{\perms'}{\env}{e}$:

      Since $\ifrm{\heap}{\perms}{\env}{e}$, by \refrule{IFrameValue} $\frm{\heap}{\perms}{\env}{e}$. Then by \refrule{EFrameValue} $\efrm{\heap}{\perms}{\env}{e}$.

    \case\label{case:equi-implies-iso-framing-ifa} \refrule{AssertIfA} -- $\assertion{\heap}{\perms'}{\env}{\sif{e}{\phi_1}{\phi_2}}$:

      By \refrule{AssertIfA} $\eval{\heap}{\env}{e}{\ktrue}$ and $\assertion{\heap}{\perms'}{\env}{\phi_1}$. Then by \refrule{IFrameConditionalA} $\ifrm{\heap}{\perms}{\env}{\phi_1}$. Thus by induction $\efrm{\heap}{\perms}{\env}{\phi_1}$ and then by \refrule{EFrameConditionalA} \\
      $\efrm{\heap}{\perms}{\env}{\sif{e}{\phi_1}{\phi_2}}$.

    \case \refrule{AssertIfB}: Similar to case \ref{case:equi-implies-iso-framing-ifa}.

    \case \refrule{AssertAcc} -- $\assertion{\heap}{\perms'}{\env}{\kacc(e.f)}$

      Since $\ifrm{\heap}{\perms}{\env}{\kacc(e.f)}$, by \refrule{IFrameAcc} $\frm{\heap}{\perms}{\env}{e}$. Thus by \refrule{EFrameAcc} $\efrm{\heap}{\perms}{\env}{\kacc(e.f)}$.

    \case \refrule{AssertConjunction} -- $\assertion{\heap}{\perms'}{\env}{\phi_1 * \phi_2}$:

      Since $\ifrm{\heap}{\perms}{\env}{\phi_1 * \phi_2}$, by \refrule{IFrameConjunction} $\ifrm{\heap}{\perms}{\env}{\phi_1}$ and $\ifrm{\heap}{\perms}{\env}{\phi_2}$. Also, by \refrule{AssertConjunction} $\assertion{\heap}{\perms_1}{\env}{\phi_1}$ and $\assertion{\heap}{\perms_2}{\env}{\phi_2}$ where $\perms_1, \perms_2 \subseteq \perms$. Therefore by induction $\efrm{\heap}{\perms}{\env}{\phi_1}$ and $\efrm{\heap}{\perms}{\env}{\phi_2}$, and thus by \refrule{EFrameConjunction} $\efrm{\heap}{\perms}{\env}{\phi_1 * \phi_2}$.

    \case \refrule{AssertPredicate} -- $\assertion{\heap}{\perms'}{\env}{p(\multiple{e})}$:

      Since $\ifrm{\heap}{\perms}{\env}{p(\multiple{e})}$, by \refrule{IFramePredicate} $\multiple{\frm{\heap}{\perms}{\env}{e}}$.

      Also, by \refrule{AssertPredicate} $\multiple{\eval{\heap}{\env}{e}{v}}$ and $\assertion{\heap}{\perms'}{[\multiple{x \mapsto v}]}{\fpred(p)}$. Now, since $p$ is a predicate, $\fpred(p)$ must be a specification, and thus one of the following cases applies:

      \subcase $\fpred(p)$ is precise and self-framed: Then $\ifrm{\heap}{\perms}{[\multiple{x \mapsto v}]}{\fpred(p)}$ by definition \ref{def:self-framed}. Thus by induction $\efrm{\heap}{\perms}{[\multiple{x \mapsto v}]}{\fpred(p)}$. Then by \refrule{EFramePredicate} $\efrm{\heap}{\perms}{\env}{p(\multiple{e})}$.

      \subcase $\fpred(p)$ is imprecise: Then \refrule{AssertImprecise} must apply in order to derive \\
      $\assertion{\heap}{\perms'}{[\multiple{x \mapsto v}]}{\fpred(p)}$, and thus $\efrm{\heap}{\perms'}{[\multiple{x \mapsto v}]}{\fpred(p)}$, and thus by lemma \ref{lem:efrm-monotonicity} $\efrm{\heap}{\perms}{[\multiple{x \mapsto v}]}{\fpred(p)}$. Then by \refrule{EFramePredicate} $\efrm{\heap}{\perms}{\env}{p(\multiple{e})}$.
  \end{enumcases}
\end{proof}

\begin{lemma}[Relating specification assertion and exact footprints]\label{lem:efoot-subset-spec}
  If $\gform$ is a specification and $\assertion{\heap}{\perms}{\env}{\gform}$, then $\efoot{\heap}{\env}{\gform} \subseteq \perms$.
\end{lemma}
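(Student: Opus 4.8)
The plan is to derive this lemma directly from the two preceding lemmas, \ref{lem:efoot-subset-framed} and \ref{lem:ifrm-implies-efrm}, by a two-case split on the definition of \emph{specification} (definition \ref{def:specification}): either $\gform$ is imprecise, or $\gform$ is precise and self-framed. No induction on the assertion derivation should be needed here, since the structural work has already been pushed into lemma \ref{lem:efoot-subset-framed}.

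In the imprecise case, $\gform = \simprecise{\phi}$ for some $\phi \in \Formula$. Since $\assertion{\heap}{\perms}{\env}{\simprecise{\phi}}$ must be derived by \refrule{AssertImprecise}, I would read off its premises $\assertion{\heap}{\perms}{\env}{\phi}$ and $\efrm{\heap}{\perms}{\env}{\phi}$. Because $\efoot{\heap}{\env}{\simprecise{\phi}} = \efoot{\heap}{\env}{\phi}$ by definition of exact footprints, applying lemma \ref{lem:efoot-subset-framed} with $\perms' := \perms$ (so $\perms' \subseteq \perms$ holds trivially) yields $\efoot{\heap}{\env}{\phi} \subseteq \perms$, which is what we want.

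In the precise-and-self-framed case, I would first invoke definition \ref{def:self-framed}: from $\assertion{\heap}{\perms}{\env}{\gform}$ we get $\ifrm{\heap}{\perms}{\env}{\gform}$. Then lemma \ref{lem:ifrm-implies-efrm}, again with $\perms' := \perms$, upgrades this to $\efrm{\heap}{\perms}{\env}{\gform}$. Finally lemma \ref{lem:efoot-subset-framed} (with $\perms' := \perms$) gives $\efoot{\heap}{\env}{\gform} \subseteq \perms$, completing the proof. The only mild subtlety — and the closest thing to an obstacle — is remembering to route through \ref{lem:ifrm-implies-efrm} to convert isorecursive framing (which is all self-framedness guarantees) into the equirecursive framing hypothesis that \ref{lem:efoot-subset-framed} demands; everything else is bookkeeping, since all the $\perms' \subseteq \perms$ side conditions are satisfied by taking $\perms' = \perms$.
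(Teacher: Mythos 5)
Your proposal is correct and follows essentially the same route as the paper's own proof: a case split on whether the specification is imprecise or precise-and-self-framed, using \refrule{AssertImprecise} plus lemma \ref{lem:efoot-subset-framed} in the first case, and definition \ref{def:self-framed} followed by lemmas \ref{lem:ifrm-implies-efrm} and \ref{lem:efoot-subset-framed} in the second, with $\perms' = \perms$ throughout. No gaps.
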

\begin{proof}
  Since $\gform$ is a specification, one of the following cases must apply:
  \begin{enumcases}
    \case $\gform$ is precise and self-framed: Then by definition \ref{def:self-framed} $\ifrm{\heap}{\perms}{\env}{\gform}$, and since $\assertion{\heap}{\perms}{\env}{\gform}$, by lemma \ref{lem:ifrm-implies-efrm} $\efrm{\heap}{\perms}{\env}{\gform}$. Therefore, by lemma \ref{lem:efoot-subset-framed} $\efoot{\heap}{\env}{\gform} \subseteq \perms$.

    \case $\gform$ is imprecise: Then $\gform = \simprecise{\phi}$ for some $\phi \in \Formula$, and \refrule{AssertImprecise} must apply in order to derive $\assertion{\heap}{\perms}{\env}{\gform}$. Therefore $\efrm{\heap}{\perms}{\env}{\phi}$, and thus by lemma \ref{lem:efoot-subset-framed} $\efoot{\heap}{\env}{\phi} \subseteq \perms$. Thus by definition $\efoot{\heap}{\env}{\gform} = \efoot{\heap}{\env}{\phi} \subseteq \perms$.
  \end{enumcases}
\end{proof}

\begin{lemma}[Relating specification assertion and footprints]\label{lem:foot-subset-spec}
  If $\gform$ is a specification and $\assertion{\heap}{\perms}{\env}{\gform}$, then $\foot{\heap}{\perms}{\env}{\gform} \subseteq \perms$.
\end{lemma}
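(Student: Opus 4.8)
The plan is to prove this by a straightforward case analysis on the definition of the maximal footprint (definition \ref{def:footprint}), which itself branches on whether $\gform$ is completely precise. The two cases are almost immediate once the right earlier lemma is invoked, so this lemma is essentially a corollary of lemma \ref{lem:efoot-subset-spec}.

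First I would unfold $\foot{\heap}{\perms}{\env}{\gform}$. In the case where $\gform$ is completely precise (definition \ref{def:precise-formula}), we have $\foot{\heap}{\perms}{\env}{\gform} = \efoot{\heap}{\env}{\gform}$ by definition. Since $\gform$ is a specification and $\assertion{\heap}{\perms}{\env}{\gform}$ holds by hypothesis, lemma \ref{lem:efoot-subset-spec} applies directly and yields $\efoot{\heap}{\env}{\gform} \subseteq \perms$, hence $\foot{\heap}{\perms}{\env}{\gform} \subseteq \perms$.

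In the remaining case, where $\gform$ is not completely precise, we have $\foot{\heap}{\perms}{\env}{\gform} = \perms$ by definition, so $\foot{\heap}{\perms}{\env}{\gform} \subseteq \perms$ trivially. This exhausts both branches of the definition, completing the proof.

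There is no real obstacle here: the only subtlety is making sure the hypotheses of lemma \ref{lem:efoot-subset-spec} ($\gform$ a specification, and the assertion judgement) are exactly the hypotheses we are given, which they are. The work of handling self-framed versus imprecise specifications, and of relating iso- and equi-recursive framing to exact footprints, was already discharged in lemmas \ref{lem:efoot-subset-framed}, \ref{lem:ifrm-implies-efrm}, and \ref{lem:efoot-subset-spec}, so this statement needs only the one-line case split described above.
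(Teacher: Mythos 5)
Your proof is correct and follows exactly the same route as the paper's: a case split on whether $\gform$ is completely precise, invoking lemma \ref{lem:efoot-subset-spec} in the precise case and the identity $\foot{\heap}{\perms}{\env}{\gform} = \perms$ in the other. No gaps.
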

\begin{proof}
  If $\gform$ is completely precise then $\foot{\heap}{\perms}{\env}{\gform} = \efoot{\heap}{\env}{\gform} \subseteq \perms$.

  Otherwise, $\foot{\heap}{\perms}{\env}{\gform} = \perms$.
\end{proof}

\begin{lemma}[Relating specification exact footprints and footprints]\label{lem:foot-subset-efoot-spec}
  If $\gform$ is a specification and $\assertion{\heap}{\perms}{\env}{\gform}$, then $\efoot{\heap}{\env}{\gform} \subseteq \foot{\heap}{\perms}{\env}{\gform}$.
\end{lemma}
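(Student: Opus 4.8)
Proof proposal for Lemma \ref{lem:foot-subset-efoot-spec}.

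The plan is to reduce this immediately to the two cases in the definition of the maximal footprint (definition \ref{def:footprint}) and dispatch each with an already-established lemma. First I would observe that since $\gform$ is a specification, either $\gform$ is completely precise or it is not. In the first case, $\foot{\heap}{\perms}{\env}{\gform} = \efoot{\heap}{\env}{\gform}$ by definition \ref{def:footprint}, so the containment $\efoot{\heap}{\env}{\gform} \subseteq \foot{\heap}{\perms}{\env}{\gform}$ holds trivially (indeed with equality). In the second case, $\foot{\heap}{\perms}{\env}{\gform} = \perms$ by definition \ref{def:footprint}, so the goal becomes exactly $\efoot{\heap}{\env}{\gform} \subseteq \perms$, which is precisely the conclusion of lemma \ref{lem:efoot-subset-spec} applied to the specification $\gform$ using the hypothesis $\assertion{\heap}{\perms}{\env}{\gform}$.

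Thus the proof is a two-line case analysis with no real computation. There is no substantive obstacle here: all the heavy lifting (relating assertion, iso-/equi-recursive framing, and exact footprints) has already been done in lemmas \ref{lem:efoot-framing}, \ref{lem:efoot-subset-framed}, \ref{lem:ifrm-implies-efrm}, and \ref{lem:efoot-subset-spec}. The only thing to be careful about is that lemma \ref{lem:efoot-subset-spec} requires $\gform$ to be a specification, which is given, and that "not completely precise" is exactly the condition under which definition \ref{def:footprint} yields $\perms$; both are immediate. If I wanted to avoid even the appeal to lemma \ref{lem:efoot-subset-spec} I could instead note that lemma \ref{lem:foot-subset-spec} already gives $\foot{\heap}{\perms}{\env}{\gform} \subseteq \perms$ in general, but that goes the wrong direction, so lemma \ref{lem:efoot-subset-spec} is the right tool for the imprecise case.

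So, concretely, the steps in order are: (1) invoke the specification hypothesis to split on completely-precise versus not; (2) in the completely-precise case, rewrite $\foot{}{}{}{}$ via definition \ref{def:footprint} and conclude by reflexivity of $\subseteq$; (3) in the remaining case, rewrite $\foot{}{}{}{}$ to $\perms$ via definition \ref{def:footprint} and apply lemma \ref{lem:efoot-subset-spec} with the assertion hypothesis to close the goal.
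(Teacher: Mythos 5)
Your proposal is correct and matches the paper's own proof exactly: both split on whether $\gform$ is completely precise, observe that the footprints coincide in the precise case, and reduce the imprecise case to lemma \ref{lem:efoot-subset-spec} via $\foot{\heap}{\perms}{\env}{\gform} = \perms$. Nothing further is needed.
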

\begin{proof}
  If $\gform$ is completely precise then $\efoot{\heap}{\env}{\gform} = \foot{\heap}{\perms}{\env}{\gform}$.

  Otherwise, by lemma \ref{lem:efoot-subset-spec} $\efoot{\heap}{\env}{\gform} \subseteq \perms = \foot{\heap}{\perms}{\env}{\gform}$.
\end{proof}

\begin{lemma}[Monotonicity of expression framing WRT permissions]\label{lem:framing-monotonicity}
  If $\frm{\heap}{\perms}{\env}{e}$ and $\perms \subseteq \perms'$, then $\frm{\heap}{\perms'}{\env}{e}$.
\end{lemma}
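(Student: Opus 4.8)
The plan is to derive this lemma almost immediately from Lemma~\ref{lem:efoot-framing}, which already establishes the equivalence $\frm{\heap}{\perms}{\env}{e} \iff \efoot{\heap}{\env}{e} \subseteq \perms$. Since the exact footprint $\efoot{\heap}{\env}{e}$ does not depend on the permission set at all — only on $\heap$, $\env$, and the syntax of $e$ — the statement reduces to transitivity of set inclusion. Concretely, I would first apply Lemma~\ref{lem:efoot-framing} to the hypothesis $\frm{\heap}{\perms}{\env}{e}$ to obtain $\efoot{\heap}{\env}{e} \subseteq \perms$; then chain this with the assumption $\perms \subseteq \perms'$ to get $\efoot{\heap}{\env}{e} \subseteq \perms'$; then apply Lemma~\ref{lem:efoot-framing} in the other direction to conclude $\frm{\heap}{\perms'}{\env}{e}$.

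If one prefers a self-contained argument not routed through exact footprints, the alternative is a routine structural induction on $e$ (equivalently, on the derivation of $\frm{\heap}{\perms}{\env}{e}$). The literal and variable cases are immediate from \refrule{FrameLiteral} and \refrule{FrameVar}, which hold for any permission set. The operator, negation, and short-circuiting disjunction/conjunction cases follow directly by applying the induction hypothesis to each subexpression and re-applying the corresponding frame rule, noting that the side conditions of the form $\eval{\heap}{\env}{e_1}{\ktrue}$ or $\eval{\heap}{\env}{e_1}{\kfalse}$ are independent of the permission set. The only case with any content is $e.f$: here \refrule{FrameField} requires both $\frm{\heap}{\perms}{\env}{e}$ (handled by induction) and $\assertion{\heap}{\perms}{\env}{\kacc(e.f)}$, which by \refrule{AssertAcc} unfolds to $\eval{\heap}{\env}{e}{\ell}$ with $\pair{\ell}{f} \in \perms$; since $\perms \subseteq \perms'$ we get $\pair{\ell}{f} \in \perms'$, so \refrule{AssertAcc} re-derives $\assertion{\heap}{\perms'}{\env}{\kacc(e.f)}$ and \refrule{FrameField} gives $\frm{\heap}{\perms'}{\env}{e.f}$.

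I do not anticipate a genuine obstacle here; the lemma is essentially a bookkeeping fact about weakening the permission context. The only thing to be slightly careful about is which route to take: using Lemma~\ref{lem:efoot-framing} gives the shortest proof but creates a dependency on that lemma, whereas the direct induction keeps the dependency graph cleaner (depending only on the framing and assertion rules). I would go with the Lemma~\ref{lem:efoot-framing} route for brevity, since that lemma is already available and its proof does not itself rely on monotonicity, so there is no circularity.
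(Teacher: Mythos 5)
Your primary route is exactly the paper's proof: it applies Lemma~\ref{lem:efoot-framing} to the hypothesis, chains the resulting inclusion $\efoot{\heap}{\env}{e} \subseteq \perms \subseteq \perms'$, and applies Lemma~\ref{lem:efoot-framing} again in the reverse direction. Your observation that there is no circularity is also correct, since the proof of Lemma~\ref{lem:efoot-framing} proceeds by induction on the syntax of $e$ and does not invoke monotonicity.
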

\begin{proof}
  By \ref{lem:efoot-framing}, $\efoot{\heap}{\env}{e} \subseteq \perms$, thus $\efoot{\heap}{\env}{e} \subseteq \perms'$, and therefore $\frm{\heap}{\perms'}{\env}{e}$ by \ref{lem:efoot-framing}.
\end{proof}

\begin{lemma}[Monotonicity of equi-recursive framing WRT permissions]\label{lem:efrm-monotonicity}
  If $\efrm{\heap}{\perms}{\env}{\gform}$ and $\perms \subseteq \perms'$, then $\efrm{\heap}{\perms'}{\env}{\gform}$.
\end{lemma}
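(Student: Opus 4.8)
The plan is a routine structural induction on the derivation of $\efrm{\heap}{\perms}{\env}{\gform}$, case-splitting on the last rule applied. In every case the passage from $\perms$ to $\perms'$ is handled by combining the induction hypothesis on the sub-derivations with lemma \ref{lem:framing-monotonicity} (monotonicity of expression framing) for the premises of the form $\frm{\heap}{\perms}{\env}{e}$, while observing that the evaluation premises $\eval{\heap}{\env}{e}{v}$ do not mention the permission set at all and hence carry over verbatim.

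Concretely, first I would dispatch the leaf cases \refrule{EFrameExpression} and \refrule{EFrameAcc}: their only permission-dependent premise is $\frm{\heap}{\perms}{\env}{e}$, which upgrades to $\frm{\heap}{\perms'}{\env}{e}$ by lemma \ref{lem:framing-monotonicity}, and re-applying the same rule yields $\efrm{\heap}{\perms'}{\env}{\gform}$. Next, the purely structural recursive cases \refrule{EFrameConjunction} and (if present) \refrule{EFrameImprecise}: here each sub-derivation $\efrm{\heap}{\perms}{\env}{\phi_i}$ becomes $\efrm{\heap}{\perms'}{\env}{\phi_i}$ by the induction hypothesis, and re-applying the rule finishes. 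For \refrule{EFrameConditionalA} and \refrule{EFrameConditionalB}, the evaluation premise on $e$ is untouched, the framing premise on $e$ upgrades by lemma \ref{lem:framing-monotonicity}, and the premise $\efrm{\heap}{\perms}{\env}{\phi_1}$ (resp.\ $\phi_2$) upgrades by the induction hypothesis, so the same conditional framing rule applies with $\perms'$.

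The one case worth spelling out is \refrule{EFramePredicate} for $\gform = p(\multiple{e})$. Its premises are $\multiple{\frm{\heap}{\perms}{\env}{e}}$, $\multiple{\eval{\heap}{\env}{e}{v}}$, $\multiple{x} = \fpredparams(p)$, and $\efrm{\heap}{\perms}{[\multiple{x \mapsto v}]}{\fpred(p)}$. The argument-framing premises upgrade by lemma \ref{lem:framing-monotonicity}; the evaluation premises and the $\fpredparams$ equation are permission-independent; and the premise on the unrolled predicate body $\fpred(p)$ upgrades by the induction hypothesis applied to that strictly smaller sub-derivation. Re-applying \refrule{EFramePredicate} with $\perms'$ then gives $\efrm{\heap}{\perms'}{\env}{p(\multiple{e})}$. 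I expect this predicate case to be the only place requiring any care, and the care is merely in noting that equi-recursive framing is an \emph{inductively} defined relation, so its derivations are finite and the unrolling of $\fpred(p)$ is a legitimate smaller sub-derivation on which the induction hypothesis is available; no coinductive subtlety arises. This completes all cases.
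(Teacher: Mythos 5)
Your proposal is correct and matches the paper's own proof, which is exactly the same induction on the derivation of $\efrm{\heap}{\perms}{\env}{\gform}$, discharging each case by the induction hypothesis together with lemma \ref{lem:framing-monotonicity} for the $\frm{\heap}{\perms}{\env}{e}$ premises. Your extra remark on the \refrule{EFramePredicate} case (that the unrolled body is a legitimate smaller sub-derivation of an inductively defined relation) is a correct elaboration of what the paper leaves implicit.
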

\begin{proof}
  By induction on $\efrm{\heap}{\perms}{\env}{\gform}$:
  \begin{enumcases}
    \case\refrule{EFrameExpression} -- $\efrm{\heap}{\perms}{\env}{e}$: By lemma \ref{lem:framing-monotonicity}.
    \case\refrule{EFrameConjunction} -- $\efrm{\heap}{\perms}{\env}{\phi_1 * \phi_2}$: By induction
    \case\refrule{EFramePredicate} -- $\efrm{\heap}{\perms}{\env}{p(\multiple{e})}$: By induction and lemma \ref{lem:framing-monotonicity}.
    \case\refrule{EFrameConditionalA} -- $\efrm{\heap}{\perms}{\env}{\sif{e}{\phi_1}{\phi_2}}$: By induction and lemma \ref{lem:framing-monotonicity}.
    \case\refrule{EFrameConditionalB} -- $\efrm{\heap}{\perms}{\env}{\sif{e}{\phi_1}{\phi_2}}$: By induction and lemma \ref{lem:framing-monotonicity}.
    \case\refrule{EFrameAcc} -- $\efrm{\heap}{\perms}{\env}{\kacc(e)}$: By lemma \ref{lem:framing-monotonicity}.
  \end{enumcases}
\end{proof}

\begin{lemma}[Monotonicity of assertions WRT permissions]\label{lem:assert-monotonicity}
  If $\assertion{\heap}{\perms}{\env}{\gform}$ and $\perms \subseteq \perms'$, then $\assertion{\heap}{\perms'}{\env}{\gform}$.
\end{lemma}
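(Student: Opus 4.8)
The plan is to proceed by induction on the derivation of $\assertion{\heap}{\perms}{\env}{\gform}$, with a case for each assertion rule in Figure~\ref{fig:dynamic-rules} (equivalently, each rule in the run-time semantics section defining the judgement). The overall shape mirrors the inductions already carried out in lemmas \ref{lem:efoot-subset-framed}, \ref{lem:ifrm-implies-efrm}, and \ref{lem:efrm-monotonicity}, so it should be routine.

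First I would dispatch the cases that do not touch permissions at all: \refrule{AssertValue} depends only on $\eval{\heap}{\env}{e}{\ktrue}$, so it re-derives unchanged for $\perms'$; \refrule{AssertIfA} and \refrule{AssertIfB} re-derive after applying the induction hypothesis to the single sub-assertion $\assertion{\heap}{\perms}{\env}{\phi_i}$; likewise \refrule{AssertPredicate} re-derives after applying the induction hypothesis to $\assertion{\heap}{\perms}{[\multiple{x\mapsto v}]}{\fpred(p)}$, since the evaluations of the arguments are independent of the permission set. For \refrule{AssertAcc}, from $\pair{\ell}{f}\in\perms$ and $\perms\subseteq\perms'$ we get $\pair{\ell}{f}\in\perms'$, so the rule re-applies directly.

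Next, the \refrule{AssertConjunction} case: the premise gives $\perms_1,\perms_2$ with $\perms_1\cap\perms_2=\emptyset$, $\perms_1\cup\perms_2\subseteq\perms$, and sub-assertions $\assertion{\heap}{\perms_1}{\env}{\phi_1}$, $\assertion{\heap}{\perms_2}{\env}{\phi_2}$. Since $\perms_1\cup\perms_2\subseteq\perms\subseteq\perms'$, the \emph{same} witnesses $\perms_1,\perms_2$ satisfy all side conditions for $\perms'$, and the sub-assertions are unchanged — no appeal to the induction hypothesis is even needed here. Finally, for \refrule{AssertImprecise} on $\simprecise{\phi}$, the premise gives $\assertion{\heap}{\perms}{\env}{\phi}$ and $\efrm{\heap}{\perms}{\env}{\phi}$; by the induction hypothesis $\assertion{\heap}{\perms'}{\env}{\phi}$, and by lemma \ref{lem:efrm-monotonicity} (monotonicity of equi-recursive framing) $\efrm{\heap}{\perms'}{\env}{\phi}$, so \refrule{AssertImprecise} re-applies to give $\assertion{\heap}{\perms'}{\env}{\simprecise{\phi}}$.

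I do not anticipate a genuine obstacle; the only point requiring a previously established result (rather than just the induction hypothesis) is the \refrule{AssertImprecise} case, which relies on lemma \ref{lem:efrm-monotonicity}. The \refrule{AssertConjunction} case is the one where a reader might expect difficulty — needing to re-split $\perms'$ — but in fact the monotone direction makes it trivial, since enlarging the ambient set preserves the disjoint decomposition already in hand.
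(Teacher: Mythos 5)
Your proposal is correct and matches the paper's own proof essentially case for case: induction on the derivation, the \refrule{AssertImprecise} case discharged via the induction hypothesis together with lemma \ref{lem:efrm-monotonicity}, \refrule{AssertAcc} by direct membership in the larger set, and \refrule{AssertConjunction} by reusing the same witness decomposition $\perms_1,\perms_2$ since $\perms_1\cup\perms_2\subseteq\perms\subseteq\perms'$. No further changes are needed.
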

\begin{proof}
  By induction on $\assertion{\heap}{\perms}{\env}{\gform}$:
  \begin{enumcases}
    \case\refrule{AssertImprecise} -- $\assertion{\heap}{\perms}{\env}{\simprecise{\phi}}$: By induction and lemma \ref{lem:efrm-monotonicity}.
    \case\refrule{AssertValue} -- $\assertion{\heap}{\perms}{\env}{e}$: Trivial.
    \case\refrule{AssertIfA} -- $\assertion{\heap}{\perms}{\env}{\sif{e}{\phi_1}{\phi_2}}$: By induction.
    \case\refrule{AssertIfB} -- $\assertion{\heap}{\perms}{\env}{\sif{e}{\phi_1}{\phi_2}}$: By induction.
    \case\refrule{AssertAcc} -- $\assertion{\heap}{\perms}{\env}{\kacc(e.f)}$: Then $\eval{\heap}{\env}{e}{\ell}$ and $\pair{\ell}{f} \in \perms$, thus $\pair{\ell}{f} \in \perms'$. Therefore $\assertion{\heap}{\perms'}{\env}{\kacc(e.f)}$ by \refrule{AssertAcc}.
    \case\refrule{AssertConjunction} -- $\assertion{\heap}{\perms}{\env}{\phi_1 * \phi_2}$:
      Then $\assertion{\heap}{\perms}{\env}{\phi_1}$ and $\assertion{\heap}{\perms_2}{\env}{\phi_2}$ where $\perms_1 \cap \perms_2 = \emptyset$ and $\perms_1 \cup \perms_2 \subseteq \perms \subseteq \perms'$. Therefore $\assertion{\heap}{\perms'}{\env}{\phi_1 * \phi_2}$ by \refrule{AssertConjunction}.
    \case\refrule{AssertPredicate} -- $\assertion{\heap}{\perms}{\env}{p(\multiple{e})}$: By induction.
  \end{enumcases}
\end{proof}

\begin{lemma}[Exact footprint preserves equi-recursive framing]\label{lem:efoot-efrm}
  If $\efrm{\heap}{\perms}{\env}{\gform}$, then \\
  $\efrm{\heap}{\efoot{\heap}{\env}{\gform} \cap \perms}{\env}{\gform}$.
\end{lemma}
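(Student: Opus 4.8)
The plan is to prove this by induction on the derivation of $\efrm{\heap}{\perms}{\env}{\gform}$, exploiting the fact that equi-recursive framing of $\gform$ depends only on the permissions in its exact footprint, together with the monotonicity lemmas \ref{lem:framing-monotonicity} and \ref{lem:efrm-monotonicity}. The recurring engine in every case is Lemma \ref{lem:efoot-framing}: $\frm{\heap}{\perms}{\env}{e}$ is equivalent to $\efoot{\heap}{\env}{e} \subseteq \perms$, so whenever an expression $e$ appears as a side condition of a framing rule, we get for free that $\frm{\heap}{\efoot{\heap}{\env}{\gform} \cap \perms}{\env}{e}$, because the defining equations of $\efoot$ force $\efoot{\heap}{\env}{e} \subseteq \efoot{\heap}{\env}{\gform}$ and (from the premise) $\efoot{\heap}{\env}{e} \subseteq \perms$.

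Concretely, in the base cases \refrule{EFrameExpression} and \refrule{EFrameAcc} I would unfold $\efoot$, use Lemma \ref{lem:efoot-framing} on the premise to conclude $\efoot{\heap}{\env}{e} \subseteq \perms$ (hence the relevant intersection equals $\efoot{\heap}{\env}{e}$ in the first case and contains it in the second), and re-apply the very same rule. For the inductive cases \refrule{EFrameConjunction}, \refrule{EFramePredicate}, \refrule{EFrameConditionalA} and \refrule{EFrameConditionalB}, I would apply the induction hypothesis to each equi-recursive subgoal, obtaining framing under the subcomponent's exact footprint intersected with $\perms$; then, since by the defining equations of $\efoot$ the footprint of $\gform$ is the union of the footprints of its subcomponents — using determinism of $\eval$ and the premise $\eval{\heap}{\env}{e}{\ktrue}$ (resp. $\kfalse$) to pin down which clause of the $\efoot$ definition for conditionals is in force — each such set is contained in $\efoot{\heap}{\env}{\gform} \cap \perms$; finally I would lift each subgoal to this larger permission set via Lemma \ref{lem:efrm-monotonicity} (for formula subgoals) or Lemma \ref{lem:framing-monotonicity} (for the guarding expressions) and reassemble with the same framing rule.

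The only mild subtlety is \refrule{EFramePredicate}: there $\efoot{\heap}{\env}{p(\multiple{e})}$ is defined through the equi-recursive unrolling $\efoot{\heap}{[\multiple{x \mapsto v}]}{\fpred(p)}$, which is well-defined precisely because the subderivation $\efrm{\heap}{\perms}{[\multiple{x \mapsto v}]}{\fpred(p)}$ is finite; since the induction is on that derivation, the recursive structure of $\efoot$ mirrors it and the induction hypothesis applies directly. I expect no genuine obstacle here — the work is almost entirely bookkeeping of the inclusions $\efoot{\heap}{\env}{\phi_i} \cap \perms \subseteq \efoot{\heap}{\env}{\gform} \cap \perms$ and consistent use of $\eval$-determinism; if anything, the trickiest point to state cleanly is making the conditional cases line up, since the $\efoot$ clause selected there is itself conditioned on the evaluation of $e$.
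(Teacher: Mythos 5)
Your proposal matches the paper's proof essentially step for step: both proceed by induction on the derivation of $\efrm{\heap}{\perms}{\env}{\gform}$, use Lemma \ref{lem:efoot-framing} to handle every embedded expression, exploit the union structure of $\efoot$ (with the evaluation premise selecting the clause in the conditional cases), and lift subgoals to $\efoot{\heap}{\env}{\gform} \cap \perms$ via Lemmas \ref{lem:efrm-monotonicity} and \ref{lem:framing-monotonicity} before reassembling with the same framing rule. The treatment of \refrule{EFramePredicate} via the finiteness of the subderivation is also exactly how the paper's induction handles the recursive unrolling, so there is nothing to correct.
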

\begin{proof}
  By induction on $\efrm{\heap}{\perms}{\env}{\gform}$:
  \begin{enumcases}
    \case\label{case:efoot-efrm-expr} \refrule{EFrameExpression} -- $\efrm{\heap}{\perms}{\env}{e}$:
      By \refrule{EFrameExpression} $\frm{\heap}{\perms}{\env}{e}$, thus $\efoot{\heap}{\env}{e} \subseteq \perms$ and $\frm{\heap}{\efoot{\heap}{\env}{e}}{\env}{e}$ by lemma \ref{lem:efoot-framing}. Then $\efoot{\heap}{\env}{e} \cap \perms = \efoot{\heap}{\env}{e}$. Therefore $\efrm{\heap}{\efoot{\heap}{\env}{e} \cap \perms}{\env}{e}$ by \refrule{EFrameExpression}.

    \case \refrule{EFrameConjunction} -- $\efrm{\heap}{\perms}{\env}{e_1 * e_2}$:
      By \refrule{EFrameConjunction} $\efrm{\heap}{\perms}{\env}{e_1}$ and $\efrm{\heap}{\perms}{\env}{e_2}$, thus by induction $\efrm{\heap}{\efoot{\heap}{\env}{e_1} \cap \perms}{\env}{e_1}$ and $\efrm{\heap}{\efoot{\heap}{\env}{e_2} \cap \perms}{\env}{e_2}$.
      
      By definition $\efoot{\heap}{\env}{e_1 * e_2} \cap \perms = (\efoot{\heap}{\env}{e_1} \cap \perms) \cup (\efoot{\heap}{\env}{e_2} \cap \perms)$, thus $\efrm{\heap}{\efoot{\heap}{\env}{e_1 * e_2} \cap \perms}{\env}{e_1}$ and $\efrm{\heap}{\efoot{\heap}{\env}{e_1 * e_2} \cap \perms}{\env}{e_2}$ by lemma \ref{lem:efrm-monotonicity}. Therefore $\efrm{\heap}{\efoot{\heap}{\env}{e_1 * e_2} \cap \perms}{\env}{e_1 * e_2}$ by \refrule{EFrameConjunction}.

    \case \refrule{EFramePredicate} -- $\efrm{\heap}{\perms}{\env}{p(\multiple{e})}$:
      By \refrule{EFramePredicate} $\multiple{\frm{\heap}{\perms}{\env}{e}}$, thus by lemma \ref{lem:efoot-framing} $\multiple{\efoot{\heap}{\env}{e} \subseteq \perms}$, thus $\multiple{\efoot{\heap}{\env}{e} \cap \perms = \efoot{\heap}{\env}{e}}$, and then $\multiple{\frm{\heap}{\efoot{\heap}{\env}{e} \cap \perms}{\env}{e}}$.

      By \refrule{EFramePredicate} $\multiple{\eval{\heap}{\env}{e}{v}}$ and $\efrm{\heap}{\perms}{[\multiple{x \mapsto v}]}{\fpred(p)}$ where \\
      $\multiple{x} = \fpredparams(p)$. Thus by induction $\efrm{\heap}{\efoot{\heap}{[\multiple{x \mapsto v}]}{\fpred(p)} \cap \perms}{[\multiple{x \mapsto v}]}{\fpred(p)}$.

      Now by definition $\efoot{\heap}{\perms}{p(\multiple{e})} \cap \perms = (\efoot{\heap}{[\multiple{x \mapsto v}]}{\fpred(p)} \cap \perms) \cup \bigcup\multiple{(\efoot{\heap}{\env}{e} \cap \perms)}$. Therefore by lemma \ref{lem:efrm-monotonicity} $\efrm{\heap}{\efoot{\heap}{\env}{p(\multiple{e})} \cap \perms}{[\multiple{x \mapsto v}]}{\fpred(p)}$ and \\
      $\multiple{\frm{\heap}{\efoot{\heap}{\env}{p(\multiple{e})} \cap \perms}{\env}{e}}$.

      Thus $\efrm{\heap}{\efoot{\heap}{\env}{p(\multiple{e})} \cap \perms}{\env}{p(\multiple{e})}$ by \refrule{EFramePredicate}.

    \case\label{case:efoot-efrm-ifa} \refrule{EFrameConditionalA} -- $\efrm{\heap}{\perms}{\env}{\sif{e}{\phi_1}{\phi_2}}$:
      By \refrule{EFrameConditionalA} $\efrm{\heap}{\perms}{\env}{\phi_1}$ and $\frm{\heap}{\perms}{\env}{e}$, thus by induction $\efrm{\heap}{\efoot{\heap}{\env}{\phi_1} \cap \perms}{\env}{\phi_1}$, and by lemma \ref{lem:efoot-framing} $\efoot{\heap}{\env}{e} \subseteq \perms$, thus $\efoot{\heap}{\env}{e} \cap \perms = \efoot{\heap}{\env}{e}$, and finally $\frm{\heap}{\efoot{\heap}{\env}{e} \cap \perms}{\env}{e}$.

      Also by \refrule{EFrameConditionalA} $\eval{\heap}{\env}{e}{\ktrue}$, thus by definition $\efoot{\heap}{\env}{\sif{e}{\phi_1}{\phi_2}} \cap \perms = (\efoot{\heap}{\env}{e} \cap \perms) \cup (\efoot{\heap}{\env}{\phi_1} \cap \perms)$. Therefore $\efrm{\heap}{\efoot{\heap}{\env}{\sif{e}{\phi_1}{\phi_2}} \cap \perms}{\env}{\phi_1}$ by lemma \ref{lem:efrm-monotonicity} and $\frm{\heap}{\efoot{\heap}{\env}{\sif{e}{\phi_1}{\phi_2}} \cap \perms}{\env}{e}$ by \ref{lem:framing-monotonicity}, thus \\
      $\efrm{\heap}{\efoot{\heap}{\env}{\sif{e}{\phi_1}{\phi_2}} \cap \perms}{\env}{\sif{e}{\phi_1}{\phi_2}}$ by \refrule{EFrameConditionalA}.

    \case \refrule{EFrameConditionalB} -- $\efrm{\heap}{\perms}{\env}{\sif{e}{\phi_1}{\phi_2}}$: Similar to case \ref{case:efoot-efrm-ifa}.

    \case \refrule{EFrameAcc} -- $\efrm{\heap}{\perms}{\env}{\kacc(e.f)}$:
      By \refrule{EFrameAcc} $\frm{\heap}{\perms}{\env}{e}$, thus by lemma \ref{lem:efoot-framing} $\efoot{\heap}{\env}{e} \subseteq \perms$, thus $\efoot{\heap}{\env}{e} \cap \perms = \efoot{\heap}{\env}{e}$, and thus $\frm{\heap}{\efoot{\heap}{\env}{e} \cap \perms}{\env}{e}$. By definition $\efoot{\heap}{\env}{e} \subseteq \efoot{\heap}{\env}{\kacc(e.f)}$, thus $\frm{\heap}{\efoot{\heap}{\env}{\kacc(e.f)} \cap \perms}{\env}{e}$ by \ref{lem:framing-monotonicity}. Therefore $\efrm{\heap}{\efoot{\heap}{\env}{\kacc(e.f)} \cap \perms}{\env}{\kacc(e.f)}$ by \refrule{EFrameAcc}.
  \end{enumcases}
\end{proof}

\begin{lemma}[Exact footprint preserves assertions]\label{lem:efoot-assert}
  If $\assertion{\heap}{\perms}{\env}{\gform}$, then $\assertion{\heap}{\efoot{\heap}{\env}{\gform} \cap \perms}{\env}{\gform}$.
\end{lemma}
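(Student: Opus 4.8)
The plan is to prove this by induction on the derivation of $\assertion{\heap}{\perms}{\env}{\gform}$, exactly mirroring the structure of the proof of Lemma~\ref{lem:efoot-efrm}. Throughout, write $\perms^\star := \efoot{\heap}{\env}{\gform} \cap \perms$ for the target permission set; in each case the goal is to re-derive the assertion with $\perms^\star$ in place of $\perms$, using the same last rule, the induction hypothesis, the defining equations of $\efoot{}{}{}$, and the monotonicity lemmas already established.

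The permission-independent and base cases are immediate. For \refrule{AssertValue}, the judgement $\assertion{\heap}{\perms}{\env}{e}$ requires only $\eval{\heap}{\env}{e}{\ktrue}$, which does not mention permissions, so it holds verbatim with $\perms^\star$. For \refrule{AssertAcc}, from $\eval{\heap}{\env}{e}{\ell}$, $\pair{\ell}{f}\in\perms$, and the definition $\efoot{\heap}{\env}{\kacc(e.f)} = \efoot{\heap}{\env}{e};\pair{\ell}{f}$ we get $\pair{\ell}{f}\in \efoot{\heap}{\env}{\kacc(e.f)}\cap\perms = \perms^\star$, so \refrule{AssertAcc} re-applies. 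For \refrule{AssertImprecise}, the induction hypothesis gives $\assertion{\heap}{\efoot{\heap}{\env}{\phi}\cap\perms}{\env}{\phi}$, Lemma~\ref{lem:efoot-efrm} applied to the premise $\efrm{\heap}{\perms}{\env}{\phi}$ gives $\efrm{\heap}{\efoot{\heap}{\env}{\phi}\cap\perms}{\env}{\phi}$, and since $\efoot{\heap}{\env}{\simprecise{\phi}}=\efoot{\heap}{\env}{\phi}$, re-applying \refrule{AssertImprecise} closes the case.

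For \refrule{AssertIfA}, \refrule{AssertIfB}, and \refrule{AssertPredicate} the exact footprint of the compound formula contains the exact footprint of the subformula asserted in the premise (for the branch of $e$ forced by the rule). Hence the induction hypothesis yields the assertion of that subformula with a permission set contained in $\perms^\star$, and Lemma~\ref{lem:assert-monotonicity} upgrades it to $\perms^\star$; re-applying the rule gives the result. (For \refrule{AssertPredicate} one also records $\multiple{\eval{\heap}{\env}{e}{v}}$ and $\multiple{x}=\fpredparams(p)$ unchanged from the premise.)

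The one case needing care is \refrule{AssertConjunction}, where the premise splits $\perms$ into disjoint $\perms_1,\perms_2$ with $\perms_1\cup\perms_2\subseteq\perms$ and $\assertion{\heap}{\perms_i}{\env}{\phi_i}$. Here I would take $\perms_i' := \efoot{\heap}{\env}{\phi_i}\cap\perms_i$: the induction hypothesis gives $\assertion{\heap}{\perms_i'}{\env}{\phi_i}$, disjointness $\perms_1'\cap\perms_2'=\emptyset$ follows from $\perms_1\cap\perms_2=\emptyset$, and $\perms_1'\cup\perms_2'\subseteq(\efoot{\heap}{\env}{\phi_1}\cup\efoot{\heap}{\env}{\phi_2})\cap(\perms_1\cup\perms_2)\subseteq\efoot{\heap}{\env}{\phi_1*\phi_2}\cap\perms=\perms^\star$ using $\efoot{\heap}{\env}{\phi_1*\phi_2}=\efoot{\heap}{\env}{\phi_1}\cup\efoot{\heap}{\env}{\phi_2}$. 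Re-applying \refrule{AssertConjunction} with witnesses $\perms_1',\perms_2'$ and permission set $\perms^\star$ yields $\assertion{\heap}{\perms^\star}{\env}{\phi_1*\phi_2}$. This bookkeeping between the two footprints is the only slightly delicate point; every other case is a direct transcription of the induction used for Lemma~\ref{lem:efoot-efrm}.
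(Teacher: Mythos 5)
Your proposal is correct and follows essentially the same route as the paper's proof: induction on the derivation of $\assertion{\heap}{\perms}{\env}{\gform}$, with Lemma~\ref{lem:efoot-efrm} discharging the framing premise in the \refrule{AssertImprecise} case, Lemma~\ref{lem:assert-monotonicity} upgrading the inductive assertions in the conditional and predicate cases, and the witnesses $\perms_i' = \efoot{\heap}{\env}{\phi_i}\cap\perms_i$ in the \refrule{AssertConjunction} case. Your bookkeeping in the conjunction case is, if anything, slightly more explicit than the paper's about why $\perms_1'\cup\perms_2'\subseteq\perms^\star$.
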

\begin{proof}
  By induction on $\assertion{\heap}{\perms}{\env}{\gform}$:
  \begin{enumcases}
    \case \refrule{AssertImprecise} -- $\assertion{\heap}{\perms}{\env}{\simprecise{\phi}}$:
      By \refrule{AssertImprecise} $\assertion{\heap}{\perms}{\env}{\phi}$, thus by induction $\assertion{\heap}{\efoot{\heap}{\env}{\phi} \cap \perms}{\phi}$.

      Also by \refrule{AssertImprecise} $\efrm{\heap}{\perms}{\env}{\phi}$, thus by lemma \ref{lem:efoot-efrm} $\efrm{\heap}{\efoot{\heap}{\env}{\phi} \cap \perms}{\phi}$.

      By definition $\efoot{\heap}{\env}{\simprecise{\phi}} = \efoot{\heap}{\env}{\phi}$, thus $\assertion{\heap}{\efoot{\heap}{\env}{\simprecise{\phi}} \cap \perms}{\phi}$ and $\efrm{\heap}{\efoot{\heap}{\env}{\simprecise{\phi}} \cap \perms}{\phi}$. Therefore $\assertion{\heap}{\efoot{\heap}{\env}{\simprecise{\phi}} \cap \perms}{\simprecise{\phi}}$ by \refrule{AssertImprecise}.

    \case \refrule{AssertValue} -- $\assertion{\heap}{\perms}{\env}{e}$:
      By \refrule{AssertValue} $\eval{\heap}{\env}{e}{\ktrue}$, thus $\assertion{\heap}{\efoot{\heap}{\env}{e}}{\env}{e}$ by \refrule{AssertValue}.

    \case\label{case:efoot-assert-ifa}\refrule{AssertIfA} -- $\assertion{\heap}{\perms}{\env}{\sif{e}{\phi_1}{\phi_2}}$:
      By \refrule{AssertIfA} $\assertion{\heap}{\perms}{\env}{\phi_1}$, thus by induction $\assertion{\heap}{\efoot{\heap}{\env}{\phi_1} \cap \perms}{\env}{\phi_1}$.

      Also by \refrule{AssertIfA} $\eval{\heap}{\env}{e}{\ktrue}$, thus $\efoot{\heap}{\env}{\sif{e}{\phi_1}{\phi_2}} = \efoot{\heap}{\env}{e} \cup \efoot{\heap}{\env}{\phi_1}$. Therefore $\assertion{\heap}{\efoot{\heap}{\env}{\sif{e}{\phi_1}{\phi_2}} \cap \perms}{\env}{\phi_1}$ by \ref{lem:assert-monotonicity}.

      Thus $\assertion{\heap}{\efoot{\heap}{\env}{\sif{e}{\phi_1}{\phi_2}} \cap \perms}{\env}{\sif{e}{\phi_1}{\phi_2}}$ by \refrule{AssertIfA}.

    \case\refrule{AssertIfB} -- $\assertion{\heap}{\perms}{\env}{\sif{e}{\phi_1}{\phi_2}}$:
      Similar to case \ref{case:efoot-assert-ifa}.

    \case\refrule{AssertAcc} -- $\assertion{\heap}{\perms}{\env}{\kacc(e.f)}$:
      By \refrule{AssertAcc} $\eval{\heap}{\env}{e}{\ell}$ and $\pair{\ell}{f} \in \perms$. By definition $\efoot{\heap}{\env}{\kacc(e.f)} = \efoot{\heap}{\env}{e} \cup \set{\pair{\ell}{f}}$, thus $\pair{\ell}{f} \in \efoot{\heap}{\env}{\kacc(e.f)}$. Therefore $\assertion{\heap}{\efoot{\heap}{\env}{\kacc(e.f)} \cap \perms}{\env}{\kacc(e.f)}$ by \refrule{AssertAcc}.

    \case\refrule{AssertConjunction} -- $\assertion{\heap}{\perms}{\env}{\phi_1 * \phi_2}$:
      Let $\perms_1' = \efoot{\heap}{\env}{\phi_1} \cap \perms_1$ and $\perms_2' = \efoot{\heap}{\env}{\phi_2} \cap \perms_2$. By \refrule{AssertConjunction} $\assertion{\heap}{\perms_1}{\env}{\phi_1}$ and $\assertion{\heap}{\perms_2}{\env}{\phi_2}$, thus by induction $\assertion{\heap}{\perms_1'}{\env}{\phi_1}$ and $\assertion{\heap}{\perms_2'}{\env}{\phi_2}$.

      Also by \refrule{AssertConjunction} $\perms_1 \cap \perms_2 = \emptyset$ and $\perms_1 \cup \perms_2 \subseteq \perms$, thus $\perms_1' \cap \perms_2' = \emptyset$ and $\perms_1' \cup \perms_2' \subseteq \efoot{\heap}{\env}{\phi_1 * \phi_2} = \efoot{\heap}{\env}{\phi_1} \cup \efoot{\heap}{\env}{\phi_2}$.

      Therefore $\assertion{\heap}{\efoot{\heap}{\env}{\phi_1 * \phi_2}}{\env}{\phi_1 * \phi_2}$ by \refrule{AssertConjunction}.

    \case\refrule{AssertPredicate} -- $\assertion{\heap}{\perms}{\env}{p(\multiple{e})}$:
      By \refrule{AssertPredicate} $\multiple{\eval{\heap}{\env}{e}{v}}$ and \\
      $\assertion{\heap}{\perms}{[\multiple{x \mapsto v}]}{\fpred(p)}$ where $\multiple{x} = \fpredparams(p)$. Thus by induction \\
      $\assertion{\heap}{\efoot{\heap}{[\multiple{x \mapsto v}]}{\fpred(p)} \cap \perms}{[\multiple{x \mapsto v}]}{\fpred(p)}$.

      Now by definition $\efoot{\heap}{[\multiple{x \mapsto v}]}{\fpred(p)} \subseteq \efoot{\heap}{\env}{p(\multiple{e})}$, thus by lemma \ref{lem:assert-monotonicity} $\assertion{\heap}{\efoot{\heap}{\env}{p(\multiple{e})} \cap \perms}{[\multiple{x \mapsto v}]}{\fpred(p)}$. Therefore $\assertion{\heap}{\efoot{\heap}{\env}{p(\multiple{e})} \cap \perms}{\env}{p(\multiple{e})}$ by \refrule{AssertPredicate}.
  \end{enumcases}
\end{proof}

\begin{lemma}[Supersets of exact footprints preserve assertions]\label{lem:assert-efoot-subset}
  If $\assertion{\heap}{\perms}{\env}{\gform}$ and $\efoot{\heap}{\env}{\gform} \subseteq \perms'$, then $\assertion{\heap}{\perms'}{\env}{\gform}$.
\end{lemma}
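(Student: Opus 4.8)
The plan is to derive this directly by composing two lemmas already established in the excerpt: \ref{lem:efoot-assert} (exact footprints preserve assertions) and \ref{lem:assert-monotonicity} (monotonicity of assertions with respect to permissions). The idea is that the exact footprint $\efoot{\heap}{\env}{\gform}$ is the ``minimal'' permission set carved out of $\perms$ that still supports the assertion, and since this minimal set is assumed to sit inside $\perms'$, monotonicity lets us re-inflate the permission set back up to $\perms'$.

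Concretely, I would proceed as follows. First, apply \ref{lem:efoot-assert} to the hypothesis $\assertion{\heap}{\perms}{\env}{\gform}$ to obtain $\assertion{\heap}{\efoot{\heap}{\env}{\gform} \cap \perms}{\env}{\gform}$. Second, observe the set inclusion $\efoot{\heap}{\env}{\gform} \cap \perms \subseteq \efoot{\heap}{\env}{\gform} \subseteq \perms'$, where the first inclusion is immediate and the second is exactly the lemma's hypothesis. Third, apply \ref{lem:assert-monotonicity} with these two permission sets to conclude $\assertion{\heap}{\perms'}{\env}{\gform}$, which is the desired statement.

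There is essentially no obstacle here: the lemma is a routine corollary, and unlike the preceding lemmas it requires no induction on the derivation of $\assertion{\heap}{\perms}{\env}{\gform}$ or on the structure of $\gform$ — all the inductive work has already been discharged inside \ref{lem:efoot-assert} and \ref{lem:assert-monotonicity}. The only thing to be careful about is citing \ref{lem:efoot-assert} in the form that intersects with the \emph{current} permission set $\perms$ (rather than assuming the footprint is already contained in $\perms$), so that the intermediate permission set is genuinely a subset of both $\perms'$ and of $\efoot{\heap}{\env}{\gform}$; this makes the second inclusion step go through without any extra assumption on how $\efoot{\heap}{\env}{\gform}$ and $\perms$ relate.

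\begin{proof}
  By lemma \ref{lem:efoot-assert}, $\assertion{\heap}{\efoot{\heap}{\env}{\gform} \cap \perms}{\env}{\gform}$. Since $\efoot{\heap}{\env}{\gform} \cap \perms \subseteq \efoot{\heap}{\env}{\gform} \subseteq \perms'$, by lemma \ref{lem:assert-monotonicity} $\assertion{\heap}{\perms'}{\env}{\gform}$.
\end{proof}
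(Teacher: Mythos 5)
Your proof is correct and is essentially identical to the paper's own argument: apply lemma \ref{lem:efoot-assert} to shrink the permission set to $\efoot{\heap}{\env}{\gform} \cap \perms$, note that this set is contained in $\perms'$ by the hypothesis, and conclude with lemma \ref{lem:assert-monotonicity}. No gaps.
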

\begin{proof}
  By lemma \ref{lem:efoot-assert} $\assertion{\heap}{\efoot{\heap}{\env}{\gform} \cap \perms}{\env}{\gform}$. Then $\efoot{\heap}{\env}{\gform} \cap \perms \subseteq \perms'$, thus $\assertion{\heap}{\perms'}{\env}{\gform}$ by lemma \ref{lem:assert-monotonicity}.
\end{proof}

\begin{lemma}[Footprints preserve specification assertions]\label{lem:foot-assert}
  If $\gform$ is a specification and $\assertion{\heap}{\perms}{\env}{\gform}$, then $\assertion{\heap}{\foot{\heap}{\perms}{\env}{\gform}}{\env}{\gform}$.
\end{lemma}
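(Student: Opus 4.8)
The plan is to split on whether $\gform$ is \emph{completely precise}, mirroring the case split built into the definition of $\foot{\heap}{\perms}{\env}{\gform}$ (Definition \ref{def:footprint}). In the non-completely-precise case there is nothing to do: by definition $\foot{\heap}{\perms}{\env}{\gform} = \perms$, so the goal $\assertion{\heap}{\foot{\heap}{\perms}{\env}{\gform}}{\env}{\gform}$ is literally the hypothesis $\assertion{\heap}{\perms}{\env}{\gform}$.

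For the completely precise case, $\foot{\heap}{\perms}{\env}{\gform} = \efoot{\heap}{\env}{\gform}$, so the goal becomes $\assertion{\heap}{\efoot{\heap}{\env}{\gform}}{\env}{\gform}$. I would discharge this directly with Lemma \ref{lem:assert-efoot-subset} (supersets of exact footprints preserve assertions), instantiated with $\perms' := \efoot{\heap}{\env}{\gform}$: its side condition $\efoot{\heap}{\env}{\gform} \subseteq \perms'$ holds trivially by reflexivity, and its premise $\assertion{\heap}{\perms}{\env}{\gform}$ is the hypothesis, yielding $\assertion{\heap}{\efoot{\heap}{\env}{\gform}}{\env}{\gform}$ as required. (Equivalently one could invoke Lemma \ref{lem:efoot-assert} to get $\assertion{\heap}{\efoot{\heap}{\env}{\gform} \cap \perms}{\env}{\gform}$ and then use Lemma \ref{lem:efoot-subset-spec} — which applies since $\gform$ is a specification — to see $\efoot{\heap}{\env}{\gform} \cap \perms = \efoot{\heap}{\env}{\gform}$; but the route through Lemma \ref{lem:assert-efoot-subset} is shorter and avoids needing the specification hypothesis at all in this branch.)

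There is essentially no obstacle here: the lemma is a short corollary of the already-established footprint-preservation lemmas, and its only role is to repackage them in terms of the combined $\foot{\cdot}$ operator so later proofs (e.g.\ reasoning about \refrule{ExecCallEnter}/\refrule{ExecCallExit}, where $\foot{\heap}{\perms}{\env}{\fpre(m)}$ and $\foot{\heap}{\perms}{\env}{\fpost(m)}$ are passed between stack frames) can cite a single statement. The only thing to be careful about is making the case split match Definition \ref{def:footprint} exactly and noting that ``specification'' is used (only in the precise branch, if going via Lemma \ref{lem:efoot-assert}) solely to align $\efoot$ with $\perms$; with the Lemma \ref{lem:assert-efoot-subset} route the specification hypothesis is not even needed, which is worth a one-line remark.
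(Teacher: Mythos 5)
Your proof is correct and matches the paper's: the paper cites Lemma \ref{lem:foot-subset-efoot-spec} to get $\efoot{\heap}{\env}{\gform} \subseteq \foot{\heap}{\perms}{\env}{\gform}$ and then applies Lemma \ref{lem:assert-efoot-subset}, which amounts to exactly your case split on complete precision (that split lives inside the proof of Lemma \ref{lem:foot-subset-efoot-spec}) followed by the same key lemma. Your side remark that the specification hypothesis is dispensable on your route is accurate — the paper's path consumes it only via Lemma \ref{lem:efoot-subset-spec} inside the intermediate lemma — but this is a cosmetic difference, not a different argument.
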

\begin{proof}
  By lemma \ref{lem:foot-subset-efoot-spec}, $\efoot{\heap}{\env}{\gform} \subseteq \foot{\heap}{\perms}{\env}{\gform}$. Therefore $\assertion{\heap}{\foot{\heap}{\perms}{\env}{\gform}}{\env}{\gform}$ by lemma \ref{lem:assert-efoot-subset}.
\end{proof}

\begin{lemma}[Modeling implies ownership]\label{lem:sim-heap-contains}
  If $\simstate{V}{\sstate}{\heap}{\perms}{\env}$ then
  $$\universal{h \in \sheap(\sstate) \cup \oheap(\sstate)}{\vfoot{V}{\heap}{h} \subseteq \perms}.$$
\end{lemma}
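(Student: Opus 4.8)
The plan is to unfold the correspondence relation $\simstate{V}{\sstate}{\heap}{\perms}{\env}$ and then case-split on where the heap chunk $h$ lives and on its shape. Fix an arbitrary $h \in \sheap(\sstate) \cup \oheap(\sstate)$. From the definition of state correspondence (equation \ref{eq:sstate-correspondence}) I extract the two facts I will actually use, namely $\simheap{V}{\sheap(\sstate)}{\heap}{\perms}$ and $\simheap{V}{\oheap(\sstate)}{\heap}{\perms}$; the path-condition and environment conjuncts play no role.

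First I would handle the case where $h$ is a field chunk, say $h = \triple{f}{t}{t'}$. Whether $h \in \sheap(\sstate)$ or $h \in \oheap(\sstate)$, the permission clause of the heap-correspondence definition (the second line of equation \ref{eq:sheap-correspondence}, or of equation \ref{eq:oheap-correspondence}) directly yields $\pair{V(t)}{f} \in \perms$. Since the footprint of a field chunk is by definition $\vfoot{V}{\heap}{\triple{f}{t}{t'}} = \set{\pair{V(t)}{f}}$, this case closes immediately with $\vfoot{V}{\heap}{h} \subseteq \perms$.

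Next I would handle the case where $h$ is a predicate chunk $\pair{p}{\multiple{t}}$. Here $h$ can only occur in $\sheap(\sstate)$, because an optimistic heap contains only field chunks, so the predicate clause of equation \ref{eq:sheap-correspondence} applies and gives $\assertion{\heap}{\perms}{[\multiple{x \mapsto V(t)}]}{\fpred(p)}$ with $\multiple{x} = \fpredparams(p)$. Because the ambient program is well-formed, the predicate body $\fpred(p)$ is a specification (definition \ref{def:specification}), so lemma \ref{lem:efoot-subset-spec} applies and yields $\efoot{\heap}{[\multiple{x \mapsto V(t)}]}{\fpred(p)} \subseteq \perms$. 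This set is exactly the footprint assigned to a predicate chunk, hence $\vfoot{V}{\heap}{h} \subseteq \perms$, completing the proof.

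The only step beyond definitional bookkeeping is the predicate-chunk case: I must recall that predicate bodies are specifications and route through lemma \ref{lem:efoot-subset-spec} to convert the assertion delivered by heap correspondence into a containment of exact footprints. The recursion inherent in a predicate's footprint is thereby fully absorbed by that lemma, so no induction is needed here; the statement is essentially a per-chunk claim about non-recursive chunk footprints.
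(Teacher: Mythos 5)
Your proof is correct and follows essentially the same route as the paper's: a case split on whether $h$ is a field chunk or a predicate chunk, reading the needed permission membership directly off the heap-correspondence clauses in the former case, and routing through Lemma~\ref{lem:efoot-subset-spec} (via the fact that predicate bodies are specifications) in the latter. No gaps.
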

\begin{proof}
  Let $h$ be an arbitrary element of $\sheap(\sstate)$ or $\oheap(\sstate)$. Then one of the following cases applies:

  \begin{enumcases}
    \case $h = \triple{f}{t}{t'}$ for some $f, t, t'$:

      Since $\simheap{V}{\sheap(\sstate)}{\heap}{\perms}$ (and likewise for $\oheap(\sstate)$), then $\pair{V(t)}{f} \in \perms$. Therefore $\vfoot{V}{\heap}{\triple{f}{t}{t'}} = \set{ \pair{V(t)}{f} } \subseteq \perms$.

    \case $h = \pair{p}{\multiple{t}}$ for some $p, \multiple{t}$:

      Then $h \in \sheap(\sstate)$ since $\oheap(\sstate)$ does not contain predicate chunks. Since $\simheap{V}{\sheap(\sstate)}{\heap}{\perms}$, \\
      $\assertion{\heap}{\perms}{[\multiple{x \mapsto V(t)}]}{\fpred(p)}$. $\fpred(p)$ must be a specification, thus by lemma \ref{lem:efoot-subset-spec} \\
      $\vfoot{V}{\heap}{\pair{p}{\multiple{t}}} = \efoot{\heap}{[\multiple{x \mapsto V(t)}]}{\fpred(p)} \subseteq \perms$.
  \end{enumcases}
\end{proof}

\begin{lemma}[Correspondence with exclusion implies disjointness]\label{lem:sim-heap-disjoint}
  If $\simstate{V}{\sstate}{\heap}{\perms \setminus \perms'}{\env}$, then
  $$\universal{h \in \sheap(\sstate)}{\vfoot{V}{\heap}{h} \cap \perms' = \emptyset}.$$
\end{lemma}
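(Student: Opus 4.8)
The plan is to derive this directly from Lemma \ref{lem:sim-heap-contains} (Modeling implies ownership), of which this is essentially an immediate corollary. First I would instantiate that lemma with the permission set $\perms \setminus \perms'$ in place of $\perms$: from the hypothesis $\simstate{V}{\sstate}{\heap}{\perms \setminus \perms'}{\env}$ we obtain $\vfoot{V}{\heap}{h} \subseteq \perms \setminus \perms'$ for every $h \in \sheap(\sstate) \cup \oheap(\sstate)$, and in particular for every $h \in \sheap(\sstate)$, which is the only component relevant to the present claim.

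Next I would invoke the elementary set-theoretic fact that $(A \setminus B) \cap B = \emptyset$: since $\vfoot{V}{\heap}{h} \subseteq \perms \setminus \perms'$ and $(\perms \setminus \perms') \cap \perms' = \emptyset$, monotonicity of intersection gives $\vfoot{V}{\heap}{h} \cap \perms' = \emptyset$. As $h$ was an arbitrary element of $\sheap(\sstate)$, the universally quantified statement follows. No induction on the correspondence relation is needed, because all of that work is already packaged inside Lemma \ref{lem:sim-heap-contains}.

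The only point requiring any care is the mild mismatch between the two statements: Lemma \ref{lem:sim-heap-contains} quantifies over $\sheap(\sstate) \cup \oheap(\sstate)$, whereas here we restrict to $\sheap(\sstate)$, so I would simply note that $\sheap(\sstate) \subseteq \sheap(\sstate) \cup \oheap(\sstate)$ before applying it. I expect no genuine obstacle in this proof.
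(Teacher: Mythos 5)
Your proof is correct and is essentially identical to the paper's: both instantiate Lemma \ref{lem:sim-heap-contains} at the permission set $\perms \setminus \perms'$ to get $\vfoot{V}{\heap}{h} \subseteq \perms \setminus \perms'$ for each $h \in \sheap(\sstate)$, and then conclude disjointness from $(\perms \setminus \perms') \cap \perms' = \emptyset$. No further comment is needed.
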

\begin{proof}
  Let $h$ be an arbitrary element of $\sheap(\sstate)$. Then $\vfoot{V}{\heap}{h} \subseteq \perms \setminus \perms'$ by lemma \ref{lem:sim-heap-contains}, thus $\vfoot{V}{\heap}{h} \cap \perms' = \emptyset$.
\end{proof}

\begin{lemma}[Disjointness implies correspondence with exclusion]\label{lem:disjoint-sim-heap-subset}
  If $\simheap{V}{\sheap}{\heap}{\perms}$ and $\universal{h \in \sheap}{\vfoot{V}{\heap}{h} \cap \perms' = \emptyset}$, then $\simheap{V}{\sheap}{\heap}{\perms \setminus \perms'}$.
\end{lemma}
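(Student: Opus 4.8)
The plan is to unfold the definition of $\simheap{V}{\sheap}{\heap}{\perms \setminus \perms'}$ (equation \ref{eq:sheap-correspondence}) into its four conjuncts and re-establish each. Two of the conjuncts---that every field chunk $\triple{f}{t}{t'} \in \sheap$ satisfies $\heap(V(t), f) = V(t')$, and that distinct heap chunks in $\sheap$ have disjoint footprints---make no reference to the permission set at all, so they transfer immediately from the hypothesis $\simheap{V}{\sheap}{\heap}{\perms}$. Thus only the two permission-dependent conjuncts need argument, and for those the task is to show that the relevant footprints, already contained in $\perms$, avoid $\perms'$.

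For the field-chunk ownership conjunct, fix $\triple{f}{t}{t'} \in \sheap$. From $\simheap{V}{\sheap}{\heap}{\perms}$ we have $\pair{V(t)}{f} \in \perms$, and since the footprint of this field chunk is $\vfoot{V}{\heap}{\triple{f}{t}{t'}} = \set{\pair{V(t)}{f}}$, the disjointness hypothesis $\vfoot{V}{\heap}{\triple{f}{t}{t'}} \cap \perms' = \emptyset$ gives $\pair{V(t)}{f} \notin \perms'$, hence $\pair{V(t)}{f} \in \perms \setminus \perms'$, as required.

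The main work is the predicate-chunk conjunct. Fix $\pair{p}{\multiple{t}} \in \sheap$ and let $\multiple{x} = \fpredparams(p)$. From $\simheap{V}{\sheap}{\heap}{\perms}$ we have $\assertion{\heap}{\perms}{[\multiple{x \mapsto V(t)}]}{\fpred(p)}$. Since $p$ is a predicate, $\fpred(p)$ is a specification, so lemma \ref{lem:efoot-subset-spec} yields $\efoot{\heap}{[\multiple{x \mapsto V(t)}]}{\fpred(p)} \subseteq \perms$. Moreover the footprint of a predicate chunk is exactly $\vfoot{V}{\heap}{\pair{p}{\multiple{t}}} = \efoot{\heap}{[\multiple{x \mapsto V(t)}]}{\fpred(p)}$, so the disjointness hypothesis gives $\efoot{\heap}{[\multiple{x \mapsto V(t)}]}{\fpred(p)} \cap \perms' = \emptyset$. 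Combining the two facts yields $\efoot{\heap}{[\multiple{x \mapsto V(t)}]}{\fpred(p)} \subseteq \perms \setminus \perms'$, and then lemma \ref{lem:assert-efoot-subset} gives $\assertion{\heap}{\perms \setminus \perms'}{[\multiple{x \mapsto V(t)}]}{\fpred(p)}$, which is precisely the required conjunct. Assembling all four conjuncts establishes $\simheap{V}{\sheap}{\heap}{\perms \setminus \perms'}$.

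I expect the predicate-chunk case to be the only subtle point, and within it the essential ingredient is the fact that predicate bodies are specifications, so their exact footprint is bounded by the current permission set (lemma \ref{lem:efoot-subset-spec})---everything else is set-theoretic bookkeeping. An alternative to invoking lemma \ref{lem:assert-efoot-subset} directly would be to restrict the assertion to $\efoot{\heap}{[\multiple{x \mapsto V(t)}]}{\fpred(p)} \cap \perms$ via lemma \ref{lem:efoot-assert} and then apply monotonicity (lemma \ref{lem:assert-monotonicity}), but the direct route is shorter. This lemma is essentially the converse of lemma \ref{lem:sim-heap-disjoint}, and its proof mirrors that structure.
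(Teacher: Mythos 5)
Your proof is correct and follows essentially the same route as the paper's: the two permission-independent conjuncts transfer directly, the field-chunk case reduces to the singleton footprint, and the predicate case combines the footprint bound with lemma \ref{lem:assert-efoot-subset}. The only difference is that you explicitly invoke lemma \ref{lem:efoot-subset-spec} to justify $\efoot{\heap}{[\multiple{x \mapsto V(t)}]}{\fpred(p)} \subseteq \perms$, a step the paper leaves implicit when it passes from disjointness with $\perms'$ to containment in $\perms \setminus \perms'$ --- your version is, if anything, the more careful one.
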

\begin{proof}
  Since $\simheap{V}{\sheap}{\heap}{\perms}$,
  $$\universal{\triple{f}{t}{t'} \in \sheap}{\heap(V(t), f) = V(t')}.$$

  Let $\triple{f}{t}{t'}$ be an arbitrary field chunk in $\sheap$. Then, by definition, $\vfoot{V}{\heap}{\triple{f}{t}{t'}} = \set{\pair{V(t)}{f}}$. Thus $\set{\pair{V(t)}{f}} \cap \perms' = \emptyset$ and then $\pair{V(t)}{f} \notin \perms'$. Therefore $\pair{V(t)}{f} \in (\perms \setminus \perms')$. Thus
  $$\universal{\triple{f}{t}{t'} \in \sheap}{\pair{V(t)}{f} \in (\perms \setminus \perms')}.$$

  Let $\pair{p}{\multiple{t}}$ be an arbitrary predicate in $\sheap$. Since $\simheap{V}{\sheap}{\heap}{\perms}$, $\assertion{\heap}{\perms}{[\multiple{x \mapsto V(t)}]}{\fpred(p)}$.
  By definition, $\vfoot{V}{\heap}{\pair{p}{\multiple{t}}} = \efoot{\heap}{[\multiple{x \mapsto V(t)}]}{\fpred(p)}$. Thus $\efoot{\heap}{[\multiple{x \mapsto V(t)}]}{\fpred(p)} \cap \perms' = \emptyset$, and therefore $\efoot{\heap}{[\multiple{x \mapsto V(t)}]}{\fpred(p)} \subseteq (\perms \setminus \perms')$. Thus by lemma \ref{lem:assert-efoot-subset} $\assertion{\heap}{\perms \setminus \perms'}{[\multiple{x \mapsto V(t)}]}{\fpred(p)}$. Therefore
  $$\universal{\pair{p}{\multiple{t}} \in \sheap}{\assertion{\heap}{\perms \setminus \perms'}{[\multiple{x \mapsto V(t)}]}{\fpred(p)}}.$$

  Finally, since $\simheap{V}{\sheap}{\heap}{\perms}$,
  $$\universal{h_1, h_2 \in \sheap^2}{h_1 \ne h_2 \implies \efoot{V}{\heap}{h_1} \cap \efoot{V}{\heap}{h_2} = \emptyset}.$$

  Thus all conditions in \eqref{eq:sheap-correspondence} are satisfied, therefore $\simheap{V}{\sheap}{\heap}{\perms \setminus \perms'}$.
\end{proof}

\begin{lemma}\label{lem:sim-sheap-monotonicity}
  If $\simheap{V}{\sheap}{\heap}{\perms}$ and $\perms \subseteq \perms'$, then $\simheap{V}{\sheap}{\heap}{\perms}$.
\end{lemma}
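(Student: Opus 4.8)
The plan is to unfold the definition of precise-heap correspondence given in \eqref{eq:sheap-correspondence}, which is a conjunction of four clauses, and check that each clause is preserved when the permission set is enlarged from $\perms$ to $\perms'$. (Note the statement as displayed has a typo in its conclusion; the intended claim is $\simheap{V}{\sheap}{\heap}{\perms'}$, i.e. monotonicity of $\simheap{\cdot}{\cdot}{\cdot}{\cdot}$ with respect to the permission set.)

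First I would observe that clauses one and four of \eqref{eq:sheap-correspondence} do not mention the permission set at all: clause one constrains only $\heap$, $V$, and the field chunks of $\sheap$, and clause four constrains disjointness of the footprints $\vfoot{V}{\heap}{h_1}$ and $\vfoot{V}{\heap}{h_2}$, which by definition depend only on $V$ and $\heap$. Hence these two clauses transfer verbatim from the hypothesis $\simheap{V}{\sheap}{\heap}{\perms}$. Second, for clause two: for each field chunk $\triple{f}{t}{t'} \in \sheap$ the hypothesis gives $\pair{V(t)}{f} \in \perms$, and since $\perms \subseteq \perms'$ we immediately get $\pair{V(t)}{f} \in \perms'$. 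Third, for clause three: for each predicate chunk $\pair{p}{\tlist} \in \sheap$ the hypothesis gives $\assertion{\heap}{\perms}{[\multiple{x \mapsto V(t)}]}{\fpred(p)}$, and applying Lemma \ref{lem:assert-monotonicity} (monotonicity of assertions with respect to permissions) together with $\perms \subseteq \perms'$ yields $\assertion{\heap}{\perms'}{[\multiple{x \mapsto V(t)}]}{\fpred(p)}$. With all four clauses of \eqref{eq:sheap-correspondence} established for $\perms'$, we conclude $\simheap{V}{\sheap}{\heap}{\perms'}$.

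There is no real obstacle here; the only non-purely-syntactic step is the appeal to Lemma \ref{lem:assert-monotonicity} in the predicate-chunk case, and that lemma is already available. The remaining work is bookkeeping over the definition, so the proof is short and direct.

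\begin{proof}
  We show $\simheap{V}{\sheap}{\heap}{\perms'}$ by verifying the four conjuncts of \eqref{eq:sheap-correspondence}.

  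Since $\simheap{V}{\sheap}{\heap}{\perms}$, we have $\universal{\triple{f}{t}{t'} \in \sheap}{\heap(V(t), f) = V(t')}$; this conjunct does not mention the permission set, so it holds for $\perms'$ as well.

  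For each $\triple{f}{t}{t'} \in \sheap$ we have $\pair{V(t)}{f} \in \perms \subseteq \perms'$, so $\universal{\triple{f}{t}{t'} \in \sheap}{\pair{V(t)}{f} \in \perms'}$.

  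For each $\pair{p}{\tlist} \in \sheap$ we have $\assertion{\heap}{\perms}{[\multiple{x \mapsto V(t)}]}{\fpred(p)}$, so by lemma \ref{lem:assert-monotonicity} and $\perms \subseteq \perms'$, $\assertion{\heap}{\perms'}{[\multiple{x \mapsto V(t)}]}{\fpred(p)}$.

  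Finally, $\universal{h_1, h_2 \in \sheap^2}{h_1 \ne h_2 \implies \vfoot{V}{\heap}{h_1} \cap \vfoot{V}{\heap}{h_2} = \emptyset}$ holds since $\vfoot{V}{\heap}{\cdot}$ does not depend on the permission set.

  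All four conditions of \eqref{eq:sheap-correspondence} are satisfied, hence $\simheap{V}{\sheap}{\heap}{\perms'}$.
\end{proof}
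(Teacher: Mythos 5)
Your proof is correct and follows essentially the same route as the paper's: both unfold the four conjuncts of \eqref{eq:sheap-correspondence}, observe that the heap-value and footprint-disjointness clauses are independent of the permission set, use $\perms \subseteq \perms'$ directly for field-chunk ownership, and invoke Lemma \ref{lem:assert-monotonicity} for the predicate-chunk clause. You are also right that the conclusion in the statement contains a typo and should read $\simheap{V}{\sheap}{\heap}{\perms'}$.
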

\begin{proof}
  Let $\triple{f}{t}{t'} \in \sheap$. Since $\simheap{V}{\sheap}{\heap}{\perms}$, $\heap(V(t), f) = V(t')$ and $\pair{V(t)}{f} \in \perms \subseteq \perms'$. Therefore
  \begin{gather*}
    \universal{\triple{f}{t}{t'} \in \sheap}{\heap(V(t), f) = V(t')} \quad\text{and} \\
    \universal{\triple{f}{t}{t'} \in \sheap}{\pair{V(t)}{f} \in \perms'}.
  \end{gather*}

  Let $\pair{p}{\multiple{t}} \in \sheap$. Since $\simheap{V}{\sheap}{\heap}{\perms}$, $\assertion{\heap}{\perms}{[\multiple{x \mapsto V(t)}]}{\fpred(p)}$. Then by lemma \ref{lem:assert-monotonicity}, $\assertion{\heap}{\perms'}{[\multiple{x \mapsto V(t)}]}{\fpred(p)}$. Therefore
  $$\universal{\pair{p}{\multiple{t}} \in \sheap}{\assertion{\heap}{\perms'}{[\multiple{x \mapsto V(t)}]}{\fpred(p)}}.$$

  Now, since $\simheap{V}{\sheap}{\heap}{\perms}$,
  $$\universal{h_1, h_2 \in \sheap}{h_1 \ne h_2 \implies \vfoot{V}{\heap}{h_1} \cap \vfoot{V}{\heap}{h_2} = \emptyset}.$$

  Therefore $\simheap{V}{\sheap}{\heap}{\perms'}$.
\end{proof}

\begin{lemma}\label{lem:sim-oheap-monotonicity}
  If $\simheap{V}{\sheap}{\heap}{\perms}$ and $\perms \subseteq \perms'$, then $\simheap{V}{\sheap}{\heap}{\perms}$.
\end{lemma}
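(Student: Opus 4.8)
The plan is to prove this by directly unfolding the definition of optimistic-heap correspondence, \eqref{eq:oheap-correspondence}. Reading the statement in context — it is the optimistic-heap analogue of the immediately preceding Lemma \ref{lem:sim-sheap-monotonicity}, so the intended claim is that if $\simheap{V}{\oheap}{\heap}{\perms}$ and $\perms \subseteq \perms'$, then $\simheap{V}{\oheap}{\heap}{\perms'}$. Unlike the precise heap, the optimistic heap $\oheap$ contains only field chunks (no predicate chunks) and carries no internal disjointness invariant, so the definition \eqref{eq:oheap-correspondence} imposes exactly two conjuncts, and each is handled in one line.

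First I would fix an arbitrary field chunk $\triple{f}{t}{t'} \in \oheap$ and argue both conjuncts of \eqref{eq:oheap-correspondence} for $\perms'$. For the heap-consistency conjunct, $\heap(V(t), f) = V(t')$ follows immediately from the hypothesis $\simheap{V}{\oheap}{\heap}{\perms}$, since this conjunct does not mention the permission set at all and is therefore unaffected by enlarging it. For the ownership conjunct, the hypothesis gives $\pair{V(t)}{f} \in \perms$, and since $\perms \subseteq \perms'$ we get $\pair{V(t)}{f} \in \perms'$. Since $\triple{f}{t}{t'}$ was arbitrary, both universally quantified conjuncts of \eqref{eq:oheap-correspondence} hold for $\perms'$, hence $\simheap{V}{\oheap}{\heap}{\perms'}$.

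There is essentially no obstacle here: the proof is a routine unfolding, strictly simpler than Lemma \ref{lem:sim-sheap-monotonicity} because there are no predicate-chunk assertions to monotonically lift (which in the precise case required an appeal to Lemma \ref{lem:assert-monotonicity}) and no disjointness condition to re-verify. The only thing worth flagging is the apparent typo in the statement as printed — the conclusion should read $\simheap{V}{\oheap}{\heap}{\perms'}$ rather than repeating $\simheap{V}{\sheap}{\heap}{\perms}$ — but under the evident intended reading the argument above is complete.

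\begin{proof}
  Let $\triple{f}{t}{t'} \in \oheap$ be arbitrary. Since $\simheap{V}{\oheap}{\heap}{\perms}$, by \eqref{eq:oheap-correspondence} we have $\heap(V(t), f) = V(t')$ and $\pair{V(t)}{f} \in \perms$. The first fact does not depend on the permission set. For the second, $\perms \subseteq \perms'$ gives $\pair{V(t)}{f} \in \perms'$. Since $\triple{f}{t}{t'}$ was arbitrary,
  \begin{gather*}
    \universal{\triple{f}{t}{t'} \in \oheap}{\heap(V(t), f) = V(t')} \quad\text{and} \\
    \universal{\triple{f}{t}{t'} \in \oheap}{\pair{V(t)}{f} \in \perms'}.
  \end{gather*}
  Both conjuncts of \eqref{eq:oheap-correspondence} are satisfied, therefore $\simheap{V}{\oheap}{\heap}{\perms'}$.
\end{proof}
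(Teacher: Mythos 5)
Your proof is correct and matches the paper's intent exactly: the paper simply states that this lemma is proved ``similar to the proof of Lemma \ref{lem:sim-sheap-monotonicity},'' and your explicit unfolding of \eqref{eq:oheap-correspondence} is precisely that argument specialized to the optimistic heap (no predicate chunks, no disjointness condition). You are also right to flag the typo: the conclusion should read $\simheap{V}{\oheap}{\heap}{\perms'}$, and the hypothesis should refer to $\oheap$ rather than $\sheap$.
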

\begin{proof}
  Similar to proof of \ref{lem:sim-sheap-monotonicity}.
\end{proof}

\begin{lemma}\label{lem:simstate-monotonicity}
  If $\simstate{V}{\sstate}{\heap}{\perms}{\env}$ and $\perms \subseteq \perms'$, then $\simstate{V}{\sstate}{\heap}{\perms'}{\env}$.
\end{lemma}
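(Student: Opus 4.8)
The plan is to unfold the definition of state correspondence from \eqref{eq:sstate-correspondence} and dispatch each conjunct separately, relying on the heap-monotonicity lemmas already established. Concretely, from $\simstate{V}{\sstate}{\heap}{\perms}{\env}$ we obtain, by \eqref{eq:sstate-correspondence}, the four facts $\simheap{V}{\sheap(\sstate)}{\heap}{\perms}$, $\simheap{V}{\oheap(\sstate)}{\heap}{\perms}$, $\simenv{V}{\senv(\sstate)}{\env}$, and $V(\pc(\sstate)) = \ktrue$. The last two involve neither the permission set $\perms$ nor $\perms'$, so they carry over verbatim.

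For the first two facts, I would first apply Lemma \ref{lem:sim-sheap-monotonicity} to $\simheap{V}{\sheap(\sstate)}{\heap}{\perms}$ together with $\perms \subseteq \perms'$, yielding $\simheap{V}{\sheap(\sstate)}{\heap}{\perms'}$; then apply Lemma \ref{lem:sim-oheap-monotonicity} to $\simheap{V}{\oheap(\sstate)}{\heap}{\perms}$ together with $\perms \subseteq \perms'$, yielding $\simheap{V}{\oheap(\sstate)}{\heap}{\perms'}$. Re-assembling the four conjuncts $\simheap{V}{\sheap(\sstate)}{\heap}{\perms'}$, $\simheap{V}{\oheap(\sstate)}{\heap}{\perms'}$, $\simenv{V}{\senv(\sstate)}{\env}$, and $V(\pc(\sstate)) = \ktrue$ via \eqref{eq:sstate-correspondence} gives exactly $\simstate{V}{\sstate}{\heap}{\perms'}{\env}$, as required.

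There is essentially no obstacle here: the lemma is a packaging of the two heap-monotonicity lemmas, and the only thing to observe is that enlarging the permission set leaves the environment correspondence and the path-condition validity untouched. (I would also note that Lemmas \ref{lem:sim-sheap-monotonicity} and \ref{lem:sim-oheap-monotonicity} as stated have the conclusion phrased with $\perms$ rather than $\perms'$; their proofs make clear the intended conclusion is $\simheap{V}{\sheap}{\heap}{\perms'}$, and I would use that corrected form.) The proof is thus three short lines invoking the definition and the two monotonicity lemmas.
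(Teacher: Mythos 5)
Your proof is correct and matches the paper's own proof exactly: both unfold the state-correspondence definition, carry over the environment and path-condition conjuncts unchanged, and apply Lemmas \ref{lem:sim-sheap-monotonicity} and \ref{lem:sim-oheap-monotonicity} to the two heaps. Your observation about the typo in the conclusions of those two lemmas (stating $\perms$ where $\perms'$ is intended) is also accurate.
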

\begin{proof}
  Since $\simstate{V}{\sstate}{\heap}{\perms}{\env}$, $V(\pc(\sstate)) = \ktrue$ and $\simenv{V}{\senv(\sstate)}{\env}$.

  Also by lemma \ref{lem:sim-sheap-monotonicity} $\simheap{V}{\sheap(\sstate)}{\heap}{\perms'}$ and by lemma \ref{lem:sim-oheap-monotonicity} $\simheap{V}{\oheap(\sstate)}{\heap}{\perms'}$.

  Therefore $\simstate{V}{\sstate'}{\heap}{\perms'}{\env}$.
\end{proof}

\begin{lemma}[Statement rearrangement]\label{lem:stmt-rearrangement}
  If $s = s_1; s_2$, then $s = s_1'; s_2'$ such that $s_1'$ is not a sequence statement and $s_2' = s_2$ or $s_2' = s_1''; s_2$ where $s_1 = s_1'; s_1''$.
\end{lemma}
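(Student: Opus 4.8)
The plan is to prove the statement by structural induction on $s_1$, establishing for every $s_1$ the (equivalent, but more convenient) universally quantified claim: for every statement $s_2$, the conclusion holds for $\sseq{s_1}{s_2}$. The only fact about statement composition we need beyond the grammar is that $\sseq{}{}$ is associative, i.e.\ $\sseq{\sseq{r_1}{r_2}}{r_3} = \sseq{r_1}{\sseq{r_2}{r_3}}$; this is a purely syntactic identity on statements (the same re-association the small-step rules already rely on when peeling a leading $\kskip$), and I would state up front that we reason about statements up to it. The argument is entirely syntactic—no semantic or correspondence reasoning is involved.

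For the base case, $s_1$ is not a sequence statement. Then I take $s_1' := s_1$ and $s_2' := s_2$: we have $\sseq{s_1}{s_2} = \sseq{s_1'}{s_2'}$, the head $s_1'$ is not a sequence statement, and $s_2' = s_2$, so the first disjunct of the conclusion holds.

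For the inductive case, $s_1 = \sseq{t_1}{t_2}$ for some statements $t_1, t_2$. Since $t_1$ is an immediate substatement of $s_1$, the induction hypothesis applies to $t_1$; instantiating its quantified second component with $t_2$, I obtain a non-sequence statement $t_1'$ and a statement $r$ such that $s_1 = \sseq{t_1}{t_2} = \sseq{t_1'}{r}$ and either ($r = t_2$, with $t_1 = t_1'$) or ($r = \sseq{t_1''}{t_2}$ with $t_1 = \sseq{t_1'}{t_1''}$). In both cases $s_1 = \sseq{t_1'}{r}$ with $t_1'$ not a sequence. I then set $s_1' := t_1'$, $s_1'' := r$, and $s_2' := \sseq{r}{s_2}$. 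By associativity, $\sseq{s_1}{s_2} = \sseq{\sseq{t_1'}{r}}{s_2} = \sseq{t_1'}{\sseq{r}{s_2}} = \sseq{s_1'}{s_2'}$; moreover $s_1'$ is not a sequence, $s_1 = \sseq{t_1'}{r} = \sseq{s_1'}{s_1''}$, and $s_2' = \sseq{r}{s_2} = \sseq{s_1''}{s_2}$, so the second disjunct of the conclusion holds.

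I expect the only real obstacle to be bookkeeping of the disjunction: one has to notice that as soon as $s_1$ is itself a sequence we necessarily land in the \emph{second} disjunct (with witness $s_1'' = r$), since $s_2' = \sseq{r}{s_2}$ is a proper supersequence of $s_2$ and so can never equal $s_2$. It is also worth being explicit that the step $\sseq{\sseq{t_1'}{r}}{s_2} = \sseq{t_1'}{\sseq{r}{s_2}}$ is precisely the associativity identity we are permitted to invoke, and that the induction hypothesis is applied to the strictly smaller substatement $t_1$, not to $s_1$ itself.
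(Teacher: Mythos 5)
Your proof is correct and takes essentially the same approach as the paper: structural induction on $s_1$, with the non-sequence base case and an inductive case that applies the hypothesis to the head of the sequence and re-associates. The only difference is cosmetic—you merge the paper's two subcases (which split on which disjunct the inductive hypothesis returns) into one by observing that both yield $s_1 = \sseq{t_1'}{r}$ with $t_1'$ non-sequence, and you make the associativity identity explicit where the paper uses it silently.
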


\begin{proof}
  Suppose $s = s_1; s_2$. Complete the proof by induction on the syntax forms of $s_1$:

  \begin{enumcases}
    \case $s_1$ is not a sequence statement -- Let $s_1' = s_1$ and $s_2' = s_2$. Then $s = s_1'; s_2'$ where $s_1'$ is not a sequence statement and $s_2 = s_2'$.

    \case $s_1 = s_{11}; s_{12}$ -- Then $s = (s_{11}; s_{12}); s_2$.

      Applying the inductive hypothesis on $s_1$, $s_1 = s_{11}'; s_{12}'$ such that $s_{11}'$ is not a sequence statement and one of the following cases apply:
      
      \subcase $s_{12}' = s_{12}$ -- Let $s_1' = s_{11}'$, $s_1'' = s_{12}$, and $s_2' = s_{12}; s_2$. Then
        \begin{align*}
          s &= s_1; s_2 = (s_{11}'; s_{12}'); s_2 = (s_1'; s_{12}); s_2 = s_1'; (s_{12}; s_2) = s_1'; s_2' \\
          s_2' &= s_{12}; s_2 = s_1''; s_2 \\
          s_1 &= s_{11}'; s_{12}' = s_1'; s_{12} = s_1'; s_1''
        \end{align*}
        which satisfies the original statement.
      
      \subcase $s_{12}' = s_{11}''; s_{12}$ where $s_{11} = s_{11}'; s_{11}''$ -- Let $s_1' = s_{11}'$, $s_1'' = s_{11}''; s_{12}$, and $s_2' = s_{11}''; s_{12}; s_2$. Then
        \begin{align*}
          s &= s_1; s_2 = (s_{11}'; s_{12}'); s_2 = (s_{11}'; (s_{11}''; s_{12})); s_2 = s_{11}'; (s_{11}''; s_{12}; s_2) = s_1'; s_2' \\
          s_2' &= s_{11}''; s_{12}; s_2 = (s_{11}''; s_{12}); s_2 = s_1''; s_2 \\
          s_1 &= s_{11}'; s_{12}' = s_{11}'; (s_{11}''; s_{12}) = s_1'; s_1''
        \end{align*}
        which satisfies the original statement.
  \end{enumcases}
\end{proof}

\begin{lemma}[Run-time check monotonicity WRT permissions]\label{lem:scheck-perms-monotonicity-elem}
  If $\rtassert{V}{\heap}{\perms}{r}$ and $\perms \subseteq \perms'$, then $\rtassert{V}{\heap}{\perms'}{r}$.
\end{lemma}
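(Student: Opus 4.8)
The plan is to prove this by case analysis on the form of the run-time check $r$ — equivalently, on the last rule used to derive $\rtassert{V}{\heap}{\perms}{r}$ — following the four clauses of the grammar $r ::= t \mid \sperm \mid \fsep(\sperms_1,\sperms_2) \mid \bot$. In three of the four cases the premise of the relevant rule from Figure \ref{fig:rt-check-rules} does not mention the permission set at all, so monotonicity is immediate; the only case requiring any work is a predicate permission, where I invoke the earlier assertion-monotonicity lemma.

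Concretely: if $r = t$ is a symbolic expression, then $\rtassert{V}{\heap}{\perms}{t}$ was derived by \refrule{CheckValue}, whose only premise is $V(t) = \ktrue$; this premise is independent of the permission set, so \refrule{CheckValue} also yields $\rtassert{V}{\heap}{\perms'}{t}$. If $r = \pair{f}{t}$ is a field permission, \refrule{CheckAcc} gives $\pair{V(t)}{f} \in \perms$, hence $\pair{V(t)}{f} \in \perms'$ since $\perms \subseteq \perms'$, so re-applying \refrule{CheckAcc} gives the result. If $r = \fsep(\sperms_1,\sperms_2)$, \refrule{CheckSep} has the single premise $\vfoot{V}{\heap}{\sperms_1} \cap \vfoot{V}{\heap}{\sperms_2} = \emptyset$, which again does not depend on $\perms$, so we conclude by \refrule{CheckSep}. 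If $r = \bot$, no rule derives $\rtassert{V}{\heap}{\perms}{\bot}$, so the hypothesis is vacuous and the implication holds trivially.

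The one case with actual content is $r = \pair{p}{\multiple{t}}$, a predicate permission: here \refrule{CheckPred} gives $\assertion{\heap}{\perms}{[\multiple{x \mapsto V(t)}]}{\fpred(p)}$ with $\multiple{x} = \fpredparams(p)$, and by Lemma \ref{lem:assert-monotonicity} together with $\perms \subseteq \perms'$ we obtain $\assertion{\heap}{\perms'}{[\multiple{x \mapsto V(t)}]}{\fpred(p)}$, so re-applying \refrule{CheckPred} gives $\rtassert{V}{\heap}{\perms'}{\pair{p}{\multiple{t}}}$. I do not expect a genuine obstacle; the only care needed is that this lemma reduces to assertion monotonicity, so it must appear after Lemma \ref{lem:assert-monotonicity} in the development (which it does). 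The companion statement for a full set $\scheck$ then follows immediately by applying this lemma to each $r \in \scheck$.
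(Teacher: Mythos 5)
Your proposal is correct and matches the paper's own proof essentially step for step: both argue by cases on the rule deriving $\rtassert{V}{\heap}{\perms}{r}$, observe that the premises of \refrule{CheckValue} and \refrule{CheckSep} are independent of the permission set, use $\perms \subseteq \perms'$ directly for \refrule{CheckAcc}, and invoke Lemma \ref{lem:assert-monotonicity} for \refrule{CheckPred}. Your explicit remark that $\bot$ is underivable (so that case is vacuous) is a minor point the paper leaves implicit by casing on the derivation rather than on the syntax of $r$.
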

\begin{proof}
  By cases on $\rtassert{V}{\heap}{\perms}{r}$:
  \begin{enumcases}
    \case\refrule{CheckValue} -- $\rtassert{V}{\heap}{\perms}{t}$:
      By \refrule{CheckValue} $V(t) = \ktrue$, thus by \refrule{CheckValue} $\rtassert{V}{\heap}{\perms'}{t}$.
    \case\refrule{CheckAcc} -- $\rtassert{V}{\heap}{\perms}{\pair{f}{t}}$:
      By \refrule{CheckAcc} $\pair{V(t)}{f} \in \perms$, thus $\pair{V(t)}{f} \in \perms'$ and therefore $\rtassert{V}{\heap}{\perms'}{\pair{t}{f}}$ by \refrule{CheckAcc}.
    \case\refrule{CheckPred} -- $\rtassert{V}{\heap}{\perms}{\pair{p}{\multiple{t}}}$:
      By \refrule{CheckPred} $\assertion{\heap}{\perms}{[\multiple{x \mapsto V(t)}]}{\fpred(p)}$ where $\multiple{x} = \fpredparams(p)$. Then by lemma \ref{lem:assert-monotonicity}, $\assertion{\heap}{\perms'}{[\multiple{x \mapsto V(t)}]}{\fpred(p)}$. Therefore $\rtassert{V}{\heap}{\perms'}{\pair{p}{\multiple{t}}}$ by \refrule{CheckPred}.
    \refrule{CheckSep} -- $\rtassert{V}{\heap}{\perms}{\fsep(\sperms_1, \sperms_2)}$:
      By \refrule{CheckSep} $\vfoot{V}{\heap}{\sperms_1} \cap \vfoot{V}{\heap}{\sperms_2} = \emptyset$, therefore $\rtassert{V}{\heap}{\perms}{\fsep(\sperms_1, \sperms_2)}$ by \refrule{CheckSep}.
  \end{enumcases}
\end{proof}

\begin{lemma}[Run-time checks monotonicity WRT permissions]\label{lem:scheck-perms-monotonicity}
  If $\rtassert{V}{\heap}{\perms}{\scheck}$ and $\perms \subseteq \perms'$, then $\rtassert{V}{\heap}{\perms'}{\scheck}$.
\end{lemma}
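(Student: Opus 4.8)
The plan is to obtain this as an immediate corollary of the element-wise statement, Lemma~\ref{lem:scheck-perms-monotonicity-elem}, since the set-level judgement $\rtassert{V}{\heap}{\perms}{\scheck}$ is \emph{defined} as the conjunction $\universal{r \in \scheck}{\rtassert{V}{\heap}{\perms}{r}}$. First I would unfold the hypothesis $\rtassert{V}{\heap}{\perms}{\scheck}$ using that defining equivalence, so that for every $r \in \scheck$ we have $\rtassert{V}{\heap}{\perms}{r}$. Next, fixing an arbitrary $r \in \scheck$, I would apply Lemma~\ref{lem:scheck-perms-monotonicity-elem} to $\rtassert{V}{\heap}{\perms}{r}$ together with the assumption $\perms \subseteq \perms'$ to obtain $\rtassert{V}{\heap}{\perms'}{r}$. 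Since $r$ was arbitrary, this yields $\universal{r \in \scheck}{\rtassert{V}{\heap}{\perms'}{r}}$, which is precisely $\rtassert{V}{\heap}{\perms'}{\scheck}$ by the same definition.

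There is no real obstacle here: the lemma is pure bookkeeping, and all of the content lives in Lemma~\ref{lem:scheck-perms-monotonicity-elem}, whose proof is a four-way case analysis on the rules deriving $\rtassert{V}{\heap}{\perms}{r}$. The \refrule{CheckValue} case only constrains the valuation $V$; the \refrule{CheckSep} case does not mention $\perms$ at all; the \refrule{CheckAcc} case follows from $\pair{V(t)}{f} \in \perms \subseteq \perms'$; and the \refrule{CheckPred} case is the only one that actually needs monotonicity of assertions with respect to permissions, namely Lemma~\ref{lem:assert-monotonicity}. If forced to name a ``hard part,'' it would merely be keeping the overloaded set-versus-element run-time check notation straight, and that is resolved the moment one invokes the definition of $\rtassert{V}{\heap}{\perms}{\scheck}$ for a set $\scheck$.
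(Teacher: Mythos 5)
Your proof is correct and matches the paper's argument exactly: both unfold the defining equivalence $\rtassert{V}{\heap}{\perms}{\scheck} \iffdef \universal{r \in \scheck}{\rtassert{V}{\heap}{\perms}{r}}$ and apply Lemma~\ref{lem:scheck-perms-monotonicity-elem} pointwise. Your side remarks about the case analysis inside the element-wise lemma (with \refrule{CheckPred} being the only case needing Lemma~\ref{lem:assert-monotonicity}) also agree with the paper.
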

\begin{proof}
  Then by definition $\universal{r \in \scheck}{\rtassert{V}{\heap}{\perms}{r}}$ and then by lemma \ref{lem:scheck-perms-monotonicity-elem} $\universal{r \in \scheck}{\rtassert{V}{\heap}{\perms'}{r}}$. Thus $\rtassert{V}{\heap}{\perms'}{\scheck}$.
\end{proof}

\begin{lemma}[Run-time check set implies subsets]\label{lem:scheck-monotonicity}
  If $\rtassert{V}{\heap}{\perms}{\scheck}$ and $\scheck' \subseteq \scheck$, then $\rtassert{V}{\heap}{\perms}{\scheck'}$.
\end{lemma}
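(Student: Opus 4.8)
The plan is to unfold the definition of the set-level run-time check judgement and reduce immediately to the element-level judgement. Recall that $\rtassert{V}{\heap}{\perms}{\scheck}$ is defined (in \S\ref{sec:grammar}'s companion rules for run-time checks) as the abbreviation $\universal{r \in \scheck}{\rtassert{V}{\heap}{\perms}{r}}$. So the entire content of the lemma is a statement about universally quantified propositions over a set and one of its subsets.

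Concretely, I would proceed as follows. First, assume $\rtassert{V}{\heap}{\perms}{\scheck}$ and $\scheck' \subseteq \scheck$. By the definition above, the first assumption unfolds to $\universal{r \in \scheck}{\rtassert{V}{\heap}{\perms}{r}}$. Second, let $r$ be an arbitrary element of $\scheck'$; since $\scheck' \subseteq \scheck$, we have $r \in \scheck$, and hence $\rtassert{V}{\heap}{\perms}{r}$ by the instance of the universal statement at $r$. Since $r \in \scheck'$ was arbitrary, we conclude $\universal{r \in \scheck'}{\rtassert{V}{\heap}{\perms}{r}}$, which is exactly $\rtassert{V}{\heap}{\perms}{\scheck'}$ by definition. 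No case analysis on the form of $r$ is needed, unlike in Lemma \ref{lem:scheck-perms-monotonicity-elem}, because we are not changing $V$, $\heap$, or $\perms$; we merely restrict the index set.

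I do not anticipate any genuine obstacle here: the lemma is a direct consequence of the meaning of the $\rtassert{V}{\heap}{\perms}{\scheck}$ abbreviation as a conjunction (universal quantification) over $\scheck$, together with monotonicity of universal quantification under shrinking the domain. The only thing to be careful about is to cite the defining equation of the set-level judgement explicitly, so the reader sees that the step "for all $r \in \scheck$ implies for all $r \in \scheck'$" is not an appeal to any property of run-time checks but simply to $\scheck' \subseteq \scheck$. This lemma is presumably stated as a named cross-cutting lemma only so that later proofs (e.g. those that split a set of collected checks into the checks of sub-derivations) can invoke it by reference rather than re-deriving it inline.
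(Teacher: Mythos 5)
Your proof is correct and matches the paper's own argument exactly: both unfold the set-level judgement to $\universal{r \in \scheck}{\rtassert{V}{\heap}{\perms}{r}}$ and transfer the universal quantification to the subset $\scheck'$ via $\scheck' \subseteq \scheck$. No differences worth noting.
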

\begin{proof}
  By definition $\universal{r \in \scheck}{\rtassert{V}{\heap}{\perms}{r}}$ and $\universal{r \in \scheck'}{r \in \scheck}$, therefore $\universal{r \in \scheck'}{\rtassert{V}{\heap}{\perms}{r}}$. Thus $\rtassert{V}{\heap}{\perms}{\scheck'}$.
\end{proof}

\subsection{Evaluation}


\begin{definition}\label{def:eval-valuation}
  For a judgement $\seval{\sstate}{e}{t}{\sstate'}{\scheck}$, given an initial valuation $V$ and heap $\heap$, the \textbf{corresponding valuation} is denoted
  $$V[\seval{\sstate}{e}{t}{\sstate'}{\scheck} \mid \heap].$$
  This valuation is defined as follows, depending on the rule that proves the derivation. Values are referenced using the respective name from the rule definition.

  Note that the corresponding valuation always extends the initial valuation, and is defined for all fresh symbolic values in the judgement.

  \begin{itemize}
    \item \refrule{SEvalLiteral}:
      $$V[\seval{\sstate}{l}{l}{\sstate}{\_} \mid \heap] := V$$
    \item \refrule{SEvalVar}:
      $$V[\seval{\sstate}{x}{\_}{\sstate}{\_} \mid \heap] := V$$
    \item \refrule{SEvalNeg}:
      $$V[\seval{\sstate}{\kneg e}{\kneg t}{\sstate'}{\scheck} \mid \heap] := V[\seval{\sstate}{e}{t}{\sstate'}{\scheck} \mid \heap]$$
    \item \refrule{SEvalOrA}:
      $$V[\seval{\sstate}{e_1 \kor e_2}{t_1}{\sstate''}{\scheck} \mid \heap] := V[\seval{\sstate}{e_1}{t_1}{\sstate'}{\scheck} \mid \heap]$$
    \item \refrule{SEvalOrB}:
      \begin{align*}
        &V[\seval{\sstate}{e_1 \kor e_2}{t_2}{\sstate''}{\scheck_1 \cup \scheck_2} \mid \heap] := \\
        &\hspace{2em} V[\seval{\sstate}{e_1}{t_1}{\sstate'}{\scheck_1} \mid \heap] \\
        &\hspace{3em} [\seval{\sstate'[\pc = \pc(\sstate') \kand \kneg t_1]}{e_2}{t_2}{\sstate''}{\scheck_2} \mid \heap]
      \end{align*}
    \item \refrule{SEvalAndA}:
      $$V[\seval{\sstate}{e_1 \kand e_2}{t_1}{\sstate''}{\scheck} \mid \heap] := V[\seval{\sstate}{e_1}{t_1}{\sstate'}{\scheck} \mid \heap]$$
    \item \refrule{SEvalAndB}:
      \begin{align*}
        &V[\seval{\sstate}{e_1 \kand e_2}{t_2}{\sstate''}{\scheck_1 \cup \scheck_2} \mid \heap] := \\
        &\hspace{2em} V[\seval{\sstate}{e_1}{t_1}{\sstate'}{\scheck_1} \mid \heap] \\
        &\hspace{3em} [\seval{\sstate'[\pc = \pc(\sstate') \kand t_1]}{e_2}{t_2}{\sstate''}{\scheck_2} \mid \heap]
      \end{align*}
    \item \refrule{SEvalOp}:
      \begin{align*}
        &V[\seval{\sstate}{e_1 \oplus e_2}{t_1 \oplus t_2}{\sstate''}{\scheck_1 \cup \scheck_2} \mid \heap] := \\
        &\hspace{2em} V[\seval{\sstate}{e_1}{t_1}{\sstate'}{\scheck_1} \mid \heap] \\
        &\hspace{3em} [\seval{\sstate'}{e_2}{t_2}{\sstate''}{\scheck_2} \mid \heap]
      \end{align*}
    \item \refrule{SEvalField} or \refrule{SEvalFieldOptimistic}:
      $$V[\seval{\sstate}{e.f}{t}{\sstate''}{\_} \mid \heap] := V[\seval{\sstate}{e}{t_e}{\sstate'}{\scheck} \mid \heap]$$
    \item \refrule{SEvalFieldImprecise}:
      $$V[\seval{\sstate}{e.f}{t}{\sstate''}{\_} \mid \heap] := V[\seval{\sstate}{e}{t_e}{\sstate'}{\scheck} \mid \heap][t \mapsto \heap(V(t_e), f)]$$
    \item \refrule{SEvalFieldFailure}:
      $$V[\seval{\sstate}{e.f}{t}{\sstate'}{\_} \mid \heap] := V[\seval{\sstate}{e}{t_e}{\sstate'}{\scheck} \mid \heap][t \mapsto \heap(V(t_e), f)]$$
  \end{itemize}
\end{definition}

\begin{lemma}\label{lem:eval-subpath}
  If $\seval{\sstate}{e}{\_}{\sstate'}{\_}$ then $\pc(\sstate') \implies \pc(\sstate)$.
\end{lemma}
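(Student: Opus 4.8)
The plan is to proceed by structural induction on the derivation of $\seval{\sstate}{e}{t}{\sstate'}{\scheck}$, using the rules in \S\ref{sec:seval-rules}. The key observation is that every evaluation rule produces an output state whose path condition is either syntactically identical to the input path condition, or is obtained from a recursively-produced path condition by conjoining one further conjunct. Since conjunction only strengthens a symbolic expression — i.e. $t_1 \kand t_2 \implies t_1$, which follows directly from Definition \ref{def:implication} — and since $\implies$ is transitive, each step preserves the invariant that the final path condition implies $\pc(\sstate)$.

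More concretely: the base cases \refrule{SEvalLiteral} and \refrule{SEvalVar} set $\sstate' = \sstate$, so $\pc(\sstate') = \pc(\sstate)$ and the implication is trivial. The unary cases \refrule{SEvalNeg}, \refrule{SEvalField}, \refrule{SEvalFieldOptimistic}, \refrule{SEvalFieldImprecise}, and \refrule{SEvalFieldFailure} each have exactly one recursive evaluation premise $\seval{\sstate}{e}{t_e}{\sstate'}{\scheck}$, and the rule's output state has the same path condition as $\sstate'$ (the field rules only modify $\oheap$); the induction hypothesis then gives $\pc(\sstate') \implies \pc(\sstate)$ directly. For \refrule{SEvalOrA} and \refrule{SEvalAndA}, the output path condition is $\pc(\sstate') \kand \kneg t_1$ (resp.\ with the sign flipped) where $\sstate'$ is produced by the recursive call on $e_1$; by the induction hypothesis $\pc(\sstate') \implies \pc(\sstate)$, and since $\pc(\sstate') \kand \kneg t_1 \implies \pc(\sstate')$, transitivity closes the case. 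The \refrule{SEvalOp}, \refrule{SEvalOrB}, and \refrule{SEvalAndB} cases have two sequential recursive premises; here I would apply the induction hypothesis to each, noting that the intermediate state fed into the second premise has path condition either $\pc(\sstate')$ (for \refrule{SEvalOp}) or $\pc(\sstate')$ strengthened by one conjunct (for \refrule{SEvalOrB}/\refrule{SEvalAndB}), so chaining the implications via transitivity again yields the result.

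I do not expect a serious obstacle; the argument is essentially bookkeeping over the rule set. The only point requiring mild care is the interaction of transitivity of $\implies$ with the partiality of valuations in Definition \ref{def:implication}: strictly, $t_1 \implies t_2$ and $t_2 \implies t_3$ only chain on valuations that define all three expressions. But in each inductive step the newly added conjuncts are built from symbolic expressions already occurring in the relevant states, so any valuation defining the endpoints also defines the intermediate path conditions, and the chaining goes through. This lemma will then be available to later arguments to guarantee that an evaluation step never discards constraints already present in the incoming path condition.
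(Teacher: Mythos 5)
Your proposal is correct and takes essentially the same approach as the paper: induction on the derivation, observing that every rule either leaves the path condition unchanged or strengthens a recursively-obtained one by a single conjunct, then closing each case with $t_1 \kand t_2 \implies t_1$ and transitivity. (One immaterial slip: \refrule{SEvalOrA} conjoins $t_1$ and \refrule{SEvalAndA} conjoins $\kneg t_1$, not the reverse as you state — the argument is unaffected.)
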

\begin{proof}
  By induction on $\seval{\sstate}{e}{\_}{\sstate'}{\_}$:

  \begin{enumcases}
    \case \refrule{SEvalLiteral} -- $\seval{\sstate}{l}{\_}{\sstate}{\_}$; \refrule{SEvalVar} -- $\seval{\sstate}{x}{\_}{\sstate}{\_}$: Trivial since $\pc(\sstate) \implies \pc(\sstate)$.

    \case\label{case:eval-subpath-ora} \refrule{SEvalOrA} -- $\seval{\sstate}{e_1 \vee e_2}{\_}{\sstate''}{\_}$:
      By \refrule{SEvalOrA} $\seval{\sstate}{e_1}{\_}{\sstate'}{\_}$, thus by induction $\pc(\sstate') \implies \pc(\sstate)$. Therefore $\pc(\sstate'') = \pc(\sstate') \wedge t_1 \implies \pc(\sstate') \implies \pc(\sstate)$.

    \case\label{case:eval-subpath-orb} \refrule{SEvalOrB} -- $\seval{\sstate}{e_1 \vee e_2}{\_}{\sstate''}{\_}$:
      By \refrule{SEvalOrB} $\seval{\sstate}{e_1}{\_}{\sstate'}{\_}$ and $\seval{\sstate'}{e_2}{\_}{\sstate''}{\_}$, thus by induction $\pc(\sstate'') \implies \pc(\sstate') \implies \pc(\sstate)$. Therefore $\pc(\sstate''') = \pc(\sstate'') \kand \kneg t_1 \implies \pc(\sstate'') \implies \pc(\sstate)$.

    \case \refrule{SEvalAndA} -- $\seval{\sstate}{e_1 \wedge e_2}{\_}{\sstate''}{\_}$: Similar to case \ref{case:eval-subpath-ora}.

    \case \refrule{SEvalAndB} -- $\seval{\sstate}{e_1 \wedge e_2}{\_}{\sstate''}{\_}$: Similar to case \ref{case:eval-subpath-orb}.

    \case \refrule{SEvalOp} -- $\seval{\sstate}{e_1 \oplus e_2}{\_}{\sstate''}{\_}$; \refrule{SEvalNeg} -- $\seval{\sstate}{\kneg e}{\_}{\sstate'}{\_}$; \refrule{SEvalField}, \refrule{SEvalFieldOptimistic} -- $\seval{\sstate}{e.f}{\_}{\sstate'}{\_}$; \refrule{SEvalFieldImprecise} -- $\seval{\sstate}{e.f}{\_}{\sstate''}{\_}$: By induction.
  \end{enumcases}
\end{proof}

\begin{lemma}\label{lem:eval-unchanged}
  If $\seval{\sstate}{e}{t}{\sstate'}{\scheck}$ then $\imp(\sstate') = \imp(\sstate)$, $\sheap(\sstate') = \sheap(\sstate)$, and $\senv(\sstate') = \senv(\sstate)$.
\end{lemma}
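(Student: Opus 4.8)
The plan is to prove Lemma~\ref{lem:eval-unchanged} by structural induction on the derivation of $\seval{\sstate}{e}{t}{\sstate'}{\scheck}$, exactly as was done for Lemma~\ref{lem:eval-subpath} on the same judgement. The single observation that drives the argument is that \emph{no} symbolic evaluation rule ever writes to the imprecise flag, the precise heap, or the symbolic store: inspecting the rules in \S\ref{sec:seval-rules}, the only state components any rule modifies are the path condition (in \refrule{SEvalOrA}, \refrule{SEvalOrB}, \refrule{SEvalAndA}, \refrule{SEvalAndB}) and the optimistic heap (in \refrule{SEvalFieldImprecise}). So it suffices, in each inductive case, to feed the sub-derivations through the induction hypothesis and then check that whatever additional update the rule performs leaves $\imp$, $\sheap$, and $\senv$ untouched.

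Concretely: for the base cases \refrule{SEvalLiteral} and \refrule{SEvalVar} we have $\sstate' = \sstate$, so the three equalities are immediate. For \refrule{SEvalNeg}, \refrule{SEvalOp}, \refrule{SEvalField}, \refrule{SEvalFieldOptimistic}, and \refrule{SEvalFieldFailure} the output state is precisely the state produced by a sub-evaluation (with, in the \refrule{SEvalOp} case, a chain of two sub-evaluations), so the claim follows directly from the induction hypothesis applied to each premise. For \refrule{SEvalFieldImprecise} the output is $\sstate'[\oheap = \oheap(\sstate'); \triple{f}{t_e}{t}]$ where $\sstate'$ comes from the premise $\seval{\sstate}{e}{t_e}{\sstate'}{\scheck}$; the induction hypothesis gives $\imp(\sstate') = \imp(\sstate)$, $\sheap(\sstate') = \sheap(\sstate)$, $\senv(\sstate') = \senv(\sstate)$, and the bracket update changes only $\oheap$, so the same three equalities hold of the final state. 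For the short-circuiting cases \refrule{SEvalOrA}/\refrule{SEvalAndA} the output is $\sstate'[\pc = \ldots]$, which touches only $\pc$, so again the induction hypothesis on the single premise suffices; and for \refrule{SEvalOrB}/\refrule{SEvalAndB} one applies the induction hypothesis to the first premise, notes that the intermediate $\sstate'[\pc = \ldots]$ preserves $\imp$, $\sheap$, $\senv$ because it only updates $\pc$, and then applies the induction hypothesis to the second premise to carry the equalities through to $\sstate''$.

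I do not expect any real obstacle here: this is a routine syntactic induction of the same flavour as Lemma~\ref{lem:eval-subpath}. The only point needing a moment of care is the bookkeeping of exactly which component each state-update notation $\sstate[\cdots]$ modifies, and confirming that in every rule that component lies outside $\{\imp, \sheap, \senv\}$; once that check is made for each rule, every case closes immediately. (It is worth keeping the complementary fact -- that $\oheap$ can only \emph{grow}, and only via \refrule{SEvalFieldImprecise} -- as a separate observation rather than folding it into this lemma, since it will be needed when relating the optimistically-added chunks to the dynamic heap in the correspondence proofs.)
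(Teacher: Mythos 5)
Your proof is correct and follows exactly the route the paper takes: the paper dismisses this lemma as ``trivial by induction on $\seval{\sstate}{e}{t}{\sstate'}{\scheck}$,'' and your case-by-case check that every rule only ever touches $\pc$ or $\oheap$ is precisely the content of that induction, spelled out. No gaps; the only difference is your level of detail.
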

\begin{proof}
  Trivial by induction on $\seval{\sstate}{e}{t}{\sstate'}{\scheck}$.
\end{proof}

\begin{lemma}\label{lem:seval-soundness}
  Let $V$ be some initial valuation and $\triple{\heap}{\perms}{\env}$ be some well-formed evaluation state such that $\simstate{V}{\sstate}{\heap}{\perms}{\env}$.

  If $\seval{\sstate}{e}{t}{\sstate'}{\scheck}$, $\rtassert{V'}{\heap}{\perms}{\scheck}$, and $V'(\pc(\sstate')) = \ktrue$ where $V' \supseteq V[\seval{\sstate}{e}{t}{\sstate'}{\scheck} \mid \heap]$, then
  $$\simstate{V'}{\sstate'}{\heap}{\perms}{\env}, \quad
    \eval{\heap}{\env}{e}{V'(t)}, \quad \text{and}~
    \frm{\heap}{\perms}{\env}{e}.$$
\end{lemma}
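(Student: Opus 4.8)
The plan is to proceed by induction on the derivation of $\seval{\sstate}{e}{t}{\sstate'}{\scheck}$, with a case analysis on the last rule applied; the induction hypothesis then applies to each sub-derivation with an arbitrary initial valuation and evaluation state satisfying the correspondence premise. The two base rules \refrule{SEvalLiteral} and \refrule{SEvalVar} leave the state unchanged ($\sstate' = \sstate$) and have corresponding valuation $V$, so the state-correspondence conclusion follows from the hypothesis $\simstate{V}{\sstate}{\heap}{\perms}{\env}$ together with the fact that correspondence is insensitive to extending a valuation on symbolic values not occurring in the state; the evaluation conclusion follows from \refrule{EvalLiteral}/\refrule{EvalVar} (using $\simenv{V}{\senv(\sstate)}{\env}$ in the variable case to get $V'(\senv(\sstate)(x)) = \env(x)$) and framing from \refrule{FrameLiteral}/\refrule{FrameVar}. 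The rule \refrule{SEvalFieldFailure} yields $\scheck = \set{\bot}$, and since no rule derives $\rtassert{V'}{\heap}{\perms}{\bot}$, that case is discharged vacuously by the hypothesis $\rtassert{V'}{\heap}{\perms}{\scheck}$.

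For \refrule{SEvalNeg} and \refrule{SEvalOp} I would apply the induction hypothesis to each premise, splitting $\scheck$ via lemma \ref{lem:scheck-monotonicity} in the binary case, and conclude with \refrule{EvalNeg}/\refrule{EvalOp} and \refrule{FrameNeg}/\refrule{FrameOp}. For the short-circuiting rules \refrule{SEvalOrA} and \refrule{SEvalAndA}, the output state appends $t_1$ (resp.\ $\kneg t_1$) to the path condition, so $V'(\pc(\sstate')) = \ktrue$ forces $V'(t_1) = \ktrue$ (resp.\ $\kfalse$); applying the induction hypothesis to the one premise---whose corresponding valuation agrees with that of the conclusion by Definition \ref{def:eval-valuation}---gives the result via \refrule{EvalOrA}/\refrule{EvalAndA} and \refrule{FrameOrA}/\refrule{FrameAndA}. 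The rules \refrule{SEvalOrB} and \refrule{SEvalAndB} are the crux. Taking \refrule{SEvalOrB} (the $\kand$ case is symmetric), I would: split $\scheck = \scheck_1 \cup \scheck_2$ (lemma \ref{lem:scheck-monotonicity}); use lemma \ref{lem:eval-subpath} on the second premise to deduce that $V'$ satisfies $\pc(\sstate') \kand \kneg t_1$, hence $V'(\pc(\sstate')) = \ktrue$ and $V'(t_1) = \kfalse$; apply the induction hypothesis to the first premise with initial valuation $V$ and extended valuation $V'$ to obtain $\simstate{V'}{\sstate'}{\heap}{\perms}{\env}$, $\eval{\heap}{\env}{e_1}{\kfalse}$, and $\frm{\heap}{\perms}{\env}{e_1}$; and then apply it to the second premise, now using as the \emph{initial} valuation the corresponding valuation $V_1$ of the first premise while keeping $V'$ as the extended one. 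This last step is legitimate because $V_1 \subseteq V'$ and so $V_1$ agrees with $V'$ on every symbolic value occurring in the intermediate state $\sstate'[\pc = \pc(\sstate') \kand \kneg t_1]$ (so correspondence transfers from $V'$ to $V_1$), while $V' \supseteq V[\mathcal{J} \mid \heap]$ which, by Definition \ref{def:eval-valuation}, equals $V_1$ extended by the corresponding valuation of the second premise. Combining the two applications with \refrule{EvalOrB} and \refrule{FrameOrB} closes the case.

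For the field rules I would first apply the induction hypothesis to the premise evaluating $e$, getting $\eval{\heap}{\env}{e}{V'(t_e)}$, $\frm{\heap}{\perms}{\env}{e}$, and $\simstate{V'}{\sstate'}{\heap}{\perms}{\env}$. In \refrule{SEvalField} the side condition $\pc(\sstate') \implies t_e \keq t_e'$ with $V'(\pc(\sstate')) = \ktrue$ gives $V'(t_e) = V'(t_e')$, and $\triple{f}{t_e'}{t} \in \sheap(\sstate')$ with $\simheap{V'}{\sheap(\sstate')}{\heap}{\perms}$ gives $\heap(V'(t_e),f) = V'(t)$ and $\pair{V'(t_e)}{f} \in \perms$; then \refrule{EvalField} gives $\eval{\heap}{\env}{e.f}{V'(t)}$, \refrule{AssertAcc} gives $\assertion{\heap}{\perms}{\env}{\kacc(e.f)}$, and \refrule{FrameField} gives $\frm{\heap}{\perms}{\env}{e.f}$. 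The rule \refrule{SEvalFieldOptimistic} is identical, reading the chunk from $\oheap(\sstate')$ via \eqref{eq:oheap-correspondence}. The rule \refrule{SEvalFieldImprecise} additionally inserts the fresh chunk $\triple{f}{t_e}{t}$ into $\oheap$ and $\pair{t_e}{f}$ into $\scheck$; from $\rtassert{V'}{\heap}{\perms}{\scheck}$ and \refrule{CheckAcc} we get $\pair{V'(t_e)}{f} \in \perms$, and since the corresponding valuation sets $V'(t) = \heap(V'(t_e),f)$, the new chunk satisfies \eqref{eq:oheap-correspondence}, so $\simstate{V'}{\sstate''}{\heap}{\perms}{\env}$, and $\eval{\heap}{\env}{e.f}{V'(t)}$ holds by \refrule{EvalField}; framing follows as above.

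I expect the main obstacle to be the valuation bookkeeping in the binary cases: matching the corresponding valuation of a sub-derivation against the one prescribed for the conclusion by Definition \ref{def:eval-valuation}, and verifying that state correspondence survives passing between these valuations and the path-condition updates. The clean way to handle this is to first establish an auxiliary fact---that $\simstate{V_1}{\sstate}{\heap}{\perms}{\env}$ and $\simstate{V_2}{\sstate}{\heap}{\perms}{\env}$ are equivalent whenever $V_1$ and $V_2$ agree on all symbolic values occurring in $\sstate$, and in particular that correspondence is preserved by extending a valuation with fresh symbolic values---and then to lean on lemmas \ref{lem:eval-subpath} and \ref{lem:eval-unchanged} to track how the path condition and the precise heap/environment evolve along the evaluation.
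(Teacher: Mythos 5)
Your proposal is correct and follows essentially the same route as the paper's proof: induction on the derivation of $\seval{\sstate}{e}{t}{\sstate'}{\scheck}$, using lemma \ref{lem:eval-subpath} to propagate the path condition backward through the short-circuiting cases, lemma \ref{lem:scheck-monotonicity} to split check sets, the heap-correspondence relation for the field cases, \refrule{CheckAcc} for \refrule{SEvalFieldImprecise}, and vacuous discharge of \refrule{SEvalFieldFailure}. The valuation bookkeeping you flag as the main obstacle is handled in the paper exactly as you describe, by chaining corresponding valuations per Definition \ref{def:eval-valuation} and relying on the fact that extensions by fresh values preserve correspondence.
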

\begin{proof}
  By induction on $\seval{\sstate}{e}{t}{\sstate'}{\scheck}$:
  
  \begin{enumcases}
    \case \refrule{SEvalLiteral} -- $\seval{\sstate}{l}{l}{\sstate}{\emptyset}$:
      Then $V' = V$, thus $\simstate{V'}{\sstate}{\heap}{\perms}{\env}$ by assumption.

      By \refrule{EvalLiteral} $\eval{\heap}{\env}{l}{l}$, and by definition $V'(l) = l$.

      By \refrule{FrameLiteral} $\frm{\heap}{\perms}{\env}{l}$.

    \case \refrule{SEvalVar} -- $\seval{\sstate}{x}{\senv(\sstate)(x)}{\sstate}{\emptyset}$:
      Then $V' = V$, thus $\simstate{V'}{\sstate}{\heap}{\perms}{\env}$ by assumption.

      By \refrule{EvalVar} $\eval{\heap}{\env}{x}{\env(x)}$, and $\env(x) = V'(\senv(\sstate)(x))$ since $\simenv{V'}{\senv(\sstate)}{\env}$.

      By \refrule{FrameVar} $\frm{\heap}{\perms}{\env}{x}$.

    \case\label{case:seval-sim-ora} \refrule{SEvalOrA} -- $\seval{\sstate}{e_1 \kor e_2}{t_1}{\sstate''}{\scheck}$:

      By \refrule{SEvalOrA} $\seval{\sstate}{e_1}{t_1}{\sstate'}{\scheck}$.

      Suppose $\rtassert{V'}{\heap}{\perms}{\scheck}$ and $V'(\pc(\sstate'')) = \ktrue$. Then $V'(\pc(\sstate')) = \ktrue$ since $\pc(\sstate'') = \pc(\sstate') \kand t_1 \implies \pc(\sstate')$ by \refrule{SEvalOrA}. Then by induction $\simstate{V'}{\sstate'}{\heap}{\perms}{\env}$.

      Now $\simstate{V'}{\sstate''}{\heap}{\perms}{\env}$ since $\sstate'' = \sstate'[\pc = \pc(\sstate') \kand t_1]$ by \refrule{SEvalOrA} and $V'(\pc(\sstate'')) = \ktrue$ by assumption.

      By induction $\eval{\heap}{\env}{e_1}{V'(t_1)}$. But now $V'(t_1) = \ktrue$ since $\pc(\sstate'') = \pc(\sstate') \kand t_1 \implies t_1$. Thus $\eval{\heap}{\env}{e_1}{\ktrue}$.

      Then by \refrule{EvalOrA} $\eval{\heap}{\env}{e_1 \kor e_2}{\ktrue}$ and $V'(t_1) = \ktrue$.

      By induction $\frm{\heap}{\perms}{\env}{e_1}$. Thus $\frm{\heap}{\perms}{\env}{e_1 \kor e_2}$ by \refrule{FrameOrA} since $\eval{\heap}{\env}{e_1}{\ktrue}$.

    \case\label{case:seval-sim-orb} \refrule{SEvalOrB} -- $\seval{\sstate}{e_1 \vee e_2}{t_2}{\sstate''}{\scheck_1 \cup \scheck_2}$:

      By \refrule{SEvalOrB} $\seval{\sstate}{e_1}{t_1}{\sstate'}{\scheck_1}$ and $\seval{\hat{\sstate}'}{e_2}{t_2}{\sstate''}{\scheck_2}$, where $\hat{\sstate}' = \sstate'[\pc = \pc(\sstate') \kand \kneg t_1]$.

      Now suppose $\rtassert{V'}{\heap}{\perms}{\scheck_1 \cup \scheck_2}$ and $V'(\pc(\sstate'')) = \ktrue$. By lemma \ref{lem:scheck-monotonicity} $\rtassert{V'}{\heap}{\perms}{\scheck_1}$ (thus $\rtassert{V'}{\heap}{\perms}{\scheck_1}$) and $\rtassert{V'}{\heap}{\perms}{\scheck_2}$.

      Also, by lemma \ref{lem:eval-subpath}, $\pc(\sstate'') \implies \pc(\hat{\sstate}') = \pc(\sstate') \kand \kneg t_1 \implies \pc(\sstate')$. Therefore $V'(\pc(\sstate')) = V'(\pc(\sstate')) = \ktrue$, and $V'(\pc(\sstate'')) = \ktrue$ by assumption.

      Now by induction $\simstate{V'}{\sstate'}{\heap}{\perms}{\env}$. Also $V'(\pc(\hat{\sstate}')) = \ktrue$ since $\pc(\sstate'') \implies \pc(\sstate')$. Therefore $\simstate{V'}{\hat{\sstate}'}{\heap}{\perms}{\env}$. Then also by induction $\simstate{V'}{\sstate''}{\heap}{\perms}{\env}$.

      By induction $\eval{\heap}{\env}{e_1}{V'(t_1)}$ and $\eval{\heap}{\env}{e_2}{V'(t_2)}$. But now $\pc(\sstate'') \implies \pc(\hat{\sstate}') = \pc(\sstate') \kand \kneg t_1 \implies \kneg t_1$, thus $V'(t_1) = \kfalse$. Therefore $\eval{\heap}{\env}{e_1}{\kfalse}$.

      Thus $\eval{\heap}{\env}{e_1 \kor e_2}{V'(t_2)}$ by \refrule{EvalOrB}.

      By induction $\frm{\heap}{\perms}{\env}{e_1}$ and $\frm{\heap}{\perms}{\env}{e_2}$. Thus $\frm{\heap}{\perms}{\env}{e_1 \kor e_2}$ by \refrule{FrameOrB} since $\eval{\heap}{\env}{e_1}{\kfalse}$.

    \case \refrule{SEvalAndA} -- $\seval{\sstate}{e_1 \wedge e_2}{t_1}{\sstate''}{\scheck}$: Similar to case \ref{case:seval-sim-ora}.

    \case \refrule{SEvalAndB} -- $\seval{\sstate}{e_1 \wedge e_2}{t_2}{\sstate''}{\scheck_1 \cup \scheck_2}$: Similar to case \ref{case:seval-sim-orb}.

    \case \refrule{SEvalOp} -- $\seval{\sstate}{e_1 \oplus e_2}{t_1 \oplus t_2}{\sstate''}{\scheck_1 \cup \scheck_2}$:

      By \refrule{SEvalOp} $\seval{\sstate}{e_1}{t_1}{\sstate'}{\scheck_1}$ and $\seval{\sstate'}{e_2}{t_2}{\sstate''}{\scheck_2}$.

      Now suppose $\rtassert{V'}{\heap}{\perms}{\scheck_1 \cup \scheck_2}$ and $V'(\pc(\sstate'')) = \ktrue$. By lemma \ref{lem:scheck-monotonicity} $\rtassert{V'}{\heap}{\perms}{\scheck_1}$ and $\rtassert{V'}{\heap}{\perms}{\scheck_2}$.

      Also, by lemma \ref{lem:eval-subpath}, $\pc(\sstate'') \implies \pc(\sstate')$. Therefore $V'(\pc(\sstate')) = V'(\pc(\sstate')) = \ktrue$, and $V'(\pc(\sstate'')) = \ktrue$ by assumption.

      Now by induction $\simstate{V'}{\sstate'}{\heap}{\perms}{\env}$, and then also by induction $\simstate{V'}{\sstate''}{\heap}{\perms}{\env}$.

      By induction $\eval{\heap}{\env}{e_1}{V'(t_1)}$ and $\eval{\heap}{\env}{e_2}{V'(t_2)}$. Therefore $\eval{\heap}{\env}{e_1 \oplus e_2}{V'(t_1) \oplus V'(t_2)}$ and $V'(t_1) \oplus V'(t_2) = V'(t_1 \oplus t_2)$.

      By induction $\frm{\heap}{\perms}{\env}{e_1}$ and $\frm{\heap}{\perms}{\env}{e_2}$. Thus $\frm{\heap}{\perms}{\env}{e_1 \oplus e_2}$ by \refrule{FrameOp}.

    \case \refrule{SEvalNeg} -- $\seval{\sstate}{\kneg e}{\kneg t}{\sstate'}{\scheck}$:
    
      By \refrule{SEvalNeg} $\seval{\sstate}{e}{t}{\sstate'}{\scheck}$.

      Suppose that $\rtassert{V'}{\heap}{\perms}{\scheck}$ and $V'(\pc(\sstate')) = \ktrue$.

      Now $\simstate{V'}{\sstate'}{\heap}{\perms}{\env}$ by induction.

      Also by induction $\eval{\heap}{\env}{e}{V'(t)}$. Thus $\eval{\heap}{\env}{\kneg e}{\neg V'(t)}$ by \refrule{EvalNeg} and $V'(\kneg t) = \neg V'(t)$.

      By induction $\frm{\heap}{\perms}{\env}{e}$, and thus $\frm{\heap}{\perms}{\env}{\kneg e}$ by \refrule{FrameNeg}.

    \case\label{case:seval-sim-field} \refrule{SEvalField}: $\seval{\sstate}{e.f}{t}{\sstate'}{\scheck}$

      By \refrule{SEvalField} $\seval{\sstate}{e}{t_e}{\sstate'}{\scheck}$.

      Suppose that $\rtassert{V'}{\heap}{\perms}{\scheck}$ and $V'(\pc(\sstate')) = \ktrue$.

      Then $\simstate{V'}{\sstate'}{\heap}{\perms}{\env}$ by induction.

      Also by induction $\eval{\heap}{\env}{e}{V'(t_e)}$. By \refrule{SEvalField} $\pc(\sstate') \implies t_e \keq t_e'$, therefore $V'(t_e) = V'(t_e')$. Also by \refrule{SEvalField} $\triple{f}{t_e'}{t} \in \sheap(\sstate')$. Therefore $V'(t) = \heap(V'(t_e'), f) = \heap(V'(t_e), f)$ since $\simstate{V'}{\sstate'}{\heap}{\env}{\perms}$.

      Thus $\eval{\heap}{\env}{e.f}{\heap(V'(t_e), f)}$ by \refrule{EvalField} and $\heap(V'(t_e), f) = V_{\sstate'}(t)$.

      By induction $\frm{\heap}{\perms}{\env}{e}$. Also, since $\triple{f}{t_e'}{t} \in \sheap(\sstate')$ and $\simstate{V'}{\sstate'}{\heap}{\env}{\perms}$, $\pair{V'(t_e')}{f} = \pair{V'(t_e)}{f} \in \perms$. By \refrule{AssertAcc}, $\assertion{\heap}{\perms}{\env}{\kacc(e.f)}$ since $\eval{\heap}{\env}{e}{V'(t_e)}$. Thus by \refrule{FrameField} $\frm{\heap}{\perms}{\env}{e.f}$.

    \case \refrule{SEvalFieldOptimistic} -- $\seval{\sstate}{e.f}{t}{\sstate'}{\scheck}$: Similar to case \ref{case:seval-sim-field}.

    \case \refrule{SEvalFieldImprecise} -- $\seval{\sstate}{e.f}{t}{\sstate'}{\scheck; \pair{t_e}{f}}$:

      By \refrule{SEvalFieldImprecise} $\seval{\sstate}{e}{t_e}{\sstate'}{\scheck}$, $t = \ffresh$, and $\sstate'' = \sstate'[\oheap = \oheap(\sstate'); \triple{f}{t_e}{t}]$.

      Suppose $\rtassert{V'}{\heap}{\perms}{\scheck; \pair{t_e}{f}}$ and $V'(\pc(\sstate'')) = \ktrue$, thus $\rtassert{V'}{\heap}{\perms}{\scheck}$ by lemma \ref{lem:scheck-monotonicity}. Then $V'(\pc(\sstate'')) = V'(\pc(\sstate')) = \ktrue$.

      Then by induction $\simstate{V'}{\sstate'}{\heap}{\perms}{\env}$, thus $\simstate{V'}{\sstate'}{\heap}{\perms}{\env}$.

      By lemma \ref{lem:scheck-monotonicity} $\rtassert{V'}{\heap}{\perms}{\set{\pair{t_e}{f}}}$ and thus $\rtassert{V'}{\heap}{\perms}{\pair{t_e}{f}}$. Then by \refrule{CheckAcc} $\pair{V'(t_e)}{f} \in \perms$.
      
      By definition \ref{def:eval-valuation} $\heap(V'(t_e), f) = V'(t)$, and also $\pair{V'(t_e)}{f} \in \perms$ and $\simheap{V'}{\oheap(\sstate')}{\heap}{\perms}$, thus $\simheap{V'}{\oheap(\sstate'); \triple{f}{t_e}{t}}{\heap}{\perms}$.
      
      Then $\simstate{V'}{\sstate''}{\heap}{\perms}{\env}$ since $\sstate'$ and $\sstate''$ differ only in their $\oheap$ components and $\oheap(\sstate'') = \oheap(\sstate'); \triple{f}{t_e}{t}$.

      By induction $\eval{\heap}{\env}{e}{V'(t_e)}$. As shown before, $\heap(V'(t_e), f) = V'(t)$. Thus by \refrule{EvalField} $\eval{\heap}{\env}{e.f}{V'(t)}$.

      By induction $\frm{\heap}{\perms}{\env}{e}$. Also, as shown before, $\pair{V'(t_e)}{f} \in \perms$. By \refrule{AssertAcc}, $\assertion{\heap}{\perms}{\env}{\kacc(e.f)}$ since $\eval{\heap}{\env}{e}{V'(t_e)}$. Thus by \refrule{FrameField} $\frm{\heap}{\perms}{\env}{e.f}$.

    \case \refrule{SEvalFieldFailure} -- $\seval{\sstate}{e.f}{t}{\sstate'}{\set{\bot}}$:
    
      $\rtassert{V}{\heap}{\perms}{\set{\bot}}$ cannot hold. Since the assumptions cannot be satisfied, the statement holds vacuously.

  \end{enumcases}
\end{proof}

\begin{lemma}[Progress]\label{lem:eval-progress}
  If $V$ is some initial valuation and $V(\pc(\sstate)) = \ktrue$, then for some $\sstate'$, $t$, and $\scheck$,
  $$\seval{\sstate}{e}{t}{\sstate'}{\scheck} ~\text{and}~ V'(\pc(\sstate')) = \ktrue$$
  where $V' = V[\seval{\sstate}{e}{t}{\sstate'}{\scheck} \mid \heap]$ for some $\heap$.
\end{lemma}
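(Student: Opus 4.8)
The plan is to argue by structural induction on the expression $e$, and in each case to (a) exhibit an applicable symbolic evaluation rule and (b) track the corresponding valuation prescribed by Definition~\ref{def:eval-valuation}, checking that the (possibly extended) path condition still evaluates to $\ktrue$. The base cases $e = l$ and $e = x$ are immediate: \refrule{SEvalLiteral} and \refrule{SEvalVar} apply unconditionally, leave $\sstate$ (hence $\pc(\sstate)$) untouched, and have corresponding valuation $V' = V$, so the conclusion follows from the hypothesis $V(\pc(\sstate)) = \ktrue$. For $e = \kneg e_1$ I would apply the induction hypothesis to $e_1$ and then \refrule{SEvalNeg}. For $e = e_1 \oplus e_2$ I would chain the induction hypotheses: the IH on $e_1$ gives $\seval{\sstate}{e_1}{t_1}{\sstate'}{\scheck_1}$ with corresponding valuation $V_1$ and $V_1(\pc(\sstate')) = \ktrue$, which is exactly the hypothesis needed to apply the IH to $e_2$ in state $\sstate'$ with initial valuation $V_1$; \refrule{SEvalOp} then assembles the derivation, and by Definition~\ref{def:eval-valuation} the corresponding valuation of the composite judgement is the final $V_2$, with $V_2(\pc(\sstate'')) = \ktrue$ by the second IH (fixing a single ambient heap $\heap$ throughout).

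For the short-circuiting operators $e_1 \kor e_2$ and $e_1 \kand e_2$ the step is the same except that, after obtaining $t_1$ and $V_1$ from the IH on $e_1$, I would case-split on the (well-typed, hence boolean) value $V_1(t_1)$ to select between the A and B rules. Taking $\kor$: if $V_1(t_1) = \ktrue$, use \refrule{SEvalOrA}, whose output state appends the conjunct $t_1$ to $\pc(\sstate')$; since $V_1(\pc(\sstate')) = \ktrue$ and $V_1(t_1) = \ktrue$, the new path condition remains $\ktrue$ under $V_1$, which is the corresponding valuation. If instead $V_1(t_1) = \kfalse$, then $V_1(\pc(\sstate') \kand \kneg t_1) = \ktrue$, so the IH applies to $e_2$ in the state $\sstate'[\pc = \pc(\sstate') \kand \kneg t_1]$, and \refrule{SEvalOrB} completes the derivation with the path condition preserved; the $\kand$ case is symmetric.

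The only case that needs genuine case analysis is $e = e_1.f$. Applying the IH to $e_1$ gives $\seval{\sstate}{e_1}{t_e}{\sstate'}{\scheck}$ with corresponding valuation $V'$ and $V'(\pc(\sstate')) = \ktrue$. I would then split on the symbolic heaps of the resulting state: (i) if some $\triple{f}{t_e'}{t}$ lies in $\sheap(\sstate')$ with $\pc(\sstate') \implies t_e' \keq t_e$, use \refrule{SEvalField}; (ii) otherwise, if such a chunk lies in $\oheap(\sstate')$, use \refrule{SEvalFieldOptimistic}; (iii) otherwise, if $\imp(\sstate)$, use \refrule{SEvalFieldImprecise}, which introduces a fresh $t$; (iv) otherwise ($\neg\imp(\sstate)$ and no matching chunk) use \refrule{SEvalFieldFailure}. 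These four alternatives are jointly exhaustive, which is one of the two points to verify carefully. In cases (i) and (ii) the output state is $\sstate'$ and the corresponding valuation is still $V'$, so the conclusion is immediate. In cases (iii) and (iv) the path condition is again that of $\sstate'$, and the corresponding valuation extends $V'$ only by mapping the fresh $t$ to $\heap(V'(t_e), f)$ for whichever ambient heap $\heap$ we fix; since $t$ is fresh it does not occur in $\pc(\sstate')$, so its value remains $\ktrue$, and the existential $\heap$ in the statement is discharged by taking any total heap. The remaining subtlety—also the main obstacle—is ensuring $\heap(V'(t_e), f)$ is a genuine value, i.e.\ that the corresponding valuation of the sub-judgement is defined on every symbolic value occurring in $t_e$; this follows because it extends the initial valuation (which, in every context where the lemma is invoked, covers the input state) and is defined on all fresh values it introduces, per Definition~\ref{def:eval-valuation}. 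Everything else is routine bookkeeping.
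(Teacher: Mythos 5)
Your proof is correct and follows essentially the same route as the paper's: structural induction on $e$, chaining the induction hypotheses through the intermediate states for binary operators, case-splitting on the boolean value of $t_1$ to choose between the A/B rules for $\kor$ and $\kand$, and the same four-way exhaustive case analysis on the symbolic heaps for $e.f$, with the corresponding valuation tracked via Definition~\ref{def:eval-valuation} in each case. The only additions beyond the paper's argument are your explicit remarks on discharging the existential heap and on definedness of $V'(t_e)$, which are harmless elaborations of points the paper leaves implicit.
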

\begin{proof}
  By induction on $e$:

  \begin{enumcases}
    \case $l \in \Literal$:
      By \refrule{SEvalLiteral} $\seval{\sstate}{l}{\_}{\sstate}{\_}$. Then $V'(\pc(\sstate)) = \ktrue$ by assumptions.

    \case $x \in \Var$:
      By \refrule{SEvalVar} $\seval{\sstate}{x}{\_}{\sstate}{\_}$. Then $V'(\pc(\sstate)) = \ktrue$ by assumptions.

    \case $e.f$ where $e \in \Expr, f \in \Field$:
      By induction, $\seval{e}{t_e}{\sstate}{\scheck}$ and $V'(\pc(\sstate')) = \ktrue$ where $V'$ is the corresponding valuation. Then one of the following must apply:
      
      \subcase If $\existential{t_e', t}{\big[\pc(\sstate') \implies t_e' \keq t_e \big] \wedge \triple{f}{t_e'}{t} \in \sheap(\sstate')}$, then \refrule{SEvalField} applies and thus $\seval{\sstate}{e.f}{\_}{\sstate'}{\_}$. Let $\sstate'' = \sstate'$.
      \subcase Otherwise, $\nexistential{t_e', t}{\big[\pc(\sstate') \implies t_e' \keq t_e \big] \wedge \triple{f}{t_e'}{t} \in \sheap(\sstate')}$. Then, if $\existential{t_e', t}{\big[\pc(\sstate') \implies t_e' \keq t_e \big] \wedge \triple{f}{t_e'}{t} \in \oheap(\sstate')}$, \refrule{SEvalFieldOptimistic} applies and thus $\seval{\sstate}{e.f}{\_}{\sstate'}{\_}$. Let $\sstate'' = \sstate'$.
      \subcase Otherwise, $\nexistential{t_e', t}{\big[\pc(\sstate') \implies t_e' \keq t_e \big] \wedge \triple{f}{t_e'}{t} \in \sheap(\sstate') \cup \oheap(\sstate')}$. Then if $\imp(\sstate')$, \refrule{SEvalFieldImprecise} applies. In this case, $\seval{\sstate}{e.f}{\_}{\sstate''}{\_}$ where $\pc(\sstate'') = \pc(\sstate')$.
      \subcase Otherwise, $\neg \imp(\sstate')$ and $\nexistential{t_e', t}{\big[\pc(\sstate') \implies t_e' \keq t_e \big] \wedge \triple{f}{t_e'}{t} \in \sheap(\sstate') \cup \oheap(\sstate')}$. Thus \refrule{SEvalFieldFailure} applies and thus $\seval{\sstate}{e.f}{\_}{\sstate'}{\_}$. Let $\sstate'' = \sstate'$.

      In all of these subcases, $\seval{\sstate}{e.f}{\_}{\sstate''}{\_}$ where $\pc(\sstate'') = \pc(\sstate')$, and thus $V'(\pc(\sstate'')) = V'(\pc(\sstate')) = \ktrue$. By definition \ref{def:eval-valuation}, in all of these subcases $V[\seval{\sstate}{e.f}{\_}{\sstate''}{\_} \mid \heap]$ extends $V'$.

    \case $e_1 \oplus e_2$; $e_1, e_2 \in \Expr$:
      By induction $\seval{\sstate}{e_1}{\_}{\sstate'}{\_}$ for some $e_1, \sstate'$ where $V'$ is the corresponding valuation and $V'(\pc(\sstate')) = \ktrue$. Then by induction $\seval{\sstate'}{e_2}{\_}{\sstate''}$ where $V''$ is the corresponding valuation extending $V'$ and $V''(\pc(\sstate'')) = \ktrue$. Finally, by \refrule{SEvalOp} $\seval{\sstate}{e_1 \oplus e_2}{\_}{\sstate''}{\_}$. By definition \ref{def:eval-valuation}, $V[\seval{\sstate}{e_1 \oplus e_2}{\_}{\sstate''}{\_} \mid \heap]$ extends $V''$.

    \case\label{case:seval-progress-or} $e_1 \kor e_2$; $e_1, e_2 \in \Expr$:
      By induction $\seval{\sstate}{e_1}{t_1}{\sstate'}{\_}$ for some $e_1, t_e, \sstate'$ where $V'$ is the corresponding valuation and $V'(\pc(\sstate')) = \ktrue$. Then one of the following cases must apply since the program is well-typed:

      \subcase $V'(t_1) = \ktrue$: Then by \refrule{SEvalOrA}, $\seval{\sstate}{e_1 \vee e_2}{t_1}{\sstate''}{\_}$ where $\pc(\sstate'') = \pc(\sstate') \kand t_1$. Let $V''$ be the corresponding valuation. Now $V''(\pc(\sstate'')) = V''(\pc(\sstate')) \wedge V''(t_1) = \ktrue$, which completes the proof.
      
      \subcase $V'(t_1) = \kfalse$: Let $\hat{\sstate}' = \sstate'[\pc = \pc(\sstate') \kand \kneg t_1]$. Then $V'(\pc(\hat{\sstate}')) = V'(\sstate') \wedge \neg V'(t_1) = \ktrue$. By induction $\seval{\hat{\sstate}'}{e_2}{t_2}{\sstate''}{\_}$ for some $e_2, t_2, \sstate''$ where $V''$ is the corresponding valuation and $V''(\pc(\sstate'')) = \ktrue$. Finally, by \refrule{SEvalOrB}, $\seval{\sstate}{e_1 \kor e_2}{\_}{\sstate''}{\_}$. By definition \ref{def:eval-valuation} $V[\seval{\sstate}{e_1 \kor e_2}{\_}{\sstate''}{\_} \mid \heap]$ extends $V''$.

    \case $e_1 \kand e_2$; $e_1, e_2 \in \Expr$: Similar to case \ref{case:seval-progress-or}.

    \case $\kneg e$; $e \in \Expr$:
      By induction $\seval{\sstate}{e}{\_}{\sstate'}{\_}$ and $V'(\sstate') = \ktrue$ where $V'$ is the corresponding derivation. Then by \refrule{SEvalNeg}, $\seval{\sstate}{\kneg e}{\_}{\sstate'}{\_}$ and the corresponding valuation extends $V'$ by definition \ref{def:eval-valuation}.
  \end{enumcases}
\end{proof}

\subsection{Deterministic Evaluation}


\begin{definition}\label{def:spceval-valuation}
  For a judgement $\spceval{\sstate}{e}{t}{\scheck}$, given an initial valuation $V$ and heap $\heap$, the \textbf{corresponding valuation} is denoted
  $$V[\spceval{\sstate}{e}{t}{\scheck} \mid \heap].$$
  This valuation is defined as follows, depending on the rule that proves the derivation.

  Note that the corresponding valuation always extends the initial valuation and is defined for all $\ffresh$ symbolic values in the judgement.
  \begin{itemize}
    \item \refrule{SEvalPCLiteral}:
      $$V[\spceval{\sstate}{l}{l}{\emptyset} \mid \heap] := V$$
    \item \refrule{SEvalPCVar}:
      $$V[\spceval{\sstate}{x}{\senv(\sstate)(x)}{\emptyset} \mid \heap] := V$$
    \item \refrule{SEvalPCOr}:
      \begin{align*}
        V[\spceval{\sstate}{e_1 \kor e_2}{t_1 \kor t_2}{\scheck_1 \cup \scheck_2} \mid \heap] :=
          V&[\spceval{\sstate}{e_1}{t_1}{\scheck_1} \mid \heap] \\
          &[\spceval{\sstate}{e_2}{t_2}{\scheck_2} \mid \heap]
      \end{align*}
    \item \refrule{SEValPCAnd}:
      \begin{align*}
        V[\spceval{\sstate}{e_1 \kand e_2}{t_1 \kand t_2}{\scheck_1 \cup \scheck_2} \mid \heap] :=
          V&[\spceval{\sstate}{e_1}{t_1}{\scheck_1} \mid \heap] \\
          &[\spceval{\sstate}{e_2}{t_2}{\scheck_2} \mid \heap]
      \end{align*}
    \item \refrule{SEvalPCOp}:
      \begin{align*}
        V[\spceval{\sstate}{e_1 \oplus e_2}{t_1 \oplus t_2}{\scheck_1 \cup \scheck_2} \mid \heap] :=
          V&[\spceval{\sstate}{e_1}{t_1}{\scheck_1} \mid \heap] \\
          &[\spceval{\sstate}{e_2}{t_2}{\scheck_2} \mid \heap]
      \end{align*}
    \item \refrule{SEvalPCField} or \refrule{SEvalPCFieldOptimistic}:
      $$V[\spceval{\sstate}{e.f}{t}{\scheck} \mid \heap] := V[\spceval{\sstate}{e}{t_e}{\scheck} \mid \heap]$$
    \item \refrule{SEvalPCFieldImprecise} or \refrule{SEvalPCFieldMissing}:
      $$V[\spceval{\sstate}{e.f}{t}{\scheck} \mid \heap] := V'[t \mapsto \heap(V'(t_e), f)]$$
      where $V' = V[\spceval{\sstate}{e}{t_e}{\scheck} \mid \heap]$.
  \end{itemize}
\end{definition}

\begin{lemma}[Soundness] \label{lem:pc-eval-soundness}
  Suppose $\simstate{V}{\sstate}{\heap}{\perms}{\env}$.

  If $\spceval{\sstate}{e}{t}{\scheck}$, $V' \supseteq V[\spceval{\sstate}{e}{t}{\scheck} \mid \heap]$, and $\rtassert{V'}{\heap}{\perms}{\scheck}$, then
  $$\eval{\heap}{\env}{e}{V'(t)} \quad\text{and}\quad
    \frm{\heap}{\perms}{\env}{e}.$$
\end{lemma}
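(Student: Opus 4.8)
The plan is to proceed by induction on the derivation of $\spceval{\sstate}{e}{t}{\scheck}$, equivalently on the structure of $e$, following the proof of Lemma~\ref{lem:seval-soundness} but in a simpler setting: deterministic evaluation neither branches nor modifies $\sstate$, so there is no output state or refined path condition to carry, and in particular $V'(\pc(\sstate)) = \ktrue$ throughout, since $V' \supseteq V$ and $\simstate{V}{\sstate}{\heap}{\perms}{\env}$ gives $V(\pc(\sstate)) = \ktrue$. For the literal and variable cases (\refrule{SEvalPCLiteral}, \refrule{SEvalPCVar}) the corresponding valuation is just $V$, and the conclusions follow immediately from \refrule{EvalLiteral}/\refrule{EvalVar} and \refrule{FrameLiteral}/\refrule{FrameVar}, using $\simenv{V}{\senv(\sstate)}{\env}$ for the variable case.

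For the arithmetic and negation cases (\refrule{SEvalPCOp}, \refrule{SEvalPCNeg}) I would apply the induction hypothesis to each subexpression — noting that, by Definition~\ref{def:spceval-valuation}, $V'$ extends the corresponding valuation of each sub-judgement, and that $\rtassert{V'}{\heap}{\perms}{\scheck}$ together with $\scheck_i \subseteq \scheck$ (Lemma~\ref{lem:scheck-monotonicity}) discharges each subexpression's checks — and then combine using \refrule{EvalOp}/\refrule{EvalNeg} and \refrule{FrameOp}/\refrule{FrameNeg}, using that $V'$ commutes with $\oplus$ and $\kneg$. The logical-connective cases \refrule{SEvalPCOr} and \refrule{SEvalPCAnd} require slightly more care, because the symbolic rules encode $t_1 \kor t_2$ (resp.\ $t_1 \kand t_2$) directly whereas the dynamic semantics is short-circuiting: here I would obtain $\eval{\heap}{\env}{e_i}{V'(t_i)}$ and $\frm{\heap}{\perms}{\env}{e_i}$ for both $i$ by induction (both checks are available since $\scheck = \scheck_1 \cup \scheck_2$), then case-split on $V'(t_1)$, applying \refrule{EvalOrA}/\refrule{FrameOrA} when $V'(t_1) = \ktrue$ and \refrule{EvalOrB}/\refrule{FrameOrB} when $V'(t_1) = \kfalse$, and check in each case that the resulting value equals $V'(t_1 \kor t_2)$.

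The field cases carry the real content. For \refrule{SEvalPCField} and \refrule{SEvalPCFieldOptimistic}, induction on $e$ gives $\eval{\heap}{\env}{e}{V'(t_e)}$ and $\frm{\heap}{\perms}{\env}{e}$; from $\pc(\sstate) \implies t_e' \keq t_e$ with $V'(\pc(\sstate)) = \ktrue$ we get $V'(t_e) = V'(t_e')$, and from $\triple{f}{t_e'}{t} \in \sheap(\sstate)$ (resp.\ $\oheap(\sstate)$) together with $\simheap{V'}{\sheap(\sstate)}{\heap}{\perms}$ (resp.\ the $\oheap$ clause) we obtain $\heap(V'(t_e'), f) = V'(t)$ and $\pair{V'(t_e')}{f} \in \perms$, whence \refrule{EvalField} yields $\eval{\heap}{\env}{e.f}{V'(t)}$ and \refrule{AssertAcc} followed by \refrule{FrameField} yields framing. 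For \refrule{SEvalPCFieldImprecise}, $t$ is $\ffresh$ and Definition~\ref{def:spceval-valuation} fixes $V'(t) = \heap(V'(t_e), f)$; the emitted check $\pair{t_e}{f}$ lies in $\scheck$, so $\rtassert{V'}{\heap}{\perms}{\scheck}$ gives $\pair{V'(t_e)}{f} \in \perms$ via \refrule{CheckAcc}, and then \refrule{EvalField}, \refrule{AssertAcc}, and \refrule{FrameField} finish the case. Finally, \refrule{SEvalPCFieldMissing} emits $\bot \in \scheck$, and since $\rtassert{V'}{\heap}{\perms}{\bot}$ has no derivation the hypotheses are unsatisfiable and the statement holds vacuously.

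The main obstacle I anticipate is the bookkeeping around the corresponding valuation: in each inductive step I must confirm that the ambient $V'$ actually extends the corresponding valuation of the relevant sub-judgement — so that the induction hypothesis applies with that $V'$ — which amounts to unfolding the nested definition in Definition~\ref{def:spceval-valuation} and observing that each fresh value introduced by a field rule is assigned consistently with $\heap$. The short-circuiting mismatch in the $\kor$/$\kand$ cases is a minor but genuine point that must be handled by an explicit case split rather than by appeal to a single operator rule.
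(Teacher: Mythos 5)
Your proposal is correct and follows essentially the same route as the paper's proof: induction on the derivation, immediate cases for literals/variables, a case split on $V'(t_1)$ for the short-circuiting connectives, and the heap-correspondence (resp.\ the emitted $\pair{t_e}{f}$ check, resp.\ vacuity from $\bot$) for the field cases. No gaps.
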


\begin{proof}
  By induction on $\spceval{\sstate}{e}{t}{\scheck}$:

  \begin{enumcases}
    \case \refrule{SEvalPCLiteral} -- $\spceval{\sstate}{l}{l}{\emptyset}$:

      Then $V$ is the corresponding valuation. By \refrule{EvalLiteral} $\eval{\heap}{\env}{l}{l}$ and by definition $V(l) = l$.

      By \refrule{FrameLiteral} $\frm{\heap}{\perms}{\env}{l}$.

    \case \refrule{SEvalPCVar} -- $\spceval{\sstate}{x}{\senv(\sstate)(x)}{\emptyset}$:

      Then $V$ is the corresponding valuation. Then $\eval{\heap}{\env}{x}{\env(x)}$ by \refrule{EvalLiteral}, and $\env(x) = V(\senv(\sstate)(x))$ since $\simenv{V}{\senv(\sstate)}{\env}$.

      By \refrule{FrameVar}, $\frm{\heap}{\perms}{\env}{x}$.

    \case\label{case:evalpc-soundness-or} \refrule{SEvalPCOr} -- $\spceval{\sstate}{e_1 \vee e_2}{t_1 \kor t_2}{\scheck_1 \cup \scheck_2}$:

      By \refrule{SEvalPCOr} $\spceval{\sstate}{e_1}{t_1}{\scheck_1}$ and $\spceval{\sstate}{e_2}{t_2}{\scheck_2}$. Let $V_1$ and $V_2$ be the respective corresponding valuations, with $V_2$ extending $V_1$ and $V_1$ extending $V$. Then $V_2$ is the corresponding valuation for this case.

      Suppose that $\rtassert{V_2}{\heap}{\perms}{\scheck_1 \cup \scheck_2}$. Then $\rtassert{V_1}{\heap}{\perms}{\scheck_1}$ and $\rtassert{V_2}{\heap}{\perms}{\scheck_2}$ by lemma \ref{lem:scheck-monotonicity}.

      Then one of the following cases applies, assuming a well-typed program:

      \subcase $V_1(t_1) = \ktrue$:
        Then $\eval{\heap}{\env}{e_1}{\ktrue}$ by induction, and then by \refrule{EvalOrA} $\eval{\heap}{\env}{e_1 \kor e_2}{\ktrue}$. Also, $V_2(t_1 \kor t_2) = (V_1(t_1) \kor V_2(t_2)) = \ktrue \vee V_2(t_2) = \ktrue$. Therefore $\eval{\heap}{\env}{e_1 \kor e_2}{V_2(t_1 \kor t_2)}$.

        Also by induction $\frm{\heap}{\perms}{\env}{e_1}$. Since $\eval{\heap}{\env}{e_1}{\ktrue}$, by \refrule{FrameOrA} $\frm{\heap}{\perms}{\env}{e_1 \kor e_2}$.

      \subcase $V_1(t_1) = \kfalse$:
        Then $\eval{\heap}{\env}{e_1}{\kfalse}$ and $\eval{\heap}{\env}{e_2}{V_2(t_2)}$ by induction. Thus $\eval{\heap}{\env}{e_1 \kor e_2}{V_2(t_2)}$ by \refrule{EvalOrB}. Also, $V_2(t_1 \kor t_2) = V_1(t_1) \vee V_2(t_2) = \kfalse \vee V_2(t_2) = V_2(t_2)$. Therefore $\eval{\heap}{\env}{e_1 \kor e_2}{V_2(t_1 \kor t_2)}$.

        Also by induction $\frm{\heap}{\perms}{\env}{e_1}$ and $\frm{\heap}{\perms}{\env}{e_2}$. Since $\eval{\heap}{\env}{e_1}{\kfalse}$, by \refrule{FrameOrB} $\frm{\heap}{\perms}{\env}{e_1 \kor e_2}$.

    \case \refrule{SEvalPCAnd} -- $\spceval{\sstate}{e_1 \kand e_2}{t_1 \kand t_2}{\scheck_1 \cup \scheck_2}$:
      Similar to case \ref{case:evalpc-soundness-or}.

    \case \refrule{SEvalPCOp} -- $\spceval{\sstate}{e_1 \oplus e_2}{t_1 \oplus t_2}{\scheck_1 \cup \scheck_2}$

      Use the same proof as in case \ref{case:evalpc-soundness-or} up to subcases.

      Then $\eval{\heap}{\env}{e_1}{V_1(t_1)}$ and $\eval{\heap}{\env}{e_2}{V_2(t_2)}$ by induction. Thus $\eval{\heap}{\env}{e_1 \oplus e_2}{V_1(t_1) \oplus V_2(t_2)}$ by \refrule{EvalOp}. Also, $V_1(t_1) \oplus V_2(t_2) = V_2(t_1 \oplus t_2)$ by definition, therefore $\eval{\heap}{\env}{e_1 \oplus e_2}{V_2(t_1 \oplus t_2)}$.

      Also by induction $\frm{\heap}{\perms}{\env}{e_1}$ and $\frm{\heap}{\perms}{\env}{e_2}$. Therefore by \refrule{FrameOp}, $\frm{\heap}{\perms}{\env}{e_1 \oplus e_2}$.

    \case \refrule{SEvalPCNeg} -- $\spceval{\sstate}{\kneg e}{\kneg t}{\scheck}$:

      By \refrule{SEvalPCNeg} $\spceval{\sstate}{e}{t}{\scheck}$. Let $V'$ be the corresponding valuation, thus $V'$ is the corresponding valuation for this case. Suppose that $\rtassert{V'}{\heap}{\perms}{\scheck}$.

      Then $\eval{\heap}{\env}{e}{V'(t)}$ by induction, and then $\eval{\heap}{\env}{\kneg e}{\kneg V'(t)}$ by \refrule{EvalNeg}. Also, $\neg V'(t) = V'(\kneg t)$ by definition. Therefore $\eval{\heap}{\env}{\kneg e}{V'(\kneg t)}$.

      Also by induction $\frm{\heap}{\perms}{\env}{e}$, thus $\frm{\heap}{\perms}{\env}{\kneg e}$ by \refrule{FrameNeg}.

    \case\label{case:seval-pc-soundness-field} \refrule{SEvalPCField} -- $\spceval{\sstate}{e.f}{t}{\scheck}$:

      By \refrule{SEvalPCField} $\spceval{\sstate}{e}{t_e}{\scheck}$. Let $V'$ be the corresponding valuation, thus $V'$ is the corresponding valuation for this case. Suppose that $\rtassert{V'}{\heap}{\perms}{\scheck}$.

      By \refrule{SEvalPCField} $\pc(\sstate') \implies t_e' \keq t_e$, thus $V'(t_e') = V'(t_e)$. By induction $\eval{\heap}{\env}{e}{V'(t_e)}$ [$= V'(t_e')$]. Then by \refrule{EvalField}, $\eval{\heap}{\env}{e.f}{\heap(V'(t_e'), f)}$.

      Also by \refrule{SEvalPCField} $\triple{f}{t_e'}{t} \in \sheap(\sstate)$. Then $\heap(V'(t_e'), f) = V'(t)$ since $\simheap{V'}{\sheap(\sstate)}{\heap}{\perms}$. Therefore $\eval{\heap}{\env}{e.f}{V'(t)}$.

      Also, $\pair{V'(t_e')}{f} \in \perms$ since $\simheap{V'}{\sheap(\sstate)}{\heap}{\perms}$. Now $\assertion{\heap}{\perms}{\env}{\kacc(e.f)}$ by \refrule{AssertAcc}. By induction $\frm{\heap}{\perms}{\env}{e}$, therefore $\frm{\heap}{\perms}{\env}{e.f}$ by \refrule{FrameField}.

    \case \refrule{SEvalPCFieldOptimistic} -- $\spceval{\sstate}{e.f}{t}{\scheck}$:
      Same as case \ref{case:seval-pc-soundness-field}, replacing $\sheap$ with $\oheap$.

    \case \refrule{SEvalPCFieldImprecise} -- $\spceval{\sstate}{e.f}{t}{\scheck \cup \set{\kacc(t_e.f)}}$:

      By \refrule{SEvalPCFieldImprecise} $\spceval{\sstate}{e}{t_e}{\scheck}$. Let $V'$ be the corresponding valuation for this case. Suppose that $\simstate{V}{\sstate}{\heap}{\perms}{\env}$ and $\rtassert{V'}{\heap}{\perms}{\scheck; \pair{t_e}{f}}$.

      By lemma \ref{lem:scheck-monotonicity} $\rtassert{V'}{\heap}{\perms}{\scheck}$ and $\rtassert{V'}{\heap}{\perms}{\set{\pair{t_e}{f}}}$.

      By \refrule{SEvalPCFieldImprecise} $t = \ffresh$. Since $\rtassert{V'}{\heap}{\perms}{\pair{t_e}{f}}$, by \refrule{CheckAcc} $\pair{V'(t_e)}{f} \in \perms$.
      
      By induction $\eval{\heap}{\env}{e}{V'(t_e)}$, thus $\eval{\heap}{\env}{e.f}{\heap(V'(t_e), f)}$ by \refrule{EvalField}. But also $V'(t) = \heap(V'(t_e), f)$ by definition, therefore $\eval{\heap}{\env}{e.f}{V'(t)}$.

      Finally, $\eval{\heap}{\env}{e}{V'(t_e)}$ and $\pair{V'(t_e)}{f} \in \perms$, thus $\assertion{\heap}{\perms}{\env}{\kacc(e.f)}$ by \refrule{AssertAcc}. Also, $\frm{\heap}{\perms}{\env}{e}$ by induction. Therefore $\frm{\heap}{\perms}{\env}{e.f}$ by \refrule{FrameField}.

    \case \refrule{SEvalPCFieldMissing} -- $\spceval{\sstate}{e.f}{t}{\scheck \cup \set{\bot}}$:

      Let $V'$ be the corresponding valuation for this case. $\rtassert{V'}{\heap}{\perms}{\set{\bot}}$ cannot hold. Since the conditions cannot be satisfied, the statement vacuously holds.
  \end{enumcases}
\end{proof}

\begin{lemma}\label{lem:spceval-correspondence}
  Suppose that $\simstate{V}{\sstate}{\heap}{\perms}{\env}$.

  If $\eval{\heap}{\env}{e}{v}$ and $\spceval{\sstate}{e}{t}{\_}$ then $v = V'(t)$, where $V' = V[\spceval{\sstate}{e}{t}{\_} \mid \heap]$.
\end{lemma}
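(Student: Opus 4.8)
The plan is to proceed by induction on the derivation of $\spceval{\sstate}{e}{t}{\scheck}$ (equivalently, on the syntax of $e$), in parallel with an inversion of $\eval{\heap}{\env}{e}{v}$, following the same case structure as the soundness lemma \ref{lem:pc-eval-soundness} but \emph{without} assuming that any run-time check holds. The observation that makes the statement go through unconditionally is that deterministic symbolic evaluation never gets stuck: in every rule---including \refrule{SEvalPCFieldImprecise} and \refrule{SEvalPCFieldMissing}, which emit a permission check or $\bot$---it still produces some $t$, and the corresponding valuation of definition \ref{def:spceval-valuation} is defined on all of $t$'s symbolic values (mapping a fresh $t$ for a field access to $\heap(V'(t_e), f)$). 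Since the dynamic heap is total, \refrule{EvalField} always succeeds as well, and the corresponding valuation is built precisely to identify these two values.

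First I would dispatch the base cases. For \refrule{SEvalPCLiteral} the corresponding valuation is $V$ itself, $v = l$ by \refrule{EvalLiteral}, and $V(l) = l$ by the definition of the valuation extension. For \refrule{SEvalPCVar}, again $V' = V$, $v = \env(x)$ by \refrule{EvalVar}, and $\env(x) = V(\senv(\sstate)(x))$ because $\simenv{V}{\senv(\sstate)}{\env}$ is part of the assumed correspondence. For \refrule{SEvalPCOp} and \refrule{SEvalPCNeg} the dynamic and symbolic rules line up directly: apply the induction hypothesis to each operand (it applies with the same $\sstate$, since deterministic evaluation leaves the state unchanged), then conclude using the definition of $V'$ on $t_1 \oplus t_2$, $\kneg t$, etc. For the boolean cases \refrule{SEvalPCOr} and \refrule{SEvalPCAnd} I would case-split on the value of the first operand as fixed by the dynamic rules \refrule{EvalOrA}/\refrule{EvalOrB} (resp. \refrule{EvalAndA}/\refrule{EvalAndB}): when the first operand short-circuits the result, the induction hypothesis on that operand already pins $V'(t_1)$ to the short-circuiting value, so $V'(t_1 \kor t_2)$ (resp. $V'(t_1 \kand t_2)$) is forced regardless of $t_2$; when it does not short-circuit, the dynamic rule does evaluate the second operand, so the induction hypothesis applies to it too. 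The one fiddly point here is that the corresponding valuation for $e_2$ used in the induction hypothesis (which extends $V$) must agree on $e_2$'s fresh symbolic values with the one appearing inside $V'$ (which extends $V[\spceval{\sstate}{e_1}{t_1}{\scheck}\mid\heap]$); this holds because $e_2$ is evaluated against the unchanged $\sstate$, so its fresh symbolic values depend only on $\heap$ and on symbolic values already in $\sstate$, on which the two initial valuations coincide.

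For the field-access cases I would first note that $V'(\pc(\sstate)) = \ktrue$ (since $\pc(\sstate)$ mentions only symbolic values in $\sstate$, which lie in $\dom(V)$, and $V' \supseteq V$ with $V(\pc(\sstate)) = \ktrue$), so $\simstate{V'}{\sstate}{\heap}{\perms}{\env}$ follows from $\simstate{V}{\sstate}{\heap}{\perms}{\env}$. Then: in \refrule{SEvalPCField} the induction hypothesis gives $v_e = V'(t_e)$ for the dynamic value $v_e$ of $e$, the side condition $\pc(\sstate) \implies t_e' \keq t_e$ gives $V'(t_e') = V'(t_e)$, and $\triple{f}{t_e'}{t} \in \sheap(\sstate)$ together with $\simheap{V'}{\sheap(\sstate)}{\heap}{\perms}$ gives $\heap(V'(t_e'), f) = V'(t)$, so $v = \heap(v_e, f) = V'(t)$ via \refrule{EvalField}; \refrule{SEvalPCFieldOptimistic} is identical with $\oheap$ in place of $\sheap$. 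For \refrule{SEvalPCFieldImprecise} and \refrule{SEvalPCFieldMissing}, the induction hypothesis again gives $v_e = V'(t_e)$, the definition of the corresponding valuation sets $V'(t) = \heap(V'(t_e), f)$, and \refrule{EvalField} gives $v = \heap(v_e, f)$, so $v = V'(t)$. I expect the only genuine obstacle to be the bookkeeping of the extension relationships between the various corresponding valuations in the binary cases---the short-circuit/non-short-circuit mismatch between the dynamic and the deterministic symbolic semantics---rather than any conceptual difficulty.
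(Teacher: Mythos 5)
Your proposal is correct and follows essentially the same route as the paper's proof: induction on the deterministic-evaluation derivation, a case split on the dynamic short-circuit rules for $\kand$/$\kor$ (where the valuation of the unevaluated operand is absorbed by the boolean algebra of the extended valuation), the heap-correspondence facts for \refrule{SEvalPCField}/\refrule{SEvalPCFieldOptimistic}, and the explicit clause $V'(t) = \heap(V'(t_e), f)$ of definition \ref{def:spceval-valuation} for the imprecise and missing field cases. The extra bookkeeping you flag about agreement of the nested corresponding valuations is handled in the paper simply by taking each successive corresponding valuation to extend the previous one, exactly as you suggest.
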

\begin{proof}
  By induction on $\spceval{\sstate}{e}{t}{\_}$:

  \begin{enumcases}
    \case \refrule{SEvalPCLiteral} -- $\spceval{\sstate}{l}{l}{\emptyset}$:

      Suppose $\eval{\heap}{\env}{l}{v}$ for some $v$. By \refrule{EvalLiteral} $v = l$, and by definition $V'(l) = l$.

    \case \refrule{SEvalPCVar} -- $\spceval{\sstate}{x}{\senv(\sstate)(x)}{\emptyset}$:

      Suppose $\eval{\heap}{\env}{x}{v}$ for some $v$. By \refrule{EvalVar} $v = \env(x)$. Also $\env(x) = V'(\senv(\sstate)(x))$ since $\simenv{V}{\senv(\sstate)}{\env}$.

    \case\label{case:spceval-correspondence-or} \refrule{SEvalPCOr} -- $\spceval{\sstate}{e_1 \kor e_2}{t_1 \kor t_2}{\scheck_1 \cup \scheck_2}$:

      By \refrule{SEvalPCOr} $\spceval{\sstate}{e_1}{t_1}{\scheck_1}$ and $\spceval{\sstate}{e_2}{t_2}{\scheck_2}$. Let $V_1$ and $V_2$ be the respective corresponding valuations, with $V_2$ extending $V_1$ and $V_1$ extending $V$. Then $V_2$ is the corresponding valuation for this case.

      Suppose that $\eval{\heap}{\env}{e_1 \kor e_2}{v}$. Then one of the following rules must be used:

      \subcase \refrule{EvalOrA} -- $\eval{\heap}{\env}{e_1 \kor e_2}{\ktrue}$ thus $v = \ktrue$:

        Then $v = \ktrue$. By \refrule{EvalOrA} $\eval{\heap}{\env}{e_1}{\ktrue}$, thus by induction $V_1(t_1) = \ktrue$. Also $V_2(t_1 \kor t_2) = (V_1(t_1) \vee V_2(t_2)) = \ktrue \vee V'(t_2) = \ktrue$. Therefore $v = V_2(t_1 \kor t_2)$.

      \subcase \refrule{EvalOrB} -- $\eval{\heap}{\env}{e_2 \kor e_2}{v_2}$ thus $v = v_2$:

        By \refrule{EvalOrB} $\eval{\heap}{\env}{e_1}{\kfalse}$ and $\eval{\heap}{\env}{e_2}{v}$. Thus $V_1(t_1) = \kfalse$ and by $V_2(t_2) = v_2$ by induction. Also, $V_2(t_1 \kor t_2) = V_1(t_1) \vee V_2(t_2) = \kfalse \vee v_2 = v_2$. Therefore $v = V_2(t_1 \kor t_2)$.

    \case \refrule{SEvalPCAnd} -- $\spceval{\sstate}{e_1 \kand e_2}{t_1 \kand t_2}{\scheck_1 \cup \scheck_2}$:
      Similar to case \ref{case:spceval-correspondence-or}.

    \case \refrule{SEvalPCOp} -- $\spceval{\sstate}{e_1 \oplus e_2}{t_1 \oplus t_2}{\scheck_1 \cup \scheck_2}$:
      Follow the same proof as in case \ref{case:spceval-correspondence-or}, up to subcases.

      Then by \refrule{EvalOp} $v = v_1 \oplus v_2$ for some $v_1, v_2$ such that $\eval{\heap}{\env}{e_1}{v_1}$ and $\eval{\heap}{\env}{e_2}{v_2}$. By induction $V_1(t_1) = v_1$ and $V_2(t_2) = v_2$. Also, $V_2(t_1 \oplus t_2) = V_1(t_1) \oplus V_2(t_2) = v_1 \oplus v_2 = v$.

    \case \refrule{SEvalPCNeg} -- $\spceval{\sstate}{\kneg e}{\kneg t}{\scheck}$:

      By \refrule{SEvalPCNeg} $\spceval{\sstate}{e}{t}{\scheck}$. Let $V'$ be the corresponding valuation, thus $V'$ is the corresponding valuation for this case.
    
      Suppose $\eval{\heap}{\env}{\kneg e}{v}$ for some $v$. Then by \refrule{EvalNeg} $v = \neg v'$ where $\eval{\heap}{\env}{e}{v'}$. By induction $v' = V'(t)$ and therefore $v = \neg v' = \neg V'(t) = V'(\kneg t)$.

    \case\label{case:spceval-correspondence-field} \refrule{SEvalPCField}: $\spceval{\sstate}{e.f}{t}{\scheck}$

      By \refrule{SEvalPCField} $\spceval{\sstate}{e}{t_e}{\scheck}$. Let $V'$ be the corresponding valuation, thus $V'$ is the corresponding valuation for this case.

      Suppose $\eval{\heap}{\env}{e.f}{v}$ for some $v$. Then by \refrule{EvalField} $\eval{\heap}{\env}{e}{v_e}$ for some $v_e$. Therefore $V'(t_e) = v_e$ by induction. By \refrule{SEvalPCField} $\pc(\sstate') \implies t_e' \keq t_e$ and thus $v_e = V'(t_e) = V'(t_e')$.

      Now by \refrule{EvalField} $v = \heap(v_e, f)$. Also, $\triple{f}{t_e'}{t} \in \sheap(\sstate)$ by \refrule{SEvalPCField}. Therefore $\heap(V'(t_e'), f) = V'(t)$ since $\simheap{V}{\sheap(\sstate)}{\heap}{\perms}$ and $V'$ extends $V$. Finally, $V'(t) = \heap(V'(t_e'), f) = \heap(v_e, f) = v$.

    \case \refrule{SEvalPCFieldOptimistic}: $\spceval{\sstate}{e.f}{t}{\scheck}$

      Same as case \ref{case:spceval-correspondence-field}, replacing $\sheap$ with $\oheap$.

    \case\label{label:spceval-correspondence-imprecise} \refrule{SEvalPCFieldImprecise} -- $\spceval{\sstate}{e.f}{t}{\scheck; \pair{t_e}{f}}$:

      By \refrule{SEvalPCFieldImprecise} $\spceval{\sstate}{e}{t_e}{\scheck}$. Let $V_e$ be the corresponding valuation, and let $V'$ be the corresponding valuation for this case, which extends $V_e$.

      Suppose $\eval{\heap}{\env}{e.f}{v}$ for some $v$. Then by \refrule{EvalField} $\eval{\heap}{\env}{e}{v_e}$ for some $v_e$. Therefore $V_e(t_e) = v_e$ by induction.

      Now by \refrule{EvalField} $v = \heap(v_e, f)$. By definition \ref{def:spceval-valuation} $V'(t) = \heap(V_e(t_e), f) = \heap(v_e, f) = v$.

    \case \refrule{SEvalPCFieldMissing} -- $\spceval{\sstate}{e.f}{t}{\set{\bot}}$:
      Same as case \ref{label:spceval-correspondence-imprecise}.
  \end{enumcases}
\end{proof}

\begin{lemma}\label{lem:pc-eval-progress}
  For any $\sstate, e$, $\spceval{\sstate}{e}{t}{\_}$ for some $t$.
\end{lemma}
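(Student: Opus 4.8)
The plan is to prove this by structural induction on the expression $e$, following the same case structure used in lemmas \ref{lem:pc-eval-soundness} and \ref{lem:spceval-correspondence}. The essential point is that the deterministic evaluation judgement $\downarrow$ is \emph{total}: unlike the short-circuiting judgement $\Downarrow$, it never branches, and the four field-access rules \refrule{SEvalPCField}, \refrule{SEvalPCFieldOptimistic}, \refrule{SEvalPCFieldImprecise}, and \refrule{SEvalPCFieldMissing} are jointly exhaustive — for any symbolic state and field reference, exactly one applies depending on whether a matching chunk lies in $\sheap(\sstate)$, failing that in $\oheap(\sstate)$, and failing that on whether $\sstate$ is imprecise. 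So no derivation ever gets stuck.

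First I would dispatch the base cases. For $e = l$, rule \refrule{SEvalPCLiteral} applies with $t = l$. For $e = x$, rule \refrule{SEvalPCVar} applies with $t = \senv(\sstate)(x)$; here I appeal to well-formedness of the program (Definition \ref{def:well-formed-prog}), in particular that variables are initialized before use, so that $x \in \dom(\senv(\sstate))$ at every program point where this judgement is invoked (equivalently, one restricts attention to states $\sstate$ whose symbolic store covers the free variables of $e$). Then the compound non-field cases follow directly from the induction hypothesis: for $e = e_1 \oplus e_2$ the IH gives $\spceval{\sstate}{e_1}{t_1}{\_}$ and $\spceval{\sstate}{e_2}{t_2}{\_}$ — note the state is unchanged throughout, so both subderivations use the same $\sstate$ — and \refrule{SEvalPCOp} applies with $t = t_1 \oplus t_2$; the cases $e_1 \kor e_2$ and $e_1 \kand e_2$ are identical using \refrule{SEvalPCOr} and \refrule{SEvalPCAnd} (recall deterministic evaluation does not short-circuit, so no branching is introduced); and for $e = \kneg e'$ the IH gives $\spceval{\sstate}{e'}{t'}{\_}$ and \refrule{SEvalPCNeg} applies with $t = \kneg t'$.

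The field case $e = e'.f$ is the only one requiring real case analysis and is the step to get right. The IH gives $\spceval{\sstate}{e'}{t_{e'}}{\_}$. I then split: if there exist $t_{e'}'$ and $t$ with $\pc(\sstate) \implies t_{e'}' \keq t_{e'}$ and $\triple{f}{t_{e'}'}{t} \in \sheap(\sstate)$, then \refrule{SEvalPCField} applies; otherwise, if such chunks exist in $\oheap(\sstate)$, then \refrule{SEvalPCFieldOptimistic} applies; otherwise, if $\imp(\sstate)$, then \refrule{SEvalPCFieldImprecise} applies with $t = \ffresh$; and otherwise $\neg\imp(\sstate)$ with no matching chunk in $\sheap(\sstate) \cup \oheap(\sstate)$, so \refrule{SEvalPCFieldMissing} applies, again with $t = \ffresh$. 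These four conditions partition all possibilities, so a rule always applies and the induction closes.

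The main obstacle — modest, but worth checking carefully — is confirming that the premises of these four field rules are genuinely mutually exclusive and exhaustive as written: that \refrule{SEvalPCFieldOptimistic} fires only when \refrule{SEvalPCField} cannot, and that \refrule{SEvalPCFieldMissing} fires precisely on the complement of the condition guarding \refrule{SEvalPCFieldImprecise} (within a precise state). This is immediate from inspecting the $\nexistential{\cdot}{\cdot}$ side conditions, but it is the one place where a mismatch in the rule definitions would break the lemma, so the write-up should make the partition explicit rather than wave at it.
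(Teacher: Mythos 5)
Your proof is correct and follows essentially the same route as the paper's: structural induction on $e$, with the literal and variable bases (the latter discharged by well-formedness, exactly as the paper does), compound cases closed by the induction hypothesis and the corresponding deterministic rule, and the field case split into the same four-way exhaustive partition over $\sheap$, $\oheap$, and $\imp(\sstate)$. No gaps.
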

\begin{proof}
  By induction on the syntax forms of $e$:

  \begin{enumcases}
    \case $l \in \Literal$

      Then $l \in \SExpr$ and by \refrule{SEvalPCLiteral}, $\spceval{\sstate}{l}{l}{\_}$.

    \case $x \in \Var$:

      Since this is a well-formed program, all variables must be assigned before use. Therefore $\senv(\sstate)(x)$ must be defined, and by \refrule{SEvalPCVar}, $\spceval{\sstate}{e}{\senv(\sstate)(x)}{\_}$.

    \case $e.f$ -- $e \in \Expr$, $f \in \Field$:

      By induction, $\existential{t_e \in \SExpr}{\spceval{\sstate}{e}{t_e}{\_}}$. Then one of the following must apply:

      \subcase $\existential{t_e'}{\triple{f}{t_e'}{t} \in \sheap(\sstate) \wedge \pc(\sstate) \implies t_e' \keq t_e}$:
        Then \refrule{SEvalPCField} applies and thus $\spceval{\sstate}{e.f}{t}{\_}$.

      \subcase $\nexistential{t_e'}{\triple{f}{t_e'}{t} \in \sheap(\sstate) \wedge \pc(\sstate) \implies t_e' \keq t_e}$ and $\existential{t_e'}{\triple{f}{t_e'}{t} \in \oheap(\sstate) \wedge \pc(\sstate) \implies t_e' \keq t_e}$:
        Then \refrule{SEvalPCFieldOptimistic} applies and thus $\spceval{\sstate}{e.f}{t}{\_}$.

      \subcase $\nexistential{t_e'}{\triple{f}{t_e'}{t} \in \sheap(\sstate) \cup \oheap(\sstate) \wedge \pc(\sstate) \implies t_e' \keq t_e}$ and $\imp(\sstate)$:
        Then \refrule{SEvalPCFieldImprecise} applies and thus $\spceval{\sstate}{e.f}{t}{\_}$ where $t = \ffresh$.

      \subcase $\nexistential{t_e'}{\triple{f}{t_e'}{t} \in \sheap(\sstate) \cup \oheap(\sstate) \wedge \pc(\sstate) \implies t_e' \keq t_e}$ and $\neg \imp(\sstate)$:
        Then \refrule{SEvalPCFieldMissing} applies and thus $\spceval{\sstate}{e.f}{t}{\_}$ where $t = \ffresh$.

    \case $e_1 \oplus e_2$ -- $e_1, e_2 \in \Expr$:

      By induction, $\existential{t_1 \in \SExpr}{\spceval{\sstate}{e_1}{t_1}{\_}}$ and $\existential{t_2 \in \SExpr}{\spceval{\sstate}{e_2}{t_2}{\_}}$. Then, by \refrule{SEvalPCOp}, $\spceval{\sstate}{e_1 \oplus e_2}{t_1 \oplus t_2}{\_}$.

    \case $e_1 \kor e_2$ -- $e_1, e_2 \in \Expr$:

      By induction, $\existential{t_1 \in \SExpr}{\spceval{\sstate}{e_1}{t_1}{\_}}$ and $\existential{t_2 \in \SExpr}{\spceval{\sstate}{e_2}{t_2}{\_}}$. Then, by \refrule{SEvalPCOr}, $\spceval{\sstate}{e_1 \kor e_2}{t_1 \kor t_2}{\_}$.

    \case $e_1 \kand e_2$ -- $e_1, e_2 \in \Expr$:

      By induction, $\existential{t_1 \in \SExpr}{\spceval{\sstate}{e_1}{t_1}{\_}}$ and $\existential{t_2 \in \SExpr}{\spceval{\sstate}{e_2}{t_2}{\_}}$. Then, by \refrule{SEvalPCAnd}, $\spceval{\sstate}{e_1 \kand e_2}{t_1 \kand t_2}{\_}$.

    \case $\kneg e$ -- $e \in \Expr$:

      By induction, $\existential{t \in \SExpr}{\spceval{\sstate}{e}{t}{\_}}$. Then, by \refrule{SEvalPCNeg}, $\spceval{\sstate}{\kneg e}{\kneg t}{\_}$.

  \end{enumcases}
\end{proof}

\subsection{Produce}


\begin{definition}
  For a judgement $\sproduce{\sstate}{\gform}{\sstate'}$, given an initial valuation $V$ and heap $\heap$, the \textbf{corresponding valuation} is denoted
  $$V[\sproduce{\sstate}{\gform}{\sstate} \mid \heap].$$
  This valuation is defined as follows, depending on the rule that proves the derivation. Values are referenced using the respective name from the rule definition.

  Note that the corresponding valuation always extends the initial valuation and is defined for all $\ffresh$ symbolic values in the judgement.
  \begin{itemize}
    \item \refrule{SProduceImprecise}:
      $$V[\sproduce{\sstate}{\simprecise{\phi}}{\sstate'} \mid \heap] := V[\sproduce{\sstate[\imp = \top]}{\phi}{\sstate'} \mid \heap]$$
    \item \refrule{SProduceExpr}:
      $$V[\sproduce{\sstate}{e}{\sstate'} \mid \heap] := V[\spceval{\sstate}{e}{t}{\_} \mid \heap]$$
    \item \refrule{SProducePredicate}:
      $$V[\sproduce{\sstate}{p(\multiple{e})}{\sstate'}] := V[\multiple{\spceval{\sstate}{e}{t}{\_} \mid \heap}]$$
    \item \refrule{SProduceField}:
      $$V[\sproduce{\sstate}{\kacc(e.f)}{\sstate'} \mid \heap] := V'[t \mapsto \heap(V'(t_e), f)]$$
      where $V' = V[\spceval{\sstate}{e}{t}{\_} \mid \heap]$.
    \item \refrule{SProduceConjunction}:
      $$V[\sproduce{\sstate}{\phi_1 * \phi_2}{\sstate''} \mid \heap] := V[\sproduce{\sstate}{\phi_1}{\sstate'} \mid \heap][\sproduce{\sstate'}{\phi_2}{\sstate''} \mid \heap]$$
    \item \refrule{SProduceIfA}:
      \begin{align*}
        &V[\sproduce{\sstate}{\sif{e}{\phi_1}{\phi_2}}{\sstate'} \mid \heap] := \\&\quad V[\spceval{\sstate}{e}{t}{\_} \mid \heap][\sproduce{\sstate[\pc = \pc(\sstate) \kand t]}{\phi_1}{\sstate'} \mid \heap]
      \end{align*}
    \item \refrule{SProduceIfB}:
      \begin{align*}
        &V[\sproduce{\sstate}{\sif{e}{\phi_1}{\phi_2}}{\sstate'} \mid \heap] := \\&\quad V[\spceval{\sstate}{e}{t}{\_} \mid \heap][\sproduce{\sstate[\pc = \pc(\sstate) \kand \kneg t]}{\phi_2}{\sstate'} \mid \heap]
      \end{align*}
  \end{itemize}
\end{definition}

\begin{lemma}\label{lem:produce-subpath}
  If $\sproduce{\sstate}{\phi}{\sstate'}$, then $\pc(\sstate') \implies \pc(\sstate)$.
\end{lemma}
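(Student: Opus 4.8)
The plan is to proceed by structural induction on the derivation of $\sproduce{\sstate}{\phi}{\sstate'}$, checking each of the seven production rules of \S\ref{sec:produce-rules}. The key observation is that \ttt{produce} only ever extends the path condition by conjoining a new conjunct (in \refrule{SProduceExpr}, \refrule{SProduceIfA}, \refrule{SProduceIfB}) or leaves it entirely unchanged (in \refrule{SProducePredicate} and \refrule{SProduceField}, which touch only the heap): so in every non-recursive rule $\pc(\sstate')$ is either $\pc(\sstate)$ itself or $\pc(\sstate) \kand t$ for some symbolic expression $t$. In the former case $\pc(\sstate') \implies \pc(\sstate)$ is immediate, and in the latter it follows from the fact that $t_1 \kand t_2 \implies t_1$ for all symbolic expressions (noted in \S\ref{sec:precise-representation}).

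For the inductive rules I would argue as follows. For \refrule{SProduceImprecise}, the premise is $\sproduce{\sstate[\imp = \top]}{\phi}{\sstate'}$; since setting the imprecise flag does not alter the path condition, $\pc(\sstate[\imp = \top]) = \pc(\sstate)$, and the induction hypothesis gives $\pc(\sstate') \implies \pc(\sstate[\imp = \top]) = \pc(\sstate)$. For \refrule{SProduceConjunction}, with premises $\sproduce{\sstate}{\phi_1}{\sstate'}$ and $\sproduce{\sstate'}{\phi_2}{\sstate''}$, the induction hypothesis yields $\pc(\sstate'') \implies \pc(\sstate')$ and $\pc(\sstate') \implies \pc(\sstate)$, and transitivity of $\implies$ (direct from Definition \ref{def:implication}) closes the case. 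For \refrule{SProduceIfA} and \refrule{SProduceIfB}, the side premise $\spceval{\sstate}{e}{t}{\_}$ leaves $\sstate$ unchanged (deterministic evaluation does not modify the symbolic state), and the recursive premise produces into $\sstate[\pc = \pc(\sstate) \kand t]$ (resp.\ $\sstate[\pc = \pc(\sstate) \kand \kneg t]$); the induction hypothesis then gives $\pc(\sstate') \implies \pc(\sstate) \kand t \implies \pc(\sstate)$ (resp.\ with $\kneg t$), using again $t_1 \kand t_2 \implies t_1$.

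I do not anticipate a substantive obstacle: the lemma is a bookkeeping fact about how \ttt{produce} threads the path condition, entirely analogous to Lemma \ref{lem:eval-subpath} for symbolic evaluation, and I would follow that proof's structure and style. The only mild care needed is to record the two auxiliary facts used throughout — transitivity of $\implies$ and $t_1 \kand t_2 \implies t_1$ — and to confirm that the two rule ingredients that superficially look like state modifications (the imprecise-flag update in \refrule{SProduceImprecise} and the deterministic-evaluation premise in the conditional rules) in fact leave the path-condition component untouched.
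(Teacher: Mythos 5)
Your proof is correct and follows essentially the same route as the paper's: induction on the derivation, noting that each rule either leaves the path condition unchanged, conjoins a conjunct (so $t_1 \kand t_2 \implies t_1$ applies), or composes via the induction hypothesis and transitivity. The paper's own proof is just a terser version of exactly this case analysis.
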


\begin{proof}
  By induction on $\sproduce{\sstate}{\phi}{\sstate'}$:

  \begin{enumcases}
    \case \refrule{SProduceImprecise} -- $\sproduce{\sstate}{\simprecise{\phi}}{\sstate'}$:
      By \refrule{SProduceImprecise} $\sproduce{\sstate[\imp = \top]}{\phi}{\sstate'}$, thus by induction $\pc(\sstate') \implies \pc(\sstate)$.

    \case \refrule{SProduceExpr} -- $\sproduce{\sstate}{e}{\sstate[\pc = \pc(\sstate) \kand t]}$:
      Trivially $\pc(\sstate) \kand t \implies \pc(\sstate)$.

    \case\label{case:produce-subpath-pred} \refrule{SProducePredicate} -- $\sproduce{\sstate}{p(\multiple{e})}{\sstate'}$:
      By \refrule{SProducePredicate} $\sstate' = \sstate[\sheap = \cdots]$, thus $\pc(\sstate') = \pc(\sstate)$.

    \case \refrule{SProduceField} -- $\sproduce{\sstate}{\kacc(e.f)}{\sstate'}$: Similar to case \ref{case:produce-subpath-pred}.

    \case \refrule{SProduceConjunction} -- $\sproduce{\sstate}{\phi_1 * \phi_2}{\sstate''}$:
      By \refrule{SProduceConjunction} $\sproduce{\sstate}{\phi_1}{\sstate'}$ and $\sproduce{\sstate'}{\phi_2}{\sstate''}$. By induction $\pc(\sstate'') \implies \pc(\sstate')$ and $\pc(\sstate') \implies \pc(\sstate)$, therefore $\pc(\sstate'') \implies \pc(\sstate)$.

    \case\label{case:produce-subpath-ifa} \refrule{SProduceIfA} -- $\sproduce{\sstate}{\sif{e}{\phi_1}{\phi_2}}{\sstate'}$:
      By \refrule{SProduceIfA} $\sproduce{\sstate[\pc = \pc(\sstate) \kand t]}{\phi_1}{\sstate'}$, thus by induction $\pc(\sstate') \implies \pc(\sstate) \kand t \implies \pc(\sstate)$.

    \case \refrule{SProduceIfB} -- $\sproduce{\sstate}{\sif{e}{\phi_1}{\phi_2}}{\sstate'}$: Similar to case \ref{case:produce-subpath-ifa}.
  \end{enumcases}
\end{proof}

\begin{lemma}\label{lem:produce-unchanged}
  If $\sproduce{\sstate}{\gform}{\sstate'}$ then $\senv(\sstate') = \senv(\sstate)$.
\end{lemma}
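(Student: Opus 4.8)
\textbf{Proof proposal for Lemma \ref{lem:produce-unchanged}.}

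The plan is to proceed by induction on the derivation of $\sproduce{\sstate}{\gform}{\sstate'}$, with one case per produce rule from \S\ref{sec:produce-rules}. The guiding observation is that every produce rule constructs $\sstate'$ from $\sstate$ (or from an intermediate state) by modifying only the imprecise flag, the path condition, or the precise heap — never the symbolic environment $\senv$ — and that the auxiliary deterministic-evaluation judgement $\spceval{\sstate}{e}{t}{\_}$ invoked in several rules does not alter the state at all. So in each case $\senv(\sstate')$ is syntactically the same component as $\senv(\sstate)$, possibly after passing through the inductive hypothesis.

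Concretely: for \refrule{SProduceExpr}, $\sstate' = \sstate[\pc = \pc(\sstate)\kand t]$, so $\senv(\sstate') = \senv(\sstate)$ immediately; likewise for \refrule{SProducePredicate} and \refrule{SProduceField}, where $\sstate'$ differs from $\sstate$ only in the $\sheap$ component. For \refrule{SProduceImprecise}, the premise is $\sproduce{\sstate[\imp = \top]}{\phi}{\sstate'}$; since $\senv(\sstate[\imp = \top]) = \senv(\sstate)$, the inductive hypothesis gives $\senv(\sstate') = \senv(\sstate[\imp = \top]) = \senv(\sstate)$. For \refrule{SProduceConjunction}, the premises are $\sproduce{\sstate}{\phi_1}{\sstate'}$ and $\sproduce{\sstate'}{\phi_2}{\sstate''}$; applying the inductive hypothesis twice yields $\senv(\sstate'') = \senv(\sstate') = \senv(\sstate)$. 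The two conditional cases \refrule{SProduceIfA} and \refrule{SProduceIfB} are handled the same way: the recursive produce is applied to $\sstate[\pc = \pc(\sstate)\kand t]$ (resp. with $\kneg t$), which has the same $\senv$ as $\sstate$, so the inductive hypothesis closes the case.

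There is essentially no hard part here — the lemma is a bookkeeping fact recording that $\senv$ is inert under produce. The only point requiring a moment's care is remembering that the recursive calls in \refrule{SProduceImprecise} and the conditional rules operate on states that have been updated in $\imp$ or $\pc$ but not in $\senv$, so the inductive hypothesis applies to those states and transfers back unchanged. This lemma will be used downstream (together with \ref{lem:produce-subpath}) in the soundness arguments for \refrule{SExecCall}, \refrule{SExecWhileSkip}, and the corresponding guard rules, where the symbolic store must be tracked across a produce of a postcondition or loop invariant.
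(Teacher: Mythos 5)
Your proof is correct and takes essentially the same approach as the paper, which disposes of this lemma with a one-line ``trivial by induction on $\sproduce{\sstate}{\gform}{\sstate'}$''; your case analysis simply spells out that induction explicitly and each case checks out against the rules in \S\ref{sec:produce-rules}.
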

\begin{proof}
  Trivial by induction on $\sproduce{\sstate}{\gform}{\sstate'}$.
\end{proof}

\begin{lemma}\label{lem:produce-soundness}
  Suppose $\simstate{V}{\sstate}{\heap}{\perms \setminus \efoot{\heap}{\env}{\gform}}{\env}$.
  
  If $\sproduce{\sstate}{\gform}{\sstate'}$, $\assertion{\heap}{\perms}{\env}{\gform}$, and $V'(\pc(\sstate')) = \ktrue$ where $V' = V[\sproduce{\sstate}{\gform}{\sstate'} \mid \heap]$, then
  $$\simstate{V'}{\sstate'}{\heap}{\perms}{\env}$$
\end{lemma}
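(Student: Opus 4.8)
The plan is to induct on the derivation of $\sproduce{\sstate}{\gform}{\sstate'}$, one case per produce rule. In every case the corresponding valuation $V'$ is, by its definition, exactly the composition of the corresponding valuations of the sub-derivations (for \refrule{SProduceField} extended by $t \mapsto \heap(V'(t_e),f)$), so the work is just to route the right valuation into each recursive call or lemma invocation. Two facts are used everywhere: $\senv$ is untouched by produce (lemma \ref{lem:produce-unchanged}) and the path condition only gains conjuncts (lemma \ref{lem:produce-subpath}), so the assumption $V'(\pc(\sstate')) = \ktrue$ always propagates down to the output path conditions of the sub-derivations since $V'$ extends their valuations. Also, since $\perms \setminus \efoot{\heap}{\env}{\gform} \subseteq \perms$, the hypothesis already yields $\simstate{V}{\sstate}{\heap}{\perms}{\env}$ (lemma \ref{lem:simstate-monotonicity}), which is the shape the evaluation lemma \ref{lem:spceval-correspondence} wants, and correspondence of the unchanged heap components can be promoted from $\perms \setminus \efoot{\heap}{\env}{\gform}$ back up to $\perms$ by lemmas \ref{lem:sim-sheap-monotonicity} and \ref{lem:sim-oheap-monotonicity}.

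For the leaf cases \refrule{SProduceExpr}, \refrule{SProduceField}, and \refrule{SProducePredicate}, I first invert $\assertion{\heap}{\perms}{\env}{\gform}$ through \refrule{AssertValue}, \refrule{AssertAcc}, or \refrule{AssertPredicate} and use lemma \ref{lem:spceval-correspondence} to identify the symbolic values of the relevant subexpressions with the concrete values appearing there; this gives that the freshly added field/predicate chunk is owned in $\perms$ and stores the value (resp.\ asserts the predicate body) required by $\simheap{V'}{\cdot}{\heap}{\perms}$. The only nontrivial point is the pairwise-disjointness conjunct of $\simheap$: the new chunk's footprint is contained in $\efoot{\heap}{\env}{\gform}$, whereas lemma \ref{lem:sim-heap-disjoint} applied to the hypothesis $\simstate{V}{\sstate}{\heap}{\perms \setminus \efoot{\heap}{\env}{\gform}}{\env}$ says every pre-existing chunk in $\sheap(\sstate)$ has footprint disjoint from $\efoot{\heap}{\env}{\gform}$, so disjointness survives. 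This is precisely why the hypothesis subtracts $\efoot{\heap}{\env}{\gform}$ rather than using $\perms$ directly. The case \refrule{SProduceImprecise} is immediate: the relation \eqref{eq:sstate-correspondence} does not mention $\imp$, $\efoot{\heap}{\env}{\simprecise{\phi}} = \efoot{\heap}{\env}{\phi}$, and \refrule{AssertImprecise} yields $\assertion{\heap}{\perms}{\env}{\phi}$, so the induction hypothesis applies to $\phi$ and $\sstate[\imp = \top]$.

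For \refrule{SProduceIfA}/\refrule{SProduceIfB} the subtlety is matching the symbolically chosen branch with the one taken by the assertion derivation. From $\pc(\sstate') \implies \pc(\sstate) \kand t$ and $V'(\pc(\sstate')) = \ktrue$ we get $V'(t) = \ktrue$, and then lemma \ref{lem:spceval-correspondence} on the guard $e$ gives $\eval{\heap}{\env}{e}{\ktrue}$, forcing \refrule{AssertIfA} (dually \refrule{AssertIfB}) and hence $\assertion{\heap}{\perms}{\env}{\phi_1}$. Pushing $t$ onto $\sstate$'s path condition is valid under the extension of $V$ returned by $\spceval$ because $V'(t) = \ktrue$; since $\efoot{\heap}{\env}{\sif{e}{\phi_1}{\phi_2}} \supseteq \efoot{\heap}{\env}{\phi_1}$, a monotonicity step puts the correspondence hypothesis of the recursive produce into the right form, and the induction hypothesis on $\phi_1$ with target $\perms$ closes the case.

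The main obstacle is \refrule{SProduceConjunction}: produce $\phi_1$ into $\sstate$ giving $\sstate'$, then $\phi_2$ into $\sstate'$ giving $\sstate''$; inverting \refrule{AssertConjunction} gives $\assertion{\heap}{\perms_1}{\env}{\phi_1}$, $\assertion{\heap}{\perms_2}{\env}{\phi_2}$ with $\perms_1 \cap \perms_2 = \emptyset$, $\perms_1 \cup \perms_2 \subseteq \perms$. The plan is to run the first induction hypothesis with target $\perms_A := (\perms \setminus \efoot{\heap}{\env}{\phi_1 * \phi_2}) \cup (\efoot{\heap}{\env}{\phi_1} \cap \perms)$ --- its correspondence hypothesis $\perms_A \setminus \efoot{\heap}{\env}{\phi_1} = \perms \setminus \efoot{\heap}{\env}{\phi_1 * \phi_2}$ is exactly ours (using $\efoot{\heap}{\env}{\phi_1} \subseteq \efoot{\heap}{\env}{\phi_1 * \phi_2}$), and $\assertion{\heap}{\perms_A}{\env}{\phi_1}$ follows from lemmas \ref{lem:efoot-assert} and \ref{lem:assert-monotonicity} --- obtaining $\simstate{V_1'}{\sstate'}{\heap}{\perms_A}{\env}$, and then the second with target $\perms$ (for which $\assertion{\heap}{\perms}{\env}{\phi_2}$ is just lemma \ref{lem:assert-monotonicity}). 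The friction is the residual requirement $\simstate{V_1'}{\sstate'}{\heap}{\perms \setminus \efoot{\heap}{\env}{\phi_2}}{\env}$, i.e.\ $\perms_A \subseteq \perms \setminus \efoot{\heap}{\env}{\phi_2}$: this holds when $\efoot{\heap}{\env}{\phi_1}$ and $\efoot{\heap}{\env}{\phi_2}$ do not overlap inside $\perms$, but not in general (e.g.\ $\phi_1 = \kacc(x.f)$, $\phi_2 = (x.f \keq 3)$). I expect to resolve this either by strengthening the induction hypothesis so that it tracks the precise set of permissions materialized by producing $\gform$ (its ``positive'' footprint, a subset of $\efoot{\heap}{\env}{\gform}$) rather than the coarser $\efoot{\heap}{\env}{\gform}$, or by observing that such overlap can only come from purely-logical conjuncts --- ones adding no new $\kacc$ or predicate chunks --- for which $\sstate''$ and $\sstate'$ share the same heaps, so correspondence transfers directly once $V'(\pc(\sstate'')) = \ktrue$ is known. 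I would try the strengthened hypothesis first, since it also streamlines the leaf cases.
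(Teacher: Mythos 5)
Your proposal follows the paper's own argument case-for-case on the leaves, the imprecision case, and the conditionals, including the key use of lemma \ref{lem:sim-heap-disjoint} on the $\efoot{\heap}{\env}{\gform}$-subtracted hypothesis to obtain pairwise disjointness after adding the new chunk. The divergence, and the gap, is exactly where you flag it: \refrule{SProduceConjunction}. The paper closes that case by running the first induction with target $\perms' = \perms \setminus \efoot{\heap}{\env}{\phi_2}$; then $\perms' \setminus \efoot{\heap}{\env}{\phi_1} = \perms \setminus \efoot{\heap}{\env}{\phi_1 * \phi_2}$ is the given correspondence hypothesis, the conclusion $\simstate{V_1}{\sstate'}{\heap}{\perms'}{\env}$ is verbatim the correspondence hypothesis of the second induction with target $\perms$, and $\assertion{\heap}{\perms'}{\env}{\phi_1}$ is obtained from lemma \ref{lem:assert-efoot-subset} using the disjointness $\efoot{\heap}{\env}{\phi_1} \cap \efoot{\heap}{\env}{\phi_2} = \emptyset$, which the paper reads off from the inversion of \refrule{AssertConjunction}. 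Your $\perms_A$ needs that same disjointness (to get $\perms_A \subseteq \perms \setminus \efoot{\heap}{\env}{\phi_2}$), so neither choice of intermediate permission set avoids it, and your proposal as written does not close the case: the two repairs you sketch (a strengthened induction hypothesis tracking the materialized footprint, or a separate treatment of purely-logical conjuncts) are left as intentions rather than carried out.

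That said, your counterexample is well chosen and worth taking seriously: for $\kacc(x.f) * x.f \keq 3$ --- the paper's own canonical self-framed specification --- \refrule{AssertConjunction} only forces disjointness of the witness sets $\perms_1, \perms_2$, and since \refrule{AssertValue} permits $\perms_2 = \emptyset$, one cannot conclude $\efoot{\heap}{\env}{\phi_2} \subseteq \perms_2$ for a conjunct that is not itself self-framed; the exact footprints genuinely overlap here. So the disjointness step the paper relies on is not justified by the cited inversion for such conjuncts, and your second proposed repair (a purely-logical conjunct adds no heap chunks, so only the path-condition component of the correspondence is at stake and lemma \ref{lem:spceval-correspondence} suffices there) is essentially the observation a fully rigorous proof of this case needs. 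But since you have not executed either repair, the conjunction case remains a genuine gap in the proposal.
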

\begin{proof}
  By induction on $\sproduce{\sstate}{\gform}{\sstate'}$:
  \begin{enumcases}
    \case \refrule{SProduceImprecise} -- $\sproduce{\sstate}{\simprecise{\phi}}{\sstate'}$:

      By \refrule{SProduceImprecise} $\sproduce{\sstate[\imp = \top]}{\phi}{\sstate'}$. Let $V'$ be the corresponding valuation, thus $V'$ is the corresponding valuation for this case.

      Then, since $\assertion{\heap}{\perms}{\env}{\simprecise{\phi}}$, by \refrule{AssertImprecise} $\assertion{\heap}{\perms}{\env}{\phi}$. Also, $\efoot{\heap}{\env}{\simprecise{\phi}} = \efoot{\heap}{\env}{\phi}$. Therefore $\simstate{V}{\sstate}{\heap}{\perms \setminus \efoot{\heap}{\env}{\phi}}{\env}$, and furthermore $\simstate{V}{\sstate[\imp = \top]}{\heap}{\perms \setminus \efoot{\heap}{\env}{\phi}}{\env}$.

      Therefore $\simstate{V'}{\sstate'}{\heap}{\perms}{\env}$ by induction.

    \case \refrule{SProduceExpr} -- $\sproduce{\sstate}{e}{\sstate[\pc = \pc(\sstate) \kand t]}$:
    
      By \refrule{SProduceExpr} $\spceval{\sstate}{e}{t}{\_}$. Let $V'$ be the corresponding valuation, therefore $V'$ is the corresponding valuation for this case.

      Let $\sstate' = \sstate[\pc(\sstate) \wedge t]$. Since $V'$ extends $V$ and $\simstate{V}{\sstate}{\heap}{\perms \setminus \efoot{\heap}{\env}{e}}{\env}$, $\simstate{V'}{\sstate}{\heap}{\perms}{\env}$. Then, since $\sstate'$ and $\sstate$ differ only in their $\pc$ components $\simstate{V'}{\sstate'}{\heap}{\perms}{\env}$ since $V'(\pc(\sstate')) = \ktrue$ by assumptions.

    \case \refrule{SProducePredicate} -- $\sproduce{\sstate}{p(\multiple{e})}{\sstate'}$:

      By \refrule{SProducePredicate}, for each $e$, $\spceval{\sstate}{e}{t}{\_}$ for some $t_i$. Let $V'$ be the corresponding valuation for this case, thus $V'$ extends the corresponding valuation corresponding for each $\spceval{\sstate}{e}{t}{\_}$.
      
      By \refrule{SProducePredicate} $\sstate' = \sstate[\sheap = \sheap(\sstate); \pair{p}{\multiple{t}}]$. Since $\sstate$ and $\sstate'$ differ only in their $\sheap$ components, proving that \eqref{eq:sheap-correspondence} holds for $\sheap(\sstate')$ is sufficient to prove that $\simstate{V'}{\sstate'}{\heap}{\perms}{\env}$.

      By assumptions, $\assertion{\heap}{\perms}{\env}{p(\multiple{e})}$ thus by \refrule{AssertPredicate} $\eval{\heap}{\env}{e}{v}$ for some $v$ for all $e$. Then by lemma \ref{lem:spceval-correspondence} $\eval{\heap}{\env}{e}{V'(t)}$ for all $e$. Then
      $$
        \efoot{\heap}{\env}{p(\multiple{e})} \supseteq \efoot{\heap}{\multiple{x \mapsto V'(t)}}{\fpred(p)} = \vfoot{V'}{\heap}{\pair{p}{\multiple{t}}},
      $$
      where $\multiple{x} = \fpredparams(p)$
      thus $\simheap{V}{\sheap(\sstate)}{\heap}{\perms \setminus \vfoot{V}{\heap}{\pair{p}{\multiple{t}}}}$ by lemma \ref{lem:simstate-monotonicity}. Now by lemma \ref{lem:sim-heap-disjoint} and since $V'$ extends $V$,
      $$\universal{h \in \sheap(\sstate)}{\vfoot{V'}{\heap}{h} \cap \vfoot{V'}{\heap}{\pair{p}{\multiple{t}}} = \emptyset}.$$
      Since $\pair{p}{\multiple{t}}$ is the only addition to $\sheap(\sstate')$ relative to $\sheap(\sstate)$ and $V'$ extends $V$,
      $$\universal{h_1, h_2 \in \sheap(\sstate')}{h_1 \ne h_2 \implies \vfoot{V'}{\heap}{h_1} \cap \vfoot{V'}{\heap}{h_2} = \emptyset}.$$
      Also by \refrule{AssertPredicate} $\assertion{\heap}{\perms}{[\multiple{x \mapsto V'(t)}]}{\fpred(p)}$. Therefore, since $\pair{p}{\multiple{t}}$ is the only addition to $\sheap(\sstate')$ relative to $\sheap(\sstate)$ and $V'$ extends $V$,
      $$\universal{\pair{p}{\multiple{t}} \in \sheap(\sstate')}{\assertion{\heap}{\perms}{[\multiple{x \mapsto V'(t)}]}{\fpred(p)}}.$$
      Therefore all requirements for \eqref{eq:sheap-correspondence} are satisfied, and therefore $\simstate{V'}{\sstate'}{\heap}{\perms}{\env}$.

    \case \refrule{SProduceField} -- $\sproduce{\sstate}{\kacc(e.f)}{\sstate'}$:

      By \refrule{SProduceField} $\spceval{\sstate}{e}{t_e}{\_}$ for some $t_e$. Let $V_e$ be the corresponding valuation, and let $V'$ be the corresponding valuation for this case, thus $V'$ extends $V_e$.
      
      Also by \refrule{SProduceField} $\sstate' = \sstate[\sheap = \sheap(\sstate); \triple{f}{t_e}{t}]$ where $t = \ffresh$. By definition $V'(t) = \heap(V_e(t_e), f)$.

      Since $\sstate$ and $\sstate'$ differ only in their $\sheap$ components, proving that \eqref{eq:sheap-correspondence} holds for $\sheap(\sstate')$ is sufficient to prove that $\simstate{V'}{\sstate'}{\heap}{\perms}{\env}$.

      By assumptions $\assertion{\heap}{\perms}{\env}{\kacc(e.f)}$, thus by \refrule{AssertAcc} $\eval{\heap}{\env}{e}{v_e}$ for some $v_e$ such that $\pair{v_e}{f} \in \perms$. By lemma \ref{lem:spceval-correspondence} $v_e = V_e(t_e)$, thus $\pair{V'(t_e)}{f} \in \perms$.

      As shown before, $V'(t) = \heap(V'(t_e), f)$. Therefore, since $\triple{f}{t_e}{t}$ is the only addition to $\sheap(\sstate')$ relative to $\sheap(\sstate)$ and $V'$ extends $V$,
      $$\universal{\triple{f}{t}{t'} \in \sheap(\sstate')}{\heap(V'(t), f) = V'(t')} ~\text{and}$$
      $$\universal{\triple{f}{t}{t'} \in \sheap(\sstate')}{\pair{V'(t)}{f} \in \perms}$$
      Since $\eval{\heap}{\env}{e}{V'(t)}$ as shown before,
      $$\efoot{\heap}{\env}{e} \supseteq \set{\pair{V'(t_e)}{f}} = \vfoot{V}{\heap}{(t_e.f; t)}.$$
      Therefore $\simheap{V}{\sheap(\sstate)}{\heap}{\perms \setminus \vfoot{V'}{\heap}{(t.f; t')}}$ by lemma \ref{lem:simstate-monotonicity}. Now by lemma \ref{lem:sim-heap-disjoint}, and since $V'$ extends $V$,
      $$\universal{h \in \sheap(\sstate')}{\vfoot{V'}{\heap}{h} \cap \vfoot{V'}{\heap}{\triple{f}{t_e}{t}} = \emptyset}.$$
      Finally, since $\triple{f}{t_e}{t}$ is the only addition to $\sheap(\sstate')$ relative to $\sheap(\sstate)$ and $V'$ extends $V$,
      $$\universal{h_1, h_2 \in \sheap(\sstate')^2}{h_1 \ne h_2 \implies \vfoot{V'}{\heap}{h_1} \cap \vfoot{V'}{\heap}{h_2} = \emptyset}.$$
      Therefore all requirements for \eqref{eq:sheap-correspondence} are satisfied and therefore $\simstate{V'}{\sstate'}{\heap}{\perms}{\env}$.

    \case \refrule{SProduceConjunction} -- $\sproduce{\sstate}{\phi_1 * \phi_2}{\sstate''}$:

      By \refrule{SProduceConjunction} $\sproduce{\sstate}{\phi_1}{\sstate'}$ and $\sproduce{\sstate'}{\phi_2}{\sstate''}$. Let $V_1$ and $V_2$ be the respective corresponding valuations, extending $V$ and $V_1$, respectively. Then $V_2$ is the corresponding valuation for this case.

      Since $\assertion{\heap}{\perms}{\env}{\phi_1 * \phi_2}$, by \refrule{AssertConjunction} $\assertion{\heap}{\perms}{\env}{\phi_1}$ and $\assertion{\heap}{\perms}{\env}{\phi_2}$ where $\efoot{\heap}{\env}{\phi_1} \cap \efoot{\heap}{\env}{\phi_2} = \emptyset$. Also, $\efoot{\heap}{\env}{\phi_1 * \phi_2} = \efoot{\heap}{\env}{\phi_1} \cup \efoot{\heap}{\env}{\phi_2}$.

      Let $\perms' = \perms \setminus \efoot{\heap}{\env}{\phi_2}$. Then $\assertion{\heap}{\perms'}{\env}{\phi_1}$ by lemma \ref{lem:assert-efoot-subset} since $\efoot{\heap}{\env}{\phi_1} \subseteq \perms'$. Also, $\simstate{V}{\sstate}{\heap}{\perms' \setminus \efoot{\heap}{\env}{\phi_1}}{\env}$ since $\perms' \setminus \efoot{\heap}{\env}{\phi_1} = \perms \setminus \efoot{\heap}{\env}{\phi_1 * \phi_2}$. Finally, by lemma \ref{lem:produce-subpath}, $\pc(\sstate'') \implies \pc(\sstate')$, and therefore $V_2(\pc(\sstate')) = V_1(\pc(\sstate')) = \ktrue$.

      Now $\simstate{V_1}{\sstate'}{\heap}{\perms'}{\env}$ by induction. Also, $\assertion{\heap}{\perms}{\env}{\phi_2}$ by lemma \ref{lem:assert-efoot-subset} since $\efoot{\heap}{\env}{\phi_2} \subseteq \perms$. Finally, $V_2(\pc(\sstate'')) = \ktrue$ by assumption, therefore $\simstate{V_2}{\sstate''}{\heap}{\perms}{\env}$ by induction.

    \case\label{case:produce-soundness-ifa} \refrule{SProduceIfA} -- $\sproduce{\sstate}{\sif{e}{\phi_1}{\phi_2}}{\sstate'}$:

      By \refrule{SProduceIfA} $\spceval{\sstate}{e}{t}{\_}$ and $\sproduce{\sstate[\pc = \pc(\sstate) \kand t]}{\phi_1}{\sstate'}$. Let $V_1$ and $V'$ be the respective corresponding valuations extending $V$ and $V_1$, respectively. Then $V'$ is the corresponding valuation in this case.

      By lemma \ref{lem:produce-subpath}, $\pc(\sstate') \implies \pc(\sstate) \kand t$. By assumptions $V'(\pc(\sstate')) = \ktrue$, thus $V'(\pc(\sstate) \kand t) = V_1(\pc(\sstate) \kand t) = \ktrue$.

      Then also $V_1(t) = \ktrue$. Since $\assertion{\heap}{\perms}{\env}{\sif{e}{\phi_1}{\phi_2}}$, $\eval{\heap}{\env}{e}{v}$ for some $v$ by \refrule{AssertIfA} or \refrule{AssertIfB}. But by lemma \ref{lem:spceval-correspondence} $v = V_1(t) = \ktrue$. Therefore \\
      $\efoot{\heap}{\env}{\phi_1} \subseteq \efoot{\heap}{\env}{\sif{e}{\phi_1}{\phi_2}}$ by definition. Therefore, since $V_1$ extends $V$,
      $$\simstate{V_1}{\sstate[\pc = \pc(\sstate) \kand t]}{\heap}{\perms \setminus \efoot{\heap}{\env}{\phi_1}}{\env}.$$
      Also, $\assertion{\heap}{\perms}{\env}{\phi_1}$ by \refrule{AssertIfA} since $\eval{\heap}{\env}{e}{\ktrue}$ and $\assertion{\heap}{\perms}{\env}{\sif{e}{\phi_1}{\phi_2}}$. Finally, $V'(\sstate') = \ktrue$ by assumption.
      
      Thus $\simstate{V'}{\sstate'}{\heap}{\perms}{\env}$ by induction.

    \case \refrule{SProduceIfB}: Similar to case \ref{case:produce-soundness-ifa}.
  \end{enumcases}
\end{proof}

\begin{lemma}[Progress] \label{lem:produce-progress}
  If $V(\pc(\sstate)) = \ktrue$, $\assertion{\heap}{\perms}{\env}{\gform}$, and $\simstate{V}{\sstate}{\heap}{\perms}{\env}$, then $\sproduce{\sstate}{\gform}{\sstate'}$ for some $\sstate'$ where $V'(\pc(\sstate')) = \ktrue$ and $V' = V[\sproduce{\sstate}{\gform}{\sstate'} \mid \heap]$.
\end{lemma}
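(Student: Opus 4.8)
The plan is to argue by induction on the structure of $\gform$, using the assumed assertion $\assertion{\heap}{\perms}{\env}{\gform}$ at each step both to select the applicable produce rule (crucially, to decide which branch of a conditional formula was taken) and to discharge satisfiability of any path-condition conjunct that produce adds; the non-branching subgoals will be supplied by lemma \ref{lem:pc-eval-progress}, which guarantees that every expression occurring in $\gform$ evaluates symbolically in any state.

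First I would dispatch the syntactic leaves. If $\gform$ is an expression $e$, lemma \ref{lem:pc-eval-progress} gives $\spceval{\sstate}{e}{t}{\_}$, so \refrule{SProduceExpr} applies with $\sstate' = \sstate[\pc = \pc(\sstate) \kand t]$; it remains to check $V'(\pc(\sstate')) = \ktrue$, i.e.\ $V'(t) = \ktrue$, which follows because \refrule{AssertValue} forces $\eval{\heap}{\env}{e}{\ktrue}$ and then lemma \ref{lem:spceval-correspondence} (applicable since $\simstate{V}{\sstate}{\heap}{\perms}{\env}$) identifies that value with $V'(t)$ for the corresponding valuation $V'$. If $\gform$ is $p(\multiple{e})$ or $\kacc(e.f)$, repeated use of lemma \ref{lem:pc-eval-progress} supplies the symbolic evaluations, \refrule{SProducePredicate} (resp.\ \refrule{SProduceField}) applies, and since neither rule touches $\pc$ we get $V'(\pc(\sstate')) = V'(\pc(\sstate)) = \ktrue$ directly. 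If $\gform = \simprecise{\phi}$, then \refrule{AssertImprecise} yields $\assertion{\heap}{\perms}{\env}{\phi}$, the correspondence relation is insensitive to the imprecise flag so $\simstate{V}{\sstate[\imp = \top]}{\heap}{\perms}{\env}$, and the induction hypothesis on $\phi$ together with \refrule{SProduceImprecise} closes the case with the same corresponding valuation.

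The compound cases are the real work. For $\gform = \phi_1 * \phi_2$: \refrule{AssertConjunction} splits $\perms$ into $\perms_1 \uplus \perms_2 \subseteq \perms$ with $\assertion{\heap}{\perms_i}{\env}{\phi_i}$, and lemma \ref{lem:assert-monotonicity} upgrades these to $\assertion{\heap}{\perms}{\env}{\phi_i}$. The induction hypothesis on $\phi_1$ gives $\sproduce{\sstate}{\phi_1}{\sstate'}$ with $V_1(\pc(\sstate')) = \ktrue$; to feed the induction hypothesis on $\phi_2$ with state $\sstate'$ I must recover $\simstate{V_1}{\sstate'}{\heap}{\perms}{\env}$, for which I would invoke lemma \ref{lem:produce-soundness} on the first produce — and this is exactly where the footprint bookkeeping (producing $\phi_1$ only adds chunks inside $\efoot{\heap}{\env}{\phi_1}$, whose disjointness from the chunks already in $\sstate$ must be argued via lemmas \ref{lem:sim-heap-disjoint} and \ref{lem:disjoint-sim-heap-subset}) has to be arranged so that the sharper hypothesis of lemma \ref{lem:produce-soundness} ($\sstate$ modeled by $\perms$ minus $\efoot{\heap}{\env}{\phi_1}$) is met. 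With that in hand the induction hypothesis on $\phi_2$ gives $\sproduce{\sstate'}{\phi_2}{\sstate''}$ with $V_2(\pc(\sstate'')) = \ktrue$, \refrule{SProduceConjunction} assembles $\sproduce{\sstate}{\phi_1 * \phi_2}{\sstate''}$, and lemma \ref{lem:produce-subpath} gives $\pc(\sstate'') \implies \pc(\sstate')$, so both intermediate path conditions stay satisfied under $V_2$, which by definition is the corresponding valuation of the whole derivation. For $\gform = \sif{e}{\phi_1}{\phi_2}$: lemma \ref{lem:pc-eval-progress} gives $\spceval{\sstate}{e}{t}{\_}$; whichever of \refrule{AssertIfA}/\refrule{AssertIfB} was used tells us $\eval{\heap}{\env}{e}{\ktrue}$ or $\eval{\heap}{\env}{e}{\kfalse}$, and lemma \ref{lem:spceval-correspondence} transfers this to $V'(t)$, pinning down \refrule{SProduceIfA} or \refrule{SProduceIfB}; the path condition $\pc(\sstate) \kand t$ (resp.\ $\pc(\sstate) \kand \kneg t$) of the selected branch then evaluates to $\ktrue$, so $\simstate{V'}{\sstate[\pc = \pc(\sstate) \kand t]}{\heap}{\perms}{\env}$ holds with the same $\perms$, and the induction hypothesis on the chosen $\phi_i$ finishes, again using lemma \ref{lem:produce-subpath} to back-propagate satisfiability of the branch path condition.

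I expect the main obstacle to be the conjunction case: transporting the state correspondence $\simstate{V_1}{\sstate'}{\heap}{\perms}{\env}$ across the first recursive produce so that the induction hypothesis becomes available for the second conjunct. Unlike the mechanical rule-by-rule checks, this forces the careful permission accounting needed to invoke lemma \ref{lem:produce-soundness}, and getting that footprint split right — rather than any individual produce rule — is the delicate step.
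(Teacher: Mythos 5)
Your proof follows the paper's own argument almost line for line: induction on the syntax of $\gform$, \refrule{AssertValue}/\refrule{AssertIfA}/\refrule{AssertIfB} plus lemma \ref{lem:spceval-correspondence} to discharge the path-condition conjuncts added by \refrule{SProduceExpr} and the conditional rules, lemma \ref{lem:pc-eval-progress} for the evaluation subgoals, \refrule{AssertConjunction} with lemma \ref{lem:assert-monotonicity} to split the conjunction, and \refrule{AssertImprecise} for the imprecise case. The one place you depart from the paper is the conjunction case, and it cuts both ways. The paper's proof is \emph{simpler} than you anticipate: it just applies the inductive hypothesis to $\phi_2$ in the state $\sstate'$ and moves on, without ever re-establishing $\simstate{V_1}{\sstate'}{\heap}{\perms}{\env}$ — so your observation that this correspondence is an unmet hypothesis of the second IH application (needed because the expression and conditional sub-cases consult lemma \ref{lem:spceval-correspondence}) is sharper than the paper's own treatment. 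However, your proposed repair does not go through as written: lemma \ref{lem:produce-soundness} requires $\simstate{V}{\sstate}{\heap}{\perms \setminus \efoot{\heap}{\env}{\phi_1}}{\env}$ as its \emph{input}, and this does not follow from the given $\simstate{V}{\sstate}{\heap}{\perms}{\env}$ — correspondence is monotone in the direction of \emph{adding} permissions (lemma \ref{lem:simstate-monotonicity}), not removing them, and passing to the subtracted set via lemma \ref{lem:disjoint-sim-heap-subset} would require every chunk already in $\sstate$ to have footprint disjoint from $\efoot{\heap}{\env}{\phi_1}$, which the lemma's hypotheses do not supply. So you have correctly located the delicate step, but neither your route nor the paper's closes it without an additional disjointness assumption on the incoming state.
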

\begin{proof}
  Suppose $V(\pc(\sstate)) = \ktrue$ and complete the proof by induction on the syntax forms of $\gform$:

  \begin{enumcases}
    \case $\simprecise{\phi}$ -- $\phi \in \Formula$:

      Let $\hat{\sstate} = \sstate[\imp = \top]$. Then $V(\pc(\hat{\sstate})) = V(\pc(\sstate)) = \ktrue$. Also, since $\assertion{\heap}{\perms}{\env}{\simprecise{\phi}}$, by \refrule{AssertImprecise} $\assertion{\heap}{\perms}{\env}{\phi}$. Thus by induction $\sproduce{\hat{\sstate}}{\phi}{\sstate'}$ for some $\sstate'$ where $V'(\pc(\sstate')) = \ktrue$. Then $\sproduce{\sstate}{\simprecise{\phi}}{\sstate'}$ by \refrule{SProduceImprecise}, and $V'(\pc(\sstate')) = \ktrue$.

    \case $\phi_1 * \phi_2$ -- $\phi_1, \phi_2 \in \Formula$:

      Since $\assertion{\heap}{\perms}{\env}{\phi_1 * \phi_2}$, $\assertion{\heap}{\perms}{\env}{\phi_1}$ and $\assertion{\heap}{\perms}{\env}{\phi_2}$ by \refrule{AssertConjunction} and lemma \ref{lem:assert-monotonicity}.

      By induction $\sproduce{\sstate}{\phi_1}{\sstate'}$, with corresponding valuation $V'$ where $V'(\pc(\sstate')) = \ktrue$. Then by induction $\sproduce{\sstate'}{\phi_2}{\sstate''}$, with corresponding valuation $V''$, for some $\sstate''$ where $V''(\pc(\sstate'')) = \ktrue$. By \refrule{SProduceConjunction}, $\sproduce{\sstate}{\phi_1 * \phi_2}{\sstate''}$, and $V''(\pc(\sstate'')) = \ktrue$.

    \case $p(\multiple{e})$ -- $p \in \Predicate, \multiple{e \in \Expr}$:

      For each $e$, by lemma \ref{lem:pc-eval-progress}, $\spceval{\sstate}{e}{t}{\_}$ for some $t \in \SExpr$. Let $\sstate' = \sstate[\sheap = \sheap(\sstate); \pair{p}{\multiple{t}}]$, then by \refrule{SProducePredicate}, $\sproduce{\sstate}{p(\multiple{e})}{\sstate'}$. Finally, letting $V'$ be the corresponding valuation extending $V$, $V'(\pc(\sstate')) = V(\pc(\sstate)) = \ktrue$.

    \case $e \in \Expr$:

      By lemma \ref{lem:pc-eval-progress}, $\spceval{\sstate}{e}{t}{\_}$ for some $t \in \SExpr$. Let $V_1$ be the corresponding valuation and $\sstate' = \sstate[\pc = \pc(\sstate) \kand t]$. Then by \refrule{SProduceExpr} $\sproduce{\sstate}{e}{\sstate'}$. Let $V'$ be the corresponding valuation extending $V$, thus $V'$ extends $V_1$.

      Since $\assertion{\heap}{\perms}{\env}{e}$, $\eval{\heap}{\env}{e}{\ktrue}$ by \refrule{AssertValue} and thus $V'(t) = \ktrue$ by lemma \ref{lem:spceval-correspondence}. Finally, $V'(\sstate') = V'(\sstate) \wedge V'(t) = V(\sstate) \wedge \ktrue = \ktrue$.

    \case $\sif{e}{\phi_1}{\phi_2}$ -- $e \in \Expr, \phi_1, \phi_2 \in \Formula$:

      By lemma \ref{lem:pc-eval-progress}, $\spceval{\sstate}{e}{t}{\_}$ for some $t \in \SExpr$. Let $V_1$ be the valuation corresponding to this derivation.

      Then one of the following rules must apply to produce $\assertion{\heap}{\perms}{\env}{\sif{e}{\phi_1}{\phi_2}}$:

      \subcase \refrule{AssertIfA}:

        Then $\eval{\heap}{\env}{e}{\ktrue}$. Then $V_1(t) = \ktrue$ by lemma \ref{lem:spceval-correspondence} and therefore $V_1(\pc(\sstate) \kand t) = \ktrue$. Also, $\assertion{\heap}{\perms}{\env}{\phi_1}$ by \refrule{AssertIfA}.

        Then by induction, for some $\sstate'$, $\sproduce{\sstate[\pc = \pc(\sstate) \kand t]}{\phi_1}{\sstate'}$ with corresponding valuation $V'$ (extending $V_1$) where $V'(\sstate') = \ktrue$.

        Now by \refrule{SProduceIfA}, $\sproduce{\sstate}{\sif{e}{\phi_1}{\phi_2}}{\sstate'}$. By definition $V'$ is the corresponding valuation extending $V$, and as shown before, $V'(\pc(\sstate')) = \ktrue$.

      \subcase \refrule{AssertIfB}:

        Then $\eval{\heap}{\env}{e}{\kfalse}$. Then $V_1(t) = \kfalse$ by lemma \ref{lem:spceval-correspondence} and therefore $V_1(\pc(\sstate) \kand \kneg t) = \ktrue$. Also, $\assertion{\heap}{\perms}{\env}{\phi_2}$ by \refrule{AssertIfB}. 

        Then by induction, for some $\sstate'$, $\sproduce{\sstate[\pc = \pc(\sstate) \kand \kneg t]}{\phi_2}{\sstate'}$ with corresponding valuation $V'$ (extending $V_1$) where $V'(\sstate) = \ktrue$.

        Now by \refrule{SProduceIfB}, $\sproduce{\sstate}{\sif{e}{\phi_1}{\phi_2}}{\sstate'}$. By definition $V'$ is the corresponding valuation extending $V$, and as shown before, $V'(\pc(\sstate')) = \ktrue$.

    \case $\kacc(e.f)$ where $e \in \Expr, f \in \Field$

      By lemma \ref{lem:pc-eval-progress}, $\spceval{\sstate}{e}{t}{\_}$ for some $t$. Let $\sstate' = \sstate[\sheap = \sheap(\sstate); \triple{f}{t}{\ffresh}]$. Then, by \refrule{SProduceField}, $\sproduce{\sstate}{\kacc(e.f)}{\sstate'}$. Let $V'$ be the corresponding valuation extending $V$, then $V'(\pc(\sstate')) = V(\pc(\sstate)) = \ktrue$.

  \end{enumcases}
\end{proof}

\subsection{Consume}


\begin{definition}\label{def:consume-valuation}
  For a judgement $\sconsume{\sstate}{\sstate_E}{\gform}{\sstate'}{\scheck}{\sperms}$, given an initial valuation $V$ and heap $\heap$, the \textbf{corresponding valuation} is denoted
  $$V[\sconsume{\sstate}{\sstate_E}{\gform}{\sstate'}{\scheck}{\sperms} \mid \heap].$$
  This valuation is defined as follows, depending on the rule that proves the derivation. Values are referenced using the respective name from the rule definition.

  Note that the corresponding valuation always extends the initial valuation and is defined for all $\ffresh$ symbolic values in the judgement.
  \begin{itemize}
    \item \refrule{SConsumeImprecision}:
      \begin{align*}
        &V[\sconsume{\sstate}{\sstate_E}{\simprecise{\phi}}{\quintuple{\top}{\pc(\sstate')}{\senv(\sstate')}{\emptyset}{\emptyset}}{\scheck}{\sperms} \mid \heap] := \\
        &\quad V[\sconsume{\sstate}{\sstate_E}{\phi}{\sstate'}{\scheck}{\sperms} \mid \heap]
      \end{align*}
    \item \refrule{SConsumeValue}:
      $$V[\sconsume{\sstate}{\sstate_E}{e}{\sstate}{\scheck}{\emptyset} \mid \heap] := V[\spceval{\sstate_E}{e}{t}{\scheck}]$$
    \item \refrule{SConsumeValueImprecise}:
      $$V[\sconsume{\sstate}{\sstate_E}{e}{\sstate[\pc = \pc(\sstate) \kand t]}{\scheck; t}{\emptyset} \mid \heap] := V[\spceval{\sstate_E}{e}{t}{\scheck} \mid \heap]$$
    \item \refrule{SConsumeValueFailure}:
      $$V[\sconsume{\sstate}{\sstate_E}{e}{\sstate}{\set{\bot}}{\emptyset} \mid \heap] := V[\spceval{\sstate_E}{e}{t}{\scheck} \mid \heap]$$
    \item \refrule{SConsumePredicate}:
      $$V[\sconsume{\sstate}{\sstate_E}{p(\multiple{e})}{\sstate[\sheap = \sheap', \oheap = \emptyset]}{\_}{\_} \mid \heap] := V[\multiple{\spceval{\sstate_E}{e}{t}{\scheck}} \mid \heap]$$
    \item \refrule{SConsumePredicateImprecise}:
      $$V[\sconsume{\sstate}{\sstate_E}{p(\multiple{e})}{\sstate[\heap = \emptyset, \oheap = \emptyset]}{\_}{\_} \mid \heap] := V[\multiple{\spceval{\sstate_E}{e}{t}{\scheck}} \mid \heap]$$
    \item \refrule{SConsumePredicateFailure}:
      $$V[\sconsume{\sstate}{\sstate_E}{p(\multiple{e})}{\sstate}{\set{\bot}}{\_} \mid \heap] := V[\multiple{\spceval{\sstate_E}{e}{t}{\scheck}} \mid \heap]$$
    \item \refrule{SConsumeAcc}, \refrule{SConsumeAccOptimistic}, \refrule{SConsumeAccImprecise}, \\ \refrule{SConsumeAccFailure}:
      \begin{align*}
        &V[\sconsume{\sstate}{\sstate_E}{\kacc(e.f)}{\sstate[\sheap = \sheap', \oheap = \oheap']}{\_}{\_} \mid \heap] := \\
        &\quad V[\spceval{\sstate_E}{e}{t_e}{\scheck} \mid \heap]
      \end{align*}
    \item \refrule{SConsumeConjunction}:
      \begin{align*}
        &V[\sconsume{\sstate}{\sstate_E}{\phi_1 * \phi_2}{\sstate''}{\_}{\_} \mid \heap] := \\
        &\quad V[\sconsume{\sstate}{\sstate_E}{\phi_1}{\sstate'}{\scheck_1}{\sperms_1} \mid \heap] \\
        &\hspace{1.85em} [\sconsume{\sstate'}{\sstate_E[\pc = \pc(\sstate')]}{\phi_2}{\sstate''}{\scheck_2}{\sperms_2} \mid \heap]
      \end{align*}
    \item \refrule{SConsumeConditionalA}:
      \begin{align*}
        &V[\sconsume{\sstate}{\sstate_E}{\sif{e}{\phi_1}{\phi_2}}{\sstate'}{\scheck \cup \scheck'}{\_} \mid \heap] := \\
        &\quad V[\sconsume{\sstate[\pc = \pc']}{\sstate_E[\pc = \pc']}{\phi_1}{\sstate'}{\scheck'}{\sperms} \mid \heap]
      \end{align*}
    \item \refrule{SConsumeConditionalB}:
      \begin{align*}
        &V[\sconsume{\sstate}{\sstate_E}{\sif{e}{\phi_1}{\phi_2}}{\sstate'}{\scheck \cup \scheck'}{\_} \mid \heap] := \\
        &\quad V[\sconsume{\sstate[\pc = \pc']}{\sstate_E[\pc = \pc']}{\phi_2}{\sstate'}{\scheck'}{\sperms} \mid \heap]
      \end{align*}
  \end{itemize}
\end{definition}

\begin{definition}\label{def:cons-valuation}
  For a judgement $\scons{\sstate}{\gform}{\sstate'}{\scheck}$, given an initial valuation $V$ and heap $\heap$, the \textbf{corresponding valuation} is denoted
  $$V[\scons{\sstate}{\gform}{\sstate'}{\scheck} \mid \heap].$$

  The is defined by
  $$V[\scons{\sstate}{\gform}{\sstate'}{\scheck} \mid \heap] := V[\sconsume{\sstate}{\sstate}{\gform}{\sstate'}{\scheck}{\_} \mid \heap]$$
  where $\sconsume{\sstate}{\sstate}{\gform}{\sstate'}{\scheck}{\_}$ is the judgement used when applying \textsc{SConsume} to derive $\scons{\sstate}{\gform}{\sstate'}{\scheck}$.
\end{definition}

\begin{lemma}[Consume results in more specific path condition (long form)]\label{lem:consume-subpath}
  If $\sconsume{\sstate}{\sstate_E}{\gform}{\sstate'}{\_}{\_}$, then $\pc(\sstate') \implies \pc(\sstate)$.
\end{lemma}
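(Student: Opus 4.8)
The plan is to prove this by structural induction on the derivation of $\sconsume{\sstate}{\sstate_E}{\gform}{\sstate'}{\_}{\_}$, using only the reflexivity and transitivity of $\implies$ (which follow directly from Definition~\ref{def:implication}) together with the trivial fact that $t_1 \kand t_2 \implies t_1$. The key observation is that \emph{every} consume rule modifies the path condition of its first-argument state in one of exactly three ways: it leaves it untouched, it conjoins a single new conjunct to it, or it passes it through a strict subderivation; and the conclusion of the lemma mentions only $\sstate$ (the first argument) and $\sstate'$ (the output), so the parallel modifications made to the evaluation state $\sstate_E$ are irrelevant.

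First I would dispatch the non-recursive rules whose output state carries the path condition of the input verbatim: \refrule{SConsumeValue}, \refrule{SConsumeValueFailure}, the three predicate rules \refrule{SConsumePredicate}, \refrule{SConsumePredicateImprecise}, \refrule{SConsumePredicateFailure}, and the four accessibility rules \refrule{SConsumeAcc}, \refrule{SConsumeAccOptimistic}, \refrule{SConsumeAccImprecise}, \refrule{SConsumeAccFailure}. In each of these the output differs from $\sstate$ only in its imprecise flag or in its precise/optimistic heaps, so $\pc(\sstate') = \pc(\sstate)$ and the conclusion holds by reflexivity of $\implies$. The rule \refrule{SConsumeValueImprecise} is the only slightly different base case: there $\pc(\sstate') = \pc(\sstate) \kand t$, and $\pc(\sstate) \kand t \implies \pc(\sstate)$.

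The remaining cases are the inductive ones. For \refrule{SConsumeImprecision} the premise consumes $\phi$ from $\sstate$ (with an imprecise evaluation state), yielding some $\sstate''$, and the output state's path condition is, by the shape of the rule, exactly $\pc(\sstate'')$; the induction hypothesis gives $\pc(\sstate'') \implies \pc(\sstate)$, which is the claim. For \refrule{SConsumeConjunction} and \refrule{SConsumeConjunctionImprecise} the premises are $\sconsume{\sstate}{\sstate_E}{\phi_1}{\sstate'}{\_}{\_}$ and $\sconsume{\sstate'}{\sstate_E[\pc = \pc(\sstate')]}{\phi_2}{\sstate''}{\_}{\_}$; applying the induction hypothesis to each gives $\pc(\sstate'') \implies \pc(\sstate')$ and $\pc(\sstate') \implies \pc(\sstate)$, and transitivity of $\implies$ closes the case. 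For \refrule{SConsumeConditionalA} (and symmetrically \refrule{SConsumeConditionalB}) the premise consumes $\phi_1$ from $\sstate[\pc = \pc']$ with $\pc' = \pc(\sstate) \kand t$, so the induction hypothesis yields $\pc(\sstate') \implies \pc' = \pc(\sstate) \kand t \implies \pc(\sstate)$.

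I do not expect a genuine obstacle. The only things requiring a moment's care are (i) noting that the side premises of the form $\spceval{\sstate_E}{e}{t}{\scheck}$ appearing in the conjunction, conditional, value, and accessibility rules do not alter the symbolic state at all, so they cannot affect the path condition; and (ii) being precise, in the conjunction and conditional cases, that the induction hypothesis for the second/inner subderivation is applied to a state whose path condition is $\pc(\sstate')$ or $\pc(\sstate)\kand t$ respectively, and then chaining the implications in the right order. Once the induction is set up on the consume derivation, the argument is entirely mechanical.
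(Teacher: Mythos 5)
Your proof is correct and follows essentially the same route as the paper's: induction on the consume derivation, observing that each rule either leaves $\pc$ unchanged, conjoins one conjunct, or threads it through a subderivation, then closing the recursive cases (imprecision, conjunction, conditionals) with the induction hypothesis and transitivity of $\implies$. The case grouping and the handling of \refrule{SConsumeValueImprecise} and the conditional rules match the paper's argument exactly.
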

\begin{proof}
  By induction on $\sconsume{\sstate}{\sstate_E}{\gform}{\sstate'}{\_}{\_}$:

  \begin{enumcases}
    \case \refrule{SConsumeImprecision} -- $\sconsume{\sstate}{\sstate_E}{\simprecise{\phi}}{\quintuple{\top}{\pc(\sstate')}{\senv(\sstate')}{\emptyset}{\emptyset}}{\scheck}{\sperms}$:
      By \refrule{SConsumeImprecision} $\sconsume{\sstate}{\sstate_E[\imp = \top]}{\phi}{\sstate'}{\scheck}{\sperms}$. Then $\pc(\quintuple{\top}{\pc(\sstate')}{\senv(\sstate')}{\emptyset}{\emptyset}) = \pc(\sstate')$ and $\pc(\sstate') \implies \pc(\sstate)$ by induction.

    \case \refrule{SConsumeValue}, \refrule{SConsumeValueFailure}, \refrule{SConsumePredicateFailure}, \refrule{SConsumeAccFailure} -- $\sconsume{\sstate}{\sstate_E}{\_}{\sstate}{\_}{\_}$:
      Trivially $\pc(\sstate) \implies \pc(\sstate)$.

    \case \refrule{SConsumeValueImprecise} -- $\sconsume{\sstate}{\sstate_E}{e}{\sstate[\pc = \pc(\sstate) \kand t]}{\scheck; t}{\emptyset}$:
      $\pc(\sstate[\pc = \pc(\sstate) \kand t]) = \pc(\sstate) \kand t \implies \pc(\sstate)$.

    \case \refrule{SConsumePredicate}, \refrule{SConsumePredicateImprecise}, \refrule{SConsumeAcc}, \refrule{SConsumeAccOptimistic}, \refrule{SConsumeAccImprecise} -- $\sconsume{\sstate}{\sstate_E}{\_}{\sstate'}{\_}{\_}$:
      In each respective rule $\pc(\sstate') = \pc(\sstate)$, therefore $\pc(\sstate') \implies \pc(\sstate)$.

    \case \refrule{SConsumeConjunction}, \refrule{SConsumeConjunctionImprecise} -- $\sconsume{\sstate}{\sstate_E}{\phi_1 * \phi_2}{\sstate''}{\_}{\sperms_1 \cup \sperms_2}$:
      In each respective rule $\sconsume{\sstate}{\sstate_E}{\phi_1}{\sstate'}{\scheck_1}{\sperms_1}$ and $\sconsume{\sstate'}{\sstate_E[\pc = \pc(\sstate')]}{\phi_2}{\sstate''}$, therefore by induction $\pc(\sstate'') \implies \pc(\sstate') \implies \pc(\sstate)$.

    \case \refrule{SConsumeConditionalA}, \refrule{SConsumeConditionalB} -- $\sconsume{\sstate}{\sstate_E}{\sif{e}{\phi_1}{\phi_2}}{\sstate'}{\scheck \cup \scheck'}{\sperms}$:
      In each respective rule $\sconsume{\sstate[\pc = \pc(\sstate) \kand t]}{\sstate_E}{\phi'}{\sstate'}{\_}{\_}$ for some $t, \phi'$. Therefore by induction $\pc(\sstate') \implies \pc(\sstate) \kand t' \implies \pc(\sstate)$.

  \end{enumcases}
\end{proof}

\begin{lemma}[Consume results in more specific path condition (short form)]\label{lem:cons-subpath}
  If $\scons{\sstate}{\gform}{\sstate'}{\_}$, then $\pc(\sstate') \implies \pc(\sstate)$.
\end{lemma}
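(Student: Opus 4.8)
The plan is to derive this immediately from the long-form version, lemma \ref{lem:consume-subpath}, together with the definition of the short-form judgement. First I would observe that $\scons{\sstate}{\gform}{\sstate'}{\scheck}$ has exactly one introduction rule, \refrule{SConsume}, whose single premise is $\sconsume{\sstate}{\sstate}{\gform}{\sstate'}{\scheck}{\_}$ (the reference state $\sstate_E$ is instantiated to the input state $\sstate$ itself). So, given a derivation of $\scons{\sstate}{\gform}{\sstate'}{\_}$, I would invert it to obtain a derivation of $\sconsume{\sstate}{\sstate}{\gform}{\sstate'}{\_}{\_}$ with the same initial state $\sstate$ and result state $\sstate'$. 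Applying lemma \ref{lem:consume-subpath} to this derivation (with $\sstate_E = \sstate$) yields $\pc(\sstate') \implies \pc(\sstate)$, which is exactly the claim.

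There is essentially no obstacle here: all of the substantive reasoning — the induction over the consume rules, including the cases for imprecise formulas (\refrule{SConsumeImprecision}), expressions and their optimistic/failure variants, predicate instances, accessibility predicates, separating conjunctions, and logical conditionals — has already been carried out in lemma \ref{lem:consume-subpath}. The only point worth checking explicitly is that \refrule{SConsume} supplies $\sstate$ for the $\sstate_E$ position, so the hypotheses of the long-form lemma are met verbatim and nothing about the reference state needs to be tracked separately. Hence the proof is a one-line corollary, and I would present it as such.
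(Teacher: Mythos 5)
Your proposal is correct and matches the paper's own proof exactly: the paper also inverts \refrule{SConsume} to obtain $\sconsume{\sstate}{\sstate}{\gform}{\sstate'}{\_}{\_}$ and then applies lemma \ref{lem:consume-subpath}. Nothing further is needed.
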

\begin{proof}
  By \refrule{SConsume} $\sconsume{\sstate}{\sstate}{\gform}{\sstate'}{\_}{\_}$, thus by lemma \ref{lem:consume-subpath}, $\pc(\sstate') \implies \pc(\sstate)$.
\end{proof}

\begin{lemma}[Soundness of $\fremf$ for precise heaps and imprecise states]\label{lem:sheap-remf-imp}
  If $\simstate{V}{\sstate}{\heap}{\perms}{\env}$ and $\imp(\sstate)$ then $\simheap{V}{\fremf(\sheap(\sstate), \sstate, t, f)}{\heap}{\perms \setminus \set{(V(t), f)}}$ for any $t$, $f$.
\end{lemma}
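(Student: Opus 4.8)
The plan is to verify directly the four conjuncts of \eqref{eq:sheap-correspondence} for $\simheap{V}{\sheap'}{\heap}{\perms'}$, where $\sheap' = \fremf(\sheap(\sstate), \sstate, t, f)$ and $\perms' = \perms \setminus \set{\pair{V(t)}{f}}$. First I would record two facts: by the definition of $\fremf$ the set $\sheap'$ consists of field chunks only (every predicate chunk is discarded) and $\sheap' \subseteq \sheap(\sstate)$; and from $\simstate{V}{\sstate}{\heap}{\perms}{\env}$ we have $\simheap{V}{\sheap(\sstate)}{\heap}{\perms}$ together with $V(\pc(\sstate)) = \ktrue$. Three of the conjuncts are then immediate. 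The value condition $\universal{\triple{f'}{t'}{t''} \in \sheap'}{\heap(V(t'), f') = V(t'')}$ and the disjointness condition $\universal{h_1, h_2 \in \sheap'}{h_1 \ne h_2 \implies \vfoot{V}{\heap}{h_1} \cap \vfoot{V}{\heap}{h_2} = \emptyset}$ hold because they already hold for all (pairs of) chunks of the superset $\sheap(\sstate)$, and neither depends on $\perms$; the predicate-assertion condition is vacuous because $\sheap'$ contains no predicate chunks.

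The one substantive step is the ownership condition $\universal{\triple{f'}{t'}{t''} \in \sheap'}{\pair{V(t')}{f'} \in \perms'}$. I would fix such a chunk; since $\triple{f'}{t'}{t''} \in \sheap(\sstate)$ and $\simheap{V}{\sheap(\sstate)}{\heap}{\perms}$, we already have $\pair{V(t')}{f'} \in \perms$, so it suffices to show $\pair{V(t')}{f'} \neq \pair{V(t)}{f}$. Suppose otherwise, i.e.\ $f' = f$ and $V(t') = V(t)$. Because $\triple{f'}{t'}{t''} \in \sheap' = \fremf(\sheap(\sstate), \sstate, t, f)$, we have $\neg \falias(\sstate, t, f, t', f')$, and since $\imp(\sstate)$ the definition of $\falias$ in the imprecise case reduces (using $f = f'$) to $\neg \fsat(\pc(\sstate) \kand t \keq t')$. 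But $V(\pc(\sstate)) = \ktrue$ and $V(t) = V(t')$ (hence $V(t \keq t') = \ktrue$) make $V$ a valuation witnessing satisfiability of $\pc(\sstate) \kand t \keq t'$, a contradiction. Therefore $\pair{V(t')}{f'} \in \perms'$, and all four conjuncts hold.

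I do not expect a genuine obstacle; the proof is a short case analysis, and its only delicate point is bookkeeping about where $V$ is defined: $V(t)$ must be defined (this is already implicit in the statement, where $\set{\pair{V(t)}{f}}$ appears), $V$ is defined on every symbolic value of $\pc(\sstate)$ (since $V(\pc(\sstate)) = \ktrue$), and on $t'$ (since $\triple{f'}{t'}{t''} \in \sheap(\sstate)$), which together ensure $V$ really is a legitimate witness for $\fsat(\pc(\sstate) \kand t \keq t')$ in the ownership step above.
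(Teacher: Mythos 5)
Your proof is correct and follows essentially the same route as the paper's: both reduce the claim to the ownership conjunct (the other three conjuncts being inherited from $\sheap(\sstate)$ or vacuous for predicate chunks), and both derive $\pair{V(t')}{f'} \ne \pair{V(t)}{f}$ from $\neg\falias(\sstate,t,f,t',f')$ by observing that $V$ itself, with $V(\pc(\sstate)) = \ktrue$, would witness $\fsat(\pc(\sstate) \kand t \keq t')$. The only difference is presentational — you argue the $\fsat$ step by contradiction where the paper unfolds $\neg\fsat$ into a disjunction directly — and your remark about where $V$ must be defined is a reasonable piece of bookkeeping the paper leaves implicit.
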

\begin{proof}
  Let $\sheap' = \fremf(\sheap(\sstate), \sstate, t, f)$ and $\perms' = \perms \setminus \set{(V(t), f)}$. Then by definition $\sheap' \subseteq \sheap(\sstate)$, therefore $\simheap{V}{\sheap'}{\heap}{\perms}$. In addition, $\sheap'$ contains no predicate values. Thus by \eqref{eq:sheap-correspondence} it suffices to show that $\universal{\triple{f'}{t'}{t''} \in \sheap'}{\pair{V(t')}{f'} \in \perms'}$.

  Let $\triple{f'}{t'}{t''}$ be some element of $\sheap'$. Then by definition $\neg \falias(\sstate, t, f, t', f')$; then by definition $(f \ne f') \vee \neg\fsat(\pc(\sstate) \kand [t \keq t'])$.

  Therefore $(f \ne f') \vee (V(\pc(\sstate) \kand t \keq t') = \kfalse)$, and thus $(f \ne f') \vee (V(\pc(\sstate)) = \kfalse) \vee (V(t) \ne V(t'))$. But $V(\pc(\sstate)) = \ktrue$, thus $(f \ne f') \vee (V(t) \ne V(t'))$. Therefore $\pair{V(t)}{f} \ne \pair{V(t')}{f'}$.
  
  Also $\pair{V(t')}{f'} \in \perms$ since $\simheap{V}{\sheap'}{\heap}{\perms}$. Therefore $\pair{V(t')}{f'} \in \perms \setminus \set{(V(t), f)} = \perms'$.

  Therefore $\simheap{V}{\sheap'}{\heap}{\perms'}$.
\end{proof}

\begin{lemma}\label{lem:consume-unchanged}
  If $\sconsume{\sstate}{\sstate_E}{\gform}{\sstate'}{\scheck}{\sperms}$ then $\senv(\sstate') = \senv(\sstate)$.
\end{lemma}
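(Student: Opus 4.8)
The statement is a structural fact about the consume judgement, so the plan is a straightforward induction on the derivation of $\sconsume{\sstate}{\sstate_E}{\gform}{\sstate'}{\scheck}{\sperms}$, following the rules of \S\ref{sec:consume-rules}. I would first dispatch the rules in which the resulting state is syntactically $\sstate$ itself, possibly with a modified path condition, precise heap, or optimistic heap, but with the symbolic environment untouched: \refrule{SConsumeValue}, \refrule{SConsumeValueImprecise}, \refrule{SConsumeValueFailure}, \refrule{SConsumePredicate}, \refrule{SConsumePredicateImprecise}, \refrule{SConsumePredicateFailure}, \refrule{SConsumeAcc}, \refrule{SConsumeAccOptimistic}, \refrule{SConsumeAccImprecise}, and \refrule{SConsumeAccFailure}. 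In each of these the conclusion's output state has the form $\sstate$ or $\sstate[\pc = \dots]$ or $\sstate[\sheap = \dots, \oheap = \dots]$, so $\senv(\sstate') = \senv(\sstate)$ by inspection. Here I would also note that the only auxiliary judgement invoked by these rules is deterministic evaluation $\spceval{\sstate_E}{e}{t}{\scheck}$, which by construction does not alter any symbolic state (in particular no environment), so it cannot interfere; this is analogous to lemmas \ref{lem:eval-unchanged} and \ref{lem:produce-unchanged}.

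For the rules with recursive consume premises I would apply the induction hypothesis and chain the resulting equalities. For \refrule{SConsumeConjunction} and \refrule{SConsumeConjunctionImprecise}, the premises give $\sconsume{\sstate}{\sstate_E}{\phi_1}{\sstate'}{\scheck_1}{\sperms_1}$ and $\sconsume{\sstate'}{\sstate_E[\pc = \pc(\sstate')]}{\phi_2}{\sstate''}{\scheck_2}{\sperms_2}$, so by induction $\senv(\sstate') = \senv(\sstate)$ and $\senv(\sstate'') = \senv(\sstate')$, hence $\senv(\sstate'') = \senv(\sstate)$. For \refrule{SConsumeConditionalA} and \refrule{SConsumeConditionalB}, the recursive premise is on the state $\sstate[\pc = \pc']$, which has the same environment as $\sstate$, so the induction hypothesis gives $\senv(\sstate') = \senv(\sstate[\pc = \pc']) = \senv(\sstate)$.

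The one rule worth a second look is \refrule{SConsumeImprecision}, where the output state is not obtained by a pointwise update of $\sstate$ but is assembled freshly from components of the recursive result $\sstate'$ (its imprecise flag is forced to $\top$, its heaps emptied, and its path condition and environment copied from $\sstate'$). Here I would invoke the induction hypothesis on the premise $\sconsume{\sstate}{\sstate_E[\imp = \top]}{\phi}{\sstate'}{\scheck}{\sperms}$ to get $\senv(\sstate') = \senv(\sstate)$, and then observe that the environment component of the assembled output state is exactly $\senv(\sstate')$, which therefore equals $\senv(\sstate)$. I do not expect any genuine obstacle: the lemma is a bookkeeping invariant, and the only thing to be careful about is making sure the environment slot is read off correctly in \refrule{SConsumeImprecision} and that the chaining of equalities through the nested recursive calls in the conjunction cases is done in the right order.
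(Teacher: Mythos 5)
Your proof is correct and takes exactly the approach the paper uses: the paper's entire proof is ``Trivial by induction on $\sconsume{\sstate}{\sstate_E}{\gform}{\sstate'}{\scheck}{\sperms}$,'' and your case analysis (including the careful reading of the freshly assembled output state in \refrule{SConsumeImprecision} and the chaining of equalities in the conjunction cases) is just that induction spelled out in full.
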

\begin{proof}
  Trivial by induction on $\sconsume{\sstate}{\sstate_E}{\gform}{\sstate'}{\scheck}{\sperms}$.
\end{proof}

\begin{lemma}\label{lem:cons-unchanged}
  If $\scons{\sstate}{\gform}{\sstate'}{\scheck}$ then $\senv(\sstate') = \senv(\sstate)$.
\end{lemma}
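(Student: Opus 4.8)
The plan is to reduce the short-form consume judgement to the long-form one and then quote the immediately preceding lemma. First I would observe that the only inference rule whose conclusion has the shape $\scons{\sstate}{\gform}{\sstate'}{\scheck}$ is \refrule{SConsume}, so any derivation of $\scons{\sstate}{\gform}{\sstate'}{\scheck}$ must end with an application of \refrule{SConsume}, whose premise is $\sconsume{\sstate}{\sstate}{\gform}{\sstate'}{\scheck}{\sperms}$ for some $\sperms$. (Here the ``evaluation state'' $\sstate_E$ is instantiated to $\sstate$ itself, and the permission set $\sperms$ is discarded, but neither of these facts matters for the conclusion.) From that premise I would apply Lemma \ref{lem:consume-unchanged}, which gives exactly $\senv(\sstate') = \senv(\sstate)$, completing the argument.

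There is essentially no obstacle here: the statement is an immediate corollary of Lemma \ref{lem:consume-unchanged}, and the ``hard part'' — the induction on the derivation of $\sconsume{\sstate}{\sstate_E}{\gform}{\sstate'}{\scheck}{\sperms}$ showing that no consume rule touches the symbolic store component — has already been carried out in the proof of that lemma. The only thing to be careful about is that the invocation is legitimate, i.e. that the premise extracted from \refrule{SConsume} is indeed an instance of the hypothesis of Lemma \ref{lem:consume-unchanged}; since that lemma is stated for an arbitrary $\sstate_E$, instantiating $\sstate_E := \sstate$ poses no issue. Thus the whole proof is a one-line inversion followed by a citation, and I would present it as such.
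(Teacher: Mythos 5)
Your proposal is correct and matches the paper's own proof exactly: invert \refrule{SConsume} to obtain $\sconsume{\sstate}{\sstate}{\gform}{\sstate'}{\scheck}{\_}$ and then apply Lemma \ref{lem:consume-unchanged}. Nothing further is needed.
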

\begin{proof}
  By \refrule{SConsume} $\sconsume{\sstate}{\sstate}{\gform}{\sstate}{\scheck}{\_}$, thus by lemma \ref{lem:consume-unchanged} $\senv(\sstate') = \senv(\sstate)$.
\end{proof}

\begin{lemma}[Soundness of $\fremfp$ for precise heaps]\label{lem:sheap-remfp-prec}
  If $\simstate{V}{\sstate}{\heap}{\perms}{\env}$ and $\triple{f}{t}{\_} \in \sheap(\sstate)$, then $\simheap{V}{\fremfp(\sheap(\sstate), \sstate, t, f)}{\heap}{\perms \setminus \set{\pair{V(t)}{f}}}$ for any $t$, $f$.
\end{lemma}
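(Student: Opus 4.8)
The plan is to reduce the statement to lemma \ref{lem:disjoint-sim-heap-subset}, which says that if a precise heap is modeled by $\heap$ and $\perms$ and every chunk of it has footprint disjoint from a permission set, then the same heap is modeled by $\heap$ and $\perms$ with that set removed. I will apply it instantiating the lemma's heap with $\sheap' := \fremfp(\sheap(\sstate), \sstate, t, f)$ and the lemma's removed set with $\set{\pair{V(t)}{f}}$. So I must establish (i) $\simheap{V}{\sheap'}{\heap}{\perms}$ and (ii) $\universal{h \in \sheap'}{\vfoot{V}{\heap}{h} \cap \set{\pair{V(t)}{f}} = \emptyset}$. For (i): by the definitions of $\fremf$ and $\fremfp$ we have $\sheap' \subseteq \sheap(\sstate)$, and each of the four conjuncts of \eqref{eq:sheap-correspondence} is universally quantified over chunks (resp.\ pairs of chunks) of the heap, hence inherited by any subheap; since $\simstate{V}{\sstate}{\heap}{\perms}{\env}$ gives $\simheap{V}{\sheap(\sstate)}{\heap}{\perms}$, we get $\simheap{V}{\sheap'}{\heap}{\perms}$.

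For (ii), fix a value $t_0$ with $\triple{f}{t}{t_0} \in \sheap(\sstate)$ (it exists by hypothesis), and recall $\vfoot{V}{\heap}{\triple{f}{t}{t_0}} = \set{\pair{V(t)}{f}}$. The crux is $\triple{f}{t}{t_0} \notin \sheap'$: it is a field chunk, so not among the predicate chunks that $\fremfp$ adds back, hence it is kept only if $\neg\falias(\sstate, t, f, t, f)$ — but $\falias(\sstate, t, f, t, f)$ holds. In the precise branch it is $f = f \wedge (\pc(\sstate) \implies t \keq t)$, where $\pc(\sstate) \implies t \keq t$ holds trivially by definition \ref{def:implication}; in the imprecise branch it is $f = f \wedge \fsat(\pc(\sstate) \kand t \keq t)$, and $V$ witnesses this since $V(\pc(\sstate)) = \ktrue$ (part of $\simstate{V}{\sstate}{\heap}{\perms}{\env}$) and $V(t \keq t) = \ktrue$. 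Consequently every $h \in \sheap'$ satisfies $h \in \sheap(\sstate)$ and $h \ne \triple{f}{t}{t_0}$, so the disjointness conjunct of $\simheap{V}{\sheap(\sstate)}{\heap}{\perms}$ gives $\vfoot{V}{\heap}{h} \cap \vfoot{V}{\heap}{\triple{f}{t}{t_0}} = \vfoot{V}{\heap}{h} \cap \set{\pair{V(t)}{f}} = \emptyset$, which is (ii). Lemma \ref{lem:disjoint-sim-heap-subset} then yields $\simheap{V}{\sheap'}{\heap}{\perms \setminus \set{\pair{V(t)}{f}}}$, as required.

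The main obstacle is that one cannot imitate the per-chunk reasoning used in lemma \ref{lem:sheap-remf-imp}: there, for the \emph{precise} heap of an \emph{imprecise} state, $\falias$ is defined via $\fsat$, so $\neg\falias(\sstate, t, f, t', f')$ for a retained chunk already yields $V(t) \ne V(t')$; here $\fremfp$ must also handle precise states, where $\falias$ is defined via $\implies$, and $\pc(\sstate) \notimplies t \keq t'$ does \emph{not} imply $V(t) \ne V(t')$. What rescues the argument is the disjointness invariant carried by $\simheap$ together with the hypothesis that $\triple{f}{t}{\_}$ genuinely occurs in $\sheap(\sstate)$: this rules out any other chunk of $\sheap(\sstate)$ having footprint $\set{\pair{V(t)}{f}}$, so it is enough to observe that $\triple{f}{t}{t_0}$ itself is removed. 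Everything else is routine bookkeeping against the definitions of $\fremfp$, $\falias$, and \eqref{eq:sheap-correspondence}.
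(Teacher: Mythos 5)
Your proof is correct, but it is organized differently from the paper's. The paper verifies the four conjuncts of \eqref{eq:sheap-correspondence} for $\fremfp(\sheap(\sstate),\sstate,t,f)$ directly: for each retained field chunk $\triple{f'}{t'}{\_}$ it case-splits on $\imp(\sstate)$ (in the imprecise branch, $\neg\fsat(\pc(\sstate)\kand t\keq t')$ together with $V(\pc(\sstate))=\ktrue$ yields $\pair{V(t)}{f}\ne\pair{V(t')}{f'}$ semantically; in the precise branch it derives syntactic distinctness of the chunks and then, exactly as you do, falls back on the disjointness conjunct of $\simheap{V}{\sheap(\sstate)}{\heap}{\perms}$ together with $\triple{f}{t}{\_}\in\sheap(\sstate)$), and it handles the retained predicate chunks by a separate argument through lemmas \ref{lem:efoot-subset-spec} and \ref{lem:assert-efoot-subset}. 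You instead (a) observe that $\falias(\sstate,t,f,t,f)$ holds by reflexivity of $\keq$ in both branches, so the witnessing chunk $\triple{f}{t}{t_0}$ is genuinely removed, (b) apply the separation invariant uniformly to conclude every surviving chunk's footprint misses $\set{\pair{V(t)}{f}}$, and (c) discharge everything at once via lemma \ref{lem:disjoint-sim-heap-subset}, which the paper proves but does not invoke here. Your route buys uniformity: no case split on $\imp(\sstate)$ for the retained chunks and no separate predicate-chunk argument, since the disjointness conjunct already quantifies over all pairs of heap chunks. It also cleanly avoids the paper's unjustified (though ultimately unused) intermediate claim that $\pc(\sstate)\notimplies t\keq t'$ gives $V(t)\ne V(t')$ in the precise branch, a pitfall you correctly diagnose. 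The only thing the paper's more manual argument buys is that, for imprecise states, footprint-disjointness of retained field chunks is established without appeal to the separation invariant; for this lemma that generality is not needed.
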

\begin{proof}
  Let $\sheap' = \fremfp(\sheap(\sstate), \sstate, t, f)$, let $\sheap'' = \fremf(\sheap(\sstate), \sstate, t, f)$, and let $\perms' = \perms \setminus \set{\pair{V(t)}{f}}$. By definition of $\fremfp$ and $\fremf$, $\sheap'' \subseteq \sheap' \subseteq \sheap(\sstate)$.

  Let $\triple{f'}{t'}{\_}$ be an arbitrary field chunk in $\sheap'$. Thus $\neg \falias(\sstate, t, f, t', f')$ by definition of $\fremf$. Then $\pair{V(t)}{f} \ne \pair{V(t')}{f'}$:
  \begin{itemize}
    \item If $\imp(\sstate)$: Then $(f \ne f') \vee \neg \fsat(\pc(\sstate) \kand t \keq t')$, therefore $(f \ne f') \vee (V(\pc(\sstate)) \ne \ktrue) \vee (V(t) \ne V(t'))$. But $V(\pc(\sstate)) = \ktrue$, thus $(f \ne f') \vee (V(t) \ne V(t'))$. Therefore $\pair{V(t)}{f} \ne \pair{V(t')}{f'}$.
    \item Otherwise $\neg\imp(\sstate)$: Then $(f \ne f') \vee (\pc(\sstate) \notimplies t \keq t')$. Thus $f \ne f'$ or $V(t) \neq V(t')$. Thus $f \ne f'$ or $t \neq t'$ (using syntactic equivalence). But now $\set{\pair{V(t)}{f}} \cap \set{\pair{V(t')}{f}} = \vfoot{V}{\heap}{\triple{f}{t}{\_}} \cap \vfoot{V}{\heap}{\triple{f'}{t'}{\_}} = \emptyset$ since $\simheap{V}{\sheap(\sstate)}{\heap}{\perms}$, $\triple{f}{t}{\_}$ and $\triple{f'}{t'}{\_} \in \sheap(\sstate)$, and $\triple{f}{t}{\_} \ne \triple{f}{t'}{\_}$. Therefore $\pair{V(t)}{f} \ne \pair{V(t')}{f'}$.
  \end{itemize}
  Therefore $\pair{V(t')}{f'} \in \perms'$ since $\triple{f'}{t'}{\_} \in \sheap(\sstate)$.

  Also, let $\pair{p'}{\multiple{t'}}$ be an arbitrary predicate chunk in $\sheap'$. By definition of $\fremfp$, $\pair{p'}{\multiple{t'}} \in \sheap(\sstate)$; thus $\assertion{\heap}{\perms}{[\multiple{x \mapsto V(t')}]}{\fpred(p')}$.
  
  But also $\vfoot{V}{\heap}{\triple{f}{t}{\_}} \cap \vfoot{V}{\heap}{\pair{p'}{\multiple{t'}}} = \emptyset$ since $\triple{f}{t}{\_}$ and $\pair{p'}{\multiple{t'}} \in \sheap(\sstate)$ and $\triple{f}{t}{\_} \ne \pair{p'}{\multiple{t'}}$. Therefore $\efoot{\heap}{[\multiple{x \mapsto V(t')}]}{\fpred(p)} \subseteq \perms'$ by lemma \ref{lem:efoot-subset-spec} and since \\
  $\set{\pair{V(t)}{f}} \cap \efoot{\heap}{[\multiple{x \mapsto V(t')}]}{\fpred(p)} = \vfoot{V}{\heap}{\triple{f}{t}{\_}} \cap \vfoot{V}{\heap}{\pair{p'}{\multiple{t'}}} = \emptyset$. Thus \\
  $\assertion{\heap}{\perms'}{[\multiple{x \mapsto V(t')}]}{\fpred(p)}$ by lemma \ref{lem:assert-efoot-subset}.

  Finally, since $\sheap' \subseteq \sheap(\sstate)$, the remaining conditions of \eqref{eq:sheap-correspondence} are satisfied. Therefore \\
  $\simheap{V}{\sheap'}{\heap}{\perms'}$.
\end{proof}

\begin{lemma}[Soundness of $\fremf$ for optimistic heaps]\label{lem:oheap-remf}
  If $\sstate$ is well-formed and $\simstate{V}{\sstate}{\heap}{\perms}{\env}$ then $\simheap{V}{\fremf(\oheap(\sstate), \sstate, t, f)}{\heap}{\perms \setminus \set{\pair{V(t)}{f}}}$ for any $t$, $f$.
\end{lemma}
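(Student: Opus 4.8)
The plan is to prove Lemma~\ref{lem:oheap-remf} --- soundness of $\fremf$ for optimistic heaps --- by reducing it, as much as possible, to the properties of correspondence for optimistic heaps recorded in \eqref{eq:oheap-correspondence}, which are weaker than those for precise heaps. Recall that $\fremf(\oheap(\sstate), \sstate, t, f) = \set{\triple{f'}{t'}{t''} \in \oheap(\sstate) : \neg\falias(\sstate, t, f, t', f')}$, so the result heap $\oheap' := \fremf(\oheap(\sstate), \sstate, t, f)$ is a subset of $\oheap(\sstate)$ and, since $\oheap(\sstate)$ contains only field chunks, so does $\oheap'$. Set $\perms' := \perms \setminus \set{\pair{V(t)}{f}}$. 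By \eqref{eq:oheap-correspondence}, to show $\simheap{V}{\oheap'}{\heap}{\perms'}$ it suffices to check two things for every $\triple{f'}{t'}{t''} \in \oheap'$: first, $\heap(V(t'), f') = V(t'')$; and second, $\pair{V(t')}{f'} \in \perms'$. There is no disjointness obligation and no predicate-chunk obligation here, which is exactly why the optimistic case is simpler than Lemma~\ref{lem:sheap-remfp-prec}.

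The first obligation is immediate: $\triple{f'}{t'}{t''} \in \oheap' \subseteq \oheap(\sstate)$, and $\simstate{V}{\sstate}{\heap}{\perms}{\env}$ gives $\simheap{V}{\oheap(\sstate)}{\heap}{\perms}$, whence $\heap(V(t'), f') = V(t'')$ directly. So the crux is the second obligation, $\pair{V(t')}{f'} \in \perms'$, i.e.\ showing $\pair{V(t')}{f'} \in \perms$ (again immediate from $\simheap{V}{\oheap(\sstate)}{\heap}{\perms}$) together with $\pair{V(t')}{f'} \neq \pair{V(t)}{f}$. The latter is where $\neg\falias(\sstate, t, f, t', f')$ comes in, and I would split on $\imp(\sstate)$ exactly as in the proofs of Lemmas~\ref{lem:sheap-remf-imp} and~\ref{lem:sheap-remfp-prec}. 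If $\imp(\sstate)$, then $\neg\falias$ unfolds to $(f \neq f') \vee \neg\fsat(\pc(\sstate) \kand t \keq t')$; since $V(\pc(\sstate)) = \ktrue$ (part of $\simstate{V}{\sstate}{\heap}{\perms}{\env}$), the second disjunct forces $V(t) \neq V(t')$, so in either case $\pair{V(t)}{f} \neq \pair{V(t')}{f'}$. If $\neg\imp(\sstate)$, then $\neg\falias$ unfolds to $(f \neq f') \vee (\pc(\sstate) \notimplies t \keq t')$; the second disjunct means there is some valuation under which $t$ and $t'$ disagree while $\pc(\sstate)$ holds, but this alone does not pin down $V(t) \neq V(t')$ for our particular $V$. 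Here I would instead invoke the hypothesis that $\sstate$ is well-formed: well-formedness of a symbolic state requires the optimistic heap to be empty when the state is precise (``the optimistic heap of a well-formed symbolic state must be empty unless it is an imprecise state''), so $\neg\imp(\sstate)$ forces $\oheap(\sstate) = \emptyset$, hence $\oheap' = \emptyset$, and the claim holds vacuously.

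I expect the well-formedness observation in the precise case to be the main point requiring care --- it is the reason the lemma explicitly assumes ``$\sstate$ is well-formed'' whereas Lemmas~\ref{lem:sheap-remf-imp} and~\ref{lem:sheap-remfp-prec} do not --- and it is easy to overlook that the precise branch of $\falias$ would otherwise be too weak to conclude the permission-disjointness fact pointwise. Everything else is routine unfolding of \eqref{eq:oheap-correspondence} and of the definition of $\fremf$, with no induction needed since $\fremf$ on optimistic heaps is a single set-comprehension operation. Structurally the write-up will mirror the paragraph structure of the proof of Lemma~\ref{lem:sheap-remf-imp}: introduce $\oheap'$ and $\perms'$, note $\oheap' \subseteq \oheap(\sstate)$ and that $\oheap'$ has no predicate chunks, reduce to the membership-in-$\perms'$ condition, pick an arbitrary $\triple{f'}{t'}{t''} \in \oheap'$, derive $\neg\falias(\sstate, t, f, t', f')$, case on $\imp(\sstate)$ (using $V(\pc(\sstate)) = \ktrue$ in the imprecise case and well-formedness to get vacuity in the precise case), and conclude $\pair{V(t')}{f'} \in \perms' $, hence $\simheap{V}{\oheap'}{\heap}{\perms'}$.
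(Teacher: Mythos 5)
Your proposal is correct and follows essentially the same route as the paper's proof: a case split on $\imp(\sstate)$, with the imprecise case handled by the same pointwise $\neg\falias$ argument as Lemma~\ref{lem:sheap-remf-imp} (transplanted from $\sheap$ to $\oheap$, where the disjointness and predicate obligations of \eqref{eq:sheap-correspondence} drop away) and the precise case dispatched by well-formedness forcing $\oheap(\sstate) = \emptyset$. Your observation that the well-formedness hypothesis is precisely what rescues the otherwise-too-weak precise branch of $\falias$ is exactly the point of the paper's second case.
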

\begin{proof}
  \begin{enumcases}
    \case $\imp(\sstate)$: Similar to proof of lemma \ref{lem:sheap-remf-imp}, replacing $\sheap$ with $\oheap$.

    \case $\neg \imp(\sstate)$: Then $\oheap(\sstate) = \emptyset$ since $\sstate$ is well-formed. Then trivially $\simheap{V}{\oheap(\sstate)}{\heap}{\emptyset}$ and thus also $\simheap{V}{\oheap(\sstate)}{\heap}{\perms \setminus \set{\pair{V(t)}{f}}}$ by \ref{lem:sim-oheap-monotonicity}.
  \end{enumcases}
\end{proof}

\begin{lemma}[Soundness of $\sperms$ calculation]\label{lem:consume-assert-vfoot}
  If $\phi$ is a precise formula, $\simstate{V}{\sstate_E}{\heap}{\perms_E}{\env}$, $\sconsume{\sstate}{\sstate_E}{\phi}{\sstate'}{\scheck}{\sperms}$ with corresponding valuation $V'$, $V'(\pc(\sstate')) = \ktrue$, and $\assertion{\heap}{\perms'}{\env}{\phi}$ where $\perms' \subseteq \perms_E$, then $\assertion{\heap}{\vfoot{V'}{\heap}{\sperms}}{\env}{\phi}$ and $\vfoot{V}{\heap}{\sperms} \subseteq \perms'$.
\end{lemma}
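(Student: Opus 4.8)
The plan is to proceed by induction on the derivation of $\sconsume{\sstate}{\sstate_E}{\phi}{\sstate'}{\scheck}{\sperms}$. Because $\phi$ is a precise formula, the rule \refrule{SConsumeImprecision} never fires, so the cases are: the expression rules \refrule{SConsumeValue}, \refrule{SConsumeValueImprecise}, \refrule{SConsumeValueFailure}; the predicate rules \refrule{SConsumePredicate}, \refrule{SConsumePredicateImprecise}, \refrule{SConsumePredicateFailure}; the accessibility rules \refrule{SConsumeAcc}, \refrule{SConsumeAccOptimistic}, \refrule{SConsumeAccImprecise}, \refrule{SConsumeAccFailure}; the conjunction rules \refrule{SConsumeConjunction}, \refrule{SConsumeConjunctionImprecise}; and the conditional rules \refrule{SConsumeConditionalA}, \refrule{SConsumeConditionalB}. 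In every case the corresponding valuation $V'$ extends $V$ and is defined on all fresh values introduced along the derivation; I will actually establish $\vfoot{V'}{\heap}{\sperms} \subseteq \perms'$, and observe that this coincides with the stated inclusion $\vfoot{V}{\heap}{\sperms} \subseteq \perms'$ whenever the latter is even defined (i.e.\ when $\sperms$ mentions no fresh value), since $V'$ restricted to $\dom(V)$ equals $V$.

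For the expression cases $\sperms = \emptyset$, so $\vfoot{V'}{\heap}{\sperms} = \emptyset \subseteq \perms'$, and from $\assertion{\heap}{\perms'}{\env}{e}$ via \refrule{AssertValue} we get $\eval{\heap}{\env}{e}{\ktrue}$, hence $\assertion{\heap}{\emptyset}{\env}{e}$ by \refrule{AssertValue}. For the accessibility cases $\sperms = \set{\pair{t_e}{f}}$ with $\spceval{\sstate_E}{e}{t_e}{\scheck}$; inverting $\assertion{\heap}{\perms'}{\env}{\kacc(e.f)}$ through \refrule{AssertAcc} yields $\eval{\heap}{\env}{e}{\ell}$ and $\pair{\ell}{f} \in \perms'$, and lemma \ref{lem:spceval-correspondence} (applied at $\sstate_E$, using $\simstate{V}{\sstate_E}{\heap}{\perms_E}{\env}$ and the corresponding valuation of the $\spceval$ sub-judgement, which $V'$ extends) gives $\ell = V'(t_e)$; so $\vfoot{V'}{\heap}{\sperms} = \set{\pair{\ell}{f}} \subseteq \perms'$ and $\assertion{\heap}{\set{\pair{\ell}{f}}}{\env}{\kacc(e.f)}$ by \refrule{AssertAcc}. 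The predicate cases are parallel: \refrule{AssertPredicate} gives $\multiple{\eval{\heap}{\env}{e}{v}}$ and $\assertion{\heap}{\perms'}{[\multiple{x \mapsto v}]}{\fpred(p)}$ with $\multiple{x} = \fpredparams(p)$; lemma \ref{lem:spceval-correspondence} identifies $\multiple{v} = \multiple{V'(t)}$, so $\vfoot{V'}{\heap}{\pair{p}{\multiple{t}}} = \efoot{\heap}{[\multiple{x \mapsto v}]}{\fpred(p)}$; since $\fpred(p)$ is a specification, lemma \ref{lem:efoot-subset-spec} gives $\efoot{\heap}{[\multiple{x \mapsto v}]}{\fpred(p)} \subseteq \perms'$ and lemma \ref{lem:efoot-assert} gives $\assertion{\heap}{\efoot{\heap}{[\multiple{x \mapsto v}]}{\fpred(p)}}{[\multiple{x \mapsto v}]}{\fpred(p)}$, whence \refrule{AssertPredicate} reassembles $\assertion{\heap}{\vfoot{V'}{\heap}{\sperms}}{\env}{p(\multiple{e})}$.

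For the conjunction rules, $\phi = \phi_1 * \phi_2$ and $\sperms = \sperms_1 \cup \sperms_2$; inverting $\assertion{\heap}{\perms'}{\env}{\phi_1 * \phi_2}$ through \refrule{AssertConjunction} splits $\perms'$ into disjoint $\perms_1', \perms_2'$ with $\perms_1' \cup \perms_2' \subseteq \perms'$ and $\assertion{\heap}{\perms_i'}{\env}{\phi_i}$. I propagate $V'(\pc(\sstate'')) = \ktrue$ down to the intermediate state using lemma \ref{lem:consume-subpath}, lift $\simstate{V}{\sstate_E}{\heap}{\perms_E}{\env}$ along the (fresh-value) extensions to the sub-derivations' corresponding valuations and to the refined evaluation state $\sstate_E[\pc = \pc(\sstate')]$, and apply the induction hypothesis to each conjunct (using $\perms_i' \subseteq \perms' \subseteq \perms_E$) to get $\assertion{\heap}{\vfoot{V'}{\heap}{\sperms_i}}{\env}{\phi_i}$ and $\vfoot{V'}{\heap}{\sperms_i} \subseteq \perms_i'$; then $\vfoot{V'}{\heap}{\sperms} = \vfoot{V'}{\heap}{\sperms_1} \cup \vfoot{V'}{\heap}{\sperms_2} \subseteq \perms_1' \cup \perms_2' \subseteq \perms'$, and the two footprints are disjoint because $\perms_1' \cap \perms_2' = \emptyset$, so \refrule{AssertConjunction} gives $\assertion{\heap}{\vfoot{V'}{\heap}{\sperms}}{\env}{\phi_1 * \phi_2}$. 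For the conditional rules I first use $V'(\pc(\sstate')) = \ktrue$, lemma \ref{lem:consume-subpath} (to know $\pc(\sstate')$ implies the branch conjunct), and lemma \ref{lem:spceval-correspondence} to force the value of $e$ under $\env$ to agree with the sign of $t$ recorded in the path condition; this pins down which of \refrule{AssertIfA}/\refrule{AssertIfB} derives $\assertion{\heap}{\perms'}{\env}{\sif{e}{\phi_1}{\phi_2}}$, and then I invoke the induction hypothesis on the corresponding branch sub-derivation.

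The main obstacle I anticipate is not any single case but the valuation bookkeeping running through the inductive (conjunction and conditional) cases: consume can introduce fresh symbolic values via \refrule{SEvalPCFieldImprecise} inside $\spceval$ when $\sstate_E$ is imprecise, so the corresponding valuation of each sub-judgement is a genuine extension of $V$, and I must verify that state correspondence ($\simstate{V}{\sstate_E}{\heap}{\perms_E}{\env}$) is preserved when $V$ is extended by such fresh values and when the evaluation state's path condition is refined to $\pc(\sstate')$. This is routine monotonicity (in the spirit of lemmas \ref{lem:sim-sheap-monotonicity}–\ref{lem:simstate-monotonicity}) but has to be tracked explicitly so that the induction hypothesis is applied with a legitimate initial valuation. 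A secondary irritant is the $V$-versus-$V'$ mismatch in the second conclusion, which I resolve as noted above by proving the $V'$ version and restricting.
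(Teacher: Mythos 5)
Your proposal is correct and follows essentially the same route as the paper's proof: induction on the consume derivation, discharging \refrule{SConsumeImprecision} by precision of $\phi$, using lemma \ref{lem:spceval-correspondence} to identify evaluated values with $V'(t)$ in the accessibility and predicate cases (with lemmas \ref{lem:efoot-subset-spec} and \ref{lem:efoot-assert} for predicates), inverting \refrule{AssertConjunction} and applying the induction hypothesis to both conjuncts via lemma \ref{lem:consume-subpath}, and using the path condition to pin down the branch in the conditional cases. Your explicit handling of the $V$-versus-$V'$ mismatch in the second conclusion is a reasonable tightening of a looseness the paper itself glosses over.
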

\begin{proof}
  By induction on $\sconsume{\sstate}{\sstate_E}{\phi}{\sstate'}{\scheck}{\sperms}$:

  \begin{enumcases}
    \case \refrule{SConsumeImprecision} -- $\sconsume{\sstate}{\sstate_E}{\simprecise{\phi}}{\quintuple{\top}{\pc(\sstate')}{\senv(\sstate')}{\emptyset}{\emptyset}}{\scheck}{\sperms}$:
      $\simprecise{\phi}$ is not precise, therefore this rule cannot apply.

    \case \refrule{SConsumeValue}, \refrule{SConsumeValueImprecise}, \refrule{SConsumeValueFailure} -- $\sconsume{\sstate}{\sstate_E}{e}{\sstate}{\_}{\emptyset}$:

      Since $\assertion{\heap}{\perms'}{\env}{e}$, $\eval{\heap}{\env}{e}{\ktrue}$ by \refrule{AssertValue}. Therefore $\assertion{\heap}{\vfoot{V}{\heap}{\emptyset}}{\env}{e}$ by \refrule{AssertValue}.

      Also, $\vfoot{V}{\heap}{\emptyset} = \emptyset \subseteq \perms'$.

    \case \refrule{SConsumePredicate}, \refrule{SConsumePredicateImprecise}, \refrule{SConsumePredicateFailure} -- $\sconsume{\sstate}{\sstate_E}{p(\multiple{e})}{\_}{\_}{\set{\pair{p}{\multiple{t}}}}$:

      By the respective rule, $\multiple{\spceval{\sstate_E}{e}{t}{\_}}$ for some $\multiple{t}$. The corresponding valuation for this case extends the corresponding valuation for all of these derivation.

      Since $\assertion{\heap}{\perms}{\env}{p(\multiple{e})}$, $\multiple{\eval{\heap}{\env}{e}{v}}$ for some $\multiple{v}$ by \refrule{AssertPredicate}. By assumptions \\
      $\simstate{V}{\sstate_E}{\heap}{\perms_E}{\env}$. Thus by lemma \ref{lem:spceval-correspondence} $\multiple{v = V'(t)}$, i.e., $\multiple{\eval{\heap}{\env}{e}{V'(t)}}$.

      Let $\multiple{x} = \fpredparams(p)$. By \refrule{AssertPredicate} $\assertion{\heap}{\perms'}{[\multiple{x \mapsto V'(t)}]}{\fpred(p)}$. Also, $\vfoot{V}{\heap}{\pair{p}{\multiple{t}}} = \efoot{\heap}{[\multiple{x \mapsto V'(t)}]}{\fpred(p)}$. Therefore $\assertion{\heap}{\vfoot{V}{\heap}{\pair{p}{\multiple{t}} \cap \perms'}}{[\multiple{x \mapsto V'(t)}]}{\fpred(p)}$ by lemma \ref{lem:efoot-assert}.

      But $\fpred(p)$ must be a specification, thus $\vfoot{V}{\heap}{\pair{p}{\multiple{t}}} = \efoot{\heap}{[\multiple{x \mapsto V'(t)}]}{\fpred(p)} \subseteq \vfoot{V}{\heap}{\pair{p}{\multiple{t}}} \cap \perms'$ by lemma \ref{lem:efoot-subset-spec}. Therefore $\vfoot{V}{\heap}{\set{\pair{p}{\multiple{t}}}} \subseteq \perms'$.

      Then $\assertion{\heap}{\vfoot{V}{\heap}{\set{\pair{p}{\multiple{t}}}}}{[\multiple{x \mapsto V'(t)}]}{\fpred(p)}$, and thus $\assertion{\heap}{\vfoot{V}{\heap}{\set{\pair{p}{\multiple{t}}}}}{\env}{p(\multiple{e})}$ by \refrule{AssertPredicate}.

    \case \refrule{SConsumeAcc}, \refrule{SConsumeAccOptimistic}, \refrule{SConsumeAccImprecise}, \refrule{SConsumeAccFailure} -- $\sconsume{\sstate}{\sstate_E}{\kacc(e.f)}{\sstate'}{\_}{\set{\pair{t_e}{f}}}$:

      By the respective rule, $\spceval{\sstate_E}{e}{t_e}{\_}$ for some $t_e$. The corresponding valuation for this case extends the corresponding valuation for this derivation.

      Since $\assertion{\heap}{\perms}{\env}{\kacc(e.f)}$, $\eval{\heap}{\env}{e}{v}$ for some $v$ by \refrule{AssertAcc}. Thus by lemma \ref{lem:spceval-correspondence} $v = V'(t_e)$, i.e. $\eval{\heap}{\env}{e}{V'(t_e)}$.

      By \refrule{AssertAcc} $\pair{V'(t_e)}{f} \in \perms'$. Also, $\set{\pair{V'(t_e)}{f}}  = \vfoot{V'}{\heap}{\set{\pair{t_e}{f}}}$, therefore $\vfoot{V'}{\heap}{\set{\pair{t_e}{f}}} \subseteq \perms'$ and also $\assertion{\heap}{\vfoot{V'}{\heap}{\set{\pair{t_e}{f}}}}{\env}{\kacc(e.f)}$ by \refrule{AssertAcc}.

    \case \refrule{SConsumeConjunction}, \refrule{SConsumeConjunctionImprecise} -- $\sconsume{\sstate}{\sstate_E}{\phi_1 * \phi_2}{\sstate''}{\_}{\sperms_1 \cup \sperms_2}$:

      By the respective rule, $\sconsume{\sstate}{\sstate_E}{\phi_1}{\sstate'}{\_}{\sperms_1}$ and $\sconsume{\sstate'}{\sstate_E[\pc = \pc(\sstate')]}{\phi_2}{\sstate''}{\_}{\sperms_2}$. Let $V'$ be the corresponding valuation for this case, thus $V'$ extends the corresponding valuations for these judgements.

      By assumptions, $V'(\pc(\sstate'')) = \ktrue$ and $\pc(\sstate'') \implies \pc(\sstate')$ by \ref{lem:consume-subpath}. Therefore $\simstate{V'}{\sstate_E[\pc = \pc(\sstate')]}{\heap}{\perms_E}{\env}$.

      Then $\assertion{\heap}{\perms_1}{\env}{\phi_1}$ and $\assertion{\heap}{\perms_2}{\env}{\phi_2}$ where $\perms_1 \cup \perms_2 \subseteq \perms'$ and $\perms_1 \cap \perms_2 = \emptyset$ by \refrule{AssertConjunction}, since $\assertion{\heap}{\perms'}{\env}{\phi_1 * \phi_2}$. Then by induction $\assertion{\heap}{\vfoot{V'}{\heap}{\sperms_1}}{\env}{\phi_1}$, $\assertion{\heap}{\vfoot{V'}{\heap}{\sperms_2}}{\env}{\phi_2}$, $\vfoot{V'}{\heap}{\sperms_1} \subseteq \perms_1$, and $\vfoot{V'}{\heap}{\sperms_2} \subseteq \perms_1$.

      Now $\vfoot{V'}{\heap}{\sperms_1} \cap \vfoot{V'}{\heap}{\sperms_2} = \emptyset$ and $\vfoot{V'}{\heap}{\sperms_1} \cup \vfoot{V'}{\heap}{\sperms_2} = \vfoot{V'}{\heap}{\sperms_1 \cup \sperms_2} \subseteq \perms'$. Therefore $\assertion{\heap}{\vfoot{V'}{\heap}{\sperms_1 \cup \sperms_2}}{\env}{\phi_1 * \phi_2}$ by \refrule{AssertConjunction}.
  
    \case\label{case:consume-assert-vfoot-ifa} \refrule{SConsumeConditionalA} -- $\sconsume{\sstate}{\sstate_E}{\sif{e}{\phi_1}{\phi_2}}{\sstate'}{\scheck \cup \scheck'}{\sperms}$:

      By \refrule{SConsumeConditionalA} $\spceval{\sstate_E}{e}{t}{\_}$ for some $t$. Let $V_1$ be the corresponding valuation.
      
      Also, $\sconsume{\sstate[\pc = \pc']}{\sstate_E}{\phi_1}{\sstate'}{\_}{\sperms}$ where $\pc' = \pc \kand t$. Let $V'$ be the corresponding valuation for this case, which extends the corresponding valuation for this judgement and $V_1$.

      Since $\assertion{\heap}{\perms'}{\env}{\sif{e}{\phi_1}{\phi_2}}$, by \refrule{AssertIfA} $\eval{\heap}{\env}{e}{v}$ for some $v$. By lemma \ref{lem:spceval-correspondence} $v = V_1(t)$. Also, since $V'(\pc(\sstate')) = \ktrue$ and $\pc(\sstate') \implies \pc' = \pc(\sstate) \kand t$ by lemma \ref{lem:consume-subpath}, $V_1(t) = \ktrue$. Therefore $\eval{\heap}{\env}{e}{\ktrue}$.

      Also by \refrule{AssertIfA} $\assertion{\heap}{\perms'}{\env}{\phi_1}$. Therefore by induction $\assertion{\heap}{\vfoot{V}{\heap}{\sperms}}{\env}{\phi_1}$ and $\vfoot{V}{\heap}{\sperms} \subseteq \perms'$.

      Finally, $\assertion{\heap}{\vfoot{V}{\heap}{\sperms}}{\env}{\sif{e}{\phi_1}{\phi_2}}$ by \refrule{AssertIfA}.

    \case \refrule{SConsumeConditionalB} -- $\sconsume{\sstate}{\sstate_E}{\sif{e}{\phi_1}{\phi_2}}{\sstate'}{\scheck \cup \scheck'}{\sperms}$:
      Similar to case \ref{case:consume-assert-vfoot-ifa}.
  \end{enumcases}
\end{proof}

\begin{lemma}[Soundness of consume for precise formulas]\label{lem:consume-soundness-precise}
  Let $\phi$ be some precise formula, $V$ be some initial valuation, $\heap$ be some heap, $\env$ be some environment, $\perms_E$ and $\perms$ be sets of permissions such that $\perms \subseteq \perms_E$, and $\sstate$ and $\sstate_E$ be well-formed symbolic states such that $\simstate{V}{\sstate}{\heap}{\perms}{\env}$ and $\simstate{V}{\sstate_E}{\heap}{\perms_E}{\env}$.

  Then, if $\sconsume{\sstate}{\sstate_E}{\phi}{\sstate'}{\scheck}{\sperms}$ with corresponding valuation $V'$, $\rtassert{V'}{\heap}{\perms_E}{\scheck}$, and $V'(\pc(\sstate')) = \ktrue$, then
  \begin{equation}\label{eq:consume-prec-assert-e}
    \assertion{\heap}{\perms_E}{\env}{\phi}, \quad
    \simstate{V'}{\sstate'}{\heap}{\perms \setminus \vfoot{V}{\heap}{\sperms}}{\env}, \quad\text{and}~
    \ifrm{\heap}{\perms_E}{\env}{\phi}.
  \end{equation}

  Furthermore, if the above conditions hold and $\scheck \cap \SPerm = \emptyset$, then
  \begin{equation}\label{eq:consume-prec-assert}
    \assertion{\heap}{\perms}{\env}{\phi}.
  \end{equation}
\end{lemma}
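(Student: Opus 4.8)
<br>

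The proof will proceed by induction on the derivation of $\sconsume{\sstate}{\sstate_E}{\phi}{\sstate'}{\scheck}{\sperms}$, following the same case structure used in the preceding lemmas about consume (lemmas \ref{lem:consume-subpath} and \ref{lem:consume-assert-vfoot}). For each consume rule I will establish the three conclusions in \eqref{eq:consume-prec-assert-e} --- assertion of $\phi$ against the evaluation permissions $\perms_E$, correspondence of the resulting state $\sstate'$ with the shrunken permission set $\perms \setminus \vfoot{V}{\heap}{\sperms}$, and iso-recursive framing of $\phi$ by $\perms_E$ --- and then handle the additional claim \eqref{eq:consume-prec-assert} separately at the end.

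First I would dispatch the trivial cases. The imprecise rule \refrule{SConsumeImprecision} does not apply since $\phi$ is precise. The failure rules \refrule{SConsumeValueFailure}, \refrule{SConsumePredicateFailure}, \refrule{SConsumeAccFailure} produce $\scheck = \set{\bot}$, so $\rtassert{V'}{\heap}{\perms_E}{\scheck}$ cannot hold and the statement is vacuous. For \refrule{SConsumeValue} and \refrule{SConsumeValueImprecise}, the interesting content is: $\spceval{\sstate_E}{e}{t}{\scheck}$ together with the run-time check assertion and lemma \ref{lem:pc-eval-soundness} gives $\eval{\heap}{\env}{e}{V'(t)}$ and $\frm{\heap}{\perms_E}{\env}{e}$; since $\pc(\sstate) \implies t$ (or $t$ is added to the path condition and $\scheck$, with $V'(t)=\ktrue$ from the run-time check), $V'(t) = \ktrue$, which gives \refrule{AssertValue} and \refrule{IFrameExpression}. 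The state $\sstate'$ is $\sstate$ (possibly with a conjunct appended to $\pc$ that $V'$ already makes true), and $\sperms = \emptyset$, so correspondence with $\perms \setminus \emptyset = \perms$ holds by assumption and lemma \ref{lem:eval-unchanged}/the shape of the rule.

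The substantive cases are the heap-chunk-removal rules (\refrule{SConsumeAcc}, \refrule{SConsumeAccOptimistic}, \refrule{SConsumeAccImprecise}), the predicate rules (\refrule{SConsumePredicate}, \refrule{SConsumePredicateImprecise}), the conjunction rules, and the conditional rules. For the accessibility rules, I combine lemma \ref{lem:pc-eval-soundness} (on $\spceval{\sstate_E}{e}{t_e}{\scheck}$, using $\sstate_E$ and $\perms_E$) to get $\eval{\heap}{\env}{e}{V'(t_e)}$; the run-time check $\pair{t_e}{f}$ (when present) or the explicit field chunk in $\sheap(\sstate)$ gives $\pair{V'(t_e)}{f}\in\perms$, hence \refrule{AssertAcc} and \refrule{IFrameAcc}. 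For correspondence I invoke lemmas \ref{lem:sheap-remfp-prec}, \ref{lem:sheap-remf-imp}, and \ref{lem:oheap-remf} to show the updated precise and optimistic heaps correspond to $\perms \setminus \set{\pair{V'(t_e)}{f}}$, which matches $\perms \setminus \vfoot{V}{\heap}{\set{\pair{t_e}{f}}}$. The predicate cases are analogous, using lemma \ref{lem:spceval-correspondence} on each argument expression and the fact that $\fpred(p)$ is a specification together with lemmas \ref{lem:efoot-subset-spec} and \ref{lem:assert-monotonicity} to pin down the footprint; removing the predicate chunk (and emptying $\oheap$) preserves correspondence, using that the old heap satisfied disjointness of chunk footprints. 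The conjunction cases recurse: consume $\phi_1$ from $(\sstate,\sstate_E)$ landing in $\sstate'$ with exclusion set $\sperms_1$, then consume $\phi_2$ from $(\sstate',\sstate_E[\pc=\pc(\sstate')])$; correspondence of $\sstate'$ with $\perms\setminus\vfoot{V}{\heap}{\sperms_1}$ feeds the inductive hypothesis for the second consume, and assembling the two pieces via \refrule{AssertConjunction}/\refrule{IFrameConjunction} needs disjointness of $\vfoot{V}{\heap}{\sperms_1}$ and $\vfoot{V}{\heap}{\sperms_2}$ --- which I expect to get from lemma \ref{lem:consume-assert-vfoot} applied to the two sub-derivations (each $\vfoot{V}{\heap}{\sperms_i}\subseteq$ the respective disjoint sub-permission set from \refrule{AssertConjunction}). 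The conditional cases use lemma \ref{lem:spceval-correspondence} on $e$ plus lemma \ref{lem:consume-subpath} to show the branch taken symbolically agrees with $\eval{\heap}{\env}{e}{\cdot}$, then recurse.

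The main obstacle I anticipate is the conjunction case, specifically threading the ``$\perms_E$ vs.\ $\perms$'' bookkeeping correctly: the assertion and framing conclusions are stated relative to $\perms_E$ (the full evaluation permissions), but correspondence shrinks $\perms$ by $\vfoot{V}{\heap}{\sperms_1}$ before the second recursive call, and I must also re-establish $\simstate{V'}{\sstate_E[\pc=\pc(\sstate')]}{\heap}{\perms_E}{\env}$ (the evaluation state's permissions do not shrink) while the residual state's permissions do --- all with a valuation $V'$ that extends $V$ only after the full derivation is processed, so care is needed about which intermediate valuation each sub-fact is stated with (using monotonicity of the various relations under valuation extension, e.g.\ the ``$V' \supseteq$'' hypotheses in lemmas \ref{lem:pc-eval-soundness} and \ref{lem:consume-assert-vfoot}). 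Finally, for \eqref{eq:consume-prec-assert}: when $\scheck \cap \SPerm = \emptyset$, no accessibility or predicate run-time check was emitted, so every permission asserted in $\phi$ was actually found in $\sheap(\sstate)\cup\oheap(\sstate)$; by lemma \ref{lem:consume-assert-vfoot} we get $\assertion{\heap}{\vfoot{V'}{\heap}{\sperms}}{\env}{\phi}$ with $\vfoot{V}{\heap}{\sperms}\subseteq\perms$ (here I would argue $\vfoot{V}{\heap}{\sperms}\subseteq\perms$ rather than merely $\subseteq\perms_E$ precisely because the chunks came from $\sstate$, via lemma \ref{lem:sim-heap-contains}), and then lemma \ref{lem:assert-monotonicity} upgrades this to $\assertion{\heap}{\perms}{\env}{\phi}$.
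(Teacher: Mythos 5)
Your overall skeleton matches the paper's proof --- induction on the consume derivation, the same dispatch of the imprecise/failure cases, and the same leaf lemmas (\ref{lem:pc-eval-soundness}, \ref{lem:sheap-remfp-prec}, \ref{lem:sheap-remf-imp}, \ref{lem:oheap-remf}) for the value, accessibility, and predicate cases. But there is a genuine gap exactly where you flagged difficulty: the conjunction case, and specifically the source of the disjointness needed for \refrule{AssertConjunction}. You propose to obtain $\vfoot{V}{\heap}{\sperms_1} \cap \vfoot{V}{\heap}{\sperms_2} = \emptyset$ from lemma \ref{lem:consume-assert-vfoot} applied to the two sub-derivations, with each $\vfoot{V}{\heap}{\sperms_i}$ contained in ``the respective disjoint sub-permission set from \refrule{AssertConjunction}.'' Those disjoint sub-permission sets only exist once you have a derivation of $\assertion{\heap}{\cdot}{\env}{\phi_1 * \phi_2}$, which is what you are constructing; the only assertions the inductive hypothesis hands you are against $\perms_E$ (or $\perms$), and feeding $\perms_E$ into lemma \ref{lem:consume-assert-vfoot} for both conjuncts yields two subsets of $\perms_E$ with no disjointness between them. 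The paper resolves this differently in the two conjunction rules. For \refrule{SConsumeConjunction} (no permission checks emitted), it is precisely the ``extra'' conclusion \eqref{eq:consume-prec-assert} that does the work: the first conjunct is asserted against $\vfoot{V}{\heap}{\sperms_1} \subseteq \perms$, the second recursive call is made with the \emph{residual} permission set $\perms \setminus \vfoot{V}{\heap}{\sperms_1}$ (justified by the correspondence conclusion for $\sstate'$), so the IH asserts $\phi_2$ against a set disjoint from $\vfoot{V}{\heap}{\sperms_1}$ by construction. For \refrule{SConsumeConjunctionImprecise}, disjointness cannot be derived statically at all and comes from the run-time check $\fsep(\sperms_1,\sperms_2)$ via \refrule{CheckSep} --- a check your proposal never mentions. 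Relatedly, treating \eqref{eq:consume-prec-assert} as a uniform afterthought at the end misses that it must be carried through the induction per case (and is vacuous in the optimistic cases, where $\scheck \cap \SPerm \ne \emptyset$), because it is load-bearing for the first conjunct.

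A smaller but recurring issue: in the predicate and conditional cases you invoke lemma \ref{lem:spceval-correspondence}, which takes $\eval{\heap}{\env}{e}{v}$ as a \emph{hypothesis}; at those points in the proof the dynamic evaluation is not yet known, so you need the other direction, lemma \ref{lem:pc-eval-soundness}, which derives $\eval{\heap}{\env}{e}{V'(t)}$ and framing from the symbolic judgement plus the satisfied run-time checks. Your closing argument for $\vfoot{V}{\heap}{\sperms} \subseteq \perms$ via lemma \ref{lem:sim-heap-contains} is reasonable for the leaf cases where the chunk is found in $\sheap(\sstate)$, but it does not rescue the conjunction case, which still needs the residual-state/separation-check argument above.
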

\begin{proof}
  Suppose that $\sconsume{\sstate}{\sstate_E}{\phi}{\sstate'}{\scheck}{\sperms}$ with corresponding valuation $V'$, $\rtassert{V'}{\heap}{\perms_E}{\scheck}$, and $V'(\pc(\sstate')) = \ktrue$. Complete the proof by induction on $\sconsume{\sstate}{\sstate_E}{\phi}{\scheck}{\sperms}$:

  \begin{enumcases}
    \case \refrule{SConsumeImprecision} -- $\sconsume{\sstate}{\sstate_E}{\simprecise{\phi}}{\quintuple{\top}{\pc(\sstate')}{\senv(\sstate')}{\emptyset}{\emptyset}}{\scheck}{\sperms}$:
      Since $\simprecise{\phi}$ is imprecise, this rule cannot apply.

    \case \refrule{SConsumeValue} -- $\sconsume{\sstate}{\sstate_E}{e}{\sstate}{\scheck}{\emptyset}$:

      By \refrule{SConsumeValue} $\spceval{\sstate_E}{e}{t}{\scheck}$. Let $V'$ be the corresponding valuation, with initial valuation $V$. Then $V'$ is the corresponding valuation for this case.

      Thus by lemma \ref{lem:pc-eval-soundness} $\eval{\heap}{\env}{e}{V'(t)}$. Also by \refrule{SConsumeValue} $\pc(\sstate) \implies t$. Therefore $V'(t) = \ktrue$, and therefore $\eval{\heap}{\env}{e}{\ktrue}$. Thus $\assertion{\heap}{\perms_E}{\env}{e}$ by \refrule{AssertValue}.

      $\vfoot{V}{\heap}{\emptyset} = \emptyset$, therefore $\simstate{V'}{\sstate}{\heap}{\perms \setminus \vfoot{V}{\heap}{\emptyset}}{\env}$ since $V'$ extends $V$.

      Finally, $\frm{\heap}{\perms_E}{\env}{e}$ by lemma \ref{lem:pc-eval-soundness}. Therefore $\ifrm{\heap}{\perms_E}{\env}{e}$ by \refrule{IFrameExpression}, which completes the proof of \eqref{eq:consume-prec-assert-e}.

      As shown before, $\eval{\heap}{\env}{e}{\ktrue}$, thus $\assertion{\heap}{\perms}{\env}{e}$ by \refrule{AssertValue}, which proves \eqref{eq:consume-prec-assert}.

    \case \refrule{SConsumeValueImprecise} -- $\sconsume{\sstate}{\sstate_E}{e}{\sstate[\pc = \pc(\sstate) \kand t]}{\scheck; t}{\emptyset}$:

      By \refrule{SConsumeValueImprecise} $\spceval{\sstate_E}{e}{t}{\scheck}$. Let $V'$ be the valuation corresponding to this derivation, with initial valuation $V$. Then $V'$ is the valuation corresponding to this case.

      Let $\sstate' = \sstate[\pc = \pc(\sstate) \kand t]$.

      By assumptions, $\rtassert{V'}{\heap}{\perms_E}{\set{t}}$ by lemma \ref{lem:scheck-monotonicity}. Thus $V'(t) = \ktrue$ by \refrule{CheckValue}. Also, $\eval{\heap}{\env}{e}{V'(t)}$ by lemma \ref{lem:pc-eval-soundness}; therefore $\eval{\heap}{\env}{e}{\ktrue}$. Thus, by \refrule{AssertValue}, $\assertion{\heap}{\perms_E}{\env}{e}$.

      $\vfoot{V}{\heap}{\emptyset} = \emptyset$, therefore $\simstate{V'}{\sstate}{\heap}{\perms \setminus \vfoot{V}{\heap}{\emptyset}}{\env}$ since $V'$ extends $V$.

      Furthermore, since $V'(t) = \ktrue$, $V'(\pc(\sstate')) = V(\pc(\sstate)) \kand V'(t) = \ktrue$. Since $\sstate'$ and $\sstate$ differ only in their $\pc$ components, $\simstate{V'}{\sstate'}{\heap}{\perms \setminus \setminus \vfoot{V}{\heap}{\emptyset}}{\env}$.

      Finally, $\frm{\heap}{\perms_E}{\env}{e}$ by lemma \ref{lem:pc-eval-soundness}. Therefore $\ifrm{\heap}{\perms_E}{\env}{e}$ by \refrule{IFrameExpression}.

    \case \refrule{SConsumeValueFailure} -- $\sconsume{\sstate}{\sstate_E}{e}{\sstate}{\set{\bot}}{\emptyset}$:

      $\rtassert{V'}{\heap}{\perms_E}{\scheck \cup \set{\bot}}$ is a contradiction, thus the lemma vacuously holds.

    \case\label{case:consume-prec-pred} \refrule{SConsumePredicate} -- $\sconsume{\sstate}{\sstate_E}{p(\multiple{e})}{\sstate'}{\bigcup \multiple{\scheck}}{\set{\pair{p}{\multiple{t}}}}$:

      By \refrule{SConsumePredicate}, for each $e$, $\spceval{\sstate_E}{e}{t}{\scheck}$ for some $t$ and $\scheck$. Let $V'$ be corresponding valuation for this case, thus $V'$ extends the respective individual corresponding valuations and for each $\scheck$, $\rtassert{V'}{\heap}{\perms}{\scheck}$ by lemma \ref{lem:scheck-monotonicity}

      Therefore, for each $e$ and corresponding $t$, $\eval{\heap}{\env}{e}{V'(t)}$ by lemma \ref{lem:pc-eval-soundness}. By \refrule{SConsumePredicate}, for each $t$, $\pc(\sstate) \implies t \keq t'$ for some $t'$, thus $V'(t) = V'(t')$ since $V'(\pc(\sstate)) = \ktrue$. Therefore $\eval{\heap}{\env}{e_i}{V'(t_i')}$.

      By \refrule{SConsumePredicate} $\pair{p}{\multiple{t'}} \in \sheap(\sstate)$. Since $\simheap{V}{\sstate}{\heap}{\perms}$, $\assertion{\heap}{\perms}{[\multiple{x \mapsto V'(t')}]}{\fpred(p)}$ where $\multiple{x} = \fpredparams(p)$. Thus by \refrule{AssertPredicate} $\assertion{\heap}{\perms}{\env}{p(\multiple{e})}$, which proves \eqref{eq:consume-prec-assert}. Also, $\assertion{\heap}{\perms_E}{\env}{p(\multiple{e})}$ by lemma \ref{lem:assert-monotonicity}.

      Let $\perms' = \vfoot{V}{\heap}{\pair{p}{\multiple{t}}}$, therefore $\perms' = \efoot{\heap}{[\multiple{x \mapsto V'(t)}]}{\fpred(p)} = \efoot{\heap}{[\multiple{x \mapsto V'(t')}]}{\fpred(p)}$.

      By \refrule{SConsumePredicate} $\sstate' = \sstate[\sheap = \sheap', \oheap = \emptyset]$ where $\sheap = \sheap'; \pair{p}{\multiple{t}}$. Thus $\simheap{V'}{\sheap(\sstate')}{\heap}{\perms}$ since $\sheap(\sstate') \subset \sheap(\sstate)$.

      Then $\vfoot{V'}{\heap}{h} \cap \perms' = \emptyset$ for all $h \in \sheap(\sstate')$ since $\simheap{V'}{\sheap(\sstate)}{\heap}{\perms}$, $\pair{p}{\multiple{t}} \in \sheap(\sstate)$, and $\pair{p}{\multiple{t}} \notin \sheap(\sstate')$. Therefore, by lemma \ref{lem:disjoint-sim-heap-subset}, $\simheap{V'}{\sheap(\sstate')}{\heap}{\perms \setminus \perms'}$.

      Also, since $\oheap(\sstate') = \emptyset$, $\simheap{V'}{\oheap(\sstate')}{\heap}{\perms \setminus \perms'}$.

      Therefore $\simstate{V'}{\sstate'}{\heap}{\perms \setminus \perms'}{\env}$ since $\sstate'$ and $\sstate$ differ only in their $\sheap$ and $\oheap$ components.

      By lemmas \ref{lem:scheck-monotonicity} $\rtassert{V'}{\heap}{\perms_E}{\scheck}$ for each $\scheck$, thus $\frm{\heap}{\perms_E}{\env}{e}$ for each $e$ by lemma \ref{lem:pc-eval-soundness}, thus $\ifrm{\heap}{\perms_E}{\env}{p(\multiple{e})}$ by \refrule{IFramePredicate}. Thus \eqref{eq:consume-prec-assert-e} holds.

    \case \refrule{SConsumePredicateImprecise} -- $\sconsume{\sstate}{\sstate_E}{p(\multiple{e})}{\sstate'}{\bigcup \multiple{\scheck}; \pair{p}{\multiple{t}}}{\set{\pair{p}{\multiple{t}}}}$:

      By \refrule{SConsumePredicateImprecise}, for each $e$, $\spceval{\sstate_E}{e}{t}{\scheck}$ for some $t$ and $\scheck$. Let $V'$ be corresponding valuation for this case, thus $V'$ extends the respective individual corresponding valuations and for each $\scheck$, $\rtassert{V'}{\heap}{\perms}{\scheck}$ by lemma \ref{lem:scheck-monotonicity}

      Therefore, for each $e$ and corresponding $t$, $\eval{\heap}{\env}{e}{V'(t)}$ by lemma \ref{lem:pc-eval-soundness}. By \\
      \refrule{SConsumePredicateImprecise}, for each $t$, $\pc(\sstate) \implies t \keq t'$ for some $t'$, thus $V'(t) = V'(t')$ since $V'(\pc(\sstate)) = \ktrue$. Therefore $\eval{\heap}{\env}{e_i}{V'(t_i')}$.

      Also, $\rtassert{V'}{\heap}{\perms_E}{\set{\pair{p}{\multiple{t}}}}$ by assumptions and lemma \ref{lem:scheck-monotonicity}. Thus $\assertion{\heap}{\perms_E}{[\multiple{x \mapsto V'(t)}]}{\fpred(p)}$ by \refrule{CheckPred}. Therefore $\assertion{\heap}{\perms_E}{\env}{p(\multiple{e})}$ by \refrule{AssertPredicate}.

      By \refrule{SConsumePredicateImprecise} $\sstate' = \sstate[\sheap = \emptyset, \oheap = \emptyset]$, thus $\simheap{V'}{\sheap(\sstate')}{\heap}{\perms \setminus \vfoot{V'}{\heap}{\set{\pair{p}{\multiple{t}}}}}$ and $\simheap{V'}{\oheap(\sstate')}{\heap}{\perms \setminus \vfoot{V'}{\heap}{\set{\pair{p}{\multiple{t}}}}}$. Therefore $\simstate{V'}{\sstate'}{\heap}{\perms \setminus \vfoot{V}{\heap}{\set{\pair{p}{\multiple{t}}}}}{\env}$ since $\sstate'$ and $\sstate$ differ only in their $\sheap$ and $\oheap$ components.

      For each $e$, $\frm{\heap}{\perms_E}{\env}{e}$ by lemma \ref{lem:pc-eval-soundness}. Therefore $\ifrm{\heap}{\perms_E}{\env}{p(\multiple{e})}$ by \refrule{IFramePredicate}, which completes the proof of \eqref{eq:consume-prec-assert-e}.

      Also, $\pair{p}{\multiple{t}} \in \SPerm$ is in the resulting set of checks, which contradicts the premises of \eqref{eq:consume-prec-assert}. Therefore it is vacuously true.

    \case \refrule{SConsumePredicateFailure} -- $\sconsume{\sstate}{\sstate_E}{p(\multiple{e})}{\sstate}{\set{\bot}}{\set{\pair{p}{\multiple{t}}}}$:

      $\rtassert{V'}{\heap}{\perms_E}{\set{\bot}}$ is a contradiction, thus the lemma vacuously holds.

    \case\label{case:consume-prec-acc} \refrule{SConsumeAcc} -- $\sconsume{\sstate}{\sstate_E}{\kacc(e.f)}{\sstate[\sheap = \sheap', \oheap = \oheap']}{\scheck}{\set{\pair{t_e}{f}}}$:

      By \refrule{SConsumeAcc} $\spceval{\sstate_E}{e}{t_e}{\scheck}$. Let $V'$ be the corresponding valuation, thus $V'$ is the corresponding valuation for this case.
      
      Thus $\eval{\heap}{\env}{e}{V'(t_e)}$ by lemma \ref{lem:pc-eval-soundness}. Also $\pc(\sstate) \implies t_e' \keq t_e$ by \refrule{SConsumeAcc}, thus $V'(t_e') = V'(t_e)$ since $V'(\pc(\sstate)) = \ktrue$. Therefore, $\eval{\heap}{\env}{e}{V'(t_e')}$.

      Since $\triple{f}{t_e'}{t} \in \sheap(\sstate)$ by \refrule{SConsumeAcc} and $\simheap{V}{\sheap(\sstate)}{\heap}{\perms}$, $\pair{V'(t_e')}{f} \in \perms$. Thus $\assertion{\heap}{\perms}{\env}{\kacc(e.f)}$ by \refrule{AssertAcc}, which proves \eqref{eq:consume-prec-assert}. Therefore $\assertion{\heap}{\perms_E}{\env}{\kacc(e.f)}$ by lemma \ref{lem:assert-monotonicity} since $\perms \subseteq \perms_E$.

      Let $\sheap' = \fremfp(\sheap(\sstate), \sstate, t_e, f)$. Therefore $\simheap{V'}{\sheap'}{\heap}{\perms \setminus \set{\pair{V'(t_e)}{f}}}$ by lemma \ref{lem:sheap-remfp-prec}. Also, $\set{\pair{V'(t_e')}{f}} = \set{\pair{V'(t_e)}{f}} = \vfoot{V'}{\heap}{\set{\pair{t_e}{f}}}$. Thus $\simheap{V'}{\sheap'}{\heap}{\perms \setminus \vfoot{V'}{\heap}{\set{\pair{t_e}{f}}}}$.

      Likewise, let $\oheap' = \fremf(\oheap(\sstate), \sstate, t_e, f)$. Then similarly $\simheap{V'}{\oheap'}{\heap}{\perms \setminus \vfoot{V}{\heap}{\set{\pair{t_e}{f}}}}$ by lemma \ref{lem:oheap-remf}.

      By \refrule{SConsumeAcc}, $\sstate' = \sstate[\sheap = \sheap', \oheap = \oheap']$. Now, since $\sstate$ and $\sstate'$ differ only in their $\sheap$ and $\oheap$ components, using the properties of $\sheap'$ and $\oheap'$ shown above, $\simstate{V'}{\sstate'}{\heap}{\perms \setminus \vfoot{V}{\heap}{\set{\pair{t_e}{f}}}}{\env}$.

      Finally, $\frm{\heap}{\perms_E}{\env}{e}$ by lemma \ref{lem:pc-eval-soundness}. Therefore $\ifrm{\heap}{\perms_E}{\env}{\kacc(e.f)}$ by \refrule{IFrameAcc}, which completes the proof of \eqref{eq:consume-prec-assert-e}.

    \case\label{case:consume-prec-acc-optimistic} \refrule{SConsumeAccOptimistic} --  $\sconsume{\sstate}{\sstate_E}{\kacc(e.f)}{\sstate'}{\scheck}{\set{\pair{t_e}{f}}}$:
      Similar to case \ref{case:consume-prec-acc}, except to show that $\simheap{V}{\sheap'}{\heap}{\perms \setminus \set{\pair{V'(t_e')}{f}}}$.

      Since $\triple{f}{t_e'}{t} \in \oheap(\sstate)$, $\oheap(\sstate) \ne \emptyset$. Therefore, since $\sstate$ is well-formed, $\imp(\sstate) = \top$. Let $\sheap' = \fremf(\sheap(\sstate), \sstate, t_e', f)$. Therefore $\simheap{V}{\sheap'}{\heap}{\perms \setminus \set{\pair{V'(t_e')}{f}}}$ by lemma \ref{lem:sheap-remf-imp}.

      Continue as in case \ref{case:consume-prec-acc}.

    \case \refrule{SConsumeAccImprecise} -- $\sconsume{\sstate}{\sstate_E}{\kacc(e.f)}{\sstate'}{\scheck; \pair{t_e}{f}}{\set{\pair{t_e}{f}}}$:

      By \refrule{SConsumeAcc} $\spceval{\sstate_E}{e}{t_e}{\scheck}$. Let $V'$ be the corresponding valuation, thus $V'$ is the corresponding valuation for this case.

      Then $\rtassert{V'}{\heap}{\perms_E}{\scheck}$ by assumptions and lemma \ref{lem:scheck-monotonicity}. Thus $\eval{\heap}{\env}{e}{V'(t_e)}$ by lemma \ref{lem:pc-eval-soundness}.

      Also, $\rtassert{V'}{\heap}{\perms_E}{\pair{t_e}{f}}$ by assumptions and lemma \ref{lem:scheck-monotonicity}. Then $\pair{V'(t_e)}{f} \in \perms_E$ by \refrule{CheckAcc}. Therefore $\assertion{\heap}{\perms_E}{\env}{\kacc(e.f)}$ by \refrule{AssertAcc}.

      Let $\sheap' = \fremf(\sheap(\sstate), \sstate, t_e, f)$. By \refrule{SConsumeAccImprecise} $\imp(\sstate)$. Thus $\simheap{V'}{\sheap'}{\heap}{\perms \setminus \set{\pair{V'(t_e)}{f}}}$ by lemma \ref{lem:sheap-remf-imp}. Also, $\set{\pair{V'(t_e)}{f}} = \vfoot{V'}{\heap}{\set{\pair{t_e}{f}}}$. Thus $\simheap{V'}{\sheap'}{\heap}{\perms \setminus \vfoot{V'}{\heap}{\set{\pair{t_e}{f}}}}$.

      Likewise, let $\oheap' = \fremf(\oheap(\sstate), \sstate, t_e, f)$. Then $\simheap{V'}{\oheap'}{\heap}{\perms \setminus \vfoot{V}{\heap}{\pair{t_e}{f}}}$ by lemma \ref{lem:oheap-remf}.

      By \refrule{SConsumeAcc}, $\sstate' = \sstate[\sheap = \sheap', \oheap = \oheap']$. Now, since $\sstate$ and $\sstate'$ differ only in their $\sheap$ and $\oheap$ components, using the properties of $\sheap'$ and $\oheap'$ shown above, $\simstate{V'}{\sstate'}{\heap}{\perms \setminus \vfoot{V}{\heap}{\set{\pair{t_e}{f}}}}{\env}$.

      By lemma \ref{lem:pc-eval-soundness} $\frm{\heap}{\perms_E}{\env}{e}$, therefore $\ifrm{\heap}{\perms_E}{\kacc(e.f)}$ by \refrule{IFrameAcc}. Thus \eqref{eq:consume-prec-assert-e} holds.

      Also, $\pair{t_e}{f} \in \SPerm$ is in the resulting set of checks, which contradicts the premises of \eqref{eq:consume-prec-assert}, therefore it vacuously holds.

    \case \refrule{SConsumeAccFailure} -- $\sconsume{\sstate}{\sstate_E}{\kacc(e.f)}{\sstate}{\set{\bot}}{\set{\pair{t_e}{f}}}$:

    $\rtassert{V'}{\heap}{\perms_E}{\set{\bot}}$ is a contradiction, thus the lemma vacuously holds.

    \case\label{case:consume-prec-conj} \refrule{SConsumeConjunction} -- $\sconsume{\sstate}{\sstate_E}{\phi_1 * \phi_2}{\sstate''}{\scheck_1 \cup \scheck_2}{\sperms_1 \cup \sperms_2}$:

      By \refrule{SConsumeConjunction} $\sconsume{\sstate}{\sstate_E}{\phi_1}{\sstate'}{\scheck_1}{\sperms_1}$ and $\sconsume{\sstate'}{\sstate_E[\pc = \pc(\sstate')]}{\phi_2}{\sstate''}{\scheck_2}{\sperms_2}$. Let $V_1$ and $V'$ be the respective corresponding valuations, with initial valuations $V$ and $V_1$, respectively. Then $V'$ is the corresponding valuation for this case.

      By lemma \ref{lem:consume-subpath}, $\pc(\sstate'') \implies \pc(\sstate')$. Thus $V'(\pc(\sstate')) = V_1(\pc(\sstate')) = \ktrue$. Also, $\rtassert{V_1}{\heap}{\perms_E}{\scheck_1}$ by lemma \ref{lem:scheck-monotonicity}, since $V'$ extends $V_1$.

      By \refrule{SConsumeConjunction} $(\scheck_1 \cup \scheck_2) \cap \SPerm = \emptyset$, thus $\scheck_1 \cap \SPerm = \emptyset$.

      Let $\perms_1 = \vfoot{V}{\heap}{\sperms_1}$. By induction, using \eqref{eq:consume-prec-assert}, $\assertion{\heap}{\perms}{\env}{\phi_1}$. Thus $\assertion{\heap}{\perms_1}{\env}{\phi_1}$ and $\perms_1 \subseteq \perms$ by lemma \ref{lem:consume-assert-vfoot}.
      
      Also $\simstate{V_1}{\sstate'}{\heap}{\perms \setminus \perms_1}{\env}$ by induction, $\simstate{V_1}{\sstate_E[\pc = \pc(\sstate')]}{\heap}{\perms_E}{\env}$ since $V_1$ extends $V$ and $V_1(\pc(\sstate')) = \ktrue$, and $\rtassert{V}{\heap}{\perms_E}{\scheck_2}$ by lemma \ref{lem:scheck-monotonicity}. Finally, by assumptions $V'(\pc(\sstate'')) = \ktrue$, and $(\perms \setminus \perms_1) \subseteq \perms \subseteq \perms_E$. Thus by induction $\simstate{V'}{\sstate''}{\heap}{(\perms \setminus \perms_1) \setminus \vfoot{V}{\heap}{\sperms_2}}{\env}$.

      Also by induction, using \eqref{eq:consume-prec-assert}, $\assertion{\heap}{\perms \setminus \perms_1}{\env}{\phi_2}$.

      Now $(\perms \setminus \perms_1) \subseteq \perms$, $\perms_1 \subseteq \perms$, and $(\perms \setminus \perms_1) \cap \perms_1 = \emptyset$. Therefore $\assertion{\heap}{\perms}{\env}{\phi_1 * \phi_2}$ by \refrule{AssertConjunction}, which proves \eqref{eq:consume-prec-assert}. Then by lemma \ref{lem:assert-monotonicity} $\assertion{\heap}{\perms_E}{\env}{\phi_1 * \phi_2}$.

      As shown before, $\simstate{V'}{\sstate''}{\heap}{(\perms \setminus \perms_1) \setminus \vfoot{V}{\heap}{\sperms_2}}{\env}$, and $(\perms \setminus \perms_1) \setminus \vfoot{V}{\heap}{\sperms_2} = \perms \setminus (\vfoot{V}{\heap}{\sperms_1} \cup \vfoot{V}{\heap}{\sperms_2}) = \perms \setminus \vfoot{V}{\heap}{\sperms_1 \cup \sperms_2}$, therefore $\simstate{V'}{\sstate''}{\heap}{\perms \setminus \vfoot{V}{\heap}{\sperms_1 \cup \sperms_2}}{\env}$.

      By induction $\ifrm{\heap}{\perms_E}{\env}{\phi_1}$ and $\ifrm{\heap}{\perms_E}{\env}{\phi_2}$. Therefore $\ifrm{\heap}{\perms_E}{\env}{\phi_1 * \phi_2}$ by \refrule{IFrameConjunction}, which completes the proof of \eqref{eq:consume-prec-assert-e}.

    \case \refrule{SConsumeConjunctionImprecise} -- $\sconsume{\sstate}{\sstate_E}{\phi_1 * \phi_2}{\sstate''}{\scheck_1 \cup \scheck_2; \fsep(\sperms_1, \sperms_2)}{\sperms_1 \cup \sperms_2}$:
    
      Similar to case \ref{case:consume-prec-conj}, except when showing that $\assertion{\heap}{\perms_E}{\env}{\phi_1 * \phi_2}$ and when proving \eqref{eq:consume-prec-assert}:

      By induction $\assertion{\heap}{\perms_E}{\env}{\phi_1}$ and $\assertion{\heap}{\perms_E}{\env}{\phi_2}$. Thus by lemma \ref{lem:consume-assert-vfoot} $\assertion{\heap}{\vfoot{V'}{\heap}{\sperms_1}}{\env}{\phi_1}$, $\assertion{\heap}{\vfoot{V'}{\heap}{\sperms_2}}{\env}{\phi_2}$, and $\vfoot{V'}{\heap}{\sperms_1} \cup \vfoot{V'}{\heap}{\sperms_2} \subseteq \perms_E$.
      
      By assumptions $\rtassert{V'}{\heap}{\perms_E}{\fsep(\sperms_1, \sperms_2)}$. Then by \refrule{CheckSep} $\vfoot{V'}{\heap}{\sperms_1} \cap \vfoot{V'}{\heap}{\sperms_2} = \emptyset$. Therefore $\assertion{\heap}{\perms_E}{\env}{\phi_1 * \phi_2}$ by \refrule{AssertConjunction}.

      By \refrule{SConsumeConjunctionImprecise} $(\scheck_1 \cup \scheck_2) \cap \SPerm \ne \emptyset$. Therefore the premises of \eqref{eq:consume-prec-assert} do not hold, therefore it is vacuously true.

    \case\label{case:consume-prec-cond-a} \refrule{SConsumeConditionalA} -- $\sconsume{\sstate}{\sstate_E}{\sif{e}{\phi_1}{\phi_2}}{\sstate'}{\scheck \cup \scheck'}{\sperms}$:

      By \refrule{SConsumeConditionalA}, $\spceval{\sstate_E}{e}{t}{\scheck}$ and $\sconsume{\sstate[\pc = \pc']}{\sstate_E[\pc = \pc']}{\phi_1}{\sstate'}{\scheck'}{\sperms}$ where $\pc' = \pc(\sstate) \kand t$. Let $V_1$ and $V'$ be the respective corresponding valuations, with initial valuations $V$ and $V_1$, respectively. Then $V'$ is the corresponding valuation for this case.

      By lemma \ref{lem:consume-subpath} $\pc(\sstate') \implies \pc' = \pc(\sstate) \kand t$, thus $V'(\pc(\sstate) \kand t) = V_1(\pc(\sstate) \kand t) = \ktrue$. Therefore $\simstate{V_1}{\sstate[\pc = \pc(\sstate) \kand t]}{\heap}{\perms}{\env}$ and $\simstate{V_1}{\sstate_E[\pc = \pc(\sstate)]}{\kand t}{\heap}{\perms}{\env}$.

      By assumptions and lemma \ref{lem:scheck-monotonicity} $\rtassert{V_1}{\heap}{\perms_E}{\scheck}$. Thus $\eval{\heap}{\env}{e}{V_1(t)}$ by lemma \ref{lem:pc-eval-soundness}. Furthermore, $V_1(t) = \ktrue$ since $\pc(\sstate) \kand t \implies t$, thus $\eval{\heap}{\env}{e}{\ktrue}$. Finally, $\assertion{\heap}{\perms_E}{\env}{\phi_1}$ by induction. Therefore $\assertion{\heap}{\perms_E}{\env}{\sif{e}{\phi_1}{\phi_2}}$ by \refrule{AssertIfA}.

      Also by induction $\simstate{V'}{\sstate'}{\heap}{\perms \setminus \vfoot{V'}{\heap}{\sperms}}{\env}$.

      Finally, $\frm{\heap}{\perms_E}{\env}{\phi_1}$ by induction, and $\frm{\heap}{\perms_E}{\env}{e}$ by lemmas \ref{lem:pc-eval-soundness}. As shown before, $\eval{\heap}{\env}{e}{\ktrue}$. Therefore $\frm{\heap}{\perms_E}{\env}{\sif{e}{\phi_1}{\phi_2}}$ by \refrule{FrameIfA}, which completes the proof of \eqref{eq:consume-prec-assert-e}.

      Now suppose that $(\scheck \cup \scheck') \cap \SPerm = \emptyset$, thus $\scheck' \cap \SPerm = \emptyset$. Then by induction $\assertion{\heap}{\perms}{\env}{\phi}$, and as before $\eval{\heap}{\env}{e}{\ktrue}$, therefore $\assertion{\heap}{\perms}{\env}{\sif{e}{\phi_1}{\phi_2}}$ by \refrule{AssertIfA}, which completes the proof of \eqref{eq:consume-prec-assert}.

    \case \refrule{SConsumeConditionalB} -- $\sconsume{\sstate}{\sstate_E}{\sif{e}{\phi_1}{\phi_2}}{\sstate'}{\scheck \cup \scheck'}{\sperms}$: Similar to case \ref{case:consume-prec-cond-a}.

  \end{enumcases}
\end{proof}

\begin{lemma}[Soundness of consume (long form)]\label{lem:consume-soundness}
  Let $\gform$ be some specification, $\sstate$ and $\sstate_E$ some well-formed symbolic states such that $\pc(\sstate) \implies \pc(\sstate_E)$, and $\triple{\heap}{\perms}{\env}$ some evaluation state such that $\simstate{V}{\sstate}{\heap}{\perms}{\env}$ and $\simstate{V}{\sstate_E}{\heap}{\perms}{\env}$.
  
  If $\sconsume{\sstate}{\sstate_E}{\gform}{\sstate'}{\scheck}{\sperms}$ with corresponding valuation $V'$, $\rtassert{V'}{\heap}{\perms_E}{\scheck}$, and $V'(\pc(\sstate')) = \ktrue$, then
  $$\assertion{\heap}{\perms_E}{\env}{\gform} \quad\text{and}\quad
    \simstate{V'}{\sstate'}{\heap}{\perms \setminus \efoot{\heap}{\env}{\gform}}{\env}.
  $$
\end{lemma}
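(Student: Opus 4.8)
The plan is to reduce the statement to Lemma~\ref{lem:consume-soundness-precise}, which already establishes the result for a precise input formula, via a case analysis on the specification $\gform$. By Definition~\ref{def:specification}, either $\gform$ is precise and self-framed, or $\gform = \simprecise{\phi}$ for some precise $\phi \in \Formula$. In both cases the corresponding valuation for the outer derivation coincides (by Definition~\ref{def:consume-valuation}) with that of a precise sub-derivation, so $V'$ lines up exactly with the valuation needed by Lemma~\ref{lem:consume-soundness-precise}.

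First I would handle the \emph{precise self-framed} case. Here the derivation of $\sconsume{\sstate}{\sstate_E}{\gform}{\sstate'}{\scheck}{\sperms}$ is already a derivation over a precise formula, so Lemma~\ref{lem:consume-soundness-precise} applies directly (its hypotheses are exactly the well-formedness, correspondence, $\rtassert{V'}{\heap}{\perms_E}{\scheck}$, and $V'(\pc(\sstate')) = \ktrue$ assumptions we have, with $\perms \subseteq \perms_E$ available). This yields \eqref{eq:consume-prec-assert-e}: $\assertion{\heap}{\perms_E}{\env}{\phi}$, $\ifrm{\heap}{\perms_E}{\env}{\phi}$, and $\simstate{V'}{\sstate'}{\heap}{\perms \setminus \vfoot{V}{\heap}{\sperms}}{\env}$. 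The first of these is the required assertion $\assertion{\heap}{\perms_E}{\env}{\gform}$. To turn the state correspondence into one over $\perms \setminus \efoot{\heap}{\env}{\gform}$, I would use monotonicity of correspondence in the permission set (Lemma~\ref{lem:simstate-monotonicity}); it then suffices to show $\efoot{\heap}{\env}{\gform} \subseteq \vfoot{V}{\heap}{\sperms}$. This I get from Lemma~\ref{lem:consume-assert-vfoot}, which gives $\assertion{\heap}{\vfoot{V'}{\heap}{\sperms}}{\env}{\phi}$, followed by Lemma~\ref{lem:efoot-subset-spec} applied to the specification $\phi$ (so $\efoot{\heap}{\env}{\phi} \subseteq \vfoot{V'}{\heap}{\sperms}$), together with the fact that $V'$ extends $V$ so that $\vfoot{V'}{\heap}{\sperms}$ and $\vfoot{V}{\heap}{\sperms}$ agree.

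Second, for the \emph{imprecise} case $\gform = \simprecise{\phi}$, the derivation must end with \refrule{SConsumeImprecision}, giving an inner derivation $\sconsume{\sstate}{\sstate_E[\imp = \top]}{\phi}{\sstate_0'}{\scheck}{\sperms}$ where the output $\sstate'$ is imprecise with path condition $\pc(\sstate_0')$, symbolic store $\senv(\sstate_0')$, and empty precise and optimistic heaps. I would apply Lemma~\ref{lem:consume-soundness-precise} to the inner derivation: $\sstate_E[\imp = \top]$ remains well-formed (imprecise states may carry a nonempty optimistic heap) and still models $\heap$, $\perms_E$, $\env$ since correspondence does not constrain the imprecise flag, and $V'(\pc(\sstate_0')) = \ktrue$ is a hypothesis. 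From \eqref{eq:consume-prec-assert-e} this gives $\assertion{\heap}{\perms_E}{\env}{\phi}$ and $\ifrm{\heap}{\perms_E}{\env}{\phi}$; combining these via Lemma~\ref{lem:ifrm-implies-efrm} yields $\efrm{\heap}{\perms_E}{\env}{\phi}$, and then \refrule{AssertImprecise} gives $\assertion{\heap}{\perms_E}{\env}{\simprecise{\phi}}$. For the state: $\simheap{V'}{\sheap(\sstate')}{\heap}{\cdot}$ and $\simheap{V'}{\oheap(\sstate')}{\heap}{\cdot}$ hold vacuously for \emph{any} permission set since both heaps are empty, in particular for $\perms \setminus \efoot{\heap}{\env}{\simprecise{\phi}}$; $\simenv{V'}{\senv(\sstate')}{\env}$ follows from the inner correspondence $\simstate{V'}{\sstate_0'}{\heap}{\cdot}{\env}$; and $V'(\pc(\sstate')) = V'(\pc(\sstate_0')) = \ktrue$. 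Hence $\simstate{V'}{\sstate'}{\heap}{\perms \setminus \efoot{\heap}{\env}{\simprecise{\phi}}}{\env}$.

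The main obstacle I anticipate is the permission bookkeeping in the precise case: bridging the \emph{symbolic} footprint $\vfoot{V}{\heap}{\sperms}$ produced by Lemma~\ref{lem:consume-soundness-precise} and the \emph{semantic} exact footprint $\efoot{\heap}{\env}{\gform}$ demanded by this lemma. Self-framedness of $\gform$ is essential here — a non-self-framed formula could be asserted with strictly fewer permissions than its exact footprint, breaking the inclusion — and one must also be careful that $V$ and $V'$ agree on every symbolic value occurring in $\sperms$ so that the two footprints genuinely coincide. Everything else is routine reduction through the precise lemma plus the correspondence/monotonicity lemmas and \refrule{AssertImprecise}/\refrule{SConsumeImprecision}.
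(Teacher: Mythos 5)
Your proposal is correct and follows essentially the same route as the paper's own proof: a case split on whether $\gform$ is precise or imprecise, reduction to Lemma~\ref{lem:consume-soundness-precise}, and in the precise case the same bridge from $\vfoot{V'}{\heap}{\sperms}$ to $\efoot{\heap}{\env}{\gform}$ via Lemma~\ref{lem:consume-assert-vfoot} and Lemma~\ref{lem:efoot-subset-spec}, followed by Lemma~\ref{lem:simstate-monotonicity}. The imprecise case likewise matches the paper's use of \refrule{SConsumeImprecision}, Lemma~\ref{lem:ifrm-implies-efrm}, \refrule{AssertImprecise}, and the observation that the emptied heaps model any permission set.
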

\begin{proof}
  Suppose that $\sconsume{\sstate}{\sstate_E}{\gform}{\sstate'}{\scheck}{\sperms}$ with corresponding valuation $V'$, $\rtassert{V'}{\heap}{\perms_E}{\scheck}$, and $V'(\pc(\sstate')) = \ktrue$. Then one of the following cases applies

  \begin{enumcases}
    \case $\gform$ is imprecise, i.e. $\gform = \simprecise{\phi}$ for some $\phi \in \Formula$:

      Then, since $\sconsume{\sstate}{\sstate_E}{\simprecise{\phi}}{\sstate'}{\scheck}{\sperms}$, by \refrule{SConsumeImprecision} $\sconsume{\sstate}{\sstate_E[\imp = \top]}{\phi}{\sstate_0}{\scheck}{\sperms}$ for some $\sstate_0$ where $\sstate' = \quintuple{\top}{\pc(\sstate_0)}{\senv(\sstate_0)}{\emptyset}{\emptyset}$. Let $V'$ be the corresponding valuation, therefore $V'$ is the corresponding valuation for the original derivation.

      Thus by lemma \ref{lem:consume-soundness-precise}, $\assertion{\heap}{\perms_E}{\env}{\phi}$, $\simstate{V'}{\sstate_0}{\heap}{\perms \setminus \vfoot{V'}{\heap}{\sperms}}{\env}$, and $\ifrm{\heap}{\perms_E}{\env}{\phi}$.

      Then $\efrm{\heap}{\perms_E}{\env}{\phi}$ by lemma \ref{lem:ifrm-implies-efrm}. Therefore $\assertion{\heap}{\perms_E}{\env}{\phi}$.

      Also, $\sheap(\sstate') = \oheap(\sstate') = \emptyset$, thus $\simheap{V'}{\sheap(\sstate')}{\heap}{\emptyset}$ and $\simheap{V'}{\oheap(\sstate')}{\heap}{\emptyset}$. Since $\simstate{V'}{\sstate_0}{\heap}{\perms}{\env}$ and $\senv(\sstate') = \senv(\sstate_0)$ and $\pc(\sstate') = \pc(\sstate_0)$, $\simenv{V'}{\senv(\sstate')}{\env}$ and $V'(\pc(\sstate')) = \ktrue$. Therefore $\simstate{V'}{\sstate'}{\heap}{\emptyset}{\env}$, and thus $\simstate{V'}{\sstate'}{\heap}{\perms \setminus \efoot{\heap}{\env}{\simprecise{\phi}}}{\env}$ by lemma \ref{lem:assert-monotonicity}.

    \case $\gform$ is precise, i.e. $\gform = \phi$:

      Then by lemma \ref{lem:consume-soundness-precise} $\simstate{V'}{\sstate'}{\heap}{\perms \setminus \vfoot{V'}{\heap}{\sperms}}{\env}$ and $\assertion{\heap}{\perms_E}{\env}{\phi}$.

      By lemma \ref{lem:consume-assert-vfoot} $\assertion{\heap}{\vfoot{V'}{\heap}{\sperms}}{\env}{\phi}$. Then by lemma \ref{lem:efoot-subset-spec} $\efoot{\heap}{\perms}{\phi} \subseteq \vfoot{V'}{\heap}{\sperms}$, since $\phi$ is a specification. Therefore $\perms \setminus \vfoot{V'}{\heap}{\sperms} \subseteq \perms \setminus \efoot{\heap}{\perms}{\phi}$, thus $\simstate{V'}{\sstate'}{\heap}{\perms \setminus \efoot{\heap}{\perms}{\phi}}{\env}$.
  \end{enumcases}
\end{proof}

\begin{lemma}[Soundness of consume (short form)]\label{lem:cons-soundness}
  Let $\gform$ be some specification, $\sstate$ be some well-formed symbolic state, $\triple{\heap}{\perms}{\env}$ some evaluation state, and $V$ be some valuation such that \\
  $\simstate{V}{\sstate}{\heap}{\perms}{\env}$.

  If $\scons{\sstate}{\gform}{\sstate'}{\scheck}$ with corresponding valuation $V'$, $\rtassert{V'}{\heap}{\perms}{\scheck}$, and $V'(\pc(\sstate')) = \ktrue$ then
  $$
    \assertion{\heap}{\perms}{\env}{\gform} \quad\text{and}\quad
    \simstate{V'}{\sstate'}{\heap}{\perms \setminus \efoot{\heap}{\env}{\gform}}{\env}.
  $$
\end{lemma}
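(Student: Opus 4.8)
The plan is to obtain this short form as an immediate corollary of Lemma~\ref{lem:consume-soundness} by instantiating the evaluation-reference state and the (larger) permission set appropriately. First I would invert the derivation of $\scons{\sstate}{\gform}{\sstate'}{\scheck}$: the only rule that can conclude this judgement is \refrule{SConsume}, so it holds precisely because $\sconsume{\sstate}{\sstate}{\gform}{\sstate'}{\scheck}{\sperms}$ for some $\sperms$, i.e.\ with the evaluation-reference state $\sstate_E$ taken to be $\sstate$ itself. Moreover, by Definition~\ref{def:cons-valuation}, the corresponding valuation $V'$ referred to in the statement is \emph{defined} to be the corresponding valuation of this underlying $\sconsume$ judgement, so no reconciliation between the two notions of ``corresponding valuation'' is required.

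Next I would check that the hypotheses of Lemma~\ref{lem:consume-soundness} are satisfied with $\sstate_E := \sstate$ and $\perms_E := \perms$: $\gform$ is a specification by assumption; $\sstate$ is well-formed by assumption, hence so is $\sstate_E = \sstate$; the side condition $\pc(\sstate) \implies \pc(\sstate_E)$ holds trivially since $\sstate_E = \sstate$; both correspondence premises $\simstate{V}{\sstate}{\heap}{\perms}{\env}$ and $\simstate{V}{\sstate_E}{\heap}{\perms}{\env}$ collapse to the single hypothesis we are given; and the containment $\perms \subseteq \perms_E$ is immediate. The remaining premises $\rtassert{V'}{\heap}{\perms_E}{\scheck}$ and $V'(\pc(\sstate')) = \ktrue$ are, after reading $\perms_E$ as $\perms$, exactly the two run-time-check and path-condition hypotheses of the present lemma.

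Applying Lemma~\ref{lem:consume-soundness} then yields $\assertion{\heap}{\perms_E}{\env}{\gform}$ and $\simstate{V'}{\sstate'}{\heap}{\perms \setminus \efoot{\heap}{\env}{\gform}}{\env}$; substituting $\perms_E = \perms$ gives precisely the two stated conclusions, completing the proof.

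I expect no real obstacle here beyond careful bookkeeping. The one thing to be deliberate about is that the long-form lemma deliberately keeps the ``remaining permissions'' state $\sstate$ distinct from the ``evaluation'' state $\sstate_E$ and tracks the pair $\perms \subseteq \perms_E$, whereas the short form identifies both; so the argument hinges on verifying that the degenerate instantiation $\sstate_E = \sstate$, $\perms_E = \perms$ is legitimate (it is) and that \refrule{SConsume} is genuinely the unique rule deriving $\scons{\sstate}{\gform}{\sstate'}{\scheck}$.
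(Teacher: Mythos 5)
Your proposal is correct and matches the paper's own proof essentially verbatim: the paper also inverts \refrule{SConsume} to obtain $\sconsume{\sstate}{\sstate}{\gform}{\sstate'}{\scheck}{\_}$, observes that the corresponding valuations coincide by definition \ref{def:cons-valuation}, notes the trivial side condition $\pc(\sstate) \implies \pc(\sstate)$, and applies Lemma \ref{lem:consume-soundness} with the degenerate instantiation $\sstate_E = \sstate$ and $\perms_E = \perms$. No gaps.
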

\begin{proof}
  Suppose $\scons{\sstate}{\gform}{\sstate'}{\scheck}$ with corresponding valuation $V'$, $\rtassert{V'}{\heap}{\perms}{\scheck}$, and $V'(\pc(\sstate')) = \ktrue$. Then $\sconsume{\sstate}{\sstate}{\gform}{\sstate'}{\scheck}{\_}$ by \refrule{SConsume}. Let $V'$ be the corresponding valuation, thus $V'$ is the corresponding valuation for the original derivation.

  Trivially $\pc(\sstate) \implies \pc(\sstate)$, thus the conditions of lemma \ref{lem:consume-soundness} are satisfied, and thus $\assertion{\heap}{\perms}{\env}{\gform}$ and $\simstate{V'}{\sstate'}{\heap}{\perms \setminus \efoot{\heap}{\env}{\gform}}{\env}$.
\end{proof}

\begin{lemma}[Progress of consume (long form)]\label{lem:consume-progress}
  For any heap $\heap$, $\sstate$, $\sstate_E$, $\gform$, and valuation $V$, if $V(\pc(\sstate_E)) = \ktrue$ then $\sconsume{\sstate}{\sstate_E}{\gform}{\sstate'}{\_}{\_}$ for some $\sstate'$ such that $V'(\pc(\sstate')) = \ktrue$ where $V'$ is the corresponding valuation.
\end{lemma}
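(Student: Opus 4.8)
The plan is to argue by structural induction on $\gform$, mirroring both the rule set for consume (Section~\ref{sec:consume-rules}) and the definition of its corresponding valuation (Definition~\ref{def:consume-valuation}). Unlike the soundness lemmas, the content of each case here is (i) to exhibit a consume rule that actually applies to $\sstate,\sstate_E,\gform$, and (ii) to check that the corresponding valuation $V'$ built from that derivation satisfies $V'(\pc(\sstate')) = \ktrue$. Throughout I would carry the invariant that $V$ is defined on the symbolic values of both $\sstate$ and $\sstate_E$ and that $V(\pc(\sstate)) = \ktrue$ in addition to the stated $V(\pc(\sstate_E)) = \ktrue$; this is preserved by every recursive call, since in \refrule{SConsumeConjunction} the second call runs $\sstate'$ against $\sstate_E[\pc = \pc(\sstate')]$, both carrying path condition $\pc(\sstate')$, whose satisfaction is delivered by the first call's output, and in \refrule{SConsumeConditionalA} / \refrule{SConsumeConditionalB} both states are updated to $\pc(\sstate)\kand t$ respectively $\pc(\sstate)\kand\kneg t$.

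For the leaf cases I would first apply Lemma~\ref{lem:pc-eval-progress} to obtain $\spceval{\sstate_E}{e}{t}{\scheck}$ for the relevant subexpression(s) $e$, together with its corresponding valuation. Then: for $\gform = e$, case on whether $\pc(\sstate) \implies t$ (apply \refrule{SConsumeValue}, output $\sstate$) or not, and in the latter case on $\imp(\sstate)$ (\refrule{SConsumeValueImprecise}, output $\sstate[\pc = \pc(\sstate)\kand t]$, versus \refrule{SConsumeValueFailure}, output $\sstate$); for $\gform = p(\multiple{e})$, case on whether a matching predicate chunk occurs in $\sheap(\sstate)$ (\refrule{SConsumePredicate}) and otherwise on $\imp(\sstate)$ (\refrule{SConsumePredicateImprecise} vs.\ \refrule{SConsumePredicateFailure}); for $\gform = \kacc(e.f)$, similarly case on whether $\sheap(\sstate)$ and then $\oheap(\sstate)$ contain a matching field chunk, and otherwise on $\imp(\sstate)$, selecting one of \refrule{SConsumeAcc}, \refrule{SConsumeAccOptimistic}, \refrule{SConsumeAccImprecise}, or \refrule{SConsumeAccFailure}. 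In every outcome the output path condition is either $\pc(\sstate)$ itself or $\pc(\sstate)$ conjoined with the just-evaluated $t$, so the path-condition obligation follows from the carried invariant $V'(\pc(\sstate)) = \ktrue$ (and, where $t$ is conjoined, from the fact that the branch was selected so that $V'(t) = \ktrue$). For $\gform = \simprecise{\phi}$, apply \refrule{SConsumeImprecision}, invoke the IH on $\phi$ against $\sstate_E[\imp=\top]$ (same path condition), and observe that the output state inherits $\pc$ of the recursive output. For $\gform = \phi_1 * \phi_2$ and $\gform = \sif{e}{\phi_1}{\phi_2}$, chain the IH as above; in the conditional case the choice between \refrule{SConsumeConditionalA} and \refrule{SConsumeConditionalB} is dictated by the value the \ttt{pc-eval}-derived valuation already assigns to $t$, so that the branch taken conjoins the path condition with a term the valuation sends to $\ktrue$, and the IH then applies to the chosen sub-formula.

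The main obstacle is exactly this coordination in the branching cases: the corresponding valuation is (co-)recursively determined by the derivation we are simultaneously constructing, so one cannot pick the rule first and compute $V'$ afterwards in isolation. Making this precise requires ordering the construction carefully: build the deterministic-evaluation subderivation and its corresponding valuation first via Lemma~\ref{lem:pc-eval-progress}, read off the needed symbolic values, select the consume rule to be consistent with them, and only then assemble $V'$ according to Definition~\ref{def:consume-valuation}, reading $V'(\pc(\sstate')) = \ktrue$ directly off the shape of the chosen rule together with the satisfaction facts already in hand. The remaining bookkeeping — that the auxiliary outputs $\scheck$ and $\sperms$ impose no further constraints on the statement, and that well-formedness of $\sstate,\sstate_E$ is not needed for mere derivability — I would dispatch quickly.
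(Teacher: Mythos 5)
Your proposal matches the paper's proof essentially step for step: structural induction on $\gform$, Lemma~\ref{lem:pc-eval-progress} to discharge the embedded evaluations, and an exhaustive case split on which consume rule applies (driven by $\pc(\sstate)\implies t$, membership of matching chunks in $\sheap/\oheap$, and $\imp(\sstate)$), with the conditional branch chosen according to the value the corresponding valuation assigns to $t$. If anything you are more explicit than the paper about carrying $V(\pc(\sstate))=\ktrue$ as an invariant alongside $V(\pc(\sstate_E))=\ktrue$ to justify the output path conditions in the rules that conjoin a fresh term, a point the paper's own proof leaves implicit.
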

\begin{proof}
  By induction on the syntax forms of $\phi$:

  \begin{enumcases}
    \case $e \in \Expr$:
      By lemma \ref{lem:pc-eval-progress}, $\spceval{\sstate_E}{e}{t}{\_}$ for some $t$. Then one of the following cases applies to yield $\sconsume{\sstate}{\sstate_E}{e}{\sstate}{\_}{\_}$:

      \subcase $\pc(\sstate) \implies t$: Then \refrule{SConsumeValue} applies.

      \subcase $\imp(\sstate)$ and $\pc(\sstate) \notimplies t$: Then \refrule{SConsumeValueImprecise} applies.

      \subcase $\neg \imp(\sstate)$ and $\pc(\sstate) \notimplies t$: Then \refrule{SConsumeValueFailure} applies.

    \case $p(\multiple{e})$ -- $p \in \Predicate, \multiple{e \in \Expr}$:

      By lemma \ref{lem:pc-eval-progress}, for each $e$, $\spceval{\sstate_E}{e}{t}{\_}$ for some $t$. Then one of the following cases applies to yield $\sconsume{\sstate}{\sstate_E}{p(\multiple{e})}{\sstate'}{\_}{\_}$ where $\pc(\sstate') = \pc(\sstate)$:

      \subcase $p(\multiple{t'}) \in \sheap(\sstate)$ and $\pc(\sstate) \implies \multiple{t \keq t'}$ for some $\multiple{t'}$: Then \refrule{SConsumePredicate} applies.

      \subcase $\imp(\sstate)$ and $\nexistential{\pair{p}{\multiple{t'}} \in \sheap(\sstate)}{\bigwedge \multiple{\pc(\sstate) \implies t \keq t'}}$: Then \refrule{SConsumePredicateImprecise} applies.

      \subcase $\neg\imp(\sstate)$ and $\nexistential{\pair{p}{\multiple{t'}} \in \sheap(\sstate)}{\bigwedge \multiple{\pc(\sstate) \implies t \keq t'}}$: Then \refrule{SConsumePredicateFailure} applies.

    \case $\kacc(e.f)$ -- $e \in \Expr, f \in \Field$:

      By lemma \ref{lem:pc-eval-progress}, $\spceval{\sstate_E}{e}{t_e}{\_}$ for some $t_e$. Note that $\fremf$ and $\fremfp$ are defined for all inputs. Then one of the following cases applies to yield $\sconsume{\sstate}{\sstate_E}{\kacc(e.f)}{\sstate'}{\_}{\_}$ where $\pc(\sstate') = \pc(\sstate')$:

      \subcase $\triple{f}{t_e'}{t} \in \sheap(\sstate)$ and $\pc(\sstate) \implies t_e' \keq t_e$ for some $t_e'$ and $t$: Then \refrule{SConsumeAcc} applies.

      \subcase $\nexistential{t_e', t}{\triple{f}{t_e}{t} \in \sheap(\sstate) \wedge (\pc(\sstate) \implies t_e' \keq t_e)}$ and $\triple{f}{t_e'}{t} \in \sheap(\sstate)$ for some $t_e'$ and $t$ where $\pc(\sstate) \implies t_e' \keq t_e$: Then \\
      \refrule{SConsumeAccOptimistic} applies.

      \subcase $\nexistential{t_e', t}{\triple{f}{t_e}{t} \in \sheap(\sstate) \cup \oheap(\sstate) \wedge (\pc(\sstate) \implies t_e' \keq t_e)}$ and $\imp(\sstate)$: Then \\
      \refrule{SConsumeAccImprecise} applies.

      \subcase $\nexistential{t_e', t}{\triple{f}{t_e}{t} \in \sheap(\sstate) \cup \oheap(\sstate) \wedge (\pc(\sstate) \implies t_e' \keq t_e)}$ and $\neg \imp(\sstate)$: Then \\
      \refrule{SConsumeAccFailure} applies.

    \case $\phi_1 * \phi_2$ -- $\phi_1, \phi_2 \in \Formula$

      By induction, $\sconsume{\sstate}{\sstate_E}{\phi_1}{\sstate'}{\_}{\_}$ for some $\sstate'$ such that $V'(\pc(\sstate')) = \ktrue$ where $V'$ is the corresponding valuation.

      Then also by induction, $\sconsume{\sstate'}{\sstate_E[\pc = \pc(\sstate')]}{\phi_2}{\sstate''}{\_}{\_}$ for some $\sstate''$ such that $V''(\pc(\sstate'')) = \ktrue$ where $V''$ is the corresponding valuation, with initial valuation $V'$. Then one of the following cases applies to yield $\sconsume{\sstate}{\sstate_E}{\phi_1 * \phi_2}{\sstate''}{\_}{\_}$:

      \subcase $(\scheck_1 \cup \scheck_2) \cap \SPerm \ne \emptyset$: Then \refrule{SConsumeConjunctionImprecise} applies.

      \subcase $(\scheck_1 \cup \scheck_2) \cap \SPerm = \emptyset$: Then \refrule{SConsumeConjunction} applies.

    \case $\sif{e}{\phi_1}{\phi_2}$ -- $e \in \Expr, \phi_1, \phi_2 \in \Formula$:

      By lemma \ref{lem:pc-eval-progress}, $\spceval{\sstate_E}{e}{t}{\_}$ for some $t$. Let $V'$ be the valuation corresponding to this derivation. Then since this is a well-typed program, one of the following cases must apply:

      \subcase $V'(t) = \ktrue$: Let $\pc' = \pc(\sstate) \kand t$. Then $V(\pc') = \ktrue$ and by induction, $\sconsume{\sstate[\pc = \pc']}{\sstate_E[\pc = \pc']}{\phi_1}{\sstate'}{\_}{\_}$ for some $\sstate'$ where $V''(\pc(\sstate')) = \ktrue$ for the corresponding derivation $V''$ with initial valuation $V'$. Then by \refrule{SConsumeConditionalA}, $\sconsume{\sstate}{\sstate_E}{\sif{e}{\phi_1}{\phi_2}}{\sstate'}{\_}{\_}$, and $V''$ is the corresponding valuation for this derivation with initial valuation $V$.

      \subcase $V'(t) = \kfalse$: Let $\pc' = \pc(\sstate) \kand \kneg t$. Then $V(\pc') = \ktrue$ and by induction, $\sconsume{\sstate[\pc = \pc']}{\sstate_E[\pc = \pc']}{\phi_2}{\sstate'}{\_}{\_}$ for some $\sstate'$ where $V''(\pc(\sstate')) = \ktrue$ for the corresponding derivation $V''$ with initial valuation $V'$. Then by \refrule{SConsumeConditionalB}, $\sconsume{\sstate}{\sstate_E}{\sif{e}{\phi_1}{\phi_2}}{\sstate'}{\_}{\_}$, and $V''$ is the corresponding valuation for this derivation with initial valuation $V$.

  \end{enumcases}
\end{proof}

\begin{lemma}[Progress of consume (short form)]\label{lem:cons-progress}
  For any heap $\heap$, $\sstate$, $\gform$, and valuation $V$, if $V(\pc(\sstate)) = \ktrue$ then $\scons{\sstate}{\gform}{\sstate'}{\_}$ for some $\sstate'$ such that $V'(\pc(\sstate')) = \ktrue$ where $V'$ is the corresponding valuation.
\end{lemma}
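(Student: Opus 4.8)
The plan is to obtain this as an immediate corollary of the long-form progress lemma (lemma \ref{lem:consume-progress}), together with the rule \refrule{SConsume} and the definition of the corresponding valuation for the short-form judgement (definition \ref{def:cons-valuation}). First I would instantiate lemma \ref{lem:consume-progress} with $\sstate_E := \sstate$: the hypothesis $V(\pc(\sstate)) = \ktrue$ is exactly the precondition $V(\pc(\sstate_E)) = \ktrue$ required there. This produces some $\sstate'$ with $\sconsume{\sstate}{\sstate}{\gform}{\sstate'}{\_}{\_}$ and $V_0(\pc(\sstate')) = \ktrue$, where $V_0$ denotes the corresponding valuation for that derivation, i.e. $V_0 = V[\sconsume{\sstate}{\sstate}{\gform}{\sstate'}{\_}{\_} \mid \heap]$.

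Next I would apply \refrule{SConsume} to the derivation $\sconsume{\sstate}{\sstate}{\gform}{\sstate'}{\_}{\_}$, which yields $\scons{\sstate}{\gform}{\sstate'}{\_}$, the judgement we need. Finally, by definition \ref{def:cons-valuation} the corresponding valuation $V'$ for $\scons{\sstate}{\gform}{\sstate'}{\_}$ is \emph{defined} to be $V[\sconsume{\sstate}{\sstate}{\gform}{\sstate'}{\_}{\_} \mid \heap]$ for the very $\sconsume$ derivation used inside this instance of \refrule{SConsume} --- that is, $V' = V_0$. Hence $V'(\pc(\sstate')) = V_0(\pc(\sstate')) = \ktrue$, completing the argument. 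This mirrors exactly the structure already used to pass from lemma \ref{lem:consume-soundness} to lemma \ref{lem:cons-soundness}.

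There is essentially no technical obstacle here; the only point requiring care is bookkeeping about \emph{which} $\sconsume$ derivation is meant, so that the long-form and short-form notions of ``corresponding valuation'' coincide. This coincidence is guaranteed verbatim by definition \ref{def:cons-valuation}, which picks out precisely the derivation fed to the applied \refrule{SConsume} instance, so no separate reasoning is needed.
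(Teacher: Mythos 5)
Your proposal is correct and follows exactly the paper's own argument: instantiate lemma \ref{lem:consume-progress} with $\sstate_E := \sstate$, apply \refrule{SConsume}, and observe via definition \ref{def:cons-valuation} that the short-form corresponding valuation coincides with the long-form one. The paper's proof is just a terser statement of the same three steps.
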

\begin{proof}
  By lemma \ref{lem:consume-progress}, $\sconsume{\sstate}{\sstate}{\gform}{\sstate'}{\_}{\_}$ for some $\sstate'$ where $V'(\pc(\sstate')) = \ktrue$. Then by \refrule{SConsume}, $\scons{\sstate}{\gform}{\sstate'}{\_}$, and $V'$ is the corresponding valuation for this derivation.
\end{proof}

\subsection{Progress}


\begin{definition}\label{def:sguard-valuation}
  For a derivations $\sguard{\vstate}{\sstate}{\scheck}{\sperms}$, given an initial valuation $V$ and heap $\heap$, the \textbf{corresponding valuation} is denoted as
    $$V[\sguard{\vstate}{\sstate}{\scheck}{\sperms} \mid \heap].$$
  This function is defined as follows, depending on the rule that proves the derivation. Values are referenced using the respective name from the rule definition.
  \begin{itemize}
    \item \refrule{SGuardInit}:
      $$V[\sguard{\initsym}{\quintuple{\bot}{\emptyset}{\emptyset}{\emptyset}{\ktrue}}{\emptyset}{\emptyset} \mid \heap] := V$$
    \item \refrule{SGuardSeq}:
      $$V[\sguard{\triple{\sstate}{\sseq{\kskip}{s}}{\gform}}{\sstate}{\emptyset}{\emptyset} \mid \heap] := V$$
    \item \refrule{SGuardAssign}:
      $$V[\sguard{\triple{\sstate}{\sseq{x = e}{s}}{\gform}}{\sstate}{\scheck}{\emptyset} \mid \heap] := V[\seval{\sstate}{e}{\_}{\sstate'}{\scheck} \mid \heap]$$
    \item \refrule{SGuardAssignField}:
      \begin{align*}
        &V[\sguard{\triple{\sstate}{\sseq{x.f = e}{s}}{\gform}}{\sstate''}{\scheck' \cup \scheck''}{\emptyset} \mid \heap] := \\
          &\quad V[\seval{\sstate}{e}{\_}{\sstate'}{\scheck'} \mid \heap]
          [\scons{\sstate'}{\kacc(x.f)}{\sstate''}{\scheck''} \mid \heap]
      \end{align*}
    \item \refrule{SGuardAlloc}:
      $$V[\sguard{\triple{\sstate}{\sseq{x = \salloc{S}}{s}}{\gform}}{\sstate}{\emptyset}{\emptyset} \mid \heap] := V$$
    \item \refrule{SGuardCall}:
      \begin{align*}
        &V[\sguard{\triple{\sstate}{\sseq{y \kassign m(\multiple{e})}{s}}{\gform}}{\sstate''[\senv = \senv(\sstate)]}{\multiple{\scheck} \cup \scheck'}{\frem(\sstate'', \fpre(m))} \mid \heap] := \\
          &\quad V\multiple{[\seval{\sstate}{e}{t}{\sstate'}{\scheck} \mid \heap]}[\scons{\sstate'[\senv = [\multiple{x \mapsto t}]]}{\fpre(m)}{\sstate''}{\scheck'} \mid \heap]
      \end{align*}
    \item \refrule{SGuardAssert}:
      $$V[\sguard{\triple{\sstate}{\sseq{\sassert{\phi}}{s}}{\gform}}{\sstate'}{\scheck}{\emptyset} \mid \heap] := V[\scons{\sstate}{\simprecise{\phi}}{\sstate'}{\scheck} \mid \heap]$$
    \item \refrule{SGuardFold}:
      \begin{align*}
        &V[\sguard{\triple{\sstate}{\sseq{\sfold{p(\multiple{e})}}{s}}{\gform}}{\sstate''[\senv = \senv(\sstate)]}{\_}{\emptyset} \mid \heap] := \\
        &\quad V\multiple{[\seval{\sstate}{e}{t}{\sstate'}{\scheck} \mid \heap]}[\scons{\sstate'[\senv = [\multiple{x \mapsto t}]]}{\fpred(p)}{\sstate''}{\scheck'} \mid \heap]
      \end{align*}
    \item \refrule{SGuardUnfold}:
      \begin{align*}
        &V[\sguard{\triple{\sstate}{\sseq{\sunfold{p(\multiple{e})}}{s}}{\gform}}{\sstate''}{\scheck' \cup \bigcup \multiple{\scheck}}{\emptyset} \mid \heap] := \\
        &\quad V\multiple{[\seval{\sstate}{e}{t}{\sstate'}{\scheck} \mid \heap]}[\scons{\sstate'}{p(\multiple{e})}{\sstate''}{\scheck'} \mid \heap]
      \end{align*}
    \item \refrule{SGuardIf}:
      \begin{align*}
        &V[\sguard{\triple{\sstate}{\sseq{\sif{e}{s_1}{s_2}}{s}}{\gform}}{\sstate'}{\scheck}{\emptyset} \mid \heap] := \\
        &\quad V[\seval{\sstate}{e}{\_}{\sstate'}{\scheck} \mid \heap]
      \end{align*}
    \item \refrule{SGuardWhile}:
      \begin{align*}
        &V[\sguard{\triple{\sstate}{\sseq{\swhile{e}{\gform}{s}}{s'}}{\gform'} }{\sstate'[\pc = \pc(\sstate'')]}{\_}{\_} \mid \heap] := \\
        &\quad V_0[\multiple{t \mapsto V_0(\senv(\sstate')(x))}][\sproduce{\sstate'[\senv = \senv(\sstate')[\multiple{x \mapsto t}]]}{\gform}{\sstate''} \mid \heap]
      \end{align*}
      where $V_0 = V[\scons{\sstate}{\gform}{\sstate'}{\scheck'} \mid \heap]$.
    \item \refrule{SGuardFinish}:
      $$V[\sguard{\triple{\sstate}{\kskip}{\gform}}{\sstate'}{\scheck}{\emptyset} \mid \heap] := V[\scons{\sstate}{\gform}{\sstate'}{\scheck} \mid \heap]$$
  \end{itemize}
\end{definition}

\begin{lemma}\label{lem:assert-after-rem}
  If $\assertion{\heap}{\perms}{\env}{\gform}$ and $\simstate{V}{\sstate}{\heap}{\perms \setminus \efoot{\heap}{\perms}{\gform}}{\env}$, then $\assertion{\heap}{\perms \setminus \vfoot{V'}{\heap}{\frem(\sstate', \gform')}}{\env}{\gform}$.
\end{lemma}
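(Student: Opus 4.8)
The plan is to argue by cases on whether $\gform$ is completely precise. If it is, then $\frem(\sstate,\gform) = \emptyset$ directly from the definition of $\frem$, so $\vfoot{V'}{\heap}{\frem(\sstate,\gform)} = \emptyset$ and the desired conclusion $\assertion{\heap}{\perms \setminus \emptyset}{\env}{\gform}$ is literally the hypothesis $\assertion{\heap}{\perms}{\env}{\gform}$. All of the actual content therefore lies in the case where $\gform$ is not completely precise, where $\frem(\sstate,\gform)$ is the set of symbolic permissions naming every field chunk in $\sheap(\sstate) \cup \oheap(\sstate)$ together with every predicate chunk in $\sheap(\sstate)$. (Here I read $\efoot{\heap}{\perms}{\gform}$ in the statement as $\efoot{\heap}{\env}{\gform}$, since exact footprints do not depend on the permission set, and I read $V'$, $\sstate'$, $\gform'$ in the conclusion as the $V$, $\sstate$, $\gform$ of the hypothesis.)

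First I would rewrite $\vfoot{V'}{\heap}{\frem(\sstate,\gform)}$ as a union of heap-chunk footprints by matching the definition of $\frem$ term by term: for any field chunk $\triple{f}{t}{t'}$ we have $\vfoot{V'}{\heap}{\pair{t}{f}} = \set{\pair{V'(t)}{f}} = \vfoot{V'}{\heap}{\triple{f}{t}{t'}}$, and for any predicate chunk $\pair{p}{\multiple{t}}$ the symbolic-permission footprint $\vfoot{V'}{\heap}{\pair{p}{\multiple{t}}}$ coincides with its heap-chunk footprint by definition. Since $\oheap(\sstate)$ contains no predicate chunks, this yields $\vfoot{V'}{\heap}{\frem(\sstate,\gform)} = \bigcup_{h \in \sheap(\sstate) \cup \oheap(\sstate)} \vfoot{V'}{\heap}{h}$.

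The key step is then to apply lemma~\ref{lem:sim-heap-contains} to the hypothesis $\simstate{V'}{\sstate}{\heap}{\perms \setminus \efoot{\heap}{\env}{\gform}}{\env}$: every heap chunk $h$ of $\sstate$ satisfies $\vfoot{V'}{\heap}{h} \subseteq \perms \setminus \efoot{\heap}{\env}{\gform}$, so taking the union, $\vfoot{V'}{\heap}{\frem(\sstate,\gform)} \subseteq \perms \setminus \efoot{\heap}{\env}{\gform}$, which is in particular disjoint from $\efoot{\heap}{\env}{\gform}$. Now lemma~\ref{lem:efoot-assert} turns $\assertion{\heap}{\perms}{\env}{\gform}$ into $\assertion{\heap}{\efoot{\heap}{\env}{\gform} \cap \perms}{\env}{\gform}$, and the disjointness just established gives $\efoot{\heap}{\env}{\gform} \cap \perms \subseteq \perms \setminus \vfoot{V'}{\heap}{\frem(\sstate,\gform)}$; a single application of lemma~\ref{lem:assert-monotonicity} then delivers $\assertion{\heap}{\perms \setminus \vfoot{V'}{\heap}{\frem(\sstate,\gform)}}{\env}{\gform}$.

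The main obstacle is bookkeeping rather than conceptual: carefully matching the piecewise definition of $\frem$ (field chunks drawn from both heaps, predicate chunks from the precise heap only, and $\emptyset$ in the completely-precise case) against the uniform footprint notation, and confirming that $\oheap(\sstate)$ never contributes predicate chunks so that $\frem(\sstate,\gform)$ contains nothing outside the chunks of $\sstate$. The conceptual point to keep in view — which is exactly what the correspondence with the reduced permission set $\perms \setminus \efoot{\heap}{\env}{\gform}$ buys us — is that the exclusion frame is built entirely from permissions disjoint from $\gform$'s exact footprint, so removing it cannot disturb the assertion of $\gform$.
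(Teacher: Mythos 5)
Your proof is correct and follows essentially the same route as the paper's: case-split on complete precision, expand $\vfoot{V'}{\heap}{\frem(\sstate,\gform)}$ into the union of heap-chunk footprints, use the correspondence of $\sstate$ with $\perms \setminus \efoot{\heap}{\env}{\gform}$ to show the exclusion frame is disjoint from the exact footprint, and then restore the assertion on the shrunken permission set. The only differences are which of several interderivable lemmas you cite (lemma~\ref{lem:sim-heap-contains} in place of its corollary lemma~\ref{lem:sim-heap-disjoint}, and an inlined lemma~\ref{lem:assert-efoot-subset} via lemmas~\ref{lem:efoot-assert} and~\ref{lem:assert-monotonicity}, which also lets you avoid the paper's appeal to lemma~\ref{lem:efoot-subset-spec}); your reading of the statement's typographical slips matches the paper's own intent.
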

\begin{proof}
  Let $\perms_V = \vfoot{V'}{\heap}{\frem(\sstate', \gform')}$. By lemma \ref{lem:assert-efoot-subset} it suffices to show that $(\perms \setminus \perms_V) \subseteq \efoot{\heap}{\env}{\gform}$. By lemma \ref{lem:efoot-subset-spec} $\efoot{\heap}{\env}{\gform} \subseteq \perms$, therefore it suffices to show that $\perms_V \cap \efoot{\heap}{\env}{\gform} = \emptyset$.

  If $\fpre(m)$ is completely precise, $\frem(\sstate', \fpre(m)) = \emptyset$, thus $\perms_V = \emptyset$.
  
  Otherwise,
  \begin{align*}
    \perms_V &= \vfoot{V'}{\heap}{\frem(\sstate', \gform')} \\
      &= \vfoot{V'}{\heap}{\set{ \pair{t}{f} : \triple{f}{t}{t'} \in \sheap(\sstate) \cup
        \oheap(\sstate)} \cup \set{ \pair{p}{\multiple{t}} : \pair{p}{\multiple{t}} \in \sheap(\sstate)}} \\
      &= \bigcup_{\triple{f}{t}{t'} \in \sheap(\sstate) \cup
      \oheap(\sstate)} \vfoot{V'}{\heap}{\triple{f}{t}{t'}} \cup \bigcup_{\pair{p}{\multiple{t}} \in \sheap(\sstate)} \vfoot{V}{\heap}{\pair{p}{\multiple{t}}} \\
      &= \bigcup_{h \in \sheap(\sstate) \cup \oheap(\sstate)} \vfoot{V}{\heap}{h}
  \end{align*}
  But since $\simstate{V'}{\sstate'}{\heap}{\perms \setminus \perms_V}{\env}$, for each $h \in \sheap(\sstate) \cup \oheap(\sstate)$, $\vfoot{V'}{\heap}{h} \cap \perms_E = \emptyset$ by lemma \ref{lem:sim-heap-disjoint}. Therefore $\perms_V \cap \perms_E = \emptyset$.
\end{proof}

\begin{theorem}[Progress, part 1]\label{thm:dtrans-progress}
  Let $\Gamma$ be some dynamic state validated by $\vstate$ and valuation $V$. If $\sguard{\vstate}{\sstate'}{\scheck}{\sperms}$ with corresponding valuation $V'$ extending $V$, $V'(\pc(\sstate')) = \ktrue$, and $\pair{\heap}{\perms(\Gamma)} \vdash_{V'} \scheck$ then
  $$\dtrans{\prog}{\vfoot{V'}{\heap(\Gamma)}{\sperms}}{\Gamma}{\Gamma'}$$
  for some $\Gamma'$.

  In other words, if the dynamic state satisfies the matching symbolic checks, then dynamic execution can proceed.
\end{theorem}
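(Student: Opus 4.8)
The plan is to prove Progress part 1 by a case analysis on the syntactic form of the statement $s(\Gamma) = s(\vstate)$ (or on the special symbol when $\vstate = \initsym$), exactly mirroring the case split in the guard rules of Figure~\ref{fig:gradual-guard-rules} and \S\ref{sec:sexec-rules}. Since $\Gamma$ is validated by $\vstate$ and $V$, Definition~\ref{def:state-valid} gives us that $\vstate$ is reachable, that $\Gamma$ corresponds to $\vstate$ (so $\simstate{V}{\sstate(\vstate)}{\heap(\Gamma)}{\perms(\Gamma)}{\env(\Gamma)}$ and the head statement agrees), and that the tail of the stack is partially validated. In each case I would unfold the relevant guard rule to obtain the sub-judgements (an \textsc{SEval}, \textsc{SCons}, or \textsc{SProduce}) whose corresponding valuations compose to give $V'$, feed the correspondence and the satisfied run-time checks $\rtassert{V'}{\heap(\Gamma)}{\perms(\Gamma)}{\scheck}$ into the appropriate soundness lemma (Lemma~\ref{lem:seval-soundness}, Lemma~\ref{lem:cons-soundness}, or Lemma~\ref{lem:produce-soundness}), and thereby obtain the evaluation, framing, and assertion facts that are precisely the premises of the matching dynamic execution rule. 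Then I apply that dynamic rule (via \textsc{ExecStep}, after noting $\Gamma$ is reachable so \textsc{ExecStep} is applicable) to produce $\Gamma'$.

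Concretely, for the simple cases --- \textsc{SGuardSeq}/\refrule{ExecSeq}, \textsc{SGuardAssign}/\refrule{ExecAssign}, \textsc{SGuardAlloc}/\refrule{ExecAlloc}, \textsc{SGuardIf}/\refrule{ExecIfA}-\refrule{ExecIfB}, \textsc{SGuardFold}/\refrule{ExecFold}, \textsc{SGuardUnfold}/\refrule{ExecUnfold} --- the argument is short: evaluation soundness yields $\eval{\heap}{\env}{e}{V'(t)}$ and $\frm{\heap}{\perms}{\env}{e}$, the value $V'(t)$ is a boolean so \refrule{ExecIfA} or \refrule{ExecIfB} applies depending on $V'(t)$, and fold/unfold are always enabled. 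For \textsc{SGuardAssignField}/\refrule{ExecAssignField} and \textsc{SGuardFinish}, the new ingredient is that consuming $\kacc(x.f)$ (resp. $\gform$) with the checks satisfied gives, by Lemma~\ref{lem:cons-soundness}, $\assertion{\heap}{\perms}{\env}{\kacc(x.f)}$ (resp. $\assertion{\heap}{\perms}{\env}{\gform}$); combined with the framing of $e$ this discharges all premises of the dynamic rule. The $\initsym$ case uses \textsc{SGuardInit}/\refrule{ExecInit} and is immediate.

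The main obstacle is the method-call case (\textsc{SGuardCall}/\refrule{ExecCallEnter}), and secondarily the loop-entry case (\textsc{SGuardWhile}/\refrule{ExecWhileEnter} or \refrule{ExecWhileSkip}), because these are the places where the exclusion frame $\sperms$ --- here $\frem(\sstate'', \fpre(m))$ --- actually matters. The dynamic rule \refrule{ExecCallEnter} demands $\assertion{\heap}{\perms \setminus \xperms}{\env'}{\fpre(m)}$ where $\xperms = \vfoot{V'}{\heap}{\sperms}$, whereas consume soundness only directly gives $\assertion{\heap}{\perms}{\env'}{\fpre(m)}$. To bridge this I would invoke Lemma~\ref{lem:assert-after-rem}: since Lemma~\ref{lem:cons-soundness} also gives $\simstate{V'}{\sstate''}{\heap}{\perms \setminus \efoot{\heap}{\env'}{\fpre(m)}}{\env'}$, that lemma's conclusion is exactly $\assertion{\heap}{\perms \setminus \vfoot{V'}{\heap}{\frem(\sstate'', \fpre(m))}}{\env'}{\fpre(m)}$, i.e.\ $\assertion{\heap}{\perms \setminus \xperms}{\env'}{\fpre(m)}$. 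The value of $\perms'$ in \refrule{ExecCallEnter} is then the footprint $\foot{\heap}{\perms \setminus \xperms}{\env'}{\fpre(m)}$, which exists and is a subset of $\perms \setminus \xperms$ by Lemma~\ref{lem:foot-subset-spec}, so the new stack frame is well-defined and the rule fires. I also need to check the side conditions: each argument $e_i$ is evaluated and framed (from the chain of \textsc{SEval} judgements via Lemma~\ref{lem:seval-soundness}), and $\pc$ stays true along the chain by Lemma~\ref{lem:eval-subpath}, so $V'(\pc(\sstate''))=\ktrue$ propagates back to each intermediate state. The loop-entry case is analogous, with $\fpre(m)$ replaced by the invariant $\gform$ and with a further branch on whether $V'$ makes the loop condition true (giving \refrule{ExecWhileEnter}) or false (giving \refrule{ExecWhileSkip}); in both sub-cases the invariant assertion under the excluded permission set comes again from Lemma~\ref{lem:assert-after-rem}.
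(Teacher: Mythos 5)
Your proposal is correct and follows essentially the same route as the paper's proof: a case split matching the guard rules, feeding the state correspondence and the satisfied run-time checks into Lemmas \ref{lem:seval-soundness}, \ref{lem:cons-soundness}, and \ref{lem:produce-soundness}, and bridging the exclusion frame to the $\perms \setminus \xperms$ premise of \refrule{ExecCallEnter}/\refrule{ExecWhileEnter} via Lemma \ref{lem:assert-after-rem}. The only detail you gloss over is the $\kskip$ case, where the applicable dynamic rule (\refrule{ExecFinal}, \refrule{ExecCallExit}, or \refrule{ExecWhileFinish}) is determined by a subcase analysis on the partial validation of the stack tail, which is what supplies the fact that $\gform(\vstate)$ equals the pending method postcondition or loop invariant.
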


\begin{proof}
  We procede by cases on $\sguard{\vstate}{\sstate'}{\scheck}{\sperms}$.
  \begin{enumcases}
    \case \refrule{SGuardInit}: Result is trival by \refrule{ExecInit}.

    \case \refrule{SGuardSeq}: Then $\Gamma = \pair{\heap}{\triple{\perms}{\env}{\sseq{\kskip}{s}} \cdot \stack'}$ for some $\heap, \perms, \env, s, \stack'$, thus \\
    $\dexec{\heap}{\triple{\perms}{\env}{\sseq{\kskip}{s}} \cdot \stack'}{\vfoot{V'}{\heap}{\emptyset}}{\heap}{\triple{\perms}{\env}{s} \cdot \stack'}$ by \refrule{ExecSeq}, and the result is immediate from \refrule{ExecStep}.

    \case \refrule{SGuardAssign}: Then $\Gamma = \pair{\heap}{\triple{\perms}{\env}{\sseq{x = e}{s}} \cdot \stack'}$ for some $\heap, \perms, \env, \stack'$.

    By \refrule{SGuardAssign} $\seval{\sstate(\vstate)}{e}{t}{\sstate'}{\scheck}$ for some $t, \sstate', \scheck$. Also, $V' = V[\seval{\sstate}{e}{t}{\sstate'}{\scheck} \mid \heap]$. Since $\Gamma$ corresponds to $\vstate$, by definition \ref{def:vstate-corresponds} $\simstate{V}{\sstate(\vstate)}{\heap}{\perms}{\env}$. By assumptions $\rtassert{V'}{\heap}{\perms}{\scheck}$, and $V'(\pc(\sstate')) = \ktrue$. Then $\eval{\heap}{\env}{e}{V'(t)}$ and $\frm{\heap}{\perms}{\env}{e}$ by lemma \ref{lem:seval-soundness}.

    Therefore $\dexec{\heap}{\triple{\perms}{\env}{\sseq{x = e}{s}} \cdot \stack'}{\vfoot{V'}{\heap}{\emptyset}}{\heap}{\triple{\perms}{\env[x \mapsto V'(t)]}{s} \cdot \stack'}$ by \refrule{ExecAssign}, and the result is immediate from \refrule{ExecStep}.

    \case \refrule{SGuardAssignField}: Then $\Gamma = \pair{\heap}{\triple{\perms}{\env}{\sseq{x.f = e}{s}} \cdot \stack'}$ for some $\heap, \perms, \env, x, f, e, s, \stack'$.

    By \refrule{SGuardAssignField} $\seval{\sstate(\vstate)}{e}{t}{\sstate'}{\scheck_1}$ for some $t, \sstate', \scheck_1$, and $\scons{\sstate'}{\kacc(x.f)}{\sstate''}{\scheck_2}$ for some $\sstate'', \scheck_2$. Also, $V' = V[\seval{\sstate(\vstate)}{e}{t}{\sstate'}{\scheck_1} \mid \heap][\scons{\sstate'}{\kacc(x.f)}{\sstate''}{\scheck_2} \mid \heap]$.

    Since $\Gamma$ corresponds to $\vstate$, by definition \ref{def:vstate-corresponds} $\simstate{V}{\sstate(\vstate)}{\heap}{\perms}{\env}$. By assumptions $\rtassert{V'}{\heap}{\perms}{\scheck_1 \cup \scheck_2}$, thus $\rtassert{V'}{\heap}{\perms}{\scheck_1}$ and $\rtassert{V'}{\heap}{\perms}{\scheck_2}$ by lemma \ref{lem:scheck-monotonicity}, and $V'(\pc(\sstate'')) = \ktrue$, thus $V'(\pc(\sstate')) = \ktrue$ by lemma \ref{lem:cons-subpath}.
    
    Now $\eval{\heap}{\env}{e}{V'(t)}$ and $\frm{\heap}{\perms}{\env}{e}$ by lemma \ref{lem:seval-soundness}. Let $\ell = \env(x)$, then $\eval{\heap}{\env}{x}{\env(x)}$ by \refrule{EvalVar}. Finally, $\assertion{\heap}{\perms}{\env}{\kacc(x.f)}$ by lemma \ref{lem:cons-soundness}.

    Let $\heap' = \heap[\pair{\ell}{f} \mapsto V'(t)]$. Then $\dexec{\heap}{\triple{\perms}{\env}{\sseq{x.f = e}{s}} \cdot \stack'}{\vfoot{V'}{\heap}{\emptyset}}{\heap'}{\triple{\perms}{\env}{s} \cdot \stack'}$ by \refrule{ExecAssignField}, and the result is immediate from \refrule{ExecStep}.

    \case \refrule{SGuardAlloc}: Then $\Gamma = \pair{\heap}{\triple{\perms}{\env}{\sseq{x = \kalloc(S)}{s}} \cdot \stack'}$ for some $\heap, \perms, \env, x, S, s, \stack'$.

    Let $\ell = \ffresh$, $\multiple{T~f} = \fstruct(S)$, and $\heap' = \heap[\multiple{(\ell, f) \mapsto \fdefault(T)}]$. Then \\
    $\dexec{\heap}{\triple{\perms}{\env}{\sseq{x = \kalloc(S)}{s}} \cdot \stack'}{\vfoot{V'}{\heap}{\emptyset}}{\heap'}{\triple{\perms}{\env}{s} \cdot \stack'}$ by \refrule{ExecAlloc}, and the result is immediate from \refrule{ExecStep}.

    \case \refrule{SGuardCall}:
    Then $\Gamma = \pair{\heap}{\triple{\perms}{\env}{\sseq{y \kassign m(\multiple{e})}{s}}} \cdot \stack$ for some $\heap, \perms, \env, y, m, \multiple{e}, s, \stack$. Let $\multiple{x} = \fparams(m)$.

    By \refrule{SGuardCall} $\multiple{\seval{\sstate(\vstate)}{e}{t}{\sstate'}{\scheck}}$ for some $t, \sstate', \scheck$, and $\scons{\sstate'[\senv = [\multiple{x \mapsto t}]]}{\fpre(m)}{\sstate''}{\scheck'}$ for some $\sstate'', \scheck'$.

    Also, by definition $V' = V\multiple{[\seval{\sstate(\vstate)}{e}{t}{\sstate'}{\scheck} \mid \heap]}[\scons{\sstate'[\senv = \multiple{x \mapsto t}]}{\fpre(m)}{\sstate''}{\scheck'} \mid \heap]$.

    Since $\Gamma$ corresponds to $\vstate$, by definition \ref{def:vstate-corresponds} $\simstate{V}{\sstate(\vstate)}{\heap}{\perms}{\env}$. By assumptions, $V'(\pc(\sstate'')) = \ktrue$ thus $V'(\pc(\sstate')) = \ktrue$ by lemma \ref{lem:cons-subpath}, and $\rtassert{V'}{\heap}{\perms}{\scheck \cup \scheck'}$, thus $\rtassert{V'}{\heap}{\perms}{\scheck}$ and $\rtassert{V'}{\heap}{\perms}{\scheck'}$ by lemma \ref{lem:scheck-monotonicity}.

    Then $\multiple{\eval{\heap}{\env}{e}{V'(t)}}$, $\multiple{\frm{\heap}{\perms}{\env}{e}}$, and $\simstate{V'}{\sstate'}{\heap}{\perms}{\env}$ by lemma \ref{lem:seval-soundness}.

    Let $\senv' = [\multiple{x \mapsto t}]$ and $\env' = [\multiple{x \mapsto V'(t)}]$. Since $\simstate{V'}{\sstate'}{\heap}{\perms}{\env}$ and $\simenv{V'}{\senv'}{\env'}$ by construction, $\simstate{V'}{\sstate'[\senv = \senv']}{\heap}{\perms}{\env'}$.

    Let $\perms_V = \vfoot{V'}{\heap}{\frem(\sstate', \fpre(m))}$ and $\perms_E = \efoot{\heap}{\env}{\fpre(m)}$.

    As noted before, $\scons{\sstate'[\senv = \senv']}{\fpre(m)}{\sstate''}{\scheck'}$ by \refrule{SExecCall}. Also, $V'(\pc(\sstate'')) = \ktrue$ and $\rtassert{V'}{\heap}{\perms}{\scheck'}$. Therefore $\simstate{V'}{\sstate'}{\heap}{\perms \setminus \perms_E}{\env'}$ and $\assertion{\heap}{\perms}{\env}{\fpre(m)}$ by lemma \ref{lem:cons-soundness}.

    Then $\assertion{\heap}{\perms \setminus \perms_V}{\env}{\fpre(m)}$ by lemma \ref{lem:assert-after-rem}.

    Let $\perms' = \foot{\heap}{\perms \setminus \perms_V}{\env'}{\fpre(m)}$. Then $\dexec{\heap}{\triple{\perms}{\env}{\sseq{y \kassign m(\multiple{e})}{s}} \cdot \stack}{\perms_V}{\heap}{\triple{\perms'}{\env'}{\sseq{\fbody(m)}{\kskip}} \cdot \triple{\perms \setminus \perms'}{\env}{\sseq{y \kassign m(\multiple{e})}{s}} \cdot \stack}$ by \refrule{ExecCallEnter}, and the result is immediate from \refrule{ExecStep}.

    \case \refrule{SGuardAssert}:
    Then $\Gamma = \pair{\heap}{\triple{\perms}{\env}{\sseq{\sassert{\phi}}{s}} \cdot \stack}$ for some $\heap, \perms, \env, \phi, s, \stack$.

    By \refrule{SGuardAssert} $\scons{\sstate(\vstate)}{\simprecise{\phi}}{\sstate'}{\scheck}$, also $V' = V[\scons{\sstate(\vstate)}{\simprecise{\phi}}{\sstate'}{\scheck} \mid \heap]$.

    Since $\Gamma$ corresponds to $\vstate$, by definition \ref{def:vstate-corresponds} $\simstate{V}{\sstate(\vstate)}{\heap}{\perms}{\env}$. Also by assumptions, $\rtassert{V'}{\heap}{\perms}{\scheck}$, and $V'(\pc(\sstate')) = \ktrue$. Thus $\assertion{\heap}{\perms}{\env}{\phi}$ by lemma \ref{lem:cons-soundness} since $\simprecise{\phi}$ is a specification.
    
    Therefore $\dexec{\heap}{\triple{\perms}{\env}{\sseq{\sassert{\phi}}{s}} \cdot \stack}{\vfoot{V'}{\heap}{\emptyset}}{\heap}{\triple{\perms}{\env}{s} \cdot \stack}$ by \refrule{ExecAssert}, and the result is immediate from \refrule{ExecStep}.

    \case \refrule{SGuardFold}:
    Then $s(\vstate) = s(\Gamma) = \sfold{p(\multiple{e})}$ for some $p, \multiple{e}$. Thus \refrule{ExecFold} trivially applies, and the result is immediate from \refrule{ExecStep}.

    \case \refrule{SGuardUnfold}:
    Then $s(\vstate) = s(\Gamma) = \sunfold{p(\multiple{e})}$ for some $p, \multiple{e}$. Thus \refrule{ExecUnfold} trivially applies, and the result is immediate from \refrule{ExecStep}.

    \case \refrule{SGuardIf}:
    Then $s(\vstate) = s(\Gamma) = \sseq{\sif{e}{s_1}{s_2}}{s}$ for some $e, s_1, s_2$, and thus $\Gamma = \pair{\heap}{\triple{\perms}{\env}{\sseq{\sif{e}{s_1}{s_2}}{s}} \cdot \stack}$ for some $\heap, \perms, \env, \stack$.

    By \refrule{SGuardIf} $\seval{\sstate(\vstate)}{e}{t}{\sstate'}{\scheck}$ for some $t, \sstate', \scheck$, and also $V' = V[\seval{\sstate(\vstate)}{e}{t}{\sstate'}{\scheck} \mid \heap]$.

    Now by assumptions $V'(\pc(\sstate')) = \ktrue$ and $\rtassert{V'}{\heap}{\perms}{\scheck}$. Then $\eval{\heap}{\env}{e}{V'(t)}$ and $\frm{\heap}{\perms}{\env}{e}$ by lemma \ref{lem:seval-soundness}.

    Now, since we assume a well-typed program, $V'(t) = \ktrue$ or $\kfalse$. Then either \refrule{ExecIfA} or \refrule{ExecIfB} applies, and the result is immediate from \refrule{ExecStep}.

    \case \refrule{SGuardWhile}:
      Then $s(\vstate) = s(\Gamma) = \sseq{\swhile{e}{\gform}{s}}{s'}$ for some $e$, $\gform$, $s$, $s'$, and thus $\Gamma = \pair{\heap}{\triple{\perms}{\env}{\sseq{\swhile{e}{\gform}{s}}{s'}} \cdot \stack}$ for some $\heap$, $\perms$, $\env$, $\stack$.

      Let $\multiple{x} = \fmodified(s)$. By \refrule{SGuardWhile} $\scons{\sstate}{\gform}{\sstate'}{\scheck'}$, $\sproduce{\sstate'[\senv = \senv(\sstate')[\multiple{x \mapsto \ffresh}]]}{\gform}{\sstate''}$, and $\spceval{\sstate''}{e}{t}{\scheck''}$. Then by definition \ref{def:sguard-valuation} $V'$ extends the corresponding valuation for these judgements.

      By assumptions $V'(\pc(\sstate'')) = \ktrue$, thus $V'(\pc(\sstate')) = \ktrue$ by lemma \ref{lem:produce-subpath}.

      Also by assumptions $\rtassert{V'}{\heap}{\perms}{\scheck' \cup \scheck''}$, thus $\rtassert{V'}{\heap}{\perms}{\scheck'}$ and $\rtassert{V'}{\heap}{\perms}{\scheck''}$ by lemma \ref{lem:scheck-monotonicity}.

      Therefore $\simstate{V'}{\sstate'}{\heap}{\perms \setminus \efoot{\heap}{\perms}{\gform}}{\env}$ and $\assertion{\heap}{\perms}{\env}{\gform}$.

      Let $\xperms = \vfoot{V'}{\heap}{\frem(\sstate', \gform)}$. Then by lemma \ref{lem:assert-after-rem} $\assertion{\heap}{\perms \setminus \xperms}{\env}{\gform}$.

      Let $\multiple{t}$ be the list of fresh values used in $\sproduce{\sstate'[\senv = \senv(\sstate')[\multiple{x \mapsto \ffresh}]]}{\gform}{\sstate''}$ when applying \refrule{SGuardWhile}. Let $\senv' = \senv(\sstate')[\multiple{x \mapsto t}]$. Then by definition \ref{def:sguard-valuation}, and since $\simenv{V'}{\senv(\sstate')}{\env}$, for each pair of $x$ and $t$, $V'(\senv'(x)) = V'(t) = V'(\senv(\sstate')(x)) = \env(x)$.
      
      Therefore $\simenv{V'}{\senv'}{\env}$, and thus $\simstate{V'}{\sstate'[\senv = \senv']}{\heap}{\perms \setminus \efoot{\heap}{\perms}{\gform}}{\env}$.

      Now by lemma \ref{lem:produce-soundness} $\simstate{V'}{\sstate''}{\heap}{\perms}{\gform}$.

      Also, as noted before, $\rtassert{V'}{\heap}{\perms}{\scheck''}$. Therefore $\eval{\heap}{\env}{e}{V'(t)}$ by lemma \ref{lem:pc-eval-soundness}.

      Let $\perms' = \efoot{\heap}{\perms \setminus \xperms}{\env}$.

      Now, since we assume the program to be properly typed, one of the following subcases apply:

      \subcase $V'(t) = \ktrue$: Then by \refrule{ExecWhileSkip} $\dexec{\heap}{\triple{\perms}{\env}{\sseq{\swhile{e}{\gform}{s}}{s'}} \cdot \stack}{\xperms}{\heap}{\triple{\perms'}{\env}{\sseq{s}{\kskip}} \cdot \triple{\perms \setminus \perms'}{\env}{\sseq{\swhile{e}{\gform}{s}}{s'}} \cdot \stack}$ and the result is immediate from \refrule{ExecStep}.

      \subcase $V'(t) = \kfalse$: Then by \refrule{ExecWhileSkip} $\dexec{\prog}{\heap}{\triple{\perms}{\env}{\sseq{\swhile{e}{\gform}{s}}{s'}} \cdot \stack}{\xperms}{\heap}{\triple{\perms}{\env}{s} \cdot \stack}$ and the result is immediate from \refrule{ExecStep}.

    \case \refrule{SGuardFinish}:
    Then $s(\vstate) = s(\Gamma) = \kskip$ and thus $\Gamma = \pair{\heap}{\triple{\perms}{\env}{\kskip} \cdot \stack}$ for some $\heap, \perms, \env, \stack$.

    By \refrule{SGuardFinish} $\scons{\sstate(\vstate)}{\gform(\vstate)}{\sstate'}{\scheck}$, and $V' = V[\scons{\sstate(\vstate)}{\gform(\vstate)}{\sstate'}{\scheck} \mid \heap]$. By assumptions, $V'(\pc(\sstate')) = \ktrue$ and $\rtassert{V'}{\heap}{\perms}{\scheck}$. Therefore $\assertion{\heap}{\perms}{\env}{\gform(\vstate)}$.

    Since $\Gamma$ is a valid state, the partial state $\pair{\heap}{\stack}$ must be validated by $\vstate$ and $V$, thus one of the following subcases must apply:

    \subcase $\stack = \nilsym$ -- then $\Gamma = \pair{\heap}{\triple{\perms}{\env}{\kskip} \cdot \nilsym}$. Then \refrule{ExecFinal} trivially applies to yield the result.

    \subcase $\stack = \triple{\perms_0}{\env_0}{\sseq{m(\multiple{e})}{s_0}} \cdot \stack'$ for some $\perms_0, \env_0, m, \multiple{e}, s_0, \stack'$ and $\gform(\vstate) = \fpost(m)$.

    Since $\gform(\vstate) = \fpost(m)$, $\assertion{\heap}{\perms}{\env}{\fpost(m)}$. Then \refrule{ExecCallExit} applies, and the result is immediate from \refrule{ExecStep}.

    \subcase $\stack = \triple{\perms_0}{\env_0}{\sseq{\swhile{e}{\gform}{s_0}}{s_0'}} \cdot \stack'$ for some $\perms_0, \env_0, e, \gform, s_0, s_0', \stack'$ and $\gform(\vstate) = \gform$.

    Since $\gform(\vstate) = \fpost(m)$, $\assertion{\heap}{\perms}{\env}{\gform}$. Then \refrule{ExecWhileFinish} applies, and the result is immediate from \refrule{ExecStep}.

  \end{enumcases}
\end{proof}

\begin{theorem}[Progress, part 2]\label{thm:guard-progress}
  Let $\Gamma$ be some well-formed dynamic state validated by $\vstate$ and valuation $V$. Then if $\Gamma \ne \finalsym$,
  $$\vstate \rightharpoonup \sstate', \scheck, \sperms$$
  for some $\sstate'$, $\scheck$, $\sperms$ such that $V'(\pc(\sstate')) = \ktrue$ where $V'$ is the corresponding valuation extending $V'$.

  In other words, there is always some matching guard that computes the necessary checks.
\end{theorem}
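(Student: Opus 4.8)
The proof goes by a case analysis mirroring the guard rules, organized by the statement at $\vstate$. Since $\Gamma$ is validated by $\vstate$ and $V$ (definition \ref{def:state-valid}) and $\Gamma \ne \finalsym$, the correspondence clause (definition \ref{def:vstate-corresponds}) leaves two possibilities: either $\vstate = \initsym$, or $\vstate = \triple{\sstate}{s}{\gform}$ with $\Gamma = \pair{\heap}{\triple{\perms}{\env}{s} \cdot \stack}$, $\simstate{V}{\sstate}{\heap}{\perms}{\env}$, and in particular $V(\pc(\sstate)) = \ktrue$. If $\vstate = \initsym$, then $\refrule{SGuardInit}$ applies with intermediate state $\quintuple{\bot}{\emptyset}{\emptyset}{\emptyset}{\ktrue}$ and, by definition \ref{def:sguard-valuation}, corresponding valuation $V' = V$, so $V'(\ktrue) = \ktrue$ trivially.

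In the remaining case, well-formedness of $\Gamma$ gives $s = \kskip$ or $s = s_1; s_2$; in the latter situation I apply lemma \ref{lem:stmt-rearrangement} to rewrite $s$ so that its head statement is not itself a sequence, and then case on that head statement. Each form is matched with its unique guard rule: $\kskip$ with $\refrule{SGuardSeq}$ (and the bare $s = \kskip$ with $\refrule{SGuardFinish}$), $x \kassign e$ with $\refrule{SGuardAssign}$, $x.f \kassign e$ with $\refrule{SGuardAssignField}$, $x \kassign \salloc{S}$ with $\refrule{SGuardAlloc}$, $y \kassign m(\multiple{e})$ with $\refrule{SGuardCall}$, $\sassert{\phi}$ with $\refrule{SGuardAssert}$, $\sfold{p(\multiple{e})}$ and $\sunfold{p(\multiple{e})}$ with $\refrule{SGuardFold}$ and $\refrule{SGuardUnfold}$, $\sif{e}{s_1}{s_2}$ with $\refrule{SGuardIf}$, and $\swhile{e}{\gform}{s}$ with $\refrule{SGuardWhile}$. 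For $\kskip$ and $\salloc$ the intermediate state is $\sstate$ itself, so $V' = V$ and the claim is inherited from $V(\pc(\sstate)) = \ktrue$. For the other cases I discharge the premises of the guard rule using the relevant progress lemmas — lemma \ref{lem:eval-progress} for the (possibly iterated) short-circuiting evaluations, lemma \ref{lem:pc-eval-progress} for the deterministic evaluations inside consumes and for the loop condition, and lemma \ref{lem:cons-progress} for the consumes appearing in $\refrule{SGuardAssignField}$, $\refrule{SGuardCall}$, $\refrule{SGuardAssert}$, $\refrule{SGuardFold}$, $\refrule{SGuardUnfold}$, $\refrule{SGuardWhile}$, $\refrule{SGuardFinish}$ — threading $V(\pc(\sstate)) = \ktrue$ through each chain, and then checking that the composite valuation obtained by composing these lemmas is precisely the one prescribed by definition \ref{def:sguard-valuation}, so that each intermediate ``path condition true'' fact is exactly what the corresponding lemma delivers. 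Lemmas \ref{lem:eval-subpath}, \ref{lem:cons-subpath} (and \ref{lem:produce-subpath} in the while case) are used to propagate truth of the final path condition backwards along each chain. Well-typedness enters only to observe that a boolean guard expression evaluates to $\ktrue$ or $\kfalse$ in $\refrule{SGuardIf}$ and $\refrule{SGuardWhile}$, without regard to which.

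The one genuinely delicate case is $\refrule{SGuardWhile}$: it is the sole guard rule that follows a consume of the loop invariant $\gform$ with a subsequent produce of $\gform$ into the havoc'd state $\sstate'[\senv = \senv(\sstate')[\multiple{x \mapsto \ffresh}]]$. Consume progress (lemma \ref{lem:cons-progress}) delivers the first step with a satisfied path condition, but produce progress (lemma \ref{lem:produce-progress}) additionally needs an assertion of $\gform$ and a symbolic state corresponding to it, neither of which is handed to us directly, since validity of $\Gamma$ does not by itself say the loop invariant holds. The plan is to exploit the way definition \ref{def:sguard-valuation} fixes the havoc values: each fresh $t$ is sent to $V_0(\senv(\sstate')(x))$, the value the variable already carried, where $V_0$ is the consume's corresponding valuation. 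Consequently the havoc'd state still models $\heap(\Gamma)$ and $\env(\Gamma)$ under $V_0$ as far as its store is concerned, and by relating the produce derivation to the consume derivation we just built — in effect a round-trip property stating that a specification that was consumed (with a corresponding valuation $V_0$ making the result's path condition true) can be re-produced with its conditional branch choices and fresh-value valuations aligned with $V_0$ — we recover a produce derivation whose resulting path condition is again true under the extended valuation, which is exactly what $\refrule{SGuardWhile}$ requires. Pinning down this matching between consuming and re-producing the same invariant is the crux; once it is in hand, the rest is routine bookkeeping over the twelve guard rules.
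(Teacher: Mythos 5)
Your overall architecture is the same as the paper's: dispose of $\vstate = \initsym$ via \refrule{SGuardInit}, otherwise read off $\simstate{V}{\sstate}{\heap}{\perms(\Gamma)}{\env(\Gamma)}$ (hence $V(\pc(\sstate)) = \ktrue$) from validity, normalize sequences with lemma \ref{lem:stmt-rearrangement}, and match each head statement to its unique guard rule, discharging the premises with lemmas \ref{lem:eval-progress}, \ref{lem:pc-eval-progress}, and \ref{lem:cons-progress} while checking that the chained valuations are the ones prescribed by definition \ref{def:sguard-valuation}. Eleven of the twelve cases in your plan coincide with the paper's proof essentially line for line.

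The divergence is exactly where you flag it: \refrule{SGuardWhile}. The paper at this point simply invokes lemma \ref{lem:produce-progress} on the havoc'd post-consume state, after observing that mapping each fresh $t$ to $V_1(\senv(\sstate')(x))$ keeps the path condition true. You are right that this is not a free move --- lemma \ref{lem:produce-progress} hypothesizes $\assertion{\heap}{\perms}{\env}{\gform}$ and a corresponding state, and nothing in the validity of $\Gamma$ guarantees the loop invariant holds dynamically, since Progress part 2 is meant to apply even when the computed checks will fail. So your diagnosis is sharper than the paper's treatment. However, your proposed repair --- a consume-then-reproduce ``round-trip'' lemma aligned with the consume's corresponding valuation --- is left entirely unproved, you acknowledge it is the crux, and it is not obviously true as stated. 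Consider a precise state in which a boolean conjunct $e$ of $\gform$ is consumed via \refrule{SConsumeValueFailure}: the check set gains $\bot$ but the path condition is \emph{not} extended with the symbolic value $t$ of $e$, so when the subsequent produce adds $t$ to the path condition via \refrule{SProduceExpr} there is no reason $V'(t) = \ktrue$, and produce of expressions offers no nondeterministic choice to route around this. Any round-trip lemma therefore needs either an additional hypothesis (e.g., restricting to the regime where Progress part 1 will actually be applied, so the checks are satisfiable) or a careful accounting of the failure cases. Until that lemma is pinned down, the hardest case of the theorem remains open in your proposal --- though, to be fair, the paper's own appeal to lemma \ref{lem:produce-progress} papers over the same difficulty rather than resolving it.
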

\begin{proof}
  First, if $\Gamma = \initsym$, then \refrule{SGuardInit} applies to yield the desired result.

  Otherwise, $\Gamma = \pair{\heap}{\stack}$ for some non-empty stack. Therefore $\vstate = \triple{\sstate}{s}{\gform}$ for some $\sstate$, $s$, and $\gform$ such that $\simstate{V}{\sstate}{\heap}{\perms(\Gamma)}{\env(\Gamma)}$ and $s = s(\Gamma)$.
  
  Then one of the following cases apply since $\stack$ is a well-formed stack:

  \begin{enumcases}
    \case $s = \kskip$: By lemma \ref{lem:cons-progress} $\scons{\sstate}{\gform}{\sstate'}{\scheck}$ for some $\sstate'$, $\scheck$ such that $V'(\pc(\sstate')) = \ktrue$ where $V'$ is the corresponding valuation. Then by \refrule{SGuardFinish} $\sguard{\vstate}{\sstate'}{\scheck}{\emptyset}$ and $V'$ is the corresponding valuation for this judgement.

    \case $s = \sseq{s'}{s''}$ for some $s'$, $s''$: We complete the proof by proving the following statement by induction on the syntax form of $s'$:

    If $s = \sseq{s'}{s''}$ then $\vstate \rightharpoonup \sstate', \scheck, \sperms$ for some $\sstate'$, $\scheck$, $\sperms$ such that $V'(\pc(\sstate')) = \ktrue$ where $V'$ is the corresponding valuation extending $V'$.

    \subcase $s' = \sseq{s_1}{s_2}$:
      By lemma \ref{lem:stmt-rearrangement}, $s' = \sseq{s_1'}{s_2'}$ where $s_1'$ is not a sequence statement. Then $\sseq{s}{s'} = \sseq{s_1'}{\sseq{s_2'}{s'}}$.

      Then the inductive hypothesis applies, which completes the proof.

    \subcase $s' = \kskip$:
      Then $\sguard{\vstate}{\sstate}{\emptyset}{\emptyset}$ by \textsc{SGuardSeq}.

    \subcase $s' = x \kassign e$:
      By lemma \ref{lem:eval-progress}, $\seval{\sstate}{e}{\_}{\sstate'}{\scheck}$ for some $\sstate'$ and $\scheck$ such that $V_1(\pc(\sstate')) = \ktrue$ for the corresponding valuation $V_1$.

      Then $\sguard{\vstate}{\sstate'}{\scheck}{\emptyset}$ by \refrule{SGuardAssign}. By definition the corresponding valuation extends $V_1$, thus $V'(\pc(\sstate')) = \ktrue$.

    \subcase $x.f \kassign e$:

      By lemma \ref{lem:eval-progress} $\seval{\sstate}{e}{\_}{\sstate'}{\scheck'}$ for some $\sstate'$ and $\scheck'$ such that $V_1(\pc(\sstate')) = \ktrue$ where $V_1$ is the corresponding valuation extending $V$.

      By lemma \ref{lem:cons-progress} $\scons{\sstate'}{\kacc(x.f)}{\sstate''}{\scheck''}$ for some $\sstate''$ and $\scheck''$ such that $V_2(\pc(\sstate'')) = \ktrue$ for corresponding valuation $V_2$, with initial valuation $V_1$.

      Then $\sguard{\vstate}{\sstate''}{\scheck' \cup \scheck''}{\emptyset}$ by \refrule{SGuardAssignField}. By definition the corresponding valuation $V'$ extends $V_2$, therefore $V'(\pc(\sstate'')) = \ktrue$.

    \subcase $x = \salloc{S}$:

      Then $\sguard{\vstate}{\sstate}{\emptyset}{\emptyset}$ by \refrule{SExecGuardAlloc}.

    \subcase $y \kassign m(e_1, \cdots, e_n)$:

      Let $\sstate_0 = \sstate$ and $V_0 = V$, then for each $e_i$, $\seval{\sstate_{i-1}}{e}{t_i}{\sstate_i}{\_}$ for some $\sstate_i$ such that $V_i(\pc(\sstate_i)) = \ktrue$ for corresponding valuation $V_i$, with initial valuation $V_{i-1}$, by lemma \ref{lem:eval-progress}.

      Let $x_1, \cdots, x_n = \fparams(m)$. By lemma \ref{lem:cons-progress}, $\scons{\sstate_n[\senv = [\multiple{x_i \mapsto t_i}]]}{\fpre(m)}{\sstate'}{\_}$ for some $\sstate'$ such that $V'(\pc(\sstate')) = \ktrue$ for corresponding valuation $V'$, with initial valuation $V_n$.

      Then $\sguard{\vstate}{\sstate'[\senv = \senv(\sstate)]}{\scheck_1 \cup \cdots \cup \scheck_n \cup \scheck'}{\frem(\sstate'', \fpre(m))}$ by \refrule{SGuardCall} and $V'$ is the corresponding valuation

    \subcase $\sassert{\gform}$:

      By lemma \ref{lem:cons-progress} $\scons{\sstate}{\gform}{\sstate'}{\scheck}$ for some $\sstate'$ and $\scheck$ where $V'(\sstate') = \ktrue$ for the corresponding valuation $V'$.

      Then by \refrule{SGuardAssert} $\sguard{\vstate}{\sstate'}{\scheck}{\emptyset}$ and $V'$ is the corresponding valuation.

    \subcase $\sif{e}{s_1}{s_2}$:

      By lemma \ref{lem:eval-progress} $\seval{\sstate}{e}{t}{\sstate'}{\scheck}$ for some $t$, $\sstate'$ such that $V'(\pc(\sstate')) = \ktrue$ where $V'$ is the corresponding valuation.

      Then $\sguard{\vstate}{\sstate'}{\scheck}{\emptyset}$ by \refrule{SGuardIf} and $V'$ is the corresponding valuation.

    \subcase $\swhile{e}{\gform}{s}$ for some $e$, $\gform$, $s$:

      By lemma \ref{lem:cons-progress} $\scons{\sstate}{\gform}{\sstate'}{\scheck'}$ for some $\sstate'$ and $\scheck'$ such that $V_1(\pc(\sstate')) = \ktrue$ where $V'$ is the corresponding valuation.

      Let $\multiple{x} = \fmodified(s)$ and $\sstate'' = \sstate'[\senv = \senv(\sstate')[\multiple{x \mapsto \ffresh}]]$.
      
      Let $V_2 = V_1[\multiple{t \mapsto V_1(\senv(\sstate')(x))}]$. Then $V_2(\pc(\sstate'')) = V_1(\pc(\sstate')) = \ktrue$.
      
      Then by lemma \ref{lem:produce-progress} $\sproduce{\sstate''}{\gform}{\sstate'''}$ for some $\sstate'''$ such that $V_3(\pc(\sstate''')) = \ktrue$ where $V_3$ is the corresponding valuation extending $V_2$.
      
      By lemma \ref{lem:pc-eval-progress} $\spceval{\sstate'''}{e}{t}{\_}$ for some $t$. Let $V'$ be the corresponding valuation extending $V_3$, thus $V'(\pc(\sstate''')) = V_3(\pc(\sstate''')) = \ktrue$.

      Then $\sguard{\vstate}{\sstate'[\pc = \pc(\sstate''')]}{\scheck' \cup \scheck''}{\frem(\sstate', \gform)}$ by \textsc{SGuardWhile} and $V'$ is the corresponding valuation.

    \subcase $\sfold{p(e_1, \cdots, e_n)}$:

      Let $\sstate_0 = \sstate$ and $V_0 = V$. For each $e_i$, $\seval{\sstate_{i-1}}{e}{t_i}{\sstate_i}{\scheck_i}$ by lemma \ref{lem:eval-progress} for some $\sstate_i$ and $\scheck_i$ such that $V_i(\pc(\sstate_i)) = \ktrue$ where $V_i$ is the corresponding valuation.

      Let $x_1, \cdots, x_n = \fpredparams(p)$. By lemma \ref{lem:cons-progress} $\scons{\sstate_n[\senv = [\multiple{x_i \mapsto t_n}]]}{\fpred(p)}{\sstate'}{\scheck'}$ for some $\sstate'$ and $\scheck'$ such that $V'(\pc(\sstate')) = \ktrue$ where $V'$ is the corresponding valuation extending $V_n$.

      Then $\sguard{\vstate}{\sstate'[\senv = \senv(\sstate)]}{\scheck_1 \cup \cdots \cup \scheck_n \cup \scheck'}{\emptyset}$ by \refrule{SGuardFold} and $V'$ is the corresponding valuation.

    \subcase $\sunfold{p(e_1, \cdots, e_n)}$:

      Let $\sstate_0 = \sstate$ and $V_0 = V$. For each $e_i$, $\seval{\sstate_{i-1}}{e}{t_i}{\sstate_i}{\scheck_i}$ by lemma \ref{lem:eval-progress} for some $\sstate_i$ and $\scheck_i$ such that $V_i(\pc(\sstate_i)) = \ktrue$ where $V_i$ is the corresponding valuation.

      By lemma \ref{lem:cons-progress} $\scons{\sstate_n}{p(e_1, \cdots, e_n)}{\sstate'}{\scheck'}$ for some $\sstate'$ and $\scheck'$ such that $V'(\pc(\sstate')) = \ktrue$ where $V'$ is the corresponding valuation extending $V_n$.

      Then $\sguard{\vstate}{\sstate'}{\scheck_1 \cup \cdots \cup \scheck_n \cup \scheck'}{\emptyset}$ by \refrule{SGuardUnfold} and $V'$ is the corresponding valuation.

  \end{enumcases}

\end{proof}

\subsection{Preservation}


\begin{lemma}\label{lem:preservation-heap-env-unchanged}
  Suppose $\Gamma = \pair{\heap}{\triple{\perms}{\env}{s} \cdot \stack}$ and $\Gamma' = \pair{\heap}{\triple{\perms'}{\env}{s'} \cdot \stack}$.

  If $\Gamma$ is validated by $\vstate$ and $V$, $\dtrans{\prog}{\Gamma}{\_}{\Gamma'}$ with valuation $V'$, and $\strans{\prog}{\vstate}{\vstate'}$ for some $\vstate'$ such that $\vstate'$ corresponds to $\Gamma'$, $\gform(\vstate') = \gform(\vstate)$, and $\dom(\senv(\vstate')) \supseteq \dom(\senv(\vstate))$, then $\Gamma'$ is a valid state.
\end{lemma}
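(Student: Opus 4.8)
The plan is to prove that $\Gamma'$ is validated by $\vstate'$ and $V'$ itself, which by definition makes $\Gamma'$ a valid state. I take $V'$ to extend $V$ (it is the corresponding valuation for the step), and I first unpack the hypothesis that $\vstate'$ corresponds to $\Gamma'$: by Definition~\ref{def:vstate-corresponds}, writing $\Gamma' = \pair{\heap}{\triple{\perms'}{\env}{s'} \cdot \stack}$, this gives $\simstate{V'}{\sstate(\vstate')}{\heap}{\perms'}{\env}$ and $s(\vstate') = s'$; in particular $\simenv{V'}{\senv(\vstate')}{\env}$ and $V'$ is defined on every symbolic value occurring in $\sstate(\vstate')$. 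Then I verify the three clauses of Definition~\ref{def:state-valid} for $(\vstate',V')$ in turn.

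For clause~\ref{def:state-valid-reachable} (reachability): since $\Gamma$ is validated by $\vstate$ and $V$, $\vstate$ is reachable from $\prog$ with $V$; composing with $\strans{\prog}{\vstate}{\vstate'}$ gives reachability of $\vstate'$ by Definition~\ref{def:vstate-reachable}, and $V'$ is defined on all symbolic values of $\vstate'$ because these occur only in $\sstate(\vstate')$ (handled by the correspondence just extracted; the components $s(\vstate')$ and $\gform(\vstate') = \gform(\vstate)$ carry no symbolic values). Clause~\ref{def:state-valid-correspond} (correspondence of $\Gamma'$ with $\vstate'$) is exactly the hypothesis.

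The real work is clause~\ref{def:state-valid-partial}: the partial state $\pair{\heap}{\stack}$ must be validated by $\vstate'$ and $V'$. Because $\Gamma$ is validated by $\vstate$ and $V$, the \emph{same} partial state $\pair{\heap}{\stack}$ is already validated by $\vstate$ and $V$, so I do case analysis on which branch of Definition~\ref{def:partial-valid} applies, reusing all of its existential witnesses (the sub-stack's verification state and valuation $V''$, the symbolic values $t_i$, the intermediate states, etc.) and re-checking only the clauses that mention $\vstate$ or $V$ directly. The $\nilsym$ branch (\ref{def:partial-valid-nil}) needs nothing. In the call branch (\ref{def:partial-valid-call}) the only clauses referring to $(\vstate,V)$ are the post-condition identity $\gform(\vstate) = \fpost(m)$, which transfers via $\gform(\vstate') = \gform(\vstate)$, and the parameter equalities \eqref{eq:partial-valid-call-params}; for the latter I exploit that $\Gamma$ and $\Gamma'$ share the runtime environment $\env$, so both $\simenv{V}{\senv(\vstate)}{\env}$ and $\simenv{V'}{\senv(\vstate')}{\env}$ hold, and since each parameter $x_i$ lies in $\dom(\senv(\vstate)) \subseteq \dom(\senv(\vstate'))$ this yields $V'(\senv(\vstate')(x_i)) = \env(x_i) = V(\senv(\vstate)(x_i)) = V''(t_i)$, the required ported equality. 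The while branch (\ref{def:partial-valid-while}) is analogous, with the loop-invariant identity $\gform(\vstate') = \gform(\vstate)$ playing the role of the post-condition identity.

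\textbf{Main obstacle.} The delicate point is this last clause, and specifically being precise about which sub-clauses of Definition~\ref{def:partial-valid} depend on $(\vstate,V)$ as opposed to on the sub-stack's independently chosen witnesses. In particular, if clause~\eqref{eq:partial-valid-while-sim} is read so that the consume is performed from $\sstate(\vstate)$ rather than from the loop frame's own (existentially chosen) verification state, then it does not transfer verbatim, and I would have to transport it from $\sstate(\vstate)$ to $\sstate(\vstate')$ using the consume soundness and progress lemmas (\ref{lem:cons-soundness}, \ref{lem:cons-progress}) together with the observation that every dynamic execution rule to which this lemma can apply (e.g.\ the sequence, assert, conditional, loop-skip, fold and unfold rules) leaves the head-frame permissions unchanged, so that $\sstate(\vstate)$ and $\sstate(\vstate')$ model the same triple $\triple{\heap}{\perms}{\env}$ under $V$ and $V'$ respectively and differ only by a more specific path condition. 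The remaining reasoning is straightforward bookkeeping against the definitions.
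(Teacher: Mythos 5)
Your proposal is correct and follows essentially the same route as the paper: verify the three clauses of Definition~\ref{def:state-valid} for $(\vstate',V')$, with reachability and correspondence immediate from the hypotheses, and the partial-state clause handled by reusing the existing existential witnesses and re-deriving only the parameter equalities (via $V'(\senv(\vstate')(x_i)) = \env(x_i) = V(\senv(\vstate)(x_i)) = V''(t_i)$, using the shared $\env$ and the domain inclusion) and the $\gform$ identities. Your worry about clause~\eqref{eq:partial-valid-while-sim} is moot under the paper's intended reading, in which the consume is performed from the loop frame's own existentially chosen verification state rather than from $\sstate(\vstate)$, so that clause transfers verbatim exactly as in your main argument.
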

\begin{proof}
  By definition, it suffices to show that $\Gamma'$ is validated by $\vstate'$ and $V'$.

  \textit{Part \ref{def:state-valid-reachable}:}
  By assumptions, $\vstate$ is reachable and $\strans{\prog}{\vstate}{\vstate'}$, thus $\vstate'$ is reachable with valuation $V'$.

  \textit{Part \ref{def:state-valid-correspond}:}
  By assumptions $\vstate'$ corresponds to $\Gamma'$ with $V'$.

  \textit{Part \ref{def:state-valid-partial}:}
  Since $\Gamma$ validated by $\vstate$ and $V$, the partial state $\pair{\heap}{\stack}$ is validated by $\vstate$ and $V$. Therefore one of the following cases apply:

  \textit{Case \ref{def:partial-valid-nil}:}
  Then $\stack = \nilsym$ and trivially the partial state $\pair{\heap}{\nilsym}$ is validated by $\vstate'$ and $V'$.

  \textit{Case \ref{def:partial-valid-call}:}
  Then $\stack$ is of the form $\triple{\perms'}{\env'}{\sseq{y \kassign m(e_1, \cdots, e_k)}{s'}} \cdot \stack'$ and, for some $\vstate''$, $V''$, $x_1, \cdots, x_k$, $t_1, \cdots, t_k$, $\sstate_0, \cdots, \sstate_k$ and $\sstate'$,
  \begin{gather*}
    \text{The partial state $\pair{\heap}{\stack'}$ is validated by $\vstate''$ and $V''$}, \\
    \vstate'' \text{ is reachable from $\prog$ with valuation $V''$}, \quad s(\vstate'') = s(\stack), \\
    x_1, \cdots, x_k = \fparams(m), \\
    \sstate_0 = \sstate(\vstate''), \quad \seval{\sstate_0}{e_1}{t_1}{\sstate_1}{\_}, \quad\cdots,\quad \seval{\sstate_{k-1}}{e_k}{t_k}{\sstate_k}{\_}, \\
    \universal{1 \le i \le k}{V(\senv(\vstate)(x_i)) = V''(t_i)}, \\
    \scons{\sstate_k}{\fpre(m)}{\sstate'}{\_}, \quad \simstate{V''}{\sstate'[\senv = \senv(\sstate_0)]}{\heap}{\perms'}{\env'}, \quad\text{and} \\
    \gform(\vstate) = \fpost(m)
  \end{gather*}

  Now immediately from above,
  \begin{gather*}
    \text{The partial state $\pair{\heap}{\stack'}$ is validated by $\vstate''$ and $V''$}, \\
    \vstate'' \text{ is reachable from $\prog$ with valuation $V''$}, \quad s(\vstate'') = s(\stack), \\
    x_1, \cdots, x_k = \fparams(m), \\
    \sstate_0 = \sstate(\vstate''), \quad \seval{\sstate_0}{e_1}{t_1}{\sstate_1}{\_}, \quad\cdots,\quad \seval{\sstate_{k-1}}{e_k}{t_k}{\sstate_k}{\_}, \\
    \scons{\sstate_k}{\fpre(m)}{\sstate'}{\_}, \quad \simstate{V''}{\sstate'[\senv = \senv(\sstate_0)]}{\heap}{\perms}{\env}, \quad\text{and}
  \end{gather*}

  Also, $\simenv{V}{\senv(\vstate)}{\env}$, $\simenv{V'}{\senv(\vstate')}{\env}$, and $\dom(\senv(\vstate')) \supseteq \dom(\senv(\vstate))$, and thus
  $$\universal{1 \le i \le k}{V'(\senv(\vstate')(x_i)) = \env(x_i) = V(\senv(\vstate)(x_i)) = V''(t_i)}.$$

  Also, by assumptions $\gform(\vstate') = \gform(\vstate) = \fpost(m)$.

  Therefore the partial state $\pair{\heap}{\triple{\perms'}{\env'}{\sseq{y \kassign m(e_1, \cdots, e_k)}{s'}} \cdot \stack'}$ is validated by $\vstate'$ and $V'$.

  \textit{If case \ref{def:partial-valid-while} applies:}
  Then $\stack$ is of the form $\triple{\env'}{\perms'}{\sseq{\swhile{e}{\gform}{s}}{s'}} \cdot \stack^*$ and there exists some $\vstate''$, $V''$, and $\sstate'$ such that:
  \begin{gather*}
    \text{The partial state $\pair{\heap}{\stack^*}$ is validated by $\vstate''$ and $V''$} \\
    \vstate'' \text{ is reachable from $\prog$ with valuation $V''$}, \quad s(\vstate'') = s(\stack) \\
    \scons{\sstate(\vstate'')}{\gform}{\sstate'}{\_},
    \simstate{V'}{\sstate'}{\heap}{\perms'}{\env'}, \quad\text{and} \\
    \gform(\vstate) = \gform
  \end{gather*}

  Now $\gform(\vstate') = \gform(\vstate) = \gform$. Then, with the conditions listed above, the partial state \\
  $\pair{\heap}{\triple{\env'}{\perms'}{\sseq{\swhile{e}{\gform}{s}}{s'}} \cdot \stack^*}$ is validated by $\vstate'$ and $V'$.

  Therefore $\Gamma'$ is validated by $\vstate'$.
\end{proof}

\begin{lemma}\label{lem:pres-add-heap}
  If $\simstate{V}{\sstate}{\heap}{\perms \setminus \vfoot{V}{\heap}{\triple{t}{f}{t'}}}{\env}$, $\pair{V(t)}{f} \in \perms$, and $\heap(V(t), f) = V(t')$, then $\simstate{V}{\sstate[\sheap = \sheap(\sstate); \triple{t}{f}{t'}]}{\heap}{\perms}{\env}$.
\end{lemma}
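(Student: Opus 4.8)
The plan is to unfold the definition of state correspondence \eqref{eq:sstate-correspondence} for the enlarged state $\sstate' := \sstate[\sheap = \sheap(\sstate); \triple{t}{f}{t'}]$ and verify each of its four conjuncts. Write $h$ for the added field chunk $\triple{t}{f}{t'}$ and $\perms_h := \vfoot{V}{\heap}{h} = \set{\pair{V(t)}{f}}$, so the hypothesis reads $\simstate{V}{\sstate}{\heap}{\perms \setminus \perms_h}{\env}$. Since $\sstate'$ and $\sstate$ differ only in their precise-heap component, three of the four conjuncts are immediate: $\senv(\sstate') = \senv(\sstate)$ gives $\simenv{V}{\senv(\sstate')}{\env}$; $\pc(\sstate') = \pc(\sstate)$ gives $V(\pc(\sstate')) = \ktrue$; and $\oheap(\sstate') = \oheap(\sstate)$, so $\simheap{V}{\oheap(\sstate')}{\heap}{\perms \setminus \perms_h}$ lifts to $\simheap{V}{\oheap(\sstate')}{\heap}{\perms}$ by lemma \ref{lem:sim-oheap-monotonicity}, as $\perms \setminus \perms_h \subseteq \perms$.

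The substantive step is $\simheap{V}{\sheap(\sstate')}{\heap}{\perms}$, i.e.\ the four requirements of \eqref{eq:sheap-correspondence} for $\sheap(\sstate') = \sheap(\sstate) \cup \set{h}$. For the ``value'' and ``ownership'' conditions on field chunks: those already in $\sheap(\sstate)$ satisfy them with respect to $\perms \setminus \perms_h \subseteq \perms$, while for $h$ itself we have $\heap(V(t), f) = V(t')$ and $\pair{V(t)}{f} \in \perms$ directly from the hypotheses. For the condition on predicate chunks, note that $h$ is not a predicate chunk, so every predicate chunk of $\sheap(\sstate')$ lies in $\sheap(\sstate)$; its body is asserted with permissions $\perms \setminus \perms_h$, and this extends to $\perms$ by lemma \ref{lem:assert-monotonicity}.

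The only genuinely new obligation is footprint disjointness of $\sheap(\sstate')$. For two distinct chunks both in $\sheap(\sstate)$ this is part of $\simheap{V}{\sheap(\sstate)}{\heap}{\perms \setminus \perms_h}$. For a pair consisting of $h$ and some $h_1 \in \sheap(\sstate)$, we must show $\vfoot{V}{\heap}{h} \cap \vfoot{V}{\heap}{h_1} = \perms_h \cap \vfoot{V}{\heap}{h_1} = \emptyset$, which is exactly lemma \ref{lem:sim-heap-disjoint} applied with $\perms' := \perms_h$ to $\simstate{V}{\sstate}{\heap}{\perms \setminus \perms_h}{\env}$. (This same application also shows $h \notin \sheap(\sstate)$: otherwise it would force $\perms_h \cap \perms_h = \emptyset$, contradicting $\pair{V(t)}{f} \in \perms_h$; so $\sheap(\sstate')$ is a genuine union with a fresh chunk.) Combining $\simheap{V}{\sheap(\sstate')}{\heap}{\perms}$ with the three earlier conjuncts yields $\simstate{V}{\sstate'}{\heap}{\perms}{\env}$. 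I expect no real obstacle here: the argument is routine bookkeeping over \eqref{eq:sheap-correspondence}, and its one delicate point --- keeping the added chunk's footprint from colliding with the existing heap chunks --- is dispatched cleanly by lemma \ref{lem:sim-heap-disjoint}.
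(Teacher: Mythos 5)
Your proof is correct and follows essentially the same route as the paper's: reduce everything to the $\sheap$ conjunct of \eqref{eq:sheap-correspondence}, handle the existing chunks by permission monotonicity, discharge the new chunk's value/ownership conditions from the hypotheses, and obtain the crucial disjointness of the added chunk's footprint from the existing ones via lemma \ref{lem:sim-heap-disjoint} applied to $\simstate{V}{\sstate}{\heap}{\perms \setminus \vfoot{V}{\heap}{h}}{\env}$. The only cosmetic difference is that the paper lifts the whole state to $\perms$ up front with lemma \ref{lem:simstate-monotonicity} whereas you apply the component-wise monotonicity lemmas individually.
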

\begin{proof}
  Let $\sstate' = \sstate[\sheap = \sheap(\sstate); \triple{t}{f}{t'}]$. We want to show $\simstate{V}{\sstate'}{\heap}{\perms}{\env}$.

  By assumptions and lemma \ref{lem:simstate-monotonicity}, it is immediate that $\simstate{V}{\sstate}{\heap}{\perms}{\env}$.

  Since the $\oheap$, $\senv$, and $\pc$ are unchanged in $\sstate'$ WRT $\sstate$, we have $\simheap{V}{\oheap(\sstate')}{\heap}{\perms}$, $\simenv{V}{\senv(\sstate')}{\env}$, and $V(\pc(\sstate')) = \ktrue$. Thus it suffices to show that $\simheap{V}{\sheap(\sstate')}{\heap}{\perms}$.

  Let $h = \triple{t}{f}{t'}$, thus $\sstate' = \sstate[\sheap = \sheap(\sstate'); h]$.

  We have $\simheap{V}{\sstate}{\heap}{\perms \setminus \vfoot{V}{\heap}{h}}$. Then by lemma \ref{lem:sim-heap-disjoint},
  \begin{equation}\label{eq:pres-add-heap-disjoint}
    \universal{h' \in \sheap(\sstate)}{\vfoot{V}{\heap}{h'} \cap \vfoot{V}{\heap}{h} = \emptyset}.
  \end{equation}

  Then for arbitrary $h_1, h_2 \in \sheap(\sstate')$, if $h_1 \ne h_2$ one of the following applies:
  \begin{itemize}
    \item $h_1 = h$ or $h_2 = h$: WLOG we can assume $h_1 = h$ and thus $h_2 \ne h$. Then by \eqref{eq:pres-add-heap-disjoint}, $\vfoot{V}{\heap}{h_1} \cap \vfoot{V}{\heap}{h_2} = \emptyset$.
    \item $h_1 \ne h$ and $h_1 \ne h$: Then $h_1 \in \sheap(\sstate)$, $h_2 \in \sheap(\sstate)$, and $\simheap{V}{\sheap(\sstate)}{\heap}{\perms}$, thus $\vfoot{V}{\heap}{h_1} \cap \vfoot{V}{\heap}{h_2} = \emptyset$.
  \end{itemize}

  Therefore
  $$\universal{h_1, h_2 \in \sheap(\sstate')}{h_1 \ne h_2 \implies \vfoot{V}{\heap}{h_1} \cap \vfoot{V}{\heap}{h_2} = \emptyset}.$$

  Since predicate instances are the same in $\sheap(\sstate)$ and $\sheap(\sstate')$, and $\simheap{V}{\sheap(\sstate)}{\heap}{\perms}$,
  $$\universal{\pair{p}{\multiple{t}}}{\assertion{\heap}{\perms}{[\multiple{x \mapsto V(t)}]}{\fpred(p)}}.$$

  Now let $\triple{f}{t}{t'}$ be an arbitrary field instance in $\sheap(\sstate')$. Then one of the following applies:
  \begin{itemize}
    \item $\triple{f}{t}{t'} = h$: Then by assumptions $\heap(V(t), f) = V(t')$ and $\pair{V(t)}{f} \in \perms$.
    \item $\triple{f}{t}{t'} \ne h$, and thus $\triple{f}{t}{t'} \in \sheap(\sstate)$: Then $\heap(V(t), f) = V(t')$ and $\pair{V(t)}{f} \in \perms$ since $\simheap{V}{\sheap(\sstate)}{\heap}{\perms}$.
  \end{itemize}
  
  Therefore
  $$\universal{\triple{f}{t}{t'} \in \sheap(\sstate)}{\heap(V(t), f) = V(t')} \quad\text{and}$$
  $$\universal{\triple{f}{t}{t'} \in \sheap(\sstate)}{\pair{V(t)}{f} \in \perms}.$$

  Finally, since all requirements have been satisfied, $\simheap{V}{\sheap(\sstate')}{\heap}{\perms}$, which completes the proof.
\end{proof}

\begin{lemma}\label{lem:eval-heap-efoot-unchanged}
  If $\universal{\pair{\ell}{f} \in \efoot{\heap}{\perms}{e}}{\heap'(\ell, f) = \heap(\ell, f)}$ and $\eval{\heap}{\env}{e}{v}$ then $\eval{\heap'}{\env}{e}{v}$.
\end{lemma}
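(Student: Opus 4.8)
The plan is to prove this by induction on the derivation of $\eval{\heap}{\env}{e}{v}$ (equivalently, structural induction on $e$ with a case split on the evaluation rule applied). The organising observation is that the defining clauses for the exact footprint $\efoot{\heap}{\env}{e}$ mirror the evaluation rules one-for-one: in every case the footprint of $e$ contains the exact footprints of precisely those immediate sub-expressions that are actually evaluated, and in the field-access case it additionally contains the single location that is read. Consequently the hypothesis $\universal{\pair{\ell}{f} \in \efoot{\heap}{\env}{e}}{\heap'(\ell,f)=\heap(\ell,f)}$ restricts, by monotonicity of the subset relation, to the analogous hypothesis for each relevant sub-expression, so the induction hypothesis applies and tells us that $\heap'$ evaluates those sub-expressions to the same values as $\heap$ does.

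For the base cases $e=l$ and $e=x$, $\efoot{\heap}{\env}{e}=\emptyset$ and the derivation uses \refrule{EvalLiteral} or \refrule{EvalVar}, neither of which inspects the heap, so the identical rule applies with $\heap'$ in place of $\heap$. For $e = e_1 \oplus e_2$ and $e = \kneg e_1$, the derivation decomposes into sub-derivations whose footprints are subsets of $\efoot{\heap}{\env}{e}$ by definition; apply the induction hypothesis to each and re-apply \refrule{EvalOp} or \refrule{EvalNeg}. For $e = e_1 \kor e_2$ (and symmetrically $\kand$) we case-split on whether \refrule{EvalOrA} or \refrule{EvalOrB} was used: in the \refrule{EvalOrA} case $\eval{\heap}{\env}{e_1}{\ktrue}$ and $\efoot{\heap}{\env}{e}=\efoot{\heap}{\env}{e_1}$, so the induction hypothesis gives $\eval{\heap'}{\env}{e_1}{\ktrue}$ and \refrule{EvalOrA} re-derives the result; in the \refrule{EvalOrB} case $\eval{\heap}{\env}{e_1}{\kfalse}$ and $\efoot{\heap}{\env}{e}=\efoot{\heap}{\env}{e_1}\cup\efoot{\heap}{\env}{e_2}$, so the induction hypothesis applies to both sub-expressions (the first still yielding $\kfalse$, keeping the rule applicable) and \refrule{EvalOrB} re-derives the result. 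The key case is $e = e'.f$: the derivation is by \refrule{EvalField}, so $\eval{\heap}{\env}{e'}{\ell}$ for some $\ell$ with $v=\heap(\ell,f)$, and $\efoot{\heap}{\env}{e'.f} = \efoot{\heap}{\env}{e'}; \pair{\ell}{f}$ contains both $\efoot{\heap}{\env}{e'}$ and $\pair{\ell}{f}$; the induction hypothesis gives $\eval{\heap'}{\env}{e'}{\ell}$, the heap-agreement hypothesis gives $\heap'(\ell,f)=\heap(\ell,f)=v$, and \refrule{EvalField} yields $\eval{\heap'}{\env}{e'.f}{v}$.

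The only point requiring care — and the main, rather mild, obstacle — is the short-circuiting clauses for $\kor$ and $\kand$: the footprint of the compound expression is defined in terms of the value of $e_1$ under the \emph{original} heap $\heap$, so one must first invoke the induction hypothesis on $e_1$ to learn that $e_1$ evaluates to the same boolean under $\heap'$, and only then is it legitimate to conclude that the same evaluation rule (hence the same footprint decomposition) applies under $\heap'$. Performing the induction on the evaluation derivation rather than on the raw syntax makes this ordering automatic. Beyond the definition of $\efoot{\cdot}{\cdot}{\cdot}$ and the evaluation rules, no auxiliary lemma is needed.
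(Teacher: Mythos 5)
Your proof is correct and follows essentially the same route as the paper's: induction on the evaluation derivation, using that each rule's footprint clause contains the footprints of exactly the sub-expressions evaluated (plus the read location in the field case), so the heap-agreement hypothesis restricts to each sub-derivation and the same rule re-applies under $\heap'$. The care you flag for the short-circuiting cases (establishing the value of $e_1$ under $\heap'$ before reusing the footprint decomposition) is handled the same way in the paper.
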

\begin{proof}
  By induction on $\eval{\heap}{\env}{e}{v}$:
  \begin{enumcases}
    \case \refrule{EvalLiteral} -- $\eval{\heap}{\env}{l}{l}$: $\eval{\heap'}{\env}{l}{l}$ by \refrule{EvalLiteral}.

    \case \refrule{EvalVar} -- $\eval{\heap}{\env}{x}{\env(x)}$: $\eval{\heap'}{\env}{x}{\env(x)}$ by \refrule{EvalVar}.

    \case\label{case:eval-heap-efoot-unchanged-anda} \refrule{EvalAndA} -- $\eval{\heap}{\env}{e_1 \kand e_2}{\kfalse}$:

      By inversion of \refrule{EvalAndA} $\eval{\heap}{\env}{e_1}{\kfalse}$. Then $\efoot{\heap}{\env}{e_1 \kand e_2} = \efoot{\heap}{\env}{e_1}$ by definition. Thus $\universal{\pair{\ell}{f} \in \efoot{\heap}{\perms}{e_1}}{\heap'(\ell, f) = \heap(\ell, f)}$, and therefore $\eval{\heap'}{\env}{e_1}{\kfalse}$ by induction.

    \case\label{case:eval-heap-efoot-unchanged-andb} \refrule{EvalAndB} -- $\eval{\heap}{\env}{e_1 \kand e_2}{v_2}$:

      By inversion of \refrule{EvalAndB} $\eval{\heap}{\env}{e_1}{\ktrue}$ and $\eval{\heap}{\env}{e_2}{v_2}$. Now $\efoot{\heap}{\env}{e_1 \kand e_2} = \efoot{\heap}{\env}{e_1} \cup \efoot{\heap}{\env}{e_2}$.

      Thus $\universal{\pair{\ell}{f} \in \efoot{\heap}{\perms}{e_1}}{\heap'(\ell, f) = \heap(\ell, f)}$ since $\efoot{\heap}{\env}{e_1} \subseteq \efoot{\heap}{\env}{e_1 \kand e_2}$. Therefore $\eval{\heap'}{\env}{e_1}{\ktrue}$ by induction. Similarly, $\eval{\heap'}{\env}{e_2}{v_2}$.

      Therefore $\eval{\heap'}{\env}{e_1 \kand e_2}{v_2}$ by \refrule{EvalAndB}.

    \case \refrule{EvalOrA} -- $\eval{\heap}{\env}{e_1 \kor e_2}{\ktrue}$: Similar to case \ref{case:eval-heap-efoot-unchanged-anda}.

    \case \refrule{EvalOrB} -- $\eval{\heap}{\env}{e_1 \kor e_2}{v_2}$: Similar to case \ref{case:eval-heap-efoot-unchanged-andb}.

    \case \refrule{EvalOp} -- $\eval{\heap}{\env}{e_1 \oplus e_2}{v_1 \oplus v_2}$:

      By inversion of \refrule{EvalOp} $\eval{\heap}{\env}{e_1}{v_1}$ and $\eval{\heap}{\env}{e_2}{v_2}$. Now $\efoot{\heap}{\env}{e_1 \oplus e_2} = \efoot{\heap}{\env}{e_1} \cup \efoot{\heap}{\env}{e_2}$.

      Thus $\universal{\pair{\ell}{f} \in \efoot{\heap}{\perms}{e_1}}{\heap'(\ell, f) = \heap(\ell, f)}$ since $\efoot{\heap}{\env}{e_1} \subseteq \efoot{\heap}{\env}{e_1 \kand e_2}$. Therefore $\eval{\heap'}{\env}{e_1}{v_1}$ by induction. Similarly, $\eval{\heap'}{\env}{e_2}{v_2}$.

      Therefore $\eval{\heap'}{\env}{e_1 \oplus e_2}{e_1 \oplus e_2}$ by \refrule{EvalOp}.

    \case \refrule{EvalNeg} -- $\eval{\heap}{\env}{\neg v}$:

      By inversion of \refrule{EvalNeg} $\eval{\heap}{\env}{e}{v}$. Also, $\efoot{\heap}{\env}{\kneg e} = \efoot{\heap}{\env}{e}$ by definition, thus $\universal{\pair{\ell}{f} \in \efoot{\heap}{\perms}{e}}{\heap'(\ell, f) = \heap(\ell, f)}$. Therefore $\eval{\heap'}{\env}{e}{v}$ by induction.

      Therefore $\eval{\heap'}{\env}{\kneg e}{\neg v}$ by \refrule{EvalNeg}.

    \case \refrule{EvalField} -- $\eval{\heap}{\env}{e.f}{\heap(\ell, f)}$:

      By inversion of \refrule{EvalField} $\eval{\heap}{\env}{e}{\ell}$. Then $\efoot{\heap}{\env}{e.f} = \efoot{\heap}{\env}{e}; \pair{\ell}{f}$.

      Thus $\universal{\pair{\ell}{f} \in \efoot{\heap}{\perms}{e}}{\heap'(\ell, f) = \heap(\ell, f)}$ since $\efoot{\heap}{\env}{e} \subseteq \efoot{\heap}{\env}{e.f}$. Therefore $\eval{\heap'}{\env}{e}{\ell}$ by induction.

      Now $\eval{\heap'}{\env}{e.f}{\heap'(\ell, f)}$. Also, since $\pair{\ell}{f} \in \efoot{\heap}{\env}{e.f}$, $\heap'(\ell, f) = \heap(\ell, f)$. Therefore $\eval{\heap'}{\env}{e.f}{\heap(\ell, f)}$.
    
  \end{enumcases}
\end{proof}

\begin{lemma}\label{lem:frm-heap-efoot-unchanged}
  If $\universal{\pair{\ell}{f} \in \efoot{\heap}{\env}{e}}{\heap'(\ell, f) = \heap(\ell, f)}$ and $\frm{\heap}{\perms}{\env}{e}$ then $\frm{\heap'}{\perms}{\env}{e}$.
\end{lemma}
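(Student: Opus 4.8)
The plan is to induct on the derivation of $\frm{\heap}{\perms}{\env}{e}$, closely mirroring the proof of lemma \ref{lem:eval-heap-efoot-unchanged} but tracking the framing judgement rather than the evaluation judgement. The uniform observation driving every case is that the exact footprint of each subexpression appearing as a premise of the applied framing rule is a subset of $\efoot{\heap}{\env}{e}$ — this is immediate from the defining clause of $\efoot$ for the relevant syntactic form, after using the rule's evaluation premise (where present) to determine which branch of a short-circuiting operator is taken. Hence the hypothesis $\universal{\pair{\ell}{f} \in \efoot{\heap}{\env}{e}}{\heap'(\ell, f) = \heap(\ell, f)}$ restricts to the analogous hypothesis on each subexpression, so the induction hypothesis applies there.

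First, \refrule{FrameLiteral} and \refrule{FrameVar} are trivial: the conclusion holds by the same rule regardless of the heap. For \refrule{FrameOp} and \refrule{FrameNeg}, inversion gives framing of the subexpressions; since $\efoot{\heap}{\env}{e_1 \oplus e_2} = \efoot{\heap}{\env}{e_1} \cup \efoot{\heap}{\env}{e_2}$ and $\efoot{\heap}{\env}{\kneg e} = \efoot{\heap}{\env}{e}$, the induction hypothesis gives framing under $\heap'$, and the same rule concludes. For \refrule{FrameOrA}, inversion gives $\eval{\heap}{\env}{e_1}{\ktrue}$ and $\frm{\heap}{\perms}{\env}{e_1}$, and here $\efoot{\heap}{\env}{e_1 \kor e_2} = \efoot{\heap}{\env}{e_1}$; the induction hypothesis gives $\frm{\heap'}{\perms}{\env}{e_1}$, lemma \ref{lem:eval-heap-efoot-unchanged} gives $\eval{\heap'}{\env}{e_1}{\ktrue}$, and \refrule{FrameOrA} concludes. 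The cases \refrule{FrameOrB}, \refrule{FrameAndA}, and \refrule{FrameAndB} are analogous, using the corresponding clause of the definition of $\efoot$ for the value of $e_1$ forced by the premise. For \refrule{FrameField} with $e = e'.f$, inversion gives $\frm{\heap}{\perms}{\env}{e'}$ and $\assertion{\heap}{\perms}{\env}{\kacc(e'.f)}$; by \refrule{AssertAcc} the latter yields some $\ell$ with $\eval{\heap}{\env}{e'}{\ell}$ and $\pair{\ell}{f} \in \perms$, so $\efoot{\heap}{\env}{e'.f} = \efoot{\heap}{\env}{e'}; \pair{\ell}{f}$ and in particular $\efoot{\heap}{\env}{e'} \subseteq \efoot{\heap}{\env}{e'.f}$. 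The induction hypothesis gives $\frm{\heap'}{\perms}{\env}{e'}$, lemma \ref{lem:eval-heap-efoot-unchanged} gives $\eval{\heap'}{\env}{e'}{\ell}$, and since $\perms$ is unchanged, \refrule{AssertAcc} gives $\assertion{\heap'}{\perms}{\env}{\kacc(e'.f)}$; \refrule{FrameField} then concludes.

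The main point requiring care — though it is mild — is keeping the branch taken by a short-circuiting operator consistent between the framing rule's evaluation premise and the defining clause of $\efoot$, so that lemma \ref{lem:eval-heap-efoot-unchanged} is invoked precisely on the subexpression whose value is fixed by the premise and the rule re-applies under $\heap'$ with the same value. Apart from this bookkeeping, every case is a routine reapplication of the corresponding framing rule, and the lemma follows.
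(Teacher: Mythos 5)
Your proof is correct and follows essentially the same route as the paper's: induction on the framing derivation, restricting the heap-agreement hypothesis to subexpressions via the defining clauses of $\efoot$ (using the evaluation premise to pick the branch for short-circuiting operators), invoking lemma \ref{lem:eval-heap-efoot-unchanged} for the evaluation premises and \refrule{AssertAcc} in the field case, and re-applying the same framing rule under $\heap'$. No gaps.
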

\begin{proof}
  By induction on $\frm{\heap'}{\perms}{\env}{e}$:

  \begin{enumcases}
    \case \refrule{FrameLiteral} -- $\frm{\heap}{\perms}{\env}{l}$:
      $\frm{\heap'}{\perms}{\env}{l}$ by \refrule{FrameLiteral}.

    \case \refrule{FrameVar} -- $\frm{\heap}{\perms}{\env}{x}$:
      $\frm{\heap'}{\perms}{\env}{x}$ by \refrule{FrameVar}.

    \case \refrule{FrameField} -- $\frm{\heap}{\perms}{\env}{e.f}$:

      By inversion of \refrule{FrameField} $\frm{\heap}{\perms}{\env}{e}$ and $\assertion{\heap}{\perms}{\env}{\kacc(e.f)}$.

      Now since $\assertion{\heap}{\perms}{\env}{\kacc(e.f)}$, by inversion of \refrule{AssertAcc}, $\eval{\heap}{\env}{e}{\ell}$ and $\pair{\ell}{f} \in \perms$. Now $\efoot{\heap}{\env}{e.f} = \efoot{\heap}{\env}{e}; \pair{\ell}{f}$ by definition.

      Thus $\universal{\pair{\ell}{f} \in \efoot{\heap}{\perms}{e}}{\heap'(\ell, f) = \heap(\ell, f)}$ since $\efoot{\heap}{\env}{e} \subseteq \efoot{\heap}{\env}{e.f}$. Therefore $\eval{\heap'}{\env}{e}{\ell}$ by lemma \ref{lem:eval-heap-efoot-unchanged}, and $\pair{\ell}{f} \in \perms$ as noted previously. Now $\assertion{\heap'}{\perms}{\env}{\kacc(e.f)}$ by \refrule{AssertAcc}.

      Also, $\frm{\heap'}{\perms}{\env}{e}$ by induction. Therefore $\frm{\heap}{\perms}{\env}{e.f}$ by \refrule{FrameField}.

    \case \refrule{FrameOp} -- $\frm{\heap}{\perms}{\env}{e_1 \oplus e_2}$:

      By inversion of \refrule{FrameOp} $\frm{\heap}{\perms}{\env}{e_1}$ and $\frm{\heap}{\perms}{\env}{e_2}$.

      Also, $\efoot{\heap}{\env}{e_1 \oplus e_2} = \efoot{\heap}{\env}{e_1} \cup \efoot{\heap}{\env}{e_2}$ by definition. Thus $\universal{\pair{\ell}{f} \in \efoot{\heap}{\perms}{e_1}}{\heap'(\ell, f) = \heap(\ell, f)}$ since $\efoot{\heap}{\env}{e_1} \subseteq \efoot{\heap}{\env}{e_1 \oplus e_2}$. Therefore $\frm{\heap'}{\perms}{\env}{e_1}$ by induction. Similarly, $\frm{\heap'}{\perms}{\env}{e_2}$.

      Therefore $\frm{\heap'}{\perms}{\env}{e_1 \oplus e_2}$ by induction.

    \case\label{case:frm-efoot-heap-unchanged-ora} \refrule{FrameOrA} -- $\frm{\heap}{\perms}{\env}{e_1 \kor e_2}$:

      By inversion of \refrule{FrameOrA} $\eval{\heap}{\env}{e_1}{\ktrue}$ and $\frm{\heap}{\perms}{\env}{e_1}$.

      Now $\efoot{\heap}{\env}{e_1 \kor e_2} = \efoot{\heap}{\env}{e_1}$. Thus $\universal{\pair{\ell}{f} \in \efoot{\heap}{\perms}{e_1}}{\heap'(\ell, f) = \heap(\ell, f)}$. Therefore $\eval{\heap'}{\env}{e_1}{\ktrue}$ by lemma \ref{lem:eval-heap-efoot-unchanged}, and $\frm{\heap'}{\perms}{\env}{e_1}$ by induction.

      Therefore $\frm{\heap'}{\perms}{\env}{e_1 \kor e_2}$ by \refrule{FrameOrA}.

    \case\label{case:frm-efoot-heap-unchanged-orb} \refrule{FrameOrB} -- $\frm{\heap}{\perms}{\env}{e_1 \kor e_2}$:

      By inversion of \refrule{FrameOrA} $\eval{\heap}{\env}{e_1}{\kfalse}$, $\frm{\heap}{\perms}{\env}{e_1}$, and $\frm{\heap}{\perms}{\env}{e_2}$.

      Now $\efoot{\heap}{\env}{e_1 \kor e_2} = \efoot{\heap}{\env}{e_1} \cup \efoot{\heap}{\env}{e_2}$. Thus $\universal{\pair{\ell}{f} \in \efoot{\heap}{\perms}{e_1}}{\heap'(\ell, f) = \heap(\ell, f)}$ since $\efoot{\heap}{\env}{e_1} \subseteq \efoot{\heap}{\env}{e_1 \kor e_2}$. Therefore $\eval{\heap'}{\env}{e_1}{\kfalse}$ by lemma \ref{lem:eval-heap-efoot-unchanged}, and $\frm{\heap'}{\perms}{\env}{e_1}$ by induction. Similarly, $\frm{\heap'}{\perms}{\env}{e_2}$.

      Therefore $\frm{\heap'}{\perms}{\env}{e_1 \kor e_2}$ by \refrule{FrameOrB}.

    \case \refrule{FrameAndA} -- $\frm{\heap}{\perms}{\env}{e_1 \kand e_2}$: Similar to case \ref{case:frm-efoot-heap-unchanged-ora}.

    \case \refrule{FrameAndB} -- $\frm{\heap}{\perms}{\env}{e_1 \kand e_2}$: Similar to case \ref{case:frm-efoot-heap-unchanged-orb}.

    \case \refrule{FrameNeg} -- $\frm{\heap}{\perms}{\env}{\kneg e}$:

      By inversion of \refrule{FrameNeg} $\frm{\heap}{\perms}{\env}{e}$. Also, $\efoot{\heap}{\env}{\kneg e} = \efoot{\heap}{\env}{e}$. Thus $\universal{\pair{\ell}{f} \in \efoot{\heap}{\perms}{e}}{\heap'(\ell, f) = \heap(\ell, f)}$. Therefore $\frm{\heap'}{\perms}{\env}{e}$ by induction.

      Therefore $\frm{\heap'}{\perms}{\env}{\kneg e}$ by \refrule{FrameNeg}.
  \end{enumcases}
\end{proof}

\begin{lemma}\label{lem:efrm-heap-efoot-unchanged}
  If $\universal{\pair{\ell}{f} \in \efoot{\heap}{\perms}{\gform}}{\heap'(\ell, f) = \heap(\ell, f)}$ and $\efrm{\heap}{\perms}{\env}{\gform}$ then $\efrm{\heap'}{\perms}{\env}{\gform}$.
\end{lemma}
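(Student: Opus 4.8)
The plan is to prove the statement by induction on the derivation of $\efrm{\heap}{\perms}{\env}{\gform}$, with one case for each of the \refrule{EFrameExpression}, \refrule{EFrameConjunction}, \refrule{EFramePredicate}, \refrule{EFrameConditionalA}, \refrule{EFrameConditionalB}, and \refrule{EFrameAcc} rules (the imprecise case, $\gform = \simprecise{\phi}$, is immediate since $\efoot{\heap}{\env}{\simprecise{\phi}} = \efoot{\heap}{\env}{\phi}$, so the induction hypothesis applies to $\phi$ and the imprecise framing rule re-applies). The proof is the equi-recursive analogue of lemma \ref{lem:frm-heap-efoot-unchanged}, and in every case it has the same shape: the definition of the exact footprint writes $\efoot{\heap}{\env}{\gform}$ as a union of the exact footprints of the syntactic subcomponents (the subformulas, the argument expressions, and — for a predicate instance — the exact footprint of the unrolled body at the evaluated arguments), so the hypothesis that $\heap'$ and $\heap$ agree on $\efoot{\heap}{\env}{\gform}$ specializes to agreement on each of those sub-footprints, which is exactly what is needed to feed the induction hypothesis and the two earlier transport lemmas.

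Concretely, for \refrule{EFrameExpression} and \refrule{EFrameAcc} the footprint contains $\efoot{\heap}{\env}{e}$, so I invoke lemma \ref{lem:frm-heap-efoot-unchanged} to carry $\frm{\heap}{\perms}{\env}{e}$ over to $\frm{\heap'}{\perms}{\env}{e}$ and re-apply the rule. For \refrule{EFrameConjunction} I split $\efoot{\heap}{\env}{\phi_1 * \phi_2} = \efoot{\heap}{\env}{\phi_1} \cup \efoot{\heap}{\env}{\phi_2}$, apply the induction hypothesis to each conjunct, and reassemble with \refrule{EFrameConjunction}. For \refrule{EFrameConditionalA}/\refrule{EFrameConditionalB}, the footprint in that branch is $\efoot{\heap}{\env}{e} \cup \efoot{\heap}{\env}{\phi_i}$ (with $i$ fixed by the value of $e$): lemma \ref{lem:eval-heap-efoot-unchanged} preserves $\eval{\heap'}{\env}{e}{\ktrue}$ (resp.\ $\eval{\heap'}{\env}{e}{\kfalse}$), lemma \ref{lem:frm-heap-efoot-unchanged} preserves $\frm{\heap'}{\perms}{\env}{e}$, the induction hypothesis preserves $\efrm{\heap'}{\perms}{\env}{\phi_i}$, and the rule re-applies.

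The delicate case is \refrule{EFramePredicate}. The derivation of $\efrm{\heap}{\perms}{\env}{p(\multiple{e})}$ contains $\multiple{\eval{\heap}{\env}{e}{v}}$, $\multiple{\frm{\heap}{\perms}{\env}{e}}$, and $\efrm{\heap}{\perms}{[\multiple{x \mapsto v}]}{\fpred(p)}$ with $\multiple{x} = \fpredparams(p)$, and the exact footprint is precisely $\efoot{\heap}{[\multiple{x \mapsto v}]}{\fpred(p)} \cup \bigcup \multiple{\efoot{\heap}{\env}{e}}$ for these same $v$'s. Since $\heap'$ agrees with $\heap$ on each $\efoot{\heap}{\env}{e}$, lemma \ref{lem:eval-heap-efoot-unchanged} gives $\multiple{\eval{\heap'}{\env}{e}{v}}$ for the identical $v$'s (so the body is evaluated at the same environment) and lemma \ref{lem:frm-heap-efoot-unchanged} gives $\multiple{\frm{\heap'}{\perms}{\env}{e}}$; and since $\heap'$ agrees with $\heap$ on $\efoot{\heap}{[\multiple{x \mapsto v}]}{\fpred(p)}$, the induction hypothesis applied to the sub-derivation yields $\efrm{\heap'}{\perms}{[\multiple{x \mapsto v}]}{\fpred(p)}$, after which \refrule{EFramePredicate} re-applies. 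The point that must be checked carefully here — and the main obstacle — is that the argument values computed over $\heap'$ coincide with those over $\heap$; otherwise the recursive subgoal we discharge would not be the one occurring in the original derivation. This is exactly what lemma \ref{lem:eval-heap-efoot-unchanged} guarantees, and there is no well-foundedness issue for recursive predicates because the induction is on the (necessarily finite) derivation tree of $\efrm$, not on the syntactic size of the predicate body.
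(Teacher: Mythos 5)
Your proof is correct and follows essentially the same route as the paper's: induction on the derivation of $\efrm{\heap}{\perms}{\env}{\gform}$, decomposing $\efoot{\heap}{\env}{\gform}$ into the sub-footprints in each case, and discharging the leaves via lemmas \ref{lem:eval-heap-efoot-unchanged} and \ref{lem:frm-heap-efoot-unchanged}, with the predicate case handled exactly as you describe (same argument values under $\heap'$, then the induction hypothesis on the sub-derivation for the unrolled body). Your explicit remark that the evaluation lemma is what pins the body's environment to the same $v$'s, and that well-foundedness comes from the finite $\efrm$ derivation rather than the predicate's syntax, is the right observation and is implicit in the paper's treatment.
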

\begin{proof}
  By induction on $\efrm{\heap'}{\perms}{\env}{\gform}$:

  \begin{enumcases}
    \case \refrule{EFrameExpression} -- $\efrm{\heap}{\perms}{\env}{e}$:

      By inversion of \refrule{EFrameExpression} $\frm{\heap}{\perms}{\env}{e}$ and then $\frm{\heap'}{\perms}{\env}{e}$ by assumptions and lemma \ref{lem:frm-heap-efoot-unchanged}. Therefore $\efrm{\heap}{\perms}{\env}{e}$ by \refrule{EFrameExpression}.

    \case \refrule{EFrameConjunction} -- $\efrm{\heap}{\perms}{\env}{\phi_1 * \phi_2}$:

      By inversion of \refrule{EFrameConjunction} $\efrm{\heap}{\perms}{\env}{\phi_1}$ and $\efrm{\heap}{\perms}{\env}{\phi_2}$. Also, $\efoot{\heap}{\env}{\phi_1 * \phi_2} = \efoot{\heap}{\env}{\phi_1} \cup \efoot{\heap}{\env}{\phi_2}$ by definition.

      Now $\universal{\pair{\ell}{f} \in \efoot{\heap}{\env}{\phi_1}}{\heap'(\ell, f) = \heap(\ell, f)}$ since $\efoot{\heap}{\perms}{\phi_1} \subseteq \efoot{\heap}{\perms}{\gform}$. Therefore \\
      $\efrm{\heap'}{\perms}{\env}{\phi_1}$ by induction. Similarly, $\efrm{\heap'}{\perms}{\env}{\phi_2}$.

      Therefore $\efrm{\heap'}{\perms}{\env}{\phi_1 * \phi_2}$ by \refrule{EFrameConjunction}.

    \case \refrule{EFramePredicate} -- $\efrm{\heap}{\perms}{\env}{p(\multiple{e})}$:

      By inversion of \refrule{EFramePredicate} $\multiple{\frm{\heap}{\perms}{\env}{e}}$, $\multiple{\eval{\heap}{\env}{e}{v}}$, and $\efrm{\heap}{\perms}{[\multiple{x \mapsto v}]}{\fpred(p)}$, where $\multiple{x} = \fpredparams(p)$.

      Now $\efoot{\heap}{\env}{p(\multiple{e})} = \efoot{\heap}{[\multiple{x \mapsto v}]}{\fpred(p)} \cup \bigcup \multiple{\efoot{\heap}{\env}{e}}$.

      Then, for each $e$ and corresponding $x$, $\universal{\pair{\ell}{f} \in \efoot{\heap}{\env}{e}}{\heap'(\ell, f) = \heap(\ell, f)}$ since $\efoot{\heap}{\env}{e} \subseteq \efoot{\heap}{\env}{p(\multiple{e})}$. Therefore $\eval{\heap'}{\env}{e}{v}$ by lemma \ref{lem:eval-heap-efoot-unchanged} and $\frm{\heap'}{\perms}{\env}{e}$ by \ref{lem:frm-heap-efoot-unchanged}.

      Also, $\universal{\pair{\ell}{f} \in \efoot{\heap}{[\multiple{x \mapsto v}]}{\fpred(p)}}{\heap'(\ell, f) = \heap(\ell, f)}$, thus by induction \\
      $\efrm{\heap'}{\perms}{[\multiple{x \mapsto v}]}{\fpred(p)}$.

      Therefore $\efrm{\heap'}{\perms}{\env}{p(\multiple{e})}$ by \refrule{EFramePredicate}.

    \case\label{case:efrm-heap-efoot-unchanged-ifa} \refrule{EFrameConditionalA} -- $\efrm{\heap}{\perms}{\env}{\sif{e}{\phi_1}{\phi_2}}$:

      By inversion of \refrule{EFrameConditionalA} $\eval{\heap}{\env}{e}{\ktrue}$, $\frm{\heap}{\perms}{\env}{e}$, and $\efrm{\heap}{\perms}{\env}{\phi_1}$.

      Now $\efoot{\heap}{\env}{\sif{e}{\phi_1}{\phi_2}} = \efoot{\heap}{\env}{e} \cup \efoot{\heap}{\env}{\phi_1}$.
      
      Then $\universal{\pair{\ell}{f} \in \efoot{\heap}{\env}{e}}{\heap'(\ell, f) = \heap(\ell, f)}$ since $\efoot{\heap}{\env}{e} \subseteq \efoot{\heap}{\env}{\sif{e}{\phi_1}{\phi_2}}$. Therefore $\eval{\heap'}{\env}{e}{\ktrue}$ by lemma \ref{lem:eval-heap-efoot-unchanged} and $\frm{\heap'}{\perms}{\env}{e}$ by lemma \ref{lem:frm-heap-efoot-unchanged}.

      Also, Then $\universal{\pair{\ell}{f} \in \efoot{\heap}{\env}{\phi_1}}{\heap'(\ell, f) = \heap(\ell, f)}$ since \\
      $\efoot{\heap}{\env}{\phi_1} \subseteq \efoot{\heap}{\env}{\sif{e}{\phi_1}{\phi_2}}$. Therefore $\efrm{\heap'}{\perms}{\env}{\phi_1}$ by induction.

      Therefore $\efrm{\heap'}{\perms}{\env}{\sif{e}{\phi_1}{\phi_2}}$ by \refrule{EFrameConditionalA}.

    \case \refrule{EFrameConditionalB} -- $\efrm{\heap}{\perms}{\env}{\sif{e}{\phi_1}{\phi_2}}$: Similar to case \ref{case:efrm-heap-efoot-unchanged-ifa}.

    \case \refrule{EFrameAcc} -- $\efrm{\heap}{\perms}{\env}{\kacc(e.f)}$:

      By inversion of \refrule{EFrameAcc} $\frm{\heap}{\perms}{\env}{e}$.

      Also, $\efoot{\heap}{\env}{e} \subseteq \efoot{\heap}{\env}{\kacc(e.f)}$ by definition, thus $\universal{\pair{\ell}{f} \in \efoot{\heap}{\env}{e}}{\heap'(\ell, f) = \heap(\ell, f)}$. Therefore $\frm{\heap'}{\perms}{\env}{e}$ by lemma \ref{lem:frm-heap-efoot-unchanged}.

      Thus $\efrm{\heap'}{\perms}{\env}{\kacc(e.f)}$ by \refrule{EFrameAcc}.

  \end{enumcases}
\end{proof}

\begin{lemma}\label{lem:assert-heap-efoot-unchanged}
  If $\assertion{\heap}{\perms}{\env}{\gform}$ and $\universal{\pair{\ell}{f} \in \efoot{\heap}{\env}{\gform}}{\heap'(\ell, f) = \heap(\ell, f)}$ then $\assertion{\heap'}{\perms}{\env}{\gform}$.
\end{lemma}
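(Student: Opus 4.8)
The plan is to proceed by induction on the derivation of $\assertion{\heap}{\perms}{\env}{\gform}$, mirroring exactly the structure of the proofs of lemmas \ref{lem:eval-heap-efoot-unchanged}, \ref{lem:frm-heap-efoot-unchanged}, and \ref{lem:efrm-heap-efoot-unchanged}. The recurring observation is that, by the definition of $\efoot{\heap}{\env}{\gform}$, the exact footprint of $\gform$ contains the exact footprints of each of its immediate subcomponents (the guard expression in a conditional, each conjunct, each argument of a predicate instance, and the predicate body under the instantiated environment), so the hypothesis that $\heap'$ agrees with $\heap$ on $\efoot{\heap}{\env}{\gform}$ transfers to each subcomponent. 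Since $\perms$ is never modified, all permission-side conditions of each assertion rule carry over verbatim.

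Concretely: for \refrule{AssertValue} ($\gform = e$) the result is immediate from lemma \ref{lem:eval-heap-efoot-unchanged} applied to $\eval{\heap}{\env}{e}{\ktrue}$; for \refrule{AssertAcc} ($\gform = \kacc(e.f)$), lemma \ref{lem:eval-heap-efoot-unchanged} gives $\eval{\heap'}{\env}{e}{\ell}$ while $\pair{\ell}{f} \in \perms$ is unchanged, so \refrule{AssertAcc} reapplies. For \refrule{AssertConjunction} and \refrule{AssertIfA}/\refrule{AssertIfB}, split the footprint along its definition, apply the induction hypothesis to each (precise) subformula, use lemma \ref{lem:eval-heap-efoot-unchanged} on the guard in the conditional cases, and reassemble with the same rule. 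For \refrule{AssertImprecise} ($\gform = \simprecise{\phi}$), use $\efoot{\heap}{\env}{\simprecise{\phi}} = \efoot{\heap}{\env}{\phi}$, obtain $\assertion{\heap'}{\perms}{\env}{\phi}$ by the induction hypothesis, and obtain $\efrm{\heap'}{\perms}{\env}{\phi}$ from the equi-recursive framing premise via lemma \ref{lem:efrm-heap-efoot-unchanged}; then \refrule{AssertImprecise} closes the case.

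The case needing the most care is \refrule{AssertPredicate} ($\gform = p(\multiple{e})$). Here the induction hypothesis is invoked on the subderivation $\assertion{\heap}{\perms}{[\multiple{x \mapsto v}]}{\fpred(p)}$ where $\multiple{x} = \fpredparams(p)$; this is legitimate precisely because $\efoot{\heap}{\env}{p(\multiple{e})}$ is \emph{defined} to include $\efoot{\heap}{[\multiple{x \mapsto v}]}{\fpred(p)}$, so the heap-agreement hypothesis holds for the body under the instantiated environment, and the body's assertion derivation is structurally smaller. One also needs $\multiple{\eval{\heap'}{\env}{e}{v}}$, which follows from lemma \ref{lem:eval-heap-efoot-unchanged} since each $\efoot{\heap}{\env}{e}$ lies inside $\efoot{\heap}{\env}{p(\multiple{e})}$; in particular the evaluated arguments $\multiple{v}$ are unchanged, so $[\multiple{x \mapsto v}]$ is unchanged and \refrule{AssertPredicate} reapplies. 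The only genuine subtlety is checking that heap-agreement on the exact footprint really does descend into the predicate body, i.e. that the definition of $\efoot{\heap}{\env}{p(\multiple{e})}$ unrolls the body once; since it does, no extra equi-recursive bookkeeping over nested predicates is required — each further unrolling is handled by a further application of the induction hypothesis along the assertion derivation itself.
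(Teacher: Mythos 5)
Your proposal is correct and follows essentially the same route as the paper's proof: induction on the assertion derivation, with Lemma \ref{lem:eval-heap-efoot-unchanged} discharging expression evaluation in the value, accessibility, conditional, and predicate-argument positions, Lemma \ref{lem:efrm-heap-efoot-unchanged} handling the framing premise of \refrule{AssertImprecise}, and the footprint-containment observation justifying the inductive calls on subformulas and on the once-unrolled predicate body. Nothing further is needed.
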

\begin{proof}
  By induction on $\assertion{\heap}{\perms}{\env}{\gform}$.

  \begin{enumcases}
    \case \refrule{AssertImprecise} -- $\assertion{\heap}{\perms}{\env}{\simprecise{\phi}}$:

      $\efoot{\heap}{\env}{\simprecise{\phi}} = \efoot{\heap}{\env}{\phi}$, thus $\universal{\pair{\ell}{f} \in \efoot{\heap}{\env}{\phi}}{\heap'(\ell, f) = \heap(\ell, f)}$.
      
      Also, $\efrm{\heap}{\perms}{\env}{\phi}$ by \refrule{AssertImprecise}, therefore $\efrm{\heap'}{\perms}{\env}{\phi}$ by lemma \ref{lem:efrm-heap-efoot-unchanged}.

      Finally, $\assertion{\heap}{\perms}{\env}{\phi}$ by \refrule{AssertImprecise}, therefore $\assertion{\heap'}{\perms}{\env}{\phi}$ by induction.

      Now $\assertion{\heap'}{\perms}{\env}{\simprecise{\phi}}$ by \refrule{AssertImprecise}.

    \case \refrule{AssertValue} -- $\assertion{\heap}{\perms}{\env}{e}$:
    
      By \refrule{AssertValue} $\eval{\heap}{\env}{e}{\ktrue}$ and $\universal{\pair{\ell}{f} \in \efoot{\heap}{\env}{e}}{\heap'(\ell, f) = \heap(\ell, f)}$ by assumptions, therefore $\eval{\heap'}{\env}{e}{\ktrue}$ by lemma \ref{lem:eval-heap-efoot-unchanged}. Therefore $\assertion{\heap'}{\perms}{\env}{e}$ by \refrule{AssertValue}.

    \case\label{case:assert-heap-efoot-unchanged-ifa} \refrule{AssertIfA} -- $\assertion{\heap}{\perms}{\env}{\sif{e}{\phi_1}{\phi_2}}$:

      By \refrule{AssertIfA} $\eval{\heap}{\env}{e}{\ktrue}$, thus $\efoot{\heap}{\env}{\sif{e}{\phi_1}{\phi_2}} = \efoot{\heap}{\env}{e} \cup \efoot{\heap}{\env}{\phi_1}$.

      Now $\universal{\pair{\ell}{f} \in \efoot{\heap}{\env}{e}}{\heap'(\ell, f) = \heap(\ell, f)}$ and $\eval{\heap}{\env}{e}{\ktrue}$ thus $\eval{\heap'}{\env}{e}{\ktrue}$ by lemma \ref{lem:eval-heap-efoot-unchanged}.

      Also $\universal{\pair{\ell}{f} \in \efoot{\heap}{\env}{\phi_1}}{\heap'(\ell, f) = \heap(\ell, f)}$, and $\assertion{\heap}{\perms}{\env}{\phi_1}$ by \refrule{AssertIfA}, therefore $\assertion{\heap'}{\perms}{\env}{\phi_1}$ by induction.

      Therefore $\assertion{\heap'}{\perms}{\env}{\sif{e}{\phi_1}{\phi_2}}$ by \refrule{AssertIfA}.

    \case \refrule{AssertIfB} -- $\assertion{\heap}{\perms}{\env}{\sif{e}{\phi_1}{\phi_2}}$: Similar to case \ref{case:assert-heap-efoot-unchanged-ifa}.

    \case \refrule{AssertAcc} -- $\assertion{\heap}{\perms}{\env}{\kacc(e.f)}$:

      By inversion of \refrule{AssertAcc} $\eval{\heap}{\env}{e}{\ell}$. Thus $\efoot{\heap}{\env}{\kacc(e.f)} = \efoot{\heap}{\env}{e}; \pair{\ell}{f}$.
      
      Now $\universal{\pair{\ell}{f} \in \efoot{\heap}{\env}{e}}{\heap'(\ell, f) = \heap(\ell, f)}$ thus $\eval{\heap'}{\env}{e}{\ell}$ by lemma \ref{lem:eval-heap-efoot-unchanged}.

      Also, $\pair{\ell}{f} \in \perms$ by inversion of \refrule{AssertAcc}. Therefore $\assertion{\heap'}{\perms}{\env}{\kacc(e.f)}$ by \refrule{AssertAcc}.

    \case \refrule{AssertConjunction} -- $\assertion{\heap}{\perms}{\env}{\phi_1 * \phi_2}$:

      By inversion of \refrule{AssertConjunction} $\assertion{\heap}{\perms_1}{\env}{\phi_1}$ and $\assertion{\heap}{\perms_2}{\env}{\phi_2}$ for some $\perms_1, \perms_2$ such that $\perms_1 \cup \perms_2 \subseteq \perms$ and $\perms_1 \cap \perms_2 = \emptyset$.

      Also, $\efoot{\heap}{\env}{\phi_1 * \phi_2} = \efoot{\heap}{\env}{\phi_1} \cup \efoot{\heap}{\env}{\phi_2}$. Therefore $\universal{\pair{\ell}{f} \in \efoot{\heap}{\env}{\phi_1}}{\heap'(\ell, f) = \heap(\ell, f)}$ and thus $\assertion{\heap'}{\perms_1}{\env}{\phi_1}$ by induction. Similarly, $\assertion{\heap'}{\perms_2}{\env}{\phi_2}$.

      Now $\assertion{\heap'}{\perms}{\env}{\phi_1 * \phi_2}$ by \refrule{AssertConjunction}.

    \case \refrule{AssertPredicate} -- $\assertion{\heap}{\perms}{\env}{p(\multiple{e})}$:

      By inversion of \refrule{AssertPredicate} $\multiple{\eval{\heap}{\env}{e}{v}}$ and $\assertion{\heap}{\perms}{[\multiple{x \mapsto v}]}{\fpred(p)}$ where $\multiple{x} = \fpredparams(p)$.

      Also, $\efoot{\heap}{\perms}{p(\multiple{e})} = \efoot{\heap}{[\multiple{x \mapsto v}]}{\fpred(p)} \cup \bigcup \multiple{\efoot{\heap}{\env}{e}}$.

      Now for each $e$ and corresponding $x$, $\universal{\pair{\ell}{f} \in \efoot{\heap}{\env}{e}}{\heap'(\ell, f) = \heap(\ell, f)}$. Therefore $\eval{\heap'}{\perms}{e}{v}$ by lemma \ref{lem:eval-heap-efoot-unchanged}.

      Now $\universal{\pair{\ell}{f} \in \efoot{\heap}{[\multiple{x \mapsto v}]}{\fpred(p)}}{\heap'(\ell, f) = \heap(\ell, f)}$. Therefore $\assertion{\heap'}{\perms}{[\multiple{x \mapsto v}]}{\fpred(p)}$ by induction.

      Therefore $\assertion{\heap'}{\perms}{\env}{p(\multiple{e})}$ by \refrule{AssertPredicate}.
  \end{enumcases}
\end{proof}

\begin{lemma}\label{lem:assert-heap-perms-unchanged}
  If $\assertion{\heap}{\perms}{\env}{\gform}$ for some specification $\gform$ and $\universal{\pair{\ell}{f} \in \perms}{\heap'(\ell, f) = \heap(\ell, f)}$ then $\assertion{\heap'}{\perms}{\env}{\gform}$.
\end{lemma}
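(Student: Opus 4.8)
The plan is to reduce this statement to the already-proved Lemma~\ref{lem:assert-heap-efoot-unchanged}, which requires heap agreement only on the \emph{exact footprint} $\efoot{\heap}{\env}{\gform}$ rather than on all of $\perms$. The bridge is Lemma~\ref{lem:efoot-subset-spec}: since $\gform$ is a specification and $\assertion{\heap}{\perms}{\env}{\gform}$, we get $\efoot{\heap}{\env}{\gform} \subseteq \perms$. This is exactly what is needed to weaken the hypothesis.

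Concretely, I would proceed as follows. First, from the assumption that $\gform$ is a specification together with $\assertion{\heap}{\perms}{\env}{\gform}$, invoke Lemma~\ref{lem:efoot-subset-spec} to obtain $\efoot{\heap}{\env}{\gform} \subseteq \perms$. Second, combine this inclusion with the hypothesis $\universal{\pair{\ell}{f} \in \perms}{\heap'(\ell, f) = \heap(\ell, f)}$ to conclude $\universal{\pair{\ell}{f} \in \efoot{\heap}{\env}{\gform}}{\heap'(\ell, f) = \heap(\ell, f)}$. Third, apply Lemma~\ref{lem:assert-heap-efoot-unchanged} with this refined heap-agreement condition and the given assertion $\assertion{\heap}{\perms}{\env}{\gform}$ to derive $\assertion{\heap'}{\perms}{\env}{\gform}$, which is the desired conclusion.

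There is essentially no obstacle here: the lemma is a two-line consequence of the two cross-cutting results just cited, and the only design choice is recognizing that ``unchanged on $\perms$'' is strictly stronger than the ``unchanged on $\efoot{\heap}{\env}{\gform}$'' condition that the heavier lemma already handles. The one point worth a moment's care is that the specification hypothesis is genuinely used (and cannot be dropped): Lemma~\ref{lem:efoot-subset-spec} fails for non-self-framed precise formulas, so without it the footprint containment $\efoot{\heap}{\env}{\gform}\subseteq\perms$ need not hold. Everything else is a direct appeal to the stated lemmas, so the proof will be short.
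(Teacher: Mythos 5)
Your proof is correct and takes essentially the same route as the paper: Lemma~\ref{lem:efoot-subset-spec} gives $\efoot{\heap}{\env}{\gform} \subseteq \perms$, which transfers the heap-agreement hypothesis from $\perms$ to the exact footprint, and Lemma~\ref{lem:assert-heap-efoot-unchanged} then yields the conclusion. The only cosmetic difference is that you apply Lemma~\ref{lem:assert-heap-efoot-unchanged} directly at $\perms$ and are done, whereas the paper first obtains the assertion at the footprint and then restores $\perms$ via Lemma~\ref{lem:assert-monotonicity}; both are fine.
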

\begin{proof}
  By assumptions and lemma \ref{lem:efoot-subset-spec} $\efoot{\heap}{\env}{\gform} \subseteq \perms$. Therefore $\universal{\pair{\ell}{f} \in \efoot{\heap}{\env}{\gform}}{\heap'(\ell, f) = \heap(\ell, f)}$ since $\pair{\ell}{f} \in \perms \implies \pair{\ell}{f} \in \efoot{\heap}{\env}{\gform}$.

  Therefore $\assertion{\heap'}{\efoot{\heap}{\env}{\gform}}{\env}{\gform}$ by lemma \ref{lem:assert-heap-efoot-unchanged}. Then $\assertion{\heap'}{\perms}{\env}{\gform}$ by lemma \ref{lem:assert-monotonicity} since $\efoot{\heap}{\env}{\gform} \subseteq \perms$.
\end{proof}

\begin{lemma}\label{lem:pres-modify-heap}
  If $\simstate{V}{\sstate}{\heap}{\perms}{\env}$ and $\universal{\pair{\ell}{f} \in \perms}{\heap'(\ell, f) = \heap(\ell, f)}$ then $\simstate{V}{\sstate}{\heap'}{\perms}{\env}$.
\end{lemma}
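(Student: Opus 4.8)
The plan is to unfold the definition of state correspondence \eqref{eq:sstate-correspondence} and verify each of its four conjuncts for $\heap'$. Two of them, $\simenv{V}{\senv(\sstate)}{\env}$ and $V(\pc(\sstate)) = \ktrue$, make no mention of the heap at all, so they transfer from $\simstate{V}{\sstate}{\heap}{\perms}{\env}$ to $\simstate{V}{\sstate}{\heap'}{\perms}{\env}$ without any argument. It therefore remains to establish $\simheap{V}{\sheap(\sstate)}{\heap'}{\perms}$ and $\simheap{V}{\oheap(\sstate)}{\heap'}{\perms}$, which I would prove by checking the clauses of \eqref{eq:sheap-correspondence} and \eqref{eq:oheap-correspondence} one at a time.

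The ownership clauses ($\pair{V(t)}{f} \in \perms$ for each field chunk) are heap-independent and hold by hypothesis. For the value clause, take a field chunk $\triple{f}{t}{t'} \in \sheap(\sstate) \cup \oheap(\sstate)$: correspondence with $\heap$ gives both $\pair{V(t)}{f} \in \perms$ and $\heap(V(t), f) = V(t')$, and since $\heap'$ agrees with $\heap$ on every location in $\perms$, we get $\heap'(V(t), f) = \heap(V(t), f) = V(t')$. For the predicate-assertion clause of \eqref{eq:sheap-correspondence}, each predicate chunk $\pair{p}{\multiple{t}} \in \sheap(\sstate)$ satisfies $\assertion{\heap}{\perms}{[\multiple{x \mapsto V(t)}]}{\fpred(p)}$, and since $\fpred(p)$ is a specification and $\heap, \heap'$ agree on $\perms$, lemma \ref{lem:assert-heap-perms-unchanged} gives $\assertion{\heap'}{\perms}{[\multiple{x \mapsto V(t)}]}{\fpred(p)}$.

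The delicate clause is the pairwise disjointness of heap-chunk footprints, since $\vfoot{V}{\heap}{\cdot}$ on a predicate chunk is the exact footprint of $\fpred(p)$, which in general depends on the heap. Here I would first use lemma \ref{lem:sim-heap-contains} to note $\vfoot{V}{\heap}{h} \subseteq \perms$ for every $h \in \sheap(\sstate)$ (for a field chunk the footprint is heap-independent; for a predicate chunk it is $\efoot{\heap}{[\multiple{x \mapsto V(t)}]}{\fpred(p)}$, contained in $\perms$ by lemma \ref{lem:efoot-subset-spec} since the assertion above holds). The main obstacle is then the auxiliary fact that the exact footprint of a formula is unchanged under heap modifications that leave that footprint intact: if $\heap$ and $\heap'$ agree on $\efoot{\heap}{\env}{\gform}$ then $\efoot{\heap'}{\env}{\gform} = \efoot{\heap}{\env}{\gform}$. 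I expect this to go through by a routine induction on the structure of $\gform$, in the same spirit as lemmas \ref{lem:eval-heap-efoot-unchanged}--\ref{lem:assert-heap-efoot-unchanged}, exploiting that the footprint computation only ever reads heap locations that it simultaneously records in the footprint. Granting it, $\vfoot{V}{\heap'}{h} = \vfoot{V}{\heap}{h}$ for every $h \in \sheap(\sstate)$, so the disjointness property carries over verbatim.

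Putting the pieces together: all four clauses of \eqref{eq:sheap-correspondence} hold for $\sheap(\sstate)$ with $\heap'$, both clauses of \eqref{eq:oheap-correspondence} hold for $\oheap(\sstate)$ with $\heap'$ (the value clause by the argument in the second paragraph, the ownership clause unchanged), and the environment and path-condition conjuncts are unaffected; hence $\simstate{V}{\sstate}{\heap'}{\perms}{\env}$ by \eqref{eq:sstate-correspondence}, as required.
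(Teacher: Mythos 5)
Your proposal is correct and follows essentially the same route as the paper's proof: both unfold the correspondence definition clause by clause, transfer the heap-independent clauses directly, use the ownership clause to show the value clause survives the heap modification, and invoke lemma \ref{lem:assert-heap-perms-unchanged} for the predicate-assertion clause. The one place you diverge is the footprint-disjointness clause, and there you are in fact \emph{more} careful than the paper: the paper's proof simply cites the disjointness statement established for $\heap$ (its equation \eqref{eq:pres-modify-heap-4}, phrased in terms of $\vfoot{V}{\heap}{\cdot}$) and concludes $\simheap{V}{\sheap}{\heap'}{\perms}$, silently treating $\vfoot{V}{\heap'}{h}$ as equal to $\vfoot{V}{\heap}{h}$ even though for predicate chunks the footprint is $\efoot{\heap}{[\multiple{x \mapsto V(t)}]}{\fpred(p)}$ and so depends on the heap. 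You correctly identify this as the delicate step and discharge it with the auxiliary fact that exact footprints are stable under heap modifications outside the footprint (obtainable by the same structural induction as lemmas \ref{lem:eval-heap-efoot-unchanged}--\ref{lem:assert-heap-efoot-unchanged}), together with lemma \ref{lem:sim-heap-contains} and lemma \ref{lem:efoot-subset-spec} to place each footprint inside $\perms$, the region on which the heaps agree. So your argument buys a justification for a step the paper leaves implicit; the auxiliary footprint-stability lemma is not stated in the paper, so you would need to prove it, but the induction is routine and the claim is true for exactly the reason you give.
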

\begin{proof}
  From the assumptions, clearly $\simenv{V}{\senv(\sstate)}{\env}$, thus it suffices to show that $\simheap{V}{\sheap(\sstate)}{\heap'}{\perms}$ and $\simheap{V}{\oheap(\sstate)}{\heap'}{\perms}$.

  Let $\sheap = \sheap(\sstate)$. From assumptions, clearly $\simheap{V}{\sheap}{\heap}{\perms}$. Therefore
  \begin{gather}
    \universal{\triple{f}{t}{t'} \in \sheap}{\heap(V(t), f) = V(t')} \label{eq:pres-modify-heap-1}\\
    \universal{\triple{f}{t}{t'} \in \sheap}{\pair{V(t)}{f} \in \perms} \label{eq:pres-modify-heap-2}\\
    \universal{\pair{p}{\multiple{t}} \in \sheap}{\assertion{\heap}{\perms}{[\multiple{x \mapsto V(t)}]}{\fpred(p)}} \label{eq:pres-modify-heap-3} \\
    \universal{h_1, h_2 \in \sheap^2}{h_1 \ne h_2 \implies \vfoot{V}{\heap}{h_1} \cap \vfoot{V}{\heap}{h_2} = \emptyset} \label{eq:pres-modify-heap-4}
  \end{gather}

  Let $\triple{f}{t}{t'}$ be an arbitrary field value in $\sheap$. Then by \eqref{eq:pres-modify-heap-1} $\heap(V(t), f) = V(t')$. Also, by \eqref{eq:pres-modify-heap-2} $\pair{V(t)}{f} \in \perms$, thus by our initial assumptions $\heap'(V(t), f) = \heap(V(t), f) = V(t')$. Therefore
  \begin{equation}\label{eq:pres-modify-heap-5}
    \universal{\triple{f}{t}{t'} \in \sheap}{\heap'(V(t), f) = V(t')}.
  \end{equation}

  let $\pair{p}{\multiple{t}}$ be an arbitrary predicate instance in $\sheap$. Then by \eqref{eq:pres-modify-heap-3} $\assertion{\heap}{\perms}{[\multiple{x \mapsto V(t)}]}{\fpred(p)}$. Since $\fpred(p)$ is a specification, and $\universal{\pair{\ell}{f} \in \perms}{\heap'(\ell, f) = \heap(\ell, f)}$, then \\
  $\assertion{\heap'}{\perms}{[\multiple{x \mapsto V'(t)}]}{\fpred(p)}$ by lemma \ref{lem:assert-heap-perms-unchanged}. Therefore
  \begin{equation}\label{eq:pres-modify-heap-6}
    \universal{\pair{p}{\multiple{t}} \in \sheap}{\assertion{\heap'}{\perms}{[\multiple{x \mapsto V'(t)}]}{\fpred(p)}}.
  \end{equation}

  Therefore by \eqref{eq:pres-modify-heap-5}, \eqref{eq:pres-modify-heap-2}, \eqref{eq:pres-modify-heap-6}, and \eqref{eq:pres-modify-heap-4}, $\simheap{V}{\sheap}{\heap'}{\perms}$.

  Let $\oheap = \oheap(\sstate)$. From assumptions, clearly $\simheap{V}{\oheap}{\heap}{\perms}$. Therefore
  \begin{gather}
    \universal{\triple{f}{t}{t'} \in \oheap}{\heap(V(t), f) = V(t')} \label{eq:pres-modify-heap-7} \\
    \universal{\triple{f}{t}{t'} \in \oheap}{\pair{V(t)}{f} \in \perms} \label{eq:pres-modify-heap-8}
  \end{gather}

  Let $\triple{f}{t}{t'}$ be an arbitrary field value in $\oheap$. Then by \eqref{eq:pres-modify-heap-7} $\heap(V(t), f) = V(t')$. Also, by \eqref{eq:pres-modify-heap-8} $\pair{V(t)}{f} \in \perms$, thus by our initial assumptions $\heap'(V(t), f) = \heap(V(t), f) = V(t')$. Therefore
  \begin{equation}\label{eq:pres-modify-heap-9}
    \universal{\triple{f}{t}{t'} \in \oheap}{\heap'(V(t), f) = V(t')}.
  \end{equation}

  Therefore by \eqref{eq:pres-modify-heap-9} and \eqref{eq:pres-modify-heap-8}, $\simheap{V}{\oheap}{\heap'}{\perms}$.
\end{proof}

\begin{lemma}\label{lem:pres-heap-change-partial}
  If $\pair{\heap}{\triple{\perms}{\env}{s} \cdot \stack}$ is a well-formed state, the partial state $\pair{\heap}{\stack}$ is validated by $\vstate$ and $V$, and $\heap' = \heap[\pair{\ell}{f} \mapsto v]$ for some $\ell$, $f$, and $v$ such that $\pair{\ell}{f} \in \perms$ or $\ell$ is a fresh value unused in $\stack$, then the partial state $\pair{\heap'}{\stack}$ is validated by $\vstate$ and $V$.
\end{lemma}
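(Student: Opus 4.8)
The plan is to prove the statement by induction on the derivation that the partial state $\pair{\heap}{\stack}$ is validated by $\vstate$ and $V$ — essentially an induction on the length of $\stack$ — following the three cases of Definition~\ref{def:partial-valid}. Before starting the induction I would collapse the two disjuncts in the hypothesis into a single uniform fact about the modified location. Since $\pair{\heap}{\triple{\perms}{\env}{s} \cdot \stack}$ is well-formed, all permission sets occurring in it are pairwise disjoint; hence in the case $\pair{\ell}{f} \in \perms$ we obtain $\pair{\ell}{f} \notin \perms'$ for every permission set $\perms'$ appearing in $\stack$, and in the case that $\ell$ is fresh and unused in $\stack$, $\ell$ — hence $\pair{\ell}{f}$ — occurs in no permission set of $\stack$ either. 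Either way, $\heap'$ agrees with $\heap$ on every permission set appearing in $\stack$, and crucially this property is inherited by every tail of $\stack$. I would therefore actually induct on the slightly more convenient statement: if $\pair{\heap}{\stack}$ is validated by $\vstate$, $V$ and $\pair{\ell}{f}$ lies outside every permission set occurring in $\stack$, then $\pair{\heap'}{\stack}$ is validated by $\vstate$, $V$; the lemma follows immediately.

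For the base case $\stack = \nilsym$ (Part~\ref{def:partial-valid-nil}) there is nothing to check. For the inductive step write $\stack = \triple{\perms_0}{\env_0}{s_0} \cdot \stack^*$. In the method-call case (Part~\ref{def:partial-valid-call}) I would reuse the same witnesses $\vstate'$, $V'$, $\sstate_0, \dots, \sstate_k$, $t_1, \dots, t_k$, $\sstate'$. The clauses stating that $\vstate'$ is reachable, that $s(\vstate') = s(\stack)$, the chain of $\seval$ judgements~\eqref{eq:partial-valid-call-eval}, the parameter equalities~\eqref{eq:partial-valid-call-params}, $\scons{\sstate_k}{\fpre(m)}{\sstate'}{\_}$, and $\gform(\vstate) = \fpost(m)$ are purely symbolic/valuation facts that never mention the dynamic heap, so they transfer verbatim; the only heap-dependent clause, $\simstate{V'}{\sstate'[\senv = \senv(\sstate_0)]}{\heap}{\perms_0}{\env_0}$, is upgraded to use $\heap'$ by Lemma~\ref{lem:pres-modify-heap}, since $\pair{\ell}{f} \notin \perms_0$; and $\pair{\heap'}{\stack^*}$ is validated by $\vstate'$, $V'$ by the induction hypothesis, using that $\pair{\ell}{f}$ also avoids every permission set of $\stack^*$. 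Reassembling these gives that $\pair{\heap'}{\stack}$ is validated by $\vstate$, $V$. The loop case (Part~\ref{def:partial-valid-while}) is handled identically: its only heap-dependent clause is $\simstate{V'}{\sstate'}{\heap}{\perms_0}{\env_0}$, again preserved by Lemma~\ref{lem:pres-modify-heap}, while $\scons{\sstate(\vstate')}{\gform}{\sstate'}{\_}$ and $\gform(\vstate) = \gform$ are unaffected.

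The main obstacle is not any single calculation but getting the inductive formulation right: one must notice that the two disjuncts of the hypothesis both reduce, via well-formedness of the enclosing state, to the single stack-local invariant ``$\pair{\ell}{f}$ misses every permission set in $\stack$'', and that this invariant — unlike the literal hypotheses of the lemma, which refer to the outermost frame — is preserved when passing to a tail of the stack, so the induction closes. Once that reformulation is in place, every remaining step is a routine appeal to Lemma~\ref{lem:pres-modify-heap} together with the observation that the remaining clauses of Definition~\ref{def:partial-valid} are heap-independent.
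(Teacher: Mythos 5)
Your proposal is correct and follows essentially the same route as the paper's proof: the paper likewise reduces both disjuncts (via pairwise disjointness of permission sets in a well-formed state, or via freshness of $\ell$) to the single invariant that $\pair{\ell}{f}$ avoids every permission set $\perms_i$ occurring in $\stack$, and then inducts over the stack suffixes $\stack_0 = \nilsym$, $\stack_i = \triple{\perms_i}{\env_i}{s_i}\cdot\stack_{i-1}$, reusing the same witnesses in the call and loop cases and upgrading the sole heap-dependent clause ($\simstate{V'}{\cdot}{\heap}{\perms_i}{\env_i}$) with Lemma~\ref{lem:pres-modify-heap}. Your reformulated induction hypothesis is exactly the generalization the paper uses, so nothing is missing.
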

\begin{proof}
  For some $n$, let $\stack = \triple{\perms_n}{\env_n}{s_n} \cdot \ldots \cdot \triple{\perms_1}{\env_1}{s_1} \cdot \nilsym$.
  
  If $\pair{\ell}{f} \in \perms$: $\pair{\heap}{\triple{\perms}{\env}{s} \cdot \stack}$ is a well-formed state, $\perms, \perms_n, \cdots, \perms_1$ are all disjoint. Thus, since $\pair{\ell}{f} \in \perms$, for all $1 \le i \le n$, $\pair{\ell}{f} \notin \perms_i$.

  If $\ell$ is fresh: Then for all $1 \le i \le n$, $\pair{\ell}{f} \notin \perms_i$ since $\ell$ is not referenced in $\stack$.

  Therefore, for all $1 \le i \le n$, $\pair{\ell}{f} \notin \perms_i$.

  Let $\stack_0 = \nilsym$, and for all $1 \le i \le n$ let $\stack_i = \triple{\perms_i}{\env_i}{s_i} \cdot \stack_{i-1}$.

  We prove by induction that, if $0 \le i \le n$ and $\pair{\heap}{\stack_i}$ is validated by some $\vstate$ and $V$, the partial state $\pair{\heap'}{\stack_i}$ is validated by $\vstate$ and $V$. This is sufficient to prove the main result, since $\stack_n = \stack$ and $\pair{\heap}{\stack}$ is validated by $\vstate$ and $V$.

  Suppose $0 \le i \le n$ and $\pair{\heap}{\stack_i}$ is validated by some $\vstate$ and $V$.

  \textit{Case \ref{def:partial-valid-nil}:} Then $\stack_i = \nilsym$ (and thus $i = 0$), and trivially $\pair{\heap'}{\nilsym}$ is validated by $\vstate$ and $V$.

  \textit{Case \ref{def:partial-valid-call}:} Then
  $\stack_i = \triple{\env_i}{\perms_i}{\sseq{y \kassign m(e_1, \cdots, e_k)}{s}} \cdot \stack_{i-1}$ for some $y$, $m$, $k$, $e_1, \cdots, e_k$, $s$, and there exists some $\vstate'$, $V'$, $x_1, \cdots, x_k$, $t_1, \cdots, t_k$, $\sstate_0, \cdots, \sstate_k$ and $\sstate'$ such that:
  \begin{gather*}
    \text{The partial state $\pair{\heap}{\stack_{i-1}}$ is validated by $\vstate'$ and $V'$}, \\
    \vstate' \text{ is reachable from $\prog$ with valuation $V'$}, \quad s(\vstate') = s(\stack_i) \\
    x_1, \cdots, x_k = \fparams(m), \\
    \sstate_0 = \sstate(\vstate'), \quad \seval{\sstate_0}{e_1}{t_1}{\sstate_1}{\_}, \quad\cdots,\quad \seval{\sstate_{k-1}}{e_k}{t_k}{\sstate_k}{\_}, \\
    \universal{1 \le i \le k}{V(\senv(\vstate_i)(x_i)) = V'(t_i)}, \\
    \scons{\sstate_k}{\fpre(m)}{\sstate'}{\_}, \quad \simstate{V'}{\sstate'[\senv = \senv(\sstate_0)]}{\heap}{\perms_i}{\env_i}, \quad\text{and} \\
    \gform(\vstate_i) = \fpost(m).
  \end{gather*}

  By induction we can assume that if $0 \le i - 1 \le n$ and $\pair{\heap}{\stack_{i-1}}$ is validated by some $\vstate$ and $V$, the partial state $\pair{\heap'}{\stack_{i-1}}$ is validated by $\vstate$ and $V$.

  Since $\stack_i \ne \nilsym$, $1 \le i \le n$, thus $0 \le i - 1 \le n$. Also, in this case the partial state $\pair{\heap}{\stack_{i-1}}$ is validated by $\vstate'$ and $V'$. Therefore we can conclude that
  $$\text{The partial state $\pair{\heap'}{\stack_{i-1}}$ is validated by $\vstate'$ and $V'$.}$$
  Also, immediately from above,
  \begin{gather*}
    \vstate' \text{ is reachable from $\prog$ with valuation $V'$}, \quad s(\vstate') = s(\stack_i) \\
    x_1, \cdots, x_k = \fparams(m), \\
    \sstate_0 = \sstate(\vstate'), \quad \seval{\sstate_0}{e_1}{t_1}{\sstate_1}{\_}, \quad\cdots,\quad \seval{\sstate_{k-1}}{e_k}{t_k}{\sstate_k}{\_}, \\
    \universal{1 \le i \le k}{V(\senv(\vstate_i)(x_i)) = V'(t_i)}, \\
    \scons{\sstate_k}{\fpre(m)}{\sstate'}{\_}, \quad\text{and} \\
    \gform(\vstate_i) = \fpost(m).
  \end{gather*}
  It remains to be shown that $\simstate{V'}{\sstate'[\senv = \senv(\sstate_0)]}{\heap'}{\perms_i}{\env_i}$. But as noted before, $\pair{\ell}{f} \notin \perms_i$ (since $1 \le i \le n$ in this case). Thus $\universal{\pair{\ell'}{f'} \in \perms_i}{\heap'(\ell', f') = \heap(\ell', f')}$. Therefore by lemma \ref{lem:pres-modify-heap},
  $$\simstate{V'}{\sstate'[\senv = \senv(\sstate_0)]}{\heap'}{\perms_i}{\env_i}.$$

  Therefore $\pair{\heap'}{\stack_i}$ is validated by $\vstate$ and $V$.

  \textit{Case \ref{def:partial-valid-while}:}
  Then $\stack_i = \triple{\env_i}{\perms_i}{\sseq{\swhile{e}{\gform}{s}}{s'}} \cdot \stack_{i-1}$ for some $e$, $\gform$, $s$, and $s'$, and there exists some $\vstate'$, $V'$, and $\sstate'$ such that:
  \begin{gather*}
    \text{The partial state $\pair{\heap}{\stack_{i-1}}$ is validated by $\vstate'$ and $V'$} \\
    \vstate' \text{ is reachable from $\prog$ with valuation $V'$} \quad s(\vstate') = s(\stack_i) \\
    \scons{\sstate}{\gform}{\sstate'}{\_}, \quad
    \simstate{V'}{\sstate'}{\heap}{\perms_i}{\env_i}, \quad\text{and}\quad
    \gform(\vstate_i) = \gform
  \end{gather*}

  By induction we can assume that if $0 \le i - 1 \le n$ and $\pair{\heap}{\stack_{i-1}}$ is validated by some $\vstate$ and $V$, the partial state $\pair{\heap'}{\stack_{i-1}}$ is validated by $\vstate$ and $V$.

  Since $\stack_i \ne \nilsym$, $1 \le i \le n$, thus $0 \le i - 1 \le n$. Also, in this case the partial state $\pair{\heap}{\stack_{i-1}}$ is validated by $\vstate'$ and $V'$. Therefore we can conclude that
  $$\text{The partial state $\pair{\heap'}{\stack_{i-1}}$ is validated by $\vstate'$ and $V'$.}$$
  Also, immediately from above,
  \begin{gather*}
    \vstate' \text{ is reachable from $\prog$ with valuation $V'$} \quad s(\vstate') = s(\stack_i) \\
    \scons{\sstate}{\gform}{\sstate'}{\_}, \quad\text{and}\quad
    \gform(\vstate_i) = \gform
  \end{gather*}
  It remains to be shown that $\simstate{V'}{\sstate'}{\heap'}{\perms_i}{\env_i}$. But as noted before, $\pair{\ell}{f} \notin \perms_i$ (since $1 \le i \le n$ in this case). Thus $\universal{\pair{\ell'}{f'} \in \perms_i}{\heap'(\ell', f') = \heap(\ell', f')}$. Therefore by lemma \ref{lem:pres-modify-heap},
  $$\simstate{V'}{\sstate'}{\heap'}{\perms_i}{\env_i}.$$

  Therefore $\pair{\heap'}{\stack_i}$ is validated by $\vstate$ and $V$.

\end{proof}

\begin{lemma}\label{lem:rem-subset}
  If $\simstate{V}{\sstate}{\heap}{\perms}{\env}$ then $\vfoot{V}{\heap}{\frem(\sstate, \gform)} \subseteq \perms$.
\end{lemma}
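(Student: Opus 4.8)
The plan is to split on whether $\gform$ is completely precise, mirroring the case analysis in the definition of $\frem$ (Definition \ref{def:frem}). In the completely precise case, $\frem(\sstate, \gform) = \emptyset$, so $\vfoot{V}{\heap}{\frem(\sstate, \gform)} = \vfoot{V}{\heap}{\emptyset} = \emptyset \subseteq \perms$ and there is nothing more to do.

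In the case where $\gform$ is not completely precise, $\frem(\sstate, \gform)$ is by definition the set containing the symbolic permission $\pair{t}{f}$ for each field chunk $\triple{f}{t}{t'} \in \sheap(\sstate) \cup \oheap(\sstate)$, together with the symbolic permission $\pair{p}{\multiple{t}}$ for each predicate chunk $\pair{p}{\multiple{t}} \in \sheap(\sstate)$. Since $\vfoot{V}{\heap}{\sperms} = \bigcup_{\sperm \in \sperms} \vfoot{V}{\heap}{\sperm}$, it suffices to show $\vfoot{V}{\heap}{\sperm} \subseteq \perms$ for each such $\sperm$.

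The key observation driving the argument is that the footprint of each symbolic permission appearing in $\frem(\sstate, \gform)$ coincides with the footprint of the heap chunk it was derived from: for a field chunk $\triple{f}{t}{t'}$, both $\vfoot{V}{\heap}{\triple{f}{t}{t'}}$ and $\vfoot{V}{\heap}{\pair{t}{f}}$ equal $\set{\pair{V(t)}{f}}$; and for a predicate chunk $\pair{p}{\multiple{t}}$, both $\vfoot{V}{\heap}{\pair{p}{\multiple{t}}}$ (as a heap chunk) and $\vfoot{V}{\heap}{\pair{p}{\multiple{t}}}$ (as a symbolic permission) equal $\efoot{\heap}{[\multiple{x \mapsto V(t)}]}{\fpred(p)}$, where $\multiple{x} = \fpredparams(p)$. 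Given this matching, the inclusion $\vfoot{V}{\heap}{\sperm} \subseteq \perms$ follows directly from Lemma \ref{lem:sim-heap-contains}, which under the hypothesis $\simstate{V}{\sstate}{\heap}{\perms}{\env}$ gives $\vfoot{V}{\heap}{h} \subseteq \perms$ for every $h \in \sheap(\sstate) \cup \oheap(\sstate)$. Taking the union over all chunks then yields $\vfoot{V}{\heap}{\frem(\sstate, \gform)} \subseteq \perms$.

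There is no real obstacle in this proof; the only point requiring any care is the purely notational step of pairing each symbolic permission in $\frem(\sstate, \gform)$ with its originating heap chunk and unfolding the two footprint definitions to confirm they agree, after which the result is an immediate consequence of Lemma \ref{lem:sim-heap-contains}.
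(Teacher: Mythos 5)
Your proof is correct and follows essentially the same route as the paper's: the identical case split on whether $\gform$ is completely precise, followed by showing each permission in $\frem(\sstate,\gform)$ contributes a footprint already contained in $\perms$. The only cosmetic difference is that you discharge the per-chunk containment by citing Lemma \ref{lem:sim-heap-contains}, whereas the paper inlines that same reasoning directly from the definition of $\simheap{V}{\sheap}{\heap}{\perms}$ together with Lemma \ref{lem:efoot-subset-spec}; your observation that the footprint of each symbolic permission coincides with that of its originating heap chunk is exactly what makes this substitution valid.
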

\begin{proof}
  If $\gform$ is completely precise, then
  $$\vfoot{V}{\heap}{\frem(\sstate, \gform)} = \vfoot{V}{\heap}{\emptyset} = \emptyset \subseteq \perms.$$
  
  Otherwise,
  \begin{align*}
    \vfoot{V}{\heap}{\frem(\sstate, \gform)}
      &= \vfoot{V}{\heap}{\set{ \pair{f}{t} : \triple{f}{t}{t'} \in \sheap(\sstate) \cup \oheap(\sstate) } ~\cup \\
        &\hspace{2em} \set{\pair{p}{\multiple{t}} : \pair{p}{\multiple{t}} \in \sheap(\sstate) }} \\
      &= \set{ \pair{V(t)}{f} : \triple{f}{t}{t'} \in \sheap(\sstate) \cup \oheap(\sstate) } ~\cup \\
        &\hspace{2em} \bigcup_{\pair{p}{\multiple{t}} \in \sheap(\sstate)}{\efoot{\heap}{[\multiple{x \mapsto V(t)}]}{\fpred(p)}}
  \end{align*}

  For any $\triple{f}{t}{t'} \in \sheap(\sstate) \cup \oheap(\sstate)$, $\pair{V(t)}{f} \in \perms$ since $\simheap{V}{\sheap(\sstate)}{\heap}{\perms}$ and $\simheap{V}{\oheap(\sstate)}{\heap}{\perms}$. Therefore
  $$\set{ \pair{V(t)}{f} : \triple{f}{t}{t'} \in \sheap(\sstate) \cup \oheap(\sstate) } \subseteq \perms.$$

  Also, for any $\pair{p}{\multiple{t}} \in \sheap(\sstate)$, $\assertion{\heap}{\perms}{[\multiple{x \mapsto t}]}{\fpred(p)}$ where $\multiple{x} = \fpredparams(p)$; therefore $\efoot{\heap}{[\multiple{x \mapsto V(t)}]}{\fpred(p)} \subseteq \perms$ by \ref{lem:efoot-subset-spec}. Thus
  $$\bigcup_{\pair{p}{\multiple{t}} \in \sheap(\sstate)}{\efoot{\heap}{[\multiple{x \mapsto V(t)}]}{\fpred(p)}} \subseteq \perms.$$

  Therefore, by the previous calculations, $\vfoot{V}{\heap}{\frem(\sstate, \gform)} \subseteq \perms$.
\end{proof}

\begin{lemma}\label{lem:rem-simstate}
  Let $\gform$ be some specification, $\xperms = \vfoot{V}{\heap}{\frem(\sstate, \gform)}$ and $\perms' = \foot{\heap}{\perms \setminus \xperms}{\env}{\gform}$.

  If $\simstate{V}{\sstate}{\heap}{\perms \setminus \efoot{\heap}{\env}{\gform}}{\env}$ and $\assertion{\heap}{\perms \setminus \xperms}{\env}{\gform}$, then $\simstate{V}{\sstate}{\heap}{\perms \setminus \perms'}{\env}$.
\end{lemma}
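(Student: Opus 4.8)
The plan is to proceed by cases on whether $\gform$ is completely precise, mirroring the case split in the definitions of $\frem$ and $\foot{\heap}{\perms}{\env}{\gform}$.

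If $\gform$ is completely precise, then $\frem(\sstate, \gform) = \emptyset$ by definition \ref{def:frem}, so $\xperms = \vfoot{V}{\heap}{\emptyset} = \emptyset$. Hence $\perms' = \foot{\heap}{\perms \setminus \emptyset}{\env}{\gform} = \foot{\heap}{\perms}{\env}{\gform} = \efoot{\heap}{\env}{\gform}$ by definition \ref{def:footprint}. Therefore $\perms \setminus \perms' = \perms \setminus \efoot{\heap}{\env}{\gform}$, and the conclusion $\simstate{V}{\sstate}{\heap}{\perms \setminus \perms'}{\env}$ is exactly the hypothesis.

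If $\gform$ is not completely precise, then $\perms' = \foot{\heap}{\perms \setminus \xperms}{\env}{\gform} = \perms \setminus \xperms$. First I would note that $\xperms \subseteq \perms \setminus \efoot{\heap}{\env}{\gform}$: this follows from the hypothesis $\simstate{V}{\sstate}{\heap}{\perms \setminus \efoot{\heap}{\env}{\gform}}{\env}$ together with lemma \ref{lem:rem-subset}. Consequently $\perms \setminus \perms' = \perms \setminus (\perms \setminus \xperms) = \perms \cap \xperms = \xperms$, so it suffices to prove $\simstate{V}{\sstate}{\heap}{\xperms}{\env}$. Here I would unfold $\xperms$ as in the computation inside lemma \ref{lem:assert-after-rem}, namely $\xperms = \bigcup_{h \in \sheap(\sstate) \cup \oheap(\sstate)} \vfoot{V}{\heap}{h}$, since $\gform$ is not completely precise. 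The components of correspondence that do not mention permissions---$\simenv{V}{\senv(\sstate)}{\env}$ and $V(\pc(\sstate)) = \ktrue$---carry over directly from the hypothesis. For $\simheap{V}{\sheap(\sstate)}{\heap}{\xperms}$ I would invoke lemma \ref{lem:disjoint-sim-heap-subset} starting from $\simheap{V}{\sheap(\sstate)}{\heap}{\perms \setminus \efoot{\heap}{\env}{\gform}}$ and removing the set $\perms'' := (\perms \setminus \efoot{\heap}{\env}{\gform}) \setminus \xperms$; the side condition $\vfoot{V}{\heap}{h} \cap \perms'' = \emptyset$ for all $h \in \sheap(\sstate)$ holds because $\vfoot{V}{\heap}{h} \subseteq \xperms$ by the displayed union and $\perms'' \cap \xperms = \emptyset$ by construction, and then $(\perms \setminus \efoot{\heap}{\env}{\gform}) \setminus \perms'' = \xperms$ using $\xperms \subseteq \perms \setminus \efoot{\heap}{\env}{\gform}$. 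For the optimistic heap there is no packaged lemma, so I would verify \eqref{eq:oheap-correspondence} by hand: for $\triple{f}{t}{t'} \in \oheap(\sstate)$ the value equation $\heap(V(t), f) = V(t')$ is inherited from the hypothesis, and $\pair{V(t)}{f} = \vfoot{V}{\heap}{\triple{f}{t}{t'}} \subseteq \xperms$ since $\triple{f}{t}{t'}$ ranges inside the union defining $\xperms$.

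The only mildly delicate point is that this lemma shrinks the permission set (from $\perms \setminus \efoot{\heap}{\env}{\gform}$ down to $\xperms$), so the argument genuinely relies on the disjointness of distinct heap-chunk footprints recorded in \eqref{eq:sheap-correspondence}; lemma \ref{lem:disjoint-sim-heap-subset} packages exactly this, including the re-derivation of the predicate-body assertions under the smaller permission set via lemmas \ref{lem:efoot-subset-spec} and \ref{lem:assert-efoot-subset}. Everything else is bookkeeping with set differences, so I expect no substantial obstacle beyond being careful that $\xperms \subseteq \perms$ and $\xperms \subseteq \perms \setminus \efoot{\heap}{\env}{\gform}$ are established before they are used.
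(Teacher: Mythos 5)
Your proof is correct and follows essentially the same route as the paper's: the same case split on complete precision, the same intermediate goal $\simstate{V}{\sstate}{\heap}{\xperms}{\env}$ in the imprecise case, and the same key facts ($\xperms$ as the union of chunk footprints, $\xperms \subseteq \perms \setminus \efoot{\heap}{\env}{\gform}$ via lemma \ref{lem:rem-subset}, and footprint disjointness for the precise heap). The only differences are cosmetic simplifications: you use the definitional equality $\perms' = \perms \setminus \xperms$ directly where the paper derives only $\perms' \subseteq \perms \setminus \xperms$ and finishes with lemma \ref{lem:simstate-monotonicity}, and you package the precise-heap step via lemma \ref{lem:disjoint-sim-heap-subset} where the paper re-verifies the correspondence conditions by hand.
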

\begin{proof}
  If $\gform$ is completely precise, then
  \begin{align*}
    \xperms &= \vfoot{V}{\heap}{\frem(\sstate, \gform)} \\
      &= \vfoot{V}{\heap}{\emptyset} \\
      &= \emptyset \\
    \perms' &= \foot{\heap}{\perms \setminus \xperms}{\env}{\gform} \\
      &= \efoot{\heap}{\env}{\gform}.
  \end{align*}
  And thus
  $$\simstate{V}{\sstate}{\heap}{\perms \setminus \efoot{\heap}{\env}{\gform}}{\env} \implies \simstate{V}{\sstate}{\heap}{\perms \setminus \perms'}{\env}.$$
  
  Otherwise, $\gform$ is not completely precise. Now,
  \begin{align*}
    \xperms &= \vfoot{V}{\heap}{\frem(\sstate, \gform)} \\
      &= \vfoot{V}{\heap}{\set{ \pair{f}{t} : \triple{f}{t}{t'} \in \sheap(\sstate) \cup \oheap(\sstate) } \cup \set{\pair{p}{\multiple{t}} : \pair{p}{\multiple{t}} \in \sheap(\sstate) }} \\
      &= \set{ \pair{V(t)}{f} : \triple{f}{t}{t'} \in \sheap(\sstate) \cup \oheap(\sstate) } ~\cup \\
        &\hspace{2em} \bigcup_{\pair{p}{\multiple{t}} \in \sheap(\sstate)}{\efoot{\heap}{[\multiple{t \mapsto V(t)}]}{\fpred(p)}}
  \end{align*}

  Let $\perms^* = \perms \setminus \efoot{\heap}{\env}{\gform}$. Since $\simheap{V}{\sheap(\sstate)}{\heap}{\perms^*}{\env}$,
  \begin{gather}
    \universal{\triple{f}{t}{t'} \in \sheap(\sstate)}{\heap(V(t), f) = V(t')} \label{eq:rem-simstate-value-assump}\\
    \universal{\pair{p}{\multiple{t}} \in \sheap(\sstate)}{\assertion{\heap}{\perms^*}{[\multiple{x \mapsto V(t)}]}{\fpred(p)}} \label{eq:rem-simstate-pred-assump} \\
    \universal{h_1, h_2 \in \sheap}{h_1 \ne h_2 \implies \vfoot{V}{\heap}{h_1} \cap \vfoot{V}{\heap}{h_2} = \emptyset} \label{eq:rem-simstate-disjoint-assump}
  \end{gather}
  where $\multiple{x} = \fpredparams(p)$.

  Note that by our calculation of $\xperms$,
  \begin{equation}\label{eq:rem-simstate-perms}
    \universal{\triple{f}{t}{t'} \in \sheap(\sstate)}{\pair{V(t)}{f} \in \xperms}.
  \end{equation}
  Also, by definition $\efoot{\heap}{[\multiple{x \mapsto V(t)}]}{\fpred(p)} \subseteq \xperms$ for each $\pair{p}{\multiple{t}} \in \sheap(\sstate)$ when \\
  $\multiple{x} = \fpredparams(p)$, thus by \eqref{eq:rem-simstate-pred-assump} and lemma \ref{lem:assert-efoot-subset}
  \begin{equation}\label{eq:rem-simstate-pred}
    \universal{\pair{p}{\multiple{t}} \in \sheap(\sstate)}{\assertion{\heap}{\xperms}{[\multiple{x \mapsto V(t)}]}{\fpred(p)}}
  \end{equation}
  where $\multiple{x} = \fpredparams(p)$.

  Therefore, by \eqref{eq:rem-simstate-value-assump}, \eqref{eq:rem-simstate-perms}, \eqref{eq:rem-simstate-pred}, and \eqref{eq:rem-simstate-disjoint-assump},
  $$\simheap{V}{\sheap(\sstate)}{\heap}{\xperms}.$$

  Since $\simheap{V}{\oheap(\sstate)}{\heap}{\perms^*}{\env}$,
  $$\universal{\triple{f}{t}{t'} \in \oheap(\sstate)}{\heap(V(t), f) = V(t')}.$$
  By our calculation of $\xperms$,
  $$\universal{\triple{f}{t}{t'} \in \oheap(\sstate)}{\pair{V(t)}{f} \in \xperms}.$$
  Therefore
  $$\simheap{V}{\oheap(\sstate)}{\heap}{\xperms}.$$

  Also, since $\simstate{V}{\sstate}{\heap}{\perms \setminus \efoot{\heap}{\env}{\gform}}{\env}$,
  $$\simenv{V}{\senv(\sstate)}{\env} \quad\text{and}\quad V(\pc(\sstate)) = \ktrue.$$
  Therefore
  $$\simstate{V}{\sstate}{\heap}{\xperms}{\env}.$$

  Now, since $\assertion{\heap}{\perms \setminus \xperms}{\env}{\gform}$, by lemma \ref{lem:foot-subset-spec} $\foot{\heap}{\perms \setminus \xperms}{\env}{\gform} \subseteq \perms \setminus \xperms$. Thus
  \begin{gather*}
    \perms' = \foot{\heap}{\perms \setminus \xperms}{\env}{\gform}
      \subseteq \perms \setminus \xperms \\
    \perms \setminus \perms' \supseteq \perms \setminus (\perms \setminus \xperms)
      = \perms \cap \xperms
  \end{gather*}

  Finally, $\xperms \subseteq \perms \setminus \efoot{\heap}{\env}{\gform} \subseteq \perms$ by lemma \ref{lem:rem-subset}, thus $\perms \cap \xperms = \xperms$ and then $\perms \setminus \perms' \supseteq \xperms$.

  Therefore by lemma \ref{lem:simstate-monotonicity}
  $$\simstate{V}{\sstate}{\heap}{\perms \setminus \perms'}{\env}.$$
\end{proof}

\begin{lemma}\label{lem:dexec-preservation}
  Let $\pair{\heap}{\stack}$ be some dynamic state validated by $\vstate$ and valuation $V$ for some program $\prog$. If $\sguard{\vstate}{\sstate'}{\scheck}{\sperms}$ with $V' = V[\sguard{\vstate}{\sstate'}{\scheck}{\sperms} \mid \heap]$, $V'(\pc(\sstate')) = \ktrue$, $\rtassert{V'}{\heap}{\perms(\stack)}{\scheck}$, and $\dexec{\heap}{\stack}{\vfoot{V'}{\heap}{\sperms}}{\heap'}{\stack'}$
  then $\Gamma'$ is a valid state.

  Note: This is a simplification of theorem \ref{thm:dtrans-preservation}, restricted to the particular case of a normal program step (in contrast to a step from $\initsym$ or to $\finalsym$).
\end{lemma}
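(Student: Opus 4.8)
The plan is to prove the lemma by case analysis on the guard derivation $\sguard{\vstate}{\sstate'}{\scheck}{\sperms}$, which (given $s(\vstate) = s(\stack)$) is determined up to branching by the form of the statement being executed, and hence pins down which small-step rule derives $\dexec{\heap}{\stack}{\vfoot{V'}{\heap}{\sperms}}{\heap'}{\stack'}$. Since $\pair{\heap}{\stack}$ is validated, $\vstate$ has the form $\triple{\sstate}{s}{\gform}$ with $s = s(\stack)$, $\simstate{V}{\sstate}{\heap}{\perms(\stack)}{\env(\stack)}$, $\vstate$ reachable from $\prog$ with $V$, and the tail $\pair{\heap}{\stack^*}$ partially validated by $\vstate$ and $V$. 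The two excluded cases are \refrule{SGuardInit} (a step from $\initsym$) and the $\stack = \nilsym$ sub-case of \refrule{SGuardFinish} (a step to $\finalsym$); in every remaining case I will exhibit a verification state $\vstate'$ with $\strans{\prog}{\vstate}{\vstate'}$ --- obtained via \refrule{SVerifyStep} composed with the symbolic execution rule matching the dynamic rule, or via \refrule{SVerifyMethod}/\refrule{SVerifyLoopBody} for frame-pushing steps --- together with a valuation witnessing all three clauses of Definition \ref{def:state-valid} for $\Gamma' = \pair{\heap'}{\stack'}$ (we may use $V'$ or any extension of it, since a valid state need only have some validator).

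For the frame-local cases (\refrule{SGuardSeq}, \refrule{SGuardAssign}, \refrule{SGuardIf}, \refrule{SGuardFold}, \refrule{SGuardUnfold}, \refrule{SGuardAssert}), where the head frame's environment changes only in a controlled way and the tail $\stack^*$ is untouched, I reuse the guard's internal $\seval$/$\scons$/$\sproduce$ derivations as the premises of the matching symbolic execution rule, so that $V'$ is simultaneously the corresponding valuation of the guard and of that step. The soundness lemmas \ref{lem:seval-soundness}, \ref{lem:pc-eval-soundness}, \ref{lem:cons-soundness}, \ref{lem:produce-soundness} then give $\simstate{V'}{\sstate(\vstate')}{\heap'}{\perms(\stack)}{\env(\stack')}$ (clause \ref{def:state-valid-correspond}); clause \ref{def:state-valid-reachable} is immediate; and clause \ref{def:state-valid-partial} follows from Lemma \ref{lem:preservation-heap-env-unchanged}. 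For \refrule{SGuardAssignField} the same template applies but the heap is updated at one location owned by the frame, so I invoke Lemma \ref{lem:pres-modify-heap}/\ref{lem:pres-add-heap} to re-establish head-frame correspondence after consuming $\kacc(x.f)$ and adding the new field chunk, and Lemma \ref{lem:pres-heap-change-partial} for the tail; \refrule{SGuardAlloc} is similar, additionally extending $V'$ with $t \mapsto \ell$ for the freshly allocated reference and observing that extending a valuation preserves all pre-existing correspondence.

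The substantial cases push or pop a stack frame. For \refrule{SGuardCall} firing \refrule{ExecCallEnter}: the new callee frame is validated by the \refrule{SVerifyMethod} state $\triple{\sstate_m}{\sseq{\fbody(m)}{\kskip}}{\fpost(m)}$ under a fresh valuation that sends the fresh parameter symbolic values to the argument values $v_i$ (chaining \ref{lem:seval-soundness} through the argument evaluations gives $\eval{\heap}{\env}{e_i}{V'(t_i)}$, and \refrule{ExecCallEnter} gives $\eval{\heap}{\env}{e_i}{v_i}$, so $V'(t_i) = v_i$); Lemma \ref{lem:produce-soundness} applied to producing $\fpre(m)$ into the empty state --- using $\assertion{\heap}{\perms'}{\env'}{\fpre(m)}$, obtained from the \refrule{ExecCallEnter} assertion via Lemma \ref{lem:foot-assert} --- yields correspondence of that frame with the callee permission set $\perms' = \foot{\heap}{\perms \setminus \xperms}{\env'}{\fpre(m)}$. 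The intermediate caller frame is partially validated via Case \ref{def:partial-valid-call}, instantiated with the original $\vstate$ and $V'$ as validator of the tail and with the argument evaluations and the \refrule{SGuardCall} consume of $\fpre(m)$ as the recorded symbolic data; the correspondence $\simstate{V'}{\sstate''[\senv = \senv(\sstate)]}{\heap}{\perms \setminus \perms'}{\env}$ comes from Lemma \ref{lem:cons-soundness} applied to that consume followed by Lemma \ref{lem:rem-simstate} to absorb the exclusion frame, using the identity $\vfoot{V'}{\heap}{\sperms} = \vfoot{V'}{\heap}{\frem(\sstate'', \fpre(m))} = \xperms$ and the premise $\assertion{\heap}{\perms \setminus \xperms}{\env'}{\fpre(m)}$; the tail $\pair{\heap}{\stack}$ stays validated by $\vstate$ and $V'$ because $\heap$ is unchanged. \refrule{ExecCallExit} (under \refrule{SGuardFinish} with a non-empty tail) runs this in reverse: Case \ref{def:partial-valid-call} of $\Gamma$'s validity supplies exactly the symbolic data needed to continue the call's symbolic execution via \refrule{SExecCall} (producing $\fpost(m)$ with a fresh result value), and Lemmas \ref{lem:produce-soundness}, \ref{lem:foot-assert} together with the monotonicity lemmas reconcile the returned permission set $\perms \cup \foot{\heap}{\perms'}{\env'}{\fpost(m)}$ with the post-produce symbolic heap. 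The loop steps \refrule{ExecWhileEnter}, \refrule{ExecWhileSkip}, \refrule{ExecWhileFinish} are handled by the same patterns, with \refrule{SVerifyLoopBody} / \refrule{SExecWhileSkip} and Case \ref{def:partial-valid-while} playing the roles of their method analogues.

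The main obstacle is the bookkeeping in these frame-pushing and -popping cases: juggling the several valuations in play (the guard's $V'$, the fresh valuation for a callee or loop-body frame, and the valuations validating frames deeper in the stack), checking that the exclusion-frame identities line up so that Lemma \ref{lem:rem-simstate} applies with exactly the permission sets that \refrule{ExecCallEnter}/\refrule{ExecWhileEnter} install, and verifying that the symbolic-state data recorded by Cases \ref{def:partial-valid-call} and \ref{def:partial-valid-while} of Definition \ref{def:partial-valid} is precisely what is needed to reconstruct the next symbolic execution step on return. Everything else reduces to routine applications of the soundness and monotonicity lemmas already established, plus the (preserved) well-formedness of dynamic states.
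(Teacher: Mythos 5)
Your decomposition is essentially the paper's: the paper cases on the dynamic step $\dexec{\heap}{\stack}{\vfoot{V'}{\heap}{\sperms}}{\heap'}{\stack'}$ and then observes which guard rule must have applied, while you case on the guard and observe which dynamic rule fires; since both are keyed to the statement form of the head frame these coincide, and your choice of lemmas (the $\seval$/$\scons$/$\sproduce$ soundness lemmas for correspondence, Lemma \ref{lem:preservation-heap-env-unchanged} and Lemma \ref{lem:pres-heap-change-partial} for the tail, Lemma \ref{lem:rem-simstate} to absorb the exclusion frame, and Cases \ref{def:partial-valid-call}/\ref{def:partial-valid-while} to record and replay the frame-push data) matches the paper's proof case by case. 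Two points, one minor and one substantive. Minor: Lemma \ref{lem:preservation-heap-env-unchanged} requires the head-frame environment to be \emph{unchanged}, so it does not cover \refrule{SGuardAssign} (nor alloc), where $\env$ gains a binding; for those the paper redoes Part \ref{def:state-valid-partial} by hand, using well-formedness (the assigned variable cannot be a method parameter) to show the recorded equations $V(\senv(\vstate)(x_i)) = V_0(t_i)$ of Case \ref{def:partial-valid-call} survive. You should either generalize the lemma or carry out that argument explicitly.

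Substantive: your claim that \refrule{ExecWhileFinish} follows ``the same pattern'' as \refrule{ExecCallExit} conceals the one place where the analogy breaks. On call exit, the caller frame advances \emph{past} the call, so \refrule{SVerifyStep} composed with \refrule{SExecCall} produces the needed reachable verification state. On loop-body completion, the resulting frame returns \emph{to the same} $\kwhile$ statement but with an environment $\env'$ carrying the body's modifications to the havoced variables; no verification state reachable by stepping over or into the loop from the data recorded in Case \ref{def:partial-valid-while} corresponds to that frame under the old valuation. The paper needs the dedicated rule \refrule{SVerifyLoop} --- which re-consumes the invariant, havocs $\fmodified(s)$ with fresh symbolic values, re-produces the invariant, and leaves the statement in place --- precisely to manufacture a reachable $\vstate_0'$ at the while statement whose fresh symbolic values can be sent to the post-iteration concrete values. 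Your proposal never mentions \refrule{SVerifyLoop}, and without it the \refrule{ExecWhileFinish} case does not go through; this is exactly the mismatch the paper discusses in \S\ref{sec:loops}.
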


\begin{proof}
  We proceed by cases on $\dexec{\heap}{\stack}{\vfoot{V'}{\heap}{\sperms}}{\heap'}{\stack'}$.

  \begin{enumcases}
    \case \refrule{ExecSeq}:
      We have
      $$\dexec{\heap}{\triple{\perms}{\env}{\sseq{\kskip}{s}} \cdot \stack^*}{\vfoot{V'}{\heap}{\sperms}}{\heap}{\triple{\perms}{\env}{s} \cdot \stack^*}.$$

      Since the initial state is validated by $\vstate$ and $V$, $\vstate = \triple{\sstate}{\sseq{\kskip}{s}}{\gform}$ for some $\sstate, \gform$, where $\simstate{V}{\sstate}{\heap}{\perms}{\env}$.
      
      Now $\sexec{\sstate(\vstate)}{\sseq{\kskip}{s}}{s}{\sstate}$ by \refrule{SExecSeq}. Therefore $\strans{\prog}{\vstate}{\vstate'}$ by \refrule{SVerifyStep} where $\vstate' = \triple{\sstate}{s}{\gform}$.

      Now, by lemma \ref{lem:preservation-heap-env-unchanged}, it suffices to show that $\vstate'$ corresponds to $\pair{\heap}{\triple{\perms}{\env}{s} \cdot \stack^*}$ (with valuation $V$).

      Since $\sstate(\vstate') = \sstate$, we have $\simstate{V}{\sstate(\vstate')}{\heap}{\perms}{\env}$, and $s(\vstate') = s$ by definition. Therefore $\vstate'$ corresponds to $\pair{\heap}{\triple{\perms}{\env}{s} \cdot \stack^*}$, which completes the proof.

    \case \refrule{ExecAssign}:
      We have
      \begin{gather*}
        \dexec{\heap}{\triple{\perms}{\env}{\sseq{x = e}{s}} \cdot \stack^*}{\vfoot{V'}{\heap}{\sperms}}{\heap}{\triple{\perms}{\env[x \mapsto v]}{s} \cdot \stack^*} \\
        \text{where}\quad \eval{\heap}{\env}{e}{v}
      \end{gather*}

      Since the initial state is validated by $\vstate$, $\vstate = \triple{\sstate}{\sseq{x = e}{s}}{\gform}$ for some $\sstate, \gform$ where $\simstate{V}{\sstate}{\heap}{\perms}{\env}$.

      The only guard that applies is \refrule{SGuardAssign}, so we have, for some $\sstate', t$:
      \begin{gather*}
        \sguard{\triple{\sstate}{\sseq{x = e}{s}}{\gform}}{\sstate}{\scheck}{\sperms} \\
        \text{where }
        \seval{\sstate}{e}{t}{\sstate'}{\scheck} \quad\text{and (by assumptions) }\rtassert{V'}{\heap}{\env}{\scheck}.
      \end{gather*}
      where $V'$ is the corresponding valuation, extending $V$.

      Let $\sstate'' = \sstate[\senv = \senv(\sstate)[x \mapsto t]]$, then $\sexec{\sstate}{\sseq{x = e}{s}}{s}{\sstate''}$ by \refrule{SExecAssign}.
      
      Let $\vstate' = \triple{\sstate''}{s}{\gform}$. Then $\strans{\prog}{\vstate}{\vstate'}$ by \refrule{SVerifyStep}, thus $\vstate'$ is reachable from $\prog$.

      We want to show that $\pair{\heap}{\triple{\perms}{\env[x \mapsto v]}{s} \cdot \stack^*}$ is validated by $\vstate'$ with valuation $V'$.

      \textit{Part \ref{def:state-valid-reachable}:} As shown above, $\vstate'$ is reachable from $\prog$ with valuation $V'$.

      \textit{Part \ref{def:state-valid-correspond}:}
      By lemma \ref{lem:seval-soundness} $\eval{\heap}{\env}{e}{V'(t)}$, therefore $V'(t) = v$. Also, $\simenv{V}{\senv(\sstate)}{\env}$, thus $\simenv{V'}{\senv(\sstate)[x \mapsto V'(t)]}{\env[x \mapsto v]}$. Rewriting using definitions, we get $\simenv{V'}{\senv(\sstate'')}{\env[x \mapsto v]}$.

      Also by lemma \ref{lem:seval-soundness}, $\simstate{V'}{\sstate'}{\heap}{\perms}{\env}$. Thus, since $\senv$ is the only component changed from $\sstate'$ to $\sstate''$ and $\simenv{V'}{\senv(\sstate'')}{\env[x \mapsto v]}$, $\simstate{V'}{\sstate''}{\heap}{\perms}{\env[x \mapsto v]}$.

      Also, by definition $\vstate' = \triple{\sstate''}{s}{\gform}$.

      Therefore $\vstate'$ corresponds to $\pair{\heap}{\triple{\perms}{\env[x \mapsto v]}{s} \cdot \stack^*}$.

      \textit{Part \ref{def:state-valid-partial}:}
      Since the initial state is validated by $\vstate$ and $V$, the partial state $\pair{\heap}{\stack^*}$ is validated by $\vstate$ and valuation $V$. Therefore one of \ref{def:partial-valid-nil}, \ref{def:partial-valid-call}, \ref{def:partial-valid-while} applies. We want to show that the partial state $\pair{\heap}{\stack^*}$ is validated by $\vstate'$ and valuation $V'$.

      \begin{itemize}
        \item \textit{Case \ref{def:partial-valid-nil}:} Then $\stack^* = \nilsym$ and trivially $\pair{\heap}{\nilsym}$ is validated by $\vstate'$ and valuation $V'$.
        
        \item \textit{Case \ref{def:partial-valid-call}:} Then $\stack^* = \triple{\env_0}{\perms_0}{\sseq{y \kassign m(e_1, \cdots, e_k)}{s_0}} \cdot \stack_0$ for some $\env_0$, $\perms_0$, $y$, $m$, $k$, $e_1, \cdots, e_k$, $s_0$, $\stack_0$. Also, there is some $\vstate_0$, $V_0$, $x_1, \cdots, x_k$, $t_1, \cdots, t_k$ and $\sstate_0, \cdots, \sstate_k, \sstate'$ such that
        \begin{gather*}
          \text{The partial state $\pair{\heap}{\stack_0}$ is validated by $\vstate_0$ and $V_0$}, \\
          \vstate_0 \text{ is reachable from $\prog$ with valuation $V_0$}, \quad s(\vstate_0) = s(\stack^*) \\
          x_1, \cdots, x_k = \fparams(m), \\
          \sstate_0 = \sstate(\vstate_0), \quad \seval{\sstate_0}{e_1}{t_1}{\sstate_1}{\_}, \quad\cdots,\quad \seval{\sstate_{k-1}}{e_k}{t_k}{\sstate_k}{\_}, \\
          \scons{\sstate_k}{\fpre(m)}{\sstate'}{\_}, \quad \simstate{V_0}{\sstate'[\senv = \senv(\sstate_0)]}{\heap}{\perms_0}{\env_0}, \quad\text{and} \\
          \universal{1 \le i \le k}{V(\senv(\vstate)(x_i)) = V_0(t_i)}, \\
          \gform(\vstate) = \fpost(m).
        \end{gather*}
  
        We want to show that the partial state $\pair{\heap}{\triple{\env_0}{\perms_0}{\sseq{y \kassign m(e_1, \cdots, e_k)}{s}} \cdot \stack_0}$ is validated by $\vstate'$ and valuation $V'$. Immediately from above we can conclude that
        \begin{gather*}
          \text{The partial state $\pair{\heap}{\stack_0}$ is validated by $\vstate_0$ and $V_0$}, \\
          \vstate_0 \text{ is reachable from $\prog$ with valuation $V_0$}, \quad s(\vstate_0) = s(\stack^*) \\
          x_1, \cdots, x_k = \fparams(m), \\
          \sstate_0 = \sstate(\vstate_0), \quad \seval{\sstate_0}{e_1}{t_1}{\sstate_1}{\_}, \quad\cdots,\quad \seval{\sstate_{k-1}}{e_k}{t_k}{\sstate_k}{\_}, \quad\text{and} \\
          \scons{\sstate_k}{\fpre(m)}{\sstate'}{\_}, \quad \simstate{V_0}{\sstate'[\senv = \senv(\sstate_0)]}{\heap}{\perms_0}{\env_0}.
        \end{gather*}
        Also, the frame $\triple{\perms}{\env}{\sseq{x = e}{s}}$ must be executing the body of $m$, since it is in the stack immediately above the frame that contains $y \kassign m(e_1, \cdots, e_k)$. Therefore, since $x_1, \cdots, x_k$ are all parameters of $m$, $y$ must be distinct from all of $x_1, \cdots, x_k$, since we do not allow assignment to parameters in a well-formed program. Thus
        \begin{align*}
          \forall 1 \le i \le k : V'(\senv(\vstate')(x_i))
            &= V'((\senv(\sstate)[x \mapsto t])(x_i)) = V'(\senv(\sstate)(x_i)) \\
            &= V'(\senv(\vstate)(x_i)) = V(\senv(\vstate)(x_i)) \\
            &= V_0(t_i).
        \end{align*}
  
        Finally, $\gform(\vstate') = \gform(\vstate)$ by definition, thus
        $$\gform(\vstate') = \gform(\vstate) = \fpost(m).$$
  
        Therefore the partial state $\pair{\heap}{\stack^*}$ is validated by $\vstate'$ and $V'$ in this case.
        
        \item \textit{Case \ref{def:partial-valid-while}:}
        Then $\stack^* = \triple{\env_0}{\perms_0}{\sseq{\swhile{e_0}{\gform_0}{s_0}}{s_0'}} \cdot \stack_0$ for some $\env_0$, $\perms_0$, $e$, $\gform_0$, $s_0$, $s_0'$, $\stack_0$, and there exists some $\vstate_0$, $V_0$, and $\sstate_0'$ such that:
        \begin{gather*}
          \text{The partial state $\pair{\heap}{\stack_0}$ is validated by $\vstate_0$ and $V_0$} \\
          \vstate_0 \text{ is reachable from $\prog$ with valuation $V_0$} \quad s(\vstate_0) = s(\stack^*) \\
          \scons{\sstate_0}{\gform_0}{\sstate_0'}{\_}, \quad
          \simstate{V_0}{\sstate_0'}{\heap}{\perms_0}{\env_0} \quad\text{and}\\
          \gform(\vstate) = \gform_0.
        \end{gather*}
  
        Now, by definition of $\vstate''$, $\gform(\vstate') = \gform(\vstate) = \gform_0$. Therefore, using the other assumptions given above, the partial state $\pair{\heap}{\stack^*}$ is validated by $\vstate'$ and $V'$ in this case.
      \end{itemize}

      Therefore definition part \ref{def:state-valid-partial} is satisfied.

      Therefore all parts of definition \ref{def:state-valid} are satisfied. Thus $\pair{\heap}{\triple{\perms}{\env[x \mapsto v]}{s} \cdot \stack^*}$ is validated by $\vstate'$ with valuation $V'$.

    \case \refrule{ExecAssignField}:

      We have
      \begin{align}
        &\dexec{\heap}{\triple{\perms}{\env}{\sseq{x.f = e}{s}} \cdot \stack^*}{\vfoot{V'}{\heap}{\sperms}}{\heap'}{\triple{\perms}{\env}{s} \cdot \stack^*} \\
        &\text{where}\quad \eval{\heap}{\env}{x}{\ell}, \quad \eval{\heap}{\env}{e}{v}, \quad \assertion{\heap}{\perms}{\env}{\kacc(x.f)}, \label{eq:dexec-pres-assign-field-eval-assert} \\
        &\hspace{3.5em} \frm{\heap}{\perms}{\env}{e}, \quad\text{and}\quad \heap' = \heap[\pair{\ell}{f} \mapsto v].
      \end{align}
      Since the initial state is validated by $\vstate$, $\vstate = \triple{\sstate}{\sseq{x.f = e}{s}}{\gform}$ for some $\sstate, \gform$ where $\simstate{V}{\sstate}{\heap}{\perms}{\env}$.

      The only guard rule that applies is \refrule{SGuardAssign}, so we have
      \begin{align}
        &\sguard{\triple{\sstate}{\sseq{x.f = e}{s}}{\gform}}{\sstate}{\scheck' \cup \scheck''}{\emptyset} \\
        &\text{where}\quad \seval{\sstate}{e}{t}{\sstate'}{\scheck'}, \quad \scons{\sstate'}{\kacc(x.f)}{\sstate''}{\scheck''} \\
        &\text{and (by assumptions)}\quad \rtassert{V'}{\heap}{\perms}{\scheck' \cup \scheck''}, \quad V'(\pc(\sstate'')) = \ktrue
      \end{align}

      Furthermore, since $\pc(\sstate'') \implies \sstate'$ by lemma \ref{lem:cons-subpath}, and by lemma \ref{lem:scheck-monotonicity},
      $$V'(\pc(\sstate')) = \ktrue, \quad \rtassert{V'}{\heap}{\perms}{\scheck'}, \quad\text{and}\quad \rtassert{V'}{\heap}{\perms}{\scheck''}.$$

      Now, by \refrule{SExecAssignField},
      \begin{align*}
        \sexec{\sstate}{\sseq{x.f = e}{s}}{s}{\sstate'''}
        \quad\text{where}\quad
        &\sstate''' = \sstate''[\sheap = \sheap'], \quad\text{and}\\
        &\sheap' = \sheap(\sstate''); \triple{\senv(\sstate'')(x)}{f}{t}.
      \end{align*}

      Let $\vstate' = \triple{\sstate'''}{s}{\gform}$.  We want to show that $\pair{\heap'}{\triple{\perms}{\env}{s} \cdot \stack^*}$ is validated by $\vstate'$ and $V'$.

      \textit{Part \ref{def:state-valid-reachable}:}
      By \refrule{SVerifyStep}, $\strans{\prog}{\vstate}{\vstate'}$. Therefore $\vstate'$ is reachable from program $\prog$ with valuation $V'$.

      \textit{Part \ref{def:state-valid-correspond}:}
      We want to show that $\vstate'$ corresponds to $\pair{\heap'}{\triple{\perms}{\env}{s} \cdot \stack^*}$. Since $\vstate' = \triple{\sstate'''}{s}{\_}$ by construction, it suffices to show that $\simstate{V'}{\sstate'''}{\heap'}{\perms}{\env}$.

      By lemma \ref{lem:seval-soundness}, $\simstate{V'}{\sstate'}{\heap}{\perms}{\env}$. By lemma \ref{lem:cons-soundness}, $\simstate{V'}{\sstate''}{\heap}{\perms \setminus \efoot{\heap}{\env}{\kacc(x.f)}}{\env}$.

      Since $\simenv{V}{\senv(\sstate)}{\env}$, $\env(x) = V(\senv(\sstate)(x)) = V'(\senv(\sstate)(x))$. Also, $\eval{\heap}{\env}{x}{\env(x)}$ by \refrule{EvalVar} and $\eval{\heap}{\env}{x}{\ell}$ by \eqref{eq:dexec-pres-assign-field-eval-assert}, thus
      $$V'(\senv(\sstate)(x)) = \env(x) = \ell.$$
      Also, $\senv(\sstate'') = \senv(\sstate)$ by lemmas \ref{lem:eval-unchanged} and \ref{lem:cons-unchanged}. Thus $\eval{\heap}{\env}{x}{V'(\senv(\sstate'')(x))}$. Therefore \\
      $\efoot{\heap}{\env}{\kacc(x.f)} = \set{\pair{\ell}{f}} = \set{ \pair{V'(\senv(\sstate'')(x))}{f} } = \vfoot{V'}{\heap}{\triple{\senv(\sstate'')(x)}{f}{t'}}$. Now,
      $$\simstate{V'}{\sstate''}{\heap}{\perms \setminus \vfoot{V'}{\heap}{\triple{\senv(\sstate'')(x)}{f}{t'}}}{\env}.$$

      Also, $v = V'(t)$ by lemma \ref{lem:seval-soundness}, thus $\heap'(V'(\senv(\sstate'')(x)), f) = \heap'(\ell, f) = v = V'(t)$. Now by lemma \ref{lem:pres-add-heap},
      $$\simstate{V'}{\sstate'''}{\heap'}{\perms}{\env}$$
      which is sufficient to show that $\vstate'$ corresponds to $\pair{\heap'}{\triple{\perms}{\env}{s} \cdot \stack^*}$.

      \textit{Part \ref{def:state-valid-partial}:}

      By \eqref{eq:dexec-pres-assign-field-eval-assert} $\assertion{\heap}{\perms}{\env}{\kacc(x.f)}$. This assertion must be given by \refrule{AssertAcc}, therefore $\pair{\ell}{f} \in \perms$.

      Now we show by induction that all partial states are validated, in other words we want to show that the partial state $\pair{\heap'}{\stack^*}$ is validated by $\vstate'$. By assumptions, $\pair{\heap}{\stack^*}$ is validated by $\vstate$ with valuation $V$. Also, by lemma \ref{lem:pres-heap-change-partial}, $\pair{\heap'}{\stack^*}$ is validated by $\vstate$ with $V$.
      
      Thus, one of the following cases apply:

      \begin{itemize}
        \item \textit{Case \ref{def:partial-valid-nil}:} Then $\stack^* = \nilsym$, and trivially $\pair{\heap'}{\nilsym}$ is validated by $\vstate'$ and $V$.
        
        \item \textit{Case \ref{def:partial-valid-call}:}
        Then $\stack^* = \triple{\env}{\perms}{\sseq{y \kassign m(e_1, \cdots, e_k)}{s}} \cdot \stack_0$ for some $\env$, $\perms$, $y$, $m$, $e_1, \cdots, e_k$, and there exists some $\vstate_0$, $V_0$, $x_1, \cdots, x_k$, $t_1, \cdots, t_k$, $\sstate_0, \cdots, \sstate_k$, and $\sstate_0'$ such that:
        \begin{gather*}
          \text{The partial state $\pair{\heap'}{\stack_0}$ is validated by $\vstate_0$ and $V_0$},\\
          \vstate_0 \text{ is reachable from $\prog$ with valuation $V_0$}, \quad s(\vstate_0) = s(\stack^*)\\
          x_1, \cdots, x_k = \fparams(m), \\
          \sstate_0 = \sstate(\vstate_0), \quad \seval{\sstate_0}{e_1}{t_1}{\sstate_1}{\_}, \quad\cdots,\quad \seval{\sstate_{k-1}}{e_k}{t_k}{\sstate_k}{\_}, \\
          \universal{1 \le i \le k}{V(\senv(\vstate)(x_i)) = V_0(t_i)}, \\
          \scons{\sstate_k[\senv = [x_1 \mapsto t_1, \cdots, x_k \mapsto t_k]]}{\fpre(m)}{\sstate_0'}{\_}, \\
          \simstate{V_0}{\sstate_0'[\senv = \senv(\sstate_0)]}{\heap'}{\perms}{\env}, \quad\text{and} \\
          \gform(\vstate) = \fpost(m).
        \end{gather*}

        We want to show that the partial state $\pair{\heap'}{\triple{\env}{\perms}{\sseq{y \kassign m(e_1, \cdots, e_k)}{s}} \cdot \stack_0}$ is validated by $\vstate'$. From above,
        \begin{gather*}
          \text{The partial state $\pair{\heap'}{\stack_0}$ is validated by $\vstate_0$ and $V_0$}, \\
          \vstate_0 \text{ is reachable from $\prog$ with valuation $V_0$}, \quad s(\vstate_0) = s(\stack^*)\\
          x_1, \cdots, x_k = \fparams(m), \\
          \sstate_0 = \sstate(\vstate_0), \quad \seval{\sstate_0}{e_1}{t_1}{\sstate_1}{\_}, \quad\cdots,\quad \seval{\sstate_{k-1}}{e_k}{t_k}{\sstate_k}{\_}, \\
          \scons{\sstate_k[\senv = [x_1 \mapsto t_1, \cdots, x_k \mapsto t_k]]}{\fpre(m)}{\sstate_0'}{\_}, \quad\text{and}\\
          \simstate{V_0}{\sstate_0'[\senv = \senv(\sstate_0)]}{\heap'}{\perms}{\env}.
        \end{gather*}

        $\senv(\vstate') = \senv(\sstate''') = \senv(\sstate'')$ by definition. Also, $\senv(\sstate'') = \senv(\sstate') = \senv(\sstate')$ by lemmas \ref{lem:eval-unchanged} and \ref{lem:cons-unchanged}. Also, $V'$ extends $V$. Thus
        $$\universal{1 \le i \le k}{V'(\senv(\vstate')(x_i)) = V(\senv(\vstate)(x_i)) = V_0(t_i)}.$$
        Also, by definition
        $$\gform(\vstate') = \gform(\vstate) = \fpost(m).$$
        Therefore the partial state $\pair{\heap'}{\stack^*}$ is validated by $\vstate'$.

        \item \textit{Case \ref{def:partial-valid-while}:}
        Then $\stack^* = \triple{\env}{\perms}{\sseq{\swhile{e}{\gform}{s}}{s'}} \cdot \stack_0$ for some $\env$, $\perms$, $e$, $\gform$, $s$, $s'$, and $\stack_0$, and there exists some $\vstate_0$, $V_0$, and $\sstate_0'$ such that:
        \begin{gather*}
          \text{The partial state $\pair{\heap'}{\stack_0}$ is validated by $\vstate$ and $V$} \\
          \vstate_0 \text{ is reachable from $\prog$ with valuation $V_0$} \quad s(\vstate_0) = s(\stack^*) \\
          \scons{\sstate(\vstate_0)}{\gform}{\sstate_0'}{\_}, \quad\text{and}\quad
          \simstate{V_0}{\sstate_0'}{\heap'}{\perms}{\env} \\
          \gform(\vstate) = \gform
        \end{gather*}

        Now, $\gform(\vstate') = \gform(\vstate)$ by definition, thus $\gform(\vstate') = \gform$. Therefore the partial state $\pair{\heap'}{\stack^*}$ is validated by $\vstate'$.
      \end{itemize}
      Now we have shown that $\vstate'$ corresponds to the resulting state, and therefore $\gamma'$ is validated by $\vstate'$.

    \case \refrule{ExecAlloc}:
      We have
      \begin{align*}
        &\dexec{\heap}{\triple{\perms}{\env}{\sseq{x = \kalloc(S)}{s}} \cdot \stack^*}{\vfoot{V'}{\heap}{\sperms}}{\heap'}{\triple{\perms'}{\env[x \mapsto \ell]}{s} \cdot \stack^*} \\
        &\text{where} \quad \multiple{T~f} = \fstruct(S), \quad \ell = \ffresh, \quad \heap' = \heap[\multiple{\pair{\ell}{f} \mapsto \fdefault(T)}], \\
        &\hspace{3.5em} \perms' = \perms \cup \set{\multiple{\pair{\ell}{f}}}
      \end{align*}

      Since the initial state is validated by $\vstate$, $\vstate = \triple{\sstate}{\sseq{x = \kalloc}{s}}{\gform}$ for some $\sstate, \gform$ where $\simstate{V}{\sstate}{\heap}{\perms}{\env}$.

      By \refrule{SExecAlloc} \begin{align*}
        \sexec{\sstate}{\sseq{x = \kalloc(S)}{s}}{s}{\sstate'}, ~\text{where}~
        &\sheap(\sstate') = \sheap(\sstate); \multiple{\triple{t}{f}{\fdefault(T)}}, \\
        &\senv(\sstate') = \senv(\sstate)[x \mapsto t], \quad\text{and} \\
        &t = \ffresh.
      \end{align*}
      Let $\vstate' = \triple{\sstate'}{s}{\gform}$, and $V' = V[t \mapsto \ell]$. We want to show that $\pair{\heap'}{\triple{\perms'}{\env[x \mapsto \ell]}{s} \cdot \stack^*}$ is validated by $\vstate'$ with $V$.

      \textit{Part \ref{def:state-valid-reachable}:}
      By \refrule{SVerifyStep}, $\prog \vdash \vstate \to \vstate'$, and all fresh values added to $\vstate'$ are defined in $V'$. Therefore, $\vstate'$ is reachable from $\prog$ with valuation $V'$.

      \textit{Part \ref{def:state-valid-correspond}:}
      We want to show that $\vstate'$ corresponds to ${\heap'}{\triple{\perms'}{\env[x \mapsto \ell]}{s} \cdot \stack^*}$. By definition $s(\vstate') = s$ and $\sstate(\vstate') = \sstate'$, thus it suffices to show $\simstate{V'}{\sstate'}{\heap'}{\perms'}{\env[x \mapsto \ell]}$.

      By assumptions, $\simenv{V}{\senv(\sstate)}{\env}$. Also,
      $$V'(\senv(\sstate')(x)) = V'((\senv(\sstate)[x \mapsto t])(x)) = V'(t) = \ell = (\env[x \mapsto \ell])(x).$$
      Therefore
      $$\simenv{V'}{\senv(\sstate)[x \mapsto t]}{\env[x \mapsto \ell]}.$$

      By assumptions, $\simstate{V}{\sstate}{\heap}{\perms}{\env}$. Since $V \subseteq V'$, $\simstate{V'}{\sstate}{\heap}{\perms}{\env}$.
      Since $\ell$ is a fresh value, $\ell \notin \perms$. Thus
      $$\universal{\pair{\ell}{f} \in \perms}{\heap'(\ell, f) = \heap(\ell, f)}.$$
      Thus by lemma \ref{lem:pres-add-heap}, $\simstate{V'}{\sstate}{\heap'}{\perms}{\env}$. Also, since $\perms \subseteq \perms'$, by lemma \ref{lem:simstate-monotonicity} $\simstate{V'}{\sstate}{\heap'}{\perms'}{\env}$.

      Let $\triple{f'}{t'}{t''} \in \sheap(\sstate')$. If $\triple{f'}{t'}{t''} \in \sheap(\sstate)$, then since $\simstate{V'}{\sstate}{\heap'}{\perms'}{\env}$, $\heap'(V'(t'), f') = V'(t'')$ and $\pair{V'(t')}{f'} \in \perms'$.

      Otherwise, $t' = t$ and $f' = f$ for some $T~f \in \fstruct(S)$, and thus $t'' = \fdefault(T)$. Now $\heap(V'(t'), f') = \heap'(V'(t), f) = \heap'(\ell, f) = \fdefault(T) = t''.$
      
      Therefore $$\universal{\triple{f'}{t'}{t''} \in \sheap(\sstate')}{\heap'(V'(t'), f') = V'(t'') ~\text{and}~ \pair{V'(t')}{f'} \in \perms'}.$$

      Also, since $\simstate{V'}{\sstate}{\heap'}{\perms'}{\env}$ and $\universal{\pair{p}{\multiple{t}} \in \sheap(\sstate')}{\pair{p}{\multiple{t}} \in \sheap(\sstate)}$,
      $$\universal{\pair{p}{\multiple{t}} \in \sheap(\sstate')}{\assertion{\heap'}{\perms'}{[\multiple{x \mapsto V(t)}]}{\fpred(p)}}$$
      where $\multiple{x} = \fpredparams(p)$.

      Let $h_1, h_2 \in \sheap(\sstate')$ and suppose $h_1 \ne h_2$. If $h_1 \in \sheap(\sstate)$ and $h_2 \in \sheap(\sstate)$, then in this case $\vfoot{V'}{\heap'}{h_1} \cap \vfoot{V'}{\heap'}{h_2} = \emptyset$ since $\simstate{V'}{\sstate}{\heap'}{\perms'}{\env}$.

      Otherwise, WLOG $h_1 = \triple{t}{f_1}{\fdefault(T_1)}$ for some $T_1~f_1 \in \fstruct(S)$. Thus 
      \begin{align*}
        \vfoot{V'}{\heap'}{h_1} &= \vfoot{V'}{\heap'}{\triple{t}{f_1}{\fdefault(T_1)}} &h_1 = \triple{t}{f_1}{\fdefault(T_1)} \\
          &= \set{ \pair{V'(t)}{f_1} } &\text{defn.} \\
          &= \set{ \pair{\ell}{f_1} } &\text{defn. $V'$}
      \end{align*}
      If $h_2 \in \sheap(\sstate)$, then
      \begin{align*}
        \vfoot{V'}{\heap'}{h_2} &= \vfoot{V}{\heap}{h_2} & V \subseteq V', \heap \subseteq \heap' \\
          &\subseteq \perms &\text{Lemma \ref{lem:sim-heap-contains}}
      \end{align*}
      Now $\pair{\ell}{f_1} \notin \perms$ since $\ell$ is a fresh value. Therefore $\vfoot{V'}{\heap'}{h_1} \cap \vfoot{V'}{\heap'}{h_2} = \emptyset$ in this case.
      
      Otherwise, $h_2 = \triple{t}{f_2}{\fdefault(T_2)}$ for some $T_2~f_2 \in \fstruct(S)$. Then $f_1 \ne f_2$ since $h_1 \ne h_2$. Therefore
      \begin{align*}
        \vfoot{V'}{\heap'}{h_1} \cap \vfoot{V'}{\heap'}{h_2} &= \vfoot{V'}{\heap'}{\triple{t}{f_1}{\fdefault(T_1)}} \cap \vfoot{V'}{\heap'}{\triple{t}{f_2}{\fdefault(T_2)}} \\
          &= \set{\pair{V'(t)}{f_1}} \cap \set{\pair{V'(t)}{f_2}} \\
          &= \emptyset
      \end{align*}

      Therefore,
      $$\universal{h_1, h_2 \in \sheap(\sstate')}{\vfoot{V'}{\heap'}{h_1} \cap \vfoot{V'}{\heap'}{h_2} = \emptyset}.$$

      Now, since we have shown all requirements, we can conclude that $$\simheap{V'}{\sheap(\sstate')}{\heap'}{\perms'}.$$

      Finally, since $\sheap$ is the only component that differs between $\sstate'$ and $\sstate$, $\simstate{V'}{\sstate}{\heap'}{\perms'}{\env}$, $\simenv{V'}{\senv(\sstate)[x \mapsto t]}{\env[x \mapsto \ell]}$, and $\simheap{V'}{\sheap(\sstate')}{\heap'}{\perms'}$,
      $$\simstate{V'}{\sstate'[\senv = \senv(\sstate)[x \mapsto t]]}{\heap'}{\perms'}{\env[x \mapsto \ell]}$$
      which is what we wanted to show.

      \textit{Part \ref{def:state-valid-partial}:}
      First, we show that the partial state $\pair{\heap}{\stack^*}$ is validated by $\vstate'$ and $V'$.

      Since the initial state is validated by $\vstate$ and $V$, the partial state $\pair{\heap}{\stack^*}$ is validated by $\vstate$ and valuation $V$. Therefore one of \ref{def:partial-valid-nil}, \ref{def:partial-valid-call}, \ref{def:partial-valid-while} applies. We want to show that the partial state $\pair{\heap}{\stack^*}$ is validated by $\vstate'$ and valuation $V'$.

      \begin{itemize}
        \item \textit{Case \ref{def:partial-valid-nil}:} Then $\stack^* = \nilsym$ and trivially $\pair{\heap}{\nilsym}$ is validated by $\vstate'$ and valuation $V'$.
        
        \item \textit{Case \ref{def:partial-valid-call}:} Then $\stack^* = \triple{\env_0}{\perms_0}{\sseq{y \kassign m(e_1, \cdots, e_k)}{s_0}} \cdot \stack_0$ for some $\env_0$, $\perms_0$, $y$, $m$, $k$, $e_1, \cdots, e_k$, $s_0$, $\stack_0$. Also, there is some $\vstate_0$, $V_0$, $x_1, \cdots, x_k$, $t_1, \cdots, t_k$ and $\sstate_0, \cdots, \sstate_k, \sstate'$ such that
        \begin{gather*}
          \text{The partial state $\pair{\heap}{\stack_0}$ is validated by $\vstate_0$ and $V_0$}, \\
          \vstate_0 \text{ is reachable from $\prog$ with valuation $V_0$}, \quad s(\vstate_0) = s(\stack^*) \\
          x_1, \cdots, x_k = \fparams(m), \\
          \sstate_0 = \sstate(\vstate_0), \quad \seval{\sstate_0}{e_1}{t_1}{\sstate_1}{\_}, \quad\cdots,\quad \seval{\sstate_{k-1}}{e_k}{t_k}{\sstate_k}{\_}, \\
          \scons{\sstate_k}{\fpre(m)}{\sstate'}{\_}, \quad \simstate{V_0}{\sstate'[\senv = \senv(\sstate_0)]}{\heap}{\perms_0}{\env_0}, \quad\text{and} \\
          \universal{1 \le i \le k}{V(\senv(\vstate)(x_i)) = V_0(t_i)}, \\
          \gform(\vstate) = \fpost(m).
        \end{gather*}
  
        We want to show that the partial state $\pair{\heap}{\triple{\env_0}{\perms_0}{\sseq{y \kassign m(e_1, \cdots, e_k)}{s}} \cdot \stack_0}$ is validated by $\vstate'$ and valuation $V'$. Immediately from above we can conclude that
        \begin{gather*}
          \text{The partial state $\pair{\heap}{\stack_0}$ is validated by $\vstate_0$ and $V_0$}, \\
          \vstate_0 \text{ is reachable from $\prog$ with valuation $V_0$}, \quad s(\vstate_0) = s(\stack^*) \\
          x_1, \cdots, x_k = \fparams(m), \\
          \sstate_0 = \sstate(\vstate_0), \quad \seval{\sstate_0}{e_1}{t_1}{\sstate_1}{\_}, \quad\cdots,\quad \seval{\sstate_{k-1}}{e_k}{t_k}{\sstate_k}{\_}, \quad\text{and} \\
          \scons{\sstate_k}{\fpre(m)}{\sstate'}{\_}, \quad \simstate{V_0}{\sstate'[\senv = \senv(\sstate_0)]}{\heap}{\perms_0}{\env_0}.
        \end{gather*}
        Also, the frame $\triple{\perms}{\env}{\sseq{x = \kalloc(S)}{s}}$ must be executing the body of $m$, since it is in the stack immediately above the frame that contains $y \kassign m(e_1, \cdots, e_k)$. Therefore, since $x_1, \cdots, x_k$ are all parameters of $m$, $y$ must be distinct from all of $x_1, \cdots, x_k$, since we do not allow assignment to parameters in a well-formed program. Thus
        \begin{align*}
          \forall 1 \le i \le k : V'(\senv(\vstate')(x_i))
            &= V'((\senv(\sstate)[x \mapsto t])(x_i)) = V'(\senv(\sstate)(x_i)) \\
            &= V'(\senv(\vstate)(x_i)) = V(\senv(\vstate)(x_i)) \\
            &= V_0(t_i).
        \end{align*}
  
        Finally, $\gform(\vstate') = \gform(\vstate)$ by definition, thus
        $$\gform(\vstate') = \gform(\vstate) = \fpost(m).$$
  
        Therefore the partial state $\pair{\heap}{\stack^*}$ is validated by $\vstate'$ and $V'$ in this case.
        
        \item \textit{Case \ref{def:partial-valid-while}:}
        Then $\stack^* = \triple{\env_0}{\perms_0}{\sseq{\swhile{e_0}{\gform_0}{s_0}}{s_0'}} \cdot \stack_0$ for some $\env_0$, $\perms_0$, $e$, $\gform_0$, $s_0$, $s_0'$, $\stack_0$, and there exists some $\vstate_0$, $V_0$, and $\sstate_0'$ such that:
        \begin{gather*}
          \text{The partial state $\pair{\heap}{\stack_0}$ is validated by $\vstate_0$ and $V_0$} \\
          \vstate_0 \text{ is reachable from $\prog$ with valuation $V_0$}, \quad s(\vstate_0) = s(\stack^*) \\
          \scons{\sstate_0}{\gform_0}{\sstate_0'}{\_}, \quad
          \simstate{V_0}{\sstate_0'}{\heap}{\perms_0}{\env_0} \quad\text{and}\\
          \gform(\vstate) = \gform_0.
        \end{gather*}
  
        Now, by definition of $\vstate''$, $\gform(\vstate') = \gform(\vstate) = \gform_0$. Therefore, using the other assumptions given above, the partial state $\pair{\heap}{\stack^*}$ is validated by $\vstate'$ and $V'$ in this case.

      \end{itemize}

      Therefore the partial state $\pair{\heap}{\stack^*}$ is validated by $\vstate'$ and $V'$.

      Now, by lemma \ref{lem:pres-heap-change-partial}, the partial state $\pair{\heap'}{\stack^*}$ is validated by $\vstate'$ and $V'$, which is what we need to show for this part.

      Therefore $\pair{\heap'}{\triple{\perms'}{\env[x \mapsto \ell]}{s} \cdot \stack^*}$ is validated by $\vstate'$ with $V$, which completes the proof.

    \case \refrule{ExecCallEnter}: We have
      \begin{align*}
        &\pair{\heap}{\triple{\perms}{\env}{\sseq{y \kassign m(\multiple{e})}{s}} \cdot \stack^*}, \vfoot{V'}{\heap}{\sperms} \to \\
        &\quad \pair{\heap}{\triple{\perms'}{\env'}{\sseq{\fbody(m)}{\kskip}} \cdot \triple{\perms \setminus \perms'}{\env}{\sseq{y \kassign m(\multiple{e})}{s}} \cdot \stack^*} \\
        &\text{where}~ \multiple{x} = \fparams(m), \quad \multiple{\eval{\heap}{\env}{e}{v}}, \quad \multiple{\frm{\heap}{\perms}{\env}{e}}, \\
        &\hspace{2em} \env' = [\multiple{x \mapsto v}], \quad
        \assertion{\heap}{\perms \setminus \xperms}{\env'}{\fpre(m)}, \\
        &\hspace{2em} \xperms = \vfoot{V'}{\heap}{\sperms}, \quad\text{and}\quad
        \perms' = \foot{\heap}{\perms \setminus \xperms}{\env'}{\fpre(m)}.
      \end{align*}

      By assumptions, the initial state is validated by some $\vstate$ and valuation $V$, thus $\vstate = \triple{\sstate}{\sseq{y \kassign m(\multiple{e})}{s}}{\gform}$ for some $\sstate$, $\gform$ where $\simstate{V'}{\sstate}{\heap}{\perms}{\env}$.

      The only guard rule that applies is \refrule{SGuardCall}, so we have
      \begin{gather*}
        \multiple{\seval{\sstate}{e}{t}{\sstate'}{\scheck}}, \quad \scons{\sstate'[\senv = [\multiple{x \mapsto t}]]}{\fpre(m)}{\sstate''}{\scheck'}, \\
        \sperms = \frem(\sstate'', \fpre(m)), \quad
        \text{and by assumptions,}\quad \rtassert{V}{\heap}{\perms}{\multiple{\scheck} \cup \scheck'}
      \end{gather*}

      For some $k$, let $x_1, \cdots, x_k = \multiple{x}$, $e_1, \cdots, e_k = \multiple{e}$, $v_1, \cdots, v_k = \multiple{v}$, and $t_1 = \ffresh, \cdots, t_k = \ffresh$.

      Also, let $\sstate_0 = \quintuple{\bot}{\emptyset}{\emptyset}{[x_1 \mapsto t_1, \cdots, x_k \mapsto t_k]}{\ktrue}$, and let $V_0 = [t_1 \mapsto v_1, \cdots, t_k \mapsto v_k]$. Then $\simstate{V_0}{\sstate_0}{\heap}{\perms'}{\env'}$.

      $\efoot{\heap}{\env'}{\fpre(m)} \subseteq \perms'$: If $\fpre(m)$ is completely precise, then $\perms' = \foot{\heap}{\perms \setminus \xperms}{\env'}{\fpre(m)} = \\ \efoot{\heap}{\env'}{\fpre(m)}$. Otherwise, $\perms' = \foot{\heap}{\perms \setminus \xperms}{\env'}{\fpre(m)} = \perms \setminus \xperms$, but also $\efoot{\heap}{\env'}{\fpre(m)} \subseteq \perms' = \perms \setminus \xperms$ by lemma \ref{lem:efoot-subset-spec} since $\assertion{\heap}{\perms \setminus \xperms}{\env'}{\fpre(m)}$.

      Now $\assertion{\heap}{\perms'}{\env'}{\fpre(m)}$ by lemma \ref{lem:assert-efoot-subset}, since $\assertion{\heap}{\perms \setminus \xperms}{\env'}{\fpre(m)}$.

      Now $\assertion{\heap}{\perms'}{\env'}{\fpre(m)}$, $\simstate{V_0}{\sstate_0}{\heap}{\perms'}{\env'}$. Thus by lemma \ref{lem:produce-progress}, for some $\sstate_0'$,
      $$\sproduce{\sstate_0}{\fpre(m)}{\sstate_0'} ~\text{and}~V_0'(\pc(\sstate_0)) \quad\text{where}\quad V_0' = V_0[\sproduce{\sstate_0}{\fpre(m)}{\sstate_0'} \mid \heap].$$

      Let $\vstate_0' = \triple{\sstate_0'}{\sseq{\fbody(m)}{\kskip}}{\fpost(m)}$. We want to show that
      $$\Gamma' = \pair{\heap}{\triple{\perms'}{\env'}{\sseq{\fbody(m)}{\kskip}} \cdot \triple{\perms \setminus \perms'}{\env}{\sseq{y \kassign m(\multiple{e})}{s}} \cdot \stack^*}$$ is validated by $\vstate_0'$ with $V_0'$.

      \textit{Part \ref{def:state-valid-reachable}:}
      By \refrule{SVerifyMethod}, $\strans{\prog}{\initsym}{\vstate_0'}$. Therefore $\vstate_0'$ is reachable from $\prog$ with valuation $V_0'$.

      \textit{Part \ref{def:state-valid-correspond}:}
      We want to show that $\Gamma'$ corresponds to $\vstate_0'$.

      By definition $s(\vstate_0') = \sseq{\fbody(m)}{\kskip} = s(\Gamma')$. Therefore, since $\sstate(\vstate_0') = \sstate_0'$, it suffices to show $\simstate{V_0'}{\sstate_0'}{\heap}{\perms'}{\env'}$.

      Since $\sheap(\sstate_0) = \oheap(\sstate_0) = \emptyset$, $\simheap{V_0}{\sheap(\sstate_0)}{\heap}{\perms' \setminus \efoot{\heap}{\env'}{\fpre(m)}}$ and $\simheap{V_0}{\oheap(\sstate_0)}{\heap}{\perms' \setminus \efoot{\heap}{\env'}{\fpre(m)}}$. Also, for each $1 \le i \le k$, $V_0(\senv(\sstate_0)(x_i)) = v_i = \env'(x_i)$, thus $\simenv{V_0}{\senv(\sstate_0)}{\env'}$. Finally, $V_0(\pc(\sstate_0)) = V_0(\ktrue) = \ktrue$. Therefore $\simstate{V_0}{\sstate_0}{\heap}{\perms' \setminus \efoot{\heap}{\env'}{\fpre(m)}}{\env'}$.

      Also, as shown before, $\assertion{\heap}{\perms'}{\env'}{\fpre(m)}$. Therefore, by lemma \ref{lem:produce-soundness},
      $$\simstate{V_0'}{\sstate_0'}{\heap}{\perms'}{\env'}.$$

      \textit{Part \ref{def:state-valid-partial}:}
      We want to show that the partial state $\pair{\heap}{\triple{\perms \setminus \perms'}{\env}{\sseq{y \kassign m(\multiple{e})}{s}} \cdot \stack^*}$ is validated by $\vstate_0'$ with $V_0'$, thus it suffices to show that case \ref{def:partial-valid-call} is satisfied.

      Since $\triple{\perms}{\env}{\sseq{y \kassign m(\multiple{e})}{s}} \cdot \stack^*$ was validated by $\vstate$ and $V$, the partial state $\pair{\heap}{\stack^*}$ is validated by $\vstate$ and $V$.

      Also, by assumptions, $\vstate$ is reachable from $\prog$ with valuation $V$ and $s(\vstate) = \sseq{y \kassign m(\multiple{e})}{s}$ as shown before.

      Furthermore, by assumptions, $x_1, \cdots, x_k = \multiple{x} = \fparams(m)$

      Now let $\sstate_0 = \sstate = \sstate(\vstate)$, then $\multiple{\seval{\sstate}{e}{t}{\sstate'}{\scheck}}$, which was shown before, represents the series of judgements
      $$\seval{\sstate_0}{e_1}{t_1}{\sstate_1}{\scheck_1}, \quad\cdots\quad \seval{\sstate_{k-1}}{e_k}{t_k}{\sstate_k}{\stack_k}$$
      where $\sstate_k = \sstate'$. Also, as shown before,
      $$\scons{\sstate_k[\senv = [x_1 \mapsto t_1, \cdots, x_k \mapsto t_k]]}{\fpre(m)}{\sstate''}{\scheck'}.$$
      Note that by definition \ref{def:sguard-valuation} $V' = V[\sguard{\vstate}{\sstate'}{\scheck}{\sperms} \mid \heap]$ is  the valuation corresponding to the series of judgements above, extending $V$.

      By lemmas \ref{lem:eval-subpath} and \ref{lem:cons-subpath}, $\pc(\sstate'') \implies \pc(\sstate_k) \implies \cdots \implies \pc(\sstate_1)$. Thus, since $V'(\pc(\sstate'')) = \ktrue$ by assumption,
      $$V'(\pc(\sstate'')) = V'(\pc(\sstate_k)) = \cdots = V'(\pc(\sstate_1)) = \ktrue.$$
      Therefore, by lemmas \ref{lem:seval-soundness} and \ref{lem:cons-soundness},
      $$\simstate{V'}{\sstate_1}{\heap}{\perms}{\env}, \quad\cdots,\quad \simstate{V'}{\sstate_k}{\heap}{\perms}{\env}, \quad \simstate{V'}{\sstate''}{\heap}{\perms \setminus \efoot{\heap}{\env'}{\fpre(m)}}{\env'},$$
      Furthermore, since $\xperms = \vfoot{V'}{\heap}{\sperms} = \vfoot{V'}{\heap}{\frem(\sstate'', \fpre(m))}$ and $\assertion{\heap}{\perms \setminus \xperms}{\env'}{\fpre(m)}$, we can apply lemma \ref{lem:rem-simstate} to get
      $$\simstate{V'}{\sstate''[\senv = \senv(\sstate_0)]}{\heap}{\perms \setminus \perms'}{\env'}.$$
      Since $\simenv{V}{\senv(\sstate_0)}{\env}$ (by assumptions and since $\sstate_0 = \sstate$),
      $$\simstate{V'}{\sstate''[\senv = \senv(\sstate_0)]}{\heap}{\perms \setminus \perms'}{\env}.$$

      For each $1 \le i \le k$, $\eval{\heap}{\env}{e_i}{V'(t_i)}$ by lemma \ref{lem:seval-soundness}, and $\eval{\heap}{\env}{e_i}{v_i}$ as shown before, thus $V'(t_i) = v_i$. Thus
      \begin{align*}
        \universal{1 \le i \le k}{V_0'(\senv(\vstate_0')(x_i)) &= V_0'(\senv(\sstate_0')(x_i))} &\text{by defn.} \\
          &= V_0'(\senv(\sstate_0)(x_i)) &\text{Lemma \ref{lem:cons-unchanged}} \\
          &= V_0(\senv(\sstate_0)(x_i)) &V \subseteq V' \\
          &= v_i &\text{by def.}\\
          &= V'(t_i). &\text{shown above}
      \end{align*}

      Finally, by definition $\gform(\vstate) = \fpost(m)$.

      Therefore the partial state $\pair{\heap}{\triple{\perms'}{\env'}{\sseq{\fbody(m)}{\kskip}} \cdot \triple{\perms \setminus \perms'}{\env}{\sseq{y \kassign m(\multiple{e})}{s}} \cdot \stack^*}$ is validated by $\vstate_0'$ with $V_0'$.

      Therefore $\Gamma'$ is validated by $\vstate_0'$ with $V_0'$.

    \case \refrule{ExecCallExit}: We have
      \begin{gather}
        \dexec{\heap}{\triple{\perms}{\env}{\kskip} \cdot \triple{\perms'}{\env'}{\sseq{y \kassign m(\multiple{e})}{s}} \cdot \stack}{\vfoot{V'}{\heap}{\sperms}}{\heap}{\triple{\perms''}{\env''}{s} \cdot \stack^*} \label{eq:dexec-pres-return-exec}\\
        \text{where } \assertion{\heap}{\perms}{\env}{\fpost(m)}, \quad
        \env'' = \env'[y \mapsto \env(\kresult)], \label{eq:dexec-pres-return-1} \\
        \text{and }\perms'' = \perms' \cup \foot{\heap}{\perms}{\env}{\fpost(m)}.
      \end{gather}

      By assumptions, the initial state is validated by some $\vstate$ and valuation $V$, thus $\vstate = \triple{\sstate}{\kskip}{\gform}$ for some $\sstate, \gform$ where $\simstate{V'}{\sstate}{\heap}{\perms}{\env}$.

      Also, by \ref{def:state-valid-partial}, the partial state $\pair{\heap}{\triple{\perms'}{\env'}{\sseq{y \kassign m(\multiple{e})}{s}} \cdot \stack}$ is validated by $\vstate$ and $V$. Thus \ref{def:partial-valid-call} must apply, and thus there is some $\vstate'$ reachable from $\prog$ and valuation $V$ such that $\vstate' = \triple{\sstate_0}{\sseq{y \kassign m(\multiple{e})}{s}}{\gform'}$ for some $\sstate_0, \gform'$. Also, we can let $e_1, \cdots, e_k = \multiple{e}$ and then there are sequences $\sstate_1, \cdots, \sstate_k$, $x_1, \cdots, x_k$, and $t_1, \cdots, t_k$ where
      \begin{align}
        &\seval{\sstate_0}{e_1}{t_1}{\sstate_1}{\_}, \quad\cdots,\quad
        \seval{\sstate_{k-1}}{e_k}{t_k}{\sstate_k}{\_}, &\text{by \eqref{eq:partial-valid-call-eval}}\label{eq:dexec-pres-return-seval} \\
        &\scons{\sstate_k[\senv = [\multiple{x_i \mapsto t_i}]]}{\fpre(m)}{\sstate'}{\_}, \label{eq:dexec-pres-return-cons} \\
        &\text{and}\quad
        \simstate{V'}{\sstate'[\senv = \senv(\sstate_0)]}{\heap}{\perms'}{\env'} &\text{by \eqref{eq:partial-valid-call-sim}} \label{eq:dexec-pres-return-2}
      \end{align}
      where $V'$ is a valuation corresponding to this series of judgements.

      Let
      \begin{gather*}
        t = \ffresh, \quad
        \hat{V}' = V'[t \mapsto \env(\kresult)], \quad
        \hat{\env} = [x_1 \mapsto \env(x_1), \cdots, x_k \mapsto \env(x_k)], \\
        \text{and}\quad \hat{\senv} = [x_1 \mapsto t_1, \cdots, x_k \mapsto t_k, \kresult \mapsto t]
      \end{gather*}

      We have $\assertion{\heap}{\perms}{\env}{\fpost(m)}$ by \eqref{eq:dexec-pres-return-1}. Since $\hat{\env}$ is simply the restriction of $\env$ to $\fparams(m)$ and $\kresult$, and $\fpost(m)$ may only reference variables in $\fparams(m)$ as well as $\kresult$, $\assertion{\heap}{\perms}{\hat{\env}}{\fpost(m)}$ and $\efoot{\heap}{\hat{\env}}{\fpost(m)} = \efoot{\heap}{\env}{\fpost(m)}$.

      By lemma \ref{lem:efoot-subset-spec} $\efoot{\heap}{\env}{\fpost(m)} \subseteq \perms$. Recall that $\perms'' = \perms \cup \foot{\heap}{\perms}{\env}{\fpost(m)}$. If $\fpost(m)$ is completely precise, then $\foot{\heap}{\perms}{\env}{\fpost(m)} = \efoot{\heap}{\env}{\fpost(m)}$. Otherwise, $\foot{\heap}{\perms}{\env}{\fpost(m)} = \perms$, but $\efoot{\heap}{\env}{\fpost(m)} \subseteq \perms$ as shown before. In both cases, $\efoot{\heap}{\env}{\fpost(m)} = \efoot{\heap}{\hat{\env}}{\fpost(m)} \subseteq \perms''$.

      Therefore by lemma \ref{lem:assert-efoot-subset}
      \begin{equation}\label{eq:dexec-pres-return-3}
        \assertion{\heap}{\perms''}{\hat{\env}}{\fpost(m)}.
      \end{equation}

      Note that, for all $1 \le i \le k$,
      \begin{align*}
        \hat{V}'(t_i) &= V'(t_i) &\text{by definition} \\
          &= V(\senv(\vstate)(x_i)) &\text{by \eqref{eq:partial-valid-call-params}} \\
          &= \env(x_i) &\text{since $\simenv{V}{\senv(\vstate)}{\env}$, since initial valid by $\vstate$ and $V$} \\
          &= \hat{\env}(x_i) &\text{by definition}
      \end{align*}
      Thus $\simenv{\hat{V}'}{\senv'}{\hat{\env}}$. Also, $\simstate{V'}{\sstate'[\senv = \senv(\sstate_0)]}{\heap}{\perms'}{\env'}$ by \eqref{eq:dexec-pres-return-2}. Therefore $\simstate{\hat{V}'}{\sstate'[\senv = \hat{\senv}]}{\heap}{\perms'}{\hat{\env}}$. Finally, since $\perms'' \subseteq \perms'$, by lemma \ref{lem:simstate-monotonicity},
      \begin{equation}\label{eq:dexec-pres-return-4}
        \simstate{\hat{V}'}{\sstate'[\senv = \hat{\senv}]}{\heap}{\perms''}{\hat{\env}}.
      \end{equation}

      Now, by lemma \ref{lem:produce-progress}, \eqref{eq:dexec-pres-return-3}, and \eqref{eq:dexec-pres-return-4},
      \begin{equation}\label{eq:dexec-pres-return-prod}
        \sproduce{\sstate'[\senv = \hat{\senv}]}{\fpost(m)}{\sstate''}, \quad\text{and}\quad V''(\pc(\sstate''')) = \ktrue
      \end{equation}
      where $V''$ is the corresponding valuation extending $\hat{V}'$.

      Let $\sstate''' = \sstate''[\senv = \senv(\sstate_0)[y \mapsto t]]$.

      Now \eqref{eq:dexec-pres-return-seval}, \eqref{eq:dexec-pres-return-cons}, and \eqref{eq:dexec-pres-return-prod}, and the definition of $\sstate'''$ satisfy the antecedent for \refrule{SExecCall}, therefore
      $$\sexec{\sstate_0}{\sseq{y \kassign m(e_1, \cdots, e_k)}{s}}{s}{\sstate'''}.$$
      Let $\vstate'' = \triple{\sstate'''}{s}{\gform'}$, now by \refrule{SVerifyStep}
      $$\strans{\prog}{\vstate'}{\vstate''}.$$

      We want to show that $\pair{\heap}{\triple{\perms''}{\env''}{s} \cdot \stack^*}$ is validated by $\vstate''$.
      
      Part \ref{def:state-valid-reachable}: Since $\strans{\prog}{\vstate'}{\vstate''}$, $\vstate''$ is reachable from $\prog$. Let its corresponding valuation be $V'''$.

      Part \ref{def:state-valid-correspond}: By definition, $s(\vstate'') = s$.

      By \eqref{eq:dexec-pres-return-2} $\simstate{V'}{\sstate'[\senv = \senv(\sstate_0)]}{\heap}{\perms'}{\env'}$. Since the initial state must be well-formed, $\perms$ and $\perms'$ are disjoint, and as shown before, $\efoot{\heap}{\env}{\fpost(m)} = \efoot{\heap}{\hat{\env}}{\fpost(m)} \subseteq \perms$, therefore $\perms' \setminus \efoot{\heap}{\hat{\env}}{\fpost(m)} = \perms'$. Also, $\hat{V}' \subseteq V'$. Thus
      $$\simstate{\hat{V}'}{\sstate'[\senv = \senv(\sstate_0)]}{\heap}{\perms' \setminus \efoot{\heap}{\hat{\env}}{\fpost(m)}}{\env'}.$$

      Also, as shown before, $\simenv{\hat{V}'}{\hat{\senv}}{\hat{\env}}$, therefore
      $$\simstate{\hat{V}'}{\sstate'[\senv = \hat{\senv}]}{\heap}{\perms' \setminus \efoot{\heap}{\hat{\env}}{\fpost(m)}}{\hat{\env}}.$$

      Then, since $\perms' \subseteq \perms''$, $\perms' \setminus \efoot{\heap}{\hat{\env}}{\fpost(m)} \subseteq \perms'' \setminus \efoot{\heap}{\hat{\env}}{\fpost(m)}$, and thus by lemma \ref{lem:simstate-monotonicity}
      $$\simstate{\hat{V}'}{\sstate'[\senv = \hat{\senv}]}{\heap}{\perms'' \setminus \efoot{\heap}{\hat{\env}}{\fpost(m)}}{\hat{\env}}.$$

      Now, since it was shown in \eqref{eq:dexec-pres-return-3} that $\assertion{\heap}{\perms''}{\hat{\env}}{\fpost(m)}$ and in \eqref{eq:dexec-pres-return-prod} that $V''(\pc(\sstate'')) = \ktrue$, by lemma \ref{lem:produce-soundness}
      $$\simstate{V''}{\sstate''}{\heap}{\perms''}{\hat{\env}}.$$

      Now by \eqref{eq:dexec-pres-return-2} $\simenv{V'}{\senv(\sstate_0)}{\env'}$, then since $V' \subseteq V''$, $\simenv{V''}{\senv(\sstate_0)}{\env'}$. Now, since $\senv(\sstate''') = \senv(\sstate_0)[y \mapsto t]$, to show $\simenv{V''}{\senv(\sstate''')}{\env''}$ it suffices to show that $V''(\senv(\sstate''')(y)) = \env''(y)$.

      But now $V''(\senv(\sstate''')(y)) = V''(t) = \hat{V}'(t) = \env(\kresult) = \env''(y)$, which is what we needed to show. Therefore, since $\senv$ is the only component changed between $\sstate''$ and $\sstate'''$,
      $$\simstate{V''}{\sstate'''}{\heap}{\perms''}{\env''}.$$

      Therefore, since $\sstate(\vstate'') = \sstate'''$, we have shown that $\vstate''$ corresponds to $\pair{\heap}{\triple{\perms''}{\env''}{s} \cdot \stack^*}$ with valuation $V''$.

      Part \ref{def:state-valid-partial}: We need to show that the partial state $\pair{\heap}{\stack^*}$ is validated by $\vstate''$ and $V''$. We already have that $\pair{\heap}{\stack^*}$ is validated by $\vstate'$ and $V'$. Thus one of \ref{def:partial-valid-nil}, \ref{def:partial-valid-call}, or \ref{def:partial-valid-while} must apply.

      \textit{If \ref{def:partial-valid-nil} applies:} Then $\stack^* = \nilsym$, thus trivially the partial state $\pair{\heap}{\stack^*}$ is validated by $\vstate''$ and $V''$.

      \textit{If \ref{def:partial-valid-call} applies:}
      Then $\stack^* = \triple{\perms_0}{\env_0}{\sseq{y' \kassign m'(e_1', \cdots, e_{k'}')}{s'}} \cdot \stack^*_0$ for some $k', y', m', e_1', \cdots, e_{k'}', s', \stack^*_0$. Also, there exists some $\vstate_0', V_0', x_1', \cdots, x_{k'}', t_1', \cdots, t_{k'}', \sstate_0, \cdots, \sstate_{k'}, \sstate'$ such that
      \begin{gather*}
        \text{The partial state $\pair{\heap}{\stack_0^*}$ is validated by $\vstate_0'$ and $V_0'$},\\
        \vstate_0' \text{ is reachable from $\prog$ with valuation $V_0'$}, \quad s(\vstate_0') = s(\stack^*), \\
        x_1', \cdots, x_{k'}' = \fparams(m),\\
        \sstate_0 = \sstate(\vstate_0'), \quad \seval{\sstate_0}{e_1'}{t_1'}{\sstate_1}{\_}, \quad\cdots,\quad \seval{\sstate_{k'-1}}{e_{k'}'}{t_{k'}'}{\sstate_{k'}}{\_},\\
        \scons{\sstate_{k'}}{\fpre(m')}{\sstate'}{\_}, \quad \simstate{V_0'}{\sstate'[\senv = \senv(\sstate_0)]}{\heap}{\perms_0}{\env_0}, \\
        \universal{1 \le i \le k'}{V'(\senv(\vstate')(x_i)) = V_0'(t_i')}, \quad\text{and} \\
        \gform(\vstate') = \fpost(m').
      \end{gather*}

      We want to show that the partial state $\pair{\heap}{\stack^*}$ is validated by $\vstate''$ and $V''$. Immediately from above,
      \begin{gather*}
        \text{The partial state $\pair{\heap}{\stack_0^*}$ is validated by $\vstate_0'$ and $V_0'$},\\
        \vstate_0' \text{ is reachable from $\prog$ with valuation $V_0'$}, \quad s(\vstate_0') = s(\stack^*), \\
        x_1', \cdots, x_{k'}' = \fparams(m),\\
        \sstate_0 = \sstate(\vstate_0'), \quad \seval{\sstate_0}{e_1'}{t_1'}{\sstate_1}{\_}, \quad\cdots,\quad \seval{\sstate_{k'-1}}{e_{k'}'}{t_{k'}'}{\sstate_{k'}}{\_}, \\
        \scons{\sstate_{k'}}{\fpre(m')}{\sstate'}{\_}, \quad \simstate{V_0'}{\sstate'[\senv = \senv(\sstate_0)]}{\heap}{\perms_0}{\env_0}.
      \end{gather*}
      Also, the frame $\triple{\perms'}{\env'}{\sseq{y \kassign m(e_1, \cdots, e_k)}{s}}$ must be executing the body of $m'$, since it is in the stack immediately above the frame that contains $y' \kassign m'(e_1', \cdots, e_{k'}')$. Therefore, since $x_1', \cdots, x_{k'}$ are all parameters of $m'$, $y$ must be distinct from all of $x_1', \cdots, x_{k'}'$, since we do not allow assignment to parameters in a well-formed program. Thus
      $$\forall 1 \le i \le k' : V''(\senv(\vstate'')(x_i)) = V''(\senv(\vstate')(x_i)) = V'(\senv(\vstate')(x_i)) = V_0'(t_i).$$

      Finally, $\gform(\vstate'') = \gform(\vstate')$ by definition, thus
      $$\gform(\vstate'') = \gform(\vstate') = \fpost(m').$$

      Therefore the partial state $\pair{\heap}{\stack^*}$ is validated by $\vstate''$ and $V''$ in this case.

      \textit{If \ref{def:partial-valid-while} applies:}
      Then $\stack^* = \triple{\env_0}{\perms_0}{\sseq{\swhile{e}{\gform_0}{s_0}}{s_0'}} \cdot \stack_0^*$ for some $\env_0$, $\perms_0$, $e$, $\gform_0$, $s_0$, $s_0'$, $\stack_0^*$, and there exists some $\vstate_0'$, $V_0'$, and $\sstate_0'$ such that:
      \begin{gather*}
        \text{The partial state $\pair{\heap}{\stack_0^*}$ is validated by $\vstate_0'$ and $V_0'$} \\
        \vstate_0' \text{ is reachable from $\prog$ with valuation $V_0'$}, \quad s(\vstate_0') = s(\stack^*), \\
        \scons{\sstate_0}{\gform_0}{\sstate_0'}{\_}, \quad
        \simstate{V_0'}{\sstate_0'}{\heap}{\perms_0}{\env_0} \quad\text{and}\\
        \gform(\vstate') = \gform_0.
      \end{gather*}

      Now, by definition of $\vstate''$, $\gform(\vstate'') = \gform(\vstate') = \gform_0$. Therefore, using the other assumptions given above, the partial state $\pair{\heap}{\stack^*}$ is validated by $\vstate''$ and $V''$ in this case.

      Therefore definition part \ref{def:state-valid-partial} is satisfied.

      Therefore all parts of definition \ref{def:state-valid} are satisfied. Thus $\pair{\heap}{\triple{\perms''}{\env''}{s} \cdot \stack^*}$ is validated by $\vstate''$, as we wanted to show.

    \case \refrule{ExecAssert}: We have
      \begin{gather*}
        \dexec{\heap}{\triple{\perms}{\env}{\sseq{\sassert{\phi}}{s}} \cdot \stack^*}{\vfoot{V'}{\heap}{\sperms}}{\heap}{\triple{\perms}{\env}{s} \cdot \stack^*} \\
        \text{where}\quad \assertion{\heap}{\perms}{\env}{\simprecise{\phi}}.
      \end{gather*}

      Since the initial state is validated by $\vstate$, $\vstate = \triple{\sstate}{\sseq{\sassert{\phi}}{s}}{\gform}$ for some $\sstate$, $\gform$ where $\simstate{V}{\sstate}{\heap}{\perms}{\env}$.

      The only guard rule that applies is \refrule{SGuardAssert}, so we have, for some $\sstate'$,
      \begin{gather*}
        \scons{\sstate}{\simprecise{\phi}}{\sstate'}{\scheck} \\
        \text{and by assumptions } V'(\pc(\sstate')) = \ktrue \quad\text{and}\quad \rtassert{V'}{\heap}{\perms}{\scheck}.
      \end{gather*}
      Also, by definition $V' = V[\scons{\sstate}{\simprecise{\phi}}{\sstate'}{\scheck} \mid \heap]$.

      Thus by lemma \ref{lem:cons-soundness} $\simstate{V'}{\sstate'}{\heap}{\perms \setminus \efoot{\heap}{\env}{\simprecise{\phi}}}{\env}$.

      Thus by lemma \ref{lem:simstate-monotonicity}, $\simstate{V'}{\sstate'}{\heap}{\perms}{\env}$. Also, as noted before, $\assertion{\heap}{\perms}{\env}{\simprecise{\phi}}$. Therefore, by lemma \ref{lem:produce-soundness}, for some $\sstate''$,
      $$\sproduce{\sstate'}{\simprecise{\phi}}{\sstate''} \quad\text{and}\quad V''(\pc(\sstate'')) = \ktrue.$$

      Now, by \refrule{SExecAssert}, $\sexec{\sstate}{\sseq{\sassert{\phi}}{s}}{s}{\sstate[\pc = \pc(\sstate'')]}$.

      Let $\vstate' = \triple{\sstate[\pc = \pc(\sstate'')]}{s}{\gform}$. We want to show that $\pair{\heap}{\triple{\perms}{\env}{s} \cdot \stack^*}$ is validated by $\vstate'$ and $V''$.

      By \refrule{SVerifyStep}, $\vstate'$ is reachable from $\prog$ with valuation $V''$.

      By assumptions, $\simstate{V}{\sstate}{\heap}{\perms}{\env}$, and $V''(\pc(\sstate'')) = \ktrue$, thus $\simstate{V''}{\sstate[\pc = \pc(\sstate'')]}{\heap}{\perms}{\env}$. Also, by definition, $\vstate' = \triple{\sstate[\pc = \pc(\sstate'')]}{s}{\gform}$. Therefore $\vstate'$ corresponds to $\pair{\heap}{\triple{\perms}{\env}{s} \cdot \stack^*}$ with valuation $V''$.

      Finally, by definition $\senv(\vstate') = \senv(\vstate)$ and $\gform(\vstate') = \gform = \gform(\vstate)$.

      Therefore $\pair{\heap}{\triple{\perms}{\env}{s} \cdot \stack^*}$ is a valid state by lemma \ref{lem:preservation-heap-env-unchanged}.

    \case\label{case:pres-dexec-ifa} \refrule{ExecIfA}: We have
      \begin{gather*}
        \dexec{\heap}{\triple{\perms}{\env}{\sseq{\sif{e}{s_1}{s_2}}{s}} \cdot \stack^*}{\vfoot{V'}{\heap}{\sperms}}{\heap}{\triple{\perms}{\env}{\sseq{s_1}{s}} \cdot \stack^*} \\
        \text{where}\quad \eval{\heap}{\env}{e}{\ktrue} \quad\text{and}\quad \frm{\heap}{\perms}{\env}{e}
      \end{gather*}

      Since the initial state is validated by $\vstate$, $\vstate = \triple{\sstate}{\sseq{\sif{e}{s_1}{s_2}}{s}}{\gform}$ for some $\sstate$, $\gform$ where $\simstate{V}{\sstate}{\heap}{\perms}{\env}$.

      The only guard rule that applies is \refrule{SGuardIf}, so we have, for some $\sstate'$,
      \begin{gather*}
        \seval{\sstate}{e}{t}{\sstate'}{\scheck} \\
        \text{and by assumptions}\quad V'(\pc(\sstate')) = \ktrue \quad\text{and}\quad \rtassert{V'}{\heap}{\perms}{\scheck}
      \end{gather*}
      where $V' = V[\seval{\sstate}{e}{t}{\sstate'}{\scheck} \mid \heap]$.

      Now by \refrule{SExecIfA},
      $$\sexec{\sstate}{\sseq{\sif{e}{s_1}{s_2}}{s}}{\sseq{s_1}{s}}{\sstate'[\pc = \pc(\sstate') \kand t]}.$$
      
      Let $\vstate' = \triple{\sstate'[\pc = \pc(\sstate') \kand t]}{\sseq{s_1}{s}}{\gform}$. Then by \refrule{SVerifyStep}, $\vstate'$ is reachable from $\prog$ with valuation $V'$.

      Now $\eval{\heap}{\env}{e}{V'(t)}$ and $\eval{\heap}{\env}{e}{\ktrue}$, thus $V'(t) = \ktrue$. Also, $\simstate{V'}{\sstate'}{\heap}{\perms}{\env}$, thus $V'(\pc(\sstate')) = \ktrue$, and then $V'(\pc(\sstate') \kand t) = V'(\pc(\sstate')) \wedge V'(t) = \ktrue$. Therefore $\simstate{V'}{\sstate'[\pc = \pc(\sstate') \kand t]}{\heap}{\perms}{\env}$.

      Also, by definition, $\vstate' = \triple{\sstate[\pc = \pc(\sstate') \kand t]}{\sseq{s_1}{s}}{\gform}$. Therefore $\vstate'$ corresponds to \\
      $\pair{\heap}{\triple{\perms}{\env}{\sseq{s_1}{s}} \cdot \stack^*}$ with valuation $V''$.

      Finally, $\senv(\vstate') = \senv(\sstate') = senv(\sstate) = \senv(\vstate)$ by lemma \ref{lem:eval-unchanged} and $\gform(\vstate') = \gform = \gform(\vstate)$ by definition.

      Therefore ${\heap}{\triple{\perms}{\env}{\sseq{s_1}{s}} \cdot \stack^*}$ is a valid state by lemma \ref{lem:preservation-heap-env-unchanged}.

    \case \refrule{ExecIfB}: Similar to case \ref{case:pres-dexec-ifa}, but using \refrule{SExecIfB}.

    \case \refrule{ExecWhileEnter}: We have
      \begin{align*}
        &\pair{\heap}{\triple{\perms}{\env}{\sseq{\swhile{e}{\gform}{s}}{s'}} \cdot \stack^*} \to \\
        &\quad \pair{\heap}{\triple{\perms'}{\env}{\sseq{s}{\kskip}} \cdot \triple{\perms \setminus \perms'}{\env}{\sseq{\swhile{e}{\gform}{s}}{s'}} \cdot \stack^*} \\
        &\text{where}~ \eval{\heap}{\env}{e}{\ktrue}, \quad \assertion{\heap}{\perms \setminus \xperms}{\env}{\gform}, \\
        &\hspace{3em} \xperms = \vfoot{V}{\heap}{\sperms}, \quad\text{and}\quad \perms' = \foot{\heap}{\perms \setminus \xperms}{\env}{\gform}
      \end{align*}

      Since the initial state is validated by $\vstate$, $\vstate = \triple{\sstate}{\sseq{\swhile{e}{\gform}{s}}{s'}}{\gform_0}$ for some $\sstate$, $\gform$ where $\simstate{V}{\sstate}{\heap}{\perms}{\env}$.

      The only guard rule that applies is \refrule{SGuardWhile}, so we have, for some $\sstate'$, $\sstate''$, $k$, $x_1, \cdots, x_k$, $t_1, \cdots, t_k$, and $t$,
      \begin{gather}
        \scons{\sstate}{\gform}{\sstate'}{\scheck'}, \quad x_1, \cdots, x_k = \fmodified(s), \quad t_1 = \ffresh, \cdots, t_k = \ffresh, \label{eq:pres-dexec-while-enter-cons} \\
        \sproduce{\sstate'[\senv = \senv(\sstate')[x_1 \mapsto t_1, \cdots, x_k \mapsto t_k]]}{\gform}{\sstate''}, \\
        \spceval{\sstate''}{e}{t}{\scheck''}, \quad \sperms = \frem(\sstate', \gform), \\
        \text{and by assumptions}\quad V'(\pc(\sstate'')) = \ktrue \quad\text{and}\quad \rtassert{V'}{\heap}{\perms}{\scheck' \cup \scheck''}
      \end{gather}
      where $V'$ is the corresponding valuation for these judgements (see definition \ref{def:sguard-valuation}).

      By lemma \ref{lem:scheck-monotonicity}
      \begin{equation} \label{eq:pres-dexec-while-enter-rt}
        \rtassert{V'}{\heap}{\perms}{\scheck'} \quad\text{and}\quad \rtassert{V'}{\heap}{\perms}{\scheck''}.
      \end{equation}

      Let $t_1' = \ffresh, \cdots, t_k' = \ffresh$ and $\sstate_0 = \quintuple{\bot}{\emptyset}{\emptyset}{\senv(\sstate)[x_1 \mapsto t_1', \cdots, x_k \mapsto t_k']}{\pc(\sstate)}$.

      Let $V_0 = V'[t_1' \mapsto \env(x_1), \cdots, t_k' \mapsto \env(x_k)]$. Now, for any $x \in \dom(\senv(\sstate_0))$, if $x = x_i$ for some $i$, then $V_0(\senv(x)) = V_0(\senv(\sstate_0)(x_i)) = V_0(t_i) = \env(x_i) = \env(x)$. Otherwise, $x \in \dom(\senv(\sstate))$ and thus $V_0(\senv(\sstate)(x)) = V(\senv(\sstate)(x)) = \env(x)$ since $\simenv{V}{\senv(\sstate)}{\env}$. Therefore $\simenv{V_0}{\senv(\sstate_0)}{\env}$.

      Also, since $\sheap(\sstate_0) = \oheap(\sstate_0) = \emptyset$, $\simheap{V_0}{\sheap(\sstate_0)}{\heap}{\perms' \setminus \efoot{\heap}{\env}{\gform}}$ and $\simheap{V_0}{\oheap(\sstate_0)}{\heap}{\perms' \setminus \efoot{\heap}{\env}{\gform}}$. Finally, $V_0(\pc(\sstate_0)) = V(\pc(\sstate)) = \ktrue$ since $\simstate{V}{\sstate}{\heap}{\perms}{\env}$.

      Therefore
      $$\simstate{V_0}{\sstate_0}{\heap}{\perms' \setminus \efoot{\heap}{\env}{\gform}}{\env}$$
      and then also $\simstate{V_0}{\sstate_0}{\heap}{\perms'}{\env}$ by lemma \ref{lem:simstate-monotonicity}.

      Furthermore, by assumptions, $\assertion{\heap}{\perms \setminus \xperms}{\env}{\gform}$, thus $\assertion{\heap}{\perms'}{\env}{\gform}$ by lemma \ref{lem:foot-assert}, since $\perms' = \foot{\heap}{\perms \setminus \xperms}{\env}{\gform}$.

      Therefore, by lemma \ref{lem:produce-progress} $$\sproduce{\sstate_0}{\gform}{\sstate_0'} \quad\text{and}\quad V_0'(\pc(\sstate_0')) = \ktrue$$
      where $V_0' = V_0[\sproduce{\sstate_0}{\gform}{\sstate_0'} \mid \heap]$. Also, by lemma \ref{lem:produce-soundness},
      $$\simstate{V_0'}{\sstate_0'}{\heap}{\perms'}{\env}.$$

      Now by lemma \ref{lem:pc-eval-progress},
      $$\spceval{\sstate_0'}{e}{t_0}{\_}$$
      and let $V_0'' = V_0'[\spceval{\sstate_0'}{e}{t_0}{\_}]$.

      Let $\vstate_0 = \triple{\sstate_0'[\pc = \pc(\sstate_0') \kand t_0]}{\sseq{s'}{\kskip}}{\gform}$.
      
      We want to show that
      $$\Gamma' = \pair{\heap}{\triple{\perms'}{\env}{\sseq{s}{\kskip}} \cdot \triple{\perms \setminus \perms'}{\env}{\sseq{\swhile{e}{\gform}{s}}{s'}} \cdot \stack^*}$$
      is validated by $\vstate_0$ and $V_0''$.

      \textit{Part \ref{def:state-valid-reachable}:}
      By \refrule{SVerifyLoopBody} $\strans{\prog}{\vstate}{\vstate_0}$. Therefore $\vstate_0$ is reachable from $\prog$ with valuation $V_0''$.

      \textit{Part \ref{def:state-valid-correspond}:}
      Since $\eval{\heap}{\env}{e}{\ktrue}$, by lemma \ref{lem:spceval-correspondence}, $V_0''(t_0) = \ktrue$. Therefore $V_0''(\pc(\sstate_0') \kand t_0) = V'(\pc(\sstate_0')) \wedge V'(t_0') = \ktrue$. Thus, since we have already shown $\simstate{V_0'}{\sstate_0'}{\heap}{\perms'}{\env}$,
      $$\simstate{V_0''}{\sstate_0'[\pc = \pc(\sstate_0') \kand t_0]}{\heap}{\perms'}{\env}.$$
      
      Also, $s(\vstate_0) = \sseq{s}{\kskip}$ by definition. Therefore $\Gamma'$ corresponds to $\vstate_0$ with $V_0''$.

      \textit{Part \ref{def:state-valid-partial}:}
      We want to show that the partial state
      $$\Gamma^* = \pair{\heap}{\triple{\perms \setminus \perms'}{\env}{\sseq{\swhile{e}{\gform}{s}}{s'}} \cdot \stack^*}$$
      is validated by $\vstate_0$ and $V_0''$.

      By assumptions, $\pair{\heap}{\triple{\perms}{\env}{\sseq{\swhile{e}{\gform}{s}}{s'}} \cdot \stack^*}$ was validated by $\vstate$ and $V$. Therefore the partial state $\pair{\heap}{\stack^*}$ was validated by $\vstate$ and $V$, $\vstate$ is reachable from $\prog$ with $V$, $s(\vstate) = \sseq{\swhile{e}{\gform}{s}}{s'}$, and $\simstate{V}{\sstate}{\heap}{\perms}{\env}$.
      
      By \eqref{eq:pres-dexec-while-enter-cons} $\scons{\sstate}{\gform}{\sstate'}{\scheck}$ and by \eqref{eq:pres-dexec-while-enter-rt} $\rtassert{V'}{\heap}{\perms}{\scheck}$. Therefore $\simstate{V'}{\sstate'}{\heap}{\perms \setminus \efoot{\heap}{\env}{\gform}}{\env}$.

      Furthermore, since $\xperms = \vfoot{V'}{\heap}{\sperms} = \vfoot{V'}{\heap}{\frem(\sstate', \gform)}$, $\perms' = \foot{\heap}{\perms \setminus \xperms}{\env}{\gform}$, and $\assertion{\heap}{\perms \setminus \xperms}{\env}{\gform}$, we can apply lemma \ref{lem:rem-simstate} to get
      $$\simstate{V'}{\sstate'}{\heap}{\perms \setminus \perms'}{\env}.$$

      Finally, $\gform(\vstate_0) = \gform$ by definition.

      Therefore $\Gamma^*$ is validated by $\vstate_0$ and $V_0''$ since we have satisfied all requirements of case \ref{def:partial-valid-while}.

      Therefore $\Gamma'$ is validated by $\vstate_0$ and $V_0''$, which completes the proof.

    \case \refrule{ExecWhileSkip}: We have
      \begin{gather*}
        \dexec{\heap}{\triple{\perms}{\env}{\sseq{\swhile{e}{\gform}{s}}{s'}} \cdot \stack^*}{\xperms}{\heap}{\triple{\perms}{\env}{s} \cdot \stack^*} \\
        \text{where}\quad \xperms = \vfoot{V'}{\heap}{\sperms}, \quad \eval{\heap}{\env}{e}{\kfalse}, \quad
        \frm{\heap}{\perms}{\env}{e}, \quad\text{and} \\
        \assertion{\heap}{\perms \setminus \xperms}{\env}{\gform}
      \end{gather*}

      Since the initial state is validated by $\vstate$, $\vstate = \triple{\sstate}{\sseq{\swhile{e}{\gform}{s}}{s'}}{\gform'}$ for some $\sstate$, $\gform'$ where $\simstate{V}{\sstate}{\heap}{\perms}{\env}$.

      The only guard rule that applies is \refrule{SGuardWhile}, so we have, for some $\sstate'$, $\sstate''$, $k$, $x_1, \cdots, x_k$, $t_1, \cdots, t_k$, and $t$,
      \begin{gather}
        \scons{\sstate}{\gform}{\sstate'}{\scheck'}, \quad x_1, \cdots, x_k = \fmodified(s), \quad t_1 = \ffresh, \cdots, t_k = \ffresh, \label{eq:pres-dexec-while-skip-cons} \\
        \sproduce{\sstate'[\senv = \senv(\sstate')[x_1 \mapsto t_1, \cdots, x_k \mapsto t_k]]}{\gform}{\sstate''},\label{eq:pres-dexec-while-skip-prod} \\
        \spceval{\sstate''}{e}{t}{\scheck''}, \quad \sperms = \frem(\sstate', \gform), \\
        \text{and by assumptions}\quad V'(\pc(\sstate'')) = \ktrue \quad\text{and}\quad \rtassert{V'}{\heap}{\perms}{\scheck' \cup \scheck''} \label{eq:pres-dexec-while-skip-path}
      \end{gather}
      where $V'$ is the corresponding valuation for these judgements (see definition \ref{def:sguard-valuation}).

      Let $\vstate' = \triple{\sstate''[\pc = \pc(\sstate'') \kand \kneg t]}{s'}{\gform'}$.

      By \refrule{SExecWhileSkip}
      $$\sexec{\sstate}{\sseq{\swhile{e}{\gform}{s}}{s'}}{s'}{\sstate''[\pc = \pc(\sstate'') \kand \kneg t]}.$$
      Therefore by \refrule{SVerifyStep} $\strans{\prog}{\vstate}{\vstate'}$. Thus $\vstate'$ is reachable from $\prog$ with valuation $V'$.

      By lemma \ref{lem:scheck-monotonicity}
      \begin{equation} \label{eq:pres-dexec-while-skip-rt}
        \rtassert{V'}{\heap}{\perms}{\scheck'} \quad\text{and}\quad \rtassert{V'}{\heap}{\perms}{\scheck''}.
      \end{equation}

      By lemmas \ref{lem:cons-subpath} and \ref{lem:produce-subpath}, $\pc(\sstate'') \implies \pc(\sstate')$. Therefore $V'(\pc(\sstate')) = \ktrue$. Now, by lemma \ref{lem:cons-soundness},
      $$\simstate{V'}{\sstate'}{\heap}{\perms \setminus \efoot{\heap}{\perms}{\gform}}.$$

      By definition \ref{def:sguard-valuation}, for all $1 \le i \le k$, $V'(t_i) = V(\senv(\sstate)(x_i))$. By assumptions, $\simenv{V}{\senv(\sstate)}{\env}$, thus $V(\senv(\sstate)(x_i)) = \env(x_i)$. Also, as shown above, $\simenv{V'}{\senv(\sstate')}{\env}$.

      Let $\senv' = \senv(\sstate')[x_1 \mapsto t_1, \cdots, x_k \mapsto t_k]$. Now, for any $x \in \dom(\senv')$, if $x = x_i$ for some $i$ then $V'(\senv'(x)) = V'(t_i) = \env(x_i)$. Otherwise, $x \in \dom(\senv(\sstate'))$ and thus $V'(\senv'(x)) = V'(\senv(\sstate')(x)) = \env(x)$. Therefore $\simenv{V'}{\senv'}{\env}$.

      Therefore $\simstate{V'}{\sstate'[\senv = \senv']}{\heap}{\perms \setminus \efoot{\heap}{\perms}}{\env}$. Using the definition of $\senv'$ and \eqref{eq:pres-dexec-while-skip-prod}, $\sproduce{\sstate'[\senv = \senv']}{\gform}{\sstate''}$, and by \eqref{eq:pres-dexec-while-skip-path}, $V'(\pc(\sstate'')) = \ktrue$.

      In addition, as shown before, $\assertion{\heap}{\perms \setminus \xperms}{\env}{\gform}$, thus by lemma \ref{lem:assert-monotonicity} $\assertion{\heap}{\perms}{\env}{\gform}$.

      Now by lemma \ref{lem:produce-soundness}, $\simstate{V'}{\sstate''}{\heap}{\perms}{\env}$.

      Since $\spceval{\sstate''}{e}{t}{\_}$ and $\eval{\heap}{\env}{e}{\kfalse}$, by lemma \ref{lem:spceval-correspondence} $V'(t) = \kfalse$. Therefore $V'(\pc(\sstate'') \kand \kneg t) = V'(\pc(\sstate'')) \wedge \neg V'(t) = \ktrue$. Therefore
      $$\simstate{V'}{\sstate''[\pc = \pc(\sstate'') \kand \kneg t]}{\heap}{\perms}{\env}.$$

      Now, by definition, $s(\vstate') = s'$. Therefore $\pair{\heap}{\triple{\perms}{\env}{s} \cdot \stack^*}$ corresponds to $\vstate'$ with valuation $V'$.

      By definition $\gform(\vstate') = \gform' = \gform(\vstate)$. Also, $\senv(\vstate') = \senv(\sstate'') = \senv(\sstate')[x_1 \mapsto t_1, \cdots, x_k \mapsto t_k]$ by lemma \ref{lem:produce-unchanged} and $\senv(\sstate') = \senv(\sstate) = \senv(\vstate)$ by lemma \ref{lem:cons-unchanged}. Therefore $\dom(\senv(\vstate')) \supseteq \dom(\senv(\vstate))$.

      Thus by lemma \ref{lem:preservation-heap-env-unchanged} $\pair{\heap}{\triple{\perms}{\env}{s} \cdot \stack^*}$ is a valid state.

    \case \refrule{ExecWhileFinish}: We have
    \begin{align*}
      &\pair{\heap}{\triple{\perms'}{\env'}{\kskip} \cdot \triple{\perms}{\env}{\sseq{\swhile{e}{\gform}{s}}{s'}} \cdot \stack^*} \to \\
      &\quad \pair{\heap}{\triple{\perms''}{\env'}{\sseq{\swhile{e}{\gform}{s}}{s'}} \cdot \stack^*} \\
      &\text{where}\quad \assertion{\heap}{\perms'}{\env'}{\gform} \quad\text{and}\quad \perms'' = \perms \cup \foot{\heap}{\perms'}{\env'}{\gform}.
    \end{align*}

    By assumptions, the initial state is validated by some $\vstate$ and valuation $V'$, thus $\vstate' = \triple{\sstate'}{\kskip}{\gform'}$ for some $\sstate'$, $\gform$ where $\simstate{V}{\sstate'}{\heap}{\perms'}{\env'}$.

    Also, by \ref{def:state-valid-partial}, the partial state $\pair{\heap}{\triple{\perms}{\env}{\sseq{\swhile{e}{\gform}{s}}{s'}} \cdot \stack^*}$ is validated by $\vstate'$ and $V'$. Thus \ref{def:partial-valid-while} must apply, and thus there is some $\vstate_0$, $V_0$, $\sstate_0$, $\sstate_0'$, $\gform_0$ such that
    \begin{gather*}
      \vstate_0 = \triple{\sstate_0}{\sseq{\swhile{e}{\gform}{s}}{s'}}{\gform_0} \\
      \text{$\vstate_0$ is reachable from $\prog$ with valuation $V_0$} \\
      \scons{\sstate_0}{\gform}{\sstate_0'}{\_}, \quad \simstate{V_0'}{\sstate_0'}{\heap}{\perms}{\env}, \quad \gform' = \gform, \\
      \text{and}\quad V_0' = V_0[\scons{\sstate}{\gform}{\sstate'}{\_} \mid \heap].
    \end{gather*}

    We have $\assertion{\heap}{\perms'}{\env'}{\gform}$, thus by lemma \ref{lem:foot-assert}$\assertion{\heap}{\foot{\heap}{\perms'}{\env'}{\gform}}{\env'}{\gform}$, and then by lemma \ref{lem:assert-monotonicity} $\assertion{\heap}{\perms''}{\env'}{\gform}$.

    Also by lemma \ref{lem:simstate-monotonicity} $\simstate{V_0'}{\sstate_0'}{\heap}{\perms''}{\env}$.

    For some $k$, let $x_1, \cdots, x_k$ be the list of variables in $\fmodified(s)$. Let $t_1 = \ffresh, \cdots, t_k = \ffresh$, let $\hat{\senv} = \senv(\sstate_0)[x_1 \mapsto t_1, \cdots, x_k \mapsto t_k]$, and let $\hat{V}_0' = V_0'[t_1 \mapsto \env'(x_1), \cdots, t_k \mapsto \env'(x_k)]$.

    $\env'$ is contained in the stack frame executing the loop body, which is $s$, thus for all $x \in \dom(\env')$, either $\env(x) = \env'(x)$ or $x \in \fmodified(m)$.

    Also, since $\simenv{V_0'}{\senv(\sstate_0')}{\env}$ and $\senv(\sstate_0') = \senv(\sstate_0)$ by lemma \ref{lem:cons-unchanged}, $\simenv{V_0}{\senv(\sstate_0)}{\env}$.

    Now, for any $x \in \dom(\hat{\senv})$, if $x = x_i$ for some $i$, then $\hat{V}_0'(\hat{\senv}(x)) = \hat{V}_0'(\hat{\senv}(x_i)) = \env'(x_i) = \env'(x)$. Otherwise, $x \notin \fmodified(s)$, thus $\env'(x) = \env(x)$, and $x \in \dom(\senv(\sstate_0))$. Thus $\hat{V}_0'(\hat{\senv}(x)) = V_0(\senv(\sstate_0)(x)) = \env(x) = \env'(x_i)$ since $\simenv{V_0}{\senv(\sstate_0)}{\env}$. Therefore $\simenv{\hat{V}_0'}{\hat{\senv}}{\env'}$. Thus $\simstate{\hat{V}_0'}{\sstate_0'[\senv = \senv']}{\heap}{\perms''}{\env'}$.
    
    Therefore by lemma \ref{lem:produce-progress} $\sproduce{\sstate_0'[\senv = \senv']}{\gform}{\sstate_0''}$ for some $\sstate_0''$ such that $V_0''(\pc(\sstate_0'')) = \ktrue$ where $V_0'' = \hat{V}_0'[\sproduce{\sstate_0'}{\gform}{\sstate_0''} \mid \heap]$.

    Let $\vstate_0' = \triple{\sstate_0''}{\sseq{\swhile{e}{\gform}{s}}{s'}}{\gform_0}$. We want to show that
    $$\Gamma' = \pair{\heap}{\triple{\perms''}{\env'}{\sseq{\swhile{e}{\gform}{s}}{s'}} \cdot \stack^*}$$
    is validated by $\vstate_0'$ and $V_0''$.

    \textit{Part \ref{def:state-valid-reachable}:}
    By \refrule{SVerifyLoop}, and since $\vstate_0$ is reachable from $\prog$, $\prog \vdash \vstate_0 \to \vstate_0'$. Therefore $\vstate_0'$ is reachable from $\prog$ with valuation $V_0''$.

    \textit{Part \ref{def:state-valid-correspond}:}
    Since $\assertion{\heap}{\perms'}{\env'}{\gform}$, by lemma \ref{lem:efoot-subset-spec} $\efoot{\heap}{\env'}{\gform} \subseteq \perms'$. Also, since the stack is well-formed, $\perms'$ and $\perms$ are disjoint, thus $\perms \setminus \efoot{\heap}{\env'}{\gform} = \perms$. Therefore $\simstate{V_0'}{\sstate_0'}{\heap}{\perms \setminus \efoot{\heap}{\env'}{\gform}}{\env}$, and now since $\perms \subseteq \perms''$, by lemma lemma \ref{lem:simstate-monotonicity}, and since $\simenv{\hat{V}_0'}{\senv'}{\env'}$,
    $$\simstate{\hat{V}_0'}{\sstate_0'[\senv = \senv']}{\heap}{\perms'' \setminus \efoot{\heap}{\env'}{\gform}}{\env'}.$$

    Now, by lemma \ref{lem:produce-soundness},
    $$\simstate{V_0''}{\sstate_0''}{\heap}{\perms''}{\env'}.$$

    Also, $s(\vstate_0') = s(\Gamma')$ by construction. Therefore $\Gamma'$ corresponds to $\vstate_0'$.

    \textit{Part \ref{def:state-valid-partial}:} We need to show that the partial state $\pair{\heap}{\stack^*}$ is validated by $\vstate_0'$ and $V_0''$. We already have that $\pair{\heap}{\stack^*}$ is validated by $\vstate_0$ and $V_0$. Thus one of \ref{def:partial-valid-nil}, \ref{def:partial-valid-call}, or \ref{def:partial-valid-while} must apply.

    \textit{If \ref{def:partial-valid-nil} applies:} Then $\stack^* = \nilsym$, thus trivially the partial state $\pair{\heap}{\stack^*}$ is validated by $\vstate_0'$ and $V_0''$.

    \textit{If \ref{def:partial-valid-call} applies:}
    Then $\stack^* = \triple{\perms^*}{\env^*}{\sseq{y \kassign m(e_1, \cdots, e_k)}{s^*}} \cdot \stack^*_1$ for some $\perms^*$, $\env^*$, $k$, $y$, $m$, $e_1, \cdots, e_k$, $s^*$, $\stack^*_1$. Also, there exists some $\vstate^*$, $V^*$, $x_1, \cdots, x_k$, $t_1, \cdots, t_k$, $\sstate_0, \cdots, \sstate_k$, $\sstate'$ such that
    \begin{gather*}
      \text{The partial state $\pair{\heap}{\stack^*_1}$ is validated by $\vstate^*$ and $V^*$},\\
      \vstate^* \text{ is reachable from $\prog$ with valuation $V^*$}, \quad s(\vstate^*) = s(\stack^*), \\
      x_1, \cdots, x_k = \fparams(m),\\
      \sstate_0 = \sstate(\vstate^*), \quad \seval{\sstate_0}{e_1}{t_1}{\sstate_1}{\_}, \quad\cdots,\quad \seval{\sstate_{k-1}}{e_k}{t_k}{\sstate_k}{\_},\\
      \scons{\sstate_k}{\fpre(m)}{\sstate'}{\_}, \quad \simstate{V^*}{\sstate'[\senv = \senv(\sstate_0)]}{\heap}{\perms^*}{\env^*}, \\
      \universal{1 \le i \le k}{V_0(\senv(\vstate_0)(x_i)) = V^*(t_i)}, \quad\text{and} \\
      \gform(\vstate_0) = \fpost(m).
    \end{gather*}

    We want to show that the partial state $\pair{\heap}{\stack^*}$ is validated by $\vstate''$ and $V''$. Immediately from above,
    \begin{gather*}
      \text{The partial state $\pair{\heap}{\stack^*_1}$ is validated by $\vstate^*$ and $V^*$},\\
      \vstate^* \text{ is reachable from $\prog$ with valuation $V^*$}, \quad s(\vstate^*) = s(\stack^*), \\
      x_1, \cdots, x_k = \fparams(m),\\
      \sstate_0 = \sstate(\vstate^*), \quad \seval{\sstate_0}{e_1}{t_1}{\sstate_1}{\_}, \quad\cdots,\quad \seval{\sstate_{k-1}}{e_k}{t_k}{\sstate_k}{\_},\\
      \scons{\sstate_k}{\fpre(m)}{\sstate'}{\_}, \quad \simstate{V^*}{\sstate'[\senv = \senv(\sstate_0)]}{\heap}{\perms^*}{\env^*}.
    \end{gather*}
    Also, the frame $\triple{\perms}{\env}{\sseq{\swhile{e}{\gform}{s}}{s'}}$ must be executing the body of $m$, since it is in the stack immediately above the frame that contains $y \kassign m(e_1, \cdots, e_k)$. Therefore, since $x_1, \cdots, x_k$ are all parameters of $m$, $\fmodified(s)$ cannot contain any of $x_1, \cdots, x_k$, since we do not allow assignment to parameters in a well-formed program. Thus
    \begin{align*}
      \forall 1 \le i \le k : V_0''(\senv(\vstate_0')(x_i))
        &= V_0''(\senv(\sstate_0'')(x_i)) &\text{defn. $\vstate_0'$} \\
        &= V_0''(\hat{\senv}(x_i)) &\text{Lemma \ref{lem:produce-unchanged}} \\
        &= V_0''(\senv(\sstate_0')(x_i)) &x_i \notin \fmodified(s) \\
        &= V_0''(\senv(\sstate_0)(x_i)) &Lemma \ref{lem:cons-unchanged} \\
        &= V_0(\senv(\sstate_0)(x_i)) &V_0 \subseteq V_0'' \\
        &= V^*(t_i) &\text{prev. assump.}
    \end{align*}

    Finally, $\gform(\vstate_0') = \gform(\vstate_0)$ by definition, thus
    $$\gform(\vstate_0') = \gform(\vstate_0) = \fpost(m).$$

    Therefore the partial state $\pair{\heap}{\stack^*}$ is validated by $\vstate_0'$ and $V_0''$ in this case.

    \textit{If \ref{def:partial-valid-while} applies:}
    Then $\stack^* = \triple{\env^*}{\perms^*}{\sseq{\swhile{e^*}{\gform^*}{s^*}}{s^{*\prime}}} \cdot \stack_1^*$ for some $\env^*$, $\perms^*$, $e^*$, $\gform^*$, $s^*$, $s^{*\prime}$, $\stack_1^*$, and there exists some $\vstate^*$, $V^*$, and $\sstate'$ such that:
    \begin{gather*}
      \text{The partial state $\pair{\heap}{\stack_1^*}$ is validated by $\vstate^*$ and $V^*$} \\
      \vstate^* \text{ is reachable from $\prog$ with valuation $V^*$} \quad s(\vstate^*) = s(\stack^*), \\
      \scons{\sstate(\vstate^*)}{\gform^*}{\sstate'}{\_}, \quad
      \simstate{V^*}{\sstate'}{\heap}{\perms^*}{\env^*} \quad\text{and}\\
      \gform(\vstate^*) = \gform^*.
    \end{gather*}

    Now, by definition of $\vstate_0'$, $\gform(\vstate_0') = \gform(\vstate_0) = \gform^*$. Therefore, using the other assumptions given above, the partial state $\pair{\heap}{\stack^*}$ is validated by $\vstate_0'$ and $V_0''$ in this case.

    Therefore definition part \ref{def:state-valid-partial} is satisfied.

    Therefore all parts of definition \ref{def:state-valid} are satisfied. Thus $\Gamma'$ is validated by $\vstate_0'$ and $V_0''$, as we wanted to show.

    \case \refrule{ExecFold}: We have
      $$\dexec{\heap}{\triple{\perms}{\env}{\sseq{\sfold{p(\multiple{e})}}{s}} \cdot \stack^*}{\vfoot{V'}{\heap}{\sperms}}{\heap}{\triple{\perms}{\env}{s} \cdot \stack^*}$$

      By assumptions, the initial state is validated by some $\vstate$ and valuation $V'$, thus \\
      $\vstate = \triple{\sstate}{\sseq{\sfold{p(\multiple{e})}}{s}}{\gform}$ for some $\sstate$, $\gform$ where $\simstate{V}{\sstate}{\heap}{\perms}{\env}$.

      The only guard that applies is \refrule{SGuardFold}, thus we have
      \begin{gather*}
        \multiple{\seval{\sstate}{e}{t}{\sstate'}{\scheck}}, \quad \multiple{x} = \fpredparams(p), \\
        \scons{\sstate'[\senv = [x_i \mapsto t_i]]}{\fpred(p)}{\sstate''}{\scheck'}, \\
        \text{and by assumptions}\quad \rtassert{V'}{\heap}{\perms}{\multiple{\scheck} \cup \scheck'} \quad\text{and}\quad V'(\pc(\sstate'')) = \ktrue
      \end{gather*}
      where $V'$ is the valuation corresponding to this series of judgements, extending $V$ (see definition \ref{def:sguard-valuation}).

      Let $e_1, \cdots, e_n = \multiple{e}$, $t_1, \cdots, t_n = \multiple{t}$, and $\scheck_1, \cdots, \scheck_n = \multiple{\scheck}$. Let $\sstate_0$, then for some $\sstate_1, \cdots, \sstate_n$ we have
      $$\seval{\sstate_0}{e_1}{t_1}{\sstate_1}{\scheck_1}, \cdots, \seval{\sstate_{n-1}}{e_n}{t_n}{\sstate_n}{\scheck_n}$$
      where $\sstate_n = \sstate'$. By lemmas \ref{lem:eval-subpath} and \ref{lem:cons-subpath} $\pc(\sstate'') \implies \pc(\sstate_n) \implies \cdots \implies \pc(\sstate_1)$. Therefore $V'(\pc(\sstate'')) = V'(\pc(\sstate_n)) = \cdots = V'(\pc(\sstate_1)) = \ktrue$. Also, by lemma \ref{lem:scheck-monotonicity} we have $\rtassert{V'}{\heap}{\perms}{\scheck_i}$ for all $1 \le i \le n$. Thus, by lemma \ref{lem:seval-soundness}
      \begin{gather*}
        \eval{\heap}{\env}{e_1}{V'(t_1)}, \quad\cdots,\quad \eval{\heap}{\env}{e_n}{V'(t_n)} \\
        \text{and}\quad \simstate{V'}{\sstate_1}{\heap}{\perms}{\env}, \quad\cdots,\quad \simstate{V'}{\sstate_n}{\heap}{\perms}{\env}.
      \end{gather*}
      Therefore $\simstate{V'}{\sstate'}{\heap}{\perms}{\env}$.

      Let $\senv' = [\multiple{x \mapsto t}]$ and $\env' = [\multiple{x \mapsto V'(t)}]$. Then, by construction, $\simenv{V'}{\senv'}{\env'}$. Therefore $\simstate{V'}{\sstate'[\senv = \senv']}{\heap}{\perms}{\env}$.

      From above we have $\scons{\sstate'[\senv = \senv']}{\fpred(p)}{\sstate''}{\scheck'}$ and $\rtassert{\heap}{\perms}{\env}{\scheck''}$ by lemma \ref{lem:scheck-monotonicity}. Thus, by lemma \ref{lem:cons-soundness}
      $\simstate{V'}{\sstate''}{\heap}{\perms \setminus \efoot{\heap}{\env'}{\fpred(p)}}{\env'}$ and thus
      $$\simstate{V'}{\sstate''[\senv = \senv(\sstate)]}{\heap}{\perms \setminus \efoot{\heap}{\env'}{\fpred(p)}}{\env}.$$

      Let $\sheap' = \sheap(\sstate''); \pair{p}{\multiple{t}}$. Expanding definitions,
      $$\vfoot{V'}{\heap}{\pair{p}{\multiple{t}}} = \efoot{\heap}{\env'}{\fpred(p)}.$$
      Now $\simstate{V'}{\sstate''[\senv = \senv(\sstate)]}{\heap}{\perms \setminus \vfoot{V'}{\heap}{\pair{p}{\multiple{t}}}}{\env}$, thus
      $$\universal{h_1, h_2 \in \sheap(\sstate'')}{h_1 \ne h_2 \implies \vfoot{V'}{\heap}{h_1} \cap \vfoot{V'}{\heap}{h_2} = \emptyset}$$
      and by lemma \ref{lem:disjoint-sim-heap-subset},
      $$\universal{h \in \sheap(\sstate'')}{\vfoot{V'}{\heap}{h} \cap \vfoot{V'}{\heap}{\pair{p}{\multiple{t}}} = \emptyset}.$$
      From these we can deduce that
      $$\universal{h_1, h_2 \in \sheap'}{h_1 \ne h_2 \implies \vfoot{V'}{\heap}{h_1} \cap \vfoot{V'}{\heap}{h_2} = \emptyset}.$$

      Also, from lemma \ref{lem:cons-soundness}, $\assertion{\heap}{\perms}{[\multiple{x \mapsto V'(t)}]}{\fpred(p)}$ since $\env' = [\multiple{x \mapsto V'(t)}]$.

      Since $\simstate{V'}{\sstate''[\senv = \senv(\sstate)]}{\heap}{\perms \setminus \vfoot{V'}{\heap}{\pair{p}{\multiple{t}}}}{\env}$,
      $$\universal{\pair{p}{\multiple{t}} \in \sheap(\sstate'')}{\assertion{\heap}{\perms}{[\multiple{x \mapsto V'(t)}]}{\fpred(p)}}.$$
      From these we can deduce that
      $$\universal{\pair{p}{\multiple{t}} \in \sheap'}{\assertion{\heap}{\perms}{[\multiple{x \mapsto V'(t)}]}{\fpred(p)}}.$$

      Since field values are unchanged between $\sheap(\sstate'')$ and $\sheap'$,
      \begin{gather*}
        \universal{\triple{f}{t}{t'} \in \sheap'}{\pair{V(t)}{f} \in \perms} \quad\text{and} \\
        \universal{\triple{f}{t}{t'} \in \sheap'}{\heap(V(t), f) = V(t')}.
      \end{gather*}

      Therefore $\simheap{V'}{\sheap'}{\heap}{\perms}$, and thus
      $$\simstate{V'}{\sstate''[\senv = \senv(\sstate), \sheap = \sheap']}{\heap}{\perms}{\env}.$$

      Let $\vstate' = \triple{\sstate''[\senv = \senv(\sstate), \sheap = \sheap']}{s}{\gform}$. By \refrule{SExecFold} $\sexec{\sstate}{\sseq{\sfold{p(\multiple{e})}}{s}}{s}{\sstate(\vstate')}$ (after expanding definitions). Therefore $\strans{\prog}{\vstate}{\vstate'}$ by \refrule{SVerifyStep}. Therefore $\vstate'$ is reachable from $\prog$ with valuation $V'$.

      Also, as shown before, $\simstate{V'}{\sstate(\vstate')}{\heap}{\perms}{\env}$, and by definition $s(\vstate') = s$. Therefore  $\vstate'$ corresponds to $\pair{\heap}{\triple{\perms}{\env}{s} \cdot \stack^*}$.

      By definition $\senv(\vstate') = \senv(\sstate) = \senv(\vstate)$ and $\gform(\vstate') = \gform = \gform(\vstate)$. Therefore by lemma \ref{lem:preservation-heap-env-unchanged} $\pair{\heap}{\triple{\perms}{\env}{s} \cdot \stack^*}$ is a valid state.

    \case \refrule{ExecUnfold}: We have
      $$\dexec{\heap}{\triple{\perms}{\env}{\sseq{\sunfold{p(\multiple{e})}}{s}} \cdot \stack^*}{\vfoot{V'}{\heap}{\sperms}}{\heap}{\triple{\perms}{\env}{s} \cdot \stack^*}$$

      By assumptions, the initial state is validated by some $\vstate$ and valuation $V'$, thus \\
      $\vstate = \triple{\sstate}{\sseq{\sunfold{p(\multiple{e})}}{s}}{\gform}$ for some $\sstate$, $\gform$ where $\simstate{V}{\sstate}{\heap}{\perms}{\env}$.

      The only guard that applies is \refrule{SGuardUnfold}, thus we have
      The only guard that applies is \refrule{SGuardFold}, thus we have
      \begin{gather*}
        \multiple{\seval{\sstate}{e}{t}{\sstate'}{\scheck}}, \quad
        \scons{\sstate'}{p(\multiple{e})}{\sstate''}{\scheck'}, \\
        \text{and by assumptions}\quad \rtassert{V'}{\heap}{\perms}{\scheck' \cup \bigcup \multiple{\scheck}} \quad\text{and}\quad V'(\pc(\sstate'')) = \ktrue
      \end{gather*}

      Let $e_1, \cdots, e_n = \multiple{e}$, $t_1, \cdots, t_n = \multiple{t}$, and $\scheck_1, \cdots, \scheck_n = \multiple{\scheck}$. Let $\sstate_0$, then for some $\sstate_1, \cdots, \sstate_n$ we have
      $$\seval{\sstate_0}{e_1}{t_1}{\sstate_1}{\scheck_1}, \cdots, \seval{\sstate_{n-1}}{e_n}{t_n}{\sstate_n}{\scheck_n}$$
      where $\sstate_n = \sstate'$. By lemmas \ref{lem:eval-subpath} and \ref{lem:cons-subpath} $\pc(\sstate'') \implies \pc(\sstate_n) \implies \cdots \implies \pc(\sstate_1)$. Therefore $V'(\pc(\sstate'')) = V'(\pc(\sstate_n)) = \cdots = V'(\pc(\sstate_1)) = \ktrue$. Also, by lemma \ref{lem:scheck-monotonicity} we have $\rtassert{V'}{\heap}{\perms}{\scheck_i}$ for all $1 \le i \le n$. Thus, by lemma \ref{lem:seval-soundness}
      \begin{gather*}
        \eval{\heap}{\env}{e_1}{V'(t_1)}, \quad\cdots,\quad \eval{\heap}{\env}{e_n}{V'(t_n)} \\
        \text{and}\quad \simstate{V'}{\sstate_1}{\heap}{\perms}{\env}, \quad\cdots,\quad \simstate{V'}{\sstate_n}{\heap}{\perms}{\env}.
      \end{gather*}
      Therefore $\simstate{V'}{\sstate'}{\heap}{\perms}{\env}$.

      Thus by lemma \ref{lem:cons-soundness}
      $$\assertion{\heap}{\env}{\perms}{p(\multiple{e})} \quad\text{and}\quad \simstate{V'}{\sstate'}{\heap}{\perms \setminus \efoot{\heap}{\env}{p(\multiple{e})}}.$$

      Let $\multiple{x} = \fpredparams(p)$, $\senv' = [\multiple{x \mapsto t}]$, and $\env' = [\multiple{x \mapsto V'(t)}]$. Then, by construction, $\simenv{V'}{\senv'}{\env'}$. Therefore $\simstate{V'}{\sstate'[\senv = \senv']}{\heap}{\perms \setminus \efoot{\heap}{\env}{p(\multiple{e})}}{\env'}$.

      Now, by definition, $\efoot{\heap}{\env}{p(\multiple{e})} = \efoot{\heap}{\env'}{\fpred(p)} \cup \bigcup \multiple{\efoot{\heap}{\env}{e}}$.

      Therefore $\efoot{\heap}{\env'}{\fpred(p)} \subseteq \perms \setminus \efoot{\heap}{\env}{p(\multiple{e})}$, thus by lemma \ref{lem:simstate-monotonicity},
      $$\simstate{V'}{\sstate'[\senv = \senv']}{\heap}{\perms \setminus \efoot{\heap}{\env'}{\fpred(p)}}{\env'}$$
      and $\simstate{V'}{\sstate}{\sstate'[\senv = \senv']}{\heap}{\perms}{\env'}$.

      Since $\assertion{\heap}{\perms}{\env}{p(\multiple{e})}$, by \refrule{AssertPredicate} $\assertion{\heap}{\perms}{\env'}{\fpred(p)}$.

      Therefore by lemma \ref{lem:produce-progress}
      $$\sproduce{\sstate'[\senv = \senv']}{\fpred(p)}{\sstate''} \quad\text{and}\quad V''(\pc(\sstate'')) = \ktrue$$
      where $V'' = V'[\sproduce{\sstate'[\senv = \senv']}{\fpred(p)}{\sstate''} \mid \heap]$. Also, by lemma \ref{lem:produce-soundness} $\simstate{V''}{\sstate''}{\heap}{\perms}{\env'}$, and thus
      $$\simstate{V''}{\sstate''[\senv = \senv(\sstate)]}{\heap}{\perms}{\env}.$$

      Now, by \refrule{SExecUnfold}, $\sexec{\sstate}{\sseq{\sunfold{p(\multiple{e})}}{s}}{s}{\sstate''[\senv = \senv(\sstate)]}$.

      Let $\vstate' = \triple{\sstate''[\senv = \senv(\sstate)]}{s}{\gform}$. By \refrule{SVerifyStep} $\strans{\prog}{\vstate}{\vstate'}$. Therefore $\vstate'$ is reachable from $\prog$ with valuation $V''$.

      Also, as shown before, $\simstate{V''}{\sstate(\vstate')}{\heap}{\perms}{\env}$. Furthermore, $s(\vstate') = s$ by definition. Thus $\pair{\heap}{\triple{\perms}{\env}{s} \cdot \stack^*}$ corresponds to $\vstate'$ with valuation $V''$.

      By definition $\senv(\vstate') = \senv(\sstate) = \senv(\vstate)$ and $\gform(\vstate') = \gform = \gform(\vstate)$. Therefore, by lemma \ref{lem:preservation-heap-env-unchanged} $\pair{\heap}{\triple{\perms}{\env}{s} \cdot \stack^*}$ is a valid state.
  \end{enumcases}
\end{proof}

\begin{theorem}[Preservation]\label{thm:dtrans-preservation}
  Let $\Gamma$ be some dynamic state validated by the $\vstate$ and valuation $V$ for some program $\prog$. If $\sguard{\vstate}{\sstate'}{\scheck}{\sperms}$ with corresponding valuation $V$ extending $V'$, $V'(\pc(\sstate')) = \ktrue$, $\rtassert{V'}{\heap}{\perms(\Gamma)} {\scheck}$, and $\dtrans{\prog}{\Gamma}{\vfoot{V'}{\heap(\Gamma)}{\sperms}}{\Gamma'}$
  then $\Gamma'$ is a valid state.

  In other words, if the dynamic state satisfies the matching symbolic checks, and dynamic execution procedes, then the resulting state is valid.
\end{theorem}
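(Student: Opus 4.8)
The plan is to proceed by case analysis on the derivation of $\dtrans{\prog}{\vfoot{V'}{\heap(\Gamma)}{\sperms}}{\Gamma}{\Gamma'}$, which must be concluded by exactly one of \refrule{ExecInit}, \refrule{ExecStep}, or \refrule{ExecFinal}. Essentially all of the substantive work is already packaged inside Lemma \ref{lem:dexec-preservation}, so the theorem is a thin dispatcher: the \refrule{ExecStep} case is delegated to that lemma, and the two ``boundary'' cases are handled directly.

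For \refrule{ExecStep} we have $\Gamma = \pair{\heap}{\stack}$, $\Gamma' = \pair{\heap'}{\stack'}$, and $\dexec{\heap}{\stack}{\vfoot{V'}{\heap}{\sperms}}{\heap'}{\stack'}$. Since $\heap(\Gamma) = \heap$ and $\perms(\Gamma) = \perms(\stack)$ by definition, and $V'$ is the corresponding valuation $V[\sguard{\vstate}{\sstate'}{\scheck}{\sperms} \mid \heap]$ by definition \ref{def:sguard-valuation}, the hypotheses of the theorem coincide verbatim with those of Lemma \ref{lem:dexec-preservation}; that lemma then immediately yields that $\Gamma'$ is a valid state. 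For \refrule{ExecInit} we have $\Gamma = \initsym$ and $\Gamma' = \pair{\emptyset}{\triple{\emptyset}{\emptyset}{s} \cdot \nilsym}$ with $s$ the entry statement. Because $\Gamma$ corresponds to $\vstate$ and $\Gamma$ is a bare symbol, $\vstate = \initsym$ (definition \ref{def:vstate-corresponds}), and the only applicable guard is \refrule{SGuardInit}. I would exhibit $\vstate' = \triple{\ssempty}{s}{\ktrue}$, which is reachable from $\prog$ by \refrule{SVerifyInit}, together with the empty valuation; correspondence holds since $\ssempty$ has empty heaps and store, $\Gamma'$ has empty heap, permissions and environment, and $\ktrue$ evaluates to $\ktrue$, while the partial-state obligation for the tail $\pair{\emptyset}{\nilsym}$ is discharged trivially by case \ref{def:partial-valid-nil}.

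For \refrule{ExecFinal} we have $\Gamma = \pair{\heap}{\triple{\perms}{\env}{\kskip} \cdot \nilsym}$ and $\Gamma' = \finalsym$. I would claim $\finalsym$ is validated by $\finalsym$ itself: correspondence holds because $\finalsym = \Gamma'$, and part \ref{def:state-valid-partial} of definition \ref{def:state-valid} is vacuous since $\Gamma'$ is not a pair, so the only thing to check is that $\finalsym$ is reachable from $\prog$. Here the guard hypothesis does the work: since $\Gamma$ corresponds to $\vstate$ with $s(\vstate) = s(\Gamma) = \kskip$, the only guard rule that can derive $\sguard{\vstate}{\sstate'}{\scheck}{\sperms}$ is \refrule{SGuardFinish}, whose premise supplies a derivation $\scons{\sstate(\vstate)}{\gform(\vstate)}{\sstate'}{\scheck}$. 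Combined with the reachability of $\vstate$ (which we have, since $\Gamma$ is validated by $\vstate$), this is precisely what \refrule{SVerifyFinal} needs to conclude $\strans{\prog}{\vstate}{\finalsym}$; as $\finalsym$ contains no symbolic values it is reachable with any valuation, so $\Gamma'$ is valid.

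\textbf{Expected main obstacle.} At the level of this theorem there is no deep difficulty: it is a wrapper around Lemma \ref{lem:dexec-preservation}. The one point that needs care is the \refrule{ExecFinal} case, where one must notice that the guard premise forces the use of \refrule{SGuardFinish} (via $s(\vstate) = \kskip$, obtained from correspondence) and that the consume derivation it exposes is exactly the witness required by \refrule{SVerifyFinal}. All of the genuinely involved reasoning --- showing that each individual small-step dynamic transition preserves correspondence, the stack-frame structure recorded by definition \ref{def:partial-valid}, and the heap/permission invariants of definition \ref{def:state-valid} --- is confined to the large case analysis of Lemma \ref{lem:dexec-preservation}, which we may assume.
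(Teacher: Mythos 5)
Your proposal is correct and follows essentially the same route as the paper: case analysis on the $\dtrans{\prog}{\Gamma}{\vfoot{V'}{\heap(\Gamma)}{\sperms}}{\Gamma'}$ derivation, with \refrule{ExecStep} delegated wholesale to Lemma \ref{lem:dexec-preservation}, \refrule{ExecInit} discharged by exhibiting the \refrule{SVerifyInit} successor and checking the trivial correspondence, and \refrule{ExecFinal} reduced to producing a consume derivation feeding \refrule{SVerifyFinal}. The only (harmless) divergence is in the \refrule{ExecFinal} case, where you obtain that consume derivation by inverting the guard hypothesis via \refrule{SGuardFinish}, whereas the paper manufactures it from scratch with Lemma \ref{lem:cons-progress}; both witnesses work equally well.
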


\begin{proof}
  We proceed by cases on the judgement $\dtrans{\prog}{\Gamma}{\vfoot{V'}{\heap(\Gamma)}{\sperms}}{\Gamma'}$.

  \begin{enumcases}
    \case \refrule{ExecInit}: Then $\Gamma = \initsym$ and $\Gamma' = \pair{\emptyset}{\triple{\emptyset}{\emptyset}{s(\prog)} \cdot \nilsym}$. Since $\Gamma$ is validated by $\vstate$, then $\vstate = \initsym$.

    Let $\vstate' = \triple{\quintuple{\bot}{\emptyset}{\emptyset}{\emptyset}{\ktrue}}{s(\prog)}{\ktrue}$. Then by \refrule{SVerifyInit} $\strans{\prog}{\initsym}{\vstate'}$.

    Since $\Gamma'$ has a stack with a sole stack frame, and clearly $\vstate'$ is reachable from $\prog$, in order to show that $\Gamma'$ is validated by $\vstate'$ it suffices to show that $\Gamma'$ corresponds to $\vstate'$. In turn, since clearly $s(\Gamma') = s(\vstate')$, it suffices to show that $\simstate{V}{\sstate(\vstate')}{\heap(\Gamma')}{\perms(\Gamma')}{\env(\Gamma')}$. Since all the requisite sets or values are trivial, this is immediate from the definition.

    \case \refrule{ExecFinal}: Then $\Gamma' = \finalsym$ and $\Gamma = \pair{\heap}{\triple{\perms}{\env}{\kskip} \cdot \nilsym}$. Since $\Gamma$ is validated by $\vstate$, $s(\Gamma) = s(\vstate) = \kskip$. By lemma \ref{lem:cons-progress}, $\scons{\sstate(\vstate)}{\gform(\vstate)}{\sstate'}{\_}$ for some $\sstate'$. Thus $\strans{\prog}{\vstate}{\finalsym}$ by \refrule{SVerifyFinal}.

    \case \refrule{ExecStep}: Then $\Gamma = \pair{\heap}{\stack}$ and $\Gamma' = \pair{\heap'}{\stack'}$ for some $\heap, \stack, \heap', \stack'$ where \\
    $\dexec{\heap}{\stack}{\vfoot{V'}{\heap}{\sperms}}{\heap'}{\stack'}$. Therefore $\Gamma'$ is a valid state by lemma \ref{lem:dexec-preservation}.
  \end{enumcases}
\end{proof}

\end{document}
\endinput
